\definecolor{ForestGreen}{rgb}{0.1333,0.5451,0.1333}
\definecolor{DarkRed}{rgb}{0.8,0,0}
\definecolor{Red}{rgb}{1,0,0}
\renewcommand{\thechapter}{\Roman{chapter}}
\crefname{chapter}{part}{parts}
\Crefname{chapter}{Part}{Parts}
\declaretheorem[numberwithin=section]{theorem}
\declaretheorem[numberlike=theorem]{lemma}
\declaretheorem[numberlike=theorem,name=Lemma]{lem}
\declaretheorem[numberlike=theorem,name=Proposition]{prop}
\declaretheorem[numberlike=theorem]{corollary}
\declaretheorem[numberlike=theorem]{definition}
\declaretheorem[numberlike=theorem,name=Definition]{defn}
\declaretheorem[numberlike=theorem]{claim}
\declaretheorem[numberlike=theorem,name=Invariant]{inv}
\declaretheorem[numberlike=theorem,style=remark,name=Remark]{rem}
\declaretheorem[numberlike=theorem,style=definition]{goal}
\theoremstyle{remark}
\newtheorem*{rem*}{Remark}
\global\long\def\defeq{\triangleq}
\global\long\def\cC{\mathcal{C}}
\global\long\def\cP{\mathcal{P}}
\global\long\def\Phat{\widehat{P}}
\global\long\def\Vhat{\widehat{V}}
\global\long\def\Vtil{\widetilde{V}}
\global\long\def\Ghat{\widehat{G}}
\global\long\def\Hhat{\widehat{H}}
\global\long\def\Htil{\widetilde{H}}
\global\long\def\Ttil{\widetilde{T}}
\global\long\def\Ptil{\widetilde{P}}
\global\long\def\Wstar{W^{*}}
\global\long\def\Cinit{C^{init}}
\global\long\def\Kinit{K^{init}}
\global\long\def\Xinit{X^{init}}
\global\long\def\dtil{\widetilde{d}}
\global\long\def\wtil{\widetilde{w}}
\global\long\def\ball{\mathrm{ball}}
\global\long\def\core{\textsc{core}}
\global\long\def\shell{\textsc{shell}}
\global\long\def\oshell{\overline{\textsc{shell}}}
\global\long\def\cover{\textsc{cover}}
\global\long\def\Clevel{\ell_{\core}}
\global\long\def\lenapsp{h_{\textsc{apsp}}}
\global\long\def\poly{\mathrm{poly}}
\global\long\def\polylog{\mathrm{polylog}}
\global\long\def\dist{\mathrm{dist}}
\global\long\def\distInduc{\widetilde{\mathrm{dist}}}
\global\long\def\diam{\mathrm{diam}}
\global\long\def\size{\mathrm{size}}
\global\long\def\path{\mathrm{path}}
\global\long\def\vol{\mathrm{vol}}
\global\long\def\val{\mathrm{val}}
\global\long\def\len{\mathrm{len}}
\global\long\def\overhead{\mathrm{overhead}}
\global\long\def\near{\mathrm{near}}
\global\long\def\peel{\mathrm{peel}}
\global\long\def\diam{\mathrm{diam}}
\global\long\def\deg{\mathrm{deg}}
\global\long\def\Ohat{\widehat{O}}
\global\long\def\Otil{\tilde{O}}
\global\long\def\Omegahat{\widehat{\Omega}}
\global\long\def\Omegatil{\tilde{\Omega}}
\global\long\def\scatter{\delta_{\mathrm{scatter}}}
\global\long\def\phicmg{\phi_{\textsc{cmg}}}
\global\long\def\epswit{\epsilon_{\textsc{wit}}}
\global\long\def\eps{\epsilon}
\global\long\def\stretch{\mathrm{str}}
\global\long\def\kappafinal{\kappa_{\textrm{final}}}
\global\long\def\Apxball{\textsc{ApxBall}}
\global\long\def\Core{\textsc{RobustCore}}
\global\long\def\Covering{\textsc{Covering}}
\global\long\def\SSSP{\textsc{SSSP}}
\global\long\def\estree{\textsc{ESTree}}
\global\long\def\gammacompress{\gamma}
\global\long\def\CertifyCore{\textsc{CertifyCore}}
\global\long\def\EmbedCore{\textsc{EmbedWitness}}
\global\long\def\EmbedMatching{\textsc{EmbedMatching}}
\global\long\def\Prune{\textsc{Prune}}
\global\long\def\updatelevel{\textsc{UpdateLevel}}
\global\long\def\MES{\textsc{MES}}
\global\long\def\distScale{\texttt{ds}}
\global\long\def\answer{\texttt{ans}}
\global\long\def\roundup#1#2{\left\lceil #1\right\rceil _{#2}}
\global\long\def\ceiling#1{\left\lceil #1\right\rceil }
\global\long\def\stage#1#2{#1^{(#2)}}
\def\thatchaphol#1{\marginpar{$\leftarrow$\fbox{T}}\footnote{$\Rightarrow$~{\sf\textcolor{purple}{#1 --Thatchaphol}}}}
\global\long\def\shortestSTPathVar{\pi(s,t)}
\global\long\def\flowVar{f}
\global\long\def\flowEstVar{\hat{f}}
\global\long\def\costEstVar{\hat{c}}
\global\long\def\initialDifferenceMWU{\delta}
\global\long\def\flowRouted{\Upsilon}
\global\long\def\weightFunctionVar{w}
\global\long\def\weightFunctionEstVar{\Hat{w}}
\global\long\def\weightFunctionCombinedVar{\tilde{w}}
\global\long\def\costFunctionVar{\varphi}
\global\long\def\costFunctionEstVar{\hat{\varphi}}
\global\long\def\inflow{\textrm{in}}
\global\long\def\outflow{\textrm{out}}
\global\long\def\edgeSensitivity{\sigma}
\global\long\def\sensitivity{\sigma}
\global\long\def\neighborhood{\mathcal{N}}
\global\long\def\objFunctionVar{\textsc{ObjVal}}
\global\long\def\objFunctionEstVar{\widehat{\textsc{ObjVal}}}
\global\long\def\timeAugSSSPVar{\mathcal{T}_{SSSP^{\pi}}}
\global\long\def\minCapIndexMWUVar{\Lambda}
\global\long\def\forLoopIndexMWU{\lambda}
\global\long\def\Otil{\tilde{O}}
\global\long\def\Ohat{\widehat{O}}
\global\long\def\poly{\mathrm{poly}}
\newcommand{\shortestSTPath}[1][]{
	\ifthenelse{\equal{#1}{}}{\shortestSTPathVar}{\shortestSTPathVar_{#1}}
}
\newcommand{\costFunction}[1][]{
	\ifthenelse{\equal{#1}{}}{\costFunctionVar}{\costFunctionVar_{#1}}
}
\newcommand{\costFunctionEst}[1][]{
	\ifthenelse{\equal{#1}{}}{\costFunctionEstVar}{\costFunctionEstVar_{#1}}
}
\newcommand{\weightFunction}[1][]{
	\ifthenelse{\equal{#1}{}}{\weightFunctionVar}{\weightFunctionVar_{#1}}
}
\newcommand{\weightFunctionEst}[1][]{
	\ifthenelse{\equal{#1}{}}{\weightFunctionEstVar}{\weightFunctionEstVar_{#1}}
}
\newcommand{\weightFunctionCombined}[1][]{
	\ifthenelse{\equal{#1}{}}{\weightFunctionCombinedVar}{\weightFunctionCombinedVar_{#1}}
}
\newcommand{\flow}[1][]{
	\ifthenelse{\equal{#1}{}}{\flowVar}{\flowVar_{#1}}
}
\newcommand{\flowEst}[1][]{
	\ifthenelse{\equal{#1}{}}{\flowEstVar}{\flowEstVar_{#1}}
}
\newcommand{\costEst}[1][]{
	\ifthenelse{\equal{#1}{}}{\costEstVar}{\flowEstVar_{#1}}
}
\newcommand{\minCapIndexMWU}[1][]{
	\ifthenelse{\equal{#1}{}}{\minCapIndexMWUVar}{\minCapIndexMWUVar_{#1}}
}
\newcommand{\objFunction}[1][]{
	\ifthenelse{\equal{#1}{}}{\objFunctionVar}{\objFunctionVar_{#1}}
}
\newcommand{\objFunctionEst}[1][]{
	\ifthenelse{\equal{#1}{}}{\objFunctionEstVar}{\objFunctionEstVar_{#1}}
}
\newcommand{\timeAugSSSP}[8]{
	\timeAugSSSPVar(#1, #2, #3, #4, #5, #6,#7, #8)
}
\newcommand{\randomThreshold}[1][]{
	\ifthenelse{\equal{#1}{}}{\gamma}{\gamma_{#1}}
}
\newcommand{\ignore}[1]{}
\newcommand{\kappaold}{\kappa^{old}}
\newcommand{\kappanew}{\kappa^{new}}
\def\ShowComment{True} %
\def\thatchapholtext#1{\textcolor{purple}{[TS]: #1}}
\def\thatchaphol#1{\marginpar{$\leftarrow$\fbox{T}}\footnote{$\Rightarrow$~{\sf\textcolor{purple}{#1 --Thatchaphol}}}}
\def\note#1{#1}
\def\alert#1{\textcolor{red}{#1}}
\def\thatchaphol#1{}
\def\thatchapholtext#1{}
\def\note#1{} 
\def\alert#1{}
\begin{document}

\title{Deterministic Decremental SSSP and Approximate Min-Cost Flow in Almost-Linear
Time}
\author{
	Aaron Bernstein\thanks{Supported by NSF CAREER Grant 1942010.} \\ Rutgers University New Brunswick, USA \\ bernstei@gmail.com   \and
	Maximilian Probst Gutenberg\thanks{The author is supported by a start-up grant of Rasmus Kyng at ETH Zurich. Work was partially done while at the University of Copenhagen where the author was supported by Basic Algorithms Research Copenhagen (BARC), supported by Thorup's Investigator Grant from the Villum Foundation under Grant No. 16582.} \\ ETH Zurich, Switzerland\\ maximilian.probst@outlook.com	\and
	Thatchaphol Saranurak\thanks{Work was partially done while at Toyota Technological Institute at Chicago.} \\ University of Michigan, USA\\ thsa@umich.edu
}
\date{}
\maketitle

\begin{abstract}
In the decremental single-source shortest paths problem, the goal is to maintain distances from a fixed source $s$ to every vertex $v$ in an $m$-edge graph undergoing edge deletions. 
In this paper, we conclude a long line of research on this problem by 
showing a near-optimal \emph{deterministic} data structure that maintains $(1+\epsilon)$-approximate distance estimates and runs in $m^{1+o(1)}$ total update time.

Our result, in particular, removes the \emph{oblivious adversary} assumption required by the previous breakthrough result by Henzinger~et~al.~[FOCS'14], which leads to our second result:
the first almost-linear time algorithm for $(1-\epsilon)$-approximate min-cost flow in undirected graphs where capacities and costs can be taken over edges \emph{and} vertices. 
Previously, algorithms for max flow with vertex capacities, or min-cost flow with any capacities required super-linear time.
Our result essentially completes the picture for approximate flow in undirected graphs. 

The key technique of the first result is a novel framework that allows us to treat low-diameter graphs like expanders. This allows us to harness expander properties while bypassing shortcomings of expander decomposition, which almost all previous expander-based algorithms needed to deal with. 
For the second result, we break the notorious flow-decomposition barrier from the multiplicative-weight-update framework using randomization.

\end{abstract}

\pagenumbering{gobble}
\setcounter{tocdepth}{1}  
\tableofcontents{}

\pagebreak
\pagenumbering{arabic}

\chapter{Extended Abstract}
\label{part:intro}

\section{Introduction}
\label{sec:introduction}

One of the most fundamental problems in graph algorithms is the single-source shortest paths (SSSP) problem where given a source vertex $s$ and a undirected, weighted graph $G=(V,E,w)$ with $n = |V|, m = |E|$, we want to find the shortest paths from $s$ to every vertex in the graph. This problem has been studied since the 1950s \cite{shimbel1954structure, dijkstra1959note} and can be solved in linear time \cite{thorup1999undirected}.

A natural extension of SSSP is to consider a dynamic graph $G$ that is changing over time. The most natural model is the fully dynamic one, where edges can be inserted and deleted from $G$. Unfortunately, recent progress on conditional lower bounds \cite{abboud2014popular, henzinger2015unifying, GutenbergWW20} essentially rules out any fully dynamic algorithm with small update and query times for maintaining distances from $s$. For this reason, most research has focused on the \emph{decremental} setting, where the graph $G$ only undergoes edge deletions.
In addition to being a natural relaxation of the fully dynamic model, the decremental setting is extremely well-motivated for the SSSP problem in particular: a fast data structure for decremental SSSP can be used as a subroutine within the multiplicative weighted update (MWU) framework to speed up algorithms for various (static) flow problems. 

Our main contribution is an almost-optimal data structure for decremental SSSP, which we in turn use to develop the first almost-optimal algorithms for approximate vertex-capacitated max flow and min-cost flow.

\subsection{Previous Work}
\label{subsec:previousWork}
For our discussion of related work, we assume for $(1+\epsilon)$-approximations that $\epsilon > 0$ is constant to ease the discussion. We use $\Otil$- and $\Ohat$-notation to suppress logarithmic and subpolynomial factors in $n$, respectively. We include a broader discussion of related work in \Cref{sec:relatedWork}.

\paragraph{Decremental Single-Source Shortest Paths (SSSP).} A seminal result for decremental SSSP is an algorithm by Even and Shiloach \cite{EvenS} with total update time $O(mn)$ over the entire sequence of updates in unweighted graphs. Conditional lower bounds indicate that this is near-optimal \cite{roditty2004dynamic, abboud2014popular,henzinger2015unifying, GutenbergWW20}. But Bernstein and Roditty showed \cite{bernstein2011improved} that there exist faster algorithms if one allows for a $(1+\epsilon)$-approximation on the distances (and the corresponding shortest paths). This line of research culminated in a breakthrough result by Forster, Henzinger and Nanongkai \cite{henzinger2014decremental} (see also \cite{LackiN20}) who showed how to maintain $(1+\epsilon)$-approximate SSSP in total update time $\Ohat(m \cdot \polylog(W))$, where $W$ is the maximum weight ratio.

\paragraph{Towards Efficient Adaptive Data Structures.} Although it has near-optimal update time, the $\Ohat(m)$ result of \cite{henzinger2014decremental} suffers from a crucial shortcoming: it is randomized and only works against an \emph{oblivious} adversary, i.e.~an adversary that fixes the entire sequence in advance. For this reason, the result of \cite{henzinger2014decremental}  cannot be used as a black-box data structure, and in particular cannot be incorporated into the MWU framework for flow algorithms mentioned above.

Over the last years, there has been significant effort towards designing \emph{adaptive}, or even better deterministic, algorithms with comparable update time guarantees \cite{bernstein2016deterministic, bernstein2016deterministic, bernstein2017deterministicweighted, Chuzhoy:2019:NAD:3313276.3316320, gutenberg2020deterministic, bernstein2020fully, ChuzhoyS20_apsp}. But the best total update time remains $\Ohat(\min\{m\sqrt{n}, n^2\} \polylog(W))$.

\paragraph{Max flow and Min-cost Flow.}
Max flow and min-cost flow problems have been studied extensively since the 1950s \cite{Dantzig51,FordF56,Dinic70,GoldbergT88,goldberg1998beyond,daitch2008faster,Madry13,lee2014path,CohenMSV17,liu2020fasterDiv} and can be solved exactly in time $\Otil((m+n^{1.5})\log^2(UC))$ \cite{BrandLLSSSW20} and, for unit-capacity graphs, $\Ohat(m^{4/3}\log(C))$ \cite{axiotis2020circulation}
 where $U$ is the maximum capacity ratio
and $C$ is the maximum cost ratio. Although enormous effort has been directed towards these fundamental problems, in directed sparse graphs, the fastest algorithms are still far from achieving almost-linear time.

Therefore, an exciting line of work \cite{christiano2011electrical, LeeRS13, sherman2013nearly, kelner2014almost, RackeST14,peng2016approximate} emerged with the goal of obtaining faster \emph{approximation}
algorithms on \emph{undirected} graphs. This culminated in
$\Otil(m \cdot \polylog(U))$-time algorithms for $(1+\eps)$-approximate
max flow \cite{sherman2013nearly, kelner2014almost,peng2016approximate} and $\Otil(m\cdot \polylog(C))$-time algorithms for min-cost flow
when all capacities are infinite \cite{Sherman17,li2020faster,andoni2020parallel}, both of which require only near-linear time.

\paragraph{Limitations of Existing Approaches.} Unfortunately, none of the near-linear-time algorithms above handle vertex capacities or can be generalized to min-cost flow with finite capacities. This severely limits the range of applications of these algorithms. 

This limitation seems inherent to the existing algorithms. The most successful approach for approximate max flow \cite{sherman2013nearly, kelner2014almost} is based on obtaining fast $n^{o(1)}$-competitive oblivious routing schemes for the $\ell_{\infty}$-norm (or $\ell_1$-norm in the case of \cite{sherman2017generalized}). But for both oblivious routing in vertex-capacitated graphs \cite{HajiaghayiKRL07} and min-cost flow oblivious routing\footnote{By min-cost flow oblivious routing, we mean an oblivious routing scheme that is competitive at the same time with the best routing in terms of $\ell_1$-norm \emph{and} $\ell_{\infty}$-norm, respectively.} \cite{aspnes2006eight, ghaffari2020hopconstrained} there are lower bounds of $\Omega(\sqrt{n})$ for the possible competitiveness. This would lead to an additional polynomial overhead for these algorithms. There are also some alternative approaches to flow problems, but currently they do not lead to almost-linear time algorithms even for regular edge-capacitated max-flow (see e.g. \cite{christiano2011electrical, LeeRS13, KyngPSW19}).

\paragraph{Max Flow and Min-Cost Flow via MWU and Decremental SSSP.} In order to overcome limitations in the previous approaches, a line of attack emerged that was originally suggested by \cite{madry2010faster} and was recently reignited by Chuzhoy and Khanna \cite{Chuzhoy:2019:NAD:3313276.3316320}. The idea is that the MWU framework for solving min-cost flow (see e.g. \cite{garg2007faster,fleischer2000approximating}) can be sped up with a fast \emph{adaptive} decremental SSSP data structure.  In \cite{Chuzhoy:2019:NAD:3313276.3316320}, Chuzhoy and Khanna obtained promising results via this approach: an algorithm for max flow with vertex capacities only in $\Ohat(n^2 \polylog(U))$ time.
But this approach currently has two major challenges towards an $\Ohat(m)$ time algorithm: 
\begin{itemize}
    \item Obtaining a fast adaptive decremental SSSP data structure has proven to be an extremely difficult challenge that even considerable effort could not previously resolve  \cite{bernstein2016deterministic, bernstein2016deterministic, bernstein2017deterministicweighted, Chuzhoy:2019:NAD:3313276.3316320, gutenberg2020deterministic, bernstein2020fully, ChuzhoyS20_apsp}.
    
    \item Even given such a data structure, the MWU framework is designed to successively route flows along paths from a source $s$ to a sink $t$. But this implies that the \emph{flow decomposition barrier} applies to the MWU framework, which might have to send flow on $\Omega(mn)$ edges over the course of the algorithm (or $\Omega(n^2)$ edges when only vertex capacities are present). 
\end{itemize}
In this article, we overcome both challenges and complete this line of work.

\subsection{Our Results}

\paragraph{Decremental SSSP.} Our main result is the first deterministic data structure for the decremental SSSP problem in undirected graph with almost-optimal total update time.

\begin{restatable}[Decremental SSSP]{theorem}{mainSSSPResult}\label{thm:mainSSSPResult}
Given an undirected, decremental graph $G=(V,E,w)$, a fixed source vertex $s \in V$, and any $\epsilon > 1/\polylog(n)$, we give a \emph{deterministic} data structure that maintains a $(1+\epsilon)$-approximation of the distance from $s$ to every vertex $t$ in $V$ explicitly in total update time $m^{1+o(1)} \polylog W$. The data structure can further answers queries for an $(1+\epsilon)$-approximate shortest $s$-to-$t$ path $\shortestSTPath$ in time $|\shortestSTPath|n^{o(1)}$.
\end{restatable}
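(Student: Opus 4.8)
The plan is to reduce \Cref{thm:mainSSSPResult} to a single decremental primitive --- maintaining an approximate ball around a vertex --- and then implement that primitive recursively by exploiting the paper's ``low-diameter graphs behave like expanders'' phenomenon. \textbf{Step 1: reduce to \Apxball{}.} As usual I would split $[1,nW]$ into $O(\log(nW))$ geometric distance scales and, at each scale $d$, pass to a decrementally maintained low-hop structure (hopset/emulator) so that the relevant distances at scale $d$ are realized by paths of only $h=n^{o(1)}$ hops. It then suffices to build, for a fixed depth $d$, a data structure \Apxball$(G,s,d)$ that maintains a superset of $B_G(s,d)$ together with estimates $\delta(v)$ satisfying $\dist_G(s,v)\le\delta(v)\le(1+\epsilon)\dist_G(s,v)$ for all $v$ at distance $\le d$, in total time $|E(B_G(s,O(d)))|^{1+o(1)}$ --- almost-linear in the ball size and crucially \emph{independent of $d$}. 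A plain Even--Shiloach tree gives $O(md)$, which is far too slow; killing the $d$-dependence is the whole challenge. I would also split off a ``base'' regime of tiny $d$ (handled directly by ES-trees) from the ``general'' regime where the core machinery runs.

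\textbf{Step 2: \SSSP{} from \Apxball{} via a \Covering{}.} Given \Apxball{}, the SSSP structure at scale $d$ maintains a \Covering{}: a family of $n^{o(1)}$ \Apxball{} instances with various centers and radii whose union covers $B_G(s,d)$ --- every $v$ near $s$ lies in some instance together with an approximate shortest $s$--$v$ path --- while the instances' sizes sum to $\Ohat(m)$. Distances are then read off by a shallow Dijkstra/ES-tree on the ``union graph'' glued from the covering, whose hop-diameter is $n^{o(1)}$ by construction. When deletions shrink an instance below usefulness it is discarded and a fresh one spawned; a potential argument over all scales caps the number of rebuilds, and summing over scales costs only a $\polylog(nW)$ factor.

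\textbf{Step 3: \Apxball{} via robust low-diameter cores --- the heart.} The structural claim is that a decremental graph of (hop-)diameter $O(d)$ can at all times be written as a bounded number of \emph{cores} plus a low-volume \emph{shell}, where a core is an induced subgraph that is \emph{robust}: certifiably still of diameter $O(d\cdot n^{o(1)})$ even after an adversary deletes edges of up to a $1/n^{o(1)}$-fraction of its volume. Robustness is produced by \Core{}/\CertifyCore{}/\EmbedCore{}: run a \emph{deterministic} cut-matching game (via deterministic most-balanced-sparse-cut / expander-decomposition technology) to embed a witness expander $W$ into the candidate region with $n^{o(1)}$ congestion and dilation; expander pruning (\Prune{}) then certifies that after bounded deletions a large subcore survives with its diameter bound, and moves the pruned volume into the shell. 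The payoff is that distances \emph{inside} a robust core are cheap: \Compress{} the core $C$ to a graph of size roughly $\Otil(|\partial C|+|W|)$ --- essentially just its boundary plus the witness --- and recurse by calling \Apxball{} on that much smaller graph, while the shell, being low-volume, is recursed on directly. The recursion is on graph size; it terminates after a number of levels, each of size/time overhead $n^{o(1)}$, with the count of levels and the per-level overhead tuned so the product is still $m^{1+o(1)}$, and with the stretch budget split evenly so the final approximation is $1+\epsilon$. This yields total update time $m^{1+o(1)}\polylog W$.

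\textbf{Step 4: path queries and the main obstacle.} A query for $\pi(s,t)$ is answered by unwinding the recursion: the approximate shortest path in the union graph breaks into segments, each living in a shell (recurse) or crossing a compressed core, in which case the short $W$-path between the relevant boundary vertices is expanded into a real path via the stored embedding (\PathExpander{}); charging the $n^{o(1)}$ per-level overhead against the length of the returned path gives the $|\pi(s,t)|n^{o(1)}$ bound. I expect the crux to be Step 3: making robust-core maintenance simultaneously \emph{deterministic} and \emph{adaptive}. Standard expander decomposition is randomized and, worse, its cut structure shifts adversarially under deletions; the point is to never decompose globally into expanders but only to locally certify robustness of already-low-diameter pieces, maintain the witness embedding incrementally, and re-certify with a deterministic cut-matching/pruning step only when enough volume has been pruned, charging that volume geometrically so the amortized cost stays $n^{o(1)}$ per edge. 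Closing the global accounting --- bounding the total recourse of all cores and shells across all distance scales and all recursion levels at once, without the $n^{o(1)}$ factors compounding into $n^{\Omega(1)}$ --- is the delicate part.
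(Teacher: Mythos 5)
There is a genuine gap at the heart of your Step 3. You propose to take an arbitrary low-diameter candidate region, run a deterministic cut-matching game to embed a witness expander into it with $n^{o(1)}$ congestion and dilation, and then amortize adversarial deletions against the pruned \emph{volume}. But a low-diameter graph need not admit any low-congestion expander embedding: take two $n/2$-vertex expanders joined by a single edge $(u,v)$ — the diameter is small, yet every embedding of an expander on the whole vertex set must route $\Omega(n)$ paths through $u$ and $v$, so the cut-matching game will simply keep returning sparse cuts, and an adversary can destroy any witness by deleting the $O(1)$ bottleneck edges, i.e.\ an $O(1/n)$-fraction of the volume. Consequently your geometric charging (``re-certify only when enough volume has been pruned'') gives no bound on the number of re-certifications, and the $m^{1+o(1)}$ accounting collapses. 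The paper's central device, which your outline omits, is precisely the fix: it computes vertex capacities $\kappa$ by \emph{congestion balancing} (repeatedly doubling $\kappa$ on the sparse cuts returned by $\EmbedCore$) so that the core weighted by $\kappa$ becomes a \emph{capacitated} expander (\Cref{subsec:overview-capacitated}, \Cref{lem:overview-kappa-relaxed}); the total capacity is bounded by $\Otil(|\Kinit|\cdot D/d)$ via a potential function measuring the cheapest star embedding in a purely decremental surrogate graph $\Ghat$ (\Cref{lem:total cap}), and pruning/phases are then amortized against deleted \emph{capacity}, not volume, which is what bounds the number of phases by $\Ohat(1)$ (\Cref{lem:bound num phase}). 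Relatedly, your interface is weaker than the paper's Robust Core: the paper does not compress a core to ``boundary plus witness'' and recurse on graph size, but instead maintains, for each core, the \emph{scattering property} of \Cref{def:Core} (whenever a vertex is ejected, $\ball_G(v,2D)\cap\Kinit$ has shrunk by a $\scatter$ fraction); this is the potential that caps how many cores and shells any vertex ever joins, and hence makes the global covering/rebuild accounting in \Cref{sec:part2Covering} go through. Your ``potential argument over all scales caps the number of rebuilds'' presupposes exactly this property without supplying it.

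The rest of your outline is structurally close to the paper — distance scales with a base ES-tree regime, $\Apxball$ as the primitive, a covering of cores/shells glued into an emulator on which a shallow (monotone) ES-tree is run, and path queries by unwinding cores via the stored witness embeddings — so with the capacitated-expander/congestion-balancing mechanism and the scattering-based accounting inserted into Step 3, the plan matches the paper's proof. One further technical point you gloss over: the glued ``union graph'' is not decremental (new cores insert vertices and edges), so the shallow tree on it must be a monotone ES-tree in the style of Henzinger et al., which the paper re-proves for its emulator.
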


This result improves upon the state-of-the-art $\Ohat(\min\{m\sqrt{n}, n^2\} \polylog(W))$ total update time time in the deterministic (or even adaptive) setting and resolves the central open problem in this line of research. 

\paragraph{Mixed-Capacitated Min-Cost Flow.} Given our new deterministic SSSP data structure, it is rather straight-forward using MWU-based techniques from \cite{fleischer2000approximating, garg2007faster, ChuzhoyS20_apsp} to obtain unit-capacity min-cost flow in almost-linear time. We are able to generalize these techniques significantly to work for arbitrary vertex and edge capacities.

\begin{restatable}[Approximate Mixed-Capacitated Min-Cost Flow]{theorem}{minCostMain}\label{thm:MainMinCost}
For any $\epsilon > 1/\polylog(n)$, consider undirected graph $G=(V,E,c,u)$, where cost function $c$ and capacity function $u$ map each edge and vertex to a non-negative real. Let $s, t \in V$ be source and sink vertices. Then, there is an algorithm that in $m^{1+o(1)}\log \log C$ time returns a feasible flow $f$ that sends a $(1-\epsilon)$-fraction of the max flow value from $s$ to $t$ with cost at most equal to the min-cost flow.\footnote{We can also route an arbitrary demand vector, see an alternative statement in \Cref{subsec:alternativeStatement}.} The algorithm runs correctly with high probability.
\end{restatable}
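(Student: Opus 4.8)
The plan is to use the multiplicative-weight-update (MWU) framework for fractional min-cost flow, with the decremental SSSP data structure of \Cref{thm:mainSSSPResult} providing the path-finding oracle. Recall the standard setup (as in \cite{garg2007faster, fleischer2000approximating}): we maintain a length function $\ell$ on the edges (and, to handle vertex capacities, we split each vertex $v$ into $v_{\inflow}, v_{\outflow}$ joined by an edge of capacity $u(v)$, so that vertex capacities and costs become edge capacities and costs on a graph with $O(m)$ edges); initially $\ell_e = \delta / u_e$ for a parameter $\delta = \delta(\epsilon)$. We augment the graph with a direct $s$--$t$ edge whose length encodes a budget on cost, so that "shortest $s$--$t$ path" simultaneously controls congestion and cost (the classic reduction of min-cost flow to a sequence of shortest-path computations under a combined length/cost metric). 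In each phase we repeatedly find an approximately shortest $s$--$t$ path $\pi$, route one unit of flow (scaled by $1/u_{\min}(\pi)$) along it, and multiplicatively increase $\ell_e$ for every $e \in \pi$ by a factor $(1 + \epsilon \cdot f_e / u_e)$. Because lengths only increase, the decremental SSSP data structure is the right tool: an edge is effectively "deleted" once its length exceeds the current distance scale, and we run $O(\log_{1+\epsilon}(\poly(n) \cdot W))$ nested copies of \Cref{thm:mainSSSPResult}, one per distance scale, so that the total update time is $m^{1+o(1)} \polylog(W) = m^{1+o(1)}$. Crucially, since the data structure is \emph{deterministic} (hence adaptive), it remains correct even though the sequence of length increases depends on the paths it itself returned — this is exactly the obstacle that blocked the use of \cite{henzinger2014decremental} here.

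The main obstacle, and the place where a new idea beyond black-box MWU is needed, is the \textbf{flow-decomposition barrier}: the standard analysis shows the number of augmenting paths is $O(m \log m / \epsilon^2)$ per phase, and each path can have length up to $\Omega(n)$, so naively rebuilding/updating the flow costs $\Omega(mn)$ total — too slow. To break this, the plan is to introduce randomization into the MWU step itself: instead of sending a deterministic unit of flow along the single shortest path, we use the decremental SSSP structure to find a shortest path, then route a randomly scaled amount of flow and, more importantly, only \emph{sample} which edges to charge the length update to, so that in expectation the length-update and flow-bookkeeping work per path is $n^{o(1)}$ rather than $|\pi|$. Concretely, we would (i) show that a single sampled augmentation gives an unbiased estimator of the true MWU update with bounded variance, (ii) run enough sampled augmentations that, by a Chernoff/martingale argument over the adaptively chosen sequence, the congestion of the aggregate flow concentrates around the value a deterministic MWU run would achieve, and (iii) use the query operation of \Cref{thm:mainSSSPResult} (which returns a path in time $|\pi| n^{o(1)}$, but only when we actually need the full path, i.e.\ once per sampled augmentation rather than once per conceptual unit) to keep the path-reporting cost under control. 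The output of MWU is a fractional flow of the required value and cost; a final rounding step (standard, e.g.\ via the flow-decomposition-into-paths argument or a dynamic-trees integral-routing step) converts it to a feasible flow losing only another $(1-\epsilon)$ factor, and the $\log\log C$ dependence comes from binary-searching the cost-budget parameter on the augmented $s$--$t$ edge over the $O(\log\log C)$ relevant scales after an initial $\poly$-factor approximation of the optimal cost.

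A few technical points I would need to check carefully. First, correctness of the vertex-capacity reduction: splitting vertices preserves $s$--$t$ flow value and min cost exactly, and the blow-up is only a constant factor in $m$, so the $m^{1+o(1)}$ bound is unaffected. Second, the interaction between the $O(\log_{1+\epsilon} W)$ distance scales and the randomized sampling: the variance bound in (i)--(ii) must hold \emph{within each scale} and compose across scales, which should follow from the fact that an edge participates meaningfully in only $O(\log_{1+\epsilon} U)$ length-doubling events before being frozen. Third — and this is the genuinely delicate part — the adaptivity of the analysis: because the sampled paths influence future lengths, the martingale used to prove concentration of the final congestion must be set up with respect to the filtration generated by the algorithm's own coin flips, and one must argue the per-step variance is bounded \emph{conditionally}, uniformly over the adaptive choices; this is where I expect the bulk of the work, and it is precisely the "randomized version of MWU that breaks the flow-decomposition barrier" advertised in the introduction. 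Once that concentration is in hand, the running time is $(\text{number of sampled augmentations}) \cdot n^{o(1)} + m^{1+o(1)} = m^{1+o(1)}$, the approximation guarantee is $(1-\epsilon)$ on value and $\le 1$ on cost (rescaling $\epsilon$ by a constant), and the failure probability is $n^{-\Omega(1)}$, giving the stated "with high probability."
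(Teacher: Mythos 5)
Your proposal follows the same broad strategy as the paper (MWU driven by the adaptive SSSP structure, randomized sampling of the per-path updates, a martingale/Chernoff argument over the adaptive filtration, and a binary search over cost budgets giving the $\log\log C$ factor), but there are two concrete gaps where the plan as written would fail. First, the data structure you invoke is too weak. You plan to use the distance-only structure of \Cref{thm:mainSSSPResult} and to retrieve the full path ``once per sampled augmentation.'' But the number of MWU iterations is $\Omegahat(n)$ in vertex-capacitated instances (each augmentation may saturate only the single bottleneck vertex), and each path can have $\Omega(n)$ edges, so reading even one full path per iteration already costs $\sum_P |P| = \Omega(n^2)$ — exactly the flow-decomposition barrier you set out to break. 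The point of the paper's sampling scheme is that the algorithm must \emph{never materialize the path at all}: it queries only the edges whose ``steadiness'' falls below a randomly drawn threshold, in time proportional to the number of such edges, and the implicitly maintained path has to be fixed independently of the queried threshold so that the estimator is unbiased. Supporting such threshold-subpath queries is not a feature of \Cref{thm:mainSSSPResult}; it requires the substantially extended path-reporting machinery of \Cref{part:augmented-queries} (\Cref{def:pathReportingSSSP}), including assigning stable steadiness values to emulator edges and controlling the near-simpleness of unfolded paths. Relatedly, the right accounting is not ``$n^{o(1)}$ work per path in expectation'' but an invariant that every time the estimator sends flow into a vertex $y$ it sends $\Omegahat(u(y))$ units, so that each vertex is touched only $\Ohat(1)$ times over the whole run.

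Second, the conversion of the MWU output into an actual feasible flow is not the ``standard rounding step'' you describe. If you only update a sampled subset of edges per path (which is forced by the time bound), the object you maintain is not a fractional flow but a \emph{pseudoflow}: conservation is violated at each vertex by up to $\epsilon\, u(v)$ (\Cref{def:feasibleFlow}), and flow-decomposition or dynamic-tree integral routing does not apply (decomposing it would again cost $\Omega(n^2)$, and the issue is conservation, not integrality). Repairing conservation is a nontrivial part of the paper: one first uses repeated pseudo-optimal computations to build a capacity-fitted instance in which (roughly) every vertex is either nearly saturated or has negligible capacity (\Cref{sec:minCostFlowFinally}), and only then can the excess demand be rerouted by a single call to an edge-capacitated approximate max-flow routine on a graph with $\epsilon$-scaled capacities (\Cref{sec:nearOptFromNearLocally}, using \Cref{thm:maxFlowSherman}) without blowing the capacity or cost budgets. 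Without something playing the role of capacity fitting, the coupling $|\inflow_{\flow}(v)-\inflow_{\flowEst}(v)|\le \epsilon u(v)$ is too weak to turn the estimator into a feasible $(1-\epsilon)$-optimal flow, so this step needs a genuine argument rather than a citation to standard techniques.
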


Our result resolves one of the three key challenges for the max flow/ min-cost problem according to a recent survey by Madry \cite{mkadry2018gradients}.\footnote{We point out, however, that our dependency on $\epsilon$ is significantly worse than formulated in \cite{mkadry2018gradients}.} The state-of-the-art for this problem \cite{BrandLLSSSW20} solved the exact version of this problem in directed graphs and hence obtains significantly slower running time $\Otil((m + n^{1.5}) \cdot \polylog(UC))$ which is still super-linear in sparse graphs. 

\subsection{Applications} 
\label{subsec:applications}

Our two main results have implications for a large number of interesting algorithmic problems. See \Cref{appendix:ApplicationDiscussion} for a more detailed statements and a discussion of how to obtain the results below.

\paragraph{Applications of Mixed-Capacitated Min-Cost Flow.} 
\begin{itemize}
	\item Using a reduction of \cite{KhandekarRV09}, our result for vertex-capacitated flow yields a $O(\log^2(n))$ approximation to sparsest vertex cut in undirected graphs in $\Ohat(m)$ times. This is the first almost-linear-time algorithm for the problem with $\polylog(n)$ approximation.
	\item Combined with another reduction in \cite{bodlaender1995approximating}, our result for sparsest vertex cut yields an $O(\log^3(n))$-approximate algorithm for computing tree-width (and the corresponding tree decomposition) in $\Ohat(m)$ time. This is again the first almost-linear-time algorithm with $\polylog(n)$ approximation, except for the special cases where the tree-width is itself sub-polynomial \cite{fomin2018fully} or the graph is extremely dense \cite{ChuzhoyS20_apsp}. (See other work on computing tree-width in \cite{robertson1995graph, bodlaender1996linear, amir2001efficient, amir2010approximation, bodlaender2016c,bodlaender1995approximating, amir2001efficient, amir2010approximation, feige2008improved, Chuzhoy:2019:NAD:3313276.3316320}.) 
	\item The above algorithm then leads to improvement for algorithms that relied on computing an efficient tree decomposition. For example, we speed-up the high-accuracy LP solver by Dong, Lee and Ye \cite{dong2020nearly} that is parameterized by treewidth; we reduce the running time to $\Ohat(m \cdot \textrm{tw}(G_A)^2 \log(1/\epsilon))$, improving upon the previous dependency of $\textrm{tw}(G_A)^4$. 
	\item Given any graph $G=(V,E)$ (with associated incidence matrix $B$), $\epsilon > 1/\polylog(n)$, a demand vector $\chi \in \mathbb{R}^n$, (super)-linear functions $c_e,c_v : \mathbb{R}_{\geq 0} \rightarrow \mathbb{R}_{\geq 0}$ for each $e \in E$ and $v \in V$. Let $f^*$ be some flow minimizing 
    \[
    	\min_{B^\top f = \chi} c(f) = \sum_{e\in E} c_e(|f_e|) + \sum_{v \in V} c_v((B^\top |f|)_v).
    \]
	Then, we can compute a $(1+\epsilon)$ approximate flow $f$ with $c(f) \leq (1+\epsilon) c(f^*)$ that routes demand $\chi$ in almost-linear time. In particular, this is the first  almost-linear time algorithm for flow in the weighted $p$-norm $\|W^{-1} f\|_p$ (since we can minimize $\|W^{-1} f\|_p^p$ by $c_e(x) = (\frac{x}{w_e})^p$).
\end{itemize}

\paragraph{Applications of Decremental SSSP.} There is currently a large gap between the best-known dynamic graph algorithms against oblivious adversaries and adaptive ones. Much of this gap stems from the problem of finding a deterministic counterpart to picking a random source. Plugging in, either our decremental SSSP as a black-box subroutine or our some techniques that we obtain along the way, we obtain various new adaptive algorithms:
\begin{itemize}
    \item  Decremental $(1+\epsilon)$-approximate all-pairs shortest paths (APSP) in total update time $\Ohat(mn)$. (Previous adaptive results only worked in unweighted graphs \cite{henzinger2016dynamic, gutenberg2020deterministic}.)
       \item Decremental $\Ohat(1)$-approximate APSP with total update time $\Ohat(m)$. Even in unweighted graphs, all previously adaptive algorithms for decremental APSP (for \emph{any} approximation) had total update time at least $\Omega(n^2)$ \cite{henzinger2016dynamic, gutenberg2020deterministic,ChuzhoyS20_apsp,EvaldFGW20}; for weighted graphs they were even slower. Our result is analogous to the \emph{oblivious} algorithm of Chechik, though she achieves a stronger $O(\log(n))$-approximation \cite{chechik2018near}.  
    \item Fully-dynamic $(2+\epsilon)$ approximate all-pairs shortest paths with $\Ohat(m)$ update time, matching the oblivious result of \cite{bernstein2009fully}.
\end{itemize}

\subsection{Technical Contributions}

From a technical perspective, our dynamic SSSP result in Theorem \ref{thm:mainSSSPResult} is by far our more significant contribution. It requires several new ideas, but we would like to highlight one technique in particular that is of independent interest and might have applications far beyond our result:

\paragraph{Key Technique: Converting any Low-Diameter Graph into an Expander}
Several recent papers on dynamic graph algorithms start with the observation that many problems are easy to solve if the underlying graph $G$ is an expander, as one can then apply powerful tools such as expander pruning and flow-based expander embeddings. All of these papers then generalize their results to arbitrary graphs by using expander decomposition: they decompose $G$ into expander subgraphs and then apply expander tools separately to each subgraph. Unfortunately, expander decomposition necessarily involves a large number of crossing edges (or separator vertices) that do not belong to any expander subgraph and need to be processed separately. This difficulty has been especially prominent for decremental shortest paths, where expander-based algorithms had previously been unable to achieve near-linear update time \cite{Chuzhoy:2019:NAD:3313276.3316320,detDiSSSPAndSCC,ChuzhoyS20_apsp,bernstein2020fully}.

Our key technical contribution is showing how to \emph{apply expander-based tools without resorting to expander decomposition}. In a nut-shell, we show that given \emph{any} low-diameter graph $G$, one can in almost-linear time compute a capacity $\kappa(v)$ for each vertex such that the total vertex capacity is small and such that the graph $G$ weighted by capacities effectively corresponds to a weighted vertex expander. We can then apply tools such as expander pruning directly to the low-diameter graph $G$. This allows the algorithm to avoid expander decomposition and instead focus on the much simpler task of computing low-diameter subgraphs. We believe that this technique has the potential to play a key role in designing other dynamic algorithms against an adaptive adversary. 

\paragraph{Breaking the Flow Decomposition Barrier for MWU.} We also briefly mention our technical contribution for the min-cost flow algorithm of Theorem \ref{thm:MainMinCost}. Plugging our new data structure into the MWU framework is not by itself sufficient, because as discussed above, existing implements of MWU necessarily encounter the \emph{flow decomposition barrier} (see for example \cite{madry2010faster}), as they  repeatedly send flow down an entire $s$-$t$ path. We propose a new (randomized) scheme that maintains an \emph{estimator} of the flow. While previous schemes have used estimators for the weights \cite{chekuri2018randomized, chekuri2020fast, chekuri2020fast2}, we are the first to directly maintain only an estimator of the \emph{solution}, i.e. of the flow itself. This poses various new problems to be considered: a more refined analysis of MWU is needed, a new type of query operation for the decremental SSSP data structure is necessary, and the flow estimator we compute is only a pseudoflow. We succeed in tackling these issues and provide a broad approach that might inspire more fast algorithms via the MWU framework.

\subsection{A Paper in Three Parts.}
The article effectively contains three separate papers. Part \ref{part:dynamicShortestPaths} contains our decremental SSSP data structure (Theorem \ref{thm:mainSSSPResult}). We consider this part to be our main technical contribution; it is entirely self-contained and can be read as its own paper on dynamic shortest paths. Part \ref{part:augmented-queries} shows how to extend the data structure from Part \ref{part:dynamicShortestPaths} to answer threshold sub-path queries, which are required for our min-cost flow algorithm. Finally, Part \ref{part:minCostFlow} contains our min-cost flow result (Theorem \ref{thm:MainMinCost}); it is also entirely self-contained and can be read separately. In fact, Part III has zero overlap in techniques with the previous parts. The only reason we include it in the same paper is because it uses the data structure from Parts \ref{part:dynamicShortestPaths}/\ref{part:augmented-queries} as a black box.

Before \Cref{part:dynamicShortestPaths}, we include a detailed overview of techniques in the sections below.

\chapter*{Overview of Techniques} Most of the overview focuses on the dynamic SSSP algorithm itself (\Cref{part:dynamicShortestPaths}), as we consider this to be the main technical contribution. We give a short overview of the min-cost flow algorithm (\Cref{part:minCostFlow}) at the end. For ease of exposition, many of the definitions and lemmas in the overview sections sweep technical details under the rug; we restate our entire result more formally in the main body of the paper. 

\section{Overview for Part \ref{part:dynamicShortestPaths}: Dynamic Shortest Paths}
\label{sec:overviewPartII}

We now outline our framework for the dynamic algorithm of \Cref{thm:mainSSSPResult}. Our algorithm builds upon many existing techniques: the MES-tree from \cite{henzinger2014decremental}, dynamic graph covers from \cite{henzinger2016dynamic}, the deterministic hopset construction from \cite{gutenberg2020deterministic}, congestion balancing from \cite{detDiSSSPAndSCC}, and others. We first review some of the existing techniques we need. After that, the main goal of the overview is to highlight the crucial building block that previous approaches were not able to solve, and to introduce our new techniques for solving it.

\emph{For simplicity, we assume throughout this entire section that the graph $G$ is unweighted}. So every update to $G$ is just an edge deletion. The extension to graphs with positive weights involves a few technical adjustments, but is conceptually the same. We also assume that all vertices in $G$ have maximum degree $3$; see \Cref{prop:simplify-1} in the main body for justification.

\subsection{Existing Techniques}
\label{subsec:existingTechniquesInOverview}

\paragraph{Hop emulators.} A classic algorithm from 1981 by Even and Shiloach -- denoted ES-tree -- shows that decremental SSSP is easy to solve if we only care about short distances. 
Although we assume for this overview that the input graph $G$ is unweighted, our algorithm will create new weighted graphs. A simple scaling technique extends the ES-tree to weighted graphs in the following way. Define $\dist^{(h)}(s,v)$ to be the length of the shortest $s-v$ that uses at most $h$ edges. Then, $\estree(G,s,h)$ maintains a $(1+\eps)$-approximation to $\dist^{(h)}(s,v)$, for all $v \in V$, in total time $\Otil(mh)$ \cite{bernstein2009fully}.

Note that  $\estree(G,s,h)$ returns a $(1+\eps)$-approximation to $\dist(s,v)$ if $\dist^{(h)}(s,v) \sim \dist(s,v)$ -- that is, if there exists a $(1+\eps)$-approximate shortest from $s$ to $v$ with at most $h$ edges. A common technique in dynamic shortest paths is to construct a \emph{weighted} graph $H$ that has approximately the same shortest distances as $G$, but for which the above property holds for \emph{all} pairs of vertices.

\begin{defn}
Given a graph $G = (V,E)$, we say that graph $H = (V_H, E_H)$ is a $(h,\eps)$-emulator of $G$ if: {\bf 1)} $V_G \subseteq V_H$, {\bf 2)} for every pair $(u,v) \in V_G$, $\dist_G(u,v) \leq \dist_H(u,v)$, and {\bf 3)} for every pair $(u,v) \in V_G$, there exists a path $P_H(u,v) \in H$ such that $w(P_H(u,v)) \leq (1+\eps)\dist_G(u,v)$ and 
the number of edges on $P_H(u,v)$ is at most $h$.
\end{defn}

Observe that if $H$ is a $(h,\eps)$-emulator of $G$ then running $\estree(H,s,h)$ returns $(1+\eps)$-approximate distances in $G$. All efficient algorithms for sparse graphs, including ours, follow the same basic approach: maintain a $(h,\eps)$-emulator $H$, and then maintain
 $\estree(H,s,h)$. Observe that this returns $(1+\eps)^2$-approximate distances in $G$. The time to run the ES-tree in $H$ is $\Otil(mh)$. The harder step is maintaining the $(h,\eps)$-emulator $H$.

There is a huge amount of work on maintaining hopsets in decremental graphs. If the adversary is \emph{oblivious}, Henzinger et al. \cite{henzinger2014decremental} showed an essentially optimal algorithm: they maintain a $(n^{o(1)}, \eps)$-emulator in total time $\Ohat(m)$. But as we discuss below, there is a crucial obstacle to obtaining such guarantees against an \emph{adaptive} adversary. The state-of-the art adaptive algorithm by Probst Gutenberg and Wulff-Nilsen \cite{gutenberg2020deterministic} still suffers from polynomial overhead: they maintain a $(\sqrt{n},\eps)$-emulator in $\Ohat(m\sqrt{n})$ total update time.

\paragraph{Layered construction of hop emulators.}

The standard way of constructing a hop-emulator is to add edge $(u,v)$ of weight $\dist(u,v)$ for some select pairs $(u,v)$. The difficulty is that this requires knowing $\dist(u,v)$, which is precisely the problem we are trying to solve. To overcome this, many algorithms use a layered approach. Let $\gammacompress$ be some parameter that is $n^{o(1)}$ but bigger than $\polylog(n)$. The idea of layering is to first use a regular ES-tree to maintain $\dist(u,v)$ for some nearby pairs in $G$ with $\dist(u,v) \leq \gammacompress$. By adding the corresponding edges $(u,v)$ to an emulator, one can then construct a $(\diam(G)/\gammacompress,\eps)$-emulator $H_1$ of $G$; intuitively, shortest paths in $H_1$ have fewer edges than those in $G$ by a $n^{o(1)}$ factor. The next step is to construct an emulator $H_2$ that further compresses the number of edges on shortest paths. Observe that by construction of our emulator, for any pair of vertices $(x,y)$, $\dist^{(\gammacompress)}_{H_1}(x,y) \sim \dist_G(x,y)$ as long as $\dist_G(x,y) \leq \gammacompress^2$. Thus, running $\estree(H_1,\cdot,\gammacompress)$ actually gives us distances up to $\gammacompress^2$ in $G$; these distances can then be used to construct a $(\diam(G)/\gammacompress^2,\eps)$-emulator $H_2$ of $G$ (see Figure \ref{fig:overview-hop-compression}). Continuing in this way, after $q = \log_{\gammacompress}(n) = o(\log(n))$ iterations, the emulator $H_q$ will be a $(n^{o(1)},\eps)$-emulator, as desired.

\begin{figure}[!ht]
\centering
\includegraphics[scale=.33]{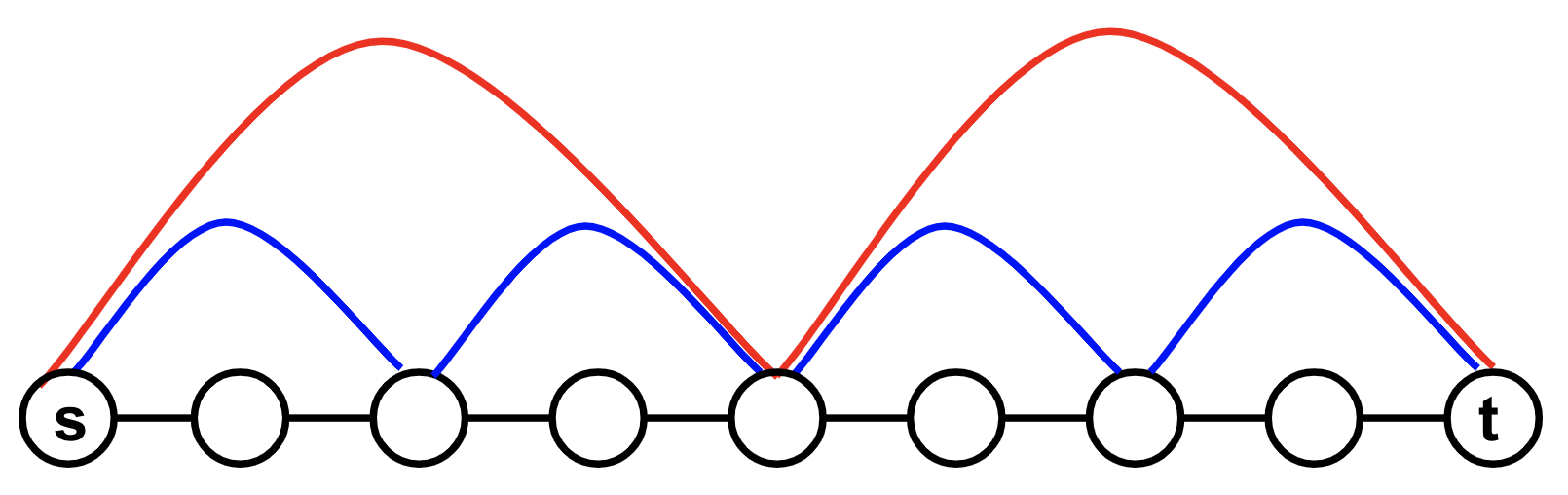}
\caption{The graph $G$ (black edges) has initially large diameter. But $H_1$ (black and blue edges) compresses the graph and reduces the number of edges on shortest paths by factor $2$. Finally, $H_2$ (black, blue and red edges) compresses the graph even further (also roughly by factor $2$).}
\label{fig:overview-hop-compression}
\end{figure}

Our algorithm follows the same layered approach. For ease of exposition, we focus this overview on the goal below, which corresponds to constructing the first hop-emulator $H_1$ of the layering; the crucial obstacle to adaptive algorithms is already present in this simplified problem.

\begin{goal}[Hop Compression] 
\label{goal:overview} Given a decremental graph $G$ with large diameter and a paramemter $\gammacompress = n^{o(1)}$, maintain a $(\diam(G)/\gammacompress,\eps)$-emulator of $G$ with $\Ohat(m)$ edges in total update time $\Ohat(m)$.
\end{goal}

\subsection{Dynamic Hop-Emulator via Covering}
We now describe the basic structure of the emulator $H$ that we construct.

\begin{defn} [Covering] (highly idealized version of Definition \ref{def:Covering})
\label{def:overview-covering} 
Fix parameters $d,D$, where $d,D$ and $D/d$ are all $n^{o(1)}$. We say that an algorithm maintains a \emph{covering} of a decremental graph $G$ if it maintains cores $C_1,\ldots,C_q$, where each $C_i \subset V$, with the following properties:
\begin{enumerate}
	\item \label{item:overview-decremental} The algorithm can create new cores, but once a core $C_i$ is created it only shrinks over time.
	\item \label{item:overview-diameter} Each core $C_i$ has weak diameter $\diam_G(C_i) \defeq \max_{x,y\in C_i} \dist_G(x,y) \leq d$. (Actually, different cores have slightly different diameters, but we omit this complexity in the overview.)
	\item \label{item:overview-cover} Each vertex $v$ is near some $C_i$; formally, it is in $\ball(C_i,4d)$. 
	\item \label{item:overview-shells} Throughout the entire course of the algorithm, each vertex $v$ belongs to only $n^{o(1)}$ different $\shell(C_i)$, where $\shell(C_i) = \ball(C_i, D)$.
\end{enumerate}
\end{defn}

This covering is similar to one used by the previous algorithms of \cite{henzinger2014decremental,chechik2018near, gutenberg2020deterministic}, with the crucial difference that those papers used single vertices $c_i$ instead of low-diameter cores $C_i$. We will need our more general version for our new approach to maintaining such a covering. 

\paragraph{Emulator via Covering.} We outline why such a covering leads to the desired $(\gamma/n^{o(1)},\eps)$-emulator $H$, as outlined in Goal \ref{goal:overview}. The algorithm maintains SSSP from each set $C_i$ up to diameter $D$ using an ES-tree. This is done by simply adding a dummy source $s_i$ with edges to every vertex in $C_i$; if a vertex w is removed from $C_i$, the edge $(s_i,w)$ is deleted (vertices are never added to $C_i$ by Property \ref{item:overview-decremental}). All these ES-trees can be maintained efficiently because by property \ref{item:overview-shells}, the sum of $|\shell(C_i)| = |\ball(C_i,D)|$ is small. The algorithm then constructs emulator $H$ as follows. In addition to the vertices of $G$, $H$ contains a vertex $v_{C_i}$ for each core $C_i$. That is, $V_H = V \cup \{v_{C_1},\ldots,v_{C_q}\}$. For every vertex $v \in C_i$ we add an edge $(v,v_{C_i})$ to $E_H$ of weight $d$. Finally, for every vertex $w \in \shell(C_i)$ we add an edge $(w,v_{C_i})$ of weight $\dist(w,C_i)$. Note that $\dist(w,C_i)$ and the corresponding edge-weight in $H$ can change as vertices in $G$ are deleted; if $w$ leaves $\shell(V_i)$ then the edge $(w,v_{C_i})$ is deleted from $H$. 

\paragraph{Analysis.} We now argue that hop-distances in $H$ are compressed by a factor of about $D/4 = n^{o(1)}$. We will show that for any vertices $u,v \in V$ with $\dist_G(u,v) = D/2$, there is an approximate shortest path $P_H(u,v)$ in $H$ with only two edges. If this property holds, then given any shortest path $\pi_G(x,y)$ in $G$ with $\ell$ edges, one can break $\pi_G(x,y)$ into $2\ell / D$ segments of length $D/2$ and then traverse each segment in $H$ using only $2$ edges, leading to an $x-y$ path in $H$ with $4\ell/D$ edges.

To prove the above property for $u,v$ with $\dist_G(u,v) = D/2$, observe that Property \ref{item:overview-cover} guarantees the existence of some $v_{C_i}$ such that $\dist(u,C_i) \leq 4d \ll D$. We thus have that $\dist(v,C_i) \leq D/2 + 4d < D$, so both $(u,v_{C_i})$ and $(v,v_{C_i})$ are edges in $H$. Now consider the 2-hop path $P_H(u,v) = u \rightarrow v_{C_i} \rightarrow v$ in $H$. We have that $w(P_H(u,v)) = \dist(u,v_{C_i}) + \diam_G(C_i) + \dist(v_{C_i},w) \leq 4d + d + (D/2 + 4d)< (1+\eps)D/2$, where the before-last inequality follows from Property \ref{item:overview-diameter}.

\subsection{The Crucial Building Block: Maintaining Low-Diameter Sets}
The difficult part of maintaining a covering is maintaining the cores $C_1,\ldots,C_q$. After the algorithm initially computes some core $C_i$, it might need to remove vertices from $C_i$ as $G$ undergoes deletions, in order to maintain the property that $C_i$ has small diameter (Property \ref{item:overview-diameter}). We can abstract this goal from the specifics of cores/shells and define the following crucial building block:

\begin{definition}[Crucial Building Block; highly simplified version of Robust Core in Definition \ref{def:Core}]
\label{def:overview-core}
Say that we are given a set $\Kinit \subset V(G)$ with weak diameter $\diam_G(\Kinit) \leq d = n^{o(1)}$ and that the graph $G$ is subject to edge deletions. The goal is to maintain a set $K \subseteq \Kinit$ with the following properties:
\begin{itemize}
	\item {\bf Decremental Property:} the set $K$ is decremental, i.e. it only shrinks over time.
	\item {\bf Diameter Property:}  $\diam_{G}(K) \leq dn^{o(1)}$.
	\item {\bf Scattering Property:} For every vertex $v \in \Kinit \setminus K$, $|\ball_G(v,2d)\cap\Kinit|\le(1-\scatter)\cdot|\Kinit|$, where $\scatter = 1/n^{o(1)}$. 
\end{itemize}
\end{definition}

We refer to the above building block as the Robust Core problem. An algorithm for Robust Core leads to a relatively straightforward algorithm for efficiently maintaining the covering in Definition \ref{def:overview-covering}. Loosely speaking, when a new core $C_i$ is initially created it corresponds to $\Kinit$, while the larger graph $G$ in robust core corresponds to $\shell(C_i)$. The set $K$ then corresponds to the core $C_i$ that is maintained as the graph undergoes edge deletions. 
The decremental property of Robust Core corresponds to Property \ref{item:overview-decremental} of Definition \ref{def:overview-covering}. The diameter property corresponds to Property \ref{item:overview-diameter}. Finally, the scattering property ensures that every time a vertex leaves a core, its neighborhood shrinks by a significant fraction; intuitively, such shrinking can only occur a small number of times in total, so a vertex can only participate in a small number of cores (and hence a small number of shells), which ensures Property \ref{item:overview-shells}.

We now leave aside the details of cores and shells and focus on the abstraction of Robust Core.

\paragraph{Previous Approaches to Robust Core (and their Limitations).}
Although it is not typically stated as such, the Robust Core problem distills the most basic version of a building block that is solved by almost all decremental SSSP algorithms for sparse graphs. This building block has also served as the primary obstacle to progress on this problem. We briefly outline previous approaches.
\begin{itemize}[leftmargin=*]
	\item {\bf Non-Adaptive Adversaries: Random Source.} Robust Core is quite simple to solve with a randomized algorithm that assumes an oblivious adversary: pick a random source $k \in \Kinit$ and maintain $\ball(k,7d) \cap \Kinit$ using an ES-tree. The algorithm keeps this ES-tree as long as $|\ball(k,4d) \cap \Kinit| \geq |\Kinit|/2$. Note that this property ensures that if a vertex $v$ leaves the ES-tree, i.e. if $\dist(k,v)$ becomes larger than $7d$, then $\ball(v,2d)\cap\Kinit$ is disjoint from $\ball(k,4d)\cap\Kinit$, so $v$ can removed from $K$ according to the scattering property. Whenever $|\ball(k,4d) \cap \Kinit|$ becomes too small, the algorithm removes $k$ from $K$ and picks a different random source. One can show that the algorithm only needs one single source in expectation, and $O(\log(n))$ with high probability. Loosely speaking, the argument is that because source $k$ is chosen at random from $K$, the fact that $\ball(k,4d) \cap \Kinit$ has become small implies that, in expectation, $\ball(k,4d) \cap \Kinit$ has become small for half the vertices $v \in K$, which in turn implies that $\ball(v,2d) \cap \Kinit$ has become small for \emph{all} vertices in $K$, so by the scattering property, all vertices can be removed from $K$.
	
	\hspace{3ex} Although the idea of picking a random source is very simple, it is also extremely powerful and leads to a total update time of $\Ohat(m)$ for Robust Core. Unfortunately it has zero utility against adaptive adversaries, because the randomness of the source is no longer independent from the sequence of updates, so the adversary can easily disconnect the source while leaving the rest of the core intact. This one technique, along with a natural generalization to random hitting sets, accounts for much of the gap between adaptive and oblivious algorithms for dynamic SSSP, as well as for related problems such as dynamic strongly connected components (see e.g. \cite{baswana2007improved,roditty2008improved,roditty2012dynamic,henzinger2014decremental,bernstein2016maintaining,chechik2016decremental, chechik2018near, bernstein2019decremental,GutenbergW20a,bernstein2020near}). 
	
	\item {\bf Adaptive Adversaries: Many Sources.} The best-known \emph{adaptive} algorithms for the building block are much slower. Since one can no longer pick a random source, two recent algorithms run an ES-tree from \emph{every} vertex in $K$ \cite{bernstein2017deterministic,gutenberg2020deterministic}. A trivial implementation leads to total update time $O(mn)$, but those papers use sophisticated density arguments to limit the size of ES-trees. These ideas lead to total update time $\Ohat(m\sqrt{n})$ \cite{gutenberg2020deterministic}, but as noted in both papers,  $\Ohat(m\sqrt{n})$ is hard barrier for this approach.

	\item {\bf Adaptive Adversaries: Rooting at an Expander.} Some very recent work on related problems \cite{Chuzhoy:2019:NAD:3313276.3316320,ChuzhoyS20_apsp,detDiSSSPAndSCC} suggests that one can go beyond  $\Ohat(m\sqrt{n})$ with expander tools. Say that the set $\Kinit$ is a $\phi$-(vertex)-expander for $\phi = 1/n^{o(1)}$. (See Definition \ref{def:overview-expander} in the subsection below). Any $\phi$-expander has small diameter. Because expanders are highly robust to deletions, the algorithm can efficiently maintain a large expander $X \in \Kinit$ using standard expander pruning (see Theorem \ref{thm:overview-pruning} in subsection below). The algorithm then maintains $\ball(X,10d)$ and removes from $K$ any vertex that is not in this ball. Intuitively, the algorithm replaces a random source with a deterministic expander, as both have the property of being robust to deletions.
	
	\hspace{3ex} The issue is that even though $\Kinit$ has small diameter, it might not be an expander. The natural solution is to maintain a decomposition of the graph into expanders and handle each expander separately. Unfortunately, such a decomposition must necessarily allow for up to $\phi n$ separator vertices that do not belong to any expander. If $\phi = 1/n^{o(1)}$ then the number of separator vertices is large, and it is unclear how to handle them efficiently.
	We suspect that setting $\phi$ to be a small polynomial, one could combine this expander approach with the density arguments from \cite{bernstein2017deterministic,gutenberg2020deterministic} mentioned above to achieve total update time $O(mn^{1/2-\delta})$. But because $\phi$ is a polynomial, such an approach could not lead to $\Ohat(m)$ total update time.
\end{itemize}

\subsection{Turning a Non-expander into an Expander}
\label{subsec:overview-capacitated}
We now outline our approach to the crucial building block above. In a nutshell, we show the first dynamic algorithm that uses expander tools while bypassing expander decomposition. As above, we assume that $G$ has constant degree.

\paragraph{Expander Preliminaries.}

In this overview, expander always refer to a \emph{vertex} expander. Our expansion factor will always be $1/n^{o(1)}$. 

\begin{definition} \label{def:overview-expander}
Consider an unweighted, undirected graph $G$ and a set $X \subseteq V$. We say that $(L,S,R)$ is a vertex cut with respect to $X$ if $L,S,R$ partition $V(G)$, $|L \cap X| \leq |R \cap X|$, and there are no edges between $L$ and $R$. We say that $(L,S,R)$ is a \emph{sparse} vertex cut with respect to $X$ if $|S| \leq |L \cap X|/n^{o(1)}$. We say that $X \subseteq V(G)$ forms an expander in $G$ if there exists no sparse vertex cuts with respect to $X$. (Note that setting $X = V(G)$ gives the standard definitions of sparse vertex cut and vertex expansion.)
\end{definition}

The key feature of expanders for our purposes is that they are robust to edge deletions. In particular, if $\Xinit$ initially forms an expander in $G$, then even after a large number of edge deletions, there is a guarantee to exist a large $X \subseteq \Xinit$ that forms an expander in $G$, and $X$ can be maintained efficiently. This is known as expander pruning.\footnote{The theorem from \cite{SaranurakW19} is actually stated for edge expanders, and for technical reasons related to expander embedding, we only prune edge expanders in the main body of the paper as well. But we effectively use it to prune vertex expanders, so for simplicity that is how we state it for the overview.}
	
\begin{theorem}[Pruning in Vertex Expanders \cite{SaranurakW19}]
	\label{thm:overview-pruning}
	Let $G$ be a graph subject to edge deletions, and consider a set $\Xinit \subseteq V(G)$ such that $\Xinit$ initially forms an expander in $G$. There exists an algorithm $\Prune(G,\Xinit)$ that can process up to $|\Xinit|/n^{o(1)}$ edge deletions while maintaining a decremental set $X \subseteq \Xinit$ such that $|X| \geq |\Xinit|/2$ and $X$ forms an expander in $G$. The total running time is $\Ohat(|\Xinit|)$.
\end{theorem}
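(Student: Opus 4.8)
The plan is to follow the standard expander-pruning paradigm of \cite{SaranurakW19}, adapted to the vertex-expander setting of \Cref{def:overview-expander}. We maintain a growing ``pruned set'' $P \subseteq \Xinit$, initialized to $\emptyset$, and set $X = \Xinit \setminus P$. Whenever an edge deletion is applied to $G$, the expansion of $X$ may deteriorate, so $\Prune(G,\Xinit)$ repeatedly searches for a sparse vertex cut $(L,S,R)$ with respect to $X$ (with $L\cap X$ the smaller side), moves the vertices of $(L\cup S)\cap X$ into $P$, and repeats until no sparse cut remains; at that point $X$ again forms an expander in $G$, with a mildly degraded expansion parameter. The decremental property of $X$ is immediate since $P$ only grows, and $X$ forming an expander is exactly the required output guarantee (from which the small diameter of $X$ follows).

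Two things must be made to work: (i) that pruning terminates with $|P|$ small, and (ii) that each pruning step runs in time proportional to the newly pruned volume rather than $|V(G)|$. For (i) the key is a potential-function argument: one tracks a potential measuring how far from balanced the cuts inside $X$ are, shows that a single edge deletion raises it by at most $O(1)$, and that pruning the small side of a sparse cut lowers it by an amount proportional to $|L\cap X|/\phi$, where $\phi = 1/n^{o(1)}$ is the expansion factor; here we use that the separator satisfies $|S| \le |L\cap X|/n^{o(1)}$, so the pruned volume is $O(|L\cap X|)$. Summing over the at most $|\Xinit|/n^{o(1)}$ deletions bounds the total pruned volume by $|\Xinit|/n^{o(1)}$, which is $\le |\Xinit|/2$ for a suitable choice of the subpolynomial slack, giving $|X| \ge |\Xinit|/2$. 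A companion self-pruning lemma shows that carving a sparse cut out of an expander leaves an expander with only a mild parameter loss, and that this loss telescopes to a subpolynomial factor over all deletions because the total pruned volume is small.

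For (ii) we cannot recompute sparse cuts globally. Instead we use a \emph{local} cut/flow primitive: after each deletion, run a unit-flow / local-flow computation seeded at the endpoints of the deleted edge (and at previously pruned boundary vertices), which either certifies that $X$ is still an expander or outputs a sparse cut whose small side it explored in time $\Ohat(|(L\cup S)\cap X|/\phi)$. Charging this running time to the pruned volume and invoking the volume bound from (i) yields total time $\Ohat(|\Xinit|/\phi) = \Ohat(|\Xinit|)$, since $\phi = 1/n^{o(1)}$. The vertex-expander case reduces to the edge-expander case of \cite{SaranurakW19} essentially by bookkeeping: work in the split graph (replace each vertex by a unit-capacity edge), so that vertex cuts become edge cuts and the edge-pruning machinery applies, and note — as the theorem's footnote already flags — that the main body in fact only needs the edge version.

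The main obstacle I expect is the self-correction step: arguing that after repeatedly carving off sparse cuts the remainder is genuinely an expander \emph{in $G$} (not merely in the subgraph induced by $X$), with expansion degraded by only a subpolynomial factor even after $|\Xinit|/n^{o(1)}$ deletions. Getting this to hold requires choosing the potential and the local-flow parameters so that the per-deletion degradation is summable, and this is precisely the technical heart of \cite{SaranurakW19}.
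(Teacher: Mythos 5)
You should first note what the paper actually does with this statement: it does not prove it. \Cref{thm:overview-pruning} is an overview-level restatement of the pruning result of \cite{SaranurakW19}, imported as a black box, and the footnote attached to it explicitly says that the cited theorem is for \emph{edge} expanders and that the main body only ever prunes edge expanders (\Cref{lem:prune}, for unweighted decremental multigraphs, with the guarantee $\vol_W(V\setminus X)\le 8i/\phi$ after $i$ deletions). Where the paper needs vertex-style robustness — the capacitated setting of \Cref{lem:overview-capacity-pruning} — it does \emph{not} modify pruning at all: instead it embeds a witness edge expander $W$ into the (hyper)graph via $\EmbedCore$ (\Cref{lem:embedwitness}), converts it to the multigraph $W_{multi}$, and runs the standard edge pruning of \Cref{lem:prune} on that witness, charging deleted edge weight to the embedding's congestion w.r.t.\ $\kappa$ (this is how \Cref{lem:bound num phase} is obtained inside \Cref{alg:Core}).

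Your proposal is therefore a genuinely different route: a from-scratch reconstruction of the \cite{SaranurakW19} machinery plus a split-graph reduction from vertex to edge expansion. The edge-pruning core you sketch (a growing pruned set, a local unit-flow primitive that either certifies expansion or finds a sparse cut locally, a potential that increases by $O(1)$ per deletion and pays for the pruned volume with a $1/\phi$ factor) is indeed the paradigm of \cite{SaranurakW19}, though your top-level description "repeatedly find a sparse cut and prune its small side" is not literally the algorithm — the local flow certificate is what makes both termination and the $\Ohat(\cdot)$ time bound work, and you only get the volume bound $O(i/\phi)$, so the $|X|\ge|\Xinit|/2$ claim needs the deletion budget's $n^{o(1)}$ slack to dominate $1/\phi$, which you correctly flag. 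The real gap relative to the paper is the last step you dismiss as "bookkeeping": the vertex-to-edge reduction via the split graph. This is exactly the step the authors deliberately avoid (see the footnote to \Cref{thm:overview-pruning}), because the set-relative sparse \emph{vertex} cut notion of \Cref{def:overview-expander} and the volume-based pruning guarantees of \cite{SaranurakW19} do not transfer verbatim under vertex splitting, and the capacitated version additionally needs the capacities to enter through an embedding rather than through the graph itself. So your sketch is a reasonable reconstruction of the cited result for edge expanders, but if you intend it as a proof of the vertex-expander statement as written, the reduction needs to be carried out in detail — or, following the paper, replaced by the witness-embedding route, which is how the main body actually realizes "pruning a low-diameter/vertex-expanding set."
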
 

\paragraph{Capacitated Expanders.}
We argued above that Robust Core can be solved efficiently using standard tools if $\Kinit$ is initially an expander, because each edge deletion would have low impact. But the Robust Core problem only has the much weaker guarantee that $\Kinit$ has small (weak) diameter. To develop some intuition for our approach, consider the example where $\Kinit = V(G)$ and $G$ consists of two  expanders $A,B$ with a single crossing edge $(u,v)$. Note that $G$ (and hence $\Kinit$) has small diameter but is far from being an expander. In particular, it is clear that $u,v$ serve as bottlenecks, in that deleting the $O(1)$ edges incident to $u$ and $v$ would immediately disconnect the graph and cause the scattering property to hold for all vertices. By contrast, deleting  all edges incident to some random vertex $z \in A$ would have low impact, because $A$ is an expander. 

We thus see that in a non-expander, some vertices are much more critical than others. Quantitatively speaking, the vertices $u$ and $v$ are about $n$ times more critical than a random vertex $z \in A$, since their deletions would scatter $n$ vertices. The $O(1)$ neighbors of $u$ and $v$ are also highly critical, since deleting all of their incident edges would again scatter the graph. Criticality then drops off exponentially as we go further from $u$ and $v$. See also Figure \ref{fig:overview-critical-edges-upper-bound}.

\begin{figure}[h!]
\centering
\includegraphics[scale=.5]{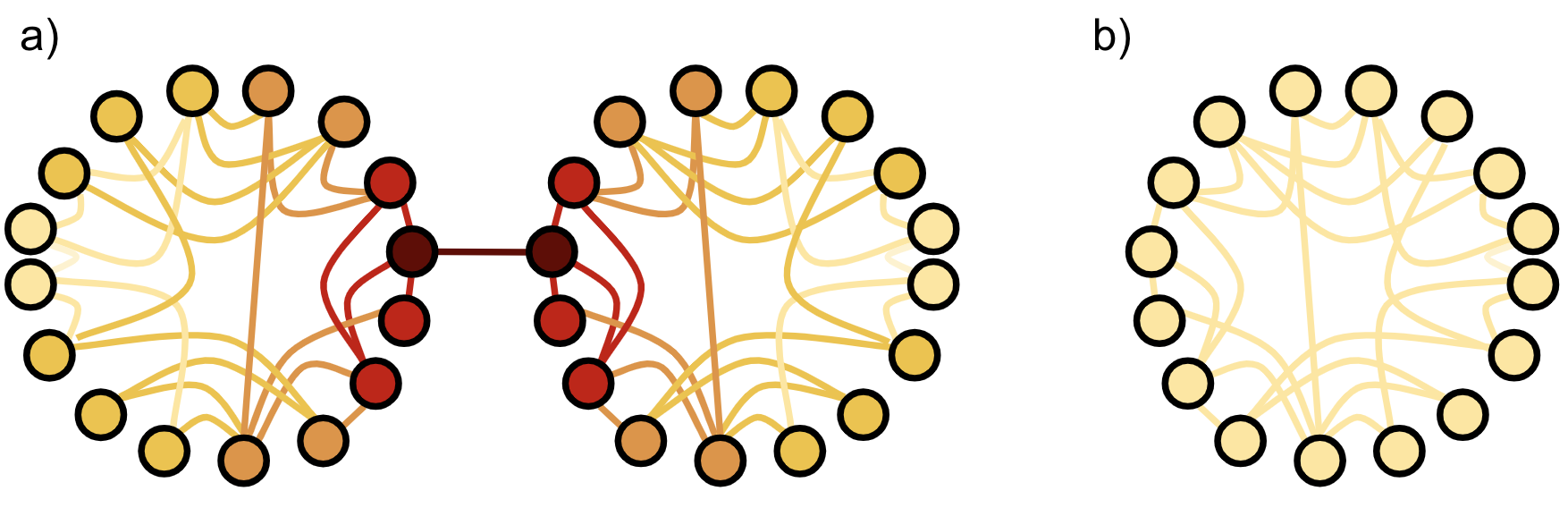}
\caption{Criticality in two graph examples (where red is very critical, yellow not critical). {\bf a)} two expanders $A,B$ joined by a single edge $(u,v)$. The vertices $u,v$ are extremely critical but parts of the graph further away are relatively uncritical. {\bf b)} an expander graph. Here, no vertex is critical.}
\label{fig:overview-critical-edges-upper-bound}
\end{figure}

Our key contribution is an algorithm that computes a criticality score $\kappa(v)$ for each vertex such that the graph weighted by $\kappa$ effectively corresponds to an expander. We now formalize this notion.

\begin{defn}
\label{def:overview-capacitated-expander}
Let $G$ be a graph with vertex capacities $\kappa$, where $\kappa(v) \geq 1$. For any $X \subseteq V(G)$,
we say that $(L,S,R)$ forms a spare \emph{capacitated} vertex cut with respect to $X,\kappa$ if $(L,S,R)$ is a vertex cut with respect to $X$ and  $\sum_{v \in S} \kappa(v) \leq |L \cap X|/n^{o(1)}$. We say that $X,\kappa$ forms a capacitated expander in $G$ if there are no sparse capacitated vertex cuts with respect to $X,\kappa$.
\end{defn}

Note that any connected graph can be made into a capacitated vertex expander by setting $\kappa(v) = n$ for all vertices in $V$. But we want to keep to total vertex capacity small because
our algorithm will decrementally maintain a capacitated expander using pruning, and pruning on capacitated expanders will incur update time proportional to capacities. Intuitively, the reason for this is that by definition of capacitated expander, to disconnect $\beta$ vertices from the graph the adversary has to delete edges with $\sum_{e} \kappa(e) = \Omegahat(\beta)$. 

\begin{lem}[Capacitated Expander Pruning -- implied by \Cref{lem:bound num phase}]
\label{lem:overview-capacity-pruning} Say that we are given a decremental graph $G$, a set $\Xinit \subseteq V(G)$ and a function $\kappa$ such that $(\Xinit,\kappa)$ forms a capacitated expander in $G$. Then, there is an algorithm $\Prune(G,\Xinit,\kappa)$ that can process any sequence of edge deletions in $G$ that satisfy $\sum_e \kappa(e) = O(|\Xinit|/n^{o(1)})$, while maintaining a decremental set $X \subset \Xinit$ such that $|X| \geq |\Xinit|/2$ and $(X,\kappa)$ remains a capacitated expander in $G$. The total running time is $\Ohat(|\Xinit|)$.
\end{lem}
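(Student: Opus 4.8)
The plan is to reduce this to the ordinary (uncapacitated) \emph{edge}-expander pruning of \Cref{thm:overview-pruning} through a split-vertex gadget, so that capacities in $G$ turn into edge multiplicities in an auxiliary graph. Concretely, I would build $G^{split}$ by replacing each vertex $v\in V(G)$ with two copies $v^{-},v^{+}$ joined by a bundle of $\Theta(\kappa(v))$ parallel \emph{internal} edges, and replacing each original edge $\{u,v\}$ by the two \emph{external} edges $\{u^{+},v^{-}\}$ and $\{v^{+},u^{-}\}$, each given multiplicity $\Theta(\kappa(e))$ (so that a walk passing through $v$ between two distinct neighbors is forced to cross $v$'s internal bundle, and so that deleting $\{u,v\}$ in $G$ maps to deleting external edges of total multiplicity $\Theta(\kappa(e))$ in $G^{split}$). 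Set $\Xinit^{split}=\{v^{-},v^{+}:v\in\Xinit\}$. The first claim to prove is that $\Xinit^{split}$ forms an edge expander in $G^{split}$ of expansion $1/n^{o(1)}$: any sparse edge cut of $\Xinit^{split}$ either leaves every internal bundle uncut --- in which case collapsing the gadgets turns it into a sparse \emph{capacitated vertex} cut of $(\Xinit,\kappa)$ (\Cref{def:overview-capacitated-expander}), contradicting the hypothesis --- or cuts some bundle, which by itself already costs $\Theta(\kappa(v))$ and can be charged against the corresponding side, again contradicting sparsity. This correspondence, and its converse, cost only subpolynomial factors, which we can afford.

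Given this, I would run $\Prune(G^{split},\Xinit^{split})$ from \Cref{thm:overview-pruning}, forwarding each deletion of $e=\{u,v\}$ in $G$ as the deletion of the two corresponding external-edge copies. Under the hypothesis $\sum_{e}\kappa(e)=O(|\Xinit|/n^{o(1)})$ --- with the subpolynomial slack tuned so that the total forwarded (capacitated) deletion volume stays below the pruning budget for a $1/n^{o(1)}$-expander of capacitated volume $\Theta(\kappa(V(G)))$ --- the procedure maintains a decremental $X^{split}\subseteq\Xinit^{split}$ that remains an edge expander of $G^{split}$, and (by the pruned-volume bound underlying \Cref{thm:overview-pruning}) the capacitated volume of the pruned part is $\Ohat(\sum_e\kappa(e))\le|\Xinit|/2$. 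I would then output $X=\{v\in\Xinit: v^{-},v^{+}\in X^{split}\}$. Running the cut correspondence in reverse shows that any sparse capacitated vertex cut of $(X,\kappa)$ in $G$ would yield a sparse edge cut of $X^{split}$ in $G^{split}$, so $(X,\kappa)$ is still a capacitated expander; and since removing $v$ from $X$ forces $v$'s internal bundle (capacity $\ge\kappa(v)$) into the pruned part, $|\Xinit\setminus X|\le\kappa(\Xinit\setminus X)\le|\Xinit|/2$, i.e. $|X|\ge|\Xinit|/2$. Finally, $G^{split}$ has $\Ohat(\kappa(V(G)))$ edges after the parallel-edge blow-up, so $\Prune$ runs in $\Ohat(\kappa(V(G)))$ total time, which is $\Ohat(|\Xinit|)$ in the intended regime where $\kappa$ is a near-linear-total criticality score.

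I expect the main obstacle to be making the vertex-to-edge cut correspondence faithful within the $n^{o(1)}$ slack available: the internal bundles must be ``expanding enough'' that the gadget never manufactures a sparse cut of its own, yet the translation of separators must lose only a subpolynomial factor in \emph{both} directions, since we use it once to certify that $\Xinit^{split}$ is an edge expander (to be allowed to invoke \Cref{thm:overview-pruning}) and once in reverse to certify that the pruned $X$ is still a capacitated vertex expander of the original $G$. A second, more structural subtlety --- which is really where the work in the full development goes, through the phase argument of \Cref{lem:bound num phase} --- is that the clean statement above tacitly needs $\kappa(V(G))=\Ohat(|\Xinit|)$ for the running-time bound; without control of total capacity the reduction still maintains correctness but becomes slow, so the algorithm producing $\kappa$ must guarantee near-linear total capacity, and one re-prunes at phase boundaries (after $\Theta(|\Xinit|/n^{o(1)})$ units of deleted capacity have accumulated) precisely to keep both the pruned volume and the per-phase running time within budget.
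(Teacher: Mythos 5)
There is a genuine gap at the very first step: the claim that $\Xinit^{split}$ forms a $1/n^{o(1)}$-expander in $G^{split}$ does not follow from the hypothesis, and is in general false. The sparsity notion in \Cref{def:overview-capacitated-expander} compares $\kappa(S)$ against the \emph{cardinality} $|L\cap \Xinit|$ of terminals separated, whereas the pruning primitive you actually have (the conductance-based \Cref{lem:prune}, which is what \Cref{thm:overview-pruning} stands in for) needs the cut to be large relative to the \emph{capacitated volume} of the smaller side, which after your parallel-edge blow-up is $\Theta\bigl(\sum_{v}\kappa(v)\bigr)$ over that side. These can differ polynomially: take $G$ a bounded-degree expander $H$ with a pendant path of length $\sqrt n$ attached, $\Xinit=V(H)$, $\kappa\equiv 1$ on $H$ and $\kappa\equiv n$ on the path. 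Then $(\Xinit,\kappa)$ is a capacitated expander (every cut with $L\cap\Xinit\neq\emptyset$ must pay inside $H$), but cutting the middle bundle of the path in $G^{split}$ has conductance $\approx n/n^{3/2}=n^{-1/2}$. Such high-capacity, terminal-free appendages are not pathological here --- the heavy paths of \Cref{def:Hhat} are exactly of this shape --- so standard pruning cannot be run on $G^{split}$, and switching to a ``relative to terminals'' expander notion does not help because no pruning primitive for that notion is available as a black box.

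The paper's route is different and avoids this mismatch: the capacitated-expansion hypothesis is consumed only inside \Cref{lem:embedwitness} (cut-matching game plus flows), which embeds a genuine uncapacitated expander witness $W$ supported on (almost all of) the terminal set, with per-vertex congestion $O(\kappa(v)\log n)$ and short embedding paths; standard pruning is then run on $W_{multi}$, not on the graph. The capacitated deletion budget enters exactly as in \Cref{lem:bound num phase}: deleting an edge of capacity $\kappa(e)$ destroys embedding paths of total value only $\tilde O(\kappa(e))$, so $\sum_e\kappa(e)=O(|\Xinit|/n^{o(1)})$ deleted capacity removes too little witness weight for pruning to shrink $X$ below $|\Xinit|/2$, and the surviving embedded witness on $X$ certifies the remaining (capacitated) expansion. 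Your congestion accounting in the second paragraph and the remark that one needs $\kappa(V)=\Ohat(|\Xinit|)$ for the time bound are in the right spirit, but without replacing the split-gadget equivalence by such an embedding argument (or by a pruning theorem for terminal-relative vertex expansion, which you would have to prove separately), the proposal does not establish the lemma.
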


To prove Lemma \ref{lem:overview-capacity-pruning}, we do not need to modify standard pruning from Theorem \ref{thm:overview-pruning}. Instead, we are able to show that one can replace the capacitated expander by a regular uncapacitated one, on which we can then run standard pruning.

\paragraph{Ensuring Small Total Capacity.}
Note that our capacitated pruning terminates after $O(|\Xinit|/n^{o(1)})$ total edge capacity is deleted, at which point we need to reinitialize the pruning algorithm if we want to keep maintaining an expander. Thus, to avoid doing many reinitializations, we want the average edge capacity to be small.
Note that because we assume the main graph $G$ has constant degree, $\sum_{e \in E(G)} \kappa(e) \sim \sum_{v \in V(G)} \kappa(v)$. Our goal can thus be summarized as follows: given graph $G$ and some core $K$, find a capacity function $\kappa$ that turns $K$ into a capacitated expander while minimizing $\sum_{v\in V(G)} \kappa(v)$. One of the highlights of our paper is the following structural lemma, which shows that this minimum $\sum_{v\in V(G)} \kappa(v)$ is directly related to the (weak) diameter of $K$. This lemma is implicitly proved in Section \ref{sec:Core} or Part \ref{part:dynamicShortestPaths}; for an explicit proof see Appendix \ref{sec:overview-proof-kappa}. 

\begin{lem}[Small Capacity Sum for Small Diameter]
\label{lem:overview-kappa} Given graph $G$ and any $K \subseteq V(G)$,
there exists a capacity function $\kappa(v)$ such that $(K, \kappa)$ forms a capacitated vertex expander in $G$ and $\sum_{v \in V(G)} \kappa(v) = \Otil(|K|\diam_{G}(K))$, where $\diam_{G}(K) \defeq \max_{x,y \in K} \dist_{G}(x,y)$.
This bound is tight: there exist $G,K$  such that any feasible function $\kappa$ necessarily has $\sum_{v \in V(G)} \kappa(v) = \Omegahat(|K|\diam_{G}(K))$. 
\end{lem}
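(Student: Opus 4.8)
The plan is to prove the upper bound by exhibiting a flow-based certificate that $(K,\kappa)$ is a capacitated expander, with $\kappa$ read directly off the certificate. Let $d\defeq\diam_G(K)$ and fix a bounded-degree graph $H$ with vertex set $V(H)=K$ that is an $\Omega(1)$-(edge-)expander; such an $H$ exists by standard results (e.g.\ a random constant-degree graph on $|K|$ vertices), and we only need existence. For each edge $(a,b)\in E(H)$, route it along a shortest $a$–$b$ path $P_{ab}$ in $G$; since $a,b\in K$ this path has at most $d+1$ vertices. Define $\kappa_0(v)$ to be the number of paths $P_{ab}$ through $v$ and set $\kappa(v)\defeq\max\{1,\kappa_0(v)\}$. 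Then $\sum_v\kappa_0(v)=\sum_{(a,b)\in E(H)}|P_{ab}|\le|E(H)|\cdot(d+1)=O(|K|d)$. The only vertices carrying capacity more than $1$, and (as one checks) the only vertices that can be essential in a vertex cut separating $K$, are the $O(|K|d)$ vertices on the paths $P_{ab}$; so, interpreting the sum over this relevant set (which in the regime where the lemma is applied is all of $V(G)$, since there $|V(G)|=\Otil(|K|d)$), we get $\sum_v\kappa(v)=\Otil(|K|d)$.

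Next I would verify that this $\kappa$ works. Suppose toward a contradiction that $(L,S,R)$ is a sparse capacitated vertex cut with respect to $K,\kappa$: $|L\cap K|\le|R\cap K|$, no $G$-edge runs between $L$ and $R$, and $\sum_{v\in S}\kappa(v)\le|L\cap K|/n^{o(1)}$. Write $A\defeq L\cap K$, so $|A|\le|K|/2$. The key point is that every $H$-edge $(a,b)$ with $a\in A$ and $b\in K\setminus A$ has its embedding path $P_{ab}$ meeting $S$: it starts in $L$ and ends in $(S\cup R)\cap K$, so either it ends in $S$, or it crosses from $L$ into $R$ and, since there is no $L$–$R$ edge and $V(G)=L\sqcup S\sqcup R$, it uses a vertex of $S$. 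Hence $|E_H(A,K\setminus A)|\le\sum_{v\in S}|\{(a,b)\in E(H):v\in P_{ab}\}|=\sum_{v\in S}\kappa_0(v)\le\sum_{v\in S}\kappa(v)$. But the expansion of $H$ and $|A|\le|K|/2$ give $|E_H(A,K\setminus A)|=\Omega(|A|)$, so $\sum_{v\in S}\kappa(v)=\Omega(|L\cap K|)$, contradicting sparsity once the $n^{o(1)}$ slack in the definition is chosen below the expansion constant of $H$. So no such cut exists and $(K,\kappa)$ is a capacitated expander.

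For tightness I would use a ``chain of stars'' gadget: take a path $p_0p_1\cdots p_d$, attach a set $A$ of $t$ leaves to $p_0$ and a set $B$ of $t$ leaves to $p_d$, and let $K=A\cup B$, so $|K|=2t$ and $\diam_G(K)=d+2=\Theta(d)$. For each $0\le i\le d$, the triple with separator $\{p_i\}$, left side $A\cup\{p_0,\dots,p_{i-1}\}$ and right side $B\cup\{p_{i+1},\dots,p_d\}$ is a vertex cut with respect to $K$ whose left side contains all $t$ vertices of $A$ (for $i=0$ the left side is exactly $A$). For $(K,\kappa)$ to be a capacitated expander, each of these cuts must fail to be sparse, which forces $\kappa(p_i)\ge t/n^{o(1)}=\Omegahat(|K|)$. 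Summing over the $d+1$ path vertices yields $\sum_v\kappa(v)\ge\sum_i\kappa(p_i)=\Omegahat(|K|d)$, matching the upper bound.

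The high-level strategy is straightforward; the main obstacle is bookkeeping the $n^{o(1)}$ parameters so that the chain of inequalities in the second paragraph composes — choosing the expansion of $H$, the per-vertex congestion bound, and the slack in the definition of a sparse capacitated cut consistently (there is ample room, since $H$ has $\Omega(1)$ expansion). A secondary technical point is the reduction implicit in the first paragraph: one should formally argue that vertices farther than $\Theta(d)$ from $K$ never matter for a cut separating $K$, so the capacity-$1$ padding on such vertices can be ignored (and is moot in the application, where $|V(G)|=\Otil(|K|d)$ already). The remaining ingredients — existence of $H$, the shortest-path routing, and the counting — are routine.
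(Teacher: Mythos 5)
Your proposal is correct, but it proves the lemma by a genuinely different route than the paper. The paper's proof (Appendix \ref{sec:overview-proof-kappa}) is algorithmic: it runs congestion balancing — start with $\kappa\equiv 1$ and double $\kappa$ on the separator of any sparse capacitated cut — and bounds the total capacity via a potential function, namely the minimum cost of embedding a star on $K$ into $G$ where each vertex costs $\log\kappa(v)$; since the potential is sandwiched between $0$ and $|K|\diam_G(K)\log n$ and each doubling step raises it proportionally to the capacity increase, the sum stays $\Otil(|K|\diam_G(K))$. You instead exhibit $\kappa$ directly as the congestion of a shortest-path embedding of a constant-degree expander $H$ on vertex set $K$, and certify expansion by the standard ``every $H$-edge crossing $A$ must route through the separator'' argument. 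Your route is more elementary for a pure existence statement, avoids the extra $\log n$ factor, and is in fact closer in spirit to the witness/embedding certificates used in the main body (Lemma \ref{lem:embedwitness}); what it does not give is an efficient procedure, which is precisely what the paper's congestion-balancing proof is previewing (and what the relaxed Lemma \ref{lem:overview-kappa-relaxed} then makes algorithmic). The one caveat you flag — that the requirement $\kappa(v)\ge 1$ forces the sum over all of $V(G)$ to be at least $|V(G)|$, so the stated bound only makes sense when $|V(G)|=\Otil(|K|\diam_G(K))$ — is shared by the paper's own proof (its initialization $\kappa\equiv 1$ contributes the same $|V(G)|$ term), so it is an artifact of the overview-level statement rather than a gap in your argument. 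Your lower-bound gadget (leaf sets of size $t$ at the two ends of a length-$d$ path, with $K$ the leaves) differs from the paper's (two expanders joined by a long path, with $K=V(G)$) but is equally valid, and the per-vertex capacity forcing along the path is argued the same way in both.
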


\begin{figure}[h!]
\centering
\includegraphics[scale=.5]{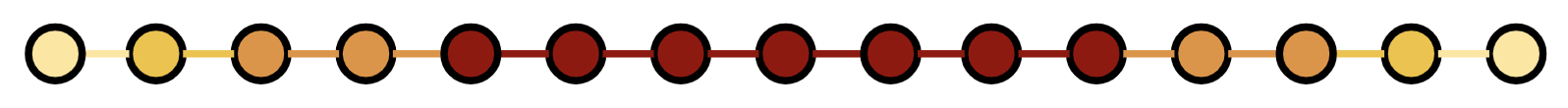}
\caption{On a path, roughly half vertices are very critical. This is no coincidence: the path graph has large diameter.}
\label{fig:overview-critical-edges-lower-bound}
\end{figure}

Unfortunately, we do not know how to compute the function $\kappa$ guaranteed by Lemma \ref{lem:overview-kappa} in near-linear time. Instead, we compute a slightly relaxed version which only guarantees expansion for relatively large cuts -- i.e. cuts where $L \cap K$ is large with respect to $K$. We can show that the pruning of Lemma \ref{lem:overview-capacity-pruning} also works with this relaxed notion of capacitated vertex expansion.

\begin{restatable}[Computing the Capacities]{lemma}{lemmaCongestionOverview}
	\label{lem:overview-kappa-relaxed} Given graph $G$ and any $K \subseteq V(G)$,
	one can compute in $\Ohat(n\diam_{G}(K))$ time a capacity function $\kappa(v)$ such that $\sum_{v \in V(G)} \kappa(v) = \Ohat(|K|\diam_{G}(K))$ and such that there are no sparse capacitated vertex cuts $(L,S,R)$ with respect to $K$ for which $|L\cap K| \geq K/n^{o(1)}$.
\end{restatable}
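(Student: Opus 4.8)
The goal is, given $G$ and $K \subseteq V(G)$, to compute in $\Ohat(n \cdot \diam_G(K))$ time a capacity function $\kappa$ with $\sum_v \kappa(v) = \Ohat(|K|\cdot\diam_G(K))$ and no sparse capacitated vertex cut $(L,S,R)$ with respect to $K$ for which the ``heavy side'' $|L\cap K|$ is at least $|K|/n^{o(1)}$. My approach would be a \emph{cut-matching / congestion-balancing} scheme that iteratively searches for a violating sparse balanced cut and, when it finds one, raises capacities along the separator until no such cut remains --- this is exactly the ``congestion balancing'' primitive from \cite{detDiSSSPAndSCC} adapted from edge capacities to vertex capacities, combined with an embedding argument in the spirit of Lemma~\ref{lem:overview-kappa}.

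\textbf{Step 1: a certify-or-cut subroutine.} First I would set up a subroutine that, given the current $\kappa$, either (a) certifies that $(K,\kappa)$ is a capacitated expander \emph{for balanced cuts} (i.e.\ no sparse capacitated vertex cut with $|L\cap K| \ge |K|/n^{o(1)}$), or (b) returns such a violating cut $(L,S,R)$. This is done via a flow embedding: try to embed a large (capacitated) expander on the vertex set $K$ into $G$ with vertex-congestion bounded in terms of $\kappa$, using a small number of maxflow computations on a graph with $O(n)$ vertices and $O(\diam_G(K))$ reachable "radius" (since $\diam_G(K)$ bounds how far the relevant BFS/flow explorations need to go, we only ever touch $\Ohat(n\cdot\diam_G(K))$ work per flow). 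If the embedding succeeds, expander pruning-style robustness certifies the no-sparse-balanced-cut property; if it fails, the LP/flow dual yields the violating vertex cut $(L,S,R)$ with $\sum_{v\in S}\kappa(v)$ small relative to $|L\cap K|$.

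\textbf{Step 2: the congestion-balancing loop.} Initialize $\kappa \equiv 1$. Repeatedly call the subroutine of Step~1. Each time it returns a violating cut $(L,S,R)$, multiply $\kappa(v)$ by a factor $(1+\eta)$ (for $\eta = 1/\polylog n$) for every $v \in S$ --- or, in the cleaner formulation, add $1$ to a counter and set $\kappa(v) = 2^{c(v)}$ where $c(v)$ counts how many returned separators contained $v$. The key accounting is a potential/entropy argument: because the cut is \emph{balanced} ($|L\cap K| \ge |K|/n^{o(1)}$) and \emph{sparse} ($\sum_{v\in S}\kappa(v) \le |L\cap K|/n^{o(1)}$), raising capacities on $S$ increases $\sum_v \kappa(v)$ by only a $(1/n^{o(1)})$-fraction of $|L\cap K| \le |K|$, while a fixed-potential argument (e.g.\ $\sum_{v} \kappa(v)\log\kappa(v)$, or the standard MWU regret bound) shows the process must terminate after $n^{o(1)}$ iterations before $\sum_v\kappa(v)$ could exceed $\Ohat(|K|\diam_G(K))$ --- the $\diam_G(K)$ factor entering exactly as in the tight bound of Lemma~\ref{lem:overview-kappa}, since a BFS layering of $K$ into $\diam_G(K)$ layers exhibits a feasible $\kappa$ of that total weight that the balancing cannot overshoot by more than polylog factors.

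\textbf{What I expect to be the main obstacle.} The routine parts are the flow setup and the degree-3 reduction letting us pass between $\sum_e\kappa(e)$ and $\sum_v\kappa(v)$. The delicate part is making the \emph{balanced}-cut relaxation interact correctly with both the termination bound and the running time: I must ensure (i) that every violating cut the subroutine can return is genuinely balanced, so the potential argument applies --- this typically requires the embedding to be into an expander that is itself balanced/well-linked so its failure certificate is automatically a balanced sparse cut; and (ii) that the total cost over all $n^{o(1)}$ iterations stays $\Ohat(n\diam_G(K))$, which means each flow computation must be confined to the relevant ``$\diam_G(K)$-bounded'' region rather than all of $G$, and the capacities $\kappa$ --- which grow during the loop --- must not blow up the flow running time beyond $\Ohat(\cdot)$. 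Reconciling the capacity growth (needed for the expansion guarantee) with the per-iteration runtime budget is, I expect, the technical crux, and is presumably where the argument must lean on the structure of $K$ having small weak diameter rather than treating $G$ as an arbitrary graph.
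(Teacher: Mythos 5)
Your overall route — congestion balancing with a cut-or-embed subroutine (cut-matching game plus vertex-capacitated flow/matching embeddings) and doubling of $\kappa$ on the returned separator — is exactly the paper's approach, and your Step 1 matches the paper's $\EmbedCore$ primitive. Your worry (i) about balancedness is not actually the crux: the cut player produces near-bisections of $K$ and the flow subroutine's cut certificate inherits $|L\cap K|\ge \Omega(\epsilon |K|)$ automatically, so every returned cut is balanced by construction.

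The genuine gap is in your termination/counting argument, which is the heart of the lemma. The potential you name, $\sum_v \kappa(v)\log\kappa(v)$ taken over the algorithm's own evolving $\kappa$, has no useful per-iteration lower bound: a doubling step whose separator $S$ has tiny total capacity (e.g.\ a single unit-capacity vertex) increases this potential by $O(\kappa(S)\log n)$, which can be a vanishing fraction of $|K|$, so dividing an upper bound on the potential by the per-step increase does not bound the number of iterations. Moreover, your claimed bound of $n^{o(1)}$ iterations is false in general: on a path with $K=V$, any $\kappa$ with no sparse \emph{balanced} cut must already have $\sum_v\kappa(v)=\Omegahat(|K|\diam_G(K))$, and since each doubling adds at most $O(|K|)$ to the total capacity, $\Omegahat(\diam_G(K))$ doubling steps are unavoidable; the correct bound is $\Ohat(\diam_G(K))$ iterations, not $n^{o(1)}$ (they coincide only in the robust-core application where $\diam_G(K)=n^{o(1)}$). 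The paper closes this gap with a different potential: $\Pi$ is the \emph{minimum cost of embedding a star spanning (most of) $K$ into $G$}, where the cost of a vertex is $\log\kappa(v)$ (\Cref{def:Pi general}). Any balanced sparse cut $(L,S,R)$ forces $\Omega(|L\cap K|)$ of the star's paths to cross $S$, so doubling $\kappa$ on $S$ raises $\Pi$ by $\Omegahat(|K|)$ regardless of how small $\kappa(S)$ is (\Cref{lem:doubling effect}), while embedding the star along shortest paths shows $\Pi=\Ohat(|K|\diam_G(K))$ at all times (\Cref{lem:bound Pi}); together these give $\Ohat(\diam_G(K))$ doublings, from which both the capacity bound and the running time follow. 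Your parenthetical "a BFS layering exhibits a feasible $\kappa$ that the balancing cannot overshoot" gestures at a fixable variant — weighting a potential by a \emph{fixed} feasible witness $\kappa^*$ from \Cref{lem:overview-kappa}, i.e.\ tracking $\sum_v\kappa^*(v)\log\kappa(v)$, would also work — but as written you neither fix the witness nor supply the per-step increment, and without that the argument does not go through.
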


\subsection[Algorithm for Robust Core]{Algorithm for Robust Core (Simplified Version of Algorithm \ref{alg:Core} in \Cref{part:dynamicShortestPaths}.)}
\label{subsec:overview-RobustCoreAlgoAndAnalysis}

We later sketch a proof for Lemma \ref{lem:overview-kappa-relaxed}. But first let us show how capacitated expanders can be used to solve Robust Core (Definition \ref{def:overview-core}); see pseudocode below. 

\paragraph{Initialization of Robust Core.} First we apply Lemma \ref{lem:overview-kappa-relaxed} to compute a capacity function $\kappa$ such that $(\Kinit,\kappa)$ forms a capacitated expander in $G$. Recall that $G$ has constant degree. Since Robust Core assumes that $\diam_G(\Kinit) = \Ohat(1)$, the running time of Lemma \ref{lem:overview-kappa-relaxed} is $\Ohat(n)$ and we have $\sum_{e \in E(G)} \kappa(e) = \Theta(\sum_{v \in V(G)} \kappa(v)) = \Ohat(n)$. Using capacitated expander pruning (Lemma \ref{lem:overview-capacity-pruning}), we can maintain an expander $X$ such that $(X,\kappa)$ forms a capacitated expander in $G$. We then define our solution $K$ to Robust Core as follows: initially $K = \Kinit$, and we remove from $K$ any vertex that leaves to be $\ball_{G}(X,4d)$. $K$ clearly satisfies the decremental property of Robust Core. We can show that $K$ satisfies the diameter property because $K \subseteq \ball(X,4d)$ and $X$ itself has low diameter because (loosely speaking) $X$ forms a capacitated expander in $G$.\footnote{Technically speaking, we show that not only does $(X,\kappa)$ form an expander, but also that $\kappa$ actually yields a short-path embedding of an expander into $X$.} Finally, as long as we have that $|X| \geq |\Kinit|/n^{o(1)}$, the core $K$ satisfies the scattering property because $v$ leaves $K$ only if it leaves $\ball(X,4d)$, at which point $\ball(v,2d) \cap X = \emptyset$. We have thus shown that $K$ continues to be a valid solution to Robust Core as long as $|X|$ is large. By capacitated expander pruning (Lemma \ref{lem:overview-capacity-pruning}), $|X|$ will be sufficiently large as long as the total capacity of deleted edges satisfies $\sum_e \kappa(e) = O(n/n^{o(1)})$. 

\begin{algorithm}[!ht]
	$K \gets \Kinit \qquad n \gets |\Kinit|$\;
	\While(\tcp*[f]{Each iteration is a single phase}){$|K| \geq n^{1-o(1)}$}{
		Compute $\kappa$ as in Lemma \ref{lem:overview-kappa-relaxed} such that $(K,\kappa)$ is a capacitated expander in $G$. \tcp*[f]{Recall that the algorithm preservers the Monotonicity Invariant.}\;
		\While{$X \subseteq K$ from running $\textsc{Prune}(G,K,\kappa)$ has size $\geq n^{1-o(1)}$}{
			Maintain $\ball(X,4d)$ using an ES-tree and remove every vertex leaving $\ball(X, 4d)$ from $K$. 
			\\
			\tcp*[h]{$|X|$ can become too small after adversary deletes $\Omegahat(n)$ edge capacity. Once this happens, algorithm restarts the outer while loop with the current $K$.}
		}
	}
	Remove all vertices from $K$ and terminate. \tcp*[f]{Only executed once $|K| \leq n^{1-o(1)}$, so all remaining vertices satisfy scattering property.}
	\caption{$\Core(G,\Kinit)$}\label{alg:overview-algorithmRobustCore}
\end{algorithm}

\paragraph{Maintaining Robust Core.} At some point, however, the capacity of deleted edges will be too large, and $|X|$ may become too small. Consider the moment right before the deletion that causes $|X|$ to become too small. At this moment, $K$ is still a valid core, and hence has small (weak) diameter. Moreover, we can assume that $|K| = \Theta(\Kinit)$, since otherwise the entire core is scattered and we can terminate Robust Core; formally, we are able to show that by the scattering property, if $|K|$ becomes very small compared to $|\Kinit|$ we can simply remove every remaining vertex in $K$. The algorithm now essentially restarts the entire process above, but with $K$ instead of $\Kinit$. That is, it computes a new capacity function $\kappa$ such $(K,\kappa)$ forms a capacitated expander in $G$. Since $K$ has small diameter, the running time is again $\Ohat(n)$ and we again have $\sum_{v \in V(G)} \kappa(v) = \Ohat(n)$. The algorithm now uses capacitated pruning to maintain a new expander $X \subseteq K$ and again removes from $K$ any vertex that leaves $\ball(X,4d)$. As before, $K$ remains a valid core as long as $|X| \geq |K|/n^{o(1)}$; here we use the fact that $|K| = \Theta(|\Kinit|)$ to ensure the scattering property. Thus, by the guarantees of pruning, $K$ remains valid until the adversary deletes at least $\Omega(n/n^{o(1)})$ more edge capacity.

\paragraph{Endgame of Robust Core.} Once the adversary deletes enough edge capacity, the algorithm again computes a new function $\kappa$ for the current $K$. We refer to each such recomputation of $\kappa$ as a new \emph{phase}. The algorithm continues executing phases until eventually $|K|$ becomes much smaller than $\Kinit$; as mentioned above, the algorithm can then remove all remaining vertices from $|K|$ and terminate. 

\paragraph{Analysis of \Cref{alg:overview-algorithmRobustCore}.}We argued above that each phase requires $\Ohat(n)$ time. The only step left is thus to show that the total number of phases is $\Ohat(1)$. To see this, assume for the moment that although $\kappa$ is recomputed between phases, every $\kappa(v)$ is monotonically increasing. The argument is now that since we always maintain a core $K$ with small diameter, Lemma \ref{lem:overview-kappa-relaxed} guarantees that the function $\kappa$ we compute to make $(K,\kappa)$ a capacitated expander in $G$ always has $\sum_{v \in \Kinit} \kappa(v)= \Ohat(n)$. Since $\kappa$ is monotonically increasing, this implies that the total vertex capacity over all phases is $\Ohat(n)$, so the total edge capacity is also $\Ohat(n)$. But a phase can only terminate after at least $n/n^{o(1)}$ edge-capacity has been deleted, leading to at most $\Ohat(n) / (n/n^{o(1)}) = \Ohat(1)$ phases.

To facilitate the above analysis, our algorithm will ensure that $\kappa(v)$ is indeed monotonic. Note that the algorithm only ever changes $\kappa$ at the beginning of a new phase.

\begin{inv}[Monotonicity Invariant]
Let $\kappanew$ be the new capacity function computed at the beginning of some phase of Algorithm $\Core$ and let $\kappaold$ be the capacity function computed in the previous phase. Then, we always have $\kappaold(v) \leq \kappanew(v) \ \forall v \in V$.
\end{inv}

We now briefly outline our algorithm for computing $\kappa$ in Lemma \ref{lem:overview-kappa-relaxed}. As we will see, the monotonicity invariant naturally follows from our approach and requires no extra work to ensure. Intuitively, since $G$ is decremental, it only becomes further from a vertex expander over time, so it is not surprising that the vertices of $G$ only become more critical.

\lemmaCongestionOverview*

\paragraph{Proof sketch of Lemma
 \ref{lem:overview-kappa-relaxed}} We follow the basic framework of \emph{congestion balancing} introduced by the authors in \cite{detDiSSSPAndSCC}. But unlike in  \cite{detDiSSSPAndSCC}, we do not need to assume the initial graph is an expander. Our overall framework thus ends up being significantly more powerful, both conceptually and technically. See \Cref{subsec:analysisOfRobustCorePart2} in the main body for details.

Recall that we only maintain a relaxed expansion that applies to \emph{balanced} cuts $(L,S,R)$ for which $|L \cap K| \geq |K|/n^{o(1)}$. The high-level idea of congestion balancing is quite intuitive. We initially set the the $\kappa(v)$ to be small enough that $\sum_{v \in V} \kappa(v) = O(|K|)$.  Then, we repeatedly find an arbitrary balanced cut $(L,S,R)$ such that $\sum_{v \in S} \kappa(v) < |L \cap K|/n^{o(1)}$. If no such cut exists, then $(K, \kappa)$ forms a capacitated expander in $G$, as desired. Else, increase the expansion of this cut by doubling all vertex capacities in $S$.

Note that this approach naturally satisfies the Monotonicity Invariant. When the algorithm needs to compute a new $\kappanew$ for a new set $K$, it starts the process with the old capacities $\kappaold(v)$. If there are no sparse balanced cuts using $\kappaold(v)$, the algorithm can just set $\kappanew = \kappaold$. Otherwise the algorithm only changes capacities by doubling them, so $\kappanew \geq \kappaold$.

The crux of the analysis is showing that such a doubling step can only occur $\Ohat(\diam_{G}(K))$ times. This allows us to bound the total running time; it also guarantees that we always have $\sum_{v \in V(G)} \kappa(v) = \Ohat(|K|\diam_{G}(K))$, because it is easy to check that each doubling step increases the total capacity of $S$ by at most $|L \cap K| < |K|$. To bound the number of doubling steps, we use a potential function $\Pi(G)$ that corresponds to (loosely speaking) the min-cost embedding in $G$, where the cost of a vertex $v$ is $\log(\kappa(v))$. We are able to show that each doubling step increases $\Pi(G)$ by $\Omegahat(|K|)$, and that $\Pi(G)$ is always $\Ohat(|K|\diam_{G}(K))$, thus giving the desired bound of $\Ohat(\diam_{G}(K))$ doubling steps.

\subsection{A Hierarchy of Emulators}
We have outlined above how to solve the crucial building block Robust Core, which can in turn be used to maintain a covering of G (Definition \ref{def:overview-covering}), which allows us to achieve Goal \ref{goal:overview} -- that is, to compress hop distances by a $n^{o(1)}$ factor. But decremental SSSP can only be solved efficiently when all hop distances are small, so we need to apply this compression multiple times. In particular, we have a hierarchy of emulators, where $H_1$ compresses hop distances in $G$, $H_2$ compresses hop distances in $H_1$, and so on.  

This layering introduces several new challenges. The biggest one is that all the tools above assume a decremental graph, and even though $G$ is indeed decremental, the graphs $H_i$ may have both edge and vertex insertions. For example, the vertex set of $H_1$ also includes core vertices for each core in the covering of $G$, and when some core $C$ in $G$ becomes scattered, new cores are added to cover the vertices previously in $C$, so new core vertices and edges are added to $H_1$. Fortunately, these insertions have low impact on distances in $H_1$ because $H_1$ is emulating a decremental graph $G$. We thus refer to $H_1$ as being decremental \emph{with low-impact insertions}.
Since the algorithm for maintaining $H_2$ sees $H_1$ as its underlying graph, all of our tools must be extended to work in this setting.

The fact of emulators having low-impact insertions is a common problem in previous dynamic algorithms as well. While there exist algorithms that are able to extend the ES tree to work in such a setting (see especially \cite{henzinger2014decremental}), extending Robust Core and congestion balancing is significantly more challenging. Conceptually speaking, the main challenge lies with the scattering property: if $G$ has insertions, then $\ball(v,2d)$ can both shrink and grow, so a vertex can alternate between being scattered and unscattered. 

One of our key technical contributions is a more general framework for analyzing congestion balancing that naturally extends to graphs with low-impact insertions. At a high-level, congestion balancing from \cite{detDiSSSPAndSCC} defined a potential function $\Pi(G)$ on the input graph $G$ (see Lemma \ref{lem:overview-kappa-relaxed}). The issue is that if $G$ has insertions, then $\Pi(G)$ can actually decrease, which invalidates the analysis. To resolve this, we show that there exists a graph $\Ghat$ which is entirely decremental and yet has exactly the same vertex-cuts as $G$. We then show that the analysis of congestion balancing goes through if we instead look at $\Pi(\Ghat)$. We note that the algorithm never has to construct $\Ghat$; it is used purely for analysis. The formal analysis is highly non-trivial and we refer the reader to Section \ref{subsec:analysisOfRobustCorePart2} for more details. 

\paragraph{Returning the Path.} The hierarchy of emulators also creates unique difficulties in path-reporting (\Cref{part:augmented-queries}). We discuss this more at the end of the overview section, after we introduce the threshold-subpath queries that we need in our minimum cost-flow algorithm.

\section{Overview of Part \ref{part:minCostFlow}: Static Min-Cost Flow}
\label{sec:overview-flow}

We now outline our flow algorithm for Theorem \ref{thm:MainMinCost}. The techniques in Part \ref{part:minCostFlow} have zero overlap with those from \Cref{part:dynamicShortestPaths,part:augmented-queries}: the only relation is that Part \ref{part:minCostFlow} uses the dynamic SSSP data structure from \Cref{part:dynamicShortestPaths,part:augmented-queries} as a black box. 

\paragraph{Simplifying Assumptions.} For ease of exposition, this overview section focuses on the problem of \emph{vertex capacitated} max flow, and ignores costs entirely. 
We note that no almost-linear time algorithm is known even for this simpler problem. The extension to costs follows quite easily.

\paragraph{Notation.}
Let $G = (V,E,u)$ be the input graph, where $u(x)$ is the capacity of vertex $x$. Let $s$ be a fixed source and $t$ be a fixed sink. For any path $P$, define $\lambda(P)$ to be the minimum vertex capacity on $P$. The goal is to compute a flow vector $f \in \mathbb{R}^{E}_+$ that satisfies standard flow constraints: $\forall x \notin \{s,t\}$, $\inflow_f(x) = \outflow_f(x)$ (flow conservation) and $\inflow_f(x) \leq u(x)$ (feasibility). We define the value of $f$ to the total flow leaving $s$. Our goal is compute a $(1-\eps)$-optimal flow. 

\subsection{Existing Technique: Multiplicative Weight Updates}
We follow the framework of Garg and Koenneman for applying MWU to maximum flow \cite{garg2007faster}. We assume for simplicity that the approximation parameter $\eps$ is a constant. Loosely speaking, the framework is as follows:

\ignore{
\begin{enumerate}
	\item Initialize the flow: $f = 0$.
	\item Create a weight-function $w: V \rightarrow \mathbb{R}_+$; initially set all vertex weights to be around $1/\poly(m)$.
	\item Repeatedly compute a $(1+\eps)$-shortest path $\pi(s,t)$ with respect to the edge weights $w$
	\begin{enumerate}
		\item \label{item:overview-MWU-exit} If $w(\pi(s,t)) > 1$ then exit the loop.
		\item Else, define $\lambda = \lambda(\pi(s,t))$ and for every edge $e \in \pi(s,t)$ do: $f(e) = f(e) + \lambda$.
		\item \label{item:overview-MWU-weight} For each vertex $v \in \pi(s,t)$ do: $w(v) = w(v) \cdot \exp(\eps \lambda / u(v))$. (Loosely speaking, $w(v)$ increases by a $(1+\eps)$ factor for every $u(v)$ units of flow through $v$.)
	\end{enumerate}
	\item Return the flow $f$ scaled down by a factor of about $\Theta(\log(n))$.
\end{enumerate}

}

\begin{algorithm}[!ht]
	 Initialize the flow: $f \gets 0$.\;
	Create weight-function $w: V \rightarrow \mathbb{R}_+$; with initial vertex weights around $1/\poly(m)$.\;
	\While{\textbf{true}}{
	    Compute a $(1+\eps)$-shortest path $\pi(s,t)$ with respect to the edge weights $w$.\;
	    \If{$w(\pi(s,t)) > 1$}{\textbf{exit} while-loop.}
	    $\lambda \gets \lambda(\pi(s,t))$.\; 
	    \lForEach{edge $e \in \pi(s,t)$}{ $f(e) \gets f(e) + \lambda$.}
	    \tcc{For every $u(v)$ units of flow entering $v$, the weight $w(v)$ is increased by a $e^{\epsilon} \approx (1+\eps)$ factor}
		\lForEach{vertex $v \in \pi(s,t)$}{ $w(v) \gets w(v) \cdot \exp(\eps \lambda / u(v))$.\label{item:overview-MWU-weight} } 
	}
	\Return $f$ scaled by factor $\Theta(\log(n))$ \label{item:overview-MWU-exit}.
	\caption{$\textsc{MWU}(G,s,t)$}\label{alg:overview-algorithmMWU}
\end{algorithm}

At a very high-level, the algorithm increases the weights of vertices that receive a lot of flow relative to their capacity, so that the next shortest path is less likely to use that vertex. Using a primal-dual analysis (see e.g.~\cite{garg2007faster}), one can show that the returned flow is feasible and $(1-\eps)$ approximate.

Following the framework by Madry \cite{madry2010faster}, Chuzhoy and Khanna \cite{Chuzhoy:2019:NAD:3313276.3316320} used a dynamic SSSP data structure to avoid recomputing a new shortest $s$-$t$ path $\pi(s,t)$ from scratch with each iteration of the while loop. (A dynamic SSSP structure for edge-weighted graphs can easily be converted into one for vertex-weighted ones.) Because vertex weights only increase, a decremental SSSP data structure suffices. Note also that the MWU framework requires the data structure to work against an \emph{adaptive} adversary, because the updates to the data structure (the weight increases) depend on the $(1+\eps)$-shortest path returned by the data structure.

\paragraph{The Flow Decomposition Barrier.}
\newcommand{\pset}{\mathcal{P}}
In addition to computing the paths $\pi(s,t)$, the MWU framework also adjusts every vertex/edge on the path. Thus, if $\pset$ is the set of all $s$-$t$ paths returned by algorithm, then the total running of MWU is: [total update time of decremental SSSP] + [$\sum_{P \in \pset} |P|$]. Previous work bounds the second quantity in the following way. Say that we have weighted vertex capacities. On the one hand, each vertex $v$ receives at most $O(u(v) \log n)$ flow in total, since the flow $f/\Theta(\log(n))$ returned in step \ref{item:overview-MWU-exit} is guaranteed to be feasible. On the other hand, each path $\shortestSTPath$ sends at most $\lambda(\pi(s,t)) = \min_{v \in \shortestSTPath} u(v)$ flow which might only "fill-up" the minimizer vertex. There might thus be $O(n\log(n))$ paths in total, each of length at most $n$, so $\sum_{P \in \pset} |P| = O(n^2\log(n))$. An example where this behavior is apparent is given in Figure \ref{fig:overview-decomposition-mwu} below.

\begin{figure}[h!]
\centering
\includegraphics[scale=.33]{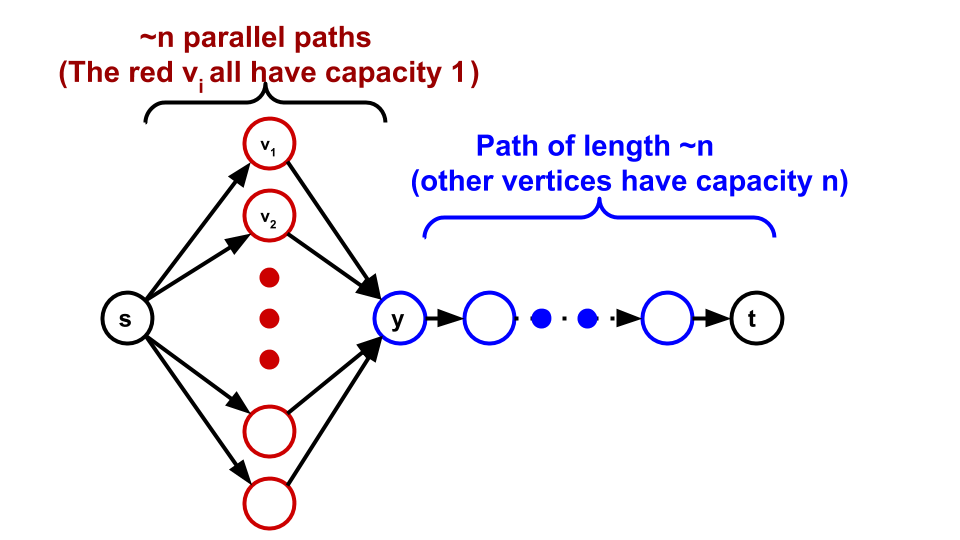}
\caption{Example of a graph with vertex capacities where running the standard MWU algorithm results in $\sum_{P \in \pset} |P| = \Theta(n^2\log(n))$.}
\label{fig:overview-decomposition-mwu}
\end{figure}

In the above figure, each path $\pi_{s,t}$ has $\lambda(\pi(s,t)) = 1$, so the algorithm only sends one unit of flow at a time. It is not hard to check that each of the red $v_i$ will be used $O(\log n)$ times, for a total of $n\log(n)$ paths; each path has length $n$, so $\sum_{P \in \pset} |P| = \Theta(n^2\log(n))$. One can similarly show that in edge-capacitated graphs, there are examples with $\sum_{P \in \pset} |P| = \Omega(mn\log(n))$. For unit edge capacities, $\sum_{P \in \pset} |P|$ is at most $O(m\log(n))$. Up to the extra $\log(n)$ factor, These bounds precisely correspond to what is known as the \emph{flow-decomposition barrier} for maximum flow \cite{goldberg1998beyond}. 

The previous state-of-the-art for adaptive decremental SSSP has total update time 
$\Otil(n^2)$ \cite{Chuzhoy:2019:NAD:3313276.3316320, ChuzhoyS20_apsp,bernstein2020fully}; plugging this into the MWU-framework gives an $\Otil(n^2)$ algorithm for approximate min-cost flow for graphs with unit edge capacities or vertex-capacitated graphs.
But these results did not lead to any improvement for edge-capacitated graphs precisely because of the flow-decomposition barrier. Similarly, our new data structure immediately yields an $\Ohat(m)$-time min-cost flow algorithm for unit-capacity graphs (itself a new result), but on its own cannot make progress in graphs with general vertex or edge capacities.

To get $\Ohat(m)$ time for general capacities, we need to modify the MWU framework. Ours is the first MWU-based algorithm for max flow to go beyond the flow-decomposition barrier. 

\subsection{Our New Approach: Beyond the Flow-Decomposition Barrier}

The basic idea of our approach is to design a new MWU-framework with the following property

\begin{inv}
	\label{inv:overview-MWU}
In our MWU framework, whenever the algorithm sends flow from $x$ to $y$ on edge $(x,y)$, it sends at least $\hat{\Omega}(u(y))$ flow.
\end{inv}

Combined with the fact that the final flow through any vertex $v$ is at most $O(u(y)\log(n))$, and the fact that MWU never cancels flow (because it does not deal with a residual graph), it is easy to see that Invariant \ref{inv:overview-MWU} guarantees that the total number of times the algorithm sends flow into any particular vertex $y$ is $\Ohat(1)$, so $\sum_{P \in \pset} |P| = \Ohat(n)$. Achieving this invariant requires making changes to the MWU-framework.

\paragraph{Pseudoflow.}
Consider Figure \ref{fig:overview-decomposition-mwu} again. Consider some path $\pi(s,t)$ chosen by the MWU algorithm. This path has $\lambda(\pi(s,t)) = 1$. The algorithm can send one of flow into some red $v_i$, but in order to preserve the invariant above, it cannot send 1 unit of flow down $(v_i,y)$. As a result, the flow we maintain is only a \emph{pseudoflow}: it is capacity-feasible, but does not obey flow conservation constraints. We will show, however, that we can couple the computed pseudo-flow to a near-optimal flow.

\begin{defn}[pseudo-optimal flow: simplified version of Definition \ref{def:feasibleFlow}]
\label{def:overview-pseudo-optimal}
We say that a pseudoflow $\flowEst$ is $(1-\eps)$-pseudo-optimal if there exists a valid flow $\flow$ such that 
\begin{itemize}
	\item $f$ is a $(1-\eps)$-optimal flow. 
	\item for every $v \in V$, $|\inflow_{\flow}(v) - \inflow_{\flowEst}(v)| \leq \eps u(v)$.	
\end{itemize}
\end{defn}

We later show that there exists a black box reduction from computing a $(1-\eps)$-optimal flow to computing a $(1-\eps)$-pseudo-optimal flow. But first, we focus this overview on computing a $(1-\eps)$-pseudo-optimal flow.

\paragraph{The Ideal Flow and the Estimated Flow.}
At each step, the algorithm will \emph{implicitly} compute a $(1+\eps)$-approximate shortest path $\pi(s,t)$, but to preserve Invariant \ref{inv:overview-MWU}, it will only add flow on some edges of $\pi(s,t)$. We denote the resulting pseudoflow $\flowEst$. To show that $\flowEst$ is $(1-\eps)$-pseudo-optimal, we will compare it to the \emph{ideal} flow $\flow$, which sends $\lambda(\pi(s,t))$ flow on every edge in $\pi(s,t)$, as in the standard MWU framework. Our approach thus needs to ensure that $\flowEst$ is always similar to $\flow$. 

\paragraph{Randomized Flow.}
Consider Figure \ref{fig:overview-decomposition-mwu} again. Say that MWU computes a long path sequence $P_1, P_2, \ldots$. For the first path $P_1$, the algorithm can simply increase $\flowEst(s,v_i)$ and not send any flow on the other edges; we will still have $|\inflow_{\flow}(y) - \inflow_{\flowEst}(y)| = 1 - 0 \ll \eps u(y)$, and the same will hold for the vertices after $y$. But as more and more paths are processed, $\inflow_{\flow}(y)$ will increase, so the algorithm must eventually send flow  on $\flowEst$ through $y$. The natural solution is to send $u(y) = n$ flow on one of the edges $(v_j,y)$ after $u(y)$ paths $P_i$ go through $y$, so that  $\inflow_{\flow}(y) = \inflow_{\flowEst}(y) = u(y)$. (Vertex $v_j$ will then have much more than $u(v_j) = 1$ flow leaving it, but this is allowed by Definition \ref{def:overview-pseudo-optimal}, which only constrains inflow.) The problem is that in a more general graph there is no way to tell which paths $\pi(s,t)$ go through $y$, since the algorithm avoids looking at the paths explicitly.

To resolve this issue, we introduce randomization. For every implicit flow path $\pi(s,t)$, $\flowEst$ always sends flow $u(x)$ into every vertex $x$ on $\pi(s,t)$ with capacity $u(x) = \lambda(\pi(s,t))$, but also with probability $1/2$ it sends $u(x)$ flow into every $x$ with $u(x) \leq 2\lambda(\pi(s,t))$, with probability $1/4$ it sends $u(x)$ flow into every $x$ with $u(x) \leq 4\lambda(\pi(s,t))$, and so on. (In reality, we use an exponential distribution rather than a geometric one, and we scale all flow down by $n^{o(1)}$ to ensure concentration bounds.) It is not hard to see that the \emph{expected} flow $\flowEst(v,x)$ into $x$ is $\lambda(\pi(s,t)) = \flow(v,x)$.

\paragraph{Changes to the MWU-framework.}
Our algorithm thus makes the following changes to the MWU-framework above. Each iteration (implicitly) computes a $(1-\eps)$-approximate shortest as before, but instead of sending flow on every edge, the algorithm first picks a parameter $\gamma$ from the exponential distribution, and then in $\flowEst$ it sends $u(y)$ flow through every edge $(x,y)$ for which $u(y) \leq \gamma \lambda(\pi(s,t))$. The algorithm uses weight function $\weightFunctionEst$, which follows the same multiplicative update procedure as before, except it depends on $\flowEst$ rather than $\flow$. (The shortest path $\pi(s,t)$ in each iteration is computed with respect to $\weightFunctionEst$.)

The main difficulty in the analysis is that even though $\flowEst$ tracks $\flow$ in expectation, $\flow$ actually depends on earlier random choices in $\flowEst$, because $\flowEst$ determines the vertex weights $\weightFunctionEst$, which in turn affect the next $(1-\eps)$-approximate path $\pi(s,t)$ used in $\flow$. We are able to use concentration bounds for martingales to show that $\flowEst \sim \flow$ with high probability. We are also able to show that even though the flow $\flow$ is no longer in perfect sync with the weight function $\weightFunctionEst$, the chosen paths $\pi(s,t)$ are still good enough, and the final flow $\flow$ is $(1-\eps)$-optimal, so $\flowEst$ is $(1-\eps)$-pseudo-optimal. Finally, as mentioned above, we show a black-box conversion from computing a $(1-\eps)$-pseudo-optimal flow to computing a regular $(1-\eps)$-flow. 

For our modified algorithm to run efficiently, we need to be able to return all edges $(x,y)$ on $\pi(s,t)$ for which $u(y) \leq \gamma \cdot \lambda(\pi(s,t))$, in time proportional to the number of such edges. We are able to extend our data structure from \Cref{part:dynamicShortestPaths} to answer such queries (see below); the MWU algorithm then uses this data structure as a black box.

\paragraph{$(1-\eps)$-Optimal Flow from $(1-\eps)$-Pseudo-Optimal Flow.} Re-inspecting \Cref{def:overview-pseudo-optimal}, we observe that for vertices where $\inflow_{\flow}(v) \sim u(v)$, the second property $|\inflow_{\flow}(v) - \inflow_{\flowEst}(v)| \leq \eps u(v)$ implies that we have a $(1+\eps)$-multiplicative approximation of the amount of in-flow for $v$. Unfortunately, the in-flow of $v$ might be significantly lower than $u(v)$. But if $\inflow_{\flowEst}(v) \ll u(v)$,
the same property implies that $\inflow_{\flow}(v) \ll u(v)$, so most of the capacity of $v$ is not required for producing a $(1-\eps)$-optimal flow. We therefore suggest a technique that we call capacity-fitting, where we repeatedly use our algorithm for pseudo-optimal flow to reduce the total vertex capacities by a factor of roughly $2$. We terminate with a pseudo-flow that has (loosely speaking) the following property: for each vertex $v$, either $\inflow_{\flowEst}(v) \sim u(v)$ or
the capacity of $v$ is negligible. Once this property is achieved, we can route the surplus flow in the pseudo-flow by scaling the graph appropriately and then computing a single instance of regular maximum flow (only edge capacities, no costs) using the algorithm of \cite{Sherman17}.

\paragraph{Comparison to Previous Work.}
There have been several recent papers that avoid updating every weight within the MWU framework by using a randomized threshold to maintain an estimator instead \cite{chekuri2018randomized, chekuri2020fast, chekuri2020fast2}. The main difference of our algorithm is that to overcome the flow-decomposition barrier, we need to maintain an estimator not just of the weights but of the solution (i.e. the flow) itself. This introduces several new challenges:  we need a modified analysis of the MWU framework that allows us to compare the estimated flow $\flowEst$ with the ideal flow $\flow$; our MWU algorithm only computes a pseudoflow $\flowEst$, which then needs to be converted into a real flow; and in order to update $\flowEst$ efficiently, we need to introduce the notion of threshold-subpath queries and show that our new decremental SSSP data structure can answer them efficiently. 

\section{Overview of Part \ref{part:augmented-queries}: Threshold-Subpath Queries}
In order to use it in the min-cost flow algorithm of Part \ref{part:minCostFlow}, we need our SSSP data structure to handle the following augmented path queries. 

\begin{definition}[Informal Version of Definition \ref{def:pathReportingSSSP}]
Consider a decremental weighted graph $G$ where each edge $(u,v)$ has a \emph{fixed} steadiness $\sigma(u,v) \in \{1,2,\ldots,\tau\}$, with $\tau = o(\log(n))$. Note that while weights in $G$ can increase over time, the $\sigma(u,v)$ never change. For any path $\pi$, let $\sigma_{\le j}(\pi) = \{(u,v) \in \pi \mid \sigma(u,v) \leq j\}$. We say that a decremental SSSP data structure can answer \emph{threshold-subpath queries} if the following holds:
\begin{itemize}
	\item At all times, every vertex $v$ corresponds to some $(1+\epsilon)$-approximate $s$-$v$ path $\pi(s,v)$; we say that the data structure \emph{implicitly} maintains $\pi(s,v)$.
	\item Given any query($v,j$), the data structure can return $\sigma_{\le j}(\pi(s,v))$ in time $|\sigma_{\le j}(\pi(s,v))|$; crucially, the path $\pi(s,v)$ must be the same regardless of which $j$ is queried. (Note that query($v,\tau$) corresponds to a standard path query.)
\end{itemize} 
\end{definition}

We briefly outline how threshold-subpath queries are used by our min-cost flow algorithm. Recall that in our modified framework, each iteration of MWU implicitly computes a $(1+\eps)$-approximate shortest path $\pi(s,t)$, but instead of modifying \emph{all} the edges on $\pi(s,t)$, it picks a random threshold $\gamma$ and only looks at edges $(x,v)$ on $\pi(s,t)$ for which $u(v) \leq \gamma \lambda(\pi(s,t))$. We thus want a data structure that returns all such low-capacity edges in time proportional to their number. This is exactly what a threshold-subpath query achieves. Here, $\pi(s,t)$ corresponds to the path implicitly maintained by the data structure. Every edge steadiness $\sigma(x,v)$ is a function of $u(v)$, and thus remains fixed throughout the MWU algorithm. Loosely speaking, for some $\eta = n^{o(1)}$, if $u(v) \in [1,\eta)$ then $\sigma(x,v) = 1$, if $u(v) \in [\eta,\eta^2)$ then $\sigma(x,v) = 2$, and so on. (The actual function is a bit more complicated and $\sigma(x,v)$ can also depend on the \emph{cost} of vertex $v$, not just the capacity.)  Since the buckets increase geometrically, the number of possible steadiness level $\tau$ will be small. Note that because each steadiness captures a range of capacities, when we use the data structure in our MWU algorithm, we only achieve the slightly weaker guarantee that we return edges $(x,v)$ on $\pi(s,t)$ for which $u(v) \lesssim \gamma \lambda(\pi(s,t))$; this weaker guarantee works essentially as well for our analysis.

We show in Part \ref{part:augmented-queries} that our SSSP data structure from Part \ref{part:dynamicShortestPaths} can be extended to handle threshold-subpath queries, while still having $\Ohat(m)$ total update time. We briefly outline our techniques below.

\paragraph{Techniques.} Threshold-subpath queries introduce several significant challenges. Recall that the algorithm iteratively computes emulators $G = H_0, H_1, \ldots H_q$, where each edge of $H_i$ corresponds to a short path in $H_{i-1}$, and the final emulator $H_q$ is guaranteed to have small hop distances. The algorithm can then estimate the $s$-$v$ distance by computing the shortest path in $H_q$. It is not too hard to ``unfold'' the path in $H_q$ into a path $\pi(s,v)$ in the graph $G$ by successively moving down the emulators. But to answer augmented path queries efficiently, we need to avoid unfolding emulator edges for which the corresponding path in $G$ does not contain any low-steadiness edges. We thus need a way of determining, for every emulator edge, the minimum steadiness in its unfolded path in $G$; we refer to this as the steadiness of the emulator edge.

The issue is that if each edge in $H_i$ corresponds to an \emph{arbitrary} $(1-\eps)$-approximate path in $H_{i-1}$, then the steadiness of emulator edges will be extremely unstable, and impossible to maintain efficiently. We overcome this problem by carefully defining, for each emulator edge in $H_i$, a specific \emph{critical} path in $H_{i-1}$ corresponding to $(x,y)$, which ensures that the steadiness of $(x,y)$ is robust, and allows us to maintain the entire hierarchy efficiently. 

A second challenge is that any edge $(u,v) \in E$ may participate in many emulator edges, with the result that when we unfold the emulator edges, the resulting path in $G$ might not be simple -- i.e. it might contain many copies of an edge $(u,v)$. Through a careful analysis of our emulator hierarchy, we are able to show that any path achieved via unfolding is close-to-simple, in that every $(u,v) \in E$ appears at most $n^{o(1)}$ times. We then show that MWU can be extended to handle such close-to-simple paths. See Part \ref{part:augmented-queries} for details.

\ignore{

We let $G = (V,E)$ always refer to the input graph, which is undirected. \emph{For simplicity, we assume throughout this entire section that the graph $G$ is unweighted}. So every update to $G$ is just an edge deletion. The extension to graphs with positive weights involves a few technical adjustments, but is conceptually the same. We also assume that all vertices in $G$ have maximum degree $3$; see \cref{prop:simplify} in the main body for justification.

Given any set $S \subseteq V$, let $G[S]$ be the induced graph. Let $E(S)$ contain all edges with an endpoint in $S$. Define $E(S,S')$ to contain all edges with one endpoint in $S$, one in $S'$. Define $N(S)$ to contain all vertices $v \in V \setminus S$ with a neighbor in $S$; note that $S \bigcap N(S) = \emptyset$. We define $\vol(S) = |E(S)|$.

Although we assume $G$ is unweighted, our algorithm will introduce additional weighted edges. All weights will always be positive. For any weighted graph $H$, let $w_H(u,v)$ be the weight of edge $(u,v)$. Define $\dist_H(u,v)$ to be the length of the shortest path from $u$ to $v$; given a set $S \subseteq V$, define $\dist_H(v,S) = \min_{u \in S} \dist(v,u)$. We define the (weak) diameter $\diam_H(S) = \max_{u,v \in S} \dist_H(u,v)$ and define $\diam(H) = \diam(V(H))$. We sometimes omit subscripts when referring to the input graph $G$. 
Given a path $P$, we define $w(P)$ to be the sum of edge-weights on $P$, and we define the \emph{hop length} of $P$ to be number of edges in $P$. We define the hop distance from $u$ to $v$ to be the number of edges on the shortest (i.e. minimum-weight) path from $u$ to $v$. For any $v \in V$ and any parameter $r$, we define $\ball(v,r) = \{w \in V \mid \dist(v,w) \leq r\}$. More generally, for any set $S \subseteq V$, $\ball(S,r) = \{w \in V \mid \dist(w,S) \leq r \}$.

}

\ignore{

\paragraph{Static Construction of Hop Emulators:}
Let us first show how to solve Goal \ref{goal:overview} in a static setting, where there are no deletions. We will rely on a fairly standard graph decomposition CITE  \todo[inline, color=olive]{I would not give a source... you can point to the proof later or just say assume... I would probably say: Assume for the rest of the section that $G$ has degree at most $3$ (this is in fact w.l.o.g. as we show in Lemma XYZ).}. Recall our simplifying assumption that all vertices in $G$ have degree at most $3$. 

Fix some parameters $d,D$ where $d,D$ and $D/d$ are all $n^{o(1)}$. The basic idea is to find centers $c_1,\ldots, c_q$ such that every vertex $v$ is within distance $d$ of some center $c_i$. 
The algorithm then computes $\ball(c_i, D)$ for every $c_i$, and for every vertex $w \in \ball(c_i, D)$ it adds an edge $(c_i,w)$ to $E_H$ of weight $\dist(c_i,w)$. $E_H$ also contains all the original edges $E_G$. Now, consider two vertices $u,v \in V$ with $\dist(u,v) \geq D/2$. We know that there exists some $c_i$ with $\dist(c_i,u) \leq d$. Thus, $\dist(c_i,v) \leq d + D/2 < D$, so there is a path from $u$ to $v$ in $H$ with only 2 edges: $u \rightarrow c_i \rightarrow v$; it is easy to check that this path is $(1+\eps)$-approximate, since the total deviation from the shortest path is at most $2\dist(c_i,u) \leq 2d \ll \eps D / 2 \leq \eps \dist(u,v)$. Thus, all hop-distances in $H$ are contracted by a factor of about $(D/2)/2 = D/4 = n^{o(1)}$, as desired. \todo[inline]{Is this last sentence clear? I'm worried it would not be obvious to the reader that if you can find a hop-2 path between vertices at distance D/2 then you are done. I'll let you judge!}  \todo[inline, color=olive]{I think there is a more general problem here, the distance from $u$ to $v$ is not upper bounded. So I would probably say consider any two such vertices $u$ and $v$. Consider the path $\pi(u,v)$ from $u$ to $v$. Then, observe that from $u$ one can reach the first vertex $u'$ at distance $\geq D - 2d$ on the path $\pi(u,v)$ with at most three edges: $(u, c_i), (c_i, u''), (u'', u')$ where $c_i$ is the nearest center from $u$ (i.e. at distance $\leq d$ from $u$), and $u''$ is the farthest vertex on $\pi(u,v)$ from $c_i$, and finally you have to take a single edge $(u'', u')$. Repeat for $\pi(u', v)$ until you hit base case which you already wrote down...... I got a bit sloppy at the end but that should be the step where you need the last edge since it might have incredibly high weight. Then do the $\epsilon$ error analysis.}

One issue with the above approach is that it might be expensive to compute $\ball(c_i, D)$ for every $c_i$. The solution is to find $c_i$ such that
each vertex $v$ is in at most $n^{o(1)}$ different $\ball(c_i,D)$: it is not hard to check that this guarantees that $E_H$ is sparse and can be computed efficiently.

\paragraph{Dynamic Hop Emulators: Uncentered Cores and Shells}
Although the static construction above is simple and efficient, it is very hard to maintain centers $c_i$ in a decremental setting. Thus, unlike previous algorithms, our algorithm maintains a more general core/shell decomposition. The basic idea is the same: each vertex will be close to some core $C_i$, and each $C_i$ will maintain distances to some larger-diameter set $\shell(C_i)$. 

}

\pagebreak
\chapter[Distance-only Dynamic Shortest Paths]{Distance-only Dynamic Shortest \\Paths}
\label{part:dynamicShortestPaths}
In this part, we give the proof for our main result: a deterministic decremental SSSP data structure in almost-linear time.

\mainSSSPResult*

\paragraph{Remark:}
In Part \ref{part:dynamicShortestPaths}, we focus exclusively on answering approximate \emph{distance} queries. Extending the data structure to return an approximate shortest path in time $|\pi(s,t)| n^{o(1)}$ is not too difficult but requires some additional work. We do not spell out the details because these path queries are a special case of the more powerful (and much more involved) augmented path queries detailed in Part \ref{part:augmented-queries}.

We start by providing the necessary preliminaries for the part and then provide a brief overview introducing the main components used in our proof and give a road map for the rest of the part.

\section{Preliminaries}
\label{sec:prelim}

\paragraph{Graphs.} We let a graph $H$ refer to a weighted, undirected graph with vertex set denoted by $V(H)$ of size $n_H$, edge set $E(H)$ of size $m_H$ and weight function $w_H : E(H) \rightarrow \mathbb{R}_{> 0}$. We define the aspect ratio $W$ of a graph to be the ratio of the largest to the smallest edge-weight in the graph.

We say that $H$ is a \emph{dynamic} graph if it is undergoing a sequence of edge deletions and insertions and edge weight changes (also referred to as updates), and refer to \emph{version} $t$ of $H$, or $H$ at \emph{stage} $t$ as the graph $H$ obtained after the first $t$ updates have been applied. We say that a dynamic graph $H$ is \emph{decremental} if the update sequence consists only edge deletions and edge weight increases. For a dynamic graph $H$, we let $m_H$ refer to the total number of edges in $H$ in all updates (we assume that the update sequence is finite). 

In this article, we denote the (decremental) input graph by $G=(V,E,w)$ with $n = |V|$ and $m = |E|$. In all subsequent definitions, we often use a subscript to indicate which graph we refer to, however, when we refer to $G$, we often omit the subscript.

\paragraph{Basic Graph Properties.} For any graph $H$, and any vertex $v \in V(H)$, we let $E(v)$ denote the set of edges incident to $v$. For any set $S \subseteq V(H)$, we let $E(S) = \bigcup_{v \in V} E(v)$. Finally, for any two disjoint sets $A,B$ we let $E(A,B)$ denote all edges with one endpoint in $A$, the other in $B$.

We let $\deg_H(v)$ denote the degree of $v$, i.e. the number of edges incident to $v$. If the graph is weighted, we let $\vol_H(v)$ denote the weighted degree or volume of vertex $v$, i.e. $\vol_H(v) \defeq \sum_{e \in E(v)} w_H(e)$. For $S \subseteq V(H)$, we also use $\deg_H(S)$ ($\vol_H(S)$) to denote the sum over the degrees (volume) of all vertices in $S$. 
If $H$ is dynamic, we define the \emph{all-time degree} of $v$ 
to be the total number of edges that are ever incident to $v$ over the entire update sequence of $H$. 
(An edge $(u,v)$ that is inserted, deleted and inserted again, contributes twice to the all-time degree of $v$).

\paragraph{Functions.} Say that we have a function $f: D \rightarrow \mathbb{R}$ for some domain $D$. Given any $S \subseteq D$ we often use the following short-hand: $f(S) \defeq \sum_{x \in S} f(x)$. For example, the definitions of $\vol_H(S)$ and $\deg_H(S)$ above follow this short-hand, and $w_H(E(A,B))$ denotes the sum of edge-weights in $E(A,B)$.

\paragraph{Expanders.} 
Let $H$ be a graph with positive real weights $w_H$. Let $0 < \phi < 1$ be the expansion parameter. We say that $H$ is a $\phi$-expander if for every $S \subset V(H)$ we have that $w_H(E(S,V(H)\setminus S)) \geq \phi \min \{ \vol_H(S) \ , \ \vol_H(V \setminus S) \}$.

\paragraph{Distances and Balls.} We let $\dist_H(u,v)$ denote the distance from vertex $u$ to vertex $v$ in a graph $H$ and denote by $\pi_{u,v, H}$ the corresponding shortest path (we assume uniqueness by implicitly referring to the lexicographically shortest path). We also define distances more generally for sets of vertices, where for any sets $X,Y \subseteq V(H)$, we denote by $\dist_H(X,Y) = \min_{u \in X, v \in Y} \dist_H(u,v)$ (whenever $X$ or $Y$ are singleton sets, we sometimes abuse notation and simply input the element of $X$ or $Y$ instead of using set notation).  

We define the \emph{ball} of radius $d$ around a vertex $v$ as $\ball_{H}(v,d)=\{w\mid\dist_{H}(v,w)\le d\}$
and the ball of radius $d$ around a set $X \subset V$ as $\ball_{H}(X,d)=\{w\mid\dist_{H}(X,w)\le d\}$. We say that a set $X$ w.r.t. a decremental graph $H$ is a \emph{decremental set} if at each stage of $H$, $X$ forms a subset of its previous versions. If $H$ is decremental, then for any $X \subseteq V$, we have that $\ball_{H}(X,d)$ is a decremental set, since distances can only increase over time in a decremental graph.

Finally, given any graph $H$ and a set $X \subset V(H)$, we define weak diameter $\diam_H(X) \defeq \max_{u,v \in X} \dist_H(u,v)$.

\paragraph{Hypergraphs.} In this part, we also use the generalization of graphs to \emph{hypergraphs} (but we will point out explicitly whenever we use a hypergraph). Let $H=(V,E)$ be a hypergraph, i.e. elements $e$ in $E$, called \emph{hyperedges}, are now sets of vertices, i.e. $e\subseteq V$ (possibly of size larger than two). We say that two vertices $u,v\in V$
are \emph{adjacent} if there is a hyperedge $e\in E$ containing both
$u$ and $v$. If $v\in e$, then $v$ is \emph{incident} to $e$.
For any vertex set $S\subseteq V$, the \emph{subhypergraph $H[S]$
induced by $S$} (or the \emph{restriction of $H$ to $S$}) is such
that $V(H[S])=S$ and $E(H[S])=\{e\cap S\mid e\in E\}$. That is,
each edge of $H[S]$ is an edge from $H$ restricted to $S$. The
\emph{total edge size} of $H$ is denoted by $|H|=\sum_{e\in E}|e|$.

Let $(L,S,R)$ be a partition of $V$ where $L,R\neq\emptyset$. We
say that $(L,S,R)$ if a \emph{vertex cut }of $H$ if, for every $u\in L$
and $v\in R$, $u$ and $v$ are not adjacent in $H$. Let $\kappa:V\rightarrow\mathbb{R}_{\ge0}$
be vertex capacities of vertices in $H$. The size of the cut $(L,S,R)$
is $\kappa(S)=\sum_{u\in S}\kappa(u)$.

The \emph{incidence graph} of $H$ denoted by $H_{bip}=(V\cup E,E_{bip})$
is a bipartite graph where $E_{bip}=\{(v,e)\in V\times E\mid v\in e\}$.
This bipartite view will be especially useful for implementing flow algorithms
on hypergraphs. Note that $|E_{bip}|=|H|$.

We say that a sequence of vertices ${v_{1},\ldots,v_{k}}$ form a
\emph{path} in $H$ if each pair of vertices $v_{i},v_{i+1}$ are
adjacent in $H$. We define the length of path ${v_{1},\ldots,v_{k}}$
to be $k-1$ and for any vertices $u,v$ in $H$ we define $\dist(u,v)$
to be the length of the shortest $u-v$ path in $H$, with $\dist(u,v)=\infty$
if there is no $u$-$v$ path in $H$. Given any vertex set $K\subseteq V(H)$,
we say that $\diam_{H}(K)\leq d$ if for every pair of vertices $u,v\in K$
we have that $\dist(u,v)\leq d$.

\paragraph{Dynamic Hypergraphs.} We subsequently deal with a dynamic hypergraph $H$. We model updates by edge deletions/insertions to the incidence graph $H_{bip}$. This corresponds to increasing/decreasing
the size of some hyperedge $e$ in $H$, or adding/removing a hyperedge
in $H$ entirely. One subtle detail that we use implicitly henceforth is that when we shrink or increase a hyperedge $e$ then this does not result in a new version $e$ but rather refers to the same edge at a different time step. This is important when we consider the all-time degree which is the total number of hyperedges that a vertex $v$ is ever contained in.

\paragraph{Embedding.} In this article, we view an embedding $\cP$ in a hypergraph $H$ as a collection of paths
in its corresponding bipartite graph representation $H_{bip}$. For any $v \in V$, we let $\cP_v$ be the set of paths in $\cP$ that contain the vertex $v$. With each path $P \in \cP$, we associate a value $\val(P) > 0$. We then say that the embedding $\cP$ has \emph{vertex congestion with respect to vertex capacities $\kappa$} at most $c$  if for every vertex $v \in H$, $\sum_{P \in \cP_v} \val(P) \leq c \cdot \kappa(v)$. We say that the embedding $\cP$ has length $\emph{len}$ if every path $P \in \cP$ consists of at most $\emph{len}$ edges. Further, we associate with each embedding $\cP$ into $H$, a weighted (multi-)graph $W$ taken over the same vertex set $V(H)$ and with an edge $(u,v)$ of weight $w(u,v) = \val(P)$ for each $u$-$v$ path $P$ in $\cP$. We say that $\cP$ \emph{embeds} $W$ into $H$ and say that $W$ is the \emph{embedded graph} or the \emph{witness} corresponding to $\cP$. %

\paragraph{Rounding Shorthand.} For any number $n$ and $k$, let $\roundup nk=\ceiling{n/k}\cdot k$
denote the integer obtained by rounding $n$ up to the nearest multiple of $k$. 

\paragraph{Parameters.} Throughout the part we refer to three global parameters: 
 $\phicmg=1/2^{\Theta(\log^{3/4} n)} = n^{o(1)}$, $\epswit = \phicmg / \log^2(n)$ and $\delta_{\mathrm{scatter}} = C \epswit$ for a large enough constant $C$. $\phicmg$ is first used in Theorem \ref{thm:CMG}, $\epswit$ in Lemma \ref{lem:embedwitness} and $\delta_{\mathrm{scatter}}$ in Definition \ref{def:Core}. 

\paragraph{A Formal Definition of a Decremental SSSP Data Structure.} In order to avoid restating the guarantees of a Decremental SSSP data structure throughout the part multiple times, we give the following formal definition.

\begin{defn}
	[$\SSSP$]\label{def:SSSP}A decremental $\SSSP$ data structure $\SSSP(G,s,\eps)$
	is given a decremental graph $G=(V,E)$, a fixed source vertex $s\in V$,
	and an accuracy parameter $\eps\ge0$. Then, it explicitly maintains
	distance estimates $\dtil(v)$ for all vertices $v\in V$ such that
	$\dist_{G}(s,v)\le\dtil(v)\le(1+\epsilon)\dist_{G}(s,v)$.
\end{defn}

\paragraph{Simplifying reduction.} 
We will use the following simplifying reduction which allows us to
assume that out input graph $G$ throughout this part has bounded degree
and satisfies other convenient properties. We give a proof of the proposition below in \Cref{subsec:appendixProofOfsimplify path}.
\begin{prop}
	\label{prop:simplify-1}
	Suppose that there is a data structure
$\SSSP(H,s,\eps)$ that only works if $H$ satisfies
the following properties: 
\begin{itemize}
\item $H$ always stays connected. 
\item Each update to $H$ is an edge deletion (not an increase in edge weight). 
\item $H$ has maximum degree $3$. 
\item $H$ has edge weights in $[1,n_{H}^{4}]$. 
\end{itemize}
Suppose $\SSSP(H,s,\epsilon)$ has $T_{\SSSP}(m_{H},n_{H},\epsilon)$
total update time where $m_{H}$ and $n_{H}$ are numbers of initial
edges and vertices of $H$. Then, we can implement $\SSSP(G,s,O(\eps))$
where $G$ is an arbitrary decremental graph with $m$ initial edges
that have weights in $[1,W]$
using total update time of $\tilde{O}\left(m/\epsilon^{2}+T_{\SSSP}(O(m \log W),2m,\epsilon)\right)\cdot\log(W).$
\end{prop}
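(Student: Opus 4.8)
The proposition is a bookkeeping reduction, and I would prove it by composing four by-now-standard graph transformations, each preserving $(1+O(\epsilon))$-approximate $s$-distances and blowing the instance up by only a polylogarithmic factor in $n$ and $W$; after applying them I run the promised restricted data structure $\SSSP(H,s,\epsilon)$ and rescale/combine its estimates. In order, the transformations are: (1) discretize edge weights so that weight \emph{increases} become edge \emph{deletions}; (2) split into $O(\log(nW))$ distance-scale instances so each has polynomially bounded integer weights; (3) add a heavy virtual backbone so each instance stays connected; and (4) expand high-degree vertices into low-degree gadgets. The final estimate $\dtil(v)$ is the minimum, over the $O(\log(nW))$ scales, of the appropriately rescaled-and-capped estimate returned by the corresponding instance.

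\textbf{Eliminating weight increases and bounding the aspect ratio (steps 1--2).} First round every edge weight up to the nearest power of $(1+\epsilon)$; this multiplies distances by at most $1+\epsilon$ and never decreases them, and now each edge assumes only $O(\epsilon^{-1}\log W)$ distinct weights over time, with its weight only increasing. Replace each edge $e=(u,v)$ by a gadget made of $O(\epsilon^{-1}\log W)$ internally disjoint two-hop paths $u\!-\!a_i\!-\!v$, where the $i$-th has total length $(1+\epsilon)^i$; to raise the effective weight of $e$ from $(1+\epsilon)^j$ to $(1+\epsilon)^{j+1}$ one simply deletes the two edges incident to $a_j$. The result is a purely decremental (deletion-only) graph with $O(m\log W)$ edges, the $\epsilon^{-1}=\polylog n$ factor being absorbed. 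Next, for each distance scale $k\in\{0,1,\dots,O(\log(nW))\}$ build $G_k$ by deleting all edges of current weight $>2^k$ (such an edge cannot lie on any shortest path of length $\le 2^k$) and scaling every surviving weight by $n/(\epsilon 2^k)$ and rounding to an integer, so weights become integers in $[1,\poly(m)]\subseteq[1,n_H^4]$; the rounding adds additive error $\le\epsilon 2^k/n$ per edge, hence $\le\epsilon 2^k$ over any $\le n$-hop path. Thus for $v$ with $\dist_G(s,v)\in(2^{k-1},2^k]$ the rescaled estimate from $G_k$ is a $(1+O(\epsilon))$-approximation of $\dist_G(s,v)$; capping each scale's estimate at $2^k$ and taking the minimum over $k$ yields a global $(1+O(\epsilon))$-estimate, and every $G_k$ remains deletion-only since all three operations (propagating $G$'s updates through gadgets, deleting gadget edges, removing an edge once its weight exceeds $2^k$) are edge deletions.

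\textbf{Connectivity and bounded degree (steps 3--4).} Within each $G_k$, add a virtual backbone: fix a spanning path (or any fixed connected structure) on $V(G_k)$ with edge weight $M_\infty$ strictly larger than the largest possible finite $s$-distance, and never delete these edges; this keeps $G_k$ connected while leaving all finite distances unchanged, and a vertex whose $\SSSP$ estimate exceeds $M_\infty$ is exactly one with no finite $s$-path at that scale, for which we report $+\infty$. Finally, reduce degrees: replace each vertex $x$ of degree $d_x$ by a balanced binary tree on $d_x$ leaves with unit-weight internal edges, attaching the $d_x$ former edges of $x$ to distinct leaves, after first scaling all non-tree edge weights up by $\rho=\Theta(n\log n/\epsilon)$; then the $O(\log n)$-length detour incurred whenever a path passes through a vertex gadget totals $O(n\log n)$, at most an $\epsilon$ fraction of any shortest path containing one (scaled) real edge, since after step 2 the smallest weight is $1$ and $\rho\cdot 1\gg n\log n/\epsilon$. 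This turns the $n$ vertices into $\sum_x\Theta(d_x)=\Theta(m)\le 2m$ vertices and $O(m)$ new edges, stays deletion-only and connected, has maximum degree $3$, and has weights in $[1,n_H^4]$. Running $\SSSP$ on the $O(\log(nW))$ graphs $H_k$ and combining as above implements $\SSSP(G,s,O(\epsilon))$; maintaining the $H_k$ under $G$'s $m$ updates (each update triggering $O(\epsilon^{-1}\log W)$ gadget updates per affected scale, plus reweightings) costs $\Otil(m/\epsilon^2)$ per scale, for the claimed total of $\Otil(m/\epsilon^2+T_{\SSSP}(O(m\log W),2m,\epsilon))\cdot\log W$.

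\textbf{Main obstacle.} The conceptual content is light; the delicate part is driving the error budget cleanly through the composition — each rounding step contributes either a $(1+O(\epsilon))$ multiplicative error or an additive error that must be dominated by a later rescaling, and in particular the degree-reduction factor $\rho$ must be large enough relative to the minimum weight in every $G_k$ (which is exactly why we first rescale in step 2 so that minimum weight becomes $1$) yet small enough to keep all weights within $[1,n_H^4]$. Verifying these inequalities simultaneously, and checking that ``cap at $2^k$, then minimize over scales'' is a legitimate estimator for every vertex (including ones that are far at one scale and near at the next, and ones that are never within finite distance), is where the care goes. A secondary subtlety: the restricted data structure only maintains distance \emph{estimates}, so recovering an approximate $s$-$t$ path in $G$ would require unfolding a path through the gadgets ($O(\log n)$ hops per original hop); as the remark after \Cref{thm:mainSSSPResult} notes, full path reporting is handled later via the augmented queries of \Cref{part:augmented-queries}, so here distance estimates suffice.
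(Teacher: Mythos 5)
Your proposal is correct in spirit, but it follows a genuinely different route from the paper's proof (given in \Cref{subsec:appendixProofOfsimplify path} for the path-reporting generalization). The paper handles each property with a different mechanism than yours: connectivity is enforced \emph{externally}, by running a dynamic connectivity oracle on $G$ and simply suppressing any deletion that would disconnect the graph (queries first check connectivity and return $\infty$), rather than embedding a never-deleted heavy backbone inside the instance; weight increases are converted to deletions by replacing each edge with $O(\log W)$ parallel \emph{multi-edges} of geometrically increasing weight (so no new vertices are created), rather than by your two-hop subdivision gadgets; the degree bound is obtained by expanding each high-degree vertex into a path whose internal edges have tiny weight $1/n^{3}$ followed by a global rescaling by $n^{3}$, rather than by binary trees with unit internal weights and a large rescaling $\rho$ of the real edges; and the aspect-ratio bound $[1,n_H^4]$ is delegated to a cited black-box reduction (the distance-scale decomposition lives inside that theorem), whereas you carry out the scale decomposition, rescaling, capping and min-over-scales explicitly. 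What your version buys is self-containedness (no external connectivity oracle, no black-box scaling theorem); what the paper's version buys is tighter bookkeeping, since none of its transformations introduce per-edge vertices or $\epsilon^{-1}$ blow-ups before the restricted data structure is invoked.

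Two caveats in your write-up deserve attention. First, your step-1 gadget creates $O(\epsilon^{-1}\log W)$ new vertices per edge, so the instances you feed to the restricted structure have $\Theta(m\epsilon^{-1}\log W)$ vertices and edges; since $T_{\SSSP}$ is an abstract function, these cannot be "absorbed," and you therefore prove the proposition only with $T_{\SSSP}(O(m\epsilon^{-1}\log W),O(m\epsilon^{-1}\log W),\epsilon)$ in place of $T_{\SSSP}(O(m\log W),2m,\epsilon)$ (and with an extra $\log W$ in the maintenance term when $\log W\gg 1/\epsilon$); this is harmless for the paper's near-linear application but does not literally match the stated bound, and the paper's multi-edge trick avoids it. Second, "capping each scale's estimate at $2^{k}$" must mean \emph{discarding} (treating as $+\infty$) any scale-$k$ estimate exceeding roughly $(1+O(\epsilon))2^{k}$; truncating the value to $2^{k}$ before taking the minimum would produce underestimates for vertices whose true distance exceeds $2^{k}$, violating the lower-bound requirement of \Cref{def:SSSP}.
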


\section{Main Components}

In this section, we introduce the main components of our data structure. Although the part is self-contained, this section will be considerably more intuitive if the reader is familiar with the overview section \ref{sec:overviewPartII}. 

As pointed out in \Cref{subsec:existingTechniquesInOverview}, our data structure constructs a layering where each layer aims at compressing the graph $G$ further in order to compute the approximate SSSP distances up to a certain distance threshold. To make this notion of approximate SSSP up to a threshold precise, we introduce Approximate Balls in \Cref{subsec:approxBalls}. Next, we define the main building block: a Robust Core data structure that maintains a low-diameter vertex set with large approximation in \Cref{subsec:robustCore}.

With these two ingredients in place, we can introduce our most involved concept, a decremental graph Covering, formally in \Cref{subsec:coveringMainComps}. This concept forms the core of our data structure. 

\label{sec:components}

\begin{wrapfigure}{r}{0.5\textwidth}
  \begin{center}
    \includegraphics[width=0.45\textwidth]{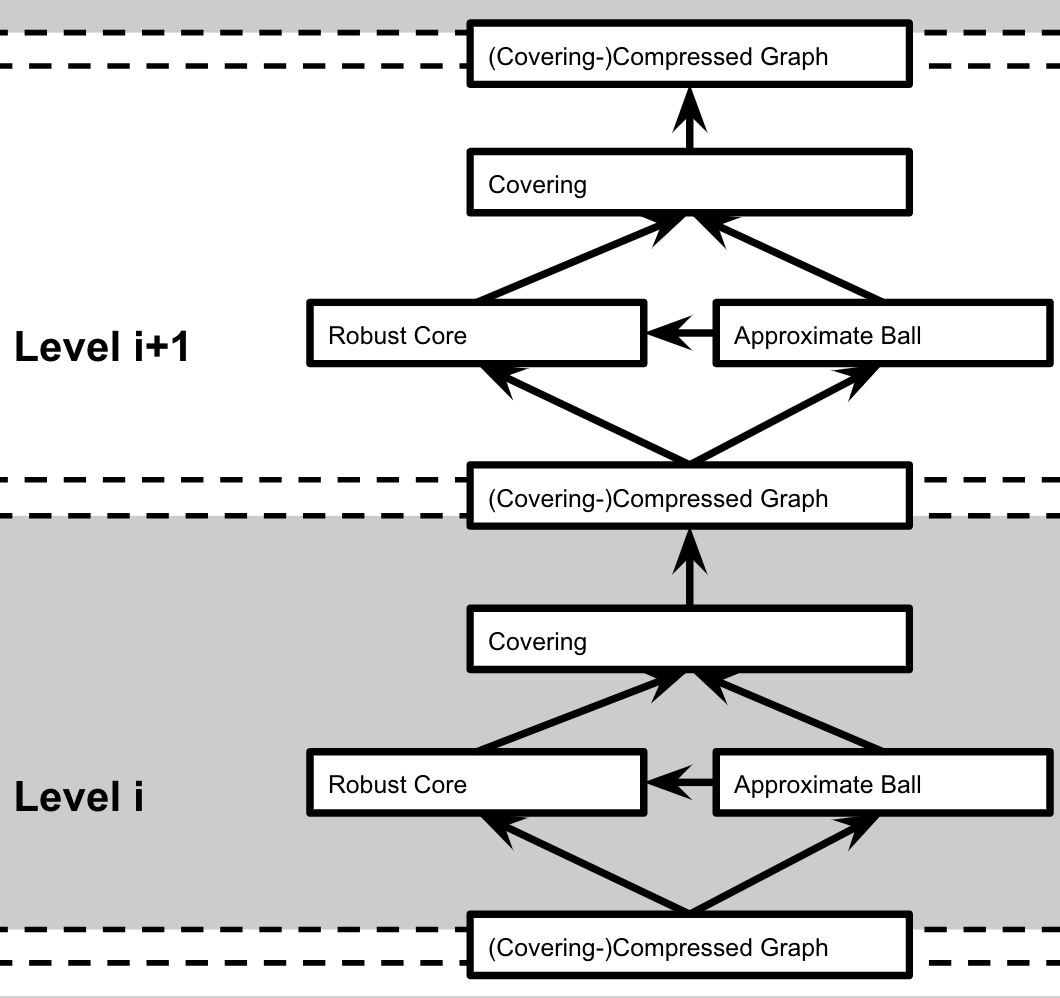}
  \end{center}
  \caption{Relations between main components.}
  \label{fig:mainCompsOrga}
\end{wrapfigure}

Finally, we show how to use the Covering as described in \Cref{subsec:existingTechniquesInOverview} to compress the graph $G$. We make the act of compression formal by introducing the concepts of Covering-Compressed Graphs and Compressed Graphs in \Cref{subsec:coreCompressedGraphMainComp}. The reason we require these notions is to give a formal interface for the next higher level where Robust Cores and Approximate Balls are maintained on the Covering-Compressed/ Compressed Graph to further compress $G$. Still, we associate each (Covering-)Compressed Graph with the level where its underlying Covering is maintained.

We summarize the relations between these components in \Cref{fig:mainCompsOrga}. 

\subsection{Approximate Ball}
\label{subsec:approxBalls}

\begin{defn}
\label{def:Apxball}An \emph{approximate ball} data structure $\Apxball(G,S,d,\eps)$
is given a decremental graph $G=(V,E)$, a decremental source set
$S\subseteq V$, a distance bound $d>0$, and an accuracy parameter
$\eps\ge0$. Then, it explicitly maintains distance estimates $\dtil(v)$
for all vertices $v\in V$ such that 
\begin{enumerate}
\item \label{enu:Apxball:overestimate}$\dtil(v)\ge\dist_{G}(S,v)$,
\item \label{enu:Apxball:approx}if $v\in\ball_{G}(S,d)$, then $\dtil(v)\le(1+\epsilon)\dist_{G}(S,v)$,
\item \label{enu:Apxball:monotone}Each $\dtil(v)$ may only increase through
time.
\end{enumerate}
\end{defn}

For convenience, we slightly abuse the notation and denote $\Apxball(G,S,d,\eps)=\{v\mid\dtil(v)\le(1+\eps)d\}$
as the set of all vertices $v$ whose distance estimate $\dtil(v)$
is at most $(1+\eps)d$. We think of this set as the set that the data
structure maintains. The next proposition relates the approximate ball to the exact ball.

\begin{prop}
\label{prop:ApxBall}We have $\ball_{G}(S,d)\subseteq\Apxball(G,S,d,\eps)\subseteq\ball_{G}(S,(1+\epsilon)d)$.
Moreover, $\Apxball(G,S,d,\eps)$ is a decremental set.
\end{prop}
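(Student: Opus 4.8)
The plan is to unwind the definition of $\Apxball(G,S,d,\eps)$ as the level set $\{v \mid \dtil(v) \le (1+\eps)d\}$ and then invoke the three guarantees of \Cref{def:Apxball} one at a time; no construction or estimation is needed, so this is essentially a bookkeeping argument.

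For the first inclusion $\ball_{G}(S,d)\subseteq\Apxball(G,S,d,\eps)$, I would fix $v\in\ball_{G}(S,d)$, so that $\dist_{G}(S,v)\le d$. By Property~\ref{enu:Apxball:approx} of \Cref{def:Apxball} we get $\dtil(v)\le(1+\eps)\dist_{G}(S,v)\le(1+\eps)d$, which is exactly the membership condition defining $\Apxball(G,S,d,\eps)$. For the second inclusion $\Apxball(G,S,d,\eps)\subseteq\ball_{G}(S,(1+\epsilon)d)$, I would fix $v$ with $\dtil(v)\le(1+\eps)d$; Property~\ref{enu:Apxball:overestimate} gives $\dist_{G}(S,v)\le\dtil(v)\le(1+\eps)d$, hence $v\in\ball_{G}(S,(1+\eps)d)$.

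For the decremental claim, I would observe that the threshold $(1+\eps)d$ is a fixed quantity while, by Property~\ref{enu:Apxball:monotone}, each $\dtil(v)$ is nondecreasing over the update sequence; therefore once $\dtil(v)>(1+\eps)d$ at some stage, it stays above $(1+\eps)d$ forever, so the level set $\{v\mid\dtil(v)\le(1+\eps)d\}$ only loses vertices as time goes on. Combined with the fact that the vertex set $V$ is itself fixed under edge deletions, this shows that $\Apxball(G,S,d,\eps)$ is a decremental set.

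There is no substantive obstacle here; the only points requiring care are matching the direction of each inequality with the correct one of the three properties in \Cref{def:Apxball}, and noting that it is the monotonicity of the individual estimates $\dtil(v)$ — not any monotonicity of $S$ or $G$ — that is responsible for the decremental property.
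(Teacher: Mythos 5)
Your proposal is correct and follows exactly the paper's own argument: Property~\ref{enu:Apxball:approx} gives the first inclusion, Property~\ref{enu:Apxball:overestimate} the second, and Property~\ref{enu:Apxball:monotone} (monotonicity of the estimates $\dtil(v)$) the decremental claim. No gaps.
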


\begin{proof}
If $v\in\ball_{G}(S,d)$, then by \Cref{enu:Apxball:approx} of
\Cref{def:Apxball} we have $\dtil(v)\le(1+\epsilon)\dist_{G}(S,v)\le(1+\eps)d$.
So $v\in\Apxball(G,S,d,\eps)$. For the other direction, if $v\in\Apxball(G,S,d,\eps)$,
we have $\dtil(v)\le(1+\eps)d$. Since $\dtil(v)\ge\dist_{G}(S,v)$
by \Cref{enu:Apxball:overestimate} of \Cref{def:Apxball}, we
have $\dist_{G}(S,v)\le(1+\eps)d$ and so $v\in\ball_{G}(S,(1+\epsilon)d)$.
Finally, $\Apxball(G,S,d,\eps)$ is a decremental set because $\dtil(v)$
may only increase through time.
\end{proof}
A classic ES-tree data structure \cite{EvenS} immediately gives a fast implementation for $\Apxball$ for the small distance regime.
\begin{prop}
[\cite{EvenS}]\label{prop:ES tree}We can implement $\Apxball(G,S,d,0)$
in $O(|\ball_{G}(S,d)|\cdot d)$ time.
\end{prop}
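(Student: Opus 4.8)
The plan is to realize $\Apxball(G,S,d,0)$ with a single classic decremental Even--Shiloach (ES) tree \cite{EvenS}, run on a lightly augmented graph. First I would form $G^{\ast}$ from $G$ by adding an artificial super-source $s^{\ast}$ together with a zero-weight edge $(s^{\ast},v)$ for every $v\in S$, so that $\dist_{G^{\ast}}(s^{\ast},v)=\dist_{G}(S,v)$ for every $v\in V$. Each edge deletion in $G$ is also a deletion in $G^{\ast}$, and each vertex $v$ leaving the decremental set $S$ is simulated by deleting the single edge $(s^{\ast},v)$; hence $G^{\ast}$ is a decremental graph and the ES tree applies verbatim. I would run the ES tree rooted at $s^{\ast}$ and truncated at depth $d$. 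It maintains for every vertex $v$ a label $\ell(v)$ equal to $\dist_{G^{\ast}}(s^{\ast},v)=\dist_{G}(S,v)$ whenever this quantity is at most $d$; I then set $\dtil(v)=\ell(v)$ in that case and $\dtil(v)=\infty$ otherwise.

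Checking the three conditions of \Cref{def:Apxball} (with $\eps=0$) is then routine. For \Cref{enu:Apxball:approx}: if $v\in\ball_{G}(S,d)$ then $\dist_{G}(S,v)\le d$, so $\dtil(v)=\dist_{G}(S,v)=(1+\eps)\dist_{G}(S,v)$. For \Cref{enu:Apxball:overestimate}: $\dtil(v)$ is either exactly $\dist_{G}(S,v)$ or $\infty$, so $\dtil(v)\ge\dist_{G}(S,v)$ always. For \Cref{enu:Apxball:monotone}: in a decremental graph distances only increase, so $\ell(v)$ only increases while it is defined, and once $\dist_{G}(S,v)$ exceeds $d$ it stays above $d$, so the switch of $\dtil(v)$ to $\infty$ is permanent; hence each $\dtil(v)$ is monotone.

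For the running time I would invoke the classic ES-tree analysis: maintaining exact distances up to depth $d$ in a graph with positive integer edge weights takes time $O(d)$ times the number of edges incident to some vertex that ever enters the (depth-$d$, truncated) tree. Since $G$ is decremental, these vertices are exactly those of the initial (largest) version of $\ball_{G}(S,d)$, together with $s^{\ast}$, whose label never changes and whose incident edges are therefore only touched on behalf of the vertices of $S\subseteq\ball_{G}(S,d)$. Hence the total time is $O\big(d\cdot\sum_{v\in\ball_{G}(S,d)}\deg_{G}(v)+d\cdot|S|\big)=O\big(d\cdot\sum_{v\in\ball_{G}(S,d)}\deg_{G}(v)\big)$. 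Because throughout this part $G$ has maximum degree $3$ (\Cref{prop:simplify-1}), $\sum_{v\in\ball_{G}(S,d)}\deg_{G}(v)=O(|\ball_{G}(S,d)|)$, which yields the claimed $O(|\ball_{G}(S,d)|\cdot d)$ bound.

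I do not expect a real obstacle here; the only two points worth a sentence are that a shrinking source set is simulated purely by edge deletions (so nothing beyond the textbook ES tree is needed), and that it is the bounded-degree property of this part that turns the natural ``(edges incident to the ball)$\,\cdot d$'' bound into the stated vertex-count bound.
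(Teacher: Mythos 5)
Your proposal is correct and is exactly the construction the paper has in mind: the paper simply cites the classic Even--Shiloach tree for this proposition, and the dummy-source trick you use (zero-weight edges from $s^{\ast}$ to $S$, deleting $(s^{\ast},v)$ when $v$ leaves $S$, truncating at depth $d$) is the same device the paper itself invokes elsewhere, with the bounded-degree assumption of \Cref{prop:simplify-1} converting the edge-count bound into the stated $O(|\ball_{G}(S,d)|\cdot d)$. The only cosmetic wrinkle is that you appeal to positive integer weights while using weight-zero source edges; this is harmless since one can equivalently initialize the ES levels of all vertices of $S$ to zero (a multi-source ES tree) without any auxiliary edges.
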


\begin{rem}
	\label{rem:apx-ball-monotonic}
Given any static input graph $G$, and static set $S$, we define $T_{\Apxball}(G,S,d,\eps)$ to refer to the worst-case total update time required by our data structure $\Apxball(G',S',d,\eps)$ for any decremental graph $G'$ initially equal to $G$, and  decremental set $S'$ initially equal to $S$. We also sometimes abuse notation and let $G,S$ be a decremental graph and set respectively, in which case we only refer to their initial versions in $T_{\Apxball}(G,S,d,\eps)$. 

Note that this definition of update time, allows us to immediately conclude that for any graphs $G$ and $G'$, and sets $S$ and $S'$ where $G \subseteq G'$ and $S \subseteq S'$, we have 
\[
T_{\Apxball}(G,S,d,\eps) \leq T_{\Apxball}(G',S',d,\eps)
\]
since any worst-case instance incurring  $T_{\Apxball}(G,S,d,\eps)$ can be emulated by deleting $G' \setminus G$ and $S' \setminus S$ from $G'$ and $S'$ in the first stage respectively. This allows us to state times more compactly and combine bounds. Note that the above in fact also implies $T_{\Apxball}(G,S,d,\eps) \leq T_{\Apxball}(G,S,d',\eps)$ for any $d \leq d'$.

We also assume that $T_{\Apxball}(G, S, d, \epsilon) = \Omega(|\ball_G(S, d)|)$ which is true throughout the part. 
\end{rem}

\subsection{Robust Core}
\label{subsec:robustCore}

Given a set $K\subseteq V$ of vertices of a graph $G=(V,E)$, we
informally call $K$ a \emph{core set} if its weak diameter $\diam_{G}(K)$ is small.
That is, every pair of vertices in $K$ are close to each other. In the definition below, recall that $\scatter = n^{o(1)}$ is a global variable set in Section \ref{sec:prelim}. For intuition, think of $\stretch$ also as $n^o(1)$.
\begin{defn}
\label{def:Core}A \emph{robust core }data structure $\Core(G,\Kinit,D)$
with a scattering parameter $\scatter\in(0,1)$ and a stretch
$\stretch\ge1$ is given
\begin{itemize}
\item a decremental graph $G=(V,E)$, and
\item an initial core set $\Kinit\subset V(G)$ where $\diam_{G}(\Kinit)\le D$
when the data structure is called initially%
\end{itemize}
and maintains a decremental set $K\subseteq\Kinit$ called \emph{core
set} until $K=\emptyset$ such that 
\begin{enumerate}
\item \label{cond:core:scatter}\textbf{(Scattered):} for each vertex $v\in\Kinit\setminus K$,
we have $|\ball_{G}(v,2D)\cap\Kinit|\le(1-\scatter)\cdot|\Kinit|$,
and 
\item \label{cond:core:stretch}\textbf{(Low stretch):} $\diam_{G}(K)\le\stretch\cdot D$.
\end{enumerate}
\end{defn}

For convenience, we sometimes slightly abuse the notation and denote
the maintained core set $K=\Core(G,\Kinit,D)$. Also, we introduce $T_{\Core}(G,\Kinit,D)$ to refer to the total update time required by our data structure implementing $\Core(G,\Kinit,D)$.

\subsection{Covering}
\label{subsec:coveringMainComps}

As mentioned before, the key ingredients of Approximate Ball and Robust Core can now be used to define a Covering that we can implement efficiently. This is key building block of our interface.
\begin{defn}
	\label{def:Covering}Let $G=(V,E)$ be a decremental graph and $\eps \le 1/3$. A $(d,k,\eps,\stretch,\Delta)$-covering
	$\cC$ of $G$ is a collection of vertex sets called \emph{cores} where
	each core $C\in\cC$ is associated with other sets called the \emph{cover,}
	\emph{shell, }and \emph{outer-shell} of $C$ denoted by $\cover(C)$,
	$\shell(C)$, $\oshell(C)$, respectively. We have the following 
	\begin{enumerate}
		\item Each core $C$ is assigned a level $\Clevel(C)=\ell\in[0,k-1]$. All
		cores $C$ from the same level are vertex disjoint.
		\item For each level $\ell$ we define $d_{\ell}\defeq d\cdot(\frac{\stretch}{\eps})^{\ell}$ and have
		\begin{enumerate}
			\item $C=\Core(G,\Cinit,d_{\ell})$ with stretch at most $\stretch$,
			and $\Cinit$ denotes $C$ when initialized in $\cC$.
			\item $\textsc{cover}(C)=\Apxball(G,C,4d_{\ell},\eps)$ and $\shell(C)=\Apxball(G,C,\frac{\stretch}{4\eps}d_{\ell},\eps)$.
			\item $\oshell(C)=\ball_{G}(C,\frac{\stretch}{3\eps}d_{\ell})$.
		\end{enumerate}
		\item For every vertex $v\in V$, at all times there is a core $C$ where $v\in\cover(C)$.
		We say $v$ is \emph{covered} by $C$.
		\item At all times, each vertex $v\in V$ can ever be in at most
		$\Delta$ many outer-shells. That is, the total number of cores $C$	that $v\in\oshell(C)$ over the whole update sequence is at most $\Delta$.
	\end{enumerate}
	We call $d$ the distance scale, $\stretch$ the stretch parameter,
	$k$ the level parameter, $\eps$ the accuracy parameter, and $\Delta$
	the outer-shell participation bound.
\end{defn}

We note that the notion of outer-shells will be important later for path-reporting data structures, more specifically, in \Cref{lem:path simple} and \Cref{lem:core path simple}. The following observation reveals basic structures of cores in the covering.
\begin{prop}
	For each core $C\in\cC$, the sets $C$, $\cover(C)$, $\shell(C)$
	and $\oshell(C)$ are decremental. Moreover, $C\subseteq\cover(C)\subseteq\shell(C)\subseteq\oshell(C)$. 
\end{prop}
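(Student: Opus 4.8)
The plan is to unwind \Cref{def:Covering} and invoke the structural facts already in place --- chiefly \Cref{prop:ApxBall} together with the monotonicity built into $\Core$ and $\Apxball$; no new idea is required.

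First I would show each of the four sets is decremental. The set $C$ is decremental by \Cref{def:Core}, since $\Core(G,\Cinit,d_\ell)$ maintains a decremental core set. Both $\cover(C)=\Apxball(G,C,4d_\ell,\eps)$ and $\shell(C)=\Apxball(G,C,\frac{\stretch}{4\eps}d_\ell,\eps)$ are decremental by the second statement of \Cref{prop:ApxBall}; note its proof only uses that the maintained estimates $\dtil(v)$ are non-decreasing (\Cref{enu:Apxball:monotone} of \Cref{def:Apxball}), which holds irrespective of the source set $C$ itself shrinking. For $\oshell(C)=\ball_G(C,\frac{\stretch}{3\eps}d_\ell)$ I would argue directly: $\dist_G(C,v)=\min_{u\in C}\dist_G(u,v)$ can only increase over time, both because $C$ loses vertices (so the minimum is taken over a smaller set) and because each $\dist_G(u,v)$ is non-decreasing in a decremental graph, hence $\ball_G(C,r)$ can only lose vertices. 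This is the one place where the preliminary remark ``$\ball_H(X,d)$ is decremental'' must be slightly extended to a decremental $X$, but the reasoning is identical.

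Next I would prove the chain $C\subseteq\cover(C)\subseteq\shell(C)\subseteq\oshell(C)$ link by link, each time using the sandwich $\ball_G(S,d)\subseteq\Apxball(G,S,d,\eps)\subseteq\ball_G(S,(1+\eps)d)$ from \Cref{prop:ApxBall}. The first link is $C\subseteq\ball_G(C,4d_\ell)\subseteq\cover(C)$ (every vertex of $C$ has distance $0$ to $C$). The third link is $\shell(C)\subseteq\ball_G(C,(1+\eps)\frac{\stretch}{4\eps}d_\ell)\subseteq\ball_G(C,\frac{\stretch}{3\eps}d_\ell)=\oshell(C)$, where the middle inclusion is precisely $(1+\eps)/4\le 1/3$, i.e.\ $\eps\le 1/3$, which is part of the hypothesis of \Cref{def:Covering}. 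The middle link goes $\cover(C)\subseteq\ball_G(C,4(1+\eps)d_\ell)\subseteq\ball_G(C,\frac{\stretch}{4\eps}d_\ell)\subseteq\shell(C)$, which needs the numeric inequality $4(1+\eps)\le\frac{\stretch}{4\eps}$, i.e.\ $16\eps(1+\eps)\le\stretch$; with $\eps\le 1/3$ this is implied by $\stretch\ge 8$, and in the regime $\eps=1/\polylog(n)$ it holds with room to spare for any $\stretch\ge 1$. I would state this parameter condition explicitly (noting that it is subsumed by the global settings of \Cref{sec:prelim}) so the argument stays self-contained.

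There is no genuine obstacle here --- this is a bookkeeping lemma. The only two points warranting a moment's care are (i) that a ball around the \emph{shrinking} set $C$ is still decremental, and (ii) that the radii $4d_\ell$, $\frac{\stretch}{4\eps}d_\ell$, $\frac{\stretch}{3\eps}d_\ell$ chosen in \Cref{def:Covering} leave enough slack to absorb the $(1+\eps)$ factor $\Apxball$ introduces, which reduces to the two inequalities $\eps\le 1/3$ and $16\eps(1+\eps)\le\stretch$.
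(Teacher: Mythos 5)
Your proof is correct and follows essentially the same route as the paper's: decrementality of $C$ from the $\Core$ guarantee, of $\cover(C)$ and $\shell(C)$ from \Cref{prop:ApxBall}, of $\oshell(C)$ from $G$ and $C$ being decremental, and the inclusion chain via the sandwich $\ball_{G}(S,d)\subseteq\Apxball(G,S,d,\eps)\subseteq\ball_{G}(S,(1+\eps)d)$ together with $(1+\eps)\frac{\stretch}{4\eps}\le\frac{\stretch}{3\eps}$ for $\eps\le 1/3$. Your only deviation is that you make explicit the parameter condition $4(1+\eps)\le\frac{\stretch}{4\eps}$ needed for the middle link $\cover(C)\subseteq\shell(C)$, which the paper leaves implicit (it holds comfortably in the parameter regime actually used), so this is a minor point of extra care rather than a different argument.
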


\begin{proof}
	$C$ is decremental by the guarantee of $\Core$. As $C$ is decremental,
	$\cover(C)$ and $\shell(C)$ are decremental by \Cref{prop:ApxBall}.
	Also, $\oshell(C)$ must be decremental as both $G$ and $C$ are
	decremental. The moreover part follows by \Cref{prop:ApxBall} and
	the fact that $(1+\eps)\frac{\stretch}{4\eps}\le\frac{\stretch}{3\eps}$
	as $\eps\le1/3$. 
\end{proof}

\subsection{(Covering-)Compressed Graphs}
\label{subsec:coreCompressedGraphMainComp}

Given a covering $\cC$ of $G$, we can define a natural bipartite
graph $H_{\cC}$ associated with the covering $\cC$. We call this graph a Covering-Compressed Graph.

\begin{defn}
[Covering-Compressed Graph]\label{def:core compress}Let $\cC$ be a
$(d,k,\eps,\stretch,\Delta)$-covering of a graph $G=(V,E)$ at any
point of time. A \emph{weighted covering-compressed} graph of $\cC$ denoted
by $H_{\cC}=(V\cup\cC,E')$ is a bipartite graph where $E'=\{(v,C)\in V\times\cC\mid v\in\shell(C)\}$.
For each edge $e'=(v,C)\in E'$, the weight is $w_{\cC}(e')=\roundup{\stretch\cdot d_{\Clevel(C)}+\dtil^{C}(v)}{\eps d}$
where $\dtil^{C}(v)$ is the distance estimate of $\dist_{G}(C,v)$
from the instance of $\Apxball$ that maintains $\shell(C)$. An\emph{
(unweighted) covering-compressed} graph $H_{\cC}$ of $\cC$ is defined
exactly the same but each edge in $H_{\cC}$ is unweighted.
\end{defn}

In other words, the unweighted core compressed graph $H_{\cC}$ is
an incidence graph of the hypergraph on vertex set $V$ where, for
each core $C$, there is a hyperedge $e$ containing all vertices
in $\shell(C)$. Intuitively, if $v \in \shell(C)$, then $w_{\cC}(e')$ corresponds to the distances from $e'$ to a vertex inside $C$: $\dtil^{C}(v)$ corresponds to the distance from $v$ to the core $C$, while by the guarantees of $\Core$ (Definition \ref{def:Core}),  $\stretch\cdot d_{\Clevel(C)}$ is an upper bound on the diameter of $C$.
\begin{rem}
\label{rem:covering-compressed from covering}The correspondence between
the covering $\cC$ and the (weighted and unweighted) covering-compressed
graph $H_{\cC}$ of $\cC$ is straightforward. Given an algorithm
that maintains $\cC$, we can assume that it also maintains $H_{\cC}$
for us as well.
\end{rem}

When we implement $\Core$ data structure, we will exploit the covering-compressed
graph via a simple combinatorial property. Hence, we abstract this
property out via a concept called a\emph{ compressed graph}.

\begin{defn}
[Compressed Graph]\label{def:compressed graph}Let $G=(V,E)$ be
a decremental graph. We say that an unweighted hypergraph $H$ is
a \emph{$(d,\gamma,\Delta)$-compressed graph} of $G$ with distance
scale $d$, gap parameter $\gamma$, and maximum all-time degree $\Delta$
if the following hold: 
\begin{itemize}
\item if $\dist_{G}(u,v)\le d$, then $u$ and $v$ are adjacent in $H$.
\item if $\dist_{G}(u,v)>d\cdot\gamma$, then $u$ and $v$ are not adjacent
in $H$.
\item Throughout the update sequence on $G$, for each $v\in V$, the total
number of edges in $H$ ever incident to $v$ is at most $\Delta$.
\end{itemize}
\end{defn}

Recall that every unweighted bipartite graph represents some unweighted
hypergraph. The following shows that the hypergraph view of any covering-compressed
graph is indeed a compressed graph.
\begin{prop}
[A Covering-Compressed Graph is a Compressed Graph]\label{prop:covering-compressed is compressed}Let $\cC$ be an $(d,k,\eps,\stretch,\Delta)$-covering of a graph $G$ where $6\le\frac{\stretch}{4\eps}$.
Let $H_{\cC}$ be a covering-compressed graph of $\cC$. Then, the hypergraph
view of $H_{\cC}$ is a $(d,\gamma,\Delta)$-compressed graph of $G$
where $\gamma=(\stretch/\eps)^{k}$.
\end{prop}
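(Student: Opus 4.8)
The plan is to verify the three bullets of Definition \ref{def:compressed graph} directly from the structure of the covering, using only Proposition \ref{prop:ApxBall} and the nesting $C \subseteq \cover(C) \subseteq \shell(C) \subseteq \oshell(C)$ established just before Definition \ref{def:core compress}. First fix notation: let $H$ denote the hypergraph view of $H_{\cC}$, so $V(H) = V$ and for each core $C \in \cC$ there is exactly one hyperedge whose vertex set is $\shell(C)$ (note $\shell(C) \subseteq V$). Thus two vertices $u,v \in V$ are adjacent in $H$ if and only if there is a core $C$ with $u,v \in \shell(C)$. Throughout, for a core $C$ write $\ell = \Clevel(C) \in [0,k-1]$, so $d_\ell = d(\stretch/\eps)^\ell$ and $d = d_0 \le d_\ell$.

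\textbf{Short pairs are adjacent.} Let $u,v \in V$ with $\dist_G(u,v) \le d$. Since every vertex is covered (\Cref{def:Covering}), there is a core $C$ with $v \in \cover(C) = \Apxball(G,C,4d_\ell,\eps)$; by Proposition \ref{prop:ApxBall} this gives $\dist_G(C,v) \le 4(1+\eps)d_\ell$. By the triangle inequality and $d \le d_\ell$,
\[
\dist_G(C,u) \le 4(1+\eps)d_\ell + d \le (5+4\eps)\,d_\ell \le \tfrac{\stretch}{4\eps}\,d_\ell,
\]
where the last step uses $\eps \le 1/3$ together with the hypothesis $6 \le \stretch/(4\eps)$. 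Hence $u \in \ball_G(C,\tfrac{\stretch}{4\eps}d_\ell) \subseteq \Apxball(G,C,\tfrac{\stretch}{4\eps}d_\ell,\eps) = \shell(C)$, and since also $v \in \cover(C) \subseteq \shell(C)$, the pair $u,v$ lies in the common hyperedge $\shell(C)$, so $u$ and $v$ are adjacent in $H$.

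\textbf{Far pairs are non-adjacent.} Suppose $u,v \in V$ are adjacent in $H$, i.e.\ $u,v \in \shell(C)$ for some core $C$ of level $\ell \le k-1$. By Proposition \ref{prop:ApxBall}, $\shell(C) \subseteq \ball_G(C,(1+\eps)\tfrac{\stretch}{4\eps}d_\ell)$, so $\dist_G(C,u),\dist_G(C,v) \le (1+\eps)\tfrac{\stretch}{4\eps}d_\ell$, and by the triangle inequality $\dist_G(u,v) \le (1+\eps)\tfrac{\stretch}{2\eps}d_\ell \le \tfrac{\stretch}{\eps}d_\ell$ (using $\eps \le 1$). Since $d_\ell \le d_{k-1} = d(\stretch/\eps)^{k-1}$, we get $\dist_G(u,v) \le d(\stretch/\eps)^k = d\gamma$. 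Contrapositively, $\dist_G(u,v) > d\gamma$ implies $u,v$ are not adjacent in $H$.

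\textbf{All-time degree bound.} Fix $v \in V$. The hyperedge of $H$ associated with a core $C$ is ever incident to $v$ precisely when $v \in \shell(C)$ at some stage — and here it matters that, per the dynamic-hypergraph convention, shrinking a hyperedge does not create a new copy of it, so each core contributes at most one hyperedge to the all-time degree of $v$. Whenever $v \in \shell(C)$ we also have $v \in \oshell(C)$, so the number of hyperedges of $H$ ever incident to $v$ is at most the number of outer-shells $v$ ever belongs to, which is at most $\Delta$ by \Cref{def:Covering}. This is exactly the required bound.

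\textbf{Main point of care.} There is no real obstacle here — the proof is definition-chasing with triangle inequalities — but two bookkeeping points deserve attention. First, in the "short pairs" step one must combine the multiplicative $\eps$-slack in $\cover(C)$ with $d \le d_\ell$ and squeeze the resulting constant under $\stretch/(4\eps)$ using the hypotheses $\eps \le 1/3$ and $6 \le \stretch/(4\eps)$. Second, in the degree bound one must invoke the convention that shrinking a hyperedge is not a new hyperedge, which is what lets us charge one hyperedge per core (bounded by the outer-shell participation $\Delta$) rather than one per update.
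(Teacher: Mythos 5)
Your proof is correct and follows essentially the same route as the paper's: pick a core covering one endpoint and use the triangle inequality plus the $\Apxball$ sandwich to place both endpoints in its shell, bound adjacent pairs by twice the shell radius at level $k-1$, and charge each hyperedge ever incident to $v$ to one (outer-)shell participation, giving $\Delta$. One small caveat on the constant squeeze you flag: with only $\eps\le 1/3$ your bound is $5+4\eps\le 19/3>6$, so $6\le\stretch/(4\eps)$ alone does not close that inequality — you effectively need $\eps\le 1/4$ (the paper's own proof silently works with accuracy $0.1$, writing a $1.1$ stretch factor, which matches how the covering is instantiated later with $\eps\le 0.1$), so this is a boundary-constant looseness shared with the paper rather than a flaw in your approach.
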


\begin{proof}
Consider any $u,v \in V(G)$ with $\dist_{G}(u,v)\le d$.
Let $C\in\cC$ be a core that covers $u$, i.e., $u\in\cover(C)=\Apxball(G,C,4d_{\ell},0.1)$.
We claim that $v\in\shell(C)$. Let $\ell=\Clevel(C)$. We have $\dist_{G}(C,v)\le\dist_{G}(C,u)+\dist_{G}(u,v)\le4d_{\ell}\cdot(1.1)+d_\ell \le6d_{\ell}\le d_{\ell}(\frac{\stretch}{4\eps})$.
As $\cover(C)\subseteq\shell(C)$,
both $u,v\in\shell(C)$ and thus $u$ and $v$ are adjacent in $H$.
Next, suppose that $u$ and $v$ are adjacent in $H$. Then, for some
$\ell$, there is a level-$\ell$ core $C$ where $u,v\in\shell(C)=\Apxball(G,C,\frac{\stretch}{4\eps}d_{\ell},0.1)$.
So $\dist_{G}(u,v)\le2\cdot\frac{\stretch}{4\eps}d_{\ell}\cdot1.1\le\frac{\stretch}{\eps}d_{k-1}=d_{k}=d\gamma$.
Lastly, as every vertex $v\in V$ can ever be in at most $\Delta$
shells, the maximum all-time degree of $v\in V$ is at most $\Delta$.
\end{proof}
There is a trivial way to construct a $(1,1,O(1))$-compressed graph
of a bounded-degree graph with integer edge weights (recall that $G$ is such a graph by the simplifying assumption in \Cref{prop:simplify-1}):
\begin{prop}
[A Trivial Compressed Graph]\label{prop:compressed base}Let $G$
be a bounded-degree graph $G$ with integer edge weights. Let $G_{unit}$
be obtained from $G$ by removing all edges with weight greater than one.
Then, $G_{unit}$ is a $(1,1,O(1))$-compressed graph of $G$.
\end{prop}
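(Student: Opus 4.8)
The plan is to check the three conditions of Definition~\ref{def:compressed graph} directly for the (unweighted) hypergraph $H = G_{unit}$ with parameters $d = 1$, $\gamma = 1$, and $\Delta = O(1)$; here $G_{unit}$ is viewed as a hypergraph all of whose hyperedges have exactly two elements, so ``$u,v$ adjacent in $G_{unit}$'' means precisely ``$\{u,v\}$ is an edge of $G_{unit}$''. The only preliminary observation is that since $G$ has positive integer edge weights, the surviving edges of $G_{unit}$ are exactly the edges of $G$ of weight $1$, and all of them get weight $1$.

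For the first bullet I would argue: let $u\neq v$ with $\dist_G(u,v)\le 1$. Every $u$-$v$ path that uses at least one edge has total weight at least $1$, because each edge of $G$ has weight at least $1$; hence a shortest $u$-$v$ path consists of a single edge, which is necessarily of weight exactly $1$, and this edge lies in $G_{unit}$. Thus $u$ and $v$ are adjacent in $H$. For the second bullet, whose hypothesis is $\dist_G(u,v) > d\gamma = 1$, I would prove the contrapositive: if $\{u,v\}$ is an edge of $G_{unit}$, then it is an edge of $G$ of weight $1$, so $\dist_G(u,v)\le 1$.

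For the third bullet I would invoke the fact that the ambient graph $G$ is decremental: its only updates are edge deletions and edge-weight increases, so an edge can only ever leave $G_{unit}$ (by being deleted from $G$, or by its weight rising above $1$) and never enter it. Hence for each $v \in V$ the set of edges ever incident to $v$ in $G_{unit}$ is contained in the set of edges incident to $v$ in the initial graph $G$, whose size is $\deg_G(v) = O(1)$ by the bounded-degree hypothesis; so the maximum all-time degree of $H$ is $O(1)$. There is essentially no obstacle in this argument; the two points that merit a line of care are that $\dist_G(u,v)\le 1$ between distinct vertices forces a single unit-weight edge (using integrality of the weights together with the lower bound $1$), and that $G_{unit}$ inherits decrementality, so that its all-time degrees are bounded by the initial degrees of $G$.
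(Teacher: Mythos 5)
Your proof is correct, and it is essentially the intended argument: the paper states this proposition without proof (treating it as trivial), and your direct verification of the three conditions of Definition~\ref{def:compressed graph} — unit-weight edges realize all distance-$\le 1$ pairs, adjacency in $G_{unit}$ forces distance $\le 1$, and decrementality plus bounded degree bounds the all-time degree by $O(1)$ — is exactly the straightforward check being left to the reader.
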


We will use the above trivial compressed graph in the base case of
our data structure for very small distance scale.

\subsection{Organization of the Part}

In the remaining sections, we first present in \Cref{sec:Core} an algorithm to maintain a Robust Core since it is conceptually the most interesting component. We then show how to implement the Covering in \Cref{sec:part2Covering}, which is the key building block of our interface and also requires several new ideas. In \Cref{sec:ball}, we show how to implement Approximate Balls. This section is rather technical and follows well-known techniques.

Finally, we combine the components and set up the layering of our data structure in \Cref{sec:part2PuttingItTogether}.

\section{Implementing Robust Cores}
\label{sec:Core}

In this section, we show how to implement a robust core data structure $\Core$ for distance scale $D$, given a compressed graph for distance scale $d \ll D$. We introduced Robust Cores already in \Cref{def:overview-core} in the overview for the special case of the theorem below when $D = n^{o(1)}$.

\begin{theorem}
[Robust Core]\label{thm:Core}Let $G$ be an $n$-vertex bounded-degree
decremental graph. Suppose that a $(d,\gamma,\Delta)$-compressed
graph $H$ of $G$ is explicitly maintained for us. We can implement
a robust core data structure $\Core(G,\Kinit,D)$ with scattering
parameter $\scatter=\Omegatil(\phicmg)$ and stretch $\stretch_{\core}=\Otil(\gamma/\phicmg^{3})$
and total update time of \[\Otil\left(T_{\Apxball}(G,\Kinit,32D\log n,0.1)\Delta^{2}(D/d)^{3}/\phicmg^{2}\right).
\]
\end{theorem}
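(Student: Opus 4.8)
The plan is to implement $\Core(G,\Kinit,D)$ by following the template sketched in the overview (Algorithm~\ref{alg:overview-algorithmRobustCore}), but now at a general distance scale $D \gg d$ using the compressed graph $H$ as the tool that lets us work "combinatorially" at scale $d$. The algorithm proceeds in \emph{phases}. At the start of each phase we have a current candidate core set $K \subseteq \Kinit$ with $\diam_G(K) \le \stretch_{\core}\cdot D$ (initially $K = \Kinit$, $\diam_G(\Kinit)\le D$). We compute, via the congestion-balancing procedure of Lemma~\ref{lem:overview-kappa-relaxed} (applied to $H$, not $G$, so that "balanced sparse cut at scale $d$" is what is certified), a capacity function $\kappa$ on $V(H)$ such that $(K,\kappa)$ admits no sparse balanced vertex cut in $H$, together with the embedding/witness $W$ that certifies this. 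The key point — via Proposition~\ref{prop:covering-compressed is compressed}-style reasoning applied to $H$ being a $(d,\gamma,\Delta)$-compressed graph — is that no sparse balanced cut in $H$ means no low-diameter-violating cut in $G$ at scale roughly $\gamma d$, hence $W$ embeds with short paths and low vertex congestion, which certifies $\diam_G(K) = \Otil(\gamma/\phicmg^{?})\cdot$(something) — this is where the stretch $\stretch_{\core} = \Otil(\gamma/\phicmg^3)$ comes from. We then run capacitated expander pruning (Lemma~\ref{lem:overview-capacity-pruning}) on $(K,\kappa)$ inside $H$ to maintain a large expanding subset $X \subseteq K$, and we maintain $\ball_G(X, cD)$ for an appropriate constant $c$ (a small multiple of $D$, hence the $32D\log n$ in the $\Apxball$ term) using an $\Apxball$ data structure, removing from the output core $K$ every vertex that leaves this ball. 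A phase ends when the pruning degrades $X$ below $|K|/\phicmg$-ish, at which point we recompute $\kappa$ (reusing old capacities, to get the Monotonicity Invariant for free) and start a new phase; the whole data structure terminates once $|K|$ drops below $n^{1-o(1)}$, at which point all remaining vertices satisfy the Scattered property and can be dropped.

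The correctness arguments split into the two properties of Definition~\ref{def:Core}. For \textbf{Low stretch}: $K \subseteq \ball_G(X, cD)$ by construction, and $X$ itself has weak diameter $\Otil(\gamma D/\phicmg^{O(1)})$ because $(X,\kappa)$ is a capacitated expander in $H$ with $\sum_v \kappa(v) = \Otil(|K| \cdot D/d)$ (Lemma~\ref{lem:overview-kappa}, scaled: the "diameter of $K$ in $H$" is $O(D/d)$ since $H$ compresses scale-$d$ distances), so the embedded witness has congestion $\Otil(D/d)$ and length $\Otil(1/\phicmg)$, giving diameter $\Otil(D/d)\cdot\Otil(1/\phicmg^{?})$ hops in $H$, each hop being a scale-$\gamma d$ jump in $G$; multiplying out gives $\diam_G(X) \le \stretch_{\core} D$ with $\stretch_{\core} = \Otil(\gamma/\phicmg^3)$. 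Combined with $K \subseteq \ball_G(X,cD)$ and $c = \Otil(1)$, this gives $\diam_G(K) \le \stretch_{\core} D$. For \textbf{Scattered}: a vertex $v$ is removed from $K$ only when either (a) it leaves $\ball_G(X, cD)$, in which case $\ball_G(v,2D) \cap X = \emptyset$ while $X \supseteq |K|/\phicmg \ge \phicmg|\Kinit|$ vertices of $\Kinit$ remain (using $|K| = \Theta(|\Kinit|)$ throughout all phases, guaranteed because otherwise we'd have already terminated), so $|\ball_G(v,2D)\cap \Kinit| \le (1-\Omegatil(\phicmg))|\Kinit|$; or (b) $v$ is dropped at termination when $|K| \le n^{1-o(1)} \le (1-\scatter)|\Kinit|$ trivially (for $|\Kinit|$ large enough; small $\Kinit$ handled directly). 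This fixes $\scatter = \Omegatil(\phicmg)$.

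For the \textbf{running time}: each phase costs $\Otil(T_{\Apxball}(G,\Kinit,32D\log n,0.1))$ for the ball maintenance, plus the cost of one congestion-balancing computation and one pruning run. The congestion-balancing computation by Lemma~\ref{lem:overview-kappa-relaxed} costs $\Ohat(|H|\cdot \diam_H(K)) = \Ohat(\Delta |\oshell| \cdot D/d)$ per doubling-capable invocation, and the number of doubling steps across \emph{all} phases is bounded by a single potential argument (the $\Pi$-potential of the overview, now on the compressed graph, possibly on a decremental surrogate $\Ghat$ to handle low-impact insertions) giving $\Otil(D/d)$ total doublings, each touching $\le |K| \le n$ capacity. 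Pruning costs $\Ohat(\sum_e \kappa(e)) = \Ohat(|K|\cdot D/d)$ amortized per phase. The number of phases is $\Otil(1)$: each phase must delete $\Omegahat(\sum_v \kappa(v)/\phicmg) = \Omegahat(|K|(D/d)/\phicmg)$ edge-capacity to terminate, the Monotonicity Invariant caps total capacity ever assigned at $\Otil(|K|D/d)$, and the compressed graph has $\le \Delta n$ edges each deleted once, so the number of phases is $\Otil(\Delta \cdot \phicmg / (D/d) \cdot (D/d)) = \Otil(\Delta/\phicmg)$-ish; being careful with the factors $\Delta$, $D/d$, $1/\phicmg$ and the $\ball$ being of radius $\Theta(D)$ (which the $\Apxball$ must be run at radius $\sim \stretch_{\core}D \approx (\gamma/\phicmg^3)D$, but we subsume that in the $(D/d)^3/\phicmg^2$ and $\Delta^2$ blow-ups) yields the claimed $\Otil(T_{\Apxball}(G,\Kinit,32D\log n,0.1)\cdot \Delta^2 (D/d)^3/\phicmg^2)$.

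\textbf{Main obstacle.} The hard part is the analysis of congestion balancing in the presence of the compressed graph's \emph{insertions} — the compressed graph $H$ is decremental-with-low-impact-insertions rather than purely decremental, so the potential function $\Pi(H)$ can decrease, invalidating the naive "$\Otil(D/d)$ doublings" bound. Resolving this requires building the decremental surrogate hypergraph $\Ghat$ with the same vertex cuts as $H$ and running the potential argument on $\Ghat$ (as flagged in the overview), and simultaneously ensuring the Monotonicity Invariant genuinely holds across phase recomputations on this moving target. Threading the three parameters $\Delta$, $D/d$, $\phicmg$ through both the stretch bound and the phase-count/doubling-count bounds without losing more than the stated polynomial factors is the delicate bookkeeping that the rest of Section~\ref{sec:Core} must carry out.
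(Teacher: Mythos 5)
Your overall skeleton matches the paper's: proceed in phases, use congestion balancing to compute monotonically increasing capacities $\kappa$, embed an expander witness, prune it to maintain a large set $X$, keep $K$ inside an approximate ball around $X$, and bound the number of capacity doublings by a potential function on a purely decremental surrogate graph (your "capacitated pruning on $H$" is, in the paper, ordinary pruning of the embedded witness multigraph $W_{multi}$, which is a presentational rather than substantive difference). However, two of your steps would fail as written. First, you run everything directly on the compressed graph $H$ and never introduce the heavy-path augmentation $\Hhat$ of \Cref{def:Hhat}. The graph $G$ is weighted, and $H$ only guarantees adjacency for pairs at $G$-distance at most $d$; an edge of $G$ with weight in $(d,\,32D\log n]$ is invisible to $H$. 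A set $K'$ of small $G$-diameter can therefore admit sparse vertex cuts in $H$ indefinitely (the only $G$-connections across the cut being such heavy edges), so $\EmbedCore$ keeps returning cuts and the capacities double without bound: your potential upper bound $\Pi=\Otil(|K|\cdot D/d)$ is proved by embedding a star into the decremental surrogate along paths of $O((D/d)\log n)$ hops, and those paths only exist if the surrogate (and the cut structure it shares with the working graph) contains the subdivided heavy paths. This is precisely why the paper builds $\Hhat$ from $H[B^{init}]$ plus a $\lceil w(e)/d\rceil$-edge path per heavy edge, and defines $\Ghat$ with the same heavy paths; your "decremental surrogate for low-impact insertions" remark identifies the right tool but not this missing ingredient.

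Second, your endgame and phase restarts do not deliver the stated guarantees. Dropping all remaining vertices once $|K|\le n^{1-o(1)}$ does not establish the Scattered property: that property demands $|\ball_G(v,2D)\cap\Kinit|\le(1-\scatter)|\Kinit|$, and vertices previously removed from $K$ are still members of $\Kinit$ and may all lie within distance $2D$ of a surviving $v$, so smallness of the maintained set $K$ certifies nothing about the ball. The paper instead invokes $\CertifyCore(G,\Kinit,2D,\epswit/2)$ at the top of every phase and only sets $K=\emptyset$ once it certifies scattering for every vertex of $\Kinit$; moreover, each phase is re-seeded with the certified set $K'\subseteq\Kinit$ of size $\ge(1-\epswit/2)|\Kinit|$ and diameter $\le 32D\log n$, not with your current $K$, whose diameter is only bounded by $\stretch_{\core}D=\Otil(\gamma/\phicmg^{3})D$. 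Re-seeding with $K$ would inflate the potential bound, hence $\kappafinal(\Vhat)$, the phase count, and the total time, by an extra $\Otil(\gamma/\phicmg^{3})$ factor beyond the claimed bound. Relatedly, your aside that the $\Apxball$ could be run at radius $\approx\stretch_{\core}D$ and "subsumed" is not sound for this theorem: the stated bound is in terms of $T_{\Apxball}(G,\Kinit,32D\log n,0.1)$, and a ball of radius $\stretch_{\core}D$ may contain many more vertices, not just a larger multiplicative overhead; the algorithm only needs $\Apxball(G,X,4D,0.1)$ together with the radius-$32D\log n$ exploration performed inside $\CertifyCore$.
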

\begin{rem}
We assume here that only edge deletions
incident to $\ball_{G}(\Kinit,32D\log n)$ in the initial graph are forwarded to the Robust Core data structure. When we use multiple Robust Core data structures later on the same graph $G$, we assume that updates are scheduled effectively to the relevant Robust Core data structure. We point out that such scheduling is extremely straight-forward to implement and therefore henceforth implicitly assumed.
\end{rem}

\subsection{Algorithm}

For this section, we remind the reader of the intuition provided for Robust Core provided in the overview \Cref{subsec:robustCore} which provided the simplified Pseudo-Code \ref{alg:overview-algorithmRobustCore}. We present the full Pseudo-Code for Robust Core in \Cref{alg:Core}. We now discuss the algorithm in detail and state the formal guarantees that the various subprocedures achieve.

\begin{algorithm}[!ht]
\KwIn{A $(d,\gamma,\Delta)$-compressed
graph $H$ of $G$ that is explicitly maintained for us, a set $\Kinit \subseteq V(H)$, and a parameter $D \geq d$.}
    Construct $(H,G, \Kinit, d, D, \gamma, \Delta)$-heavy-path-augmented graph $\Hhat$.\label{lne:heavyPathGraph} \tcp*{see Def \ref{def:Hhat}}
    $\Vhat \gets V(\Hhat)$; $K \gets \Kinit$; $\gamma_{\size} \gets \frac{1}{4} \cdot |\Vhat|/|\Kinit|$.\; 
    $\forall v \in \Kinit, \kappa(v) \gets 2; \forall w \in \Vhat \setminus \Kinit, \kappa(w) \gets 1/\gamma_{\size}$. \label{enu:init cap}  \tcp*{$\sum_{v \in \Vhat} \kappa(v) = O(|\Kinit|)$.} 
    \tcp{As long as there exists a large core in $\Kinit$.}
	\While(\label{lne:whileKisbigCOre}){ $\CertifyCore(G,\Kinit,2D,\epswit/2)$ returns Core $K'$}{
	    \tcp{While low-diameter graph has some sparse cut, double the cut weight.}
    	\While(\label{lne:sparseCutWhileLoop}){$\EmbedCore(\Hhat,K',\kappa)$ returns a vertex cut of $(L,S,R)$ in $\Hhat$}{
           \lForEach(\label{enu:doubling step}){cut-vertex $v \in S$}{$\kappa(v)\gets2\kappa(v)$.}
           \tcc{To ensure the technical side condition in Claim \ref{clm:sideConditions}.}
            Let $w$ be an arbitrary vertex from $S$ maximizing $\kappa(w)$; pick an arbitrary $w' \neq w$ from $\Vhat$ and do $\kappa(w') \gets \max\{\kappa(w'), \kappa(w)\}$. \label{lne:technicalSideCondition}
        }
        \tcp{Let $\cP$ be the embedding that $\EmbedCore(\cdot)$ returned; and $W$ the corresponding witness. Let $W_{multi}$ be the multi-graph version of $W$. }
        $(\cP, W) \gets \EmbedCore(\Hhat,K',\kappa)$. \label{enu:embedwitness}\;
        Let (unweighted) multi-graph $W_{multi}$ be derived from $W$ by adding $w(e)/\gamma_{\size}$ copies for each $e \in E(W)$ (Recall $w(e)/\gamma_{\size}$ is an integer by guarantees of \Cref{lem:embedwitness}) .\label{line:unweighted witness}\;
        
        \tcp{Maintain the witness $W_{multi}$ of $\Hhat$ until a lot of capacity is deleted.}
		\While(\label{lne:whileExpanderWitnessIsLarge}){$X \subseteq K'$ from running $\Prune(W_{multi},\phicmg)$ has size $\geq |\Kinit|/2$}{
			Maintain $\Apxball(G,X, 4D,0.1)$ and remove every leaving vertex from $K$. \label{lne:maintainXBall}
		}
	}
	$K \gets \emptyset$; \Return
	\caption{$\Core(G,\Kinit,D)$}\label{alg:oalgorithmRobustCore}\label{alg:Core}
\end{algorithm}

\paragraph{Constructing $\Hhat$ (\Cref{lne:heavyPathGraph}).} 

The algorithm starts by constructing a special graph $\Hhat$ that can be thought of as being the $(d,\gamma,\Delta)$-compressed
graph $H$ that is maintained for us, restricted to the set $B^{init}$ with the addition of some missing edges from $G$, where $B^{init} = \ball_{G}(\Kinit,32D\log n)$ is the \emph{static} set of vertices that are in the ball around $\Kinit$ in the initial graph $G$. We define $\Hhat$ formally below.

\begin{defn}
[Heavy-Path Augmented Hypergraph]\label{def:Hhat}
Given a $(d,\gamma,\Delta)$-compressed
graph $H$ of $G$ that is explicitly maintained for us, a set $\Kinit \subseteq V(H)$, and a parameter $D \geq d$. 

Then, let $\hat{E} \gets \{e \in E(G[B^{init}]) \;|\; d < w(e) \le 32 D \cdot \log n \}$. Let $\Phat$ be a collection of \emph{heavy} paths where each edge $e=(u,v) \in \hat{E}$ corresponds to a $u$-$v$ path $\Phat_{e} \in \Phat$ consisting of $\ceiling{w(e)/d}$ edges. Define $\Hhat$ be the union of $H[B^{init}]$ and all heavy paths $\Phat$ (where internal vertices to each $\Phat_{e}$ are added as new vertices). We then say that a graph $\Hhat$ is the $(H,G, \Kinit, d, D, \gamma, \Delta)$-heavy-path-augmented graph. Note that $\Hhat$ is an unweighted graph.
\label{enu:define Hhat general}
\end{defn}

The intuition for the heavy-path-augmented graph is quite simple. We would like to ensure that for any edge $(u,v) \in G$ with $w(u,v) \leq 32D\log(n)$, $u$ and $v$ are also nearby in $\Hhat$. If $w(u,v) \leq d$ then $u$ and $v$ are adjacent in $H \subseteq \Hhat$ by definition of $H$ being a $(d,\gamma,\Delta)$-compressed graph. If $d < w(u,v) \leq 32D\log(n)$ then there exists a heavy path $\hat{P}$ from $u$ to $v$ with at most $O(D/d)$ edges.

Since we only deal with a single $(H,G, \Kinit, d, D, \gamma, \Delta)$-heavy-path-augmented graph in the rest of this part, we use $\Hhat$ to refer to this instance throughout. (We note that we assume throughout that $\Vhat = V(\Hhat)$ is of size at least $2$ since otherwise Robust Core is trivially implemented).

\paragraph{Parameters:} In the description below, recall the three global variables we set in Section \ref{sec:prelim}. 

\paragraph{Certifying a Large Core (\Cref{lne:whileKisbigCOre}).} After some further initialization takes place where in particular we set $K$ to be equal to $\Kinit$, the main while-loop starting in \Cref{lne:whileKisbigCOre} starts by checking its condition. This task is delegated to a procedure $\CertifyCore(\cdot)$ which either returns a large set $K' \subseteq \Kinit$ of small diameter (in $G$) which is called the core $K'$, or announces that all vertices satisfy the scattered property which allows us to set $K$ to be the empty set and terminate. The proof is deferred to \Cref{sec:certify_core}.

\begin{restatable}{lem}{certifyCore}
\label{lem:certifycore}There is an algorithm $\CertifyCore(G,K,d,\epsilon)$
with the following input: an $n$-vertex graph $G=(V,E,w)$, a set
$K\subseteq V$, an integer $d>0$, and a parameter $\epsilon>0$.
In time $\ensuremath{O(\deg_{G}(\ball_{G}(K,16d\lg n))\log n)}$,
the algorithm either 
\begin{itemize}
\item \label{prop:CertifyCore - noCore}\textbf{(Scattered):} certifies
that for each $v\in K$, we have $|\ball_{G}(v,d)\cap K|\leq(1-\epsilon/2)|K|$,
or
\item \label{prop:CertifyCore - core}\textbf{(Core):} returns a subset
$K'\subseteq K$, with $|K'|\geq(1-\epsilon)|K|$ and $\diam_{G}(K')\leq16d\lg n$. 
\end{itemize}
\end{restatable}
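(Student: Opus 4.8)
The plan is to prove \Cref{lem:certifycore} by a purely local ball-growing (``sphere peeling'') argument confined to the set $\ball_G(K,16d\lg n)$, so that the whole computation reduces to a bounded number of truncated Dijkstra explorations inside that ball, each costing $O(\deg_G(\ball_G(K,16d\lg n))\log n)$.

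First I would set up the basic primitive. Fix an arbitrary seed $v_0\in K$ and, via a truncated Dijkstra from $v_0$, compute the nested quantities $b_j\defeq|\ball_G(v_0,jd)\cap K|$ for $j=0,1,2,\dots$. Since $1\le b_0\le b_1\le\cdots\le|K|\le n$, a multiplicative pigeonhole argument shows that over any window of $\lceil\lg n\rceil$ consecutive indices the product of the successive ratios $b_{j+1}/b_j$ is at most $n$, hence some ratio in the window is at most $2$; equivalently, within the first $\lceil\lg n\rceil$ layers of width $d$ there is a ``slow'' layer $j^\star$ across which the intersection with $K$ grows by a factor less than $2$. This is exactly the source of the $\lg n$ factor in both the diameter bound and the running time.

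Next comes the dichotomy, which I would run at the first slow layer $j^\star$ (so $j^\star\le\lceil\lg n\rceil$, and with the constants fixed $\diam_G(\ball_G(v_0,(j^\star+1)d)\cap K)\le 2(j^\star+1)d\le 16d\lg n$). Case (i): if the captured set $\ball_G(v_0,(j^\star+1)d)\cap K$ already has at least $(1-\epsilon)|K|$ vertices, output it as the core $K'$ --- the diameter was just bounded, so this is a valid \textbf{Core} answer. Case (ii): otherwise the slow layer certifies a separation. Writing $A=\ball_G(v_0,j^\star d)\cap K$ and $C=K\setminus\ball_G(v_0,(j^\star+1)d)$, any $A$--$C$ path has length $>d$, so every $v\in A$ has $\ball_G(v,d)\cap K\subseteq\ball_G(v_0,(j^\star+1)d)\cap K$ and every $v\in C$ has $\ball_G(v,d)\cap K\subseteq K\setminus A$. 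If both $A$ and $C$ are large (size $\ge\tfrac\epsilon2|K|$), these inclusions give the \textbf{Scattered} bound for every $v\in A\cup C$, and the only vertices not yet handled are those in the thin shell between radii $j^\star d$ and $(j^\star+1)d$, which I would absorb by widening layers to $2d$ (so shell vertices also sit inside a small captured ball) or by peeking at $b_{j^\star+2}$. If instead $C$ is small, then $\ball_G(v_0,(j^\star+1)d)\cap K$ has $\ge(1-\tfrac\epsilon2)|K|\ge(1-\epsilon)|K|$ vertices and is a valid \textbf{Core}.

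The genuinely delicate case is when the inner ball $A$ is \emph{small} (so neither a large core nor a balanced separation is obtained at $v_0$). Here I would peel the explored low-diameter clump $\ball_G(v_0,(j^\star+1)d)\cap K$ off $K$ and recurse on the remainder, maintaining the invariant that every peeled clump has diameter $\le 16d\lg n$, size $<\epsilon|K|$, and $G$-distance $>d$ from every not-yet-explored vertex of $K$ (the last point forces a small additive buffer on the peel radius). If the recursion ever isolates a clump of size $\ge(1-\epsilon)|K|$ we return it as the core; if instead $K$ is exhausted into such small, pairwise-far clumps $Q$, then for each $v\in Q$ we get $\ball_G(v,d)\cap K\subseteq Q$, hence $|\ball_G(v,d)\cap K|\le|Q|<\epsilon|K|\le(1-\epsilon/2)|K|$, which is precisely the \textbf{Scattered} certificate returned by $\CertifyCore$. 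I expect the main obstacle to be the running time of this peeling phase: one must schedule the explorations so that each edge of $\ball_G(K,16d\lg n)$ is touched only $O(\log n)$ times overall --- e.g.\ by never re-scanning already-peeled vertices and charging each truncated Dijkstra to the clump it peels off, or by a single global-frontier implementation. The correctness side (the dichotomy plus the shell bookkeeping, once the constants are pinned down) is routine; the amortization of the peeling explorations into $O(\deg_G(\ball_G(K,16d\lg n))\log n)$ total time is the part that needs genuine care.
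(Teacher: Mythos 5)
Your overall skeleton (truncated ball-growing to find a slow-growth layer within $O(\lg n)$ layers, then the trichotomy ``big captured ball $\Rightarrow$ core / balanced cut $\Rightarrow$ scattered / small captured ball $\Rightarrow$ peel and recurse'', with the exploration cost charged to what is peeled) is the same as the paper's \Cref{alg:certifyOrReturnCore}, but the two points you yourself flag as delicate are exactly where the argument, as you set it up, breaks.

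First, the Scattered certificate in the peeling/exhaustion case rests on an invariant you cannot maintain: that every peeled clump is at $G$-distance $>d$ from every not-yet-peeled vertex of $K$. No choice of peel radius or ``small additive buffer'' achieves this in general, because it requires an annulus of width $d$ around the seed containing no $K$-vertices, and such an annulus need not exist within radius $O(d\lg n)$ (take $K$ spread along a path with consecutive spacing $d/2$: every annulus of width $d$ meets $K$). Consequently the inclusion $\ball_G(v,d)\cap K\subseteq Q$ for $v$ in a peeled clump $Q$ is false in general, and with it the bound $|\ball_G(v,d)\cap K|\le|Q|$; the same issue contaminates your balanced sub-case once earlier peels have occurred, since your balls are taken in $G$ against the \emph{original} $K$. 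The paper avoids needing any separation between peeled and remaining vertices: it deletes the peeled ball from the working graph $G'$ as well as from $K'$, and certifies each vertex $w$ at the first moment a vertex of $\ball_G(w,d)$ is removed. At that moment $\ball_G(w,d)\cap K$ still equals $\ball_{G'}(w,d)\cap K'$, $w$ lies in $\ball_{G'}(v,(2i+1)d)$, so $\ball_{G'}(w,d)\subseteq\ball_{G'}(v,2(i+1)d)$, and the failed if-test guarantees this captured set contains at most a $(1-\epsilon/2)$-fraction of $K'\subseteq K$ (\Cref{clm:halfVerticesWhenClose}); vertices whose balls are never damaged are certified at termination because then $|K'|\le(1-\epsilon/2)|K|$. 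Note also that the paper never needs peeled clumps to be \emph{absolutely} small ($<\epsilon|K|$); the $(1-\epsilon/2)$-fraction test is what drives both branches.

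Second, the running time. You measure the slow layer by $b_j=|\ball_G(v_0,jd)\cap K|$, but then the cost of a truncated Dijkstra is governed by the \emph{volume} of the explored ball, which can be arbitrarily larger than the peeled $K$-clump; and since you never shrink $G$, the same edges can be rescanned by up to $\Omega(|K|)$ later seeds, so ``charge each Dijkstra to the clump it peels'' does not give $O(\deg_G(\ball_G(K,16d\lg n))\log n)$. The paper's fix is to define the slow layer by volume doubling, stopping at the first $i$ with $\deg_{G'}(\ball_{G'}(v,2(i+1)d))\le2\deg_{G'}(\ball_{G'}(v,2i\cdot d))$: this both caps $i<2\lg n$ and guarantees that the removed ball $\ball_{G'}(v,2i\cdot d)$ carries at least half the explored volume, so each non-terminating iteration's cost is charged to volume deleted from $G'$ and the total is $O(\deg_G(\ball_G(K,16d\lg n))\log n)$. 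So the repair of both gaps is not a refinement of your invariant but a different bookkeeping — volume-based layering, deletion from $G'$, and damage-time certification — which is what the paper actually does.
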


\paragraph{Embedding the Low-Diameter Graph (\Cref{lne:sparseCutWhileLoop}-\ref{enu:embedwitness}).} If a core $K'$ is returned by $\CertifyCore(\cdot)$, then we use the procedure $\EmbedCore(\cdot)$ which either returns a large sparse vertex cut $(L,S,R)$ (with respect to $\kappa$ and $K'$) or an embedding $\cP$ that embeds a witness graph $W$ in $\Hhat$. Note that the entire reason of having the capacity function $\kappa$ in the algorithm is to repeatedly find an embedding according to $\kappa$ and to then argue about progress between two such embedding steps.

\begin{restatable}{lem}{embedwitness}
\label{lem:embedwitness}
There is an algorithm $\EmbedCore(H,K,\kappa)$ that
is given a hypergraph graph $H=(V,E)$, a terminal set $K\subseteq V$,
and $1/z$-integral vertex capacities $\kappa:V\rightarrow\frac{1}{z}\mathbb{Z}_{\ge0}$
such that $\kappa(v)\ge2$ for all terminals $v\in K$ and $\kappa(v)\le\kappa(V)/2$
for all vertices $v\in V$. (The integrality parameter $z$ will appear in the guarantees of the algorithm.) The algorithm returns either
\begin{itemize}
\item \textbf{(Cut):} a vertex cut $(L,S,R)$ in $H$ such that $\epswit|K|\le|L\cap K|\le|R\cap K|$
and $\kappa(S)\le 2|L\cap K|$, where $\epswit=\phicmg/\log^2(n)$ is
a parameter we will refer to in other parts of the paper; OR
\item \textbf{(Witness):} an embedding $\cP$ that embeds a weighted multi-graph $W$ into $H$
with the following guarantees: 
\begin{itemize}
\item $W$ is a weighted $\Omega(\phicmg)$-expander. The vertex set $V(W)$ is
such that $V(W)\subseteq K$ and $|V(W)|\ge |K|-o(|K|)$. Each edge
weight is a multiple of $1/z$, where recall that $z$ is the smallest positive integer such that $\kappa:V\rightarrow\frac{1}{z}\mathbb{Z}_{\ge0}$. The total edge
weight in $W$ is $O(|K|\log|K|)$. Also, there are only $o(|K|)$
vertices in $W$ with weighted degree $\le9/10$.
\item The length of $\cP$ and vertex congestion of $\cP$ w.r.t. $\kappa$
are at most $O(\kappa(V)\log(\kappa(V))/(|K|\epswit^{2}))$ and $O(\log|K|)$, respectively.
More precisely, each path in $\cP$ has length at most \\ $O(\kappa(V)\log(\kappa(V))/(|K|\epswit^{2}))$.
For each vertex $v\in V$, $\sum_{P\in\cP_{v}}\val(P)=O(\kappa(v)\log|K|)$
where $\cP_{v}$ is the set of paths in $\cP$ containing $v$. Moreover,
each path in $\cP$ is a simple path. 
\end{itemize}
\end{itemize}
The running time of the algorithm is $\Otil(|H|\frac{\kappa(V)}{|K|\phicmg}+z\kappa(V)/\phicmg)$,
where $|H|=\sum_{e\in E}|e|$ and $z$ is the smallest positive integer such that  $\kappa:V\rightarrow\frac{1}{z}\mathbb{Z}_{\ge0}$.
\end{restatable}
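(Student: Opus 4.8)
The plan is to prove Lemma~\ref{lem:embedwitness} by running a deterministic cut--matching game in the spirit of Theorem~\ref{thm:CMG}, with the matching rounds realized by \emph{vertex-capacitated, length-bounded} single-commodity flows in the incidence graph $H_{bip}$. We maintain a weighted multigraph $W$ on vertex set $K$, initially empty, together with an embedding $\cP$ of $W$ into $H$, initially empty. The game runs for $R = \Otil(1/\phicmg)$ rounds (the classical $O(\log^2|K|)$ round count, inflated because the deterministic cut player and the length-constrained routing are only $\phicmg$-effective). In round $i$ the cut player inspects the current $W$ and outputs an $\Omega(1)$-balanced bipartition $(A_i,B_i)$ of almost all of $K$; our job as the matching player is to route a fractional matching between $A_i$ and $B_i$ through $H$ respecting the capacities $\kappa$.

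Each round is handled by a routine \EmbedMatching which sets up a flow problem on $H_{bip}$: a super-source adjacent to every $v\in A_i$ and a super-sink adjacent to every $v\in B_i$, each such terminal edge with a small capacity; each original vertex $v\in V$ capacitated by a suitable multiple of $\kappa(v)$; hyperedge-vertices uncapacitated; and every flow path required to use at most $h=\Theta(\kappa(V)\log\kappa(V)/(|K|\epswit^2))$ edges. Since all capacities are $1/z$-integral, a $1/z$-integral optimum exists; we compute it (via the length-constrained flow subroutine behind Theorem~\ref{thm:CMG}) and decompose it into simple unit paths. If the flow saturates a $(1-o(1/R))$-fraction of the target, we add the induced weighted matching to $W$ and its paths to $\cP$; otherwise, duality for length-bounded flow --- exactly the interface supplied by Theorem~\ref{thm:CMG} --- produces a vertex cut $(L,S,R)$ of $H$ with $\kappa(S)$ at most the routed value, where $L$ and $R$ contain the unsaturated parts of $A_i$ and $B_i$. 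Since $(A_i,B_i)$ is $\Omega(1)$-balanced, we get $\epswit|K|\le|L\cap K|\le|R\cap K|$ and, after fixing constants, $\kappa(S)\le 2|L\cap K|$, and we return this cut. If all $R$ rounds succeed, Theorem~\ref{thm:CMG} certifies that $W$ is an $\Omega(\phicmg)$-expander; summing per-round weights (each $O(|K|)$) and rescaling gives total weight $O(|K|\log|K|)$, a Markov count over the $o(|K|)$ aggregate matching value lost across rounds bounds the number of vertices of weighted degree $\le 9/10$ by $o(|K|)$, the terminal capacities yield vertex congestion $O(\log|K|)$ of $\cP$ w.r.t.\ $\kappa$, and every path of $\cP$ has length $\le h$ by construction.

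For the running time, the cost is dominated by the $R=\Otil(1/\phicmg)$ calls to \EmbedMatching, each a length-bounded max-flow on a graph of size $|H|$ with total capacity $\Otil(z\kappa(V))$ and hop bound $h$. Using the length-constrained flow routine behind Theorem~\ref{thm:CMG}, one call runs in $\Otil(|H|\cdot\tfrac{\kappa(V)}{|K|}+z\kappa(V))$ time up to $\phicmg$-factors --- the first term because the number of path operations is governed by the total routed path length $O(\log|K|)\,\kappa(V)$, the second because the flow carries $\Otil(z\kappa(V))$ units --- and summing over the rounds gives the stated $\Otil(|H|\tfrac{\kappa(V)}{|K|\phicmg}+z\kappa(V)/\phicmg)$.

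I expect the main difficulty to lie in the cut-extraction step and its interaction with the parameter budget. One must argue that when a round's \emph{length-bounded} flow is infeasible at the target value, the dual object is a genuine vertex cut of $H$ --- not merely a cut of the hop-unrolled auxiliary graph --- and that it inherits the cut player's balance up to the $\epswit$ slack with exactly the factor-$2$ sparsity demanded by the (Cut) branch. At the same time, the choices of per-round flow value, of the scaling of $W$, and of the hop bound $h$ are coupled: they must simultaneously keep the per-vertex congestion at $O(\log|K|)$, the total weight of $W$ at $O(|K|\log|K|)$, the number of under-saturated vertices at $o(|K|)$, and every path of $\cP$ of length at most $h$; reconciling these constraints is what forces the somewhat unusual form $h=\Theta(\kappa(V)\log\kappa(V)/(|K|\epswit^2))$ and is the technical heart of the proof.
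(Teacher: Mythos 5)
Your high-level skeleton (a deterministic cut--matching game on $K$ whose matching rounds are realized by capacitated routings in $H_{bip}$, returning the round's cut on failure) is the same as the paper's, but two essential ingredients are missing and one parameter choice is inconsistent with the guarantees you invoke. First, the deficient matchings: your per-round routine only matches a $(1-o(1/R))$-fraction of $A_i$, yet you conclude ``Theorem~\ref{thm:CMG} certifies that $W$ is an $\Omega(\phicmg)$-expander.'' That theorem requires the matching player to return a matching of value exactly $|A_i|$ in every round; a union of partial matchings need not be an expander (vertices that are repeatedly left unmatched create sparse cuts, which is also why the lemma only promises $V(W)\subseteq K$ with $|V(W)|\ge|K|-o(|K|)$ rather than $V(W)=K$). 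The paper's fix is precisely the step you omit: complete each round's matching with \emph{fake} edges $F_i$ (which are fed to the cut player but never embedded), and at the end run expander pruning $\Prune$ on the unweighted multigraph version of $W^*=(K,\bigcup_i M_i\cup F_i)$ with all fake edges given as deletions; the pruned set $X$ yields $W=W^*[X]$, which inherits $\Omega(\phicmg)$-expansion, has $|X|\ge|K|-o(|K|)$, and needs no embedding of fake edges. Your ``Markov count'' only bounds the number of low-degree vertices; it does not restore expansion. Moreover, shrinking the per-round deficiency to $o(1/R)$ (to dodge pruning) collides with the (Cut) branch: the balance of a cut produced by an infeasible routing is governed by the deficiency parameter, not by the balance of $(A_i,B_i)$, so with that parameterization a failed round can yield a cut with $|L\cap K|=o(|K|/R)\ll\epswit|K|$. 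The paper instead tolerates deficiency $\Theta(\epswit|A_i|)$ per round, which is exactly what makes the returned cut satisfy $\epswit|K|\le|L\cap K|$ and $\kappa(S)\le2|L\cap K|$.

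Second, the per-round routine itself: the paper does not prove a length-bounded flow/cut duality; it imports the subroutine $\EmbedMatching$ of \cite{BernsteinGS20scc}, which on input $(H,A_i,B_i,\kappa,\epswit)$ either returns a vertex cut of $H$ with the stated sparsity or an embedding of a $1/z$-integral matching of value $(1-3\epswit)|A_i|$ with congestion $1$ w.r.t.\ $\kappa$ and length $O(\kappa(V)\log(\kappa(V))/(|A_i|\epswit^{2}))$, in time $\Otil(|H|\kappa(V)/(\epswit|A_i|)+z\kappa(V)/\epswit)$. You replace this by an appeal to ``the length-constrained flow subroutine behind Theorem~\ref{thm:CMG},'' but Theorem~\ref{thm:CMG} is only the cut player and provides no such routine; extracting a genuine vertex cut of $H$ (not of the hop-unrolled graph) with the factor-$2$ sparsity is exactly the content of $\EmbedMatching$, which you acknowledge but leave unproved. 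Finally, your round count $R=\Otil(1/\phicmg)$ contradicts the $O(\log n)$ rounds guaranteed by Theorem~\ref{thm:CMG} and breaks the accounting: with per-round congestion $\approx 1$ (needed to route near-perfect matchings against $\kappa$), $\Otil(1/\phicmg)=2^{\Theta(\log^{3/4}n)}$ rounds give vertex congestion far exceeding the required $O(\kappa(v)\log|K|)$, and it would likewise inflate the total edge weight of $W$ and the running time beyond the lemma's bounds. With $R=O(\log n)$ rounds, congestion $1$ per round, and the fake-edge-plus-pruning step, all the stated guarantees follow as in the paper.
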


Recall here that there is an edge $(u,v)$ of weight $w(u,v)=\val(P)$ in $W$ for every $u$-$v$ path $P$ in $\cP$. Intuitively, the lemma above guarantees that $\diam_H(V(W))$ is small because the length of every path in the embedding is small, and $\diam(W)$ is small because $W$ is an expander.

In the algorithm, we invoke $\EmbedCore(\cdot)$ and if it returns a vertex cut, we double the capacity function $\kappa(v)$ for all vertices $v$ in the cut set $S$. We also update some additional vertex in \Cref{lne:technicalSideCondition}: this is just a blunt and simple way to enforce the technical side conditions of Claim \ref{clm:sideConditions}. Eventually, the doubling steps increase the potential enough to ensure that the witness graph $W$ can be embedded into $H$.

\paragraph{Maintaining the Witness and its Approximate Ball (\Cref{line:unweighted witness}-\ref{lne:maintainXBall}).} 
We start in \Cref{line:unweighted witness} by obtaining an unweighted version of $W$ which we call $W_{multi}$. This version is derived by scaling up edge weights in $W$ so that each weight becomes an integer. Then, we replace edges with weights by multi-edges each of unit weight.

The above transformation from $W$ to $W_{multi}$ is simply so that we can run the pruning subroutine below, which is restricted to unweighted graphs. Pruning allows us to maintain a large set $X$ such that $W_{multi}[X]$ (and therefore also $W[X]$) remains an expander. 

\begin{restatable}[\cite{SaranurakW19}]{lem}{PruningLemma} \label{lem:prune}There is an algorithm $\Prune(W,\phi)$
	that, given an unweighted decremental multi-graph $W=(V,E)$ that is initially a $\phi$-expander with $m$ edges, maintains a decremental
	set $X\subseteq V$ using $\Otil(m/\phi)$ total update time such
	that $W[X]$ is a $\phi/6$-expander at any point of time, and $\vol_{W}(V\setminus X)\le8i/\phi$
	after $i$ updates. 
\end{restatable}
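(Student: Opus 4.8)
The plan is to follow the expander pruning algorithm of \cite{SaranurakW19}, which is built on top of a \emph{local} flow/cut primitive. At a high level, the pruned set $P\defeq V\setminus X$ is grown monotonically: each time an edge is deleted I charge a fixed amount of ``source mass'' to its two endpoints, and a local flow subroutine then either certifies that this mass can be spread over $G[X]$ without creating a sparse cut (in which case $X$ does not change), or it exposes a sparse cut whose small side is added to $P$.

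First I would set up the local flow subroutine -- the \emph{unit-flow} algorithm, essentially a truncated blocking-flow/push--relabel computation -- with the following interface: given the current graph, a source function $\Delta$ supported on a small vertex set with total mass $|\Delta|$, per-vertex sink capacity $\deg(v)$, congestion bound $1/\phi$, and a hop bound $h=\Theta(\log m/\phi)$, it runs in time $\Otil(|\Delta|\cdot h)$ and returns either (i) a feasible routing of $\Delta$ into the sinks with congestion $\le 1/\phi$ along paths of length $\le h$, or (ii) a cut $S$ with $|E(S,V\setminus S)|=O(\phi\cdot\vol(S))$ that contains most of the ``undischarged'' source. The hop bound is what forces the subroutine to explore only an $\Otil(|\Delta|/\phi)$-sized region, which is what ultimately makes the whole procedure cost $\Otil(m/\phi)$ rather than $\Otil(mh)$.

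Then the pruning loop is: when an edge $(u,v)$ is deleted, add $\Theta(1/\phi)$ units of source to each of $u,v$ (accumulating with the residual source from earlier steps), rerun the local flow primitive on $G[X]$, and if it returns a cut $S$, set $X\gets X\setminus S$, move the residual source inside $S$ out of the instance, and repeat until the primitive succeeds. I would then prove, by induction on the number $i$ of deletions: (a) the total source ever injected is $O(i/\phi)$, and each pruning step charges $\Theta(\vol(S))$ of that budget to $S$, so $\vol(V\setminus X)\le 8i/\phi$ (with the auxiliary bound that the total number of edges crossing the pruned set stays $O(i)$); and (b) $G[X]$ is a $\phi/6$-expander, because the successfully-routed unit-flow at termination, together with the fact that $G$ started as a $\phi$-expander and only $O(i)$ deleted edges plus $O(i)$ pruned-cut edges have been removed, is exactly a fractional certificate that no cut of $G[X]$ has sparsity below $\phi/6$. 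The running time is dominated by the local flow calls, which total $\Otil((i/\phi)\cdot h)=\Otil(m/\phi)$ since $i=O(\phi m)$ over the lifetime of the instance, and the restarts add only a constant factor.

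The main obstacle I anticipate is invariant (b): one must argue that a \emph{local} computation, which only ever inspects an $\Otil(i/\phi)$-sized region, nonetheless certifies expansion of the \emph{global} graph $G[X]$. The standard route is the contrapositive together with a ``the adversary is weak'' counting argument: if $G[X]$ had a cut of sparsity $<\phi/6$, then since $G$ was a $\phi$-expander and only $O(i)$ edges are missing, the source generated so far would overflow the small side of that cut, so the primitive's sink-saturation test would fail there and (a subset of) that cut would be pruned -- contradicting termination of the loop. Getting the constants to line up ($\Theta(1/\phi)$ source per deletion, hop bound $\Theta(\log m/\phi)$, yielding $\phi/6$ and $8i/\phi$) is the delicate bookkeeping in \cite{SaranurakW19}, and I would import their parameter choices and potential-function accounting rather than re-derive them from scratch.
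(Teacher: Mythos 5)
This lemma is not proved in the paper at all---it is imported verbatim from \cite{SaranurakW19}---and your sketch reconstructs precisely that reference's argument (charge $\Theta(1/\phi)$ source per deleted edge, run a hop- and congestion-bounded local unit-flow that either certifies expansion of $W[X]$ or yields a sparse cut to prune, with the stated charging giving $\vol_W(V\setminus X)\le 8i/\phi$ and total time $\Otil(m/\phi)$), so it matches the intended justification. The only point to be careful about, which you implicitly acknowledge by deferring to their accounting, is that the flow computation must be warm-started across deletions (continuing from the residual excess and labels) rather than recomputed from scratch, and that the guarantees are only meaningful while the number of deletions stays below roughly $\phi m$, exactly as in \cite{SaranurakW19}.
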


As mentioned, we denote the maintained set after removing the pruned part by\\ $X=\Prune(W_{multi},\phi)$. Since $W_{multi}$ is only used to turn $W$ into an unweighted graph while preserving all its properties (except number of edges), we refer in all proofs straight-forwardly to $W$ and say that $W$ is pruned, even when we really mean that $W_{multi}$ is pruned.

Now, as long as a large set $X$ exists, even as $\Hhat$ and therefore $W$ undergoes edge updates, we root an approximate ball $\Apxball(G,X, 4D,0.1)$ at the decremental set $X$. For every vertex that leaves this approximate ball, we check whether it is in $K$ still, and if so we remove it from $K$. 

\subsection{Analysis}
\label{subsec:analysisOfRobustCorePart2}

Throughout the analysis section, we let $\kappafinal$ denote the vertex capacity function $\kappa$ taken when the algorithm  terminates. The following is the key
lemma in our analysis.

\begin{lem}
\label{lem:total cap}
At any point of time, the total vertex capacity
in $\Hhat$ is 
\[
\kappa(\Vhat)\le\kappafinal(\Vhat)\le O\left(|\Kinit|\frac{D}{d}\log^{2}(n)\right).
\]
\end{lem}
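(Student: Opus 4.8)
The inequality $\kappa(\Vhat)\le\kappafinal(\Vhat)$ is immediate once we note that $\kappa$ is monotonically non-decreasing: the only lines that modify $\kappa$ are the doubling step (\Cref{enu:doubling step}) and the side-condition step (\Cref{lne:technicalSideCondition}), and both only raise capacities. So it suffices to bound $\kappafinal(\Vhat)$. By the initialization in \Cref{enu:init cap}, the starting value is $2|\Kinit|+(|\Vhat|-|\Kinit|)/\gamma_{\size}\le 2|\Kinit|+|\Vhat|\cdot\tfrac{4|\Kinit|}{|\Vhat|}=O(|\Kinit|)$, and the only further growth comes from doubling steps (each paired with one side-condition step). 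Hence
\[
\kappafinal(\Vhat)\;\le\;O(|\Kinit|)\;+\;(\#\text{doubling steps})\cdot(\text{max capacity added per doubling step}).
\]

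\textbf{Per doubling step the capacity grows by $O(|\Kinit|)$.} When \Cref{lne:sparseCutWhileLoop} fires, $\EmbedCore(\Hhat,K',\kappa)$ returns a vertex cut $(L,S,R)$ with $|L\cap K'|\le|R\cap K'|$ and $\kappa(S)\le 2|L\cap K'|$ (Cut case of \Cref{lem:embedwitness}). Since $L\cap K'$, $S\cap K'$, $R\cap K'$ partition $K'$, we have $|L\cap K'|\le|K'|/2\le|\Kinit|/2$, so $\kappa(S)\le|\Kinit|$; doubling on $S$ therefore adds at most $\kappa(S)\le|\Kinit|$, and afterwards every $w\in S$ has $\kappa(w)\le 2|\Kinit|$, so the side-condition step adds at most $2|\Kinit|$ more. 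Thus each doubling step increases $\kappa(\Vhat)$ by $O(|\Kinit|)$, and the whole lemma reduces to showing that the total number of doubling steps across the entire run of \Cref{alg:Core} is $O\!\big(\tfrac{D}{d}\log^{2}n\big)$.

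\textbf{Bounding the doubling steps via congestion balancing (the crux).} Following the framework of \cite{detDiSSSPAndSCC}, define a potential $\Pi$ equal to the cost of the minimum-cost embedding, into a fixed \emph{decremental surrogate} $\Ghat$ of $\Hhat$, of a uniform demand supported on pairs of $\Kinit$, restricted to paths of hop length at most $\ell\defeq\Theta\!\big(D\log(n)/d\big)$, where a unit of flow through $v$ is charged $\log_{2}\kappa(v)$. The surrogate $\Ghat$ is built once so that it is decremental while having exactly the same vertex cuts (and comparable short-path structure) as $\Hhat$ at every time; this is what keeps $\Pi$ monotone even though $\Hhat$ itself may \emph{gain} hyperedges when the underlying compressed graph $H$ gains cores. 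Then: (i) $\Pi$ never decreases, since $\Ghat$ only loses edges and $\kappa$ only increases; (ii) $\Pi=O\!\big(|\Kinit|\tfrac{D}{d}\log^{2}n\big)$ at all times — any cut returned by $\EmbedCore$ has $\kappa(S)\le|\Kinit|\le n$, so a vertex never lies in such a cut once its capacity exceeds $n$, whence $\kappa(v)=O(n)$ and $\log_{2}\kappa(v)=O(\log n)$ always; moreover any core $K'$ from $\CertifyCore(G,\Kinit,2D,\epswit/2)$ has $\diam_{G}(K')\le 32D\lg n$ (\Cref{lem:certifycore}), so by the heavy-path construction (\Cref{def:Hhat}) and the compressed-graph property every pair of $K'$-vertices is within $\ell$ hops of $\Hhat$ (cut a shortest $G$-path into weight-$\le d$ segments whose endpoints are $H$-adjacent, and replace heavy edges by heavy paths), giving an embedding of cost $O(|\Kinit|)\cdot\ell\cdot O(\log n)$; (iii) each doubling step raises $\Pi$ by $\Omegahat(|\Kinit|)$ — the cut has $\epswit|K'|\le|L\cap K'|\le|R\cap K'|$ with $|K'|=\Omega(|\Kinit|)$ (\Cref{lem:certifycore}), so an $\Omega(\epswit^{2})$-fraction of demand pairs is separated by $S$, every path routing such demand passes through $S$, and doubling raises $\log_{2}\kappa(v)$ by $1$ for each $v\in S$; the side-condition step of \Cref{lne:technicalSideCondition} (present only to meet the hypotheses of \Cref{clm:sideConditions}) never decreases $\Pi$. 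Dividing (ii) by (iii) bounds the number of doubling steps by $\widetilde{O}(D/d)$; tracking the polylogarithmic factors — the $\log n$ in the diameter bound of \Cref{lem:certifycore}, the $\log n$ from $\log_{2}\kappa(v)$, and the $1/\epswit$ loss — yields $O\!\big(\tfrac{D}{d}\log^{2}n\big)$, and multiplying by the $O(|\Kinit|)$ capacity per step (plus the $O(|\Kinit|)$ initial capacity) gives the claim.

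\textbf{Expected main obstacle.} The delicate part is step (iii): converting the \emph{sparsity} of the cut returned by $\EmbedCore$ into a quantitative \emph{increase} of the min-cost embedding requires an LP-duality / max-flow-min-cut argument showing that a cut with $\kappa(S)$ small relative to $|L\cap K'|$ must carry a large amount of demand — the heart of congestion balancing. The second subtlety is (i)–(ii): because $\Hhat$ inherits insertions from the lower-level compressed graph $H$, the naive potential on $\Hhat$ could drop, so one must pass to the decremental surrogate $\Ghat$ and verify it preserves both the cut structure and enough short-path structure to keep $\Pi$ bounded by $O(|\Kinit|\ell\log n)$. Both points are carried out in detail in the remainder of this subsection.
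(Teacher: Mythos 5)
Your overall framework is the same as the paper's (a decremental surrogate $\Ghat$ with the same vertex cuts as $\Hhat$, a min-cost-embedding potential with vertex cost $\log\kappa(v)$, an upper bound on the potential coming from the diameter guarantee of $\CertifyCore$, and a per-doubling increase of the potential), but your final accounting has a genuine quantitative gap. You reduce the lemma to ``(\#doubling steps) $\le O(\frac{D}{d}\log^2 n)$'' and multiply by a worst-case per-step capacity increase of $O(|\Kinit|)$. The problem is that the per-step potential gain you can actually prove is only $\Omega(|L\cap K'|)$, and $|L\cap K'|$ can be as small as $\epswit|K'|$ (your own version gives only an $\Omega(\epswit^2)$-fraction of separated demand). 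Hence the number of doubling steps is only bounded by roughly $\frac{D}{d}\log^2 n\,/\,\epswit$, and $1/\epswit=\log^2(n)/\phicmg=2^{\Theta(\log^{3/4}n)}\log^2 n$ is \emph{not} a polylogarithmic factor that can be ``tracked'' into $O(\frac{D}{d}\log^{2}n)$; with your decomposition you would only get $\kappafinal(\Vhat)=\Ohat(|\Kinit|\frac{D}{d})$ with a superpolylogarithmic loss, not the stated bound. The paper avoids this by never counting steps: in \Cref{lem:doubling effect} the capacity increase of a single doubling step is bounded by $6|L\cap K'|$ (not by $|\Kinit|$) while the potential increases by at least $|L\cap K'|/3$, so the ratio is a fixed constant per step regardless of how balanced the cut is, and summing over all steps immediately gives $\kappafinal(\Vhat)\le O(\Pi_{\max})=O(|\Kinit|\frac{D}{d}\log^2 n)$. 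To repair your argument you must couple the per-step capacity increase to the per-step potential increase in exactly this way; a step-count bound of $O(\frac{D}{d}\log^2 n)$ is neither proved nor, in general, true (the paper itself only derives $\Otil(\frac{D}{d}/\phicmg)$ many cut-returns, and that \emph{uses} the capacity bound being proven here).

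A secondary issue is your choice of potential: a uniform demand over \emph{all} pairs of $\Kinit$ cannot be upper-bounded by $O(|\Kinit|\cdot\ell\cdot\log n)$, because the diameter guarantee of \Cref{lem:certifycore} applies only to the certified core $K'$, and vertices of $\Kinit\setminus K'$ may be scattered (far from, or disconnected from, the rest in $\Ghat$), making your potential infinite exactly when you need it finite; you implicitly switch to $K'$-pairs in step (ii), which is inconsistent with the definition. The paper's \Cref{def:Pi general} sidesteps this by taking the witness to be a star on \emph{some} subset of $\Kinit$ of size at least $(1-\epswit/2)|\Kinit|$ with bounded $\Ghat$-diameter, and then, in the increase argument, shows $|L\cap K'\cap V(W')|\ge|L\cap K'|-(\epswit/2)|\Kinit|\ge|L\cap K'|/3$ so that the omitted fraction does not destroy the per-step gain. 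Your proposal needs both fixes -- the coupled per-step accounting and the large-subset (or $K'$-anchored) witness -- to yield the stated $O(|\Kinit|\frac{D}{d}\log^{2}n)$ bound.
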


The first inequality holds because $\kappa(\Vhat)$ can only increase
through time by \Cref{enu:doubling step} of \Cref{alg:Core}. We defer
the proof of the second inequality to the end of this section. However, we use this lemma before to establish correctness and update time. We also use throughout that $\kappa$ is a monotonically increasing function over time, which can be seen easily from the algorithm.

\paragraph{Correctness.} We now establish the correctness, i.e. that $K$ indeed forms a Robust Core as defined in \Cref{def:Core} and parameterized in \Cref{thm:Core}.

\begin{lemma}[Correctness] \label{cor:core scatter}
At any stage of \Cref{alg:Core}, the set $\Kinit$ and $K$ satisfy
\begin{enumerate}
\item \textbf{(Scattered):} for each vertex $v\in\Kinit\setminus K$,
we have $|\ball_{G}(v,2D)\cap\Kinit|\le(1-\scatter)\cdot|\Kinit|$ where recall that $\scatter=\Theta(\epswit) = \Omegatil(\phicmg)$.

\item \textbf{(Low stretch):} $\diam_{G}(K)\le\stretch_{core}\cdot D$ where $\stretch_{\core}=\Otil(\gamma/\phicmg^{3})$.
\end{enumerate}
\end{lemma}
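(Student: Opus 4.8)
The plan is to trace through Algorithm~\ref{alg:Core} and pin down exactly when a vertex can leave $K$: this happens only (i) in Line~\ref{lne:maintainXBall}, when a vertex leaves the maintained set $\Apxball(G,X,4D,0.1)$ while the inner loop of Line~\ref{lne:whileExpanderWitnessIsLarge} is running (so $|X|\ge|\Kinit|/2$ at that instant), or (ii) in the last line, which empties all of $K$ and is reached only when $\CertifyCore(G,\Kinit,2D,\epswit/2)$ reports \textbf{(Scattered)}. Moreover, all edge deletions are processed inside the inner loop of Line~\ref{lne:whileExpanderWitnessIsLarge}, where $\Apxball(G,X,4D,0.1)$ is being maintained; since every vertex leaving this set is immediately dropped from $K$ (and since at the start of each such loop $\Kinit\cap K\subseteq\ball_G(X,D)\subseteq\ball_G(X,4D)$, because $X\subseteq\Kinit$ and, at initialization, $\diam_G(\Kinit)\le D$), we carry the invariant $K\subseteq\Apxball(G,X,4D,0.1)\subseteq\ball_G(X,1.1\cdot 4D)=\ball_G(X,4.4D)$ at every stage at which a deletion is processed, using \Cref{prop:ApxBall} for the last inclusion.

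\textbf{Scattered property.} Fix $v\in\Kinit\setminus K$ and the moment it was removed. In case~(i), $v\notin\Apxball(G,X,4D,0.1)$ forces $\dist_G(v,x)>4D>2D$ for every $x\in X$ by \Cref{prop:ApxBall}, so $\ball_G(v,2D)\cap X=\emptyset$; since $X\subseteq V(W)\subseteq K'\subseteq\Kinit$ and $|X|\ge|\Kinit|/2$, we get $|\ball_G(v,2D)\cap\Kinit|\le|\Kinit\setminus X|\le|\Kinit|/2\le(1-\scatter)|\Kinit|$. In case~(ii), the \textbf{(Scattered)} branch of \Cref{lem:certifycore}, applied with $(K,d,\epsilon)=(\Kinit,2D,\epswit/2)$, yields $|\ball_G(v,2D)\cap\Kinit|\le(1-\epswit/4)|\Kinit|\le(1-\scatter)|\Kinit|$, using $\scatter\le\epswit/4$. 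Finally, as $G$ is decremental and $\Kinit$ is static, $|\ball_G(v,2D)\cap\Kinit|$ only shrinks afterwards, so the bound persists at all later stages. This proves the property with $\scatter=\Theta(\epswit)=\Omegatil(\phicmg)$ (recall $\epswit=\phicmg/\log^2 n$).

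\textbf{Low-stretch property.} At any stage $K\subseteq\ball_G(X,4.4D)$, so $\diam_G(K)\le\diam_G(X)+8.8D$ by the triangle inequality, and it remains to bound $\diam_G(X)$. Let $\cP$ be the embedding currently backing $W$; the algorithm propagates each deletion of an $\Hhat$-edge that breaks a $\cP$-path as a deletion of the corresponding $W$-edge, so at every stage each surviving edge of $W[X]$ is realized by an intact $\cP$-path in the \emph{current} $\Hhat$ of length at most $L:=O\!\big(\kappa(\Vhat)\log\kappa(\Vhat)/(|K'|\epswit^{2})\big)$ by \Cref{lem:embedwitness}. Using \Cref{lem:total cap} ($\kappa(\Vhat)=O(|\Kinit|\tfrac{D}{d}\log^{2}n)$), $|K'|\ge(1-\epswit/2)|\Kinit|\ge|\Kinit|/2$, $\tfrac{D}{d}=n^{o(1)}$, and $\epswit=\phicmg/\log^{2}n$, this gives $L=\Otil(\tfrac{D}{d}/\phicmg^{2})$. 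Since $\Prune$ keeps $W[X]$ an $\Omega(\phicmg)$-expander with $O(n\log n)$ edges throughout (\Cref{lem:prune}, \Cref{lem:embedwitness}), standard expander properties give $W[X]$ hop-diameter $O(\log n/\phicmg)$; concatenating an expander path with its embedding yields $\dist_{\Hhat}(x,y)\le L\cdot O(\log n/\phicmg)=\Otil(\tfrac{D}{d}/\phicmg^{3})$ for all $x,y\in X$. Finally, one hop in $\Hhat$ costs at most $d\gamma$ units of $G$-distance: a hop inside $H$ joins vertices at $G$-distance $\le d\gamma$ by the definition of a $(d,\gamma,\Delta)$-compressed graph, while a maximal segment traversing a heavy path $\Phat_e$ uses $\ceiling{w(e)/d}$ hops and corresponds to $G$-distance $\le w(e)\le d\ceiling{w(e)/d}$. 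Hence $\diam_G(X)\le d\gamma\cdot\Otil(\tfrac{D}{d}/\phicmg^{3})=\Otil(D\gamma/\phicmg^{3})$, so $\diam_G(K)=\Otil(D\gamma/\phicmg^{3})=\stretch_{\core}\cdot D$ with $\stretch_{\core}=\Otil(\gamma/\phicmg^{3})$, as required.

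\textbf{Main obstacle.} The scattered property and the arithmetic for $L$ and $\stretch_{\core}$ are routine. The delicate part is the low-stretch argument: one must argue that the witness $W[X]$ remains \emph{faithfully} embedded in the current $\Hhat$ at every stage (not merely when it was built), i.e.\ that deletions of $\Hhat$-edges that destroy $\cP$-paths are forwarded to $W_{multi}$ before $\Prune$ is consulted; and one must pass hop-distances cleanly through the three layers $W[X]\to\Hhat\to G$, carefully handling internal heavy-path vertices and the $o(|K'|)$ low-weighted-degree vertices of $W$ permitted by \Cref{lem:embedwitness} when invoking the expander hop-diameter bound. This bookkeeping, together with the (deferred) \Cref{lem:total cap} that underlies the bound on $L$, is where the real work concentrates.
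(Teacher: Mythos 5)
Your proposal is correct and follows essentially the same route as the paper's proof: the scattered property via the fact that removal happens only when $\dist_G(v,X)>4D$ with $|X|\ge|\Kinit|/2$ (plus the \textsc{CertifyCore} branch at termination), and the low-stretch property via the two-step bound $\diam_{\Hhat}(X)=\Otil\bigl(\tfrac{D}{d}/\phicmg^{3}\bigr)$ (embedding length $\times$ expander hop-diameter, using \Cref{lem:total cap}) followed by the $d\gamma$-per-hop conversion from $\Hhat$ to $G$ and the containment $K\subseteq\Apxball(G,X,4D,0.1)$. You are, if anything, slightly more explicit than the paper about the maintained invariant and about heavy-path hops, but the decomposition and the key lemmas invoked are identical.
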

\textbf{(Scattered):} Observe that every vertex $v$ in $\Kinit$ is originally in $K$. Further, a vertex $v$ can only be removed from $K$ in \Cref{lne:maintainXBall}. But this in turn only occurs if $v$ has its distance estimate from $X$ larger than $4D$. Thus, $\dist_G(v, X) > 2D$ (by the approximation guarantee of \Cref{def:Apxball}). It remains to observe that by Line \ref{lne:whileExpanderWitnessIsLarge} $X \subseteq K' \subseteq \Kinit$ contains at least half the vertices in $\Kinit$. This implies $|\ball_{G}(v,2D)\cap\Kinit|\le|\Kinit|/2<(1-\scatter)\cdot|\Kinit|$. Finally, observe that prior to termination of the algorithm, we have that the while-condition in \Cref{lne:whileKisbigCOre} was false, and therefore $\CertifyCore(G, \Kinit, 2D, \epswit/2)$ announced that the entire set $\Kinit$ is scattered (see \Cref{lem:certifycore}) by the choice of $\scatter = \Omega(\epswit) = \Omegatil(\phicmg)$. This allows us to subsequently set $K = \emptyset$ and return.

\textbf{(Low stretch):} We bound the diameter of $K$ in two steps: first we bound $\diam_{\Hhat}(X) = \Otil(\frac{D}{d}/\phicmg^{3})$, then we show that $\diam_G(X) = O(\gamma d \cdot \diam_{\Hhat}(X))$. Combined, this establishes the Low Stretch Property since we enforce that vertices that leave $\Apxball(G,X,4D,0.1)$ are removed from $K$, so $\diam_G(K) = O(\diam_G(X))$.

\underline{$\diam_{\Hhat}(X) = \Otil(\frac{D}{d}/\phicmg^{3})$:} We have that by \Cref{lem:embedwitness} for $\EmbedCore(\cdot)$ that the length $\len(\cP_W)$ of the embedding $\cP_W$ of $W$ is at most $O(\kappa(\Vhat)\log(\kappa(\Vhat))/(|K|\epswit^2))$. It is not hard to check that $\log(\kappa(\Vhat)) = O(\log(n))$ because we know by \Cref{lem:total cap} that $\kappa(\Vhat) \leq \kappafinal(\Vhat)$, and it is easy to see that $\kappafinal(\Vhat)$ is polynomial in $n$ because both $|\Kinit|$ and $D$ are polynomial in $n$. We have that 
$$\len(\cP_W) = \Otil(\kappa(\Vhat)/(|K|\epswit^2)).$$

Thus, any $u$-$v$ $P$ path in $W$ can be mapped to a corresponding $u$-$v$ path in $\cP_W$ of length $O(|P| \cdot \len(\cP_W))$. This implies that $\diam_{\Hhat}(X) \leq \diam(W) \cdot\len(\cP_W)$. We further have that $W_{multi}[X]$ forms an expander, and it is further well known that the diameter of an expander is upper bounded by $O(\log n)$ over its expansion, and we therefore have $\diam(W_{multi}) = \Otil(1/\phicmg)$. Also note that since $\diam(W_{multi})$ is derived from $W$ by copying edges, we have that the same statement is true for $W$. Combining these insights, we obtain
\[
\diam_{\Hhat}(X) = \Otil(1/\phicmg) \cdot O(\len(\cP_W))=\Otil\left(\frac{D}{d}/(\phicmg\epswit^{2})\right)
\]
where the last equality is by \Cref{lem:total cap} (recall $\kappa(\Vhat) \leq \kappafinal(\Vhat)$).
As $\epswit=\Omegahat(\phicmg)$ by \Cref{lem:embedwitness}, we have
$\diam_{\Hhat}(X)=\Otil(\frac{D}{d}/\phicmg^{3})$. 

\underline{$\diam_G(X) = O(\gamma d \cdot \diam_{\Hhat}(X))$:} For any $u,v\in X \subseteq \Kinit \subseteq V$, consider a $u$-$v$ shortest path $P$ in $\Hhat$. Observe that since $u,v$ are vertices in $G$, we have that $P$ is formed from (entire) heavy paths (corresponding to edges of weight $\geq d$ in $G$) and edges in $H$.

For each heavy path $P'$ on $P$, we have that it is of length at most $O(d)$ times the original path (recall, we round the weight of the edge in $G$ by $d$ and insert a path of the corresponding length). On the other hand, any edge $(u',v')$ in $H$ has $\dist_G(u',v') \leq \gamma d$ by definition. The latter factor subsumes the former and establishes our claim.\\

We also need to prove that the side conditions of $\EmbedCore(\cdot)$ hold throughout the execution of the algorithm. The proof is deferred to \Cref{sec:proofOfSideConditions}.

\begin{restatable}[Side-Conditions]{claim}{sideConditionClaim}\label{clm:sideConditions}
Whenever the algorithm invokes $\EmbedCore(\cdot)$, we have
\begin{enumerate}
    \item $\kappa(v)\ge2$ for all terminals $v\in K$, \label{prop:sideCondition1}
    \item $\kappa(v)\le\kappa(V)/2$. \label{prop:sideCondition2}
\end{enumerate}
\end{restatable}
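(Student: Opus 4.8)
The plan is a straightforward induction over the (finite) sequence of invocations of $\EmbedCore(\cdot)$ in \Cref{alg:Core}. The first thing I would record is that $\kappa$ is non-decreasing in time: the only lines that touch $\kappa$ are the doubling in \Cref{enu:doubling step} and the max-update in \Cref{lne:technicalSideCondition}, and both can only increase capacities. Condition~\ref{prop:sideCondition1} then follows with essentially no work: at initialization every vertex of $\Kinit$ is given capacity $2$ (\Cref{enu:init cap}), and the terminal set handed to $\EmbedCore$ is the core $K'$ returned by $\CertifyCore$, which is a subset of $\Kinit$; since capacities never decrease, $\kappa(v)\ge 2$ for every $v\in\Kinit$ — hence for every terminal — at all times. (Reading ``$v\in K$'' literally gives the same conclusion, since the maintained core $K$ is also a subset of $\Kinit$.)

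For condition~\ref{prop:sideCondition2} I would carry the invariant $\max_{v\in\Vhat}\kappa(v)\le\kappa(\Vhat)/2$. The base case is a short direct computation from the initial capacities ($\kappa(v)=2$ for $v\in\Kinit$ and $\kappa(v)=1/\gamma_{\size}=4|\Kinit|/|\Vhat|$ otherwise), using $\Kinit\subseteq\Vhat$ and the standing assumption $|\Vhat|\ge 2$. For the inductive step, observe that $\kappa$ changes only inside the inner while-loop, with exactly one ``doubling-plus-max-update'' per iteration; so between two consecutive $\EmbedCore$ calls $\kappa$ either is unchanged (and the invariant trivially persists, since $\kappa$ and $\Vhat$ are the same) or one such iteration occurs, and only the latter needs attention. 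Write $\kappa'$ for the updated capacities, $S$ for the returned cut set, $M=\max_{v\in S}2\kappa(v)$ for the largest doubled value, $w\in S$ a vertex achieving it, and $w'\ne w$ the vertex chosen in \Cref{lne:technicalSideCondition} (well-defined since $|\Vhat|\ge 2$; if $S=\emptyset$ the returned cut disconnects $\Hhat$, a degenerate case handled separately). Using the induction hypothesis $\kappa(u)\le\kappa(\Vhat)/2$ for all $u$, the key step is $\max_v\kappa'(v)\le\max\{M,\kappa(\Vhat)/2\}$: an untouched vertex keeps a value $\le\kappa(\Vhat)/2$, a doubled vertex of $S$ has value $\le M$, and $w'$ ends with $\max\{\kappa^{\mathrm{doubled}}(w'),M\}$, which is $\le M$ if $w'\in S$ and $\le\max\{\kappa(\Vhat)/2,M\}$ otherwise.

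It then remains to split into two cases. If $\max_v\kappa'(v)\le\kappa(\Vhat)/2$, then since $\kappa'\ge\kappa$ pointwise we have $\kappa'(\Vhat)\ge\kappa(\Vhat)\ge 2\max_v\kappa'(v)$ and are done. Otherwise $\max_v\kappa'(v)\le M$, and here the role of \Cref{lne:technicalSideCondition} becomes visible: it guarantees a second vertex $w'\ne w$ with $\kappa'(w')\ge M$, while $\kappa'(w)=2\kappa(w)=M$ (as $w\ne w'$ so $w$ is untouched by the max-update), hence $\kappa'(\Vhat)\ge\kappa'(w)+\kappa'(w')\ge 2M\ge 2\max_v\kappa'(v)$. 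This closes the induction, and since both conditions then hold at every $\EmbedCore$ invocation (each evaluation of the while-condition at \Cref{lne:sparseCutWhileLoop} and the call at \Cref{enu:embedwitness}), it proves the claim.

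I do not expect a serious obstacle here; this is a bookkeeping lemma. The only genuinely subtle point is recognizing that the max-update of \Cref{lne:technicalSideCondition} is precisely what keeps condition~\ref{prop:sideCondition2} alive — a single doubling step can push one vertex above half the total, and forcing a ``twin'' $w'$ with the same new maximum capacity is exactly what makes the post-step maximum witnessed twice, hence at most half the post-step total. A little care is needed because the algorithm allows $w'\in S$ and allows $w'$ to already have capacity exceeding $M$, but both possibilities are absorbed cleanly by applying the induction hypothesis to $\kappa(w')$. I would also sanity-check the handful of degenerate instances ($S=\emptyset$, or $|\Vhat|$ tiny) against the standing assumptions around \Cref{def:Hhat}. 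Note that, somewhat pleasantly, the cut guarantee $\kappa(S)\le 2|L\cap K'|$ from \Cref{lem:embedwitness} is not needed for this claim (it is used elsewhere, e.g.\ to bound $\kappa(\Vhat)$ globally in \Cref{lem:total cap}).
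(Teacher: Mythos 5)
Your proof is correct and follows essentially the same route as the paper's: condition~\ref{prop:sideCondition1} via the initialization in \Cref{enu:init cap} plus monotonicity of $\kappa$, and condition~\ref{prop:sideCondition2} by induction over the invocations of $\EmbedCore(\cdot)$, using that only $S\cup\{w'\}$ changes between calls and that the max-update in \Cref{lne:technicalSideCondition} forces a second vertex $w'\neq w$ with capacity at least the maximum doubled value, so the new maximum is witnessed twice and is thus at most half the total. Your write-up is merely a somewhat more explicit version of the paper's argument (spelling out the case split and the $\kappa'(w)+\kappa'(w')\ge 2M$ bound), so nothing further is needed.
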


\paragraph{Total Update Time.} As we have proven the correctness of the algorithm, it remains to analyze the total update time.

\begin{lem}
\label{lem:bound num phase}The total number of while-loop iterations starting in \Cref{lne:whileKisbigCOre} is at most
\[
\tilde{O}\left(\Delta\frac{\kappafinal(\Vhat)}{|\Kinit|}/\phicmg\right).
\]
\end{lem}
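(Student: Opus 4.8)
The statement to prove is Lemma~\ref{lem:bound num phase}: the outer while-loop of Algorithm~\ref{alg:Core} (the loop starting at Line~\ref{lne:whileKisbigCOre}) runs at most $\tilde O(\Delta \cdot \kappafinal(\Vhat)/|\Kinit| / \phicmg)$ times. I would bound this quantity by counting how much ``capacity'' the adversary must destroy per iteration. The key accounting objects are: (i) the total vertex capacity $\kappa(\Vhat)$, which by Lemma~\ref{lem:total cap} never exceeds $\kappafinal(\Vhat) = O(|\Kinit| \tfrac{D}{d}\log^2 n)$; and (ii) the amount of capacity removed from the witness graph $W_{multi}$ by adversarial edge deletions, which drives the inner pruning loop (Line~\ref{lne:whileExpanderWitnessIsLarge}) to terminate and hence triggers a new outer iteration.

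\textbf{Step 1: Charge each outer iteration to pruning exhaustion.} Fix one outer iteration. Inside it, after the $\EmbedCore$ sub-loop stabilizes, we have a witness $W_{multi}$ with total edge weight (number of multi-edges) $O(|\Kinit|\log|\Kinit| / \gamma_{\size}) = O(\kappa(\Vhat)\log n)$-ish — more precisely, $W$ has total edge weight $O(|K|\log|K|)$ by Lemma~\ref{lem:embedwitness}, and $W_{multi}$ blows this up by a $1/\gamma_{\size} = \Theta(|\Kinit|/|\Vhat|)$ factor, so $|E(W_{multi})| = O(|\Vhat|\log n)$. The inner loop at Line~\ref{lne:whileExpanderWitnessIsLarge} keeps running $\Prune(W_{multi},\phicmg)$ and maintaining $\Apxball(G,X,4D,0.1)$ as long as $|X| \geq |\Kinit|/2$. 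By the pruning guarantee (Lemma~\ref{lem:prune}), after $i$ updates to $W_{multi}$ we have $\vol_{W_{multi}}(V\setminus X) \leq 8i/\phicmg$; for $|X|$ to drop below $|\Kinit|/2$ we therefore need $i = \Omega(|\Kinit|\phicmg)$ updates to $W_{multi}$ (using that each pruned vertex has volume $\Omega(1)$, or more carefully that $\Omega(|\Kinit|)$ vertices of $W_{multi}$ must leave, each carrying weighted degree roughly $\Omega(1)$ — here the ``$o(|K|)$ vertices of weighted degree $\le 9/10$'' clause of Lemma~\ref{lem:embedwitness} is what lets us say most vertices have volume $\gtrsim 1$). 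So each outer iteration ``consumes'' $\Omegatil(|\Kinit|\phicmg)$ updates to $W_{multi}$.

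\textbf{Step 2: Relate updates to $W_{multi}$ back to updates to $G$.} An update to $W_{multi}$ happens when an edge of the embedding $\cP$ breaks, i.e.\ when a deletion in $G$ (forwarded to $\ball_G(\Kinit,32D\log n)$, hence to $\Hhat$) hits some path $P\in\cP$. A single edge deletion in $G$ affects at most $O(\Delta)$ edges of $H$ incident to each endpoint (by the all-time-degree bound $\Delta$ in Definition~\ref{def:compressed graph}) and hence at most $O(\Delta)$ hyperedges/edges of $\Hhat$; each such edge of $\Hhat$ lies on paths of $\cP$ with total $\val$ at most the vertex congestion $O(\log n)$ times... — actually the cleaner bound: the congestion of $\cP$ w.r.t.\ $\kappa$ is $O(\log|\Kinit|)$ per vertex, and after scaling to $W_{multi}$ each unit of $\val$ becomes $1/\gamma_{\size}$ multi-edges, so one edge of $\Hhat$ supports $\tilde O(\kappa(v)/\gamma_{\size}) = \tilde O(\kappa(v)|\Kinit|/|\Vhat|)$ multi-edges of $W_{multi}$. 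Summing the worst case, one deletion in $G$ destroys $\tilde O(\Delta \cdot \kappafinal(\Vhat)|\Kinit|/|\Vhat|)$... I need to be careful, but the shape is: total $W_{multi}$-updates over the whole run $\leq \tilde O(\Delta \cdot (\text{congestion}) \cdot |E(\Hhat)| \cdot (1/\gamma_{\size}))$, and since the number of $\Hhat$-edges ever is $\tilde O(|\Vhat|\Delta)$ and $1/\gamma_{\size} = \Theta(|\Kinit|/|\Vhat|)$, this is $\tilde O(\Delta^2 |\Kinit|)$ — but this is cruder than the target, so the right move is instead to charge against $\kappa$ directly.

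\textbf{Step 3 (the actual argument, and the main obstacle): charge to the increase in $\kappa$.} The cleaner route, matching the target bound, is: the inner $\EmbedCore$ doubling loop only ever \emph{increases} $\kappa$, and across all outer iterations $\kappa$ is monotone with final value $\kappafinal(\Vhat)$. When an outer iteration ends because $|X|$ dropped, $\Omega(|\Kinit|\phicmg)$ (multi-)edges of $W_{multi}$ were deleted, i.e.\ paths of $\cP$ with total $\val$ at least $\Omega(|\Kinit|\phicmg \gamma_{\size}) = \Omega(\phicmg |\Vhat|)$... hmm. I think the honest plan is: \emph{each outer iteration except possibly the last forces $\Omega(|\Kinit|\phicmg/\Delta)$ net ``damage'' to $\Hhat$ (edge deletions), and total edge deletions to $\Hhat$ over the whole algorithm is $\tilde O(\Delta \cdot |\Vhat|) = \tilde O(\Delta \cdot \kappafinal(\Vhat))$ since $|\Vhat| = O(|\Kinit| D/d)$ vertices each of all-time degree $O(\Delta)$ plus $O((D/d)\cdot m)$ heavy-path edges which is also $\tilde O(\Delta\,\kappafinal(\Vhat))$}, giving $\tilde O(\Delta^2 \kappafinal(\Vhat)/(|\Kinit|\phicmg))$ — off by a $\Delta$. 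So the sharp bound must exploit that the pruning budget scales with the \emph{current} $|E(W_{multi})|$, which is $\tilde O(\kappa(\Vhat))$ not $\tilde O(|\Vhat|)$, together with the fact that total capacity summed appropriately is $\kappafinal(\Vhat)$; the precise bookkeeping — tracking that a deletion in $\Hhat$ kills at most $\tilde O(\kappa(v))$ edge-weight of $W$ at an endpoint $v$, that $\sum$ over the run of these is $\tilde O(\Delta \kappafinal(\Vhat))$ by monotonicity of $\kappa$, and that each outer iteration needs $\Omegatil(|\Kinit|\phicmg)$ of this budget — is where all the care goes. That combinatorial charging, reconciling the embedding congestion, the $W\to W_{multi}$ scaling by $1/\gamma_{\size}$, the all-time degree $\Delta$ of both $\Hhat$ and $W_{multi}$, and the pruning volume bound, is the main obstacle; everything else (correctness was already handled, Lemma~\ref{lem:total cap} is cited) is bookkeeping around it.
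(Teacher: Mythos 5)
There is a genuine gap: the bound you yourself identify as the crux --- that the total witness weight ever deleted over the whole run is $\tilde O(\Delta\,\kappafinal(\Vhat))$ --- is never actually proved; you flag it as ``the main obstacle'' and stop, and the accountings you do carry out are off. In Step 1 your per-iteration bound is stated in the wrong units: you claim $\Omega(|\Kinit|\phicmg)$ \emph{multi-edge} updates to $W_{multi}$ are needed, arguing each pruned vertex has volume $\Omega(1)$ there, but a vertex of weighted degree $9/10$ in $W$ has volume only $0.9/\gamma_{\size}$ in $W_{multi}$, so the correct count is $\Omega(\phicmg|\Kinit|/\gamma_{\size})$ multi-edges, i.e.\ $\Omega(\phicmg|\Kinit|)$ of \emph{$W$-weight}; the accounting must be kept in $W$-weight throughout (this is how the paper phrases it, deliberately treating the pruning of $W_{multi}$ as pruning of $W$). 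In Step 2 and the first half of Step 3 you instead charge iterations to the number of edge deletions in $\Hhat$ and pay the congestion separately, which double-counts $\Delta$ and, as you note, only yields $\tilde O(\Delta^{2}\kappafinal(\Vhat)/(|\Kinit|\phicmg))$.

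The missing piece is in fact short, and it is exactly the charging you name in your last sentence. Per iteration: by \Cref{lem:prune}, if only $\kappa_{del}<\phicmg|\Kinit|/20$ of $W$-weight were deleted, then $\vol_{W}(V(W)\setminus X)\le 8\kappa_{del}/\phicmg\le|\Kinit|/4$; since all but $o(|K'|)$ vertices of $W$ have weighted degree $\ge 9/10$ (\Cref{lem:embedwitness}) and $|K'|\ge(1-\epswit/2)|\Kinit|$, at most $|\Kinit|/3$ of those vertices can be pruned, so $|X|\ge|\Kinit|/2$ and the inner loop at \Cref{lne:whileExpanderWitnessIsLarge} could not have ended --- hence each outer iteration consumes $\ge\phicmg|\Kinit|/20$ deleted $W$-weight. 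Globally: whenever an incidence edge $(v,e)$ of $\Hhat_{bip}$ is deleted, the total $\val$ of embedded paths through it is at most the per-vertex congestion $O(\kappa(v)\log n)\le O(\kappafinal(v)\log n)$ by monotonicity of $\kappa$, a deleted $(v,e)$ never reappears in any later witness, and $v$ has at most $\Delta+3$ incident edges over all versions of $\Hhat$ ($\Delta$ from $H$, $3$ from $G$); summing over $v$ gives total deleted $W$-weight $O(\Delta\,\kappafinal(\Vhat)\log n)$, with no detour through $|\Vhat|$, $\gamma_{\size}$, or the number of $\Hhat$-deletions. Dividing the global budget by the per-iteration cost gives $\tilde O\bigl(\Delta\,\kappafinal(\Vhat)/(|\Kinit|\phicmg)\bigr)$, which is the lemma. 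So your final sketch is the right (and the paper's) approach, but as submitted the proof of its key step is absent and the worked-out alternatives do not close the gap.
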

\begin{proof}
First, we observe that the total weight of edges that are ever deleted from any of the witness graphs $W$ is at most $O(\Delta\kappafinal(\Vhat)\log(n))$. To see this, recall first that the weight of an edge $(u,v)$ in a graph $W$ (associated with embedding $\cP$) is equal to $\sum_{P \in P_{uv}} \val(P)$ where $P_{uv}$ is the set of $u$-$v$ paths in $\cP$. Now observe that whenever an edge $(v,e)\in E(\Hhat_{bip})$ of the incidence graph $\Hhat_{bip}$ of $\Hhat$ is deleted where $v\in\Vhat$ and $e\in E(H)$, the total value of the paths $P_{ve}$ containing the edge $(v,e)$ is at most $O(\kappa(v)\log(n))=O(\kappafinal(v)\log(n))$ by the guarantee on vertex congestion of $\EmbedCore(\cdot)$ from \Cref{lem:embedwitness}. Further such an edge $(v,e)$ once deleted does not occur in any future witness graph $W$. But there are at most $\Delta+3$ edges incident to $v$ in all versions of $\Hhat$ ($\Delta$ from $H$, $3$ from $G$). But this bounds the total weight ever deleted from all graphs $W$ by $O(\Delta\kappafinal(\Vhat)\log(n))$.

On the other hand, we claim that during a while-loop iteration, at least $\kappa_{del}=\phicmg|\Kinit|/20$ weight is deleted from $W$. Assume for the sake of contradiction that this is not true. Observe first that we build $W$ to initially have $|K'|-o(|K'|) \ge |\Kinit|-o(|\Kinit|)$ vertices with weighted degree
at least $9/10$ (see the while-loop condition in \Cref{lne:whileKisbigCOre} and the guarantees on $\EmbedCore(\cdot)$ from \Cref{lem:embedwitness}). But deleting $\kappa_{del}$ from $W$ causes $\Prune(W_{multi}, \phicmg)$ to ensure that set $X$ is such that $\vol_W(V(W) \setminus X) \leq 8\kappa_{del}/\phicmg = |\Kinit|/4$. This in turn implies that at most $\frac{10}{9} \cdot |\Kinit|/4 \leq |\Kinit|/3$ vertices of degree at least $9/10$ are in $V(W) \setminus X$. Therefore, $|X| \geq |\Kinit| - o(\Kinit) - |\Kinit|/3 \geq |\Kinit|/2$. But this contradicts that the while-loop iteration is over since the condition of the while-loop in \Cref{lne:whileExpanderWitnessIsLarge} is still satisfied.

By using the second claim to charge the sum from the first claim, we establish the lemma.
\end{proof}

\begin{lem}
\label{cor:bound num embed}The total number of times $\EmbedCore$
is called is at most $\tilde{O}(\Delta\frac{\kappafinal(\Vhat)}{|\Kinit|}/\phicmg)$.
\end{lem}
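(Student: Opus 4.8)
Here is the plan. The number of calls to $\EmbedCore$ decomposes naturally according to which of the two outcomes in \Cref{lem:embedwitness} occurs, so I would first count the two types separately. Fix one iteration of the outer while-loop on \Cref{lne:whileKisbigCOre}. Inside it, $\EmbedCore(\Hhat,K',\kappa)$ is evaluated once for each test of the inner while-condition on \Cref{lne:sparseCutWhileLoop}: every such evaluation except the last returns a vertex cut and triggers exactly one doubling step (\Cref{enu:doubling step}, together with the side-condition patch on \Cref{lne:technicalSideCondition}), while the last evaluation returns a witness and exits the inner loop; then $\EmbedCore$ is invoked one more time on \Cref{enu:embedwitness}. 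Hence, writing $d_{\mathrm{out}}$ for the total number of outer-loop iterations and $d_{\mathrm{cut}}$ for the total number of doubling steps performed over the whole run of \Cref{alg:Core}, the number of calls to $\EmbedCore$ is at most $d_{\mathrm{cut}}+2d_{\mathrm{out}}$.

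The term $2d_{\mathrm{out}}$ is already handled: \Cref{lem:bound num phase} gives $d_{\mathrm{out}}=\Otil\!\left(\Delta\,\kappafinal(\Vhat)/(|\Kinit|\phicmg)\right)$, which is within the claimed bound. So everything reduces to bounding $d_{\mathrm{cut}}$. Two facts are relevant here. First, $\kappa$ is monotonically non-decreasing throughout the algorithm, starts at $\kappa(\Vhat)=O(|\Kinit|)$ by \Cref{enu:init cap}, and never exceeds $\kappafinal(\Vhat)$ (\Cref{lem:total cap}). Second, each doubling step changes $\kappa(\Vhat)$ by only a bounded amount: if $\EmbedCore$ returns a cut $(L,S,R)$ then, by \Cref{lem:embedwitness}, $\kappa(S)\le 2|L\cap K'|\le|K'|\le|\Kinit|$ (using $|L\cap K'|\le|R\cap K'|$), so \Cref{enu:doubling step} raises $\kappa(\Vhat)$ by exactly $\kappa(S)\le|\Kinit|$, and the patch on \Cref{lne:technicalSideCondition} raises it by at most $\max_{w\in S}\kappa(w)\le\kappa(S)\le 2|\Kinit|$ after the doubling. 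By itself this does not yet bound $d_{\mathrm{cut}}$ — a single doubling step could, a priori, increase $\kappa(\Vhat)$ by very little. The actual bound on $d_{\mathrm{cut}}$ is exactly the content of the congestion-balancing analysis carried out in the proof of \Cref{lem:total cap} (\Cref{subsec:analysisOfRobustCorePart2}): the total number of doubling steps over the entire run of \Cref{alg:Core} is $\Otil(D/d)$, which by \Cref{lem:total cap} is $\Otil\!\left(\kappafinal(\Vhat)/|\Kinit|\right)$ and hence at most $\Otil\!\left(\Delta\,\kappafinal(\Vhat)/(|\Kinit|\phicmg)\right)$ since $\Delta\ge 1\ge\phicmg$. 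Adding the two contributions gives the lemma.

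The only non-routine input is the bound $d_{\mathrm{cut}}=\Otil(D/d)$ on the number of doubling steps; the rest is bookkeeping about the control flow of \Cref{alg:Core}. I would therefore organize the write-up so that \Cref{lem:total cap} — and with it this doubling-step count — is proved first, and then invoked here. The heart of that argument, sketched in \Cref{subsec:overview-capacitated}, is a min-cost-embedding potential whose cost of a vertex $v$ is roughly $\log\kappa(v)$: each doubling step raises the potential by $\Omegahat(|\Kinit|)$, while the potential is always $\Ohat(|\Kinit|\cdot\diam_{\Hhat}(K'))=\Ohat(|\Kinit|\,D/d)$; the subtlety — and the main obstacle — is that $\Hhat$ is only decremental up to low-impact insertions, so the potential must be taken on a decremental surrogate of $\Hhat$ that has the same vertex cuts, which is precisely what \Cref{subsec:analysisOfRobustCorePart2} sets up.
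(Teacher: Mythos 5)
Your bookkeeping of the control flow matches the paper: you split the calls to $\EmbedCore$ into cut-returning (doubling) calls and witness-returning calls, and you bound the latter by the number of while-loop iterations via \Cref{lem:bound num phase}, exactly as the paper does. The divergence — and the gap — is in how you bound the number of doubling calls $d_{\mathrm{cut}}$. The paper uses precisely the direct charging scheme you set aside: when $\EmbedCore$ returns a cut $(L,S,R)$, it charges the doubling step against a capacity increase of $\kappa(S)\ge|L\cap K'|\ge\epswit|\Kinit|$ (citing the cut guarantee of \Cref{lem:embedwitness}; the extra increase on \Cref{lne:technicalSideCondition} only helps), and since $\kappa(\Vhat)$ is monotone and never exceeds $\kappafinal(\Vhat)$, the number of cut-returning calls is $O\bigl(\kappafinal(\Vhat)/(\epswit|\Kinit|)\bigr)=\Otil\bigl(\kappafinal(\Vhat)/(\phicmg|\Kinit|)\bigr)$. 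This is exactly what produces a bound in terms of $\kappafinal(\Vhat)$, which is the quantity the lemma is stated in.

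Your substitute argument does not establish that statement. You bound $d_{\mathrm{cut}}$ by the congestion-balancing potential (which, incidentally, gives $\Otil((D/d)/\phicmg)$ rather than $\Otil(D/d)$, since each doubling raises the potential only by $\Omega(\epswit|\Kinit|)$), and then convert to $\Otil(\kappafinal(\Vhat)/|\Kinit|)$ ``by \Cref{lem:total cap}.'' But \Cref{lem:total cap} says $\kappafinal(\Vhat)\le O\bigl(|\Kinit|\tfrac{D}{d}\log^{2}n\bigr)$, i.e.\ $\kappafinal(\Vhat)/|\Kinit|=\Otil(D/d)$; you need the reverse inequality $D/d=\Otil(\kappafinal(\Vhat)/|\Kinit|)$, which is false in general — if few or no sparse cuts are ever found, $\kappafinal(\Vhat)=\Theta(|\Kinit|)$ while $D/d$ can be arbitrarily large. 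So your bound on $d_{\mathrm{cut}}$, in terms of $D/d$, does not imply the lemma's bound in terms of $\kappafinal(\Vhat)$. The repair is to drop the potential-based count here (it belongs to the proof of \Cref{lem:total cap}, not of this lemma) and instead lower-bound the per-call increase of $\kappa(\Vhat)$ by $\Omega(\epswit|\Kinit|)$ and charge it against $\kappafinal(\Vhat)$, as the paper does.
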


\begin{proof}
Every time $\EmbedCore$ returns a vertex cut $(L,S,R)$, we double
the capacity $\kappa(v)$ of every vertex $v\in S$. So the total capacity is increased
by $\kappa(S)\ge|L\cap \Kinit|\ge\epswit|\Kinit|$
by \Cref{lem:embedwitness}. Further, in \Cref{lne:technicalSideCondition}, we only further increase $\kappa$. But since $\kappafinal(\Vhat)$ is the total final capacity, we have that there can be at most $O(\frac{\kappafinal(\Vhat)}{\epswit|\Kinit|})=\Otil(\frac{\kappafinal(\Vhat)}{\phicmg|\Kinit|})$ times that $\EmbedCore(\cdot)$ returns a vertex cut.

The number of times that $\EmbedCore$ returns an embedding is at most the number of while-loop iterations which is $\tilde{O}(\Delta\frac{\kappafinal(\Vhat)}{|\Kinit|}/\phicmg)$ by \Cref{lem:bound num phase}. By summing the number of times from the two cases, the lemma holds. 
\end{proof}

\begin{lem}
\label{lem:core runtime}The total running time of \Cref{alg:Core}
is 
\[
\Otil\left(T_{\Apxball}(G,\Kinit,32D\log n,0.1)\Delta^{2}\left(\frac{D}{d}\right)^{3}/\phicmg^{2}\right).
\]
\end{lem}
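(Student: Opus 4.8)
The plan is to account separately for the running time of each subprocedure invoked by the algorithm, multiply by the number of times it is invoked, and observe that the $\Apxball$ cost dominates. First I would record the per-call costs. By Lemma \ref{lem:certifycore}, each call to $\CertifyCore(G,\Kinit,2D,\epswit/2)$ costs $O(\deg_G(\ball_G(\Kinit, 32 D \lg n))\log n) = \Otil(T_{\Apxball}(G,\Kinit,32D\log n,0.1))$, using the standing assumption from Remark \ref{rem:apx-ball-monotonic} that $T_{\Apxball}$ dominates $|\ball_G(\cdot)|$ (and the degree bound). By Lemma \ref{lem:embedwitness}, each call to $\EmbedCore(\Hhat,K',\kappa)$ costs $\Otil(|\Hhat|\tfrac{\kappa(\Vhat)}{|K'|\phicmg} + z\kappa(\Vhat)/\phicmg)$; here $|\Hhat| = \Otil(|B^{init}|\cdot (D/d))$ since $\Hhat$ augments $H[B^{init}]$ (which has all-time degree $\Delta$, so $\Otil(\Delta |B^{init}|)$ size) with heavy paths of length $O(D/d)$, and $z = \poly(n)$ since all capacities are polynomially bounded (Lemma \ref{lem:total cap}). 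Using $|K'| \ge (1-\epswit)|\Kinit| = \Omega(|\Kinit|)$ and $\kappa(\Vhat) \le \kappafinal(\Vhat) = \Otil(|\Kinit|(D/d))$ from Lemma \ref{lem:total cap}, each $\EmbedCore$ call costs $\Otil\big(T_{\Apxball}(G,\Kinit,32D\log n,0.1)\,\Delta (D/d)^2/\phicmg\big)$, again absorbing $|B^{init}|$ into the $\Apxball$ term. The pruning subroutine $\Prune(W_{multi},\phicmg)$ has total update time $\Otil(m_{W_{multi}}/\phicmg)$ over a whole phase (Lemma \ref{lem:prune}), where $m_{W_{multi}}$ is the number of edges, which is $O(|\Kinit|\log|\Kinit|/\gamma_{\size})$ scaled... actually the total edge weight of $W$ is $O(|K'|\log|K'|)$ and $W_{multi}$ multiplies by $1/\gamma_{\size} = \Theta(|\Vhat|/|\Kinit|)$, so $m_{W_{multi}} = \Otil(|\Vhat|) = \Otil(T_{\Apxball})$; this is again dominated.

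Next I would multiply by the number of invocations. By Lemma \ref{cor:bound num embed}, $\EmbedCore$ is called $\Otil(\Delta \tfrac{\kappafinal(\Vhat)}{|\Kinit|}/\phicmg) = \Otil(\Delta (D/d)/\phicmg)$ times (using Lemma \ref{lem:total cap} for $\kappafinal(\Vhat)/|\Kinit| = \Otil(D/d)$). By Lemma \ref{lem:bound num phase}, the outer while-loop (and hence $\CertifyCore$ and $\Prune$) runs $\Otil(\Delta(D/d)/\phicmg)$ times as well. Therefore the total time spent in $\EmbedCore$ across the whole algorithm is $\Otil(\Delta(D/d)/\phicmg) \cdot \Otil(T_{\Apxball}\,\Delta(D/d)^2/\phicmg) = \Otil(T_{\Apxball}\,\Delta^2 (D/d)^3/\phicmg^2)$, which matches the claimed bound. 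The $\CertifyCore$ contribution is $\Otil(\Delta(D/d)/\phicmg)\cdot\Otil(T_{\Apxball}) $, which is smaller, and likewise the total pruning time is smaller.

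Finally I would bound the time spent maintaining the approximate balls in Line \ref{lne:maintainXBall}. Each phase maintains one instance $\Apxball(G,X,4D,0.1)$ rooted at a decremental set $X \subseteq K' \subseteq \Kinit$, inside the graph $G$; by monotonicity of $T_{\Apxball}$ (Remark \ref{rem:apx-ball-monotonic}, since $X \subseteq \Kinit$ and $4D \le 32 D \log n$) this costs at most $T_{\Apxball}(G,\Kinit,32D\log n,0.1)$ per phase, for a total of $\Otil(\Delta(D/d)/\phicmg) \cdot T_{\Apxball}(G,\Kinit,32D\log n,0.1)$ over all phases, again dominated. One subtlety to handle carefully: the $\Apxball$ instance in Line \ref{lne:maintainXBall} must absorb not only the edge deletions coming from the adversary but also the changes to $X$ caused by pruning and to $\Hhat$/$W$; since $X$ is decremental within a phase these are all legitimate decremental updates to the source set, so the $T_{\Apxball}$ accounting still applies. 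Summing the three contributions, the $\EmbedCore$ term dominates and gives the stated total of $\Otil\!\big(T_{\Apxball}(G,\Kinit,32D\log n,0.1)\,\Delta^{2}(D/d)^{3}/\phicmg^{2}\big)$. The main obstacle is getting the $|\Hhat|$ and $z$ factors inside the $\EmbedCore$ cost to collapse cleanly into the $T_{\Apxball}$ term and the $\Delta(D/d)^2/\phicmg$ factor — i.e.\ verifying that every ``other'' cost ($\CertifyCore$, $\Prune$, ball maintenance, graph construction) is genuinely subsumed, rather than contributing an extra factor.
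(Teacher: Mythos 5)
Your accounting is essentially the paper's own proof: per-call costs from \Cref{lem:certifycore}, \Cref{lem:embedwitness} and \Cref{lem:prune}, the iteration and call counts from \Cref{lem:bound num phase} and \Cref{cor:bound num embed}, the bounds $\kappafinal(\Vhat)/|\Kinit|=\Otil(D/d)$ from \Cref{lem:total cap} and $|\Hhat|=\Otil(\Delta\frac{D}{d}|\ball_G(\Kinit,32D\log n)|)$, monotonicity of $T_{\Apxball}$ from \Cref{rem:apx-ball-monotonic}, and the observation that the total $\EmbedCore$ cost dominates and gives the $\Delta^2(D/d)^3/\phicmg^2$ factor. So the route is the same as the paper's.

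The one step that does not hold up as written is your treatment of the integrality parameter $z$ in the $\EmbedCore$ cost. You dismiss the term $z\kappa(\Vhat)/\phicmg$ by asserting ``$z=\poly(n)$ since all capacities are polynomially bounded (\Cref{lem:total cap})''. This conflates two different quantities: $z$ is the \emph{denominator} of the capacities (the smallest integer with $\kappa:V\rightarrow\frac{1}{z}\mathbb{Z}_{\ge0}$), not their magnitude, and a generic $\poly(n)$ bound on $z$ would not be absorbed into the claimed running time, which can be far smaller than $\poly(n)$ when $|\ball_G(\Kinit,32D\log n)|$ (and hence $T_{\Apxball}$) is small. The correct observation, which the paper uses, is that by the initialization in \Cref{enu:init cap} and the doubling steps, $\kappa$ remains $1/\gamma_{\size}$-integral with $\gamma_{\size}=\frac{1}{4}|\Vhat|/|\Kinit|$, so $z\kappa(\Vhat)/\phicmg=\Otil\bigl(|\Vhat|\,\kappafinal(\Vhat)/(|\Kinit|\phicmg)\bigr)$, which is then folded into the $|\Hhat|\kappafinal(\Vhat)/(|\Kinit|\phicmg)$ term via $|\Vhat|=O(|\Hhat|)$. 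With that fix your per-call bound $\Otil\bigl(T_{\Apxball}(G,\Kinit,32D\log n,0.1)\,\Delta(D/d)^2/\phicmg\bigr)$ is correct and the rest of your argument goes through. A minor cosmetic point: $|\Vhat|$ is only $\Otil(T_{\Apxball}\cdot D/d)$, not $\Otil(T_{\Apxball})$, because of the internal heavy-path vertices; this does not affect your conclusion since the total pruning time is still dominated by the $\EmbedCore$ term.
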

\underline{Initialization:} It is straight-forward to see that the initialization (i.e. the first two lines in \Cref{alg:Core}) can be executed in  $O(|\Hhat|)$ by using an invocation of Dijkstra and some basic operations. 

\underline{A Single Iteration of the While-Loop starting in \Cref{lne:whileKisbigCOre} (Excluding $\EmbedCore(\cdot)$):} The \\ while-loop condition (and computing $K'$) in  \Cref{lne:whileKisbigCOre}, is checked using $\CertifyCore(\cdot)$ which takes $\Otil\left(\left|\ball_{G}(\Kinit,32D\log n)\right|\right)$ time by \Cref{lem:certifycore}. 

The time spent on $\Prune(W_{multi},\phicmg)$ is bound by \Cref{lem:prune} to be $O(|E(W_{multi})|/\phicmg)$. We then use the fact that $W_{multi}$ has at most $O(\frac{1}{\gamma_{\size}}|\Kinit|\log|\Kinit|)$
edges because it is derived from $W$ by making $w(e)/ \gamma_{\size}$ copies of each edge $e$ in $W$ where we established that the total weight of all edges in $W$ is $O(|\Kinit|\log|\Kinit|)$ by \Cref{lem:embedwitness}. As $\gamma_{\size}=|\Vhat|/|\Kinit|$, we thus have that $|E(W_{multi})|=\Otil(|\Vhat|)$ and therefore the time spent during a while-loop iteration on pruning is at most  $O(|E(W_{multi})|/\phicmg)=\Otil(|\Vhat|/\phicmg)$. 

Finally, we have to account for the time required to maintain $\Apxball(G,X,4D,0.1)$ which is $T_{\Apxball}(G,X,4D,0.1) \leq T_{\Apxball}(G,\Kinit,32D\log n,0.1)$, where the inequality follows from the monotonicity of $\Apxball$ in Remark \ref{rem:apx-ball-monotonic}. 

All other operations during the while-loop have time subsumed by the former procedures (or the invocations of $\EmbedCore(\cdot)$) giving total time
\begin{align}
 & \Otil\left(\left|\ball_{G}(\Kinit,32D\log n)\right|+\Otil(|\Vhat|/\phicmg)+T_{\Apxball}(G,X,4D,0.1)\right)\nonumber \\
 & =\Otil(T_{\Apxball}(G,\Kinit,32D\log n,0.1)+|\Vhat|/\phicmg),\label{eq:core within phase}
\end{align}
where we used that $T_{\Apxball}(G,\Kinit,32D\log n,0.1) = \Omega(|\ball_G(\Kinit,32D\log n)|)$, as discussed in Remark \ref{rem:apx-ball-monotonic}.

\underline{All Iterations of the While-Loop starting in \Cref{lne:whileKisbigCOre}  (Excluding $\EmbedCore(\cdot)$):} 
 As there are at most
$\Otil(\Delta\frac{\kappafinal(\Vhat)}{|\Kinit|}/\phicmg)$ while-loop iterations by \Cref{lem:bound num phase}, the total time spent (excluding time spent on $\EmbedCore$)
is 
\[\Otil(T_{\Apxball}(G,\Kinit,32D\log n,0.1)\Delta\frac{\kappafinal(\Vhat)}{|\Kinit|}/\phicmg+|\Hhat|\Delta\frac{\kappafinal(\Vhat)}{|\Kinit|}/\phicmg^{2})
\]
where we used $|\Vhat|\le|\Hhat|$ in the last term.

\underline{Time spent on $\EmbedCore(\cdot)$:} It is not hard to see that $\kappa$ is a $1/\gamma_{size}$-integral function, with $\gamma_{size} = \frac{1}{4}|\Vhat|/|\Kinit|$. Therefore, each call to $\EmbedCore(\cdot)$
in \Cref{enu:embedwitness} takes time
\[
\Otil(|\Hhat|\frac{\kappa(\Vhat)}{|\Kinit|\phicmg} + \gamma_{size} \cdot \kappa(\Vhat)/\phicmg)=\Otil(|\Hhat|\frac{\kappafinal(\Vhat)}{|\Kinit|\phicmg} + |\Vhat| \frac{\kappafinal(\Vhat)}{|\Kinit|\phicmg})
\]
because $\kappa(\Vhat)\le\kappafinal(\Vhat)$. We can assume w.l.o.g. that $|\Vhat| = O(|\Hhat|)$ since
the only way this could be false is if half the vertices of $\Vhat$ were isolated (i.e. had no incident edges), in which case a sparse cut in $\Hhat$ could trivially be found by computing connected components in $|\Hhat|$ time 
We can thus simplify the above bound to $\Otil(|\Hhat|\frac{\kappafinal(\Vhat)}{|\Kinit|\phicmg})$. Finally, we note that by \Cref{cor:bound num embed}, there are at most $\Otil(\Delta\frac{\kappafinal(\Vhat)}{|\Kinit|}/\phicmg)$
calls to $\EmbedCore(\cdot)$. Therefore, the total time spent on $\EmbedCore(\cdot)$ is at most
\[
\Otil\left(|\Hhat|\Delta\left(\frac{\kappafinal(\Vhat)}{|\Kinit|}\right)^{2}/\phicmg^{2}\right).\]

\underline{Combining Calculations:} By combining the two bounds above, the total time including the time spent on $\EmbedCore$ is at most
$$\Otil(T_{\Apxball}(G,\Kinit,32D\log n,0.1)\Delta\frac{\kappafinal(\Vhat)}{|\Kinit|}/\phicmg+|\Hhat|\Delta(\frac{\kappafinal(\Vhat)}{|\Kinit|})^{2}/\phicmg^{2}).$$
To simplify this expression, 
we have
 $\frac{\kappafinal(\Vhat)}{|\Kinit|}=O(\frac{D}{d}\log^{2}(n))$
by \Cref{lem:total cap} and also $|\Hhat|=\Otil(\Delta\frac{D}{d}\left|\ball_{G}(\Kinit,32D\log n)\right|)$ which can be verified by checking \Cref{def:Hhat} of $\Hhat$ from $H$ and $G$ (where each edge in $G$ might result in $\Otil(D/d)$ new heavy-path edges in $\Hhat$ and where we have constant degree by assumption). Therefore, the expression can be bounded by 
\[
\Otil\left(T_{\Apxball}(G,\Kinit,32D\log n,0.1)\Delta^{2}\left(\frac{D}{d}\right)^{3}/\phicmg^{2}\right).
\]
as claimed. (Here we used that $T_{\Apxball}(G,\Kinit,32D\log n,0.1) = \Omega(|\ball_G(\Kinit,32D\log n)|)$, as discussed in Remark \ref{rem:apx-ball-monotonic}.)

\paragraph{Final Total Capacity.} Finally, we bound the final total vertex capacity $\kappafinal(\Vhat)$
of $\Hhat$ as claimed in \Cref{lem:total cap}. Unfortunately, it is rather difficult to argue directly about $\Hhat$ since it is fully-dynamic. To establish our proof, we therefore rely on analyzing another graph $\Ghat$ which is used purely for analysis.

We define $\Ghat$ to be a dynamic \emph{unweighted} graph with vertex set $V(\Ghat)=\Vhat$
and the edge set $E(\Ghat)$ taken to be the union of the edges $\{(u,v)\in B^{init}\times B^{init}\mid\dist_{G}(u,v)\le d\}$ and all edges on heavy paths $\Phat$ that were also added to $\Hhat$ (recall \Cref{def:Hhat} and the definition $B^{init} = \ball_G(\Kinit,32D\log(n))$).

We first list structural properties of $\Ghat$
below:
\begin{prop}
\label{prop:structure Ghat}We have the following:
\begin{enumerate}
\item \label{enu:Ghat dec}$\Ghat$ is a decremental graph.
\item \label{enu:Ghat shrink G}For any $u,v\in\Kinit$, if $\dist_{G}(u,v)\le 32D\log n$,
then $\dist_{\Ghat}(u,v)\le 4 \cdot \lceil \dist_{G}(u,v)/d \rceil$.
\item \label{enu:Ghat weaker Hhat}If $(L,S,R)$ is a vertex cut in $\Hhat$,
then $(L,S,R)$ is also a vertex cut in $\Ghat$.
\end{enumerate}
\end{prop}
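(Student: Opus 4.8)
The plan is to read off all three properties directly from the definitions of $\Ghat$, $\Hhat$, and $B^{init}$, using only that $G$ is decremental and that the maintained graph $H$ satisfies the defining properties of a $(d,\gamma,\Delta)$-compressed graph at every time step. Property \ref{enu:Ghat dec} is immediate: the vertex set $V(\Ghat)=\Vhat$ is frozen at construction time, so only edges change, and the edge set of $\Ghat$ is the union of $\{(u,v)\in B^{init}\times B^{init}\mid \dist_G(u,v)\le d\}$ with the edges lying on the heavy paths $\Phat$. The first set is decremental because $\dist_G$ is non-decreasing under deletions, and the second is decremental because a heavy path $\Phat_e$ is present exactly while its defining edge $e\in\hat{E}$ survives in $G$, and $G$ only loses edges. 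Crucially, even though $\Hhat$ itself is fully dynamic (the compressed graph $H$ may gain edges over time), the edge set of $\Ghat$ was defined so as not to depend on those insertions — this is precisely why $\Ghat$ is introduced for the analysis.

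For Property \ref{enu:Ghat weaker Hhat}, note that $\Ghat$ and $\Hhat$ share the vertex set $\Vhat$, so it suffices to show that the endpoints of every $\Ghat$-edge are adjacent in $\Hhat$; then no $\Ghat$-edge can cross a vertex cut of $\Hhat$, and $(L,S,R)$ is automatically a vertex cut of $\Ghat$ as well. Take $(a,b)\in E(\Ghat)$. If $(a,b)$ lies on a heavy path $\Phat_e$, it is literally an edge of $\Hhat$. Otherwise $a,b\in B^{init}$ and $\dist_G(a,b)\le d$, so the first defining property of the compressed graph $H$ makes $a,b$ adjacent in $H$, hence in $H[B^{init}]$, hence in $\Hhat$. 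In either case $a$ and $b$ are adjacent in $\Hhat$, as needed.

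Property \ref{enu:Ghat shrink G} requires genuine work. Fix $u,v\in\Kinit$ with $L:=\dist_G(u,v)\le 32D\log n$ (the case $u=v$ being trivial) and let $P=(u=x_0,\dots,x_\ell=v)$ be a shortest $u$--$v$ path in the current graph. First I would verify that every $x_i$ lies in $B^{init}$: since $x_i$ lies on a shortest $u$--$v$ path we have $\dist_G(u,x_i)\le L\le 32D\log n$, and by decrementality the same bound held in the initial graph, so $x_i\in\ball_G(\Kinit,32D\log n)=B^{init}$ because $u\in\Kinit$. Then I realize $P$ inside $\Ghat$ as follows: every edge of $P$ has weight at most $L\le 32D\log n$, so any \emph{heavy} edge $e$ of $P$ (weight $>d$) lies in $\hat{E}$ and is traversed via its heavy path $\Phat_e$ using $\ceiling{w(e)/d}$ edges of $\Ghat$; the intervening maximal runs of \emph{light} edges are partitioned greedily into sub-paths of total weight $\le d$, and each such sub-path $(a,\dots,b)$ becomes the single $\Ghat$-edge $(a,b)$, which is legal since $a,b\in B^{init}$ and $\dist_G(a,b)\le d$. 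A pairing argument on the maximal weight-$\le d$ segments shows a light run of total weight $\ell_j$ produces at most $2\ell_j/d+1$ sub-paths, and the number of light runs is at most one more than the number of heavy edges, which is below $L/d$. Summing the heavy-path lengths and the sub-path counts yields a $u$--$v$ walk in $\Ghat$ with at most $3L/d+1\le 4\ceiling{L/d}$ edges, which is the claimed bound.

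The only step that is not essentially definitional is the realization of $P$ in Property \ref{enu:Ghat shrink G}: a single light edge of $P$ need not correspond to a short $\Ghat$-path (there may be $\omega(L/d)$ of them), so consecutive light edges must be bundled into chunks of weight $\Theta(d)$, and the main bookkeeping challenge is to show that the chunk count, the heavy-path lengths, and the per-run rounding slack together still stay below the factor $4$; the pairing argument on maximal weight-$\le d$ segments is what delivers this.
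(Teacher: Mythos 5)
Your proof is correct and follows essentially the same route as the paper: decrementality of both edge families for Property \ref{enu:Ghat dec}, the adjacency-of-$\Ghat$-edges-in-$\Hhat$ argument (via heavy paths and the compressed-graph property) for Property \ref{enu:Ghat weaker Hhat}, and the heavy-edge/light-run decomposition with $\le d$-weight chunking for Property \ref{enu:Ghat shrink G}. Your bookkeeping ($3L/d+1\le 4\lceil L/d\rceil$) is in fact spelled out more carefully than the paper's sketch, and the constants check out.
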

\underline{Property \ref{enu:Ghat dec}:} Observe that since $G$ is a decremental graph, distances in $G$ are monotonically increasing. Thus, the set $\{(u,v)\in B^{init}\times B^{init}\mid\dist_{G}(u,v)\le d\}$ is decremental. Further, recall that we assume that $G$ is undergoing edge deletions (no weight updates) and once an edge $e$ is deleted from $G$ its corresponding heavy path $P_e \in \Phat$ (if one is associated with $e$) is simply deleted from $\Ghat$. Thus, $\Ghat$ is a decremental graph.

\underline{Property \ref{enu:Ghat shrink G}:} Let $P$ be a shortest $u$-$v$ path in $G$. Let $E_{heavy}=\{e\in P\mid w(e)>d\}$. We can partition the path $P$ into $P=P_{1} \circ e_{1} \circ P_{2} \circ\dots\circ e_{|E_{heavy}|}\circ P_{|E_{heavy}|+1}$ where each $e_{i}\in E_{heavy}$ and each path $P_{i}$ contains only edges in $G$ with weight at most $d$. It remains to observe that we can replace 
\begin{itemize}
    \item each $u_i$-$v_i$ path $P_i$ in $G$ by finding a minimal set $S_i$ of vertices on $P_i$ with $u_i,v_i \in S_i$ such that each vertex $x$ in $S_i \setminus \{v_i\}$ is at most at distance $d$ to some vertex that occurs later on $P_i$ than $x$. Then, we can replace the path between each such two consecutive vertices by an edge in  $\{(u,v)\in B^{init}\times B^{init}\mid\dist_{G}(u,v)\le d\}$ and it is not hard to see that we use at most  $2 \cdot \lceil \dist_{G}(u_i,v_i)/d \rceil$ such edges in $\Ghat$, and
    \item each edge $e_i$ by a heavy path in $\Ghat$ consisting of $\ceiling{w(e_{i})/d}$ edges (recall heavy paths from \Cref{def:Hhat}).
\end{itemize}
It is not hard to combine the above two insights to derive the Property. We point out that above we implicitly use that all vertices on $P$ are in $B^{init}$. But this is clearly given since we assume $u,v \in \Kinit$ and $\dist_{G}(u,v)\le32D\log n$ while $B^{init}$ includes all vertices in $G$ that are ever at distance at most $32D\log n$ to any vertex in $\Kinit$.

\underline{Property \ref{enu:Ghat weaker Hhat}:} We prove the contra-positive.
Suppose that $(L,S,R)$ is not a vertex cut in $\Ghat$. That is,
there is an edge $(u,v)$ in $\Ghat$ where $u\in L$ and $v\in R$.
There are two cases. First, if $(u,v)$ is in a heavy path $\Phat$
in $\Ghat$, then $\Phat$ must appear in $\Hhat$ as well. Second,
if $\dist_{G}(u,v)\le d$, then, by \Cref{def:compressed graph}, there
is a hyperedge of a $(d,\gamma,\Delta)$-compressed graph $H$ that
contains both $u$ and $v$. Therefore, $(L,S,R)$ is not a vertex
cut in $\Hhat$.\\

We now define a powerful potential function to complete our proof. The key notion for our potential function is that of a cost of an embedding. In the definition below, it is important to observe that while we have $\kappa$ and $\gamma_{\size}$ defined by \Cref{alg:Core}, the embedding $\cP'$ can be chosen arbitrary (and in particular does not have to be $\cP$ from the algorithm). Given this definition it is straight-forward to set-up our potential function.

\begin{definition}[Cost of an Embedding]
At any point during the execution of \Cref{alg:Core}, consider $\kappa$ and $\gamma_{\size}=|\Vhat|/|\Kinit|$, and consider any embedding $\cP'$ that embeds some $W'$ into $\Ghat$. Then, we define the \emph{cost} of the embedding $\cP'$ by $c(\cP') = \sum_{v\in P, P \in \cP'} \log(\gamma_{\size}\kappa(v)) \cdot \val(P)$.
\end{definition}

\begin{defn}[Potential Function]\label{def:Pi general}
At any point during the execution of \Cref{alg:Core}, let $\mathbb{P}$
be a collection of all embeddings $\cP'$ that embed a graph $W'$ into $\Ghat$ that satisfies that
\begin{enumerate}
\item $W'$ is an unweighted star where $V(W')\subseteq\Kinit$ and
$|V(W')|\ge(1-\epswit/2)|\Kinit|$, and
\item $\diam_{\Ghat}(V(W'))\leq 256 \cdot \frac{D}{d} \cdot \log n$. 
\end{enumerate}
We then define the potential function $\Pi(\Ghat,\Kinit,\kappa) = \min_{\cP'\in\mathbb{P}}c(\cP')$ that is equal to the minimal cost achieved by any embedding in $\mathbb{P}$. Here, if $\mathbb{P}=\emptyset$, then we let $\Pi(\Ghat,\Kinit,\kappa)=\infty$. 
\end{defn}

Note that for each $\cP'\in\mathbb{P}$ and $P \in \cP'$ above, we have $\val(P) = 1$ (since $W'$ is unweighted). Also note that we do not have any guarantees on vertex congestion or length of the embeddings for any $\cP'$.

Let us now analyze the potential function $\Pi(\Ghat,\Kinit,\kappa)$ over the course of the algorithm.

\begin{prop}
\label{lem:monotone Pi}$\Pi(\Ghat,\Kinit,\kappa)\ge0$ and $\Pi(\Ghat,\Kinit,\kappa)$
can only increase through time. 
\end{prop}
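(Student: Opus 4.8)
The plan is to prove both assertions by reducing each to two elementary monotonicity facts that are already available: first, $\kappa$ is non-decreasing over time (immediate from the two places where \Cref{alg:Core} modifies $\kappa$, both of which only increase it); second, $\Ghat$ is a decremental graph by Property~\ref{enu:Ghat dec} of \Cref{prop:structure Ghat}. Throughout I will use that $\gamma_{\size}=|\Vhat|/|\Kinit|$ is a fixed constant (both $\Vhat$ and $\Kinit$ are fixed once $\Hhat$ has been built in the first line of \Cref{alg:Core}) and that $\val(P)=1>0$ for every $P$ in an embedding $\cP'\in\mathbb{P}$, since the witness $W'$ is an unweighted star.

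For non-negativity, I would first verify that $\gamma_{\size}\kappa(v)\ge1$ for every $v\in\Vhat$ at every point in time. This holds at initialization by the choice of initial capacities in \Cref{enu:init cap} of \Cref{alg:Core} (each $v\in\Kinit$ starts with $\kappa(v)=2$, giving $\gamma_{\size}\kappa(v)=2|\Vhat|/|\Kinit|\ge2$ since $\Kinit\subseteq\Vhat$, and each remaining $v$ starts with $\kappa(v)$ equal to a suitable constant times $|\Kinit|/|\Vhat|=1/\gamma_{\size}$), and it persists because $\kappa$ only increases. Hence $\log(\gamma_{\size}\kappa(v))\ge0$, so every summand $\log(\gamma_{\size}\kappa(v))\cdot\val(P)$ of $c(\cP')$ is non-negative and $c(\cP')\ge0$ for every $\cP'\in\mathbb{P}$. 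Taking the minimum over $\mathbb{P}$ gives $\Pi(\Ghat,\Kinit,\kappa)\ge0$, which is trivially true also when $\mathbb{P}=\emptyset$, in which case $\Pi=\infty$.

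For monotonicity over time, fix two stages $t_1<t_2$ and write $\mathbb{P}_{t}$, $c_{t}$, $\Pi_{t}$ for the quantities at stage $t$. I would show (i) $\mathbb{P}_{t_2}\subseteq\mathbb{P}_{t_1}$ and (ii) $c_{t_2}(\cP')\ge c_{t_1}(\cP')$ for every fixed embedding $\cP'$. For (i): among the conditions of \Cref{def:Pi general}, only ``$\cP'$ embeds $W'$ into $\Ghat$'' and ``$\diam_{\Ghat}(V(W'))\le 256\frac{D}{d}\log n$'' depend on time (that $W'$ is an unweighted star with $V(W')\subseteq\Kinit$ and $|V(W')|\ge(1-\epswit/2)|\Kinit|$ does not). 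Since $\Ghat$ is decremental, $E(\Ghat_{t_2})\subseteq E(\Ghat_{t_1})$, so every embedding into $\Ghat_{t_2}$ is an embedding into $\Ghat_{t_1}$; moreover $\dist_{\Ghat_{t_1}}(x,y)\le\dist_{\Ghat_{t_2}}(x,y)$ for all $x,y$, hence $\diam_{\Ghat_{t_1}}(V(W'))\le\diam_{\Ghat_{t_2}}(V(W'))\le 256\frac{D}{d}\log n$. Thus every $\cP'\in\mathbb{P}_{t_2}$ lies in $\mathbb{P}_{t_1}$. For (ii): $\gamma_{\size}$ is fixed and $\kappa$ is non-decreasing, so each term $\log(\gamma_{\size}\kappa(v))\cdot\val(P)$ of $c(\cP')$ is non-decreasing in time. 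Combining,
\[
\Pi_{t_2}=\min_{\cP'\in\mathbb{P}_{t_2}}c_{t_2}(\cP')\ \ge\ \min_{\cP'\in\mathbb{P}_{t_2}}c_{t_1}(\cP')\ \ge\ \min_{\cP'\in\mathbb{P}_{t_1}}c_{t_1}(\cP')=\Pi_{t_1},
\]
where the second inequality holds because the minimum is taken over the larger set $\mathbb{P}_{t_1}\supseteq\mathbb{P}_{t_2}$; if $\mathbb{P}_{t_2}=\emptyset$ then $\Pi_{t_2}=\infty$ and there is nothing to prove.

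I do not expect any serious obstacle. The one point that needs care is to note that every elementary transition of \Cref{alg:Core} either deletes an edge of $G$ (hence of $\Ghat$, which therefore remains decremental) or raises some capacity $\kappa(v)$, never the reverse, so the two cases above indeed cover all transitions. This is also precisely why the potential is defined over the auxiliary graph $\Ghat$ rather than the fully dynamic $\Hhat$: an edge \emph{insertion} into $\Hhat$ could enlarge the feasible set and drop the minimum, and this is the behaviour that $\Ghat$ was introduced to avoid.
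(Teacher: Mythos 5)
Your proof is correct and follows essentially the same route as the paper: non-negativity from $\kappa(v)\ge 1/\gamma_{\size}$ (initialization plus monotone $\kappa$), and monotonicity of $\Pi$ from the fact that $\Ghat$ is decremental (so the feasible set of embeddings only shrinks and distances only grow) together with the non-decreasing costs induced by $\kappa$. The paper states this in two sentences; your write-up merely spells out the set-containment $\mathbb{P}_{t_2}\subseteq\mathbb{P}_{t_1}$, the diameter check, and the $\mathbb{P}=\emptyset$ edge case explicitly.
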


\begin{proof}
$\Pi(\Ghat,\Kinit,\kappa)\ge0$ because, for all $v\in\Vhat$, we
have $\kappa(v)\ge1/\gamma_{\size}$ and so $\log(\gamma_{\size}\kappa(v))\ge0$.
As $\Ghat$ is a decremental graph by Property \ref{enu:Ghat dec} in \Cref{prop:structure Ghat}, we have that $\Pi(\Ghat,\Kinit,\kappa)$ may only increase. 
\end{proof}
\begin{prop}
\label{prop:trivial cap}For all $v\in\Vhat$, we have $\kappa(v)\le4|\Kinit|$
at any point of time.
\end{prop}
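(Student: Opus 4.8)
The plan is to prove the bound by a simple induction on the (finite) sequence of modifications that Algorithm~\ref{alg:Core} ever makes to the capacity function $\kappa$. The first observation is that $\kappa$ is changed in only three places: at initialization in Line~\ref{enu:init cap}, inside the doubling loop in Line~\ref{enu:doubling step}, and in the technical side-condition step in Line~\ref{lne:technicalSideCondition}. Hence it suffices to show that each of these operations preserves the invariant ``$\kappa(v)\le 4|\Kinit|$ for every $v\in\Vhat$'' (we may assume $\Kinit\neq\emptyset$, since otherwise the statement is vacuous).

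For the base case, right after Line~\ref{enu:init cap} we have $\kappa(v)=2$ for $v\in\Kinit$ and $\kappa(w)=1/\gamma_{\size}=4|\Kinit|/|\Vhat|$ for $w\in\Vhat\setminus\Kinit$; since $|\Kinit|\ge 1$ and $|\Vhat|\ge 2$, both values are at most $4|\Kinit|$.

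The substantive step is the inductive step for the doubling loop. Whenever the algorithm reaches Line~\ref{enu:doubling step}, the call $\EmbedCore(\Hhat,K',\kappa)$ in the enclosing while-condition (Line~\ref{lne:sparseCutWhileLoop}) has just returned a vertex cut $(L,S,R)$, so by \Cref{lem:embedwitness} we have $\kappa(S)\le 2|L\cap K'|$, where $\kappa$ here denotes the capacity function as it stands immediately before the doubling. By \Cref{lem:certifycore}, the set $K'$ returned by $\CertifyCore$ satisfies $K'\subseteq\Kinit$, so $|L\cap K'|\le|K'|\le|\Kinit|$ and therefore $\kappa(S)\le 2|\Kinit|$. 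Since all capacities are nonnegative, every $v\in S$ has $\kappa(v)\le\kappa(S)\le 2|\Kinit|$ before the doubling (and doubling one vertex of $S$ does not affect this pre-loop bound for the remaining vertices of $S$, as they are distinct), hence after executing $\kappa(v)\gets 2\kappa(v)$ we get $\kappa(v)\le 4|\Kinit|$; vertices outside $S$ are untouched, so the invariant survives. Finally, the side-condition step in Line~\ref{lne:technicalSideCondition} sets $\kappa(w')\gets\max\{\kappa(w'),\kappa(w)\}$ for some $w\in S$ maximizing $\kappa$ over $S$ after the doubling loop; by the previous argument $\kappa(w)\le 4|\Kinit|$, and by the inductive hypothesis $\kappa(w')\le 4|\Kinit|$, so the updated value is again at most $4|\Kinit|$ while all other capacities are unchanged. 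This closes the induction and yields $\kappa(v)\le 4|\Kinit|$ at every point in time.

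I do not expect a genuine obstacle here; this proposition is essentially bookkeeping. The only points requiring slight care are (i) invoking the $\EmbedCore$ guarantee $\kappa(S)\le 2|L\cap K'|$ with respect to the capacity function exactly as it is right before the doubling, and (ii) using the inclusion $K'\subseteq\Kinit$ (via \Cref{lem:certifycore}) to replace $|L\cap K'|$ by $|\Kinit|$, rather than tracking the smaller current set $K$.
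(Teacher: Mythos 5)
Your proof is correct and takes essentially the same route as the paper's: the key point in both is that the cut guarantee of \Cref{lem:embedwitness} gives $\kappa(S)\le 2|L\cap K'|\le 2|\Kinit|$ immediately before the doubling, so no vertex ever exceeds $4|\Kinit|$ after being doubled. Your additional bookkeeping for the initialization and for Line~\ref{lne:technicalSideCondition} (which the paper's short proof glosses over) is fine and consistent with the intended argument.
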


\begin{proof}
The capacity $\kappa(v)$ of $v$ can be increased only if a cut $(L,S,R)$ is returned by $\EmbedCore(\cdot)$ with $v\in S$
in \Cref{enu:doubling step}. But $\EmbedCore(\cdot)$ guarantees that $\kappa(S)\le 2|\Kinit|$ (see the Cut Property in \Cref{lem:embedwitness}). So once $\kappa(v)>2|\Kinit|$, $v$ cannot be in any future $S$. Since we double $\kappa(v)$ every time that $v$ appears in $S$, we can therefore ensure that $\kappa(v)\le 4|\Kinit|$.
\end{proof}

\begin{lem}
\label{lem:bound Pi} When the invocation of $\CertifyCore(G,\Kinit,2D,\epswit/2)$ in \Cref{lne:whileKisbigCOre} returns a Core $K'\subseteq\Kinit$, then $\Pi(\Ghat,\Kinit,\kappa)=O(|\Kinit|\frac{D}{d}\log^{2}n)$.
\end{lem}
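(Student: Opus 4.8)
The plan is to exhibit a single embedding $\cP'$ lying in the collection $\mathbb{P}$ of \Cref{def:Pi general} whose cost satisfies $c(\cP') = O(|\Kinit|\frac{D}{d}\log^{2}n)$; since $\Pi(\Ghat,\Kinit,\kappa)$ is by definition the minimum of $c(\cdot)$ over $\mathbb{P}$, this immediately gives the claimed bound (and in particular shows $\mathbb{P}\neq\emptyset$). The embedding will be built directly from the core $K'$ returned by $\CertifyCore(G,\Kinit,2D,\epswit/2)$. By \Cref{lem:certifycore} (instantiated with $d=2D$ and $\epsilon=\epswit/2$) we have $K'\subseteq\Kinit$, $|K'|\ge(1-\epswit/2)|\Kinit|$, and $\diam_{G}(K')\le 16\cdot(2D)\cdot\lg n=32D\lg n$.

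Now let $W'$ be the unweighted star on vertex set $V(W')=K'$ centered at an arbitrary vertex $c\in K'$, and let $\cP'$ embed each star edge $(c,v)$, $v\in K'\setminus\{c\}$, along a shortest $c$--$v$ path $P_{v}$ in $\Ghat$, with $\val(P_{v})=1$. I claim $\cP'\in\mathbb{P}$. Condition~1 of \Cref{def:Pi general} holds since $V(W')=K'\subseteq\Kinit$ and $|K'|\ge(1-\epswit/2)|\Kinit|$. For condition~2, fix $u,v\in K'$; since $\dist_{G}(u,v)\le\diam_{G}(K')\le32D\lg n$, Property~\ref{enu:Ghat shrink G} of \Cref{prop:structure Ghat} (which applies because $u,v\in\Kinit$) gives $\dist_{\Ghat}(u,v)\le 4\ceiling{\dist_{G}(u,v)/d}\le 4\ceiling{32D\lg n/d}\le 128\tfrac{D}{d}\lg n+4\le 256\tfrac{D}{d}\log n$, where the last step uses $D\ge d$ and $\lg n\ge 1$. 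Hence $\diam_{\Ghat}(V(W'))\le 256\tfrac{D}{d}\log n$, so $\cP'\in\mathbb{P}$; in particular each shortest path $P_{v}$ is finite (so the construction is well defined) and has at most $256\tfrac{D}{d}\log n+1=O(\tfrac{D}{d}\log n)$ vertices.

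It remains to bound $c(\cP')=\sum_{P\in\cP'}\sum_{u\in P}\log(\gamma_{\size}\kappa(u))$, where here $\val(P)=1$ for every $P$ because $W'$ is unweighted. For every $u\in\Vhat$ we have $\kappa(u)\le 4|\Kinit|$ by \Cref{prop:trivial cap}, and $\gamma_{\size}=|\Vhat|/|\Kinit|$, so $\gamma_{\size}\kappa(u)\le 4|\Vhat|$, which is polynomial in $n$; therefore $\log(\gamma_{\size}\kappa(u))=O(\log n)$ for every vertex $u$ appearing on any $P_{v}$. Summing over the $|K'|-1\le|\Kinit|$ paths, each with $O(\tfrac{D}{d}\log n)$ vertices, yields $c(\cP')=O\bigl(|\Kinit|\cdot\tfrac{D}{d}\log n\cdot\log n\bigr)=O(|\Kinit|\tfrac{D}{d}\log^{2}n)$, and hence $\Pi(\Ghat,\Kinit,\kappa)\le c(\cP')=O(|\Kinit|\tfrac{D}{d}\log^{2}n)$.

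There is no genuinely hard step here: the two substantive facts are already packaged in Property~\ref{enu:Ghat shrink G} of \Cref{prop:structure Ghat} (which compresses the $G$-diameter bound of $K'$ into a $\Ghat$-diameter bound, accounting for heavy paths) and in \Cref{prop:trivial cap} (the uniform cap $\kappa(u)\le 4|\Kinit|$). The only thing requiring care is matching the constants — in particular checking that the $\Ghat$-diameter of $K'$ does land below the threshold $256\tfrac{D}{d}\log n$ built into \Cref{def:Pi general} and that the per-vertex log term is $O(\log n)$ — which is exactly why $\CertifyCore$ is invoked with distance parameter $2D$ (yielding the $32D\lg n$ bound that feeds Property~\ref{enu:Ghat shrink G}).
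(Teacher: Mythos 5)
Your proposal is correct and matches the paper's own proof essentially verbatim: the paper likewise applies \Cref{lem:certifycore} to get $|K'|\ge(1-\epswit/2)|\Kinit|$ and $\diam_G(K')\le 32D\log n$, uses Property~\ref{enu:Ghat shrink G} of \Cref{prop:structure Ghat} to get $\diam_{\Ghat}(K')\le 256\frac{D}{d}\log n$, embeds a star on $K'$ rooted at an arbitrary vertex via shortest paths in $\Ghat$ with unit values, and bounds the cost by (number of paths)$\times$(path length)$\times\log(\gamma_{\size}\kappa(\cdot))$ using \Cref{prop:trivial cap}. The only differences are cosmetic (you spell out the ceiling arithmetic and use the cap $4|\Kinit|$ where the paper writes $2|\Kinit|$, which is immaterial to the $O(\cdot)$ bound).
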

\begin{proof}
By \Cref{lem:certifycore}, we have that such $K'$ satisfies that $|K'|\geq(1-\epswit/2)|\Kinit|$ and
$\diam_{G}(K')\leq32D\log n$. Using the latter fact, combined with Property \ref{enu:Ghat shrink G} from \Cref{prop:structure Ghat}, we have $\diam_{\Ghat}(K') \leq 4 \lceil \diam_{G}(K')/d \rceil \leq 256 \cdot \frac{D}{d} \cdot \log n$.

Using the last fact, with the guarantee on the size of $K'$, we note that picking an arbitrary vertex $u \in K'$, and letting $\cP'$ be an embedding containing for each $v \in K' \setminus \{u\}$, a shortest $u$-$v$ path $P$ in $\Ghat$ with value $\val(P) = 1$, we get that $\cP'$ must be in $\mathbb{P}$ as defined in \Cref{def:Pi general}. It is further straight-forward to see that
\begin{align*}
c(\cP') = \sum_{x\in P, P \in \cP'} \log(\gamma_{\size}\kappa(x)) < |\Kinit|\cdot\diam_{\Ghat}(K')\cdot\log(\gamma_{\size}\cdot2|\Kinit|)=O\left(|\Kinit|\frac{D}{d}\log^{2}n\right)
\end{align*}
because there are $|\Kinit|-1$ paths in $\cP'$, each path is of length at most $\diam_{\Ghat}(K')$, and each vertex $x$ has $\kappa(x)$ bound by \Cref{prop:trivial cap}. This completes the proof as $\Pi(\Ghat,\Kinit,\kappa)\le c(\cP')$.
\end{proof}
\begin{lem}
\label{lem:doubling effect}Consider when $\EmbedCore(\Hhat, K', \kappa)$ returns a vertex cut $(L,S,R)$ in $\Hhat$. Let $\kappa^{OLD}$ and $\kappa^{NEW}$ be the vertex capacities of $\Vhat$ before and after the doubling step in \Cref{enu:doubling step} and the potential increase of $\kappa(w')$ in \Cref{lne:technicalSideCondition}. Then,
\begin{enumerate}
\item $\kappa^{NEW}(\Vhat)\le\kappa^{OLD}(\Vhat)+6|L\cap K'|$. \label{prop:IncreaseInCapacity}
\item $\Pi(\Ghat,\Kinit,\kappa^{NEW})\ge\Pi(\Ghat,\Kinit,\kappa)+|L\cap K'|/3$. 
\label{prop:IncreaseInPotential}
\end{enumerate}
\end{lem}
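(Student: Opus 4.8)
\textbf{Part \ref{prop:IncreaseInCapacity}.} This is pure bookkeeping of the two places where $\kappa$ changes during one iteration of the inner loop. By the Cut guarantee of \Cref{lem:embedwitness} the returned cut satisfies $\kappa^{OLD}(S)\le 2|L\cap K'|$. The doubling in \Cref{enu:doubling step} therefore raises the total capacity by exactly $\sum_{v\in S}\kappa^{OLD}(v)=\kappa^{OLD}(S)\le 2|L\cap K'|$, and afterwards no vertex of $S$ has capacity exceeding $2\kappa^{OLD}(S)\le 4|L\cap K'|$; the side-condition fix in \Cref{lne:technicalSideCondition} then raises a single capacity $\kappa(w')$ by at most the current maximum over $S$, i.e.\ by at most $4|L\cap K'|$ more. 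Summing the two contributions gives $\kappa^{NEW}(\Vhat)\le\kappa^{OLD}(\Vhat)+6|L\cap K'|$.

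\textbf{Part \ref{prop:IncreaseInPotential}} (here $\kappa$ denotes $\kappa^{OLD}$, the capacity right before the doubling). The crucial point is that $\Ghat$ and $\Kinit$ are untouched by the doubling step, so the family $\mathbb{P}$ of admissible star-embeddings in \Cref{def:Pi general} is literally the \emph{same} for $\kappa^{OLD}$ and $\kappa^{NEW}$ — its definition never refers to $\kappa$. Hence it suffices to show that for \emph{every} $\cP'\in\mathbb{P}$ the cost $c(\cP')$ evaluated with $\kappa^{NEW}$ exceeds its value with $\kappa^{OLD}$ by at least $|L\cap K'|/3$, and then take the minimum over $\cP'\in\mathbb{P}$; if $\mathbb{P}=\emptyset$ both potentials are $\infty$ and there is nothing to prove. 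Fix such a $\cP'$, embedding a star $W'$ with center $u$, so $\val(P)=1$ for every $P\in\cP'$. The cost difference is $\sum_{P\in\cP'}\sum_{v\in P}\big(\log(\gamma_{\size}\kappa^{NEW}(v))-\log(\gamma_{\size}\kappa^{OLD}(v))\big)$; since $\kappa^{NEW}\ge\kappa^{OLD}$ pointwise every term is nonnegative, and since $\kappa^{NEW}(v)\ge 2\kappa^{OLD}(v)$ for every $v\in S$, each term with $v\in S$ is at least $\log 2=1$ (logarithms are base $2$). Thus the cost difference is at least the number of paths of $\cP'$ that meet $S$, and the whole part reduces to the claim: \emph{at least $|L\cap K'|/3$ of the $|V(W')|-1$ star-paths in $\cP'$ pass through $S$.}

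To prove this, recall that $(L,S,R)$ is a vertex cut of $\Hhat$, hence of $\Ghat$ by Property \ref{enu:Ghat weaker Hhat} of \Cref{prop:structure Ghat}, so any path of $\Ghat$ that joins $L$ and $R$ contains a vertex of $S$. Choose $T\in\{L,R\}$ with $u\notin T$ and $|T\cap K'|\ge|L\cap K'|$; this is always possible, since $L,S,R$ partition $\Vhat$ and $|L\cap K'|\le|R\cap K'|$ by \Cref{lem:embedwitness} (if $u\in S$ either side works; if $u\in L$ take $T=R$; if $u\in R$ take $T=L$). For every leaf $v\in V(W')\cap T$ — note $v\neq u$ as $u\notin T$ — the embedded star-path $P_{uv}$ either starts at $u\in S$ or joins the two sides $L$ and $R$, and in both cases it contains a vertex of $S$. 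The number of such leaves is $|V(W')\cap T|\ge |T\cap K'|-|K'\setminus V(W')|\ge |L\cap K'|-\epswit|\Kinit|/2$, using $V(W')\subseteq\Kinit$ and $|V(W')|\ge(1-\epswit/2)|\Kinit|$. Finally $|L\cap K'|\ge\epswit|K'|$ (Cut guarantee) and $|K'|\ge(1-\epswit/2)|\Kinit|$ (as $K'$ was returned by $\CertifyCore$) give $\epswit|\Kinit|/2\le\frac{1}{2-\epswit}|L\cap K'|\le\frac{2}{3}|L\cap K'|$ since $\epswit\le 1/2$; hence at least $|L\cap K'|-\frac{2}{3}|L\cap K'|=|L\cap K'|/3$ star-paths meet $S$, completing the reduction and the proof.

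\textbf{Expected main obstacle.} Part \ref{prop:IncreaseInCapacity} and the reduction ``each doubled vertex on a path contributes $\ge 1$'' are routine. The real content sits in the last paragraph: no admissible star-embedding can avoid routing $\Omega(|L\cap K'|)$ of its paths across the separator $S$. This needs both size inequalities produced by $\EmbedCore$ (namely $\epswit|K'|\le|L\cap K'|\le|R\cap K'|$) and the near-fullness of $V(W')$ inside $\Kinit$, combined with the correct case split on which of $L,S,R$ contains the star center $u$; once $\epswit\le 1/2$ is invoked, the rest is arithmetic.
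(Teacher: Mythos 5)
Your proof is correct and follows essentially the same route as the paper: the same bookkeeping for the capacity increase (doubling plus the $w'$ fix, totalling at most $3\kappa^{OLD}(S)\le 6|L\cap K'|$), and the same potential argument via the fact that $(L,S,R)$ is also a cut in $\Ghat$, that each star-path meeting $S$ gains at least $\log 2=1$ in cost, and the same size inequalities ($\epswit|K'|\le|L\cap K'|\le|R\cap K'|$, $|K'|\ge(1-\epswit/2)|\Kinit|$, $|V(W')|\ge(1-\epswit/2)|\Kinit|$) to get the $|L\cap K'|/3$ bound. Your only cosmetic deviation is selecting a single side $T\in\{L,R\}$ avoiding the star center instead of the paper's $\min\{|L'|,|R'|\}$ with both terms bounded, which is the same argument in substance.
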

\underline{Property \ref{prop:IncreaseInCapacity}:} We have that \Cref{enu:doubling step} leads to an increase in capacity from $\kappa^{OLD}(S)$ to $2\kappa^{OLD}(S)$ at the vertices on $S$ while the capacity at $\Vhat \setminus S$ remains unchanged. In  \Cref{lne:technicalSideCondition}, we set the capacity of $w'$ at most to the current capacity at $S$, i.e. at most $2\kappa^{OLD}(S)$. Thus, we have  $\kappa^{NEW}(\Vhat)\leq\kappa^{OLD}(\Vhat)+3\kappa^{OLD}(S)$ where $\kappa^{OLD}(S)\le 2|L\cap K'|$ by \Cref{lem:embedwitness}. 

\underline{Property \ref{prop:IncreaseInPotential}:} First, recall that by Property \ref{enu:Ghat weaker Hhat} in \Cref{prop:structure Ghat}, the vertex cut $(L,S,R)$ in $\Hhat$ is also a vertex cut in $\Ghat$. Now, given any embedding $\cP'$ from $\mathbb{P}$ (as defined in \Cref{def:Pi general}) that embeds $W'$ into $\Ghat$, we define $L' = L \cap K' \cap V(W')$ and analogously $R' = R \cap K' \cap V(W')$. Further, let $v'$ be the center of the star $W'$, then if
\begin{itemize}
    \item $v' \in L$: we have that there are at least $|R'|$ paths in $\cP'$ from $v'$ to $R'$ (in $\Ghat$). But by definition of the vertex cut $(L,S,R)$ in $\Ghat$, each of these paths must contain at least one vertex in $S$.
    \item $v' \not\in L$: then $v \in S \cup R$, but this implies that there are at least $|L'|$ paths in $\cP$ from $v'$ to $L'$, thus containing at least one vertex in $S$.
\end{itemize}
As we double the capacity of every vertex in $S$ and $\cP' \in \mathbb{P}$
is chosen arbitrarily, we have thus proven that $\Pi(\Ghat,\Kinit,\kappa)$ is increased by at least $\min\{|R'|, |L'|\}$. Thus, if we could lower bound $|R'|, |L'|$ to be of size at least $|L\cap K'|/3$, then the property would be established.

Therefore, we note that by $\EmbedCore(\Hhat, K', \kappa)$ from \Cref{lem:embedwitness}, we have $|R \cap K'| \geq |L\cap K'|\ge\epswit|K'| = \epswit|\Kinit| - o(\epswit|\Kinit|)$ where the later equality is by the while-loop condition in \Cref{lne:whileKisbigCOre}. Then, since $K' \subseteq \Kinit$ and at most $(\epswit/2)|\Kinit|$ vertices in $\Kinit$ are not in $V(W')$ (by \Cref{def:Pi general}), we further obtain that $|L \cap K' \cap V(W')| \geq |L \cap K'| - (\epswit/2)|\Kinit| > |L \cap K'|/3$ and analogously $|R \cap K' \cap V(W')| \geq |L \cap K'|/3$. The property is thus established. \\

Now, we are ready to give the upper bound on the final total vertex capacity $\kappafinal(\Vhat)$ of $\Hhat$ as claimed in \Cref{lem:total cap}.

\begin{lem}
\label{lem:total capUpperBound}
At any point of time, we have $\kappafinal(\Vhat) \leq O\left(|\Kinit|\frac{D}{d}\log^{2}(n)\right)$.
\end{lem}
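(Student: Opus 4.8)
The plan is to bound $\kappafinal(\Vhat)$ by amortizing the capacity increases against the potential function $\Pi(\Ghat,\Kinit,\kappa)$ of \Cref{def:Pi general}, via \Cref{lem:doubling effect}. First I would note that by the initialization in \Cref{enu:init cap} we have $\kappa(\Vhat)=O(|\Kinit|)$ at the start, and that $\kappa$ is only ever modified in \Cref{enu:doubling step} and \Cref{lne:technicalSideCondition}, i.e.\ during an iteration of the while-loop in \Cref{lne:sparseCutWhileLoop} in which $\EmbedCore$ returns a vertex cut $(L,S,R)$. For each such iteration, \Cref{lem:doubling effect} gives $\kappa^{NEW}(\Vhat)\le\kappa^{OLD}(\Vhat)+6|L\cap K'|$ and $\Pi(\Ghat,\Kinit,\kappa^{NEW})\ge\Pi(\Ghat,\Kinit,\kappa)+|L\cap K'|/3$, so every unit of potential gained pays for at most $18$ units of total capacity. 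Hence
\[
\kappafinal(\Vhat)\le O(|\Kinit|)+18\cdot(\text{total increase of }\Pi\text{ over all doubling steps}).
\]

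Next I would bound the total $\Pi$-increase attributable to doubling steps. By \Cref{lem:monotone Pi}, $\Pi(\Ghat,\Kinit,\kappa)\ge0$ at all times and is non-decreasing through time (it only grows under edge deletions to the decremental graph $\Ghat$ and under pointwise increases of $\kappa$). Since all these increments are non-negative, the total increase of $\Pi$ coming from doubling steps is at most the value of $\Pi$ immediately after the \emph{last} doubling step minus its initial value, hence at most its value right after the last doubling step. So it suffices to show that this value is $O(|\Kinit|\frac{D}{d}\log^2 n)$.

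For that last step I would re-run the argument in the proof of \Cref{lem:bound Pi}, but at the moment right after the last doubling. That doubling occurs inside the while-loop of \Cref{lne:sparseCutWhileLoop} for some core $K'$ returned by $\CertifyCore(G,\Kinit,2D,\epswit/2)$ at the top of the corresponding outer iteration (\Cref{lne:whileKisbigCOre}); between that invocation of $\CertifyCore$ and the last doubling the algorithm performs no updates to $G$ (it only doubles capacities, executes \Cref{lne:technicalSideCondition}, and re-invokes $\EmbedCore$), so at that moment $|K'|\ge(1-\epswit/2)|\Kinit|$ and $\diam_G(K')\le32D\log n$ still hold, and by Property \ref{enu:Ghat shrink G} of \Cref{prop:structure Ghat} also $\diam_{\Ghat}(K')\le256\frac{D}{d}\log n$. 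Fixing any $u\in K'$ and taking $\cP'$ to consist of a shortest $u$-$v$ path in $\Ghat$ of value $1$ for every $v\in K'\setminus\{u\}$, its witness is an unweighted star on $K'$ of the required size and diameter, so $\cP'\in\mathbb{P}$ and $\Pi(\Ghat,\Kinit,\kappa)\le c(\cP')$. Since each of the $|K'|-1<|\Kinit|$ paths has at most $256\frac{D}{d}\log n+1$ vertices and $\kappa(x)\le4|\Kinit|$ for all $x$ by \Cref{prop:trivial cap} (so $\log(\gamma_{\size}\kappa(x))\le\log|\Vhat|=O(\log n)$), we get $c(\cP')=O(|\Kinit|\frac{D}{d}\log^2 n)$; plugging this into the displayed inequality above gives the claimed bound.

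The point requiring care is the middle step: the total doubling-induced increase in $\Pi$ stays $O(|\Kinit|\frac{D}{d}\log^2 n)$ even though $\Pi$ also grows under edge deletions and even though the algorithm may run through many interleaved phases. This is resolved by combining monotonicity of $\Pi$ with the observation that \Cref{lem:bound Pi} in fact controls $\Pi$ not only at the instant $\CertifyCore$ certifies a core but throughout the ensuing block of doubling steps (during which $G$, hence $\Ghat$, does not change), and in particular immediately after the very last doubling step — which is exactly what the amortization needs.
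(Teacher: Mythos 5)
Your proof is correct and follows essentially the same route as the paper: you amortize the total capacity increase against the potential $\Pi(\Ghat,\Kinit,\kappa)$ via \Cref{lem:doubling effect}, bound $\Pi$ using its non-negativity and monotonicity (\Cref{lem:monotone Pi}) together with the \Cref{lem:bound Pi}-type bound, and add the $O(|\Kinit|)$ initial capacity from \Cref{enu:init cap}. Your extra care in applying the \Cref{lem:bound Pi} argument at the moment right after the last doubling step (using that $G$, hence $\Ghat$, is unchanged during the doubling block and that $\kappa(x)\le4|\Kinit|$ holds at all times) is a harmless tightening of the paper's phrasing that $\Pi$ ``never exceeds'' the stated bound.
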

\begin{proof}
Throughout the algorithm, $\kappa(\Vhat)$ is only changed in \Cref{enu:doubling step} of the algorithm after an invocation of $\EmbedCore(\cdot)$ returns a
vertex cut $(L,S,R)$ in $\Hhat$. But by \Cref{lem:doubling effect}, every time $\kappa(\Vhat)$ is increased by amount $x$, the potential $\Pi(\Ghat,\Kinit,\kappa)$ is increased by at least $x/18$. 

However, $\Pi(\Ghat,\Kinit,\kappa)$ is initially non-negative (see \Cref{lem:monotone Pi}) and never exceeds $O(|\Kinit|\frac{D}{d}\log^{2}(n))$ (by \Cref{lem:bound Pi}). Hence the total increase of $\kappa(\Vhat)$ is also bound by $O(|\Kinit|\frac{D}{d}\log^{2}(n))$, combined with the initial capacity of $\kappa(\Hhat) \leq |\Kinit| \cdot 2 + |\Vhat| \cdot 1/\gamma_{size} = O(|\Kinit|)$ (see \Cref{enu:init cap}) this establishes the Lemma.
\end{proof}

\section{Implementing Covering}
\label{sec:part2Covering}

Building on the previous two data structures (for Approximate Balls and Robust Cores), we are now ready to give our implementation of a Covering data structure. We recall from \Cref{def:Covering} that a $(d,k,\eps,\stretch,\Delta)$-covering $\cC$ is a dynamic collection of cores $C \in \cC$ where each core $C$ is a Robust Core such that $C=\Core(G,\Cinit,d_{\ell})$ where $\ell \in [0,k-1]$ is the level assigned to $C$ (we also write $\Clevel(C)=\ell$) and where $d_{\ell}=d\cdot(\frac{\stretch}{\eps})^{\ell}$. Observe that this implies that we always have $d\leq d_{\Clevel(C)}\leq d(\stretch/\eps)^{k-1}$ for any $C \in \cC$.

For intuition, the reader should keep in mind that we intend to use the Theorem below for $k\sim\log\log(n)$ and $\stretch$ such that $(\frac{\stretch}{\epsilon})^k=n^{o(1)}$. 

\begin{theorem}
[Covering]\label{thm:covering}Let $G$ be an $n$-vertex bounded-degree
decremental graph. Given parameters $d,k,\eps,\stretch, \scatter$ where $\eps\le0.1$, and
\begin{itemize}
\item for all $d\le d'\le d(\frac{\stretch}{\eps})^{k}$, there is a approximate ball data structure $\Apxball(G,S,d',\eps)$ with total update time $T_{\Apxball}(G,S,d',\eps)$, and
\item for all $d\le d'\le d(\frac{\stretch}{\eps})^{k-1}$, there is a robust
core data structure $\Core(G,\Kinit,d')$ with scattering parameter
at least $\scatter$ and stretch at most $\stretch$ that has total
update time $T_{\Core}(G,\Kinit,d')$.
\end{itemize}
We can maintain $(d,k,\eps,\stretch,\Delta)$-covering of $G$
with $\Delta=O(kn^{2/k}/\scatter)$ in total update time 
\[
O(kn^{1+2/k}\log(n)/\scatter+\sum_{C\in\cC^{ALL}}T_{\Core}(\stage G{t_{C}},\stage C{t_{C}},d_{\Clevel(C)})+T_{\Apxball}(\stage G{t_{C}},\stage C{t_{C}},\frac{\stretch}{4\eps}d_{\Clevel(C)},\eps))
\]
where $\cC^{ALL}$ contains all cores that have ever been initialized
and, for each $C\in\cC^{ALL}$, $t_{C}$ is the time $C$ is initialized and added to $\cC$.
We guarantee that $\sum_{C\in\cC^{ALL}}|\ball_{\stage G{t_{C}}}(\stage C{t_{C}},\frac{\stretch}{4\eps}d_{\Clevel(C)})|\le O(kn^{1+2/k}/\scatter)$.
\end{theorem}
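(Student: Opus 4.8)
The plan is to build the covering \emph{hierarchically and on demand}, never computing a global low-diameter decomposition. I would maintain for every vertex $v$ a \emph{home} core $\mathrm{home}(v)$ with $v\in\cover(\mathrm{home}(v))$, and for each $v$ a list of all cores currently covering $v$ (updated in $O(1)$ per change, driven by the $\Apxball$ instances, whose total incidence work equals $\sum_C|\cover(C)|$, a quantity I bound below). Whenever a $G$-deletion makes $v$ leave the decremental set $\cover(\mathrm{home}(v))$, or makes $\Core$ empty $\mathrm{home}(v)$, I \emph{re-cover} $v$: if some other live core still covers $v$, make it the new home; otherwise run a carving subroutine for $\ell=0,1,\dots,k-1$. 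At level $\ell$ the subroutine runs one bounded-radius Dijkstra from $v$ and looks for a radius $\rho$ from a short candidate list in $[0,\Theta(d_{\ell-1})]$ ($\rho=0$ allowed, $d_{-1}$ read as $0$) obeying the \emph{slow-growth} test $|\ball_G(v,\Theta(d_{\ell+1}))|\le n^{2/k}|\ball_G(v,\rho)|$. If found, it creates a new level-$\ell$ core with $\Cinit\defeq\ball_G(v,\rho)$ — so $\diam_G(\Cinit)\le 2\rho\le d_\ell$, as \Cref{def:Core} demands — opens $C=\Core(G,\Cinit,d_\ell)$ and the $\Apxball$ instances realizing $\cover(C)=\Apxball(G,C,4d_\ell,\eps)$ and $\shell(C)=\Apxball(G,C,\tfrac{\stretch}{4\eps}d_\ell,\eps)$ ($\oshell(C)$ is a plain ball, needing no data structure), sets $\mathrm{home}(v)\defeq C$, and returns; otherwise it moves to level $\ell+1$. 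Initially I re-cover every vertex.

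\textbf{Structural correctness.} The associated sets are by construction exactly those of \Cref{def:Covering}, and decremental by \Cref{prop:ApxBall}. Same-level cores are vertex-disjoint: I carve around $v$ only when $v$ is uncovered, and $\ball_G(v,\rho)$ meeting a live level-$\ell$ core $C'$ would give $\dist_G(v,C')\le\rho\ll 4d_\ell$, hence $v\in\cover(C')$, a contradiction. The perpetual-covering property (Item~3 of \Cref{def:Covering}) reduces to: carving always succeeds by level $k-1$; if it failed at every level $0,\dots,k-1$, then selecting the failures at every second level so their radius windows chain and multiplying the $\ge n^{2/k}$ blow-ups yields $|\ball_G(v,\Theta(d_k))|>n$, impossible. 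The bound $\diam_G(\Cinit)\le d_\ell$ was noted, the stretch bound on the maintained core is inherited verbatim from the supplied $\Core$, and the distance scales fed to $\Core$ and $\Apxball$ stay within the ranges the hypotheses cover (the largest is the level-$(k-1)$ shell radius $\tfrac14 d_k\le d(\stretch/\eps)^k$).

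\textbf{Counting and running time.} The heart of the proof is a double-counting bound on core creations that yields all three quantitative claims. Fix a level $\ell$ and a vertex $v$. I would first show $v$ is re-covered at level $\ell$ at most $O(\log n/\scatter)$ times: $v$ can be the home of a given level-$\ell$ core $C$ at most once (as $\cover(C)$ is decremental), and that episode ends when $v$ leaves $\cover(C)$, which — by the triangle inequality and the \emph{scattered} guarantee of \Cref{def:Core}, applied to the vertices that left $C$ — forces a $(1-\scatter)$-factor drop of the monotone quantity $|\ball_G(v,\Theta(d_\ell))\cap\Cinit|$, and there are only $O(\log_{1/(1-\scatter)} n)$ such drops. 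Two more facts finish it: the slow-growth test makes $|\shell(C)|$ at $t_C$ at most $n^{2/k}|\Cinit|$, and seeds of simultaneously-live level-$\ell$ cores are disjoint, so per re-covering round they sum to $\le n$. Summing over levels and rounds gives $\sum_{C\in\cC^{ALL}}|\ball_{\stage G{t_C}}(\stage C{t_C},\tfrac{\stretch}{4\eps}d_{\Clevel(C)})|=O(kn^{1+2/k}/\scatter)$ (after the logarithmic factors are folded in, as is standard in this parameter regime; \Cref{prop:ApxBall} passes from the $\Apxball$-set back to the ball). For $\Delta$ one argues per vertex: for fixed $v$ and $\ell$, the level-$\ell$ cores whose outer shell ever reaches $v$ have pairwise-disjoint seeds inside $\ball_G(v,\Theta(d_{\ell+1}))$, and slow-growth caps the number of such seeds by $n^{O(1/k)}$ per re-covering round, so over $k$ levels and the $O(\log n/\scatter)$ rounds $v$ lies in $O(kn^{2/k}/\scatter)$ outer shells. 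Finally the running time decomposes into the carving Dijkstras (one per attempted level, cost $\Otil(|\ball_G(v,\Theta(d_\ell))|)$, dominated by the shells just bounded and hence $O(kn^{1+2/k}\log n/\scatter)$ total), the $O(1)$-per-incidence bookkeeping (same budget), and the data-structure work $\sum_{C\in\cC^{ALL}}(T_{\Core}(\stage G{t_C},\stage C{t_C},d_{\Clevel(C)})+T_{\Apxball}(\stage G{t_C},\stage C{t_C},\tfrac{\stretch}{4\eps}d_{\Clevel(C)},\eps))$ for the instances opened, with \Cref{rem:apx-ball-monotonic} available to replace the time-$t_C$ instances by static ones where convenient.

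\textbf{Main obstacle.} The delicate, parameter-sensitive step is the scattering-based charging: \Cref{def:Core} controls only the radius-$2D$ ball around a vertex that \emph{left} $C$, whereas a vertex $v$ leaving $\cover(C)$ can be $\approx 4d_\ell$ from $C$, so one must choose the carving radii, the scale $D$ passed to $\Core$, and the cover radius so that ``$v$ leaves $\cover(C)$'' always certifies a genuine $(1-\scatter)$-factor shrinkage of a \emph{monotone} potential attached to $v$; pinning down that the right potential is $|\ball_G(v,\Theta(d_\ell))\cap\Cinit|$ — a fixed ball against the \emph{initial} seed, nothing dynamic — is the key realization. A second non-trivial point is upgrading the outer-shell count from the easy \emph{average} bound (immediate from the shell sum) to the worst-case $\Delta$, which is precisely where disjointness of same-level cores and the slow-growth control on how fast $\ball_G(v,\cdot)$ can grow between consecutive levels are spent.
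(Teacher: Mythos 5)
There is a genuine gap, and it sits exactly at the step you flagged as the ``main obstacle'': your charging argument for the number of re-coverings of $v$ at level $\ell$ does not go through. In your scheme $v$ is typically only in $\cover(\mathrm{home}(v))$, not a \emph{member} of the core, so the scattered guarantee of \Cref{def:Core} never applies to $v$ itself; it only applies to vertices of $\Cinit$ that were removed from the maintained core set. Running your triangle-inequality argument, the most you can extract at the moment $v$ leaves $\cover(C)$ is that every $w\in\ball_G(v,d_\ell)\cap\Cinit$ has been scattered, hence $|\ball_G(v,d_\ell)\cap\Cinit|\le(1-\scatter)|\Cinit|$. This is a comparison against $|\Cinit|$, not against the value this quantity had when the episode began (the seed, of radius $\le\Theta(d_{\ell-1})$ around some \emph{other} center, may start entirely outside $\ball_G(v,d_\ell)$), and it certifies no permanent departures from any fixed decremental set attached to $v$: the scattered vertices of $\Cinit$ can all still sit at distance, say, $4.5\,d_\ell$ from $v$. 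Moreover the reference set $\Cinit$ changes from one episode to the next, so $|\ball_G(v,\Theta(d_\ell))\cap\Cinit|$ is not a single monotone potential and the claimed $O(\log_{1/(1-\scatter)}n)$ bound on re-coverings per level does not follow. A similar problem breaks your $\Delta$ and shell-sum counts: seed disjointness holds only among \emph{simultaneously live} same-level cores, so within one ``re-covering round'' of $v$ arbitrarily many overlapping level-$\ell$ cores (driven by other vertices' churn) can be created and die, and your relative slow-growth test even allows singleton seeds, so ``disjoint seeds inside $\ball_G(v,\Theta(d_{\ell+1}))$'' does not cap the count by $n^{O(1/k)}$, nor does the scattering certificate ($\ge\scatter|\Cinit|$ vertices) have any absolute size.

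The paper's construction avoids both issues by a small but decisive design choice: whenever some vertex $v$ is uncovered, it seeds a new core \emph{at $v$ itself}, with $\Cinit=\ball_G(v,d_\ell)$ for the minimal $\ell$ satisfying the \emph{absolute} threshold $|\ball_G(v,d_{\ell+1})|\le n^{(\ell+1)/k}$. Then $v$ is a member of every core it ever participates in, so when $v$ leaves a core the scattered property applies to $v$ directly and, since $\Cinit\subseteq\ball_G(v,2d_\ell)$ at creation, it certifies that at least $\scatter|\Cinit|>\scatter n^{\ell/k}$ vertices have permanently left the decremental ball $\ball_G(v,2d_\ell)$, whose initial size is at most $n^{(\ell+1)/k}$ by the threshold; this is an additive charge giving $O(n^{1/k}/\scatter)$ level-$\ell$ memberships per vertex (\Cref{lem:bound core}). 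The outer-shell bound $\Delta=O(kn^{2/k}/\scatter)$ and the shell sum are then obtained not from seed disjointness over time but from a credit scheme (\Cref{lem:bound shell}) that spends the membership bound: each core charges one credit to every seed vertex, seeds have size $>n^{\ell/k}$, and each chargeable vertex can pay at most $O(n^{1/k}/\scatter)$ times. If you want to salvage your home-pointer variant, you would need to (i) make the uncovered vertex a core member (or otherwise arrange that leaving the cover triggers a scattering certificate \emph{about $v$}), (ii) replace the relative slow-growth test by absolute level thresholds so seeds are $\Omega(n^{\ell/k})$-sized and balls around $v$ are $O(n^{(\ell+1)/k})$-sized, and (iii) prove a per-vertex core-membership bound before attempting the outer-shell count --- at which point you have essentially reconstructed the paper's proof.
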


\begin{algorithm}[!ht]
\tcc{While there exists a vertex $v\in V$ not covered by any core in $\cC$.}
\While{$\exists v, \forall C\in\cC, v \notin\cover(C)$}{
    Let $\ell$ be the smallest integer with $|\ball_{G}(v,d_{\ell+1})|\le n^{(\ell+1)/k}$.\label{lne:shellissmall}\;
    $\Cinit \gets \ball_{G}(v,d_{\ell}).$\label{step:covering:level}\;
    Maintain core set $C=\Core(G,\Cinit,d_{\ell})$,
    \label{step:covering:shell} $\cover(C)=\Apxball(G,C,4d_{\ell},\eps)$\label{step:covering:core}
and $\shell(C)=\Apxball(G,C,\frac{\stretch}{4\eps}d_{\ell},\eps)$. \;
    $\Clevel(C)=\ell$.\;
    Add core $C$ to covering $\cC$. When $C$ becomes equal to $\emptyset$ (because $\Core(\cdot)$ terminates), remove $C$ from $\cC$ and stop maintaining $\cover(C)$ and $\shell(C)$.\;
}
\caption{$\Covering(G, d,k,\eps$, $\stretch)$ \label{alg:Covering}}
\end{algorithm}

The algorithm for maintaining the covering is described in \Cref{alg:Covering}. It is rather straight-forward: whenever there is a vertex $v$ that is not covered by any core in $\mathcal{C}$, then we make $v$ (together with some vertices in the ball centered around $v$ to some carefully chosen radius) a core $C^{init}$ itself. 

We first describe the basic guarantee of \Cref{alg:Covering}.
\begin{prop}
	\label{prop:clustering basic}We have the following:
	\begin{enumerate}
		\item A level $\ell$ assigned to each core $C$ is between $0$ and $k-1$. 
		\item Every vertex is covered by some core.
		\item At any stage, all cores $C$ from the same level are vertex disjoint. 
	\end{enumerate}
\end{prop}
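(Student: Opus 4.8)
The plan is to verify the three claims directly from the structure of \Cref{alg:Covering}, which is essentially a greedy covering process.

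For claim (1), that $\ell \in [0,k-1]$: First I would examine how $\ell$ is chosen in \Cref{lne:shellissmall}, namely as the smallest integer with $|\ball_G(v,d_{\ell+1})| \le n^{(\ell+1)/k}$. I need to argue such an $\ell$ always exists with $\ell \le k-1$, i.e. that we never "run out" of levels. The key observation is that for $\ell = k-1$, the condition reads $|\ball_G(v, d_k)| \le n^{k/k} = n$, which is trivially satisfied since the whole graph has only $n$ vertices. Hence the smallest valid $\ell$ is at most $k-1$, and by definition it is at least $0$. (One should also note that $d_{\ell}$ for $\ell \in [0,k-1]$ keeps us within the range $[d, d(\stretch/\eps)^{k-1}]$ where both $\Apxball$ and $\Core$ data structures are assumed available, so the maintained structures in \Cref{step:covering:shell}--\Cref{step:covering:core} are well-defined.)

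For claim (2), that every vertex is covered by some core: this is immediate from the termination condition of the while-loop. The loop runs precisely while there exists a vertex $v$ with $v \notin \cover(C)$ for all $C \in \cC$; so when (and only when) the loop is not executing, every vertex lies in some $\cover(C)$. I would also remark why this is consistent over time: once a core $C$ covers $v$, it continues to do so until $v$ leaves $\cover(C) = \Apxball(G,C,4d_\ell,\eps)$ (a decremental set), at which point $v$ becomes uncovered and the while-loop fires again to create a fresh core covering $v$; between such events the invariant is maintained, and the algorithm always restores it before "returning control". A subtle point to address is that the loop terminates (finitely many cores are created); this will follow from the scattering property of $\Core$, but for this basic proposition it suffices to note that each core creation is triggered by an uncovered vertex and the subsequent analysis (Theorem \ref{thm:covering}) bounds the total number of cores — here I would just cite forward or note it is established later, since the Proposition as stated only asserts the structural invariants hold "at any stage".

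For claim (3), that all cores from the same level are vertex disjoint at any stage: here the main idea is that when a new level-$\ell$ core $C^{init} = \ball_G(v, d_\ell)$ is created, $v$ was uncovered, meaning in particular $v \notin \cover(C')$ for every existing level-$\ell$ core $C'$. I would then argue that the cores of a single level, once created disjoint, stay disjoint, and that a newly created core is disjoint from all existing same-level cores. The crux is the following: for an existing level-$\ell$ core $C'$, we have $\cover(C') = \Apxball(G,C',4d_\ell,\eps) \supseteq \ball_G(C', 4d_\ell) \supseteq \ball_G(C', d_\ell)$ by \Cref{prop:ApxBall}. Since $v \notin \cover(C')$, we get $\dist_G(v, C') > 4d_\ell$. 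But then $\ball_G(v,d_\ell)$ and $C'$ are disjoint — indeed any $w \in \ball_G(v,d_\ell) \cap C'$ would give $\dist_G(v,C') \le d_\ell < 4d_\ell$, a contradiction. Moreover $\Cinit = \ball_G(v,d_\ell)$ is also disjoint from $\cover(C')$ is not needed; disjointness of the \emph{core sets} is what claim (3) asks, and since $C \subseteq \Cinit$ and $C' $ only shrinks, disjointness is preserved under the decremental dynamics. I expect claim (3) to be the part requiring the most care, precisely in lining up the radii ($d_\ell$ for the new core versus the $4d_\ell$ slack in $\cover$) and in confirming the invariant is preserved both when cores shrink and when new same-level cores are inserted; the inequality $\dist_G(v,C') > 4d_\ell \ge d_\ell$ is exactly the margin that makes it work, so the proof is short but the bookkeeping on "at any stage" is where an error could creep in.
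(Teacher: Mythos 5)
Your proposal is correct and follows essentially the same route as the paper: (1) via the observation that level $k-1$ always satisfies the ball-size condition, (2) directly from the while-loop condition, and (3) by noting that an uncovered vertex $v$ satisfies $\dist_G(v,C')>4d_\ell$ for every existing level-$\ell$ core $C'$ (using $\ball_G(C',4d_\ell)\subseteq\cover(C')$), so the new core $\ball_G(v,d_\ell)$ is disjoint from them, with disjointness preserved because cores are decremental. The extra bookkeeping you flag is fine but not needed beyond what you already wrote.
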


\begin{proof}
(1): Otherwise, there is a vertex $v$ such that $|\ball_{G}(v,d_{k})|>n$
which is impossible. (2): This follows directly from \Cref{alg:Covering}. (3): Since every core $C$ is a decremental set by the guarantee of
$\Core$, it is enough to show that whenever a core $C=\ball_{G}(v,d_{\ell})$
is initialized with level $\ell$, $C$ is disjoint from other cores
$C'$ with level $\ell$. This holds because $v$ is not covered by
any level-$\ell$ core $C'$ and so $\dist_{G}(C',v)>4d_{\ell}$.
So $\ball_{G}(v,d_{\ell})$ is disjoint from all level-$\ell$ cores
$C'$.
\end{proof}
Therefore, to show that an $(d,k,\eps,\stretch,\Delta)$-clustering
$\cC$ of $G$ is maintained, it remains the bound $\Delta$, i.e.,
the number of outer-shells each vertex $v$ can ever participate in.
To do this, we first prove an intermediate step that bounds the number
of cores a vertex can participate.

\begin{lem}
\label{lem:bound core}For each level $\ell$, each vertex $v$ can
ever participate in at most $O(n^{1/k}/\scatter)$ many level-$\ell$
cores. 
\end{lem}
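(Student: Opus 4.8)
The plan is to fix a level $\ell$ and a vertex $v$, list the level-$\ell$ cores $C_1,\dots,C_N$ in which $v$ ever participates — i.e.\ $v$ belongs to the core set of $C_i$ at some stage, equivalently $v\in\Cinit_i$ since core sets are decremental — in the order they are initialized by \Cref{alg:Covering}, say at stages $t_1<t_2<\dots<t_N$, and to bound $N$ by tracking how fast the ball $\ball_{G}(v,2d_\ell)$ shrinks. For each $i$ let $v_i$ be the uncovered vertex that triggered the creation of $C_i$ and let $\Cinit_i$ be the initial core, so $\Cinit_i=\ball_{\stage G{t_i}}(v_i,d_\ell)$ with $v\in\Cinit_i$, hence $\dist_{\stage G{t_i}}(v,v_i)\le d_\ell$. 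Since $\eps\le 0.1$ and $\stretch\ge 1$ we have $d_{\ell+1}/d_\ell=\stretch/\eps\ge 10$, so from $\dist_{\stage G{t_i}}(v,v_i)\le d_\ell$ and the triangle inequality we obtain the two containments $\Cinit_i\subseteq\ball_{\stage G{t_i}}(v,2d_\ell)$ and $\ball_{\stage G{t_i}}(v,2d_\ell)\subseteq\ball_{\stage G{t_i}}(v_i,3d_\ell)\subseteq\ball_{\stage G{t_i}}(v_i,d_{\ell+1})$. By the level-selection rule of \Cref{lne:shellissmall}, $|\ball_{\stage G{t_i}}(v_i,d_{\ell+1})|\le n^{(\ell+1)/k}$; and by minimality of $\ell$ (for $\ell\ge 1$, and trivially for $\ell=0$) we have $|\Cinit_i|=|\ball_{\stage G{t_i}}(v_i,d_\ell)|\ge n^{\ell/k}$. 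Combining, $|\ball_{\stage G{t_i}}(v,2d_\ell)|\le n^{(\ell+1)/k}$ for every $i$, and in particular for $i=1$.

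The heart of the argument is that between stages $t_i$ and $t_{i+1}$ the vertex $v$ must be deleted from the core set $C_i$, and each such deletion removes at least $\scatter\,n^{\ell/k}$ vertices from the ball $\ball_G(v,2d_\ell)$ permanently. At stage $t_{i+1}$ we have $v\in\Cinit_{i+1}$, so $v$ lies in the (initial) core set of $C_{i+1}$; since all level-$\ell$ cores are pairwise vertex-disjoint at every stage (\Cref{prop:clustering basic}), $v$ is not in the core set of $C_i$ at stage $t_{i+1}$, so $v$ was removed from $C_i$ at some stage $\tau_i\le t_{i+1}$ (if $C_i$ had already terminated, even earlier). At stage $\tau_i$ we have $v\in\Cinit_i\setminus C_i$, so the scattering property of $\Core(G,\Cinit_i,d_\ell)$ (\Cref{def:Core}, with $D=d_\ell$) gives $|\ball_{\stage G{\tau_i}}(v,2d_\ell)\cap\Cinit_i|\le(1-\scatter)|\Cinit_i|$. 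Thus at least $\scatter|\Cinit_i|\ge\scatter\,n^{\ell/k}$ vertices of $\Cinit_i$ — all of which lay in $\ball_{\stage G{t_i}}(v,2d_\ell)$ by the first containment above — have left the radius-$2d_\ell$ ball around $v$ by stage $\tau_i$, and since $G$ is decremental they never re-enter it. Hence
$$|\ball_{\stage G{t_{i+1}}}(v,2d_\ell)| \le |\ball_{\stage G{\tau_i}}(v,2d_\ell)| \le |\ball_{\stage G{t_i}}(v,2d_\ell)| - \scatter\,n^{\ell/k}.$$

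Telescoping this inequality over $i=1,\dots,N-1$ and using $|\ball_{\stage G{t_N}}(v,2d_\ell)|\ge 0$ together with $|\ball_{\stage G{t_1}}(v,2d_\ell)|\le n^{(\ell+1)/k}$ yields $(N-1)\,\scatter\,n^{\ell/k}\le n^{(\ell+1)/k}$, i.e., $N\le 1+n^{1/k}/\scatter=O(n^{1/k}/\scatter)$, which is the claim. I expect the only genuinely delicate points to be (i) ensuring the scattering guarantee is actually available at the stage $v$ leaves $C_i$ even when $C_i$ terminates because its core set empties out — for which one uses that \Cref{alg:Core} sets $K\gets\emptyset$ only after $\CertifyCore$ has certified the scattered condition on all of $\Kinit$ — and (ii) the careful ordering of the stages $t_i\le\tau_i\le t_{i+1}$ together with the monotonicity of decremental balls across them. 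The ball-containment arithmetic (which needs only $\stretch/\eps\ge 3$) is routine.
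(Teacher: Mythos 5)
Your proof is correct and follows essentially the same route as the paper's: it charges the shrinkage of $\ball_G(v,2d_\ell)$, using the level-selection rule of \Cref{lne:shellissmall} to bound the initial ball size by $n^{(\ell+1)/k}$, the minimality of $\ell$ to lower bound $|\Cinit_i|$ by $n^{\ell/k}$, and the scattering property of $\Core$ to remove $\scatter\, n^{\ell/k}$ vertices from the (decremental) ball between consecutive cores containing $v$. Your version merely makes the paper's charging argument more explicit via the stages $t_i\le\tau_i\le t_{i+1}$ and a telescoping sum.
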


\begin{proof}
We prove the lemma by charging the number of vertices in $B_{v}=\ball_{G}(v,2d_{\ell})$. 

We first observe that initially, i.e. at the first time that $v$ is added to a level-$\ell$ core, we have that $|B_{v}| \leq n^{(\ell+1)/k}$. This follows since when $v$ is first added to some $C^{init} = \ball_G(u,d_{\ell})$ in \Cref{step:covering:level} of \Cref{alg:Covering}, the ball of the core is centered at some vertex $u$. But we clearly have $\ball_G(u,d_{\ell}) \subseteq B_v$. On the other hand, the algorithm ensures by choice of $\ell$ in \Cref{lne:shellissmall} that $|\ball_G(u,d_{\ell+1})| \leq n^{(\ell+1)/k}$ but we also have that $B_v \subseteq \ball_G(u,d_{\ell+1})$ which establishes the claim.

\begin{figure}[!ht]
\centering
\includegraphics[scale=.33]{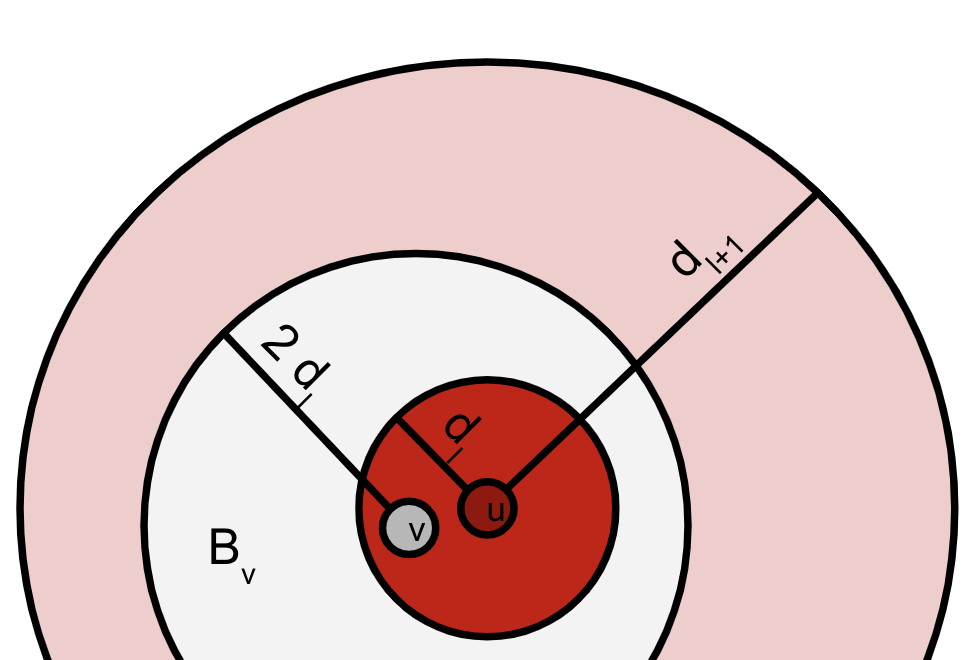}
\end{figure}

Next, recall from \Cref{prop:clustering basic} that all level-$\ell$ cores are vertex disjoint. Thus, the cores $C_1, C_2, \dots, C_{\tau}$ that $v$ participates in over time have the property that each core $C_{i+1}$ is initialized only after $v$ has left core $C_i$. Now consider some core $C_i$, that was initialized to $C_i^{init} = \ball_G(u, d_{\ell})$, i.e. the ball centered at some $u$ (as discussed above). Observe that $|\ball_G(u, d_{\ell})| > n^{\ell/k}$ by the minimality of $\ell$ (see again \Cref{lne:shellissmall}). 

But we have that $C_i^{init}$ was in $B_v$ when $v$ was added to $C_i$. Further, when $v$ leaves $C_i$, we have by the definition of $\Core$ (see in particular the Scattered Property in \Cref{def:Core}, and the parameters used in \Cref{step:covering:shell}) that only $|B_v \cap C_i^{init}| \leq (1-\scatter)|C_i^{init}|$ of the vertices in $C_i^{init}$ are still in $B_v$. Combined, this implies that $\Omega(n^{\ell/k} \scatter)$ vertices are leaving $B_v$ in between $v$ joining $C_i$ and $C_{i+1}$ for every $i$. 

Using that initially $|B_{v}| \leq n^{(\ell+1)/k}$, we derive that the number of level-$\ell$ cores that $v$ can participate in is 
\[
\tau \leq 1 +\frac{n^{(\ell+1)/k}}{\Omega(n^{\ell/k} \scatter)} = O(n^{1/k}/\scatter).
\]
\end{proof}
Now, we are ready to prove that $\Delta=O(kn^{2/k}/\scatter)$.

\begin{lem}
	\label{lem:bound shell}For each level $\ell$, each vertex $v$ can
	ever participate in at most $O(n^{2/k}/\scatter)$ many outer-shells
	of level-$\ell$ cores. Thus, over all levels, $v$ can participate
	in at most $O(kn^{2/k}/\scatter)$ many outer-shells.
\end{lem}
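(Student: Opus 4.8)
The plan is to bound the number of outer-shells of level-$\ell$ cores that a fixed vertex $v$ can ever lie in, and then sum over the $k$ levels. The key observation is that membership in $\oshell(C) = \ball_G(C, \frac{\stretch}{3\eps} d_\ell)$ is a statement about $v$ being within a \emph{fixed} distance radius of the core $C$, and that $C$ itself is contained in a ball of small radius around its initial center. So if $v \in \oshell(C)$ for a level-$\ell$ core $C$ that was initialized as $C^{init} = \ball_G(u, d_\ell)$, then $u$ lies within distance $\frac{\stretch}{3\eps} d_\ell + d_\ell = O(\frac{\stretch}{\eps} d_\ell) = O(d_{\ell+1})$ of $v$ (using $\eps \le 0.1$), i.e. $u \in \ball_G(v, O(d_{\ell+1}))$. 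Thus every level-$\ell$ core whose outer-shell ever contains $v$ has its center inside this ball $B'_v := \ball_G(v, O(d_{\ell+1}))$.

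The main step is then to count such centers $u$. First I would observe that, since distances only increase decrementally, once a vertex $u$ leaves $B'_v$ it never returns, so it suffices to bound the number of level-$\ell$ cores ever initialized with center in $B'_v$ \emph{at the time of initialization}. For this I want to reuse the charging argument of Lemma \ref{lem:bound core}: each such core $C_i^{init} = \ball_G(u_i, d_\ell)$ has $|C_i^{init}| > n^{\ell/k}$ by minimality of the level (Line \ref{lne:shellissmall}); distinct level-$\ell$ cores present simultaneously are vertex-disjoint (Proposition \ref{prop:clustering basic}); and whenever $u_i$ (or more precisely, when the core $C_i$ dies / $v$ would be charged) a $\scatter$-fraction of $C_i^{init}$ scatters out of the relevant ball by the Scattered property of $\Core$. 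The cleanest route is: apply Lemma \ref{lem:bound core} directly but with respect to the ball $B'_v$ instead of $B_v$ — i.e. re-run that argument charging vertices leaving $\ball_G(v, O(d_{\ell+1}))$. The number of vertices in that ball at the start is at most $n^{(\ell+2)/k}$ (by the choice of $\ell$-rule applied one level up, or simply by the trivial bound if $\ell+2 > k$, in which case it is $\le n$ and we fold the constant in), and each level-$\ell$ core consumes $\Omega(n^{\ell/k}\scatter)$ of them before the next can reuse $v$'s vicinity. This yields at most $O(n^{2/k}/\scatter)$ level-$\ell$ cores with center in $B'_v$, hence at most $O(n^{2/k}/\scatter)$ level-$\ell$ outer-shells containing $v$.

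Summing over the $k$ levels gives the claimed bound $\Delta = O(k n^{2/k}/\scatter)$, matching the statement of Theorem \ref{thm:covering}.

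I expect the main obstacle to be the bookkeeping around the radius: making sure that the enlarged ball $\ball_G(v, O(d_{\ell+1}))$ is genuinely only a constant factor larger than the $2d_\ell$ ball used in Lemma \ref{lem:bound core}, so that the same size bound $n^{(\ell+1)/k}$ (up to bumping $\ell$ by one) still applies. Concretely, $\frac{\stretch}{3\eps} d_\ell = \frac{1}{3} d_{\ell+1}$, so $\ball_G(C, \frac{\stretch}{3\eps} d_\ell) \subseteq \ball_G(v, \frac{1}{3}d_{\ell+1} + d_\ell) \subseteq \ball_G(v, d_{\ell+1})$ (since $d_\ell \le \tfrac{2}{3} d_{\ell+1}$ as $\stretch/\eps \ge 3/2$), and $\ball_G(v, d_{\ell+1}) \subseteq \ball_G(u', d_{\ell+2})$ for the center $u'$ of any core of level $\ell+1$ covering $v$ — but to avoid circular dependence on level $\ell+1$ I would instead just bound $|\ball_G(v, d_{\ell+1})|$ directly: either it is $\le n^{(\ell+1)/k}$ if $\ell$ was not forced (but it may be larger), so the safe statement is the crude $|\ball_G(v,d_{\ell+1})| \le n$ combined with the per-core consumption $\Omega(n^{\ell/k}\scatter)$, giving $\le 1 + n^{1-\ell/k}/(\Omega(\scatter)) $ — this is too weak. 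The right fix is to charge against $\ball_G(v, 2d_\ell)$ exactly as in Lemma \ref{lem:bound core} but observe that a vertex of $v$'s level-$\ell$ \emph{cover} core witnesses $|\ball_G(v, 2d_\ell)| = O(n^{(\ell+1)/k})$ at the relevant times; I would spell this out carefully, as it is the one genuinely delicate point. Everything else is a direct invocation of Lemma \ref{lem:bound core}'s machinery.
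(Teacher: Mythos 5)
There is a genuine gap, and it is not the one you flag. The counting mechanism of \Cref{lem:bound core} does not transfer to outer-shells, no matter which radius you charge against. In \Cref{lem:bound core} the argument works because $v$ itself belongs to the cores: coexisting level-$\ell$ cores are vertex-disjoint, so $v$ must be removed from $C_i$ before joining $C_{i+1}$, and the Scattered property applied to $v$ (a vertex of $C_i^{init}\setminus C_i$) forces $\Omega(\scatter\, n^{\ell/k})$ vertices of $C_i^{init}$ out of $\ball_G(v,2d_\ell)$. For outer-shells none of this is available: $v$ need not belong to any of these cores, outer-shells of \emph{distinct coexisting} level-$\ell$ cores can all contain $v$ simultaneously (disjointness holds for cores, not for shells), and the Scattered property shrinks balls centered at the scattered core vertices, not at $v$. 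So your key step --- ``each level-$\ell$ core consumes $\Omega(n^{\ell/k}\scatter)$ vertices of $\ball_G(v,O(d_{\ell+1}))$ before the next can reuse $v$'s vicinity'' --- is simply false: two disjoint cores created at the same moment can both have $v$ in their outer-shells with zero deletions in between. Your fallback (``charge against $\ball_G(v,2d_\ell)$ using a vertex of $v$'s level-$\ell$ cover core'') also fails: $v$'s cover core can live at any level, and a core with $v$ in its outer-shell may be entirely disjoint from $\ball_G(v,2d_\ell)$, since the outer-shell radius $\frac{\stretch}{3\eps}d_\ell$ is far larger than $2d_\ell$. Finally, the asserted bound $|\ball_G(v,O(d_{\ell+1}))|\le n^{(\ell+2)/k}$ for an arbitrary $v$ has no justification; the minimality rule of \Cref{lne:shellissmall} constrains only balls around centers of newly created cores at their creation time.

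The paper closes the gap with a different, two-level credit scheme that uses \Cref{lem:bound core} as a black box instead of re-running its ball-shrinking argument. Set $B_v=\ball_G(v,\frac{\stretch}{2\eps}d_\ell)$. If a level-$\ell$ core $C_i$ ever has $v$ in its outer-shell, then (triangle inequality) $C_i^{init}\subseteq B_v$ at $C_i$'s creation time; and at the creation time of the \emph{first} such core $C_1^{init}=\ball_G(u,d_\ell)$, \Cref{lne:shellissmall} gives $|\ball_G(u,d_{\ell+1})|\le n^{(\ell+1)/k}$ with $B_v\subseteq\ball_G(u,d_{\ell+1})$, so at most $n^{(\ell+1)/k}$ vertices can ever be charged. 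Each relevant core charges one credit to each of its $\ge n^{\ell/k}$ initial vertices, and by \Cref{lem:bound core} every vertex belongs to at most $O(n^{1/k}/\scatter)$ level-$\ell$ cores, hence pays at most that many credits. The total credit supply $O(n^{(\ell+2)/k}/\scatter)$ divided by the $n^{\ell/k}$ credits per core yields $O(n^{2/k}/\scatter)$, and summing over levels gives the lemma. This membership-counting idea is the missing ingredient; without it (or an equivalent), your plan does not go through.
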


\begin{proof}
We again use an argument where we charge $B_v$ for a specific level $\ell$. However, this time we let the radius of $v$ be twice the radius of a shell at level $\ell$ (and also larger by a small fraction than that of an outer-shell), i.e. we define $B_v = \ball_{G}(v,\frac{\stretch}{2\eps}d_{\ell})$.

Let $C_1, C_2, \dots, C_{\tau}$ be the cores that have $v$ in their outer-shell (let them be ordered increasingly by their initialization time). Since each core $C_i$ is decremental, if $v$ is ever in the outer-shell $\oshell(C_i)$, then it is also in the outer-shell of $C_i$ upon $C_i$'s initialization. I.e. then $v \in \oshell(C_i^{init})$.

Note that when $v$ is added to the outer-shell $\oshell(C_1^{init})$ of $C_1^{init} = \ball(u, d_{\ell})$ then at that stage we also have that $|\ball(u,d_{\ell+1})| \leq n^{(\ell+1)/k}$ (by \Cref{lne:shellissmall}). But this implies that $|B_v| \leq n^{(\ell+1)/k}$ since $B_v$ can only include vertices at distance at most $\dist_G(v,u) + \frac{\stretch}{2\eps}d_{\ell} \leq d_{\ell+1}$ from $u$.

We now use a slightly more advanced charging scheme than in \Cref{lem:bound core}. To this end, consider the process where we, for every $C_i$, charge every vertex $w \in C_i^{init}$ a single credit. We note first, that by our analysis above there are at most $n^{(\ell+1)/k}$ vertices that can ever pay a credit since cores that are not fully contained in $B_v$ when $C_1$ is initialized cannot have $v$ in their outer-shell (this follows by a straight-forward application of the triangle inequality). Further, each vertex $w$ is in at most $O(n^{1/k}/\scatter)$ level-$\ell$ cores by \Cref{lem:bound core}. This bounds the total number of available credits by $O(n^{(\ell+2)/k}/\scatter)$. 

But on the other hand, each core $C_i$ at level $\ell$ has an initial set $C_i^{init}$ of size at least $n^{\ell/k}$ by minimality of $\ell$ in \Cref{lne:shellissmall} when $C_i$ is initialized. But then each such core charges at least $n^{\ell/k}$ credits in the above scheme. The bound follows. 
\end{proof}

Finally, we finish with the running time analysis.
\begin{lem}
\label{lem:time clustering}The total update time of \Cref{alg:Covering}
is at most 
\[
O(kn^{1+2/k}\log n/\scatter+\sum_{C\in\cC^{ALL}}T_{\Core}(\stage G{t_{C}},\stage C{t_{C}},d_{\Clevel(C)})+T_{\Apxball}(\stage G{t_{C}},\stage C{t_{C}},\frac{\stretch}{4\eps}d_{\Clevel(C)},\eps))
\]
where $\cC^{ALL}$ contains all cores that have ever been initialized
and, for each core $C\in\cC^{ALL}$, $t_{C}$ is the time $C$ is
initialized. We guarantee that $\sum_{C\in\cC^{ALL}}|\ball_{\stage G{t_{C}}}(\stage C{t_{C}},\frac{\stretch}{4\eps}d_{\Clevel(C)})|\le O(kn^{1+2/k}/\scatter).$
\end{lem}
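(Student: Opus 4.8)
The plan is to split the running time of \Cref{alg:Covering} into three parts: (i)~the ``internal'' work of the algorithm itself (finding uncovered vertices, choosing the level $\ell$ of each new core, extracting $\Cinit$, and routing each deletion of $G$ to the relevant sub-structures), (ii)~the total update time of all $\Core$ instances, and (iii)~the total update time of all $\Apxball$ instances. Parts (ii) and (iii) are essentially bookkeeping: each core $C\in\cC^{ALL}$ is handled by one instance $\Core(\stage G{t_C},\stage C{t_C},d_{\Clevel(C)})$ (for the core set) and by two $\Apxball$ instances, of radii $4d_{\Clevel(C)}$ and $\frac{\stretch}{4\eps}d_{\Clevel(C)}$ (for $\cover(C)$ and $\shell(C)$). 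Since the update sequence each such instance actually receives is a legal decremental sequence on its initial graph, its running time is at most the worst-case $T_{\Core}$ (resp.\ $T_{\Apxball}$) on those initial arguments; scheduling the updates to the right instance is immediate. Because $4\le\frac{\stretch}{4\eps}$ in the parameter regime of interest, the monotonicity of $T_{\Apxball}$ in the radius (\Cref{rem:apx-ball-monotonic}) lets us fold the $\cover(C)$ instance into the $\shell(C)$ instance, yielding exactly the two families of terms in the statement.

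I would prove the auxiliary bound $\sum_{C\in\cC^{ALL}}|\ball_{\stage G{t_C}}(\stage C{t_C},\frac{\stretch}{4\eps}d_{\Clevel(C)})|=O(kn^{1+2/k}/\scatter)$ first, since part (i) reduces to it. Write $R_C$ for this ball; at time $t_C$ we have $\stage C{t_C}=\Cinit$, so by \Cref{prop:ApxBall} $R_C\subseteq\shell(C)$ at time $t_C$, and $\shell(C)\subseteq\oshell(C)$ always (as $(1+\eps)\frac{\stretch}{4\eps}\le\frac{\stretch}{3\eps}$). Hence every $w\in R_C$ lies in $\oshell(C)$ at the moment $C$ is created, so the number of cores $C$ with $w\in R_C$ is at most the number of outer-shells $w$ ever belongs to, which is $\Delta=O(kn^{2/k}/\scatter)$ by \Cref{lem:bound shell}. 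Summing over the $n$ vertices, $\sum_C|R_C|=\sum_w|\{C:w\in R_C\}|\le n\Delta=O(kn^{1+2/k}/\scatter)$. Two corollaries I will use: $\sum_C|\stage{\cover(C)}{t_C}|\le\sum_C|R_C|$ (again by \Cref{prop:ApxBall}, since $4(1+\eps)\le\frac{\stretch}{4\eps}$), and $\sum_C|B'_C|\le\sum_C|R_C|$ where $B'_C:=\ball_{\stage G{t_C}}(\Cinit,32d_{\Clevel(C)}\log n)\subseteq R_C$ is the static vertex set to which $C$'s $\Core$ instance is sensitive (using $32\log n\le\frac{\stretch}{4\eps}$).

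For part (i), the algorithm maintains a counter $\mathrm{cov}(v)$ equal to the number of cores currently covering $v$ and a worklist of vertices with $\mathrm{cov}(v)=0$. Each $\cover(C)$ is decremental, so $\mathrm{cov}$ is incremented only at core creations (total $\sum_C|\stage{\cover(C)}{t_C}|$) and decremented at most once per $(v,C)$ pair; maintaining this---and, analogously, maintaining for each vertex the list of $\Core$ instances sensitive to it and forwarding each of the $O(1)$ deletions at a vertex to them---costs $O(\sum_C|R_C|)=O(kn^{1+2/k}/\scatter)$. When an uncovered $v$ is popped, we run a Dijkstra from $v$ of growing radius: for $j=0,1,\dots$ we grow $\ball_G(v,d_{j+1})$, aborting the $j$-th search once more than $n^{(j+1)/k}$ vertices are reached, and accept at the first level $\ell$ with $|\ball_G(v,d_{\ell+1})|\le n^{(\ell+1)/k}$, reading off $\Cinit=\ball_G(v,d_\ell)$ en route. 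With bounded degree this costs $O\big(\log n\sum_{j=1}^{\ell+1}n^{j/k}\big)$, which for $n^{1/k}\ge 2$ (the regime of interest, as $k=O(\log\log n)$) is $O(n^{1/k}\cdot n^{\ell/k}\log n)$; by minimality of $\ell$ we have $|\Cinit|\ge n^{\ell/k}$, so this is $O(n^{1/k}|\Cinit|\log n)$, which I charge to $C$. Since each vertex lies in the initial sets of at most $O(n^{1/k}/\scatter)$ level-$\ell$ cores (\Cref{lem:bound core}), $\sum_{C\in\cC^{ALL}}|\Cinit|\le\sum_\ell n\cdot O(n^{1/k}/\scatter)=O(kn^{1+1/k}/\scatter)$, so the total level-finding cost is $O(n^{1/k}\log n)\cdot O(kn^{1+1/k}/\scatter)=O(kn^{1+2/k}\log n/\scatter)$; the $O(1)$-per-pop overhead over the at most $\sum_C|\Cinit|$ pops is smaller. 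Adding parts (i)--(iii) and the auxiliary bound yields the lemma.

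The main obstacle is getting part (i) down to leading term $O(kn^{1+2/k}\log n/\scatter)$ rather than $O(k^2 n^{1+2/k}\log n/\scatter)$: the key move is to charge the entire growing-radius Dijkstra cost of choosing $\ell$ to $|\Cinit|$---exploiting that the search sizes grow geometrically so the last search dominates---and then invoke the per-level participation bound \Cref{lem:bound core}; a naive per-level summation of the Dijkstra cost, or routing through the raw count of level-$\ell$ cores, loses a factor $k$. Everything else (the sub-structure cost sums, the monotonicity folding, and the auxiliary ball-size bound) is routine given \Cref{lem:bound core}, \Cref{lem:bound shell}, \Cref{prop:ApxBall}, and \Cref{rem:apx-ball-monotonic}.
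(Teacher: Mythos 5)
Your proposal is correct and follows essentially the same route as the paper's proof: the sub-structure costs are charged per core to the $T_{\Core}$ and $T_{\Apxball}$ terms (folding $\cover(C)$ into $\shell(C)$ via radius monotonicity), the update-routing and the auxiliary ball-size bound are obtained from the outer-shell/shell participation bound of \Cref{lem:bound shell}, and the level-finding Dijkstra is charged as $O(|\Cinit|n^{1/k}\log n)$ per core and summed via \Cref{lem:bound core}, exactly as in the paper. The only cosmetic difference is that the paper runs a single resumed Dijkstra rather than restarted geometric searches, which yields the same bound.
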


\begin{proof}
To implement \Cref{alg:Covering}, for each vertex $v$, we will maintain
the lists $\core_{v}=\{C\mid v\in C\}$, $\cover_{v}=\{C\mid v\in\cover(C)\}$,
and $\shell_{v}=\{C\mid v\in\shell(C)\}$. As all cores $C$ and their
covers and shells are maintained explicitly by $\Core$ and $\Apxball$,
the time for maintaining these lists are subsumed by the total update
time of $\Core$ and $\Apxball$. Given an edge update $(u,v)$, we
only need to generate the update $(u,v)$ to all data structures $\Core$
and $\Apxball$ on the cores $C$ where $C\in\shell_{u}\cup\shell_{v}$.
By \Cref{lem:bound shell}, the total number of generated updates is
at most $O(kn^{1+2/k}/\scatter)$. 

From the collection of lists $\{\cover_{v}\}_{v\in V}$, we can report
whenever there is a vertex $v$ which is not covered by any core. 

Suppose that at time $t$ there is such a vertex $v$ and we initialize
a core $C$ with level $\ell$. In \Cref{step:covering:level}, starting
from $\ell=0$, we compute $\ball_{\stage Gt}(v_{C},d_{\ell+1})$
by running Dijkstra, and as long as $|\ball_{\stage Gt}(v_{C},d_{\ell+1})|>n^{\ell/k}$,
we set $\ell\gets\ell+1$ and continue the Dijkstra's algorithm. The
total running time is $O(|\ball_{\stage Gt}(v_{C},d_{\ell+1})|\log n)=O(|\Cinit|n^{1/k}\log n)$.
In \Cref{step:covering:core}, $\Core$ is initialized for maintaining
$C$ using $T_{\Core}(\stage Gt,\Cinit,d_{\ell})$ total update time.
In \Cref{step:covering:shell}, $\Apxball$ is initialized for maintaining
$\cover(C)$ and $\shell(C)$ using at most $2\cdot T_{\Apxball}(\stage Gt,\Cinit,\frac{\stretch}{4\eps}d_{\ell},\eps)$
total update time. We assign $t_{C}\gets t$ for this core $C$. Note
that $\Cinit=\stage C{t_{C}}$. Therefore, the total update time is
can be written as
\begin{align*}
    &O(\sum_{C\in\cC^{ALL}} [|\stage C{t_{C}}|n^{1/k}\log(n)/\scatter+T_{\Core}(\stage G{t_{C}},\stage C{t_{C}},d_{\Clevel(C)})\\
    &+T_{\Apxball}(\stage G{t_{C}},\stage C{t_{C}},\frac{\stretch}{4\eps}d_{\Clevel(C)},\eps)]).
\end{align*}
By \Cref{lem:bound core}, we have that $\sum_{C\in\cC^{ALL}}|\stage C{t_{C}}|=O(kn^{1+1/k}/\scatter)$.
Also, by \Cref{lem:bound shell}, we have 
\[
\sum_{C\in\cC^{ALL}}|\ball_{\stage G{t_{C}}}(\stage C{t_{C}},\frac{\stretch}{4\eps}d_{\Clevel(C)})|\le O(kn^{1+2/k}/\scatter)
\]
because $\ball_{\stage G{t_{C}}}(\stage C{t_{C}},\frac{\stretch}{4\eps}d_{\Clevel(C)})\subseteq\shell(\stage C{t_{C}})$
and $\shell(C)$ is decremental.
\end{proof}

\section{Implementing Approximate Balls}
\label{sec:ball}

In this section, we derive the $\Apxball$ data structure. Here, we use standard techniques from the literature with small adaptions to deal with our compressed graphs.

\begin{theorem}
[Approximate Ball]\label{thm:Apxball}Let $G$ be an $n$-vertex
bounded-degree decremental graph. Let $\eps\le0.1$. Suppose that
a $(d,k,\eps,\stretch,\Delta)$-covering $\cC$ of $G$ is explicitly
maintained for us. We can implement an approximate ball data structure
$\Apxball(G,S,D,50\eps)$ using $\Otil(\left|\ball_{G}(S,D)\right|\Delta\frac{D}{\eps d})+T_{\Apxball}(G,S,2(\frac{\stretch}{\eps})^{k}d,\eps)$
total update time.
\end{theorem}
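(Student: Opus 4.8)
The plan is to reduce the problem, via the covering-compressed graph of $\cC$, to running one bounded-hop Even--Shiloach tree together with a single base-case call to $\Apxball$ at a much smaller distance scale. Write $\gammacompress\defeq(\stretch/\eps)^{k}$, so the covering ``compresses'' distances up to roughly $\gammacompress d$, and recall from \Cref{rem:covering-compressed from covering} that (together with $\cC$) we are handed the weighted covering-compressed graph $H_{\cC}$ of \Cref{def:core compress}. I will build a dynamic weighted graph $H'$ by (i) taking $H_{\cC}$, (ii) adding a dummy source $s$ joined by a weight-$0$ edge to every vertex of the decremental set $S$, and (iii) adding every edge of $G$ both of whose endpoints currently lie in $\ball_{G}(S,(1+50\eps)D)$, keeping its $G$-weight; $H'$ is updated in the obvious way ($S$-shrinkage and $G$-deletions remove edges, $\Apxball$ re-weightings of $H_{\cC}$ only increase weights, and a new core created by $\Covering$ inserts its incident edges). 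The data structure runs a bounded-hop $\estree$ on $H'$ rooted at $s$ with hop bound $h\defeq\lceil D/(\eps d)\rceil$ and distance truncation $(1+50\eps)D$, giving monotone estimates $\dtil_{\mathrm{ES}}(v)$ of $\dist^{(h)}_{H'}(s,v)$; in parallel it runs one instance of $\Apxball(G,S,2\gammacompress d,\eps)$ giving $\dtil_{\mathrm{base}}(v)$; and it outputs $\dtil(v)\defeq\min\{\dtil_{\mathrm{ES}}(v),\dtil_{\mathrm{base}}(v)\}$. (We may assume $\stretch/\eps$ exceeds a large absolute constant, as it does in every invocation of the theorem.)

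For correctness I would verify the three conditions of \Cref{def:Apxball}. Monotonicity of $\dtil$ is immediate since both summands are monotone (the $\estree$ is run in its monotone variant). The lower bound $\dtil(v)\ge\dist_{G}(S,v)$ reduces to $\dist_{H'}(s,v)\ge\dist_{G}(S,v)$: a shortest $s$--$v$ path in $H'$ alternates $G$-edges with length-two hops $a\to C\to b$ through a core $C$, and while a $G$-edge has weight $w_{G}(\cdot)\ge\dist_{G}(\cdot)$, the definition of $w_{\cC}$ together with the over-estimate guarantee of $\Apxball$ (used to define $\shell(C)$) gives $w_{\cC}(a,C)+w_{\cC}(C,b)\ge\dist_{G}(C,a)+\dist_{G}(C,b)\ge\dist_{G}(a,b)$, so the triangle inequality finishes it. The main point is the upper bound for $v$ with $\dist_{G}(S,v)\le D$. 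If $\dist_{G}(S,v)\le2\gammacompress d$ this is handled by $\dtil_{\mathrm{base}}(v)\le(1+\eps)\dist_{G}(S,v)$, so assume $\dist_{G}(S,v)>2\gammacompress d$; fix a shortest $S$--$v$ path $P$ and walk along it. At the current vertex $u$ (initially the $S$-endpoint of $P$), pick a core $C$ with $u\in\cover(C)$ — it exists by item~(3) of \Cref{def:Covering} — and set $\ell\defeq\Clevel(C)$, so $\dist_{G}(C,u)\le 4(1+\eps)d_{\ell}$. If every vertex of $P$ up to $v$ lies in $\shell(C)$, hop $u\to C\to v$ and stop; otherwise hop $u\to C\to u'$ to the farthest vertex $u'$ of $P$ still in $\shell(C)$ (or, if even the next edge $(u,u^{+})$ of $P$ has $w_{G}(u,u^{+})>\tfrac{\stretch}{4\eps}d_{\ell}-4(1+\eps)d_{\ell}$, traverse that single $G$-edge of $H'$), and continue. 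Using $\shell(C)\supseteq\ball_{G}(C,\tfrac{\stretch}{4\eps}d_{\ell})$, each non-final step reaches $v$ or advances along $P$ by $\ge\tfrac{\stretch}{4\eps}d_{\ell}-O(d_{\ell})=\Omega(\tfrac{\stretch}{\eps}d)$, so the walk uses $O(\tfrac{\eps D}{\stretch d})\le h$ hops; and each two-hop block over-estimates the $G$-distance it spans by at most $2\stretch d_{\ell}+\dist_{G}(C,u)+(1+\eps)\dist_{G}(C,u')-\dist_{G}(u,u')+O(\eps d)=O(\stretch d_{\ell})$, the triangle slack being $O(d_{\ell})$ precisely because $u$ is in the small-radius $\cover(C)$ rather than merely in $\shell(C)$, while the $G$-edge steps over-estimate nothing. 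Thus the non-final blocks together contribute multiplicative over-estimate $O(\stretch d_{\ell}/(\tfrac{\stretch}{\eps}d_{\ell}))=O(\eps)$, and the single final block contributes additive over-estimate $O(\stretch d_{\ell})\le O(\stretch d_{k-1})=O(\eps\gammacompress d)\le O(\eps\dist_{G}(S,v))$; so there is an $h$-hop $s$--$v$ walk in $H'$ of weight $\le(1+50\eps)\dist_{G}(S,v)$ (checking the hidden constants against $\eps\le0.1$ is routine), whence $\dtil_{\mathrm{ES}}(v)\le(1+\eps)\dist^{(h)}_{H'}(s,v)\le(1+50\eps)\dist_{G}(S,v)$.

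For the running time, the base-case instance costs $T_{\Apxball}(G,S,2\gammacompress d,\eps)$. For the $\estree$: the distance truncation keeps the search inside $\ball_{G}(S,(1+50\eps)D)$ (and the cores incident to it); since $G$ has bounded degree each such $G$-vertex is incident in $H'$ to $O(1)$ $G$-edges and, over the whole update sequence, to at most $\Delta$ covering-compressed edges (it lies in at most $\Delta$ outer-shells, hence at most $\Delta$ shells), so $H'$ restricted to the explored region has $\Otil(\Delta\,|\ball_{G}(S,D)|)$ all-time edges and no more core-vertices than that; a bounded-hop $\estree$ on $m$ edges with hop bound $h$ runs in $\Otil(mh)$ total time, giving $\Otil(\Delta\,|\ball_{G}(S,D)|\cdot D/(\eps d))$. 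Adding the two bounds proves the theorem.

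The step I expect to be the main obstacle is making the $\estree$ legitimate despite the insertions $H'$ undergoes whenever $\Covering$ spawns a new core, since $H'$ is genuinely not decremental (only its edge weights are monotone). The fix is the monotone ES-tree of \cite{henzinger2014decremental}: the quantity we actually need to approximate, $\dist_{G}(S,v)$, is monotone non-decreasing and at every time is witnessed by an $h$-hop path in $H'$ of weight $\le(1+50\eps)\dist_{G}(S,v)$ by the walk above, so never lowering an estimate preserves both the lower bound and the $(1+O(\eps))$ upper bound; carrying this through for the weighted $H_{\cC}$ is exactly the ``small adaptation of well-known techniques'' this section performs. A secondary technical point is bookkeeping the constants in the walk so that the accumulated stretch is genuinely at most $1+50\eps$, and ensuring, via the $G$-edges added in step (iii), that arbitrarily heavy edges of $G$ lying on $P$ cannot stall the walk.
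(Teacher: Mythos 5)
Your construction is essentially the paper's own route: an emulator built from the covering-compressed graph plus a dummy source, a monotone bounded-hop ES-tree on it, a recursive call to $\Apxball(G,S,2(\stretch/\eps)^{k}d,\eps)$ for small distances, and the same hop-compression witness-walk and runtime bookkeeping. The problem is that you dismiss exactly the step that constitutes the bulk of the paper's proof. Your justification for correctness under insertions --- ``$\dist_G(S,v)$ is monotone and at every time there is an $h$-hop witness path in $H'$ of weight $\le(1+50\eps)\dist_G(S,v)$, so never lowering an estimate preserves the upper bound'' --- is not a valid principle for monotone ES-trees. The estimate of $v$ is computed as $\min_u\{\dtil(u)+\wtil(u,v)\}$ from the (possibly stale) estimates of its neighbours, not from current distances in $H'$; after cores are inserted and later edges are deleted, the existence of a light witness path tells you nothing unless you also prove, inductively, that the estimates of the \emph{intermediate vertices on that path} are themselves controlled. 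This is precisely what \Cref{lem:MES:approx:upper} does: an induction over time and over a distance-order of vertices, with a separate bound for core-vertex estimates carrying the slack $-2\stretch\cdot d_{\Clevel(C)}$ needed to absorb the $\stretch\cdot d_{\ell}$ term in the next covering-compressed edge, plus the stretched-vertex mechanism of \Cref{lem:MES basic} and a verification that the right-hand-side bounds are monotone in time. None of this is supplied by your one-sentence appeal to \cite{henzinger2014decremental}, and the paper itself warns that MES is not a black box.

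Moreover, your variant of the emulator removes the gadget that makes that induction close. The paper's $\Htil$ (\Cref{def:Htil}) attaches the dummy source to every vertex of the near-ball $B_S$ with weight $\ceiling{\dtil^{\near}(v)}_{\eps d}$; these edges are the base case of the induction, because the inductive step for a far regular vertex or for a core vertex recurses onto vertices at \emph{smaller} distances, possibly below $2(\stretch/\eps)^{k}d$, and their estimates must also satisfy the $\distInduc$ bound. In your $H'$ the only dummy edges are weight-$0$ edges to $S$, and you handle small distances purely by an external $\min$ with the base-case $\Apxball$ --- that fixes the \emph{output} for nearby vertices but not their \emph{internal} ES-tree estimates, which the induction for far vertices relies on. For such an intermediate vertex the covering may only offer a high-level core (additive error up to $\stretch d_{k-1}=\eps\gamma d$, which can dwarf its distance), and the exact path through light $G$-edges is useless because its hop count can exceed $h=\lceil D/(\eps d)\rceil$ and is distorted by the scaling/rounding of the bounded-hop tree. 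So even granting the MES framework, your induction would not go through without reinstating the weighted near-ball source edges (or an equivalent anchor), at which point you have reconstructed the paper's emulator and its analysis.
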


\paragraph{Intuition for \Cref{thm:Apxball}.} Let us offer some intuition on the theorem above and the bounds derived. Consider the statement of the classic ES-trees (with weight rounding): Given a decremental graph $G'$ with minimum weight $\lambda$, decremental set $S$ and depth $\Lambda$, we can maintain $\Apxball(G',S,\Lambda,50\eps)$ in time $\Otil(|E(\ball_{G}(s,\Lambda))| \cdot \frac{\Lambda}{\epsilon\lambda})$. 

Now, assume for the sake of simplicity that $G$ is unweighted and that the covering-compressed graph $H_{\cC}=(V\cup\cC,E')$ is a decremental graph (i.e. that no new core needs to be added to the covering throughout the entire update sequence). Then, consider running the ES-tree from $S$ on the graph $H_{\cC}$ and run it to depth $D$. It is not hard to see that this ES-tree runs on the edge set $E(\ball_{H_{\cC}}(S,D))$ which is of size at most $|\ball_{H_{\cC}}(S,D)|\Delta$ since each vertex in $H_{\cC}$ is incident to at most $\Delta$ edges in the entire update sequence. 

To reduce the run-time by a factor of $d$, we increase all edge weights to be at least $d$. To bound the total additive error introduced by this rounding, we observe that given any vertex $t$ we can take the following $S-t$ path in $H_{\cC}$: $\pi'(S,t) = \langle v_1 = S, v_2, \dots, v_k, v_{k+1} = t \rangle$ in $H_{\cC}$ where $v_i$ is the $(i \cdot d)^{th}$ vertex on $\pi_G(S,t)$ (except for $v_k$) -- here, $\pi_G(S,t)$ is the shortest $S-t$ path in $G$. That is, every $v_i$ and $v_{i+1}$ are at distance exactly $d$ (except for $i = k$ where the distance is smaller). All but the last edge on this path already has weight $d$, so increasing edge weights to $d$ has no effect. The last edge might incur an additive error of $d$, but as long as the distance from $S$ to $t$ is at least $d/\eps$, this additive can be subsumed in a multiplicative $(1+O(\eps))$ error.

We conclude that we can run the ES-tree above in running time $O(\left|\ball_{G}(S,D)\right|\Delta\frac{D}{\eps d})$. This approach would in fact also work if $G$ was weighted, if we additionally add the edges from $G$ of weight $\geq d$ to $H_{\cC}$. The reason we need these heavy edges is that a path $\pi(S,t)$ in $G$ might have a large weight edge $(u,v)$ on the path (with edge weight $\gg d$) and $H_{\cC}$ would not guarantee that there is even a path in $H_{\cC}$ from $u$ to $v$. But instead the ES-tree could directly pick such a large edge from $G$ and include it on its path.

There are two main obstacles to the above approach. The primary obstacle is that $H_{\cC}$ is fully-dynamic and not decremental because new cores can be inserted. Intuitively, however, the insertions in $H_{\cC}$ have low impact because $H_{\cC}$ models the decremental graph $G$. In an earlier paper, Forster, Henzinger, and Nanongkai \cite{henzinger2014decremental} showed how to extend an ES tree to work in graphs with low-impact insertions; their technique is called a \emph{monotone} ES-tree (MES). We note that the MES tree is not a black-box technique: it is a general framework which has to be individually adapted to every particular graph. Most of this section is thus dedicated to proving that the MES tree works on our emulator with low-impact insertions; while this proof is quite technical, conceptually it follows the same framework as other MES proofs (see e.g. \cite{henzinger2014decremental,bernstein2016deterministic,gutenberg2020deterministic}).

The second obstacle is that the argument above incurs an additive error of $d$, so it only guarantees a good approximation when $\dist(S,t) > d/\eps$. For smaller distances, we run $\Apxball$ on a smaller distance scale, which is the source of the additional  $T_{\Apxball}(G,S,2(\frac{\stretch}{\eps})^{k}d,\eps)$ term in the theorem statement. In the final section of this part (Section \ref{sec:part2PuttingItTogether}), we use an inductive argument to argue that $T_{\Apxball}(G,S,2(\frac{\stretch}{\eps})^{k}d,\eps)$ is small, and so the running time of $\Apxball(G,S,D,50\eps)$ is in fact dominated by the first term $\Otil(\left|\ball_{G}(S,D)\right|\Delta\frac{D}{\eps d})$.

\subsection{Emulator}

Recall the covering-compressed graph $H_{\cC}=(V\cup\cC,E')$ of the covering
$\cC$ defined in \emph{\Cref{def:core compress}.} As $\cC$ is explicitly
maintained for us, we will assume that $H_{\cC}$ is explicitly maintained
for us as well by \Cref{rem:covering-compressed from covering}. 

\begin{defn}[Emulator $\Htil$]\label{def:Htil}
Given a decremental graph $G = (V,E)$, a decremental set of vertices $S \subseteq V$, depth parameters $d \leq D$ and approximation parameter $1/\polylog(n) \leq \epsilon < 1$, and a covering-compressed graph $H_{\cC}=(V\cup\cC,E')$ of $G$ of the covering $\cC$. 

We define the (static) vertex set $V^{i nit}=\ball_{\stage G0}(\stage S0,D)$. We can define the emulator $\Htil$ with weight function $\wtil$ where its edge set $\Tilde{E} = E(\Htil)$ consists of the following
\begin{enumerate}
    \item the edges $e$ that are incident to $V^{init}$ in the graph $H_{\cC}$.
    \item the edges $e \in E(G[V^{init}])$ where $d<w_G(e) \le D$, and we set $\wtil(e) =  \ceiling{w_G(e)}_{\eps d}$, and
    \item \label{enu:Htil near edge} we maintain $B_S = \Apxball(G,S,2(\frac{\stretch}{\eps})^{k}d,\eps)$ and for each vertex $v \in (B_S \cap V^{init})$, we have an edge $(s,v)$ between a universal \emph{dummy vertex $s$} and $v$ of weight $\wtil(s,v) = \ceiling{\dtil^{\near}(v)}_{\eps d}$
    where $\dtil^{\near}(v)$ denotes the distance estimate maintained by $\Apxball(G,S,2(\frac{\stretch}{\eps})^{k}d,\eps)$.
\end{enumerate}
The vertex set of $\Htil$, denoted $\Vtil = V(\Htil)$, is the union of $V^{init}$ and the set of all endpoints of $\Tilde{E}$.
\end{defn}

Here, a more explicit way of defining the vertex set of $\Htil$ is to consider the cores in $\cC$ that some vertex of $V^{init}$ in their shell (at any point), formally the collection $\cC_{refined} = \{ C \in \cC \mid \shell(C^{init}) \cap V^{init} \neq \emptyset \}$. Then, $\Vtil$ can be defined as the union of $V^{init} \cup \{s\} \cup \cC_{refined}$. Note that as $\cC$ is a fully-dynamic set, so is $\cC_{refined}$ and therefore $\Vtil$. However, since we are inducing over edges, we only add or remove vertices of degree zero.

We henceforth call the vertices in $V^{init}$, the \emph{regular} vertices. We call the vertices in $\cC_{refined}$, the \emph{core} vertices.

\begin{prop}
\label{prop:Htil vertex}We have the following:
\begin{enumerate}
\item \label{enu:Htil:vertex:reg}Regular vertices in $\Htil$ have all-time
degree at most $\Delta+O(1)$. 
\item \label{enu:Htil:vertex:core}Core vertices in $\Htil$ form an independent
set.
\end{enumerate}
\end{prop}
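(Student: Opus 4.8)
The plan is to read off both claims directly from the definition of $\Htil$ (Definition \ref{def:Htil}) and from the structure of the covering-compressed graph $H_{\cC}$ (Definition \ref{def:core compress}), together with the global assumption that $G$ has bounded degree (from Proposition \ref{prop:simplify-1}).

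For Part \ref{enu:Htil:vertex:reg}, I would fix a regular vertex $v \in V^{init}$ and count, over the whole update sequence, the edges of $\Htil$ ever incident to $v$. By Definition \ref{def:Htil} these come from exactly three sources. First, the edges of $H_{\cC}$ incident to $v$: in $H_{\cC}$ the neighbors of $v$ are precisely the cores $C$ with $v \in \shell(C)$, and since $\shell(C)$ is a decremental set, the all-time number of such cores is the number of shells $v$ ever belongs to; by Definition \ref{def:Covering} (property on outer-shells, combined with $\shell(C)\subseteq\oshell(C)$) this is at most $\Delta$. Second, the edges $e\in E(G[V^{init}])$ with $d<w_G(e)\le D$ incident to $v$: since $G$ has bounded degree, there are only $O(1)$ of these, and since $G$ is decremental no new such edge ever appears. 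Third, the single edge $(s,v)$ to the dummy vertex; that contributes at most $1$. Summing, the all-time degree of $v$ in $\Htil$ is at most $\Delta + O(1)$, as claimed.

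For Part \ref{enu:Htil:vertex:core}, I would argue that no edge of $\Htil$ joins two core vertices. The edges of $\Htil$ from source (1) live in $H_{\cC}$, which by Definition \ref{def:core compress} is bipartite with the two sides $V$ and $\cC$ — so every such edge has exactly one endpoint a core vertex and the other a regular vertex (an element of $V$). The edges from source (2) lie in $G[V^{init}]$, so both endpoints are regular. The edges from source (3) join the dummy vertex $s$ to a regular vertex. Hence no edge of $\Htil$ has both endpoints in $\cC_{refined}$, i.e. the core vertices form an independent set.

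Neither part presents a genuine obstacle; the only thing to be a little careful about is the bookkeeping in Part \ref{enu:Htil:vertex:reg}, namely that ``all-time degree'' counts an edge that is deleted and reinserted twice, but this does not arise here because the graphs $G$ and $H_{\cC}$ are decremental (cores only shrink and are never re-added once removed, and edges of $G$ are only deleted), so each contributing edge is counted once. I would state this observation explicitly and then conclude.
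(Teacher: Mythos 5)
Your proof is correct and follows essentially the same route as the paper: part (1) counts the three edge sources (at most $\Delta$ core neighbors via the covering's outer-shell bound, $O(1)$ regular neighbors from the bounded-degree decremental $G$, and at most one edge to the dummy source), and part (2) uses the bipartiteness of $H_{\cC}$ plus the fact that no other edges of $\Htil$ join two core vertices. No gaps; the closing remark about edges never being reinserted matches the paper's implicit use of the decremental structure of $G$, the shells, and the approximate ball at $s$.
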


\begin{proof}
(1): Each regular vertex $u$ is ever incident to at most $\Delta$
core vertices by \Cref{def:Covering}. As $G$ has bounded degree and
is decremental, $u$ is ever incident to at most $O(1)$ other regular
vertices. Also, $S$ is decremental and $u$ might be incident to
$s$ only once. In total, the all-time degree of $u$ is $\Delta+O(1)$.

(2): As the covering-compressed graph $H_{\cC}$ is bipartite, core vertices
are independent in $H_{\cC}$. As we never add edges between core
vertices in $\Htil$, they are independent in $\Htil$ as well.
\end{proof}
For each edge $e\in E(\Htil)$, we let $\wtil(e)$ denote the weight
of $e$ in $\Htil$. If $(u,v)\notin E(\Htil)$, we let $\wtil(u,v)\gets\infty$.
In particular, deleting an edge $e$ in $\Htil$ is to increase the
weight $\wtil(e)$ to infinity. 
\begin{prop}
\label{prop:Htil weight}For every edge $e\in E(\Htil)$, we have
the following:
\begin{enumerate}
\item \label{enu:Htil:weight:multiple}$\wtil(e)$ is a non-negative multiple
of $\ceiling{\eps d}$. 
\item \label{enu:Htil:weight:zero}$\wtil(e)=0$ if and only if $e=(s,v)$
where $v\in S$. 
\item \label{enu:Htil:weight:increase}$\wtil(e)$ can only increase after
$e$ is inserted into $\Htil$.
\end{enumerate}
\end{prop}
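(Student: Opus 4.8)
By \Cref{def:Htil} the edges of $\Htil$ fall into three families: (i) the edges incident to $V^{init}$ inherited from the covering-compressed graph $H_{\cC}$, carrying their weights $w_{\cC}(\cdot)$ of \Cref{def:core compress}; (ii) the ``heavy'' edges $e\in E(G[V^{init}])$ with $d<w_G(e)\le D$, of weight $\roundup{w_G(e)}{\eps d}$; and (iii) the edges $(s,v)$ from the dummy vertex $s$ to vertices $v\in B_S\cap V^{init}$, of weight $\roundup{\dtil^{\near}(v)}{\eps d}$, where $B_S=\Apxball(G,S,2(\tfrac{\stretch}{\eps})^{k}d,\eps)$. The plan is to verify all three items by going through these families, relying only on: the explicit form of $w_{\cC}$; the $\Apxball$ guarantees of \Cref{def:Apxball}, namely that $\dtil^{C}(\cdot)$ and $\dtil^{\near}(\cdot)$ are non-negative, are at least the corresponding true distance (item~\ref{enu:Apxball:overestimate}), are at most $(1+\eps)$ times it inside the relevant ball (item~\ref{enu:Apxball:approx}), and are monotone non-decreasing in time (item~\ref{enu:Apxball:monotone}); the decrementality of shells and of $\Apxball$-sets (\Cref{prop:ApxBall}); and the fact that, by \Cref{prop:simplify-1}, $G$ undergoes only edge deletions, so $w_G(e)$ never changes while $e$ exists. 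Item~1 then needs no work: in each family the weight is $\roundup{\cdot}{\eps d}$ applied to a non-negative real, hence a non-negative multiple of $\ceiling{\eps d}$.

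For item~2 I would first dispose of families~(i) and~(ii). A family-(i) edge $(v,C)$ has $w_{\cC}(v,C)=\roundup{\stretch\, d_{\Clevel(C)}+\dtil^{C}(v)}{\eps d}\ge\stretch\, d_{\Clevel(C)}\ge d>0$ (using $\stretch\ge1$ and $d_{\Clevel(C)}\ge d$), and a family-(ii) edge has $\roundup{w_G(e)}{\eps d}\ge w_G(e)>d>0$. So a zero-weight edge must be $(s,v)$ with $\roundup{\dtil^{\near}(v)}{\eps d}=0$, that is $\dtil^{\near}(v)=0$; then $\dist_G(S,v)\le\dtil^{\near}(v)=0$, so $v\in S$. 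Conversely, if $v\in S$ then $\dist_G(S,v)=0$, and since $v\in\ball_G(S,\cdot)$ trivially, item~\ref{enu:Apxball:approx} gives $\dtil^{\near}(v)\le(1+\eps)\cdot0=0$, hence $\wtil(s,v)=0$.

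For item~3 the key is to pin down which edges get inserted, and when. Families~(ii) and~(iii) are never re-inserted after time $0$: the set $\{e\in E(G[V^{init}])\mid d<w_G(e)\le D\}$ is decremental ($G$ decremental with static weights, $V^{init}$ static) and $B_S\cap V^{init}$ is decremental ($B_S$ a decremental $\Apxball$-set, $V^{init}$ static), so once such an edge leaves $\Htil$ it never returns; and while present, a family-(ii) edge has the constant weight $\roundup{w_G(e)}{\eps d}$, while a family-(iii) edge $(s,v)$ has weight $\roundup{\dtil^{\near}(v)}{\eps d}$, which only increases because $\dtil^{\near}(v)$ does and $\roundup{\cdot}{\eps d}$ is monotone. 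For a family-(i) edge $(v,C)$: because $\shell(C)$ is decremental, $v$ can enter $\shell(C)$ at most once, namely when $C$ is created and added to $\cC$; hence $(v,C)$ is inserted into $\Htil$ exactly once, at the creation time $t_C$ of $C$, and at that instant the $\Apxball$ instance maintaining $\shell(C)$ (and so the estimate $\dtil^{C}$) is freshly initialized, so $\dtil^{C}(v)$ is then at its all-time minimum and only grows afterward; since $\stretch\, d_{\Clevel(C)}$ is a fixed constant and $\roundup{\cdot}{\eps d}$ is monotone, $\wtil(v,C)=w_{\cC}(v,C)$ is non-decreasing from $t_C$ on. I expect this last family-(i) bookkeeping to be the only subtle point: one needs decrementality of $\shell(C)$ both to rule out a later re-insertion (when $\dtil^{C}(v)$ could already be large) and to certify that the insertion-time value of $\dtil^{C}(v)$ is the minimal one; the rest is just unfolding \Cref{def:core compress} and \Cref{def:Htil}.
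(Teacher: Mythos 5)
Your proposal is correct and follows essentially the same route as the paper: items (1) and (2) are read off directly from the three edge families in the construction of $\Htil$, and item (3) is argued exactly as in the paper — edges are inserted only when a new core is created (while $G$ itself undergoes only deletions, so original edge weights are static), and covering-compressed edge weights are monotone because the $\Apxball$ estimates $\dtil^{C}(\cdot)$ and $\dtil^{\near}(\cdot)$ never decrease. Your write-up simply makes explicit the decrementality bookkeeping that the paper leaves implicit.
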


\begin{proof}
(1,2): This follows directly from the construction of $\Htil$.

(3): We insert edges into $\Htil$ only when there is a new core $C$ added into the covering $\cC$ (recall that edges in $G$ do not undergo edge weight changes by the \Cref{prop:simplify-1}). For each edge $e=(v,C)\in E(H_{\cC})$
where $v\in\shell(C)$, we have that $w(e')=\roundup{\stretch\cdot d_{\Clevel(C)}+\dtil^{C}(v)}{\eps d}$
where $\dtil^{C}(v)$ is the distance estimate of $\dist_{G}(C,v)$
from the instance of $\Apxball$ that maintains $\shell(C)$. By the
guarantee of $\Apxball$, $\dtil^{C}(v)$ never decreases and hence
$w(e')$ never decreases.
\end{proof}
Let $E^{ALL}(\Htil)$ be the set of all edges ever appear in $\Htil$. 
\begin{lem}
\label{lem:Htil update}$|E^{ALL}(\Htil)|\le O(\left|\ball_{G}(S,D)\right|\Delta)$.
Moreover, the total number of edge updates (including insertions,
deletions, and weight increase) in $\Htil$ is at most $O(\left|\ball_{G}(S,D)\right|\Delta\frac{D}{\eps d})$. 
\end{lem}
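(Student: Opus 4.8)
The plan is to bound the two quantities separately, both by charging to the ball $\ball_G(S,D)$ and using the all-time degree bounds already established in Proposition \ref{prop:Htil vertex}. For the first claim, I would argue that every edge that ever appears in $\Htil$ is incident to a regular vertex: edges of types 1 and 2 in Definition \ref{def:Htil} are incident to $V^{init}$ by construction, and edges of type 3 are incident to the dummy vertex $s$ but their other endpoint lies in $V^{init}$ as well. Hence $|E^{ALL}(\Htil)| \le \sum_{v \in V^{init} \cup \{s\}} (\text{all-time degree of } v)$. By Proposition \ref{prop:Htil vertex}\ref{enu:Htil:vertex:reg} each regular vertex has all-time degree at most $\Delta + O(1)$, and $|V^{init}| = |\ball_{\stage G0}(\stage S0, D)| \le |\ball_G(S,D)|$ (here I would note that $\ball_G(S,D)$ is decremental, so its initial version is the largest, and that we may simply take $\ball_G(S,D)$ to mean the initial ball as is standard in the part). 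The dummy vertex $s$ contributes at most $|V^{init}|$ edges. Summing gives $|E^{ALL}(\Htil)| = O(|\ball_G(S,D)| \cdot \Delta)$, as claimed.

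For the second claim, I would count how many updates a single edge $e \in E^{ALL}(\Htil)$ can undergo. An edge is inserted at most once and deleted at most once (by Proposition \ref{prop:Htil weight}\ref{enu:Htil:weight:increase}, weights only increase after insertion, and a deletion is modeled as a weight increase to $\infty$; also each edge corresponds to a \emph{fixed} pair, so it is inserted once — recall the remark in the preliminaries on hyperedges not creating new versions). Between insertion and deletion, the weight $\wtil(e)$ can only increase, and by Proposition \ref{prop:Htil weight}\ref{enu:Htil:weight:multiple} every value it takes is a multiple of $\ceiling{\eps d}$. The maximum weight an edge in $\Htil$ can attain before being deleted is $O(D)$: for type-2 edges the weight is $\ceiling{w_G(e)}_{\eps d} \le D + \eps d = O(D)$ by the condition $w_G(e) \le D$; for type-1 edges the weight $\roundup{\stretch \cdot d_{\Clevel(C)} + \dtil^C(v)}{\eps d}$ is $O(D)$ since we only keep such an edge while $v \in \shell(C)$, whose radius is $O((\stretch/\eps)^k d)$ — wait, here one must be careful. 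The cleaner route: once $\wtil(e) > D + \eps d$ (roughly), the edge $e$ can be treated as deleted for the purposes of $\Apxball(G,S,D,\cdot)$, i.e. we never need to store more than $O(D/(\eps d))$ distinct increased values before dropping the edge; equivalently, it suffices to maintain $\Htil$ truncated at depth $O(D)$, so each edge undergoes at most $O(D/(\eps d))$ weight changes. Therefore the total number of edge updates is at most $|E^{ALL}(\Htil)| \cdot O(D/(\eps d)) = O(|\ball_G(S,D)| \Delta \cdot D/(\eps d))$.

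The step I expect to require the most care is justifying the $O(D/(\eps d))$ bound on the number of updates per edge, specifically the claim that edges of type 1 (core-vertex edges coming from $H_{\cC}$) never need to be tracked once their weight exceeds $\Theta(D)$. The subtlety is that the shell radius $\frac{\stretch}{4\eps} d_{\Clevel(C)}$ can be as large as $\Theta((\stretch/\eps)^k d)$, which is much larger than $D$ in general, so a core edge's weight could a priori range over a window far wider than $D$. The resolution is that such an edge is only \emph{relevant} to the ES-tree / MES-tree of depth $D$ when its weight is $O(D)$; once $\wtil(e)$ exceeds the depth bound, the monotone ES-tree never relaxes along it again, so we may charge it as "deleted" at that moment and ignore all subsequent weight increases. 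I would make this precise by observing that the MES data structure (to be described in the next subsection) only processes a weight-increase update on $e$ while the current estimate of an endpoint is $O(D)$, bounding the charged updates per edge by $O(D/\ceiling{\eps d}) = O(D/(\eps d))$. Combined with the first claim, this yields the lemma.
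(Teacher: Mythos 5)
Your proposal is correct and follows essentially the same route as the paper: the first bound is charged to the all-time degree of regular vertices via Proposition \ref{prop:Htil vertex}, and the second comes from each weight being a multiple of $\eps d$, only increasing after insertion, and edges of weight exceeding $D$ being dropped from $\Htil$, giving $O(D/(\eps d))$ updates per edge. The truncation you carefully flag and justify for the core-vertex edges is simply asserted in the paper's proof ("any edge with weight more than $D$ is removed from $\Htil$"), so your argument matches it, just with more explicit care on that point.
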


\begin{proof}
The bound on $|E^{ALL}(\Htil)|$ follows directly from \Cref{prop:Htil vertex}.
For each edge, its weight can be updated at most $\ceiling{D/\eps d}$
times because (1): every edge weight $e$ is a multiple of $\eps d$
by \Cref{prop:Htil weight}(\ref{enu:Htil:weight:multiple}), (2):
$\wtil(e)$ may only increase after $e$ was inserted by \Cref{prop:Htil weight}(\ref{enu:Htil:weight:increase}),
and (3): any edge with weight more than $D$ is removed from $\Htil$.
Therefore, the total number of edge updates is $|E^{ALL}(\Htil)|\cdot\ceiling{D/\eps d}=O(\left|\ball_{G}(S,D)\right|\Delta\frac{D}{\eps d})$. 
\end{proof}

\subsection{The Algorithm: MES on the Emulator}

Our $\Apxball$ algorithm for \Cref{thm:Apxball} works as follows. 
\begin{enumerate}
\item Maintain the emulator $\Htil$ with a dummy source $s$. Let $\dtil^{\near}(u)$
be the distance estimate of $u$ maintained by $\Apxball(G,S,2(\frac{\stretch}{\eps})^{k}d,\eps)$
as described in \Cref{def:Htil}(\ref{enu:Htil near edge}).
\item Maintain the \emph{Monotone Even-Shiloach (MES) data structure} $\MES(\Htil,s,D)$
(see \Cref{alg:MES}) which maintains the distance estimates $\{\dtil(u)\}_{u\in\Vtil}$.\footnote{We note that, if there is no insertion, the described algorithm is
equivalent to the classic ES-tree algorithm \cite{EvenS}} After each edge deletion to $G$, there can be several edge updates
to $\Htil$. We feed all edge insertions to the MES data structure
before any other update generated at this time. 
\item For each regular vertex $u\in V\cap\Vtil$, we maintain $\left\{ \min\{\dtil(u),\dtil^{\near}(u)\}\right\} {}_{u\in V\cap\Vtil}$
as the distance estimates for our $\Apxball$ data structure.
\end{enumerate}
\begin{algorithm}
\SetAlgoSkip{}
	\DontPrintSemicolon
    \SetKw{KwAnd}{and}
    \SetProcNameSty{textsc}
    \SetFuncSty{textsc}
	\SetKwProg{procedure}{Procedure}{}{}
 	
	\procedure{$\textsc{Init}(\Htil)$}{
	    \lForEach{$u\in\Vtil$}{$\dtil(u)\gets\dist_{\Htil}(s,u)$.}
    }

    \procedure{$\textsc{WeightIncrease}(\Htil, (u,v))$}{
        $\updatelevel(u)$ and $\updatelevel(v)$.\label{enu:MES:update adversary}
    }
    
    \procedure{$\protect\updatelevel(u)$}{
        \If{ $\min_{v}\{\dtil(v)+\wtil(v,u)\}>\dtil(u)$}{
            $\dtil(u)\gets\min_{v}\{\dtil(v)+\wtil(v,u)\}$.\;
            \lIf{ $\dtil(u)>2D$}{ $\dtil(u)\gets\infty$. }
             $\updatelevel(v)$ for all neighbors $v$ of $u$.\label{enu:MES:update propagate}
        }
    }

\caption{$\protect\MES(\protect\Htil,s,D)$\label{alg:MES}}
\end{algorithm}

For every vertex $u\in\Vtil\setminus\{s\}$, we let $\arg\min_{v}\{\dtil(v)+\wtil(v,u)\}$
be \emph{$u$'s parent}. The set of edges between parents and children
form a tree $\Ttil$ rooted at $s$ is called the \emph{MES tree}.
In the analysis below, we do not need not the tree $\Ttil$ itself.
However, the tree $\Ttil$ will be used later for our data structure
that can report a path in \Cref{sec:ball_path}.

\subsection{Analysis of MES}

In this section, we analyze the running time of \Cref{alg:MES} and
the accuracy of the estimates $\{\dtil(u)\}_{u}$ maintained by the
MES data structure. Although the analysis is quite technical, it follows
the same template as shown by previous works that employ the MES data
structure (e.g. \cite{henzinger2014decremental,henzinger2015unifying, bernstein2016deterministic, bernstein2017deterministic, bernstein2017deterministicweighted, gutenberg2020deterministic}). 

\subsubsection{Total Update Time}

Using the standard analysis of the classic ES tree, we can bound the
total update time.

\begin{lem}\label{lem:time MES}
The total update time of $\MES(\Htil,s,D)$ is $\Otil(\left|\ball_{G}(S,D)\right|\Delta\frac{D}{\eps d})$.
\end{lem}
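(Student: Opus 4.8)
The plan is to follow the classic Even--Shiloach argument adapted to the monotone setting, charging the work to the number of edge updates in $\Htil$. The key observation is that the total work done by $\MES(\Htil, s, D)$ can be partitioned into three pieces: (i) the cost of \textsc{Init}, (ii) the cost of processing each edge update (insertion, deletion, weight increase) directly, and (iii) the cost of the recursive $\updatelevel$ calls triggered along \Cref{enu:MES:update propagate}. For (i), \textsc{Init} runs a single-source shortest path computation on the initial emulator, which takes $\Otil(|E(\Htil)|)$ time, and by \Cref{lem:Htil update} this is $\Otil(|\ball_G(S,D)| \Delta)$. For (ii), each of the $O(|\ball_G(S,D)|\Delta \frac{D}{\eps d})$ edge updates (again by \Cref{lem:Htil update}) triggers at most a constant number of top-level $\updatelevel$ calls, each costing $O(\deg_{\Htil}(u))$ work before any recursion, which is subsumed by piece (iii).

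The heart of the argument is bounding piece (iii). First I would observe that each $\dtil(u)$ is monotonically non-decreasing over the entire run: it never decreases because $\updatelevel$ only ever raises it (this uses \Cref{prop:Htil weight}(\ref{enu:Htil:weight:increase}), that edge weights in $\Htil$ only increase after insertion, together with the fact that insertions are fed before other updates so a freshly inserted edge cannot force a decrease). Second, every $\dtil(u)$ takes values that are non-negative multiples of $\ceiling{\eps d}$ by \Cref{prop:Htil weight}(\ref{enu:Htil:weight:multiple}), and once $\dtil(u) > 2D$ it is set to $\infty$ and never touched again. Hence each vertex $u$ can have its level $\dtil(u)$ strictly increased at most $O(D/(\eps d)) + 1$ times. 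Each time $\updatelevel(u)$ actually raises $\dtil(u)$, it does $O(\deg_{\Htil}(u))$ work scanning $u$'s neighbors (to recompute the min and to recurse), and every $\updatelevel(u)$ call that does \emph{not} raise $\dtil(u)$ can be charged to the parent call that invoked it (each such call costs $O(1)$ and there are $O(\deg_{\Htil}(v))$ of them per raising event at a neighbor $v$, already accounted for). Therefore the total work in piece (iii) is $\sum_{u \in \Vtil} O(\deg^{ALL}_{\Htil}(u)) \cdot O(D/(\eps d))$, where $\deg^{ALL}_{\Htil}(u)$ is the all-time degree of $u$ in $\Htil$.

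To finish, I would bound $\sum_{u} \deg^{ALL}_{\Htil}(u) = O(|E^{ALL}(\Htil)|) = O(|\ball_G(S,D)|\Delta)$ by \Cref{lem:Htil update}. Multiplying by the $O(D/(\eps d))$ factor from the number of level increases per vertex gives $\Otil(|\ball_G(S,D)|\Delta \frac{D}{\eps d})$ total time for piece (iii), which dominates pieces (i) and (ii). One subtlety to be careful about: a vertex's degree in $\Htil$ changes over time (core vertices appear and regular vertices gain incident core edges), so the scan cost $O(\deg_{\Htil}(u))$ at a given moment must be charged against the current edge set, and summing over time gives the all-time degree; this is exactly why \Cref{lem:Htil update} is phrased in terms of $E^{ALL}(\Htil)$ and total number of updates rather than a static snapshot. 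I expect this bookkeeping — correctly amortizing the non-raising $\updatelevel$ calls against edge updates and raising events, while handling the dynamically changing vertex and edge set of $\Htil$ — to be the only real obstacle; the rest is the textbook ES-tree potential argument. The stated bound $\Otil(|\ball_G(S,D)|\Delta \frac{D}{\eps d})$ then follows.
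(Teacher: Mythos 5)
Your proof is correct and follows essentially the same argument as the paper: initialization via Dijkstra, the observation that each $\dtil(u)$ is a multiple of $\eps d$, monotone, and capped at $2D$ (hence $O(D/(\eps d))$ increases per vertex), charging the $O(\deg_{\Htil}(u))$ scan work to each increase and the cheap non-raising calls to the triggering update, with the edge/update counts supplied by \Cref{lem:Htil update}. The only detail worth tightening is that the $O(1)$ cost you assign to non-raising $\updatelevel$ calls relies on maintaining $\min_{v}\{\dtil(v)+\wtil(v,u)\}$ in a heap per vertex (as the paper does), which also resolves the slight inconsistency with your earlier claim that top-level calls cost $O(\deg_{\Htil}(u))$ before recursion.
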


\begin{proof}
The initialization takes $\Otil(|E(\Htil)|)=\Otil(\left|\ball_{G}(S,D)\right|\Delta)$
by running time Dijkstra's algorithm. Each vertex $u\in\Vtil$ maintains
$\min_{v}\{\dtil(v)+\wtil(v,u)\}$ using heaps.

The algorithm calls $\updatelevel$ because of the direct edge updates
to $\Htil$ at most $U=O(\left|\ball_{G}(S,D)\right|\Delta\frac{D}{\eps d})$
time by \Cref{lem:Htil update}. Each call to $\updatelevel$ takes
only $O(1)$ time for checking the condition. Otherwise, if the algorithm
spends more time, then an estimate $\dtil(u)$ must increase. Once
$\dtil(u)$ is increased, when we spend additional $O(\deg_{\Htil}(u)\log n)$
time to update the heaps, and invoke $\updatelevel$ $\deg_{\Htil}(u)$
more times. We charge the cost for updating these heaps and the cost
for checking the condition in each call to $\updatelevel$ to the
increase of $\dtil(u)$. This charging scheme works because $\dtil(u)$
can be increased at most $O(D/\eps d)$ times. Indeed, $\dtil(u)$
is a multiple of $\eps d$ by \Cref{prop:Htil weight}(\ref{enu:Htil:weight:multiple})
and we set $\dtil(u)\gets\infty$ whenever $\dtil(u)>2D$. 

Therefore, the algorithms calls $\updatelevel$ at most $U+O(|E(\Htil)|\log n\cdot\frac{D}{\eps d})=\Otil(\left|\ball_{G}(S,D)\right|\Delta\frac{D}{\eps d})$
times, and the additional time spent when estimates are increased
is at most $O(|E(\Htil)|\log n\cdot\frac{D}{\eps d})=\Otil(\left|\ball_{G}(S,D)\right|\Delta\frac{D}{\eps d})$
time. This concludes the claim.
\end{proof}

\subsubsection{Dynamics of Distance Estimates}

In this section, we show basic properties of the distance estimates
$\{\dtil(u)\}_{u\in\Vtil}$ maintained by the MES data structure.
The analysis is genetic and so we hope that it might be useful for
future use of the MES data structure. We only need that, at each time,
all insertions to $\Htil$ are handled before other updates. The notion
of \emph{stretched} vertex will be useful here and for proving the
accuracy of the estimates later.
\begin{defn}
[Stretched Vertices]For any $u\in\Vtil\setminus\{s\}$, we say that

$u$ is \emph{stretched} if $\dtil(u)>\min_{v}\{\dtil(v)+\wtil(v,u)\}$.
If $u$ is stretched , every edge $(v,u)$ where $\dtil(u)>\dtil(v)+\wtil(v,u)$
is \emph{stretched}.
\end{defn}

Each edge deletion in $G$ generates several updates to $\Htil$.
We use the phrase ``after time $t$'' to refer to the time when
the algorithm finishes processing the $t$-th edge deletion to $G$
and all other updates to $\Htil$ generated by that deletion. Let
$\dtil_{t}(u)$ denote the distance estimate $\dtil(u)$ after time
$t$. Similarly, let $\wtil_{t}(e)$ denote the weight $\wtil(e)$
after time $t$.

The intuition of \Cref{lem:MES basic} below is that, the estimates
of non-stretched vertices ``behave'' like distances, i.e.~$\dtil_{t}(u)=\min_{v}\{\dtil_{t}(v)+\wtil_{t}(v,u)\}$.
For stretched vertices, although this is not true, their estimates
do not increase which will be helpful for proving that we never overestimate
the distances.
\begin{lem}
\label{lem:MES basic}For each vertex $u\in\Vtil\setminus\{s\}$,
we have the following: 
\begin{enumerate}
\item $\dtil_{0}(u)=\min_{v}\{\dtil_{0}(v)+\wtil_{0}(v,u)\}$.
\item \label{enu:MES:montone}$\dtil(u)$ only increases through time.
\item \label{enu:MES:lower}$\dtil_{t}(u)\ge\min_{v}\{\dtil_{t}(v)+\wtil_{t}(v,u)\}$.
\item \label{enu:MES:not stretched}If $u$ is not stretched after time
$t$, then $\dtil_{t}(u)=\min_{v}\{\dtil_{t}(v)+\wtil_{t}(v,u)\}$.
\item \label{enu:MES:stretched}If $u$ is stretched after time $t$ and
$\min_{v}\{\dtil_{t}(v)+\wtil_{t}(v,u)\}\le2D$, then $\dtil_{t}(u)=\dtil_{t-1}(u)$.
\end{enumerate}
\end{lem}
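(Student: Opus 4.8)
The plan is to verify the five claims in order, since each relies on the previous ones. For item~(1), this is immediate: \textsc{Init} sets $\dtil_0(u) \gets \dist_{\Htil}(s,u)$ for every $u$, and exact distances satisfy the Bellman equation $\dist_{\Htil}(s,u) = \min_v\{\dist_{\Htil}(s,v) + \wtil_0(v,u)\}$ (using that all initial weights are finite on edges of $\Htil$, and $\infty$ otherwise). For item~(2), I would argue that $\dtil(u)$ is modified only inside $\updatelevel(u)$, and there it is only ever set to $\min_v\{\dtil(v)+\wtil(v,u)\}$ when this quantity strictly \emph{exceeds} the current $\dtil(u)$, or to $\infty$; both are increases. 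A subtle point: edge insertions could a priori create a new neighbor $v$ with small $\dtil(v)$, which might seem to allow $\dtil(u)$ to want to decrease --- but the algorithm never acts on that, since \textsc{WeightIncrease} only calls $\updatelevel$, and $\updatelevel$ only raises estimates. (Insertions are handled by simply adding the edge with its current weight; they never trigger a level update that lowers an estimate.) I should note explicitly that even though $\Htil$ has insertions, the estimate monotonicity is preserved precisely because the MES data structure ignores the "downward pressure" from inserted edges. This is the conceptual heart of the monotone ES-tree.

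For item~(3), the invariant $\dtil_t(u) \ge \min_v\{\dtil_t(v) + \wtil_t(v,u)\}$ should be maintained as a loop invariant across all update-processing steps. Right after \textsc{Init} it holds with equality by item~(1). When a weight $\wtil(v,u)$ increases (or an edge is deleted, i.e. weight $\to\infty$), the RHS can only increase, so the inequality might be temporarily violated at $u$ (and at $v$) --- but \textsc{WeightIncrease} immediately calls $\updatelevel(u)$ and $\updatelevel(v)$, and $\updatelevel$ recursively restores $\dtil(x) \ge \min_w\{\dtil(w)+\wtil(w,x)\}$ at every touched vertex $x$ (raising $\dtil(x)$ to the min, or to $\infty$ if it would exceed $2D$; in the latter case the inequality $\infty \ge (\cdot)$ is trivial), and propagates to neighbors whose invariant may have been broken by the raise. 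Termination of this recursion follows from item~(2) plus the finiteness argument in Lemma~\ref{lem:time MES} (each $\dtil$ takes $O(D/\eps d)$ distinct values). For an insertion of edge $(v,u)$ with finite weight, the RHS at $u$ can only decrease, so the inequality is preserved without any work; similarly at $v$. Hence item~(3) holds after every time $t$.

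Items~(4) and~(5) then follow by inspecting the terminal state of the $\updatelevel$ recursion. After time $t$, a vertex $u$ is, by definition, \emph{not stretched} iff $\dtil_t(u) \le \min_v\{\dtil_t(v)+\wtil_t(v,u)\}$; combined with item~(3) this gives equality, which is item~(4). For item~(5): if $u$ \emph{is} stretched after time $t$, then $\dtil_t(u) > \min_v\{\dtil_t(v)+\wtil_t(v,u)\}$, so in particular $\updatelevel(u)$ was \emph{not} the thing that last set $\dtil_t(u)$ in a way that would have equalized it --- more carefully, I would argue that whenever $\updatelevel(u)$ runs and the hypothesis $\min_v\{\dtil(v)+\wtil(v,u)\} \le 2D$ holds at that moment, it leaves $u$ non-stretched (it sets $\dtil(u)$ to exactly the min, which is $\le 2D$ so it is not overwritten by $\infty$). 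So for $u$ to be stretched after time $t$ with $\min_v\{\dtil_t(v)+\wtil_t(v,u)\} \le 2D$, no $\updatelevel(u)$ call can have occurred after the last moment the min dropped to its current value --- hence $\dtil(u)$ was untouched during the processing of deletion $t$, giving $\dtil_t(u) = \dtil_{t-1}(u)$. The main obstacle I anticipate is getting item~(5) exactly right: one has to be careful that the min being $\le 2D$ is evaluated with the \emph{final} weights/estimates $\dtil_t, \wtil_t$, whereas during processing the min may have been larger (causing $\dtil(u) \to \infty$) and then dropped back --- I will need to rule this out, presumably using that weights $\wtil$ only increase (Prop.~\ref{prop:Htil weight}(\ref{enu:Htil:weight:increase})) and the neighbors' estimates $\dtil(v)$ only increase (item~(2)), so the min $\min_v\{\dtil(v)+\wtil(v,u)\}$ restricted to a fixed neighbor set is monotonically increasing, and the only way it can drop is via edge \emph{insertion} of a new low-weight neighbor --- which again does not trigger $\updatelevel(u)$. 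Threading this monotonicity-plus-insertion-is-harmless argument through is the delicate part.
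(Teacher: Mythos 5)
Your treatment of items (1)--(4) is correct and follows essentially the same route as the paper (the paper phrases item (3) via ``loose'' vertices, but the content of your invariant argument is identical). The gap is in item (5), precisely at the point you flag as delicate, and the patch you sketch does not close it. The ingredient you need is not that insertions fail to trigger $\updatelevel(u)$, but the processing convention stated as part of the algorithm and used explicitly in the paper's proof: at each time $t$, \emph{all} edge insertions to $\Htil$ generated by that deletion are fed to the MES structure before any other update at that time, hence before any call to $\updatelevel$. With this ordering, the edge $(\bar v,u)$ realizing $\min_v\{\dtil_t(v)+\wtil_t(v,u)\}\le 2D$ is already present, with weight at most $\wtil_t(\bar v,u)$, at the moment of every $\updatelevel(u)$ call during step $t$; so if some call increased $\dtil(u)$ during step $t$, the last such call set $\dtil(u)\le \dtil_t(\bar v)+\wtil_t(\bar v,u)\le 2D$ (in particular it was not overwritten by $\infty$), and since $\dtil(u)$ does not change afterwards this contradicts $u$ being stretched after time $t$ --- that is the paper's argument. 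Without the ordering, item (5) is simply false: within one step, first process a weight increase that pushes $\min_v\{\dtil(v)+\wtil(v,u)\}$ above $2D$, so $\updatelevel(u)$ sets $\dtil(u)\gets\infty$, and only afterwards process an insertion that brings the minimum back below $2D$; then $u$ is stretched after time $t$ and the minimum is $\le 2D$, yet $\dtil_t(u)=\infty\neq\dtil_{t-1}(u)$. Your observation that insertions do not trigger $\updatelevel(u)$ does nothing to exclude this interleaving.

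A smaller repair: your auxiliary claim that whenever $\updatelevel(u)$ runs with the minimum at most $2D$ it leaves $u$ non-stretched is not true as stated --- if $u$ is already stretched (the minimum is strictly below $\dtil(u)$), the procedure's condition fails, it does nothing, and $u$ remains stretched. The argument should instead be organized, as in the paper, around the last call to $\updatelevel(u)$ within step $t$ that actually increased $\dtil(u)$: that call sets $\dtil(u)$ equal to the minimum at that moment, and the ordering-plus-monotonicity bound above shows this value is at most $\min_v\{\dtil_t(v)+\wtil_t(v,u)\}$, which yields the desired contradiction.
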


\begin{proof}
(1): At the initialization, we set $\dtil(u)=\dist_{\Htil}(s,u)$
for all $u\in\Vtil$. As $\dist_{\Htil}(s,u)=\min_{v}\{\dist_{\Htil}(s,v)+\wtil(v,u)\}$,
so $\dtil(u)=\min_{v}\{\dtil(v)+\wtil(v,u)\}$ after time $0$.

(2): $\dtil(u)$ is updated only through $\updatelevel$, which only
increases $\dtil(u)$.

(3): We say that $u$ is loose if $\dtil(u)<\min_{v}\{\dtil(v)+\wtil(v,u)\}$.
Initially, no vertex is loose by (1). At any moment, $u$ has a chance
of being loose only if, for some neighbor $v$ of $u$, $\dtil(v)$
or $\wtil(v,u)$ is increased. If this event happens, then $\updatelevel(u)$
is called by \Cref{enu:MES:update adversary,enu:MES:update propagate}
of \Cref{alg:MES}. If $u$ is indeed loose, then we set $\dtil(u)\gets\min_{v}\{\dtil(v)+\wtil(v,u)\}$
which makes $u$ not loose. Therefore, no vertex is loose after time
$t$, which implies the claim.

(4): We have $\dtil_{t}(u)\le\min_{v}\{\dtil_{t}(v)+\wtil_{t}(v,u)\}$
as $u$ is not stretched after time $t$. By combining with (3), we
are done.

(5): Let $(\bar{v},u)$ be the stretched edge after time $t$, i.e.~$\dtil_{t}(u)>\dtil_{t}(\bar{v})+\wtil_{t}(v',u)$.
Suppose for contradiction $\dtil(u)$ increases when the $t$-th edge
deletion is processed. Consider the last call to $\updatelevel(u)$
that $\dtil(u)$ is increased. Let $\dtil'(\cdot)$ and $\wtil'(\cdot)$
denote $\dtil(\cdot)$ and $\wtil(\cdot)$ at the moment when the
algorithm increases $\dtil(u)$, i.e.~when we set $\dtil'(u)=\min_{v}\{\dtil'(v)+\wtil'(v,u)\}$. 

Note that $\dtil'(\bar{v})\le\dtil_{t}(\bar{v})$ by (2). Also, $\wtil'(\bar{v},u)\le\wtil_{t}(\bar{v},u)$
because, for each time $t$, the algorithm processes all insertions
to $\Htil$ before any other updates to $\Htil$ and hence before
any call to $\updatelevel$. The remaining updates to $\Htil$ may
only increase the weight $\wtil(e)$ by \Cref{prop:Htil weight}(\ref{enu:Htil:weight:increase}).
So $\dtil'(\bar{v})+\wtil'(\bar{v},u)\le\dtil_{t}(v)+w_{t}(v,u)\le2D$.
Hence, $\dtil'(u)\le3D$ and $\dtil'(u)$ not set to $\infty$. So
we have $\dtil'(u)\le\dtil_{t}(v)+\wtil_{t}(v,u)$. As this last moment
$\dtil(u)$ is increased when the $t$-th update is processed, we
have $\dtil_{t}(u)=\dtil'(u)\le\dtil_{t}(v)+\wtil_{t}(v,u)$, which
contradicts the fact that $(v',u)$ is stretched.
\end{proof}

\subsubsection{Lower Bounds of Estimates}

In this section, we show that the estimates $\{\dtil(u)\}_{u\in\Vtil}$
are lower bounded by distances in $G$. We will prove by induction.
The proposition below handles the base case.
\begin{prop}
\label{prop:zero estimate}For any $t$, $\dtil_{t}(u)=0$ if and
only if $u\in\stage St$. 
\end{prop}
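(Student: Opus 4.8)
The plan is to derive both implications from \Cref{prop:Htil weight}(\ref{enu:Htil:weight:zero}), which tells us that the only weight-$0$ edges of $\Htil$ are the edges $(s,v)$ with $v\in S$, together with the elementary dynamics of the MES estimates. First I would record the following structural fact: \emph{at every time $t$, for every $v\in\stage St$ the edge $(s,v)$ is present in $\Htil$ and has $\wtil_t(s,v)=0$, and no other edge of $\Htil$ has weight $0$.} Indeed, $v\in\stage St$ gives $\dist_{\stage Gt}(\stage St,v)=0$, so $v\in\ball_{\stage Gt}(\stage St,2(\frac{\stretch}{\eps})^{k}d)\subseteq B_S$ by \Cref{prop:ApxBall}, and $v\in\stage S0\subseteq\ball_{\stage G0}(\stage S0,D)=V^{init}$; hence $(s,v)\in E(\Htil)$ by \Cref{def:Htil}(\ref{enu:Htil near edge}), and its weight is $0$ by \Cref{prop:Htil weight}(\ref{enu:Htil:weight:zero}). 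We also use throughout that all estimates and all edge weights in $\Htil$ are nonnegative (each estimate is initialized to a distance in $\Htil$ and only increased afterwards, and weights are nonnegative multiples of $\ceiling{\eps d}$ by \Cref{prop:Htil weight}(\ref{enu:Htil:weight:multiple})).

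For the direction $\dtil_t(u)=0\Rightarrow u\in\stage St$ (for $u\in\Vtil\setminus\{s\}$), no induction is needed: by \Cref{lem:MES basic}(\ref{enu:MES:lower}), $0=\dtil_t(u)\ge\min_v\{\dtil_t(v)+\wtil_t(v,u)\}\ge 0$, so the minimum equals $0$ and there is a neighbour $v$ of $u$ with $\dtil_t(v)=\wtil_t(v,u)=0$; in particular $(v,u)$ is a genuine edge of $\Htil$ of weight $0$, so by the structural fact $\{v,u\}=\{s,w\}$ for some $w\in\stage St$, and $u\ne s$ forces $u=w\in\stage St$. As a byproduct this shows that core vertices (which do not lie in $V\supseteq\stage St$) and vertices unreachable from $s$ never receive estimate $0$.

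For the direction $u\in\stage St\Rightarrow\dtil_t(u)=0$ I would prove, by induction on $t$, the slightly stronger statement that whenever $\stage St\ne\emptyset$ one has $\dtil_t(v)=0$ for \emph{every} $v\in\{s\}\cup\stage St$. The base case $t=0$ is immediate from the structural fact ($\dist_{\Htil}(s,s)=0$, and $\dist_{\Htil}(s,v)=0$ for $v\in\stage S0$). For the step, let $\stage St\ne\emptyset$; since $S$ is decremental, $\stage St\subseteq\stage S{t-1}$, so the inductive hypothesis gives $\dtil_{t-1}(v)=0$ for all $v\in\{s\}\cup\stage St$. Now look at the processing of the $t$-th deletion to $G$; recall that all induced insertions and $\Apxball$-updates to $\Htil$ are applied before the MES runs $\updatelevel$, so throughout this processing $\wtil(s,v)=0$ for every $v\in\stage St$ (equivalently $\dtil^{\near}(v)=0$) and these edges are present (as $B_S$ is decremental they are never removed). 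Consider the \emph{first} moment during this processing at which $\dtil(v)$ is raised above $0$ for some $v\in\{s\}\cup\stage St$; at that moment every $v'\in\{s\}\cup\stage St$ still has $\dtil(v')=0$. The raise occurs in an $\updatelevel(v)$ call that sets $\dtil(v)\gets\min_w\{\dtil(w)+\wtil(w,v)\}>0$. But if $v=s$ pick any $w^{*}\in\stage St$, and if $v\in\stage St$ take $w^{*}=s$; in both cases $(w^{*},v)$ is an edge of weight $0$ with $\dtil(w^{*})=0$ at that moment, so $\min_w\{\dtil(w)+\wtil(w,v)\}\le 0$, a contradiction. Hence no such moment exists, and $\dtil_t(v)=0$ for all $v\in\{s\}\cup\stage St$, which in particular proves the claimed implication.

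The main point to get right is the entangled behaviour of the source estimate $\dtil(s)$ and the estimates of vertices of $S$: a naive argument that merely reads off $\dtil_t(u)=\min_v\{\dtil_t(v)+\wtil_t(v,u)\}$ via \Cref{lem:MES basic}(\ref{enu:MES:not stretched}) (handling stretched $u$ through \Cref{lem:MES basic}(\ref{enu:MES:stretched})) has to contend with the circular dependency ``$\dtil(s)=0$ iff $\dtil(w)=0$ for $w\in S$''. The ``first violation within the processing of a single update'' argument above breaks this circularity and handles $s$ and all of $\stage St$ simultaneously; beyond that, the only things to be careful about are that $B_S$ is decremental (so the relevant weight-$0$ edges persist) and that the MES sees $\dtil^{\near}(v)=0$ for $v\in\stage St$ because $\Apxball$ has already processed the deletion before these edges are fed to the MES.
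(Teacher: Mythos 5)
Your proof is correct, and it is organized differently from the paper's. The paper argues per vertex and temporally: at $t=0$ the claim is read off from $\dtil_0(u)=\dist_{\Htil}(s,u)$ and \Cref{prop:Htil weight}; while $u\in S$ the estimate can never increase because the weight-$0$ edge $(s,u)$ together with $\dtil(s)=0$ would contradict the update rule; and the moment $u$ leaves $S$, the triggered $\updatelevel(u)$ sees only positive-weight incident edges, so $\dtil(u)$ becomes positive and stays positive forever by \Cref{lem:MES basic}(\ref{enu:MES:montone}). You instead prove the direction $\dtil_t(u)=0\Rightarrow u\in\stage St$ statically at time $t$ from the invariant $\dtil_t(u)\ge\min_v\{\dtil_t(v)+\wtil_t(v,u)\}$ of \Cref{lem:MES basic}(\ref{enu:MES:lower}) plus the weight-$0$ characterization, which avoids tracking when $u$ left $S$; and you prove the converse by induction on $t$ with a ``first violation within the processing of one update'' argument that treats $s$ and all of $\stage St$ simultaneously. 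The latter is the genuine added value of your write-up: the paper's step ``an increase would give $0=\dtil(s)+\wtil(s,u)>\dtil(u)$'' silently assumes $\dtil(s)$ stays $0$, which is exactly the circular dependence between the source estimate and the estimates of $S$-vertices that your simultaneous argument resolves explicitly. The only imprecision is your parenthetical claim that all $\Apxball$-induced updates to $\Htil$ are applied before any $\updatelevel$ call -- the paper only guarantees that insertions are fed first -- but your argument does not actually need this ordering: it only needs that no weight increase on an edge $(s,v)$ with $v\in\stage St$ is generated at time $t$, which follows from $\dtil^{\near}(v)$ remaining $0$ while $v$ stays in $S$, exactly as you observe.
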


\begin{proof}
By \Cref{prop:Htil weight}(\ref{enu:Htil:weight:zero}), we have $u\in\stage S0$
iff $\dtil_{0}(u)=\dist_{\Htil}(s,u)=0$. Note that $S$ is a decremental
set. As long as $u\in S$, $\dtil(u)$ never increases otherwise $0=\dtil(s)+\wtil(s,u)>\dtil(u)$
at some point of time, which is impossible as $\dtil(u)$ never decreases
by \Cref{lem:MES basic}(\ref{enu:MES:montone}). Whenever $u$ leaves
$S$ (i.e.~$(s,u)$ is deleted from $\Htil$), then $\updatelevel(u)$
is called. As all edges incident of $\Htil$ to $u$ have positive
weight, $\dtil(u)$ will be increased and $\dtil(u)>0$ from then
forever by \Cref{lem:MES basic}(\ref{enu:MES:montone}).
\end{proof}
In \Cref{lem:MES:approx:lower} below, we prove the inductive step
on $u$ simply by applying induction hypothesis on the parent of $u$
in the MES tree. We need to lower bound the estimate of core vertices
as well (although we do not need them at the end) so that the induction
hypothesis is strong enough. 
\begin{lem}
\label{lem:MES:approx:lower}For each vertex $u\in\Vtil\setminus\{s\}$,
after time $t$, we have the following:
\begin{enumerate}
\item \label{lem:MES:approx:lower:regular} If $u$ is a core vertex corresponding to a core $C$, then $\dtil_{t}(u)\ge\dist_{\stage Gt}(\stage St,\stage Ct)$.
\item If $u$ is a regular vertex, then $\dtil_{t}(u)\ge\dist_{\stage Gt}(\stage St,u)$.
\end{enumerate}
\end{lem}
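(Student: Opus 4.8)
The plan is to fix an arbitrary time $t$ and prove both items simultaneously by induction along the MES tree $\Ttil$ at time $t$ — equivalently, by strong induction on the value $\dtil_{t}(u)$, which by Proposition~\ref{prop:Htil weight}(\ref{enu:Htil:weight:multiple}) ranges over non-negative multiples of $\ceiling{\eps d}$ together with $\infty$. No cross-time induction is needed: the whole argument uses only the fixed snapshot $\dtil_{t}(\cdot),\wtil_{t}(\cdot)$, the inequality $\dtil_{t}(u)\ge\min_{v}\{\dtil_{t}(v)+\wtil_{t}(v,u)\}$ from Lemma~\ref{lem:MES basic}(\ref{enu:MES:lower}), and the structure of the emulator (Definitions~\ref{def:Htil} and~\ref{def:core compress}). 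The reason we also prove the otherwise-unused bound for core vertices is exactly the remark preceding the lemma: a regular vertex's MES parent may itself be a core vertex, so the induction hypothesis must be strong enough to hand that parent back into the induction.

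For the base case, if $\dtil_{t}(u)=\infty$ the claim is vacuous, and if $\dtil_{t}(u)=0$ then $u\in\stage St$ by Proposition~\ref{prop:zero estimate}; in particular $u$ is regular, since $S\subseteq V$ and by Proposition~\ref{prop:Htil vertex}(\ref{enu:Htil:vertex:core}) core vertices are never adjacent to $s$, and then $\dist_{\stage Gt}(\stage St,u)=0$. For the inductive step take $0<\dtil_{t}(u)<\infty$ and let $p$ be the parent of $u$, so $\dtil_{t}(p)+\wtil_{t}(p,u)\le\dtil_{t}(u)$; since a zero-weight incident edge would force $u\in\stage St$ by Proposition~\ref{prop:Htil weight}(\ref{enu:Htil:weight:zero}), in this regime $\wtil_{t}(p,u)>0$, hence $\dtil_{t}(p)<\dtil_{t}(u)$ and the hypothesis applies to $p$.

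Now split on the type of $p$. If $p=s$, then $u$ is regular and $(s,u)$ is the near-edge of Definition~\ref{def:Htil}(\ref{enu:Htil near edge}), so $\wtil_{t}(s,u)\ge\dtil^{\near}_{t}(u)\ge\dist_{\stage Gt}(\stage St,u)$ by the overestimate guarantee of $\Apxball$ (Definition~\ref{def:Apxball}(\ref{enu:Apxball:overestimate})), and $\dtil_{t}(u)\ge\wtil_{t}(s,u)$ finishes it. If $p\neq s$ is regular, the hypothesis gives $\dtil_{t}(p)\ge\dist_{\stage Gt}(\stage St,p)$; if $u$ is regular then $(p,u)$ must be a type-2 edge of $\Htil$ (type-1 edges touch a core vertex, and core vertices are independent), present only while $(p,u)\in E(\stage Gt)$, so $\wtil_{t}(p,u)=\ceiling{w_{G}(p,u)}_{\eps d}\ge\dist_{\stage Gt}(p,u)$; if instead $u$ is the core vertex of a core $C$, then $(p,u)$ is a covering-compressed edge with $\wtil_{t}(p,u)=\roundup{\stretch\cdot d_{\Clevel(C)}+\dtil^{C}_{t}(p)}{\eps d}\ge\dist_{\stage Gt}(\stage Ct,p)$, again by the $\Apxball$ overestimate guarantee; either way the triangle inequality for $\dist_{\stage Gt}$ combines $\dtil_{t}(p)+\wtil_{t}(p,u)$ into the claimed bound. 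Finally, if $p$ is the core vertex of a core $C$, then $u$ is regular, $\wtil_{t}(p,u)\ge\stretch\cdot d_{\Clevel(C)}+\dist_{\stage Gt}(\stage Ct,u)$, and the hypothesis gives $\dtil_{t}(p)\ge\dist_{\stage Gt}(\stage St,\stage Ct)$; here we additionally invoke the low-stretch guarantee of $\Core$ (Definition~\ref{def:Core}(\ref{cond:core:stretch})), $\diam_{\stage Gt}(\stage Ct)\le\stretch\cdot d_{\Clevel(C)}$, to route from $\stage St$ to $u$ through $\stage Ct$:
\[
\dist_{\stage Gt}(\stage St,u)\le\dist_{\stage Gt}(\stage St,\stage Ct)+\diam_{\stage Gt}(\stage Ct)+\dist_{\stage Gt}(\stage Ct,u)\le\dtil_{t}(p)+\wtil_{t}(p,u)\le\dtil_{t}(u).
\]

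The step I expect to need the most care is this last one: one must bound the covering-compressed edge weight against the \emph{current} core $\stage Ct$ rather than $\Cinit$, and then pay for the core's weak diameter using the $\stretch\cdot d_{\Clevel(C)}$ summand deliberately built into that weight in Definition~\ref{def:core compress}. A secondary, routine-but-fiddly point is checking that a type-2 edge present in $\Htil$ at time $t$ really is still an edge of $\stage Gt$, so that its weight lower-bounds $\dist_{\stage Gt}$; this rests on $G$ undergoing only edge deletions with unchanged weights (Proposition~\ref{prop:simplify-1}). One should also observe that whenever $\dtil_{t}(u)<\infty$, following the parents back to $s$ certifies edge by edge that every $\stage Gt$-distance appearing above is finite, so no inequality is ever asserted against $\infty$.
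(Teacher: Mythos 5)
Your proposal is correct and follows essentially the same route as the paper: induction on $\dtil_{t}(u)$ at the fixed time $t$, using $\dtil_{t}(u)\ge\dtil_{t}(v_{p})+\wtil_{t}(v_{p},u)$ from Lemma~\ref{lem:MES basic}(\ref{enu:MES:lower}) together with positivity of the edge weight, and then the same case split on the parent (dummy source via the $\Apxball$ overestimate, regular parent via the rounded $G$-edge or covering-compressed weight, core-vertex parent via the $\stretch\cdot d_{\Clevel(C)}$ summand covering $\diam_{\stage Gt}(\stage Ct)$). The extra bookkeeping you flag (well-foundedness via $\wtil_t>0$, type-2 edges still being edges of $\stage Gt$) is implicit in the paper's argument and handled consistently in yours.
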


\begin{proof}
We prove by induction on $\dtil_{t}(u)$. The base case where $\dtil_{t}(u)=0$
is done by \Cref{prop:zero estimate}. It remains to consider $u\in\Vtil\setminus(\stage St\cup\{s\})$
where $\dtil_{t}(u)<\infty$. Let $v_{p}=\arg\min_{v}\{\dtil(v)+\wtil(v,u)\}$
be the parent of $u$. We have $\dtil_{t}(u)\ge\dtil_{t}(v_{p})+\wtil_{t}(v_{p},u)$
by \Cref{lem:MES basic}(\ref{enu:MES:lower}). As $\wtil_{t}(v_{p},u)>0$
by \Cref{prop:Htil weight}(\ref{enu:Htil:weight:zero}), we can lower
bound $\dtil_{t}(v_{p})$ by induction hypothesis. 

There are two main cases. If $u$ is a core vertex, then $v_{p}$
is a regular vertex by \Cref{prop:Htil vertex}(\ref{enu:Htil:vertex:core})
and since the dummy source $s$ is not incident to core vertices.
So we have
\begin{align*}
\dtil_{t}(u) & \ge\dtil_{t}(v_{p})+\wtil_{t}(v_{p},u)\\
 & \ge\dist_{\stage Gt}(\stage St,v_{p})+\roundup{\stretch\cdot d_{\Clevel(C)}+\dist_{G}(\stage Ct,v_{p})}{\eps d}\\
 & \ge\dist_{\stage Gt}(\stage St,\stage Ct)
\end{align*}
where the second inequality is by induction hypothesis and by the
edge weight of the covering-compressed graph assigned in \Cref{def:core compress}.

Now, suppose that $u$ is a regular vertex. We have three more sub-cases
because $v_{p}$ can either be a core vertex, a regular vertex, or
a dummy source vertex $s$. If $v_{p}$ is a core vertex corresponding
to a core $C_{p}$, then 
\begin{align*}
\dtil_{t}(u) & \ge\dtil_{t}(v_{p})+\wtil_{t}(v_{p},u)\\
 & \ge\dist_{\stage Gt}(\stage St,\stage{C_{p}}t)+\roundup{\stretch\cdot d_{\Clevel(C_{p})}+\dist_{G}(\stage{C_{p}}t,u)}{\eps d}\\
 & \ge\dist_{\stage Gt}(\stage St,\stage{C_{p}}t)+\diam_{\stage Gt}(\stage{C_{p}}t)+\dist_{G}(\stage{C_{p}}t,u)\\
 & \ge\dist_{\stage Gt}(\stage St,u).
\end{align*}
where the second inequality follows by the same reason as in the previous
case, and $\diam_{\stage Gt}(\stage{C_{p}}t)\le\stretch\cdot d_{\Clevel(C_{p})}$
is guaranteed by \Cref{def:Covering}. If $v_{p}$ is a regular vertex,
then we have
\[
\dtil_{t}(u)\ge\dtil_{t}(v_{p})+\wtil_{t}(v_{p},u)\ge\dist_{\stage Gt}(\stage St,v_{p})+\ceiling{w(v_{p},u)}_{\eps d}\ge\dist_{G}(\stage St,u)
\]
where second inequality is by induction hypothesis and $\wtil_{t}(v_{p},u)=\ceiling{w(v_{p},u)}_{\eps d}$
by construction of $\Htil$. Lastly, if $v_{p}=s$, then $\dtil_{t}(u)\ge\dtil_{t}(s)+\wtil_{t}(s,u)=\ceiling{\dtil_{t}^{\near}(u)}_{\eps d}\ge\dist_{G}(\stage St,u)$
because $\dtil_{t}^{\near}(u)\ge\dist_{G}(\stage St,u)$ by the guarantee
of $\Apxball(G,S,2(\frac{\stretch}{\eps})^{k}d,\eps)$.
\end{proof}

\subsubsection{Upper Bounds of Estimates}

In this section, we show that the estimates $\{\dtil(u)\}_{u\in\Vtil}$
are upper bounded by distances in $G$ within small approximation
factor. This section highly exploits the structure of $\Htil$ described
in \Cref{def:Htil}.

\begin{lem}
\label{lem:MES:approx:upper}For each vertex $u\in\Vtil\setminus\{s\}$,
after time $t$, we have the following:
\begin{enumerate}
\item \label{lem:MES:approx:upper:regular} If $u$ is a regular vertex where $\dist_{\stage Gt}(\stage St,u)\le D$, then 
\begin{equation}
\dtil_{t}(u)\le\min\left\{ \distInduc_{\stage Gt}(\stage St,u),\min_{(v,u)\in E(\stage Gt)\cap E(\stage{\Htil}t)}\left\{ \distInduc_{\stage Gt}(\stage St,v)+\wtil_{t}(v,u)\right\} \right\} \label{eq:MES bound regular}
\end{equation}
where we define $\distInduc_{\stage Gt}(\stage St,v)=\max\{\ceiling{(1+\eps)\dist_{\stage Gt}(\stage St,v)}_{\eps d},(1+50\eps)\dist_{\stage Gt}(\stage St,v)\}$.
\item If $u$ is a core vertex corresponding to a core $C$ where $(\frac{\stretch}{\eps})^{k}d<\dist_{\stage Gt}(\stage St,\stage Ct)\le D$,
then 
\begin{equation}
\dtil_{t}(u)\le(1+50\eps)\dist_{\stage Gt}(\stage St,\stage Ct)-2\stretch\cdot d_{\Clevel(C)}.\label{eq:MES bound core}
\end{equation}
\end{enumerate}
\end{lem}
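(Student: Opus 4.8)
\emph{Proof plan.} The plan is to prove \eqref{eq:MES bound regular} and \eqref{eq:MES bound core} simultaneously by induction on the time step $t$, following the standard template for analysing a monotone ES-tree, and within a fixed $t$ to run a secondary induction ordered by $\dist_{\stage Gt}(\stage St,u)$ for a regular vertex $u$ and by $\dist_{\stage Gt}(\stage St,\stage Ct)$ for a core vertex of a core $C$. The base case $t=0$ uses $\dtil_0(u)=\dist_{\stage{\Htil}0}(s,u)$ and amounts to exhibiting, for each vertex satisfying the hypothesis, a \emph{lifted path} in $\stage{\Htil}0$ of weight at most the claimed bound; the secondary base case, $\dist_{\stage Gt}(\stage St,\cdot)=0$, is \Cref{prop:zero estimate}.

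For the inductive step at a vertex $u$ I would split on whether $u$ is stretched after time $t$. If it is, then \Cref{lem:MES basic}(\ref{enu:MES:stretched}) gives $\dtil_t(u)=\dtil_{t-1}(u)$ as soon as we have verified the side condition $\min_v\{\dtil_t(v)+\wtil_t(v,u)\}\le2D$ --- which the secondary induction supplies, since a lifted path witnesses $\min_v\{\dtil_t(v)+\wtil_t(v,u)\}\le(1+50\eps)D+O(\eps d)<2D$ --- and we then conclude from the outer hypothesis at time $t-1$ together with monotonicity: $\stage Gt$ is decremental and $\stage St$ only shrinks, so $\dist_{\stage G{t-1}}(\stage S{t-1},\cdot)\le\dist_{\stage Gt}(\stage St,\cdot)$ and the right-hand sides of \eqref{eq:MES bound regular} and \eqref{eq:MES bound core} are non-decreasing in $t$. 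If $u$ is not stretched, \Cref{lem:MES basic}(\ref{enu:MES:not stretched}) gives $\dtil_t(u)=\min_v\{\dtil_t(v)+\wtil_t(v,u)\}$, so it suffices to produce a single neighbour $v^{*}$ of $u$ in $\stage{\Htil}t$ with $\dtil_t(v^{*})+\wtil_t(v^{*},u)$ at most the target and to bound $\dtil_t(v^{*})$ by the secondary hypothesis, which applies because every $v^{*}$ I choose sits at strictly smaller distance from $\stage St$.

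The heart of the proof is the choice of $v^{*}$, and especially the core-vertex bound \eqref{eq:MES bound core}; write $v_{C}$ for the core vertex of $C$. Given $v_{C}$ with $(\stretch/\eps)^{k}d<\dist_{\stage Gt}(\stage St,\stage Ct)\le D$, I would walk a shortest $\stage St$-to-$\stage Ct$ path in $\stage Gt$ and stop at the moment it crosses the distance-to-$\stage Ct$ window $[\Theta(\tfrac{\stretch}{\eps}d_{\Clevel(C)}),\tfrac{\stretch}{4\eps}d_{\Clevel(C)}]$: either some vertex $w$ lies inside the window, in which case $w\in\shell(C)$ and $\dtil_t^{C}(w)\le(1+\eps)\dist_{\stage Gt}(\stage Ct,w)$ (by \Cref{prop:ApxBall}, since $\ball_{\stage Gt}(\stage Ct,\tfrac{\stretch}{4\eps}d_{\Clevel(C)})\subseteq\shell(C)$), and I take $v^{*}=w$; or a single necessarily heavy edge $(a,b)$ of $\stage Gt$ jumps across the window, in which case $b\in\shell(C)$ is near $\stage Ct$ and I take $v^{*}=b$, reaching $b$ through the long edge $(a,b)$. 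In both cases $\wtil_t(v^{*},v_{C})=\roundup{\stretch\cdot d_{\Clevel(C)}+\dtil_t^{C}(v^{*})}{\eps d}$, and combining $\dtil_t(v^{*})\le\distInduc_{\stage Gt}(\stage St,v^{*})\le(1+50\eps)\dist_{\stage Gt}(\stage St,v^{*})$ with the telescoping of distances along the path, the ``$50\eps$'' slack gets multiplied either by the deep penetration $\dist_{\stage Gt}(\stage Ct,w)$ or by the edge length $w_G(a,b)$ --- each $\Omega(\tfrac{\stretch}{\eps}d_{\Clevel(C)})$ --- which dominates every $O(\stretch\,d_{\Clevel(C)})$ rounding-and-diameter overhead and lands at $(1+50\eps)\dist_{\stage Gt}(\stage St,\stage Ct)-2\stretch\cdot d_{\Clevel(C)}$. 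For a regular vertex $u$ with $\dist_{\stage Gt}(\stage St,u)\le D$ I would distinguish: (i) if $\dist_{\stage Gt}(\stage St,u)\le2(\stretch/\eps)^{k}d$ then $u\in\ball_{\stage Gt}(\stage St,2(\stretch/\eps)^{k}d)$, so the dummy-source edge $(s,u)$ is present and $v^{*}=s$ with $\wtil_t(s,u)=\roundup{\dtil_t^{\near}(u)}{\eps d}$; (ii) if a shortest $\stage St$-to-$u$ path in $\stage Gt$ ends in a heavy edge $(v,u)$, then $(v,u)\in E(\stage{\Htil}t)$ and $v^{*}=v$, recovering the inner minimum in \eqref{eq:MES bound regular}; (iii) otherwise take a core $C$ covering $u$ (\Cref{def:Covering}), so $u\in\cover(C)$ forces $\dtil_t^{C}(u)=O(d_{\Clevel(C)})$ and one checks $\dist_{\stage Gt}(\stage St,\stage Ct)>(\stretch/\eps)^{k}d$, hence $v^{*}=v_{C}$ and we substitute \eqref{eq:MES bound core}; the $-2\stretch\,d_{\Clevel(C)}$ slack there absorbs both $\wtil_t(v_{C},u)=O(\stretch\,d_{\Clevel(C)})$ and the triangle inequality $\dist_{\stage Gt}(\stage St,\stage Ct)\le\dist_{\stage Gt}(\stage St,u)+\dtil_t^{C}(u)$, leaving $(1+50\eps)\dist_{\stage Gt}(\stage St,u)\le\distInduc_{\stage Gt}(\stage St,u)$.

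The step I expect to be the main obstacle is the parameter and rounding bookkeeping. The delicate point is \eqref{eq:MES bound core}: one must make the negative slack $-2\stretch\,d_{\Clevel(C)}$ survive, which is exactly why the routing vertex has to be found either deep in $\shell(C)$ or across a long edge, and why the hypothesis $\dist_{\stage Gt}(\stage St,\stage Ct)>(\stretch/\eps)^{k}d$ is needed to guarantee a penetration/edge-length of order $\tfrac{\stretch}{\eps}d_{\Clevel(C)}$; one must also keep the $\roundup{\cdot}{\eps d}$ errors from accumulating, which works because each lifted path uses only $O(1)$ emulator edges per $\Theta(\tfrac{\stretch}{\eps}d_{\Clevel(C)})$-length segment, so a bounded number of $\eps d$ terms arise and are absorbed by the gap between the two branches of $\distInduc_{\stage Gt}(\stage St,\cdot)$ (hence vertices very close to $\stage St$ are routed through the dummy source rather than a core). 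A secondary bookkeeping issue is reconciling the depth $D$ in the statement with the fact that the MES tree is really maintained to depth $2D$: since invoking \eqref{eq:MES bound core} inside sub-case~(iii) of \eqref{eq:MES bound regular} produces a core $C$ with $\dist_{\stage Gt}(\stage St,\stage Ct)$ only bounded by $D+O(d_{\Clevel(C)})$, one effectively proves \eqref{eq:MES bound core} up to distance roughly $2D$. Finally, it is worth noting that although $\Htil$ undergoes insertions (new core vertices are continually added), this causes no difficulty for these \emph{upper} bounds, because their right-hand sides depend only on the monotone objects $\stage Gt,\stage St$; the ``low-impact insertion'' phenomenon is relevant only for the complementary lower-bound direction in \Cref{lem:MES:approx:lower}.
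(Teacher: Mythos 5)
Your overall template matches the paper's proof: a double induction (over time, then over distance within a time step), the stretched/non-stretched dichotomy via \Cref{lem:MES basic}, the three routing choices (dummy source, heavy edge, covering core) for regular vertices, the shell-annulus/heavy-edge dichotomy for core vertices, and the same slack bookkeeping in which the $-2\stretch\cdot d_{\Clevel(C)}$ term is paid for either by deep penetration into the shell or by a long edge. However, your secondary induction is not well-founded as stated, and this is a genuine gap rather than bookkeeping. You order core vertices by $\dist_{\stage Gt}(\stage St,\stage Ct)$, and in your regular sub-case (iii) you bound $\dtil_t(u)$ by invoking \eqref{eq:MES bound core} at the core vertex $v_C$ of a core $C$ covering $u$. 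But $u\in\cover(C)$ only gives $\dist_{\stage Gt}(\stage Ct,u)\le 4d_{\Clevel(C)}(1+\eps)$; the core may lie \emph{behind} $u$ as seen from $\stage St$, so $\dist_{\stage Gt}(\stage St,\stage Ct)$ can exceed $\dist_{\stage Gt}(\stage St,u)$, in which case $v_C$ comes \emph{after} $u$ in your order and its bound is not yet available --- and bounding $v_C$ in turn requires a regular shell vertex at distance $\dist_{\stage Gt}(\stage St,\shell(\stage Ct))\le\dist_{\stage Gt}(\stage St,u)$, which can even coincide with $u$, so the scheme can be genuinely circular. The paper avoids this precisely by keying each core vertex on $d'_t(v_C)=\dist_{\stage Gt}(\stage St,\shell(\stage Ct))$ (distance to the \emph{shell}) and breaking ties regular-before-core: when the predecessor $v_{z-1}$ of $u$ on the shortest path lies in $\shell(C)$ one gets $d'_t(v_C)\le\dist_{\stage Gt}(\stage St,v_{z-1})<\dist_{\stage Gt}(\stage St,u)$, while in the complementary case the last edge is heavy and the core bound is never needed; and in the core-vertex step the first shell vertex $v_i$ has $d'_t(v_i)=d'_t(v_C)$ and is handled by the tie-break. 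You would need to adopt this key (or supply an equivalent acyclicity argument) for your induction to close.

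A second, smaller omission: the inner minimum in \eqref{eq:MES bound regular} must be proven for \emph{every} edge $(v,u)\in E(\stage Gt)\cap E(\stage{\Htil}t)$, not only when some shortest $\stage St$-$u$ path happens to end in a heavy edge; it is exactly this per-edge form of the inductive hypothesis that your core-vertex argument relies on when a heavy edge $(a,b)$ jumps the annulus (you need $\dtil_t(b)\le\distInduc_{\stage Gt}(\stage St,a)+\wtil_t(a,b)$, since the $-50\eps\,w_t(a,b)$ slack comes from that branch, not from $\distInduc_{\stage Gt}(\stage St,b)$ alone). The paper establishes it with a short case split (trivial if $\dist_{\stage Gt}(\stage St,v)\ge\dist_{\stage Gt}(\stage St,u)$, otherwise by the inductive hypothesis on $v$); your plan should include this, plus a tie-break for the situation where the heavy edge lands directly inside $\stage Ct$. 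The other points you flag --- monotonicity of the right-hand sides for the stretched case and the $D$-versus-$2D$ slack --- are handled the same way in the paper and are fine.
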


\begin{proof}
For any time $t$ and any $u\in\Vtil$, we define
\[
d'_{t}(u)=\begin{cases}
\dist_{\stage Gt}(\stage St,u) & \text{if }u\text{ is a regular vertex}\\
\dist_{\stage Gt}(\stage St,\shell(\stage Ct)) & \text{if }u\text{ is a core vertex corresponds to a core }C
\end{cases}
\]
Let \emph{$d'_{t}$-order} refer to an increasing order of vertices
in $\Vtil$ according to $d'_{t}(u)$. If $d'_{t}(u)=d'_{t}(v)$ for
some regular vertex $u$ and some core vertex $v$, we let $u$ precede
$v$ in this order. We will prove the claim by induction on $t$ and
then on the $d'_{t}$-order of vertices in $\Vtil$\emph{.} 

Our strategy is to first bound $\min_{v}\{\dtil_{t}(v)+\wtil_{t}(v,u)\}$
instead of $\dtil_{t}(u)$. More formally, we will show that for regular vertices $u$ where $\dist_{\stage Gt}(\stage St,u)\le D$,
\begin{equation}
\min_{v}\{\dtil_{t}(v)+\wtil_{t}(v,u)\}\le\min\left\{ \distInduc_{\stage Gt}(\stage St,u),\min_{(v,u)\in E(\stage Gt)\cap E(\stage{\Htil}t)}\left\{ \distInduc_{\stage Gt}(\stage St,v)+\wtil_{t}(v,u)\right\} \right\} \label{eq:MES hook regular}
\end{equation}
and for core vertices $u$ corresponding to a core $C$ where $(\frac{\stretch}{\eps})^{k}d<\dist_{\stage Gt}(\stage St,\stage Ct)\le D$,
\begin{equation}
\min_{v}\{\dtil_{t}(v)+\wtil_{t}(v,u)\}\le(1+50\eps)\dist_{\stage Gt}(\stage St,\stage Ct)-2\stretch\cdot d_{\Clevel(C)}\label{eq:MES hook core}
\end{equation}
Note that, to prove \Cref{eq:MES hook regular}
and \Cref{eq:MES hook core}, we still assume that induction hypothesis
holds for $\dtil_{t}(u)$. Then, we will use \Cref{eq:MES hook regular}
and \Cref{eq:MES hook core} to prove \Cref{eq:MES bound regular} and
\Cref{eq:MES bound core}, respectively.%

\paragraph{Proving \Cref{eq:MES hook regular} for Regular Vertices.}

For any $t\ge0$, we first show that $\min_{v}\{\dtil_{t}(v)+\wtil_{t}(v,u)\}\le\distInduc_{\stage Gt}(\stage St,u)$.
If $\dist_{\stage Gt}(\stage St,u)\le2(\frac{\stretch}{\eps})^{k}d$,
then $(s,u)\in E(\Htil)$ and so 
\begin{align*}
\min_{v}\{\dtil_{t}(v)+\wtil_{t}(v,u)\} & \le\dtil_{t}(s)+\wtil_{t}(s,u)\\
 & =0+\ceiling{\dtil_{t}^{\near}(u)}_{\eps d} & \text{by construction of }\Htil\\
 & \le\ceiling{(1+\eps)\dist_{\stage{G}{t}}(\stage St,u)}_{\eps d} & \text{by }\Apxball(G,S,2(\frac{\stretch}{\eps})^{k}d,\eps)\\
 & \le\distInduc_{\stage Gt}(\stage St,u).
\end{align*}
So from now, we assume that $\dist_{\stage Gt}(\stage St,u)>2(\frac{\stretch}{\eps})^{k}d$.
The covering guarantees that there exists a level-$\ell$ core $C$
where $u\in\cover(C)$ for some $\ell\in[0,k)$. Let $v_{C}\in\Vtil$
denote the core vertex corresponding to $C$. Consider an $\stage St$-$u$
shortest path $P=(v_{1},\dots,v_{z})$ in $G$ where $v_{1}\in\stage St$
and $u=v_{z}$. There are two sub-cases whether $v_{z-1}\in\shell(C)$
or not. 
\begin{enumerate}
\item Suppose that $v_{z-1}\in\shell(C)$. Then, we can apply induction
hypothesis on $v_{C}$ because 
\[
d'_{t}(v_{C})=\dist_{\stage Gt}(\stage St,\shell(\stage Ct))<\dist_{\stage Gt}(\stage St,v_{z})=d'_{t}(u).
\]
and 
\begin{align*}
\dist_{\stage Gt}(\stage St,\stage Ct) & \ge\dist_{\stage Gt}(\stage St,u)-\dist_{\stage Gt}(\stage Ct,u)\\
 & >2(\frac{\stretch}{\eps})^{k}d-4d_{\ell}(1+\eps)>(\frac{\stretch}{\eps})^{k}d.
\end{align*}
where $\dist_{\stage Gt}(\stage Ct,u)\le4d_{\ell}(1+\eps)$ because
$u\in\cover(C)$ and $d_{\ell}\le(\frac{\stretch}{\eps})^{k-1}d$.
So, after applying induction hypothesis, we have $\dtil_{t}(v_{C})\le(1+50\eps)\dist_{\stage Gt}(\stage St,\stage Ct)-2\stretch\cdot d_{\ell}$.
By the definition of the covering-compressed graph from \Cref{def:core compress},
we have $\wtil_{t}(v_{C},u)=\roundup{\stretch\cdot d_{\ell}+\dtil_{t}^{C}(u)}{\eps d}$
where $\dtil_{t}^{C}(u)\le(1+\eps)\dist_{\stage Gt}(\stage Ct,u)$
is maintained by $\Apxball(G,C,\frac{\stretch}{4\eps}d_{\ell},\eps)$ that maintains $\shell(C)$.
We conclude by that
\begin{align*}
 & \min_{v}\{\dtil_{t}(v)+\wtil_{t}(v,u)\} \\
 & \le\dtil_{t}(v_{C})+\wtil_{t}(v_{C},u) & \text{}\\
 & \le(1+50\eps)\dist_{\stage Gt}(\stage St,\stage Ct)-2\stretch\cdot d_{\ell}+\roundup{\stretch\cdot d_{\ell}+(1+\eps)\dist_{\stage Gt}(\stage Ct,u)}{\eps d}\\
 & \le(1+50\eps)\dist_{\stage Gt}(\stage St,\stage Ct)+(1+\eps)\dist_{\stage Gt}(\stage Ct,u)+(\stretch\cdot d_{\ell}+\eps d-2\stretch\cdot d_{\ell})\\
 & \le(1+50\eps)\dist_{\stage Gt}(\stage St,u)\\
 & \le\distInduc_{\stage Gt}(\stage St,u).
\end{align*}
\item Suppose that $v_{z-1}\notin\shell(C)$.
We have $\dtil_{t}(v_{z-1})\le\distInduc_{\stage Gt}(\stage St,v_{z-1})$
by induction hypothesis. Also, note that $w_{t}(v_{z-1},v_{z})\ge\dist_{\stage Gt}(\stage St,v_{z-1})-\dist_{\stage Gt}(\stage St,v_{z})>\frac{\stretch}{4\eps}d_{\ell}-4d_{\ell}\cdot(1+\eps)>d$
because $v_{z-1}\notin\shell(C)$ but $v_{z}\in\cover(C)$. So, by
construction of $\Htil$, we have $(v_{z-1},v_{z})\in E(G)\cap E(\Htil)$
with weight $\wtil_{t}(v_{z-1},v_{z})=\ceiling{w_{t}(v_{z-1},v_{z})}_{\eps d}\le w_{t}(v_{z-1},v_{z})+\eps d<(1+\eps)w_{t}(v_{z-1},v_{z}).$
We conclude by \Cref{lem:MES basic}(\ref{enu:MES:not stretched})
that 
\begin{align*}
\min_{v}\{\dtil_{t}(v)+\wtil_{t}(v,u)\} & \le\dtil_{t}(v_{z-1})+\wtil_{t}(v_{z-1},v_{z})\\
 & \le\distInduc_{\stage Gt}(\stage St,v_{z-1})+(1+\eps)w_{t}(v_{z-1},v_{z})\\
 & \le\max\left\{ (1+\eps)\dist_{\stage Gt}(\stage St,u)+\eps d,(1+50\eps)\dist_{\stage Gt}(\stage St,u)\right\} \\
 & =(1+50\eps)\dist_{\stage Gt}(\stage St,u)=\distInduc_{\stage Gt}(\stage St,u).
\end{align*}
where the last line is because $\dist_{\stage Gt}(\stage St,u)>2(\frac{\stretch}{\eps})^{k}d$.
\end{enumerate}
In both cases, we have $\min_{v}\{\dtil_{t}(v)+\wtil_{t}(v,u)\}\le\distInduc_{\stage Gt}(\stage St,u)$ as desired. 

Lastly, we also need to show that $$\min_{v}\{\dtil_{t}(v)+\wtil_{t}(v,u)\}\le\min_{(v,u)\in E(\stage Gt)\cap E(\stage{\Htil}t)}\left\{ \distInduc_{\stage Gt}(\stage St,v)+\wtil_{t}(v,u)\right\}.$$
Consider any $(v,u)\in E(\stage Gt)\cap E(\stage{\Htil}t)$. If $\dist_{\stage Gt}(\stage St,v)\ge\dist_{\stage Gt}(\stage St,u)$,
then, trivially, we have that $\min_{v}\{\dtil_{t}(v)+\wtil_{t}(v,u)\}\le\distInduc_{\stage Gt}(\stage St,u)\le\distInduc_{\stage Gt}(\stage St,v)+\wtil_{t}(v,u)$.
Otherwise, if $\dist_{\stage Gt}(\stage St,v)<\dist_{\stage Gt}(\stage St,u)$,
then, by applying induction hypothesis on $v$, we again have $\min_{v}\{\dtil_{t}(v)+\wtil_{t}(v,u)\}\le\distInduc_{\stage Gt}(\stage St,v)+\wtil_{t}(v,u)$.

\paragraph{Proving \Cref{eq:MES hook core} for Core Vertices.}

Suppose $u$ is a core vertex corresponding to a level-$\ell$ core
$C\in\cC$. As $\shell(C)=\Apxball(G,C,\frac{\stretch}{4\eps}d_{\ell},\eps)$
and $\dist_{\stage Gt}(\stage St,\stage Ct)>(\frac{\stretch}{\eps})^{k}d>\frac{\stretch}{4\eps}d_{\ell}\cdot(1+\eps)$
for all $\ell<k$, we have $\shell(\stage Ct)\cap\stage St=\emptyset$.
Consider the $\stage St$-$\stage Ct$ shortest path $P=(v_{1},\dots,v_{z})$
in $G$ where $v_{1}\in\stage St\setminus\shell(\stage Ct)$ to $v_{z}\in\stage Ct$.
Let $i$ be the first index that $v_{i}\in\shell(\stage Ct)$. Note
that $i>1$. Note that we can apply induction hypothesis on $v_{i}$
because $d'_{t}(v_{i})=\dist_{\stage Gt}(\stage St,v_{i})=\dist_{\stage Gt}(\stage St,\shell(\stage Ct))=d'_{t}(u)$
and because $v_{i}$ is a regular vertex and $u$ is a core vertex.
There are two cases:
\begin{enumerate}
\item If $w_{t}(v_{i-1},v_{i})<\frac{\stretch}{10\eps}d_{\ell}$, then we
have 
\begin{align*}
\dist_{\stage Gt}(v_{i},\stage Ct) & \ge\dist_{\stage Gt}(\stage Ct,v_{i-1})-w_{t}(v_{i-1},v_{i})\\
 & >\frac{\stretch}{4\eps}d_{\ell}-\frac{\stretch}{10\eps}d_{\ell}>\frac{\stretch}{8\eps}d_{\ell} & \text{as }v_{i-1}\notin\shell(\stage Ct).
\end{align*}
We conclude 
\begin{align*}
	& \min_{v}\{\dtil_{t}(v)+\wtil_{t}(v,u)\}\\
	& \le\dtil_{t}(v_{i})+\wtil_{t}(v_{i},u)\\
	& \le\distInduc_{\stage Gt}(\stage St,v_{i})+\roundup{\stretch\cdot d_{\ell}+(1+\eps)\dist_{\stage Gt}(v_{i},\stage Ct)}{\eps d} & \text{by IH on }v_{i}\\
	& \le\ceiling{(1+50\eps)\dist_{\stage Gt}(\stage St,v_{i})}_{\eps d}+\roundup{\stretch\cdot d_{\ell}+(1+\eps)\dist_{\stage Gt}(v_{i},\stage Ct)}{\eps d} & \text{by definition of }\distInduc\\
	& \le(1+50\eps)\dist_{\stage Gt}(\stage St,v_{i})+(1+50\eps)\dist_{\stage Gt}(v_{i},\stage Ct)\\
	& \,\,\,\,\,\,-49\eps\dist_{\stage Gt}(v_{i},\stage Ct)+\stretch\cdot d_{\ell}+2\eps d\\
	& <(1+50\eps)\dist_{\stage Gt}(\stage St,\stage Ct)-49\eps\cdot\frac{\stretch}{8\eps}\cdot d_{\ell}+\stretch\cdot d_{\ell}+2\eps d\\
	& \le(1+50\eps)\dist_{\stage Gt}(\stage St,\stage Ct)-2\stretch\cdot d_{\ell}
\end{align*}
\item If $w_{t}(v_{i-1},v_{i})\ge\frac{\stretch}{10\eps}d_{\ell}$%
,
then $(v_{i-1},v_{i})\in E(\stage Gt)\cap E(\stage{\Htil}t)$ and
$\wtil_{t}(v_{i-1},v_{i})=\ceiling{w_{t}(v_{i-1},v_{i})}_{\eps d}$.
We have 
\begin{align*}
	\dtil_{t}(v_{i}) & \le\distInduc_{\stage Gt}(\stage St,v_{i-1})+\wtil_{t}(v_{i-1},v_{i}) & \text{by IH on }v_{i}\\
	& \le\ceiling{(1+50\eps)\dist_{\stage Gt}(\stage St,v_{i-1})}_{\eps d}+\ceiling{w_{t}(v_{i-1},v_{i})}_{\eps d} & \text{by definition of }\distInduc\\
	& \le(1+50\eps)\dist_{\stage Gt}(\stage St,v_{i-1})+(1+50\eps)w_{t}(v_{i-1},v_{i})\\
	& \,\,\,\,\,\,-50\eps\cdot w_{t}(v_{i-1},v_{i})+2\eps d\\
	& \le(1+50\eps)\dist_{\stage Gt}(\stage St,v_{i})-5\stretch\cdot d_{\ell}+2\eps d
\end{align*}
We conclude 
\begin{align*}
 & \min_{v}\{\dtil_{t}(v)+\wtil_{t}(v,u)\} \\
 & \le\dtil_{t}(v_{i})+\wtil_{t}(v_{i},u)\\
 & \le(1+50\eps)\dist_{\stage Gt}(\stage St,v_{i})-5\stretch\cdot d_{\ell}+2\eps d+\roundup{\stretch\cdot d_{\ell}+(1+\eps)\dist_{\stage Gt}(v_{i},\stage Ct)}{\eps d}\\
 & \le(1+50\eps)\dist_{\stage Gt}(\stage St,\stage Ct)-5\stretch\cdot d_{\ell}+2\eps d+\stretch\cdot d_{\ell}+\eps d\\
 & \le(1+50\eps)\dist_{\stage Gt}(\stage St,\stage Ct)-2\stretch\cdot d_{\ell}.
\end{align*}
where the second inequality is by the definition of covering-compressed
graph from \Cref{def:core compress} which says that $\wtil_{t}(v_{i},u)=\roundup{\stretch\cdot d_{\ell}+\dtil_{t}^{C}(v_{i})}{\eps d}$
where $\dtil_{t}^{C}(v_{i})\le(1+\eps)\dist_{\stage Gt}(\stage Ct,v_{i})$
is maintained by $\Apxball(G,C,\frac{\stretch}{4\eps}d_{\ell},\eps)$. 
\end{enumerate}
In both cases, we have shown that $\min_{v}\{\dtil_{t}(v)+\wtil_{t}(v,u)\}\le(1+50\eps)\dist_{\stage Gt}(\stage St,\stage Ct)-2\stretch\cdot d_{\ell}$
as desired.

\paragraph{Bounding $\protect\dtil_{t}(u)$.}

If $u$ is not a stretched vertex after time $t$, then by \Cref{lem:MES basic}(\Cref{enu:MES:not stretched})
$\dtil_{t}(u)\le\min_{v}\{\dtil_{t}(v)+\wtil_{t}(v,u)\}$. Therefore,
\Cref{eq:MES bound regular} and \Cref{eq:MES bound core} follow immediately
from \Cref{eq:MES hook regular} and \Cref{eq:MES hook core}, respectively.

Now, suppose $u$ is a stretched vertex after time $t$. \Cref{eq:MES hook regular}
and \Cref{eq:MES hook core} imply that $\min_{v}\{\dtil_{t}(v)+\wtil_{t}(v,u)\}\le2D$
because we assume $\dist_{\stage Gt}(\stage St,u)\le D$ if $u$ is
a regular vertex and $\dist_{\stage Gt}(\stage St,\stage Ct)\le D$
if $u$ is a core vertex. So by \Cref{lem:MES basic}(\Cref{enu:MES:stretched}),
we have $\dtil_{t}(u)=\dtil_{t-1}(u)$. 

So, to prove that \Cref{eq:MES hook regular} and \Cref{eq:MES hook core}
hold at time $t$, it is enough to prove that the right hand side
of both \Cref{eq:MES hook regular} and \Cref{eq:MES hook core} do
not decrease from time $t-1$ to $t$. This is true for \Cref{eq:MES hook regular}
because, for every edge $e\in E(G)\cap E(\Htil)$, $\wtil(e)=\ceiling{w(e)}_{\eps d}$
and the edge weight $w(e)$ in $G$ never decrease. Also, $G$ and
$S$ are decremental and $\dist_{G}(S,u)$ never decreases, which
in turn means that $\distInduc_{G}(S,u)$ never decreases. This is
true for \Cref{eq:MES hook core} because $C$ is also a decremental
set, and so $\dist_{G}(S,C)$ never decreases. This completes the
proof.
\end{proof}

\subsection[Proof of Theorem for Approximate Ball]{Proof of \Cref{thm:Apxball}}
\label{sec:ball conclude}

Finally, we conclude the proof of \Cref{thm:Apxball} by showing that
the all requirements of \\$\Apxball(G,S,D,50\eps)$ are satisfied and
then analyzing the running time of the algorithm.

\paragraph{Correctness.}

Recall that $\left\{ \min\{\dtil^{\near}(u),\dtil(u)\}\right\} _{u\in\Vtil}$
are the estimates maintained by the algorithm. For any vertex $u\in V\setminus\Vtil$,
we implicitly set $\dtil(u)\gets\infty$ and do not spend any time
maintaining it. This is justified because $\Vtil$ contains $\ball_{\stage G0}(\stage S0,D)$
and $\ball_{G}(S,D)$ is a decremental set, so $u\notin\ball_{\stage Gt}(\stage St,D)$
for any $t$.

Now, for each $u\in V\cap\Vtil$, by the guarantee of $\Apxball(G,S,2(\frac{\stretch}{\eps})^{k}d,\eps)$
and \Cref{lem:MES:approx:lower}, we first have that $\min\{\dtil^{\near}(u),\dtil(u)\}\ge\dist_{\stage Gt}(\stage St,u)$.
Secondly, if $u\le2(\frac{\stretch}{\eps})^{k}d$, then $\dtil^{\near}(u)\le(1+\eps)\dist_{G}(S,u)$
by $\Apxball(G,S,2(\frac{\stretch}{\eps})^{k}d,\eps)$. Otherwise,
if $2(\frac{\stretch}{\eps})^{k}d<\dist_{G}(S,u)\le D$, then $\dtil(u)\le\distInduc(S,u)=(1+50\eps)\dist_{G}(S,u)$
by \Cref{lem:MES:approx:upper}. So we satisfy Property \ref{enu:Apxball:approx}
of \Cref{def:Apxball}. Thirdly, both $\dtil^{\near}(u)$ and $\dtil(u)$
never decrease by \Cref{lem:MES basic}(\ref{enu:MES:montone}). Therefore,
$\left\{ \min\{\dtil^{\near}(u),\dtil(u)\}\right\} _{u\in V}$ satisfies
all three conditions of \Cref{def:Apxball}.

\paragraph{Total Update Time.}

As the underlying graph $G$ and the covering-compressed graph $H_{\cC}$
are explicitly maintained for us, we can maintain the emulator $\Htil$
in time $U+T_{\Apxball}(G,S,2(\frac{\stretch}{\eps})^{k}d,\eps)$
where $U$ denote the total number of edge updates to $\Htil$. By
\Cref{lem:Htil update}, $U=O(\left|\ball_{G}(S,D)\right|\Delta\frac{D}{\eps d})$.
Next, the total update time for maintaining the MES data structure
is $\Otil(\left|\ball_{G}(S,D)\right|\Delta\frac{D}{\eps d})$ by \Cref{lem:time MES}. Therefore,
in total, our algorithm takes $\Otil(\left|\ball_{G}(S,D)\right|\Delta\frac{D}{\eps d})+T_{\Apxball}(G,S,2(\frac{\stretch}{\eps})^{k}d,\eps)$
time.

\section{Putting Distance-Only Components Together}
\label{sec:part2PuttingItTogether}

In this section, we show how all our data structures fit together.
The main data structures were $\Apxball$(\Cref{def:Apxball}), $\Core$(\Cref{def:Core}),
and Covering(\Cref{def:Covering}).

\begin{theorem}
\label{thm:main no distance}For any $n$ and $\eps\in(\phicmg^{3},0.1)$,
let $G=(V,E)$ be a decremental bounded-degree graph with $n$ vertices
and edge weights are from $\{1,2,\dots,W=n^{5}\}$. Let $S\subseteq V$
be any decremental set. We can implement $\Apxball(G,S,\eps)$ that
has $\Ohat(n)$ total update time. 
\end{theorem}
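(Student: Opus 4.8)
The plan is to build a hierarchy of $\Apxball$ data structures at geometrically increasing distance scales, using the Covering (Theorem~\ref{thm:covering}) to pass from one scale to the next, and to bound the total running time by an inductive argument exploiting that each layer only loses a subpolynomial factor. Concretely, set $k = \Theta(\sqrt{\log n})$ so that $n^{2/k} = 2^{\Theta(\sqrt{\log n})} = n^{o(1)}$, and choose $\stretch$ and the chain of distance scales so that $\gamma := (\stretch/\eps)^{k} = n^{o(1)}$ as well. The base case is the trivial $(1,1,O(1))$-compressed graph of Proposition~\ref{prop:compressed base}, on which $\Apxball$ at distance scale $O(1)$ is just an ES-tree (Proposition~\ref{prop:ES tree}), costing $O(|\ball_G(S,d)|\cdot d) = \Ohat(n)$ since $d = n^{o(1)}$ and the graph has bounded degree.

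The inductive step proceeds as follows. Suppose that, at distance scale $d$, we have $\Apxball(G,\cdot,d',\eps)$ with total update time $\Ohat(n)$ for all $d' \le d\cdot(\stretch/\eps)^k$, and a $(d,\gamma,\Delta)$-compressed graph of $G$ with $\Delta = n^{o(1)}$. First, I would invoke Theorem~\ref{thm:Core} to obtain a Robust Core data structure $\Core(G,\Kinit,D)$ for every $D$ in the range $[d, d\cdot(\stretch/\eps)^{k-1}]$, with scattering parameter $\scatter = \Omegatil(\phicmg) = 1/n^{o(1)}$ and stretch $\stretch_{\core} = \Otil(\gamma/\phicmg^3)$; its running time is $\Ohat(T_{\Apxball}(G,\Kinit,32D\log n,0.1)\cdot \Delta^2 (D/d)^3/\phicmg^2)$, which is $\Ohat(n)$ by the inductive hypothesis since $D/d \le (\stretch/\eps)^k = n^{o(1)}$. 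Next, I would feed these $\Core$ and $\Apxball$ data structures into Theorem~\ref{thm:covering} to maintain a $(d,k,\eps,\stretch,\Delta')$-covering $\cC$ of $G$ with $\Delta' = O(kn^{2/k}/\scatter) = n^{o(1)}$; the covering's running time is bounded by $O(kn^{1+2/k}\log n/\scatter)$ plus a sum of $T_{\Core}$ and $T_{\Apxball}$ terms over all cores ever created, and the guarantee $\sum_{C}|\ball(\cdot)| \le O(kn^{1+2/k}/\scatter) = \Ohat(n)$ together with the per-core bounds keeps this at $\Ohat(n)$. By Proposition~\ref{prop:covering-compressed is compressed}, the covering-compressed graph $H_{\cC}$ is a $(d,\gamma,\Delta')$-compressed graph of $G$ with the \emph{same} scale $d$ but now with scale multiplied up to $d_{k} = d\gamma$ being reachable; more precisely, $H_{\cC}$ is a compressed graph at scale $d$ with gap $\gamma$, so it serves as the new ``coarser'' graph on which the next layer operates. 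Finally, Theorem~\ref{thm:Apxball} turns this covering into $\Apxball(G,S,D,50\eps)$ for $D$ up to the new scale using $\Ohat(|\ball_G(S,D)|\Delta'\cdot D/(\eps d)) + T_{\Apxball}(G,S,2(\stretch/\eps)^k d,\eps)$ time; the first term is $\Ohat(n)$ and the second is absorbed by the previous layer's $\Apxball$.

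I would iterate this step $O(\log_{(\stretch/\eps)^k} W) = O(\log n / \log((\stretch/\eps)^k))$ times, which is $o(\log n)$ since $(\stretch/\eps)^k$ is superpolylogarithmic, until the distance scale exceeds $W = n^5$, at which point $\dist_G(s,v) \le W$ is always within the ball and the top-level $\Apxball$ answers all distance queries. The accuracy degrades by a factor $(1+O(\eps))$ per layer, so over $o(\log n)$ layers the total stretch is $(1+O(\eps))^{o(\log n)}$; since the theorem only needs a $(1+O(\eps))$-approximation I would rescale $\eps$ down by a $1/\polylog(n)$ factor at the outset (permitted since the hypothesis allows $\eps \in (\phicmg^3, 0.1)$ and $\phicmg = n^{o(1)}$), absorbing this. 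The total update time is $(\text{number of layers}) \cdot \Ohat(n) = n^{o(1)}\cdot \Ohat(n) = \Ohat(n)$.

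The main obstacle I expect is making the induction on running time airtight: each of Theorems~\ref{thm:Core}, \ref{thm:covering}, \ref{thm:Apxball} invokes $T_{\Apxball}$ at a \emph{strictly smaller} distance scale, but the various multiplicative overheads ($\Delta^2(D/d)^3/\phicmg^2$, $k n^{2/k}/\scatter$, $\Delta D/(\eps d)$) compound across layers, so I must verify that the product of all per-layer overheads over the $o(\log n)$ layers is still $n^{o(1)}$ — this forces the specific choice $k = \Theta(\sqrt{\log n})$ and $\phicmg = 2^{-\Theta(\log^{3/4}n)}$, and requires checking that $\Delta = n^{2/k + o(1)}$ stays $n^{o(1)}$ at every level (i.e. that the compressed-graph parameters do not themselves blow up as we climb the hierarchy). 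A secondary subtlety is that the compressed graph $H_{\cC}$ handed to the next layer is a \emph{hypergraph} that is fully dynamic (cores get inserted), but Theorems~\ref{thm:Core} and~\ref{thm:Apxball} are already stated to accept exactly this kind of explicitly-maintained compressed graph, so this is handled by the interface rather than being a genuine gap.
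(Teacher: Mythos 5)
Your high-level architecture is exactly the paper's: a base case from the trivial compressed graph plus an ES-tree, and an inductive level that feeds $\Apxball$ into \Cref{thm:Core}, the resulting cores into \Cref{thm:covering}, the covering into \Cref{prop:covering-compressed is compressed}, and the compressed graph into \Cref{thm:Apxball} at the next distance scale. The genuine gap is in the parameter scheme, which is where the paper's proof actually lives (\Cref{lem:main induction} and \Cref{prop:para}). First, your claim that $k=\Theta(\sqrt{\log n})$ yields $\gamma=(\stretch/\eps)^{k}=n^{o(1)}$ is false: \Cref{thm:Core} only delivers cores of stretch $\Otil(\gamma_{\mathrm{prev}}/\phicmg^{3})\ge 1/\phicmg^{3}=2^{\Theta(\log^{3/4}n)}$, so already $(\stretch/\eps)^{k}\ge 2^{\Theta(\log^{3/4}n\cdot\sqrt{\log n})}=2^{\Theta(\log^{5/4}n)}$, which is superpolynomial. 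Second, and more structurally, the stretch at level $i$ is $\Otil(\gamma_{i-1}/\phicmg^{3})$, so with a \emph{fixed} $k$ the exponent multiplies by $k$ at every level ($\log\gamma_{i}\approx k\,\log\gamma_{i-1}$) and exceeds $\log n$ after $O(1)$ levels; you cannot run the recursion for $\Theta(\log_{\gamma}W)$ levels with one $k$. This forced compounding is precisely why the paper takes level-dependent $k_{i}=(\lg\lg n)^{3^{i}}$ (so that $\gamma_{i}\ge(\stretch_{i}/\eps_{i})^{k_{i}}$ needs only $k_{i+1}\ge k_{i}^{2}+k_{i}$), keeps $k_{i}\le\log^{1/100}n$ so every $\gamma_{i}\le 2^{O(\log^{3/4+1/100}n)}$, and therefore uses only $\distScale=\Theta(\lg\lg\lg n)$ distance scales with jumps $D_{i}/D_{i-1}=(nW)^{1/\distScale}=n^{\Theta(1/\lg\lg\lg n)}\ge\gamma_{i}$. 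Your "main obstacle" paragraph flags the compounding but proposes exactly the choice ($k=\Theta(\sqrt{\log n})$, constant across layers) that it rules out.

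A second, related error is the accuracy bookkeeping: \Cref{thm:Apxball} turns an $\eps$-accurate covering into $\Apxball(\cdot,50\eps)$, i.e.\ the approximation parameter loses a \emph{constant factor} per level, not a factor $(1+O(\eps))$. Over your $\Theta(\log n/\log\gamma)$ levels this compounds to $50^{\#\mathrm{levels}}$, so rescaling $\eps$ by $1/\polylog(n)$ at the outset does not suffice; and pushing $\eps_{0}$ far below $\eps$ collides with the requirement $\eps_{i}\ge\phicmg^{O(1)}$ that the parameter inequalities (e.g.\ $\stretch_{i}/\eps_{i}\le\gamma_{i-1}/\phicmg^{8}$) and the $\poly(1/\eps_{0})$ running-time factors rely on. The paper's fix, $\eps_{i}=\eps/50^{\distScale-i}$, is affordable only because $\distScale=\Theta(\lg\lg\lg n)$. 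So the proof goes through only after replacing your fixed $(k,\stretch,\gamma)$ and $\Theta(\log n/\log\gamma)$-level scheme by the level-dependent parameters and triple-logarithmically many scales, which is the actual content of \Cref{sec:part2PuttingItTogether}.
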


There are \emph{distance scales} $D_{0}\le D_{1}\le\dots\le D_{\distScale}$
where $D_{i}=(nW)^{i/\distScale}$ and $\distScale=c_{\distScale}\lg\lg\lg n$
for some small constant $c_{\distScale}>0$. We will implement our
data structures for $\distScale$ many levels. Recall that $\phicmg=1/2^{\Theta(\log^{3/4})}=\Omegahat(1)$.
For $0\le i\le\distScale$, we set 
\begin{align*}
k_{i} & =(\lg\lg n)^{3^{i}}\\
\gamma_{i} & =1/\phicmg^{8k_{i+1}}\text{ and }\gamma_{-1}=1\\
\eps_{i} & =\eps/50^{\distScale-i}\\
\stretch_{i} & =\gamma_{i-1}\cdot\log^{c_{\stretch}}n/\phicmg^{3}\\
\Delta_{i} & =\Theta(k_{i}n^{2/k_{i}}/\phicmg)
\end{align*}
where we let $c_{\stretch}$ be a large constant to be determined later. The
parameters are defined in such that way that 
\[
n^{1/\distScale},\frac{D_{i}}{D_{i-1}},n^{1/k_{i}},\gamma_{i},1/\eps_{i},\stretch_{i},\Delta_{i}=\Ohat(1)
\]
for all $0\le i\le\distScale$. %
To exploit these parameters, we need
more fine-grained properties which are summarized below: 
\begin{prop}
\label{prop:para}For large enough $n$ and for all $0\le i\le\distScale$,
we have that 
\begin{enumerate}
\item \label{enu:para:k}$\lg\lg n\le k_{i}\le(\lg^{1/100}n)$, 
\item \label{enu:para:eps}$\phicmg^{4}\le\eps_{i}\le\eps$ 
\item \label{enu:para:gamma}$\gamma_{i}=2^{O(\lg^{3/4+1/100}n)}$, 
\item \label{enu:para:D ratio}$D_{i}/D_{i-1}\le n^{6/\distScale}$, 
\item \label{enu:para:gap and distance scale}$\gamma_{i}\le D_{i}/D_{i-1}$,
and 
\item \label{enu:para:key}$\gamma_{i}\ge(\gamma_{i-1}\cdot\frac{1}{\phicmg^{8}})^{k_{i}}\ge(\frac{\stretch_{i}}{\eps_{i}})^{k_{i}}$. 
\end{enumerate}
\end{prop}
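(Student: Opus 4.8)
The plan is to verify the six items one at a time by direct substitution of the definitions; there is essentially no ``idea'' needed here, only care in pinning down the hidden constants (the constant $c_{\distScale}$ in $\distScale = c_{\distScale}\lg\lg\lg n$, the implied constant in $\phicmg = 1/2^{\Theta(\lg^{3/4}n)}$, and the constant $c_{\stretch}$) up front, so that every ``for $n$ large enough'' step below is legitimate. I would carry out the items roughly in the stated order, since items \ref{enu:para:gamma}, \ref{enu:para:gap and distance scale}, \ref{enu:para:key} all invoke item \ref{enu:para:k}, and the second half of item \ref{enu:para:key} invokes item \ref{enu:para:eps}.

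For item \ref{enu:para:k}, the lower bound is immediate from $3^i \ge 1$; for the upper bound I would note $3^i \le 3^{\distScale} = (\lg\lg\lg n)^{c_{\distScale}\lg 3}$, hence $\lg k_i = 3^i\lg\lg\lg n = (\lg\lg\lg n)^{O(1)} = o(\lg\lg n)$, which gives $k_i \le \lg^{1/100}n$. Item \ref{enu:para:eps} is trivial above ($\distScale - i \ge 0$), and below one uses $\eps_i \ge \eps/50^{\distScale} = \eps/(\lg\lg n)^{O(1)} > \phicmg^3/(\lg\lg n)^{O(1)}$ together with $(\lg\lg n)^{O(1)} \le 1/\phicmg$, yielding $\eps_i \ge \phicmg^4$. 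For item \ref{enu:para:gamma} I would write $\gamma_i = 1/\phicmg^{8k_{i+1}} = 2^{\Theta(k_{i+1}\lg^{3/4}n)}$ and plug in $k_{i+1} \le \lg^{1/100}n$ from item \ref{enu:para:k}. Item \ref{enu:para:D ratio} is the identity $D_i/D_{i-1} = (nW)^{1/\distScale} = n^{6/\distScale}$ using $W = n^5$, and item \ref{enu:para:gap and distance scale} then follows by comparing exponents: by item \ref{enu:para:gamma} $\gamma_i = 2^{O(\lg^{3/4+1/100}n)}$ while $n^{6/\distScale} = 2^{\Theta(\lg n/\lg\lg\lg n)}$, and $\lg^{3/4+1/100}n\cdot\lg\lg\lg n = o(\lg n)$.

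The only item with any content is item \ref{enu:para:key}, and the engine is the recursion $k_{i+1} = \big((\lg\lg n)^{3^i}\big)^3 = k_i^3$ together with $\gamma_{i-1} \le 1/\phicmg^{8k_i}$ (with equality for $i \ge 1$, and $\gamma_{-1} = 1 \le 1/\phicmg^{8k_0}$). For the first inequality: $(\gamma_{i-1}/\phicmg^8)^{k_i} \le 1/\phicmg^{8(k_i^2 + k_i)} \le 1/\phicmg^{8k_i^3} = \gamma_i$, where the middle step uses $k_i^2 + k_i \le k_i^3$ for $k_i \ge 2$ and the fact that $1/\phicmg > 1$ makes the exponent monotone. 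For the second inequality, both sides being $k_i$-th powers, it suffices to show $\gamma_{i-1}/\phicmg^8 \ge \stretch_i/\eps_i$; substituting $\stretch_i = \gamma_{i-1}\lg^{c_{\stretch}}n/\phicmg^3$ and cancelling $\gamma_{i-1} > 0$ reduces this to $\eps_i \ge \phicmg^5\lg^{c_{\stretch}}n$, which holds since $\eps_i \ge \phicmg^4$ by item \ref{enu:para:eps} and $1/\phicmg = 2^{\Theta(\lg^{3/4}n)} \ge \lg^{c_{\stretch}}n$ for large $n$.

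I do not expect a genuine obstacle; it is all bookkeeping. The one thing to watch is the consistency of constants: $c_{\distScale}$ and $c_{\stretch}$ (and the implicit constant in $\phicmg$) must be fixed before the argument so that each asymptotic comparison — $(\lg\lg\lg n)^{O(1)} = o(\lg\lg n)$, $\lg^{3/4+1/100}n\cdot\lg\lg\lg n = o(\lg n)$, and $1/\phicmg \ge \lg^{c_{\stretch}}n$ — is a legitimate ``for all sufficiently large $n$'' statement. Once they are pinned down, every inequality in the proposition reduces to a one-line computation.
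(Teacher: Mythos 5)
Your proposal is correct and follows essentially the same route as the paper: direct substitution of the parameter definitions, with the same key facts ($k_{i+1}=k_i^3\ge k_i^2+k_i$ for item \ref{enu:para:key}, $\eps_i\ge\phicmg^4$ and $\lg^{c_{\stretch}}n\le 1/\phicmg$ for the second half, and an exponent comparison for item \ref{enu:para:gap and distance scale}). One small slip in item \ref{enu:para:k}: $3^{\distScale}=3^{c_{\distScale}\lg\lg\lg n}=(\lg\lg n)^{c_{\distScale}\lg 3}$, not a power of $\lg\lg\lg n$, so $\lg k_i\le(\lg\lg n)^{c_{\distScale}\lg 3}\lg\lg\lg n$; this is still $o(\lg\lg n)$ once $c_{\distScale}$ is small enough, so your conclusion (and the rest of the argument) stands unchanged.
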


\begin{proof}
(1): We have $k_{i}=(\lg\lg n)^{3^{i}}\le(\lg\lg n)^{3^{\distScale}}\leq(\lg\lg n)^{(\lg\lg n)^{1/100}}\le\lg^{1/100}n$
as $\distScale=c_{\distScale}\lg\lg\lg n$ and $c_{\distScale}$ is
a small enough constant.

(2): It is clear that $\eps_{i}\le\eps$. For the other direction,
note that in the assumption of \Cref{thm:main no distance}, we have
$\eps\ge\phicmg^{2}$. So $\eps_{i}\ge\eps/50^{\distScale}\ge\phicmg^{3}/50^{\Theta(\lg\lg\lg n)}\ge\phicmg^{4}$
because $\phicmg=1/2^{\Theta(\lg^{3/4}n)}$.

(3): As $\frac{1}{\phicmg}=2^{\Theta(\lg^{3/4}n)}$and $\gamma_{i}=1/\phicmg^{8k_{i+1}}$,
we have from property \Cref{enu:para:k} of this proposition that $\gamma_{i}=2^{O(\lg^{3/4+1/100}n)}$.

(4): We have $D_{i}/D_{i-1}=(nW)^{1/\distScale}$. Since $W=n^{5}$,
we have $D_{i}/D_{i-1}\le n^{6/\distScale}$.

(5): As $D_{i}/D_{i-1}\ge n^{1/\distScale}\ge2^{\Theta(\lg n/\lg\lg\lg n)}$,
by (\ref{enu:para:gamma}) we have $\gamma_{i}\le(D_{i}/D_{i-1})$ when
$n$ is large enough.

($8$): We have $\gamma_{i}\ge(\gamma_{i-1}\cdot\frac{1}{\phicmg^{8}})^{k_{i}}$
because 
\[
\gamma_{i}=(\frac{1}{\phicmg^{8}})^{k_{i+1}}\ge(\frac{1}{\phicmg^{8}})^{(k_{i}^{2}+k_{i})}=((\frac{1}{\phicmg^{8}})^{k_{i}}\cdot\frac{1}{\phicmg^{8}})^{k_{i}}=(\gamma_{i-1}\cdot\frac{1}{\phicmg^{8}})^{k_{i}}
\]
where the inequality holds is because $k_{i+1}=(\lg\lg n)^{3^{i+1}}=(\lg\lg n)^{3^{i}\cdot2}\times(\lg\lg n)^{3^{i}}\ge(\lg\lg n)^{3^{i}\cdot2}+(\lg\lg n)^{3^{i}}=k_{i}^{2}+k_{i}$
for all $i\ge0$. For the second inequality, we have 
\[
\frac{\stretch_{i}}{\eps_{i}}=\frac{\gamma_{i-1}(\log^{c_{\stretch}}n)/\phicmg^{3}}{\eps_{i}}\le\frac{\gamma_{i-1}}{\phicmg^{8}}
\]
because $(\log^{c_{\stretch}}n)\le1/\phicmg$ and $\eps_{i}\ge\phicmg^{4}$
by (\ref{enu:para:eps}). Therefore, $\gamma_{i}\geq(\gamma_{i-1}\cdot\frac{1}{\phicmg^{8}})^{k_{i}}\ge(\frac{\stretch_{i}}{\eps_{i}})^{k_{i}}$.

\end{proof}

Before we prove our main Lemma by induction, we recall the Figure from the beginning of the section to provide the reader with a high-level overview of how components are connected.

\begin{figure}[!ht]
\begin{center}
\includegraphics[width=0.58\textwidth]{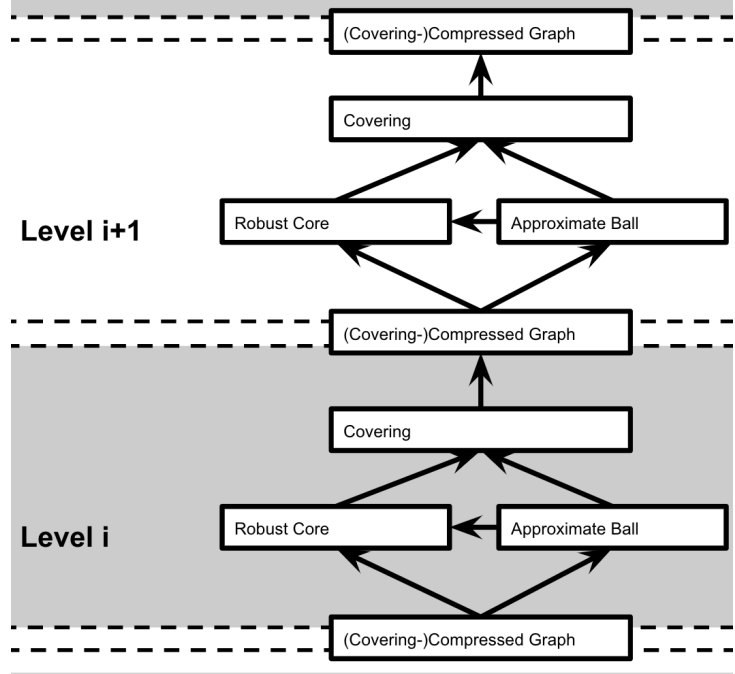}
\end{center}
\caption{An overview of the layers and their dependencies.}
\end{figure}

\begin{lem}
\label{lem:main induction}For every $0\le i\le\distScale$, we can
maintain the following data structures: 
\begin{enumerate}
\item \label{enu:induction ball}$\Apxball(G,S,d',\eps_{i})$ for any $d'\le d'_{i}\defeq32D_{i+1}\log(n)/\eps_{i}$
using total update time of 
\[
\Otil(\left|\ball_{G}(S,d')\right|\frac{n^{2/k_{0}+12/\distScale}}{\phicmg\eps_{0}^{2}})=\Ohat(\left|\ball_{G}(S,d')\right|).
\]
\item $\Core(G,\Kinit,d')$ for any $d'\le D_{i+1}$ using total update
time of 
\[
\left|\ball_{G}(\Kinit,32d'\log n)\right|\poly(\frac{n^{1/k_{0}+1/\distScale}}{\phicmg\eps_{0}})=\Ohat(\left|\ball_{G}(\Kinit,32d'\log n)\right|)
\]
with scattering parameter $\scatter=\Omegatil(\phicmg)$ and stretch
at most $\stretch_{i}$. 
\item $(D_{i},k_{i},\eps_{i},\stretch_{i},\Delta_{i})$-covering using total
update time of $\Otil(n\cdot\poly(\frac{n^{1/k_{0}+1/\distScale}}{\phicmg\eps_{0}}))=\Ohat(n)$. 
\end{enumerate}
For all $i>0$, we assume by induction that a $(D_{i-1},k_{i-1},\eps_{i-1},\stretch_{i-1},\Delta_{i-1})$-covering
of $G$ is already explicitly maintained. 
\end{lem}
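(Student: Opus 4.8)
The plan is to prove Lemma~\ref{lem:main induction} by induction on $i$, chaining together the three main structural theorems (Robust Core, Covering, Approximate Ball) in the order dictated by Figure~\ref{fig:mainCompsOrga}: within a single level $i$ one first builds $\Apxball$ at small distance scales, then uses it to build $\Core$, then uses both to build the Covering; and across levels one passes the Covering from level $i-1$ (equivalently, its covering-compressed graph, which by Proposition~\ref{prop:covering-compressed is compressed} is a $(D_{i-1},\gamma_{i-1},\Delta_{i-1})$-compressed graph of $G$) as the input needed to run $\Core$ at level $i$. The base case $i=0$ uses the trivial compressed graph $G_{unit}$ from Proposition~\ref{prop:compressed base}, which is a $(1,1,O(1))$-compressed graph of $G$; note $D_0=1$ here, so the distance-scale hierarchy starts at unit distances.

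First I would set up the induction hypothesis carefully: assume for level $i-1$ that we maintain the three data structures with the claimed running times, and in particular that a $(D_{i-1},k_{i-1},\eps_{i-1},\stretch_{i-1},\Delta_{i-1})$-covering is explicitly maintained, hence so is its covering-compressed graph $H_{\cC}$, which is a $(D_{i-1},\gamma_{i-1},\Delta_{i-1})$-compressed graph of $G$ by Proposition~\ref{prop:covering-compressed is compressed} (using $\gamma_{i-1}=(\stretch_{i-1}/\eps_{i-1})^{k_{i-1}}$, which follows from Proposition~\ref{prop:para}(\ref{enu:para:key})). For part~(1), I would prove the $\Apxball$ claim by a nested (inner) induction on the distance scale: Theorem~\ref{thm:Apxball} says $\Apxball(G,S,D,50\eps_{i-1})$ costs $\Otil(|\ball_G(S,D)|\Delta_{i-1}\frac{D}{\eps_{i-1} D_{i-1}}) + T_{\Apxball}(G,S,2(\stretch_{i-1}/\eps_{i-1})^{k_{i-1}}D_{i-1},\eps_{i-1})$, and by Proposition~\ref{prop:para}(\ref{enu:para:key}) the recursive distance argument $2\gamma_{i-1}D_{i-1} \le 2D_i/D_{i-1}\cdot D_{i-1} \le$ roughly $D_i$, so the recursion "wraps around" to a smaller scale and terminates at the level-$(i-1)$ base; summing the geometric series (with ratio controlled by $D_i/D_{i-1}\ge n^{1/\distScale}$ being large, Proposition~\ref{prop:para}(\ref{enu:para:D ratio})) gives the dominant first term and hence $\Ohat(|\ball_G(S,d')|)$. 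One must also absorb $50\eps_{i-1}=\eps_i$ — the $50^{\distScale-i}$ factors in the definition of $\eps_i$ are chosen precisely so that the approximation blowup telescopes cleanly.

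For part~(2), plug the level-$(i-1)$ covering-compressed graph (a $(D_{i-1},\gamma_{i-1},\Delta_{i-1})$-compressed graph with $d:=D_{i-1}$, $\gamma:=\gamma_{i-1}$, $\Delta:=\Delta_{i-1}$) and the $\Apxball$ from part~(1) into Theorem~\ref{thm:Core}; this yields $\Core(G,\Kinit,d')$ for any $d'\le D_i$ — wait, the claim says $d'\le D_{i+1}$, so I would apply Theorem~\ref{thm:Core} with $D:=d'$ up to $D_{i+1}$, noting $d'/D_{i-1} = d'/d$ can be as large as $D_{i+1}/D_{i-1}\le n^{12/\distScale}=\Ohat(1)$ by Proposition~\ref{prop:para}(\ref{enu:para:D ratio}), so the $(D/d)^3$ factor is subpolynomial; the stretch output is $\Otil(\gamma_{i-1}/\phicmg^3)=\stretch_i$ (up to the $\log^{c_{\stretch}}n$ slack, which is why $c_{\stretch}$ is chosen large) and the scattering parameter is $\Omegatil(\phicmg)$. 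Then for part~(3), feed these $\Core$ and $\Apxball$ data structures into Theorem~\ref{thm:covering} with parameters $d:=D_i$, $k:=k_i$, $\eps:=\eps_i$, $\stretch:=\stretch_i$, $\scatter:=\Omegatil(\phicmg)$; the requirement $d\le d'\le d(\stretch/\eps)^k$ in Theorem~\ref{thm:covering} is met because $\Apxball$ from part~(1) works up to $d'_i=32D_{i+1}\log n/\eps_i$ and $(\stretch_i/\eps_i)^{k_i}\le\gamma_i\le D_{i+1}/D_i$ by Proposition~\ref{prop:para}(\ref{enu:para:key},\ref{enu:para:gap and distance scale}), so $D_i(\stretch_i/\eps_i)^{k_i}\le D_{i+1}\le d'_i$. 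This produces the $(D_i,k_i,\eps_i,\stretch_i,\Delta_i)$-covering with $\Delta_i=\Theta(k_i n^{2/k_i}/\phicmg)$ as claimed, and the total update time from Theorem~\ref{thm:covering} is $O(k_i n^{1+2/k_i}\log n/\scatter) + \sum_{C\in\cC^{ALL}}(T_{\Core}+T_{\Apxball})$; the crucial point is that the guarantee $\sum_{C\in\cC^{ALL}}|\ball(C,\cdots)|\le O(k_i n^{1+2/k_i}/\scatter)$ lets us bound $\sum_C T_{\Apxball}$ and $\sum_C T_{\Core}$ (both of which are near-linear in the relevant ball sizes by parts~(1),(2)) by $\Ohat(n)$ — this is where the $n^{2/k_i}$ factors all collapse since $n^{1/k_i}=\Ohat(1)$.

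**The main obstacle** is the bookkeeping of the subpolynomial factors: one must verify that every exponent of the form $n^{1/k_i}$, $n^{1/\distScale}$, $D_i/D_{i-1}$, $\gamma_i$, $1/\eps_i$, $\stretch_i$, $\Delta_i$ is genuinely $\Ohat(1) = 2^{o(\log n)}$ simultaneously across all $\distScale = \Theta(\log\log\log n)$ levels, so that the accumulated product of $\poly(\cdots)$ overheads across levels remains $n^{o(1)}$ — and this is exactly what Proposition~\ref{prop:para} is for, especially part~(\ref{enu:para:key}) which provides the "slack" ensuring the recursion in part~(1) actually terminates and that the covering at level $i$ has enough reach to feed level $i+1$. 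A secondary subtlety is that the graphs $H_i$ (covering-compressed graphs) are \emph{not} decremental — they have low-impact insertions when new cores are created — and one must confirm that the statements being invoked (Theorems~\ref{thm:Core}, \ref{thm:Apxball}) were proved in exactly the setting of "compressed graph of a decremental $G$" rather than requiring $H_i$ itself to be decremental; this is handled because the hypotheses of those theorems only ask for a $(d,\gamma,\Delta)$-compressed graph (Definition~\ref{def:compressed graph}), whose defining properties are stable under the covering's insertions, and the MES machinery inside Theorem~\ref{thm:Apxball} is specifically designed to tolerate such insertions.
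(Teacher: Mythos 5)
Your proposal is correct and follows essentially the same route as the paper: induction on the level $i$, with the $\Apxball$ bound proved by recursing through \Cref{thm:Apxball} down to the previous level's distance scale (bottoming out at the ES-tree), the $\Core$ bound obtained by feeding the level-$(i-1)$ covering-compressed graph (or the trivial compressed graph at $i=0$) into \Cref{thm:Core}, and the covering obtained from \Cref{thm:covering}, with all parameter checks delegated to \Cref{prop:para}. The only cosmetic difference is that the paper bounds the recursive $\Apxball$ cost by an additive factor of $(i+1)$ per level rather than summing a geometric series, which does not change the argument.
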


\begin{proof}
(1): We prove by induction on $i$ that $T_{\Apxball}(G,S,d',\eps_{i})\le\left|\ball_{G}(S,d')\right|\cdot(i+1)\cdot\frac{n^{2/k_{0}+12/\distScale}}{\phicmg\eps_{0}^{2}}\cdot(\log n)^{c'}$
for any $d'\le d'_{i}$ where $c'$ is some large enough constant.
For $i=0$, we have by \Cref{prop:ES tree} that 
\begin{align*}
T_{\Apxball}(G,S,d',\eps_{i}) & \le O(\left|\ball_{G}(S,d')\right|d')\\
 & \le\left|\ball_{G}(S,d')\right|\cdot O(D_{1}\log(n)/\eps_{1})\\
 & \le\left|\ball_{G}(S,d')\right|\cdot(i+1)\cdot\frac{n^{2/k_{0}+12/\distScale}}{\phicmg\eps_{0}^{2}}\cdot(\log n)^{c'}.
\end{align*}
For $i>0$, we assume $d'>d'_{i-1}\defeq32D_{i}\log(n)/\eps_{i-1}$
otherwise we are done by induction hypothesis. As $(D_{i-1},k_{i-1},\eps_{i-1},\stretch_{i-1},\Delta_{i-1})$-covering
is already explicitly maintained by the induction hypothesis, by \Cref{thm:Apxball}, we can maintain
$\Apxball(G,S,d',\eps_{i})$ where $\eps_{i}=50\eps_{i-1}$ using
total update time of 

\begin{align*}
T_{\Apxball}(G,S,d',\eps_{i})&\le\Otil(\left|\ball_{G}(S,d')\right|\Delta_{i-1}\frac{(32D_{i+1}\log(n)/\eps_{i})}{\eps_{i-1}D_{i-1}})\\
&+T_{\Apxball}\left(G,S,2\left(\frac{\stretch_{i-1}}{\eps_{i-1}}\right)^{k_{i-1}}D_{i-1},\eps_{i-1}\right).
\end{align*}
We will show that $2(\frac{\stretch_{i-1}}{\eps_{i-1}})^{k_{i-1}}D_{i-1}\le d'_{i-1}$
so that we can apply induction hypothesis on \\$T_{\Apxball}(G,S,2(\frac{\stretch_{i-1}}{\eps_{i-1}})^{k_{i-1}}D_{i-1},\eps_{i-1})$.
To see this, note that $D_{i}\ge\gamma_{i}D_{i-1}\ge(\frac{\stretch_{i}}{\eps_{i}})^{k_{i}}D_{i-1}$
by \Cref{prop:para}(\ref{enu:para:gap and distance scale},\ref{enu:para:key}).
So $d'_{i-1}=\frac{32\log n}{\eps_{i-1}}D_{i}\ge\frac{32\log n}{\eps_{i-1}}\cdot(\frac{\stretch_{i}}{\eps_{i}})^{k_{i}}D_{i-1}\ge2(\frac{\stretch_{i-1}}{\eps_{i-1}})^{k_{i-1}}D_{i-1}$where
the last inequality is because $k_{i}\ge k_{i-1}$ and $\frac{\stretch_{i}}{\eps_{i}}\ge\frac{\stretch_{i-1}}{\eps_{i-1}}$
(because $\frac{\stretch_{i}}{\stretch_{i-1}}\ge50=\frac{\eps_{i}}{\eps_{i-1}}$).
Therefore, by \Cref{prop:para}(\ref{enu:para:D ratio}), the bound
on $T_{\Apxball}(G,S,d',\eps_{i})$ is at most 
\begin{align*}
 & \left|\ball_{G}(S,d')\right|\frac{n^{2/k_{i-1}+12/\distScale}}{\phicmg\eps_{i-1}^{2}}\cdot(\log n)^{c'}+T_{\Apxball}(G,S,d'_{i-1},\eps_{i-1})\\
 & \le\left|\ball_{G}(S,d')\right|\frac{n^{2/k_{i-1}+12/\distScale}}{\phicmg\eps_{0}^{2}}\cdot(\log n)^{c'}+i\left|\ball_{G}(S,d'_{i-1})\right|\cdot\frac{n^{2/k_{0}+12/\distScale}}{\phicmg\eps_{0}^{2}}\cdot(\log n)^{c'} & \text{by IH}\\
 & \le(i+1)\left|\ball_{G}(S,d')\right|\frac{n^{2/k_{0}+12/\distScale}}{\phicmg\eps_{0}^{2}}\cdot(\log n)^{c'} & \text{as }d'_{i-1}<d'
\end{align*}
which completes the inductive step.

(2): For $i=0$, we have that a $(1,1,O(1))$-compressed graph of
$G$ can be trivially maintained by \Cref{prop:compressed base}. By
\Cref{thm:Core}, we can implement $\Core(G,\Kinit,d')$ with scattering
parameter $\scatter=\Omegatil(\phicmg)$ and stretch at most $\Otil(1/\phicmg^{3})\le\stretch_{0}$
(by definition of $\stretch_{0}$) with total update time 
\[
\Otil\left(T_{\Apxball}(G,\Kinit,32d'\log n,0.1)(D_{1})^{3}/\phicmg^{2}\right)=\Otil\left(|\ball_{G}(\Kinit,32d'\log n)|D_{1}{}^{4}/\phicmg^{2}\right)
\]
by \Cref{prop:ES tree}.

For $i>0$, given that a $(D_{i-1},k_{i-1},\eps_{i-1},\stretch_{i-1},\Delta_{i-1})$-covering
is explicitly maintained, by \Cref{prop:covering-compressed is compressed},
we can automatically maintain a $(D_{i-1},\gamma_{i-1},\Delta_{i-1})$-compressed
graph where $\gamma_{i-1}\ge(\stretch_{i-1}/\eps_{i-1})^{k_{i-1}}$
by \Cref{prop:para}(\ref{enu:para:key}).

By \Cref{thm:Core},
we can maintain $\Core(G,\Kinit,d')$ with $\scatter=\Omegatil(\phicmg)$
and $\Otil(\gamma_{i-1}/\phicmg^{3})\le\stretch_{i}$ (by definition
of $\stretch_{i}$) with total update time 
\begin{align*}
&\Otil\left(T_{\Apxball}(G,\Kinit,32d'\log n,0.1)\Delta_{i-1}^{2}(D_{i+1}/D_{i-1})^{3}/\phicmg^{2}\right) & \\&=\left|\ball_{G}(\Kinit,32d'\log n)\right|\poly(\frac{n^{1/k_{0}+1/\distScale}}{\phicmg\eps_{0}})
\end{align*}
by (1).

(3): Recall that the algorithm from \Cref{thm:covering} for maintaining
a $(D_{i},k_{i},\eps_{i},\stretch_{i},\Delta_{i})$-covering of $G$
assumes, for all $D_{i}\le d'\le D_{i}(\frac{\stretch_{i}}{\eps_{i}})^{k}$,
$\Core$ and $\Apxball$ data structures with input distance parameter
$d'$. By (1) and (2), we can indeed implement these data structures
for any distance parameter $d'\le D_{i+1}$. Since $D_{i}(\frac{\stretch_{i}}{\eps_{i}})^{k_{i}}\le D_{i}\gamma_{i}\le D_{i+1}$
by \Cref{prop:para}(\ref{enu:para:gap and distance scale},\ref{enu:para:key}),
the assumption is satisfied.

So, using \Cref{thm:covering}, we can maintain a $(D_{i},k_{i},\eps_{i},\stretch_{i},\Delta_{i})$-covering
of $G$ with $\Delta_{i}=\Theta(k_{i}n^{2/k_{i}}/\scatter)$ in total
update time of 
\begin{align*}
O(k_{i}n^{1+2/k_{i}}\log n/\scatter+\sum_{C\in\cC^{ALL}}T_{\Core}(\stage G{t_{C}},\stage C{t_{C}},d_{\Clevel(C)})\\
+T_{\Apxball}(\stage G{t_{C}},\stage C{t_{C}},\frac{\stretch_{i}}{4\eps_{i}}32d_{\Clevel(C)},\eps_{i}))
\end{align*}
where $\cC^{ALL}$ contains all cores that have ever been initialized
and, for each $C\in\cC^{ALL}$, $t_{C}$ is the time $C$ is initialized.
By plugging in the total update time of $\Apxball$ from (1) and $\Core$
from (2), the total update time for maintaining the covering is 
\begin{align*}
\Otil(\frac{n^{1+2/k_{i}}}{\scatter}+\sum_{C\in\cC^{ALL}}\left|\ball_{\stage G{t_{C}}}(\stage C{t_{C}},32d_{\Clevel(C)}\log n)\right|\poly(\frac{n^{1/k_{0}+1/\distScale}}{\phicmg\eps_{0}})+\\
\left|\ball_{\stage G{t_{C}}}(\stage C{t_{C}},\frac{\stretch_{i}}{4\eps_{i}}d_{\Clevel(C)})\right|\frac{n^{2/k_{0}+12/\distScale}}{\phicmg\eps_{0}^{2}}).
\end{align*}
As it is guaranteed by \Cref{thm:covering} that
\[\sum_{C\in\cC^{ALL}}|\ball_{\stage G{t_{C}}}(\stage C{t_{C}},\frac{\stretch_{i}}{4\eps_{i}}d_{\Clevel(C)})|\le O(k_{i}n^{1+2/k_{i}}/\scatter),
\]
the above expression simplifies to $\Otil(n\cdot\poly(\frac{n^{1/k_{0}+1/\distScale}}{\phicmg\eps_{0}}))$. 
\end{proof}
By constructing all the data structures from level $i=0$ to $\distScale$,
we can conclude \Cref{thm:main no distance}.

\pagebreak
\newpage
\chapter[Path-reporting Dynamic Shortest Paths]{Path-reporting Dynamic Shortest\\ Paths}
\label{part:augmented-queries}
In this part of the paper, we augment the decremental $\SSSP$ data
structure from the previous part to support \emph{threshold-subpath
queries}, which returns a subset of edges in a path. To precisely
describe the properties of queries, we introduce the notion \emph{steadiness}. 

\paragraph{Steadiness and simpleness.} All graphs in this part can be described as follows. A graph $G=(V,E,w,\sigma)$
is such that, each edge $e$ has weight $w(e)$ and has integral \emph{steadiness}
$\sigma(e)\in[\sigma_{\min},\sigma_{\max}]$. We call $\sigma_{\min}$
and $\sigma_{\max}$ the \emph{minimum and maximum steadiness} of
$G$, respectively. For any \emph{multi}-set $E'\subseteq E$ and
$j$, we let $\sigma_{\le j}(E')=\{e\in E'\mid\sigma(e)\le j\}$ contain
all edges from $E'$ of steadiness at most $j$. We let $\sigma_{\le j}(G)=G[\sigma_{\le j}(E)]$
denote the subgraph of $G$ induced by the edge set $\sigma_{\le j}(E)$.
We define $\sigma_{\ge j}(E'), \sigma_{>j}(E'), \sigma_{<}(E')$ and $\sigma_{\ge j}(G), \sigma_{> j}(G), \sigma_{< j}(G)$ similarly.

A path $P$ is \emph{$\beta$-simple} if each vertex appears in $P$
at most $\beta$ times. We say that a path is \emph{$\beta$-edge-simple
}if each edge appears in $P$ at most $\beta$ times. Note that a
$1$-simple path is a simple path in the standard sense. Observe that
any $\beta$-simple path is $\beta$-edge simple. When $P$ is a (non-simple)
path, $\sigma_{\le j}(P)$ is a \emph{multi-set} containing all occurrences
of edges with steadiness at most $j$ in $P$.

\paragraph{The Path-reporting Data Structure.} Now, we are ready to define the augmented version of the decremental $\SSSP$
data structure from (\Cref{def:SSSP}) that \emph{supports threshold-subpath
queries}. The outputs of threshold-subpath queries are always of the
form $\sigma_{\le j}(P)$. We remind the reader of our application of these queries, where we sample "sensitive" edges (i.e. those that are almost filled by the flow we want to send along the path) more often than "steady" edges. Let us now state a definition and the main result of this section.

\begin{restatable}{defn}{pathReportingSSSP}\label{def:pathReportingSSSP}
A path-reporting decremental $\SSSP$ data
structure $\SSSP^{\pi}(G,s,\eps,\beta,q)$ is a decremental $\SSSP$
data structure $\SSSP(G,s,\eps)$ with the following additional guarantee:
\begin{itemize}
\item For each vertex $v$, $v$ is associated with a $\beta$-edge-simple
$s$-$v$ path $\pi(s,v)$ in $G$ of length at most $(1+\epsilon)\dist_{G}(s,v)$.
We say that $\pi(u,v)$ is \emph{implicitly maintained }by $\SSSP^{\pi}(G,s,\eps,\beta,q)$.
\item Given any vertex $v$ and a steadiness index $j$, the data structure
returns $\sigma_{\le j}(\pi(s,v))$ in $(|\sigma_{\le j}(\pi(s,v))|+1)\cdot q$
worst-case time. (We emphasize that $\pi(s,v)$ is independent from
$j$.)
\end{itemize}
\end{restatable}

\begin{theorem}
\label{thm:main SSSP path}Given an undirected decremental graph $G=(V,E,w,\sigma)$
with $n$ vertices and $m$ initial edges that have weights from $\{1,2,\dots,W\}$
and steadiness from $\{0,\dots,\sigma_{\max}\}$ where $\sigma_{\max}=o(\log^{3/4}n)$,
a fixed source vertex $s\in V$, and any $\eps>\phicmg$, we can implement
$\SSSP^{\pi}(G,s,\eps,\beta,q)$ in $\Ohat(m\log W)$ total update
time such that the edge-simpleness parameter is $\beta=\Ohat(1)$
and query-time overhead is $q=\Ohat(1)$.
\end{theorem}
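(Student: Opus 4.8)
The plan is to build the path-reporting data structure by mirroring the hierarchical emulator construction of Part~\ref{part:dynamicShortestPaths} and then carefully layering a path-unfolding mechanism on top of it. Recall that the distance-only data structure maintains a sequence of emulators $G = H_0, H_1, \ldots, H_{\distScale}$, where each edge of $H_{i}$ either comes from $G$ directly (an original edge of bounded weight) or is a "covering edge" $(v, C)$ incident to a core vertex, with the covering-compressed graph $H_\cC$ of some covering $\cC$ of $H_{i-1}$ sitting in between. The final MES-tree runs on $H_{\distScale}$ (more precisely on the emulator $\Htil$ of \Cref{def:Htil}) and the distance estimate to $v$ is read off from the MES tree. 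To report a path, I would equip each emulator edge with a pointer to a \emph{specific critical path} in the level below it, so that ``unfolding'' an $s$-$v$ path in $\Htil$ recursively produces an $s$-$v$ walk in $G$. Concretely: a covering edge $(v,C)$ unfolds to (a) a shortest path from $v$ to the core $C$ inside the $\Apxball$ instance that maintains $\shell(C)$, concatenated with (b) a short path \emph{inside} the core $C$ — this is exactly where the $\EmbedCore$ embedding $\cP$ of \Cref{lem:embedwitness} comes in, since each edge of the expander witness $W$ is realized as a length-$\Ohat(1)$ path in $\Hhat$, and the expander $W_{multi}[X]$ has diameter $\Otil(1/\phicmg)$, so any two core vertices are joined by a short path in the embedding. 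Tracking these pointers top-down gives a recursive procedure that, unfolded completely, yields an $s$-$v$ path $\pi(s,v)$ in $G$.

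First I would formalize the steadiness of an emulator edge: for an emulator edge $e$ at level $i$, define $\sigma(e)$ to be the \emph{minimum} steadiness over all original $G$-edges appearing in its fully-unfolded path. The key structural claim is that this minimum can be maintained bottom-up cheaply: $\sigma$ of a covering edge is the min of $\sigma$ over its two unfolding components, each of which is a path through already-maintained structures (an $\Apxball$/MES tree and an $\EmbedCore$ embedding). Because weights in $G$ only increase and cores only shrink, I would argue that these critical paths — and hence the emulator-edge steadinesses — are ``robust'', changing only $\Ohat(1)$ times each; this is the place where one must be careful to define the critical path canonically (e.g.\ as a lexicographically-least shortest path, or as the specific path the MES tree / embedding already commits to) so that it does not oscillate. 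Given robust steadiness labels, a threshold-subpath query $(v,j)$ is answered by the obvious recursion: at each emulator edge on $\pi(s,v)$, if its steadiness is $>j$ we skip it entirely (it contributes no edge of steadiness $\le j$), and if $\le j$ we recurse into its unfolding; the work is then proportional to the number of returned edges times the recursion depth $\distScale = \Ohat(1)$ times a per-level overhead $\Ohat(1)$, giving $q = \Ohat(1)$. The edge-simpleness bound $\beta = \Ohat(1)$ follows by bounding how many times a fixed $G$-edge $(u,w)$ can reappear in a single unfolded path: it can participate in at most $\Delta = \Ohat(1)$ cores per level and the embedding paths of $\EmbedCore$ are simple with small congestion, so a product over $\distScale$ levels of per-level multiplicities $\Ohat(1)$ stays $\Ohat(1)$.

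The main obstacle I expect is precisely the \emph{stability of the critical paths under the low-impact insertions} that the emulators suffer (new cores being added when an old core is scattered, as discussed around \Cref{def:Htil}), combined with the need to maintain the minimum-steadiness labels efficiently. A naive definition of ``the shortest path the MES tree uses'' changes every time a $\dtil(\cdot)$ value moves, which would be far too often; and the embedding $\cP$ returned by $\EmbedCore$ is recomputed at the start of each phase, so core-internal critical paths also change. I would handle this by (i) only committing to a critical path for an emulator edge at the moment that edge is \emph{created}, and re-deriving it only when the structures it depends on are themselves rebuilt (a phase boundary in $\Core$, or a re-initialization in $\Apxball$), and (ii) charging the recomputation of steadiness labels to the $\Ohat(1)$ rebuild events already accounted for in the running-time analyses of \Cref{thm:Core}, \Cref{thm:covering}, and \Cref{thm:Apxball}. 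The total update time then remains $\Ohat(m\log W)$ because every piece of bookkeeping for path-reporting is proportional (up to $\Ohat(1)$ factors) to work the distance-only data structure already does. Finally, the $\log W$ factor and the reductions of \Cref{prop:simplify-1} carry over verbatim, and the bound $\sigma_{\max} = o(\log^{3/4} n)$ is exactly what is needed to keep $\phicmg^{-O(\sigma_{\max})}$ inside $n^{o(1)}$ so the per-level overheads stay subpolynomial.
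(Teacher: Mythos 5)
Your skeleton (path-reporting versions of $\Apxball$, $\Core$, coverings and compressed graphs; unfolding emulator edges level by level; bounding simpleness by a product of $\Delta$-type factors over the $\Ohat(1)$ levels) matches the paper, but your central query mechanism does not, and as written it has a genuine gap. You propose to label every emulator edge with the minimum steadiness of a ``critical path'' frozen at the edge's creation and re-derived only at rebuild events, and to answer a query by skipping emulator edges whose label exceeds $j$. The problem is that an emulator/covering edge survives as long as the relevant distance estimate stays small, not as long as any particular underlying path survives; a frozen critical path can therefore contain edges already deleted from $G$, so the unfolded $\pi(s,v)$ need not be a path in the current graph, while re-deriving critical paths whenever the MES tree or an $\Apxball$ estimate changes would blow up the update time. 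The paper never assigns steadiness to emulator edges at all: the paths $\pi(S,v)$ are defined implicitly by the \emph{current} state of the data structures, and output-sensitivity is obtained differently. For each steadiness threshold it recursively maintains the structures on the peeled graphs $G_j=\sigma_{\ge j}(G)$ (only $\sigma_{\max}+1=o(\log^{3/4}n)$ of them), together with a distance-only $\Apxball$ on $G^{\peel}$, and at query time compares $\dtil^{\peel}(v)$ with $\dtil(v)$: if they are within $(1+50\eps)^{3}$ it delegates to the peeled structure, whose path avoids minimum-steadiness edges, and otherwise \Cref{prop:path nonempty} guarantees $\sigma_{\le j}(\pi(S,v))\neq\emptyset$, so the cost of unfolding the whole MES path ($O(\frac{D}{\eps d})$ sub-queries per level) can be charged to the at-least-one returned edge. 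This peeling recursion is also where the hypothesis $\sigma_{\max}=o(\log^{3/4}n)$ really enters (through the accuracy parameters $\eps_{i,j}=\eps/(600^{\cdot}2^{j})$ in the double induction over level and steadiness), not through a $\phicmg^{-O(\sigma_{\max})}$ factor in a labeling scheme.

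A second gap is at the core level. That the pruned witness $W[X]$ has diameter $\Otil(1/\phicmg)$ is existential; to report paths you must \emph{extract} a short $u'$-$a'$ path in a decremental expander in time proportional to its length, and for this the paper replaces $\Prune$ by the pruning-plus-APSP-oracle $\Prune^{\pi}$ of \Cref{thm:oracle}, which is what introduces $\lenapsp$ into the stretch, simpleness and running-time bounds (your proposal gives no algorithm for this step). Moreover, a threshold query on a core path must be answerable in $\Ohat(1)$ time when the answer is empty; the paper arranges this by deliberately routing the maintained core path through the minimum-steadiness edge $e_{\min}=(a,b)$ of the surrounding ball (\Cref{alg:define core_path}), so that $j<\sigma(e_{\min})$ certifies the expensive middle segments contribute nothing, while $j\ge\sigma(e_{\min})$ guarantees a nonempty answer that pays for the unfolding. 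Your proposal has no analogue of either ingredient, so the claimed output-sensitive overhead $q=\Ohat(1)$ is not justified as written.
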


\section{Preliminaries on Path-reporting Data Structures}

Let $P=(u,\dots,v)$ and $Q=(v,\dots,w)$ be paths that share an endpoint
at a vertex $v$. We let $P\circ Q$ or sometimes $(P,Q)$ denote
the concatenation of $P$ and $Q$. The union $\sigma_{\le j}(P)\cup\sigma_{\le j}(Q)$
is always a multi-set union of $\sigma_{\le j}(P)$ and $\sigma_{\le j}(Q)$. 

We will use the following simplifying reduction which allows us to
assume that out input graph throughout this part has bounded degree
and satisfies other convenient properties. The proof is shown in \Cref{subsec:appendixProofOfsimplify path}.

\begin{restatable}{prop}{propSimplifyingAssumptOnSSSP}
\label{prop:simplify path}Suppose that there is a data structure
$\SSSP^{\pi}(H,s,\eps,\beta,q)$ that only works if $H$ satisfies
the following properties: 
\begin{itemize}
\item $H$ always stays connected. 
\item Each update to $H$ is an edge deletion (not an increase in edge weight). 
\item $H$ has maximum degree $3$. 
\item $H$ has edge weights in $[1,n_{H}^{4}]$ and edges steadiness $[0,\sigma_{\max}+1]$. 
\end{itemize}
Suppose $\SSSP^{\pi}(H,s,\eps,\beta,q)$ has $T_{\SSSP^{\pi}}(m_{H},n_{H},\epsilon)$
total update time where $m_{H}$ and $n_{H}$ are numbers of initial
edges and vertices of $H$. Then, we can implement $\SSSP^{\pi}(G,s,O(\eps),O(\beta \log (Wn)),O(q))$
where $G$ is an arbitrary decremental graph with $m$ initial edges
that have weights in $[1,W]$ and steadiness in $[0,\sigma_{\max}]$
using total update time of $\tilde{O}\left(m/\epsilon^{2}+T_{\SSSP^{\pi}}(O(m \log W),2m,\epsilon)\right)\cdot\log(W).$
\end{restatable}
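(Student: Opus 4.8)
\textbf{Proof plan for \Cref{prop:simplify path}.}

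The plan is to follow the same chain of reductions as in the proof of \Cref{prop:simplify-1} (the distance-only version), but carefully track what each transformation does to paths, to the steadiness function, and to the edge-simpleness parameter $\beta$. There are four transformations to chain together, applied in the order: (i) remove updates that disconnect $G$; (ii) replace edge-weight increases by edge deletions; (iii) reduce to maximum degree $3$; (iv) reduce the weight ratio to $n_H^4$. For the distance-only statement each step is standard; the new content here is path reporting, so I will describe each step with that in mind.

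First I would handle connectivity exactly as before: maintain a decremental connectivity data structure on $G$, never forward a disconnecting deletion to the inner data structure, and answer a query for $\pi(s,v)$ by first checking whether $s$ and $v$ are in the same component; if not the distance is $\infty$ and there is nothing to report, and if so we forward to the inner structure. This step does not touch steadiness, does not create new edges, and leaves $\beta$ and $q$ unchanged. Second, for weight increases: split each edge $(u,v)$ into $O(\log W)$ parallel copies with geometrically increasing weights $w, (1+\eps)w, (1+\eps)^2 w, \dots$, all inheriting the \emph{same} steadiness $\sigma(u,v)$; a weight increase becomes the deletion of all copies below the new weight. A path in the transformed graph uses exactly one copy per original edge traversal, so when we read off $\sigma_{\le j}(\pi(s,v))$ we get a multiset that, after collapsing parallel copies to their originals, is exactly $\sigma_{\le j}$ of the corresponding path in $G$; this preserves $\beta$-edge-simpleness up to a $\log W$ factor only if the original path revisits an edge — I must be a bit careful: a $\beta$-edge-simple path in $H$ maps to a path in $G$ that uses each original edge at most $\beta$ times (all copies of $(u,v)$ collapse to $(u,v)$), so edge-simpleness is actually preserved, but I should state it as $O(\beta)$ to be safe against the copy bookkeeping. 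Query overhead stays $O(q)$ since we can map copies back to originals in $O(1)$ time each, but we may return several copies of the same underlying edge, which is why the final guarantee is $O(\beta \log(Wn))$ rather than $O(\beta)$.

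Third, the degree reduction: for each high-degree vertex $v$ replace it by a small ``expander gadget'' — in the distance-only proof a low-weight path/cycle of length $\deg(v)$ suffices; here I would route the gadget edges at steadiness $\sigma_{\max}+1$ (strictly above all real steadinesses) and weight $1/\poly(n)$, so that they never appear in any $\sigma_{\le j}$ answer for $j \le \sigma_{\max}$, and contribute negligibly to path lengths. Rescaling all weights by $\poly(n)$ at the end (as in the original proof) then restores integrality at the cost of an extra $\poly(n)$ factor in the weight ratio, absorbed by the final step. A path of hop-length up to $m$ in $G$ now has to traverse these gadgets; each traversal of the gadget at $v$ adds at most $\deg(v)$ gadget edges, and a path visits $v$ at most $\beta$ times, so the number of gadget edges inserted is $O(\beta m)$ in the worst case — but crucially they all have steadiness $\sigma_{\max}+1$, so $\sigma_{\le j}(\pi)$ for $j\le \sigma_{\max}$ is unaffected in size. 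For a query at level $j = \sigma_{\max}+1$ (a full path query) we pay the extra gadget edges, which is why the edge-simpleness bound degrades to $O(\beta \log(Wn))$: the $\log(Wn)$ collects the copy-splitting from step (ii) and the gadget blow-up here. Fourth and finally, apply the weight-ratio reduction of \Cref{thm:gutenbergThesis} as a black box; since that reduction only ever runs the inner data structure on \emph{subgraphs} of the current graph, it does not introduce new edges or new steadinesses and preserves path-reporting verbatim.

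The main obstacle I anticipate is \textbf{not} any single transformation but the accounting of $\beta$ through the degree-reduction gadgets together with the requirement in \Cref{def:pathReportingSSSP} that $\pi(s,v)$ be \emph{independent of the queried threshold $j$}. I must make sure that the path the inner data structure implicitly maintains in $H$, once pushed back through all gadgets and edge-copies, yields a single well-defined $s$-$v$ walk in $G$ whose $\sigma_{\le j}$ restriction can be extracted in output-sensitive time for \emph{every} $j$ simultaneously. Concretely, the inner structure returns $\sigma_{\le j}(\pi_H(s,v))$ in $H$; I then need to (a) drop all gadget edges with steadiness $\sigma_{\max}+1$ when $j \le \sigma_{\max}$ (trivial, they never appear), and (b) contract maximal runs of gadget edges at a single original vertex back to that vertex without materializing them, so that the reported multiset is exactly $\sigma_{\le j}$ of the underlying walk in $G$ and has the claimed size. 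This requires the gadget structure to be ``transparent'' to low-steadiness queries, which is exactly why I route gadgets at the strictly-maximal steadiness; verifying that the length bound $(1+\eps)\dist_G(s,v)$ survives all four steps with the $\eps$-budget split appropriately (each step costs a multiplicative $(1+O(\eps))$, and there are $O(1)$ steps) is the last routine check. Putting the running-time bounds together is then identical to the distance-only proof: the connectivity and splitting overhead is $\Otil(m/\eps^2)$, and the inner structure runs on a graph with $O(m\log W)$ edges and $O(m)$ vertices, giving the stated $\tilde O(m/\eps^2 + T_{\SSSP^\pi}(O(m\log W), 2m, \eps))\cdot \log W$.
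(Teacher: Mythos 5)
Your proposal follows exactly the paper's chain of reductions, in the same order and with the same mechanisms: filter out disconnecting deletions via a connectivity structure, replace weight increases by $O(\log (Wn))$ parallel copies that inherit the original edge's steadiness, reduce the degree by gadget paths whose edges receive steadiness $\sigma_{\max}+1$ and tiny weight (rescaled afterwards), and finally apply the weight-ratio reduction of \Cref{thm:gutenbergThesis} as a black box; the final parameters you state agree with the proposition.

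The one place your accounting slips is the claim that edge-simpleness is ``actually preserved'' by collapsing the parallel copies. The inner data structure maintains a $\beta$-edge-simple path in the transformed graph, and nothing prevents that path from using several \emph{distinct} copies of the same original edge, each up to $\beta$ times; since each edge of $G$ has $\Theta(\log(Wn))$ copies, the collapsed walk in $G$ is only $O(\beta\log(Wn))$-edge-simple. This copy multiplicity — and not any ``gadget blow-up'' — is the sole source of the $\log(Wn)$ in the simpleness parameter, which is exactly how the paper accounts for it. The degree-reduction gadget edges never enter this accounting: they carry steadiness $\sigma_{\max}+1$ while the outer structure on $G$ is only ever queried with $j\le\sigma_{\max}$, and in any case they are not edges of $G$, so when the path is mapped back they contract away and cannot increase the multiplicity of any edge of $G$ (nor is there any query at level $\sigma_{\max}+1$ to worry about). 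With that correction your argument coincides with the paper's proof, including the point you flag about $j$-independence, which holds because the map from the inner path in $H$ back to a walk in $G$ is fixed and does not depend on the queried threshold.
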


\section{Main Path-Reporting Components}

\label{sec:components_path}

Below, we describe our main path-reporting data structures. They are
all natural extensions of the data structures listed in \Cref{sec:components}
so that they can support threshold-subpath queries, which return edges
with small steadiness in a path.
\begin{defn}
\label{def:Apxball path}A \emph{path-reporting approximate ball }data
structure $\Apxball^{\pi}(G,S,d,\eps,\beta)$ is an approximate ball
data structure $\Apxball(G,S,d,\eps)$ with the following additional
guarantee:
\begin{itemize}
\item For each vertex $v\in\ball_{G}(S,d)$, $v$ is associated with a $\beta$-simple
$S$-$v$ path $\pi(S,v)$ in $G$ of length at most $(1+\epsilon)\dist_{G}(S,v)$.
We say that $\pi(S,v)$ is \emph{implicitly maintained }by \\ $\Apxball^{\pi}(G,S,d,\eps,\kappa)$.
\item Given any vertex $v\in\ball_{G}(S,d)$ and a steadiness index $j$,
the data structure returns the multi-set $\sigma_{\le j}(\pi(S,v))$. (We emphasize
that $\pi(S,v)$ is independent from $j$.)
\end{itemize}
\end{defn}

Similar to \Cref{def:Apxball}, we slightly abuse the notation and
denote $\Apxball^{\pi}(G,S,d,\eps,\beta)=\{v\mid\dtil(v)\le(1+\eps)d\}$
as the set of all vertices $v$ whose distance estimate $\dtil(v)$
is at most $(1+\eps)d$.
\begin{defn}
\label{def:Core path}A \emph{path-reporting robust core} data structure
$\Core^{\pi}(G,\Kinit,d,\beta)$ with a stretch parameter $\stretch$
is a robust core data structure $\Core(G,\Kinit,d)$ with the following
additional guarantee:
\begin{itemize}
\item For each pair of vertices $(u,v)\in K\times K$ where $K\subseteq\Kinit$
is the maintained core set, the pair $(u,v)$ is associated with a
$\beta$-simple $u$-$v$ path $\pi(u,v)$ of length at most $\stretch\cdot d$.
We say that $\pi(u,v)$ is \emph{implicitly maintained }by $\Core^{\pi}(G,\Kinit,d)$.
\item Given a pair $(u,v)\in K\times K$ and a steadiness index $j$, the
algorithm returns $\sigma_{\le j}(\pi(u,v))$. (We emphasize that
$\pi(u,v)$ is independent from $j$.)
\end{itemize}
\end{defn}

We note that path-reporting $\Core^{\pi}$ is indeed stronger than
distance-only $\Core$. 
\begin{rem}
\label{rem:core path stronger}While the distance-only $\Core$ with
stretch $\stretch$ only guarantees that $\diam_{G}(K)\le\stretch\cdot d$,
the path-reporting $\Core^{\pi}$ is stronger as it implicitly maintains
paths $\pi(u,v)$ for all $u,v\in K$ of length at most $\stretch\cdot d$
which certifies that $\diam_{G}(K)\le\stretch\cdot d$.
\end{rem}

\begin{defn}
\label{def:Covering path}A \emph{path-reporting $(d,k,\eps,\stretch,\Delta,\beta)$-covering}
$\cC$ of a decremental graph $G$ is a $(d,k,\eps,\stretch,\Delta)$-covering
$\cC$ such that the distance-only $\Apxball$ and $\Core$ are replaced
by the path-reporting $\Apxball^{\pi}$ and $\Core^{\pi}$, respectively.
More precisely, for each level-$\ell$ core $C\in\cC$, we have $C=\Core^{\pi}(G,\Cinit,d_{\ell},\beta)$
with stretch at most $\stretch$, $\textsc{cover}(C)=\Apxball^{\pi}(G,C,4d_{\ell},\eps,\beta)$
and $\shell(C)=\Apxball(G,C,\frac{\stretch}{4\eps}d_{\ell},\eps,\beta)$.
\end{defn}

Next, we define a path-reporting version of compressed graphs. Recall
from \Cref{def:compressed graph} that a compressed graph is formally
a hypergraph.

\begin{defn}
\label{def:compressed graph path}A \emph{path-reporting $(d,\gamma,\Delta,\beta)$-compressed
graph} $H$ of a decremental graph $G$ is a $(d,\gamma,\Delta)$-compressed
graph $H$ and there is the following data structure: 
\begin{itemize}
\item For each adjacent pair of vertices $(u,v)$ in $H$, the pair $(u,v)$
is associated with a $\beta$-simple $u$-$v$ path $\pi(u,v)$ of
length at most $\gamma\cdot d$. We say that $\pi(u,v)$ is \emph{implicitly
maintained }by $H$.
\item Given an adjacent pair $(u,v)$ and a steadiness index $j$, the algorithm
returns $\sigma_{\le j}(\pi(u,v))$. (We emphasize that $\pi(u,v)$
is independent from $j$.)
\end{itemize}
\end{defn}

All four path-reporting components above are defined in such a way
that they are as strong as their distance-only counterparts. This
will be very important because it allows us to replace all distance-only
components in algorithms by their path-reporting counterparts without
violating any guarantees. The lemma below makes this point precise.
\begin{lem}
\label{prop:path-reporting stronger}We have the following:
\begin{enumerate}
\item $\Apxball^{\pi}(G,S,d,\eps,\beta)$ satisfies all requirement of $\Apxball(G,S,d,\eps)$.
\item $\Core^{\pi}(G,\Kinit,d,\beta)$ with stretch $\stretch$ satisfies
all requirement of $\Core(G,\Kinit,d)$ with stretch $\stretch$.
\item A path-reporting $(d,k,\eps,\stretch,\Delta,\beta)$-covering $\cC$
of $G$ is a (distance-only) $(d,k,\eps,\stretch,\Delta)$-covering
of $G$.
\item A \emph{path-reporting $(d,\gamma,\Delta,\beta)$-compressed graph}
$H$ of $G$ is a $(d,\gamma,\Delta)$-compressed graph of $G$.
\end{enumerate}
\end{lem}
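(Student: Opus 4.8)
The plan is to prove all four items by directly unfolding the definitions of the path-reporting components and observing that each one is literally a strengthening of the corresponding distance-only specification. First I would treat item (1): \Cref{def:Apxball path} defines $\Apxball^{\pi}(G,S,d,\eps,\beta)$ to \emph{be} an $\Apxball(G,S,d,\eps)$ data structure augmented with the path-query interface, so the three conditions on the estimates $\dtil(v)$ from \Cref{def:Apxball} (overestimation, $(1+\eps)$-approximation inside $\ball_G(S,d)$, and monotonicity) hold by definition; moreover the notational convention $\Apxball^{\pi}(G,S,d,\eps,\beta)=\{v\mid\dtil(v)\le(1+\eps)d\}$ matches the one used for $\Apxball$, so \Cref{prop:ApxBall} applies verbatim. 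Item (2) is analogous: \Cref{def:Core path} defines $\Core^{\pi}(G,\Kinit,d,\beta)$ with stretch $\stretch$ to be a $\Core(G,\Kinit,d)$ with stretch $\stretch$ plus the path interface, so both the Scattered property and the Low-stretch property of \Cref{def:Core} are inherited; as noted in \Cref{rem:core path stronger}, the implicitly maintained $\beta$-simple paths $\pi(u,v)$ of length at most $\stretch\cdot d$ for all $u,v\in K$ in fact re-certify $\diam_G(K)\le\stretch\cdot d$, which is the only condition carrying any content, so even a from-scratch verification would go through.

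Next I would handle item (3). A path-reporting $(d,k,\eps,\stretch,\Delta,\beta)$-covering (\Cref{def:Covering path}) is obtained from the distance-only covering (\Cref{def:Covering}) by substituting $\Apxball^{\pi}$ for $\Apxball$ and $\Core^{\pi}$ for $\Core$ in the definitions of $C$, $\cover(C)$, and $\shell(C)$, while $\oshell(C)=\ball_G(C,\tfrac{\stretch}{3\eps}d_\ell)$ is unchanged. By items (1) and (2) these substitutions still meet every requirement that \Cref{def:Covering} places on $C$, $\cover(C)$, and $\shell(C)$ (the level assignment and same-level vertex-disjointness, the $\Apxball$/$\Core$ parameter bindings, the ``every vertex is covered'' property, and the outer-shell participation bound $\Delta$), so the resulting object is a distance-only $(d,k,\eps,\stretch,\Delta)$-covering. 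Item (4) is immediate from \Cref{def:compressed graph path}, which states that a path-reporting $(d,\gamma,\Delta,\beta)$-compressed graph $H$ \emph{is} a $(d,\gamma,\Delta)$-compressed graph together with an additional path-query data structure, so $H$ satisfies \Cref{def:compressed graph} by construction.

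Since each path-reporting definition was deliberately phrased as ``the distance-only object plus a query interface'', there is no real mathematical obstacle: the lemma is essentially bookkeeping whose purpose is to let us plug $\Apxball^{\pi}$, $\Core^{\pi}$, path-reporting coverings, and path-reporting compressed graphs into the distance-only statements (\Cref{thm:Core}, \Cref{thm:covering}, \Cref{thm:Apxball}, \Cref{prop:covering-compressed is compressed}) without altering their statements or proofs. The one place I would be slightly careful is to confirm the notational conventions for the ``maintained set'' agree between the path-reporting and distance-only versions so that downstream lemmas quoting those conventions still apply; this is routine. The genuine work --- showing that these path-reporting components can actually be \emph{implemented} within the same $\Ohat(\cdot)$ time budgets --- is carried out in the subsequent sections and is independent of this lemma.
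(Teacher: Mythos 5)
Your proposal is correct and matches the paper's own argument: all four items are proved by unfolding the definitions, with item (2) invoking \Cref{rem:core path stronger} and item (3) following from items (1) and (2) since the path-reporting covering just replaces $\Apxball$ and $\Core$ by their path-reporting counterparts. No gaps.
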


\begin{proof}
For (1), this follows from definitions. For (2), this follows from
definitions and \Cref{rem:core path stronger}. For (3), this follows
by (1) and (2) because as path-reporting coverings are the same as
distance-only ones except that $\Apxball$ and $\Core$ are replaced
by $\Apxball^{\pi}$ and $\Core^{\pi}$, respectively. For (4), this
follows from definitions.
\end{proof}
The query time of threshold-subpath queries are measured as follows:
\begin{defn}
[Query-time Overhead]We say a path-reporting data structure has \emph{$(q_{\phi},q_{\path})$
query-time overhead} if, given any query with steadiness index $j$
and $P$ is the path that should be returned if $j=\infty$, then
$\sigma_{\le j}(P)$ is returned in at most $q_{\phi}$ time if $\sigma_{\le j}(P)=\emptyset$
and in at most $|\sigma_{\le j}(P)|\cdot q_{\path}$ time otherwise.
For path-reporting covering $\cC$, we say that $\cC$ has $(q_{\phi},q_{\path})$
query-time overhead, if all $\Apxball^{\pi}$ and $\Core^{\pi}$ that
are invoked for maintaining $\cC$ have query-time overhead at most
$(q_{\phi},q_{\path})$.
\end{defn}

By replacing distance-only $\Apxball$ and $\Core$ in \Cref{thm:covering}
for maintaining a distance-only covering in with path-reporting $\Apxball^{\pi}$
and $\Core^{\pi}$ which are stronger by \Cref{prop:path-reporting stronger},
we immediately obtain the following theorem analogous to \Cref{thm:covering}. 
\begin{theorem}
\label{thm:covering path}Let $G$ be an $n$-vertex bounded-degree
decremental graph. Given parameters $(d,k,\eps,\stretch)$ where $\eps\le0.1$,
we assume the following:
\begin{itemize}
\item for all $d\le d'\le d(\frac{\stretch}{\eps})^{k-1}$, there is $\Core^{\pi}(G,\Kinit,d',\beta)$
with scattering parameter at least $\scatter$ and stretch at most
$\stretch$ that has total update time $T_{\Core^{\pi}}(G,\Kinit,d',\beta)$,
and
\item for all $d\le d'\le d(\frac{\stretch}{\eps})^{k}$, there is $\Apxball^{\pi}(G,S,d',\eps,\beta)$
with total update time of\\ $T_{\Apxball^{\pi}}(G,S,d',\eps,\beta)$.
\end{itemize}
Then, we can maintain a path-reporting $(d,k,\eps,\stretch,\Delta,\beta)$-covering
of $G$ with $\Delta=O(kn^{2/k}/\scatter)$ in total update time 
\begin{align*}
O(kn^{1+2/k}\log(n)/\scatter+\sum_{C\in\cC^{ALL}}T_{\Core^{\pi}}(\stage G{t_{C}},\stage C{t_{C}},d_{\Clevel(C)},\beta)\\
+T_{\Apxball^{\pi}}(\stage G{t_{C}},\stage C{t_{C}},\frac{\stretch}{4\eps}d_{\Clevel(C)},\eps,\beta))
\end{align*}
where $\cC^{ALL}$ contains all cores that have ever been initialized
and, for each $C\in\cC^{ALL}$, $t_{C}$ is the time $C$ is initialized.
We guarantee that $\sum_{C\in\cC^{ALL}}|\ball_{\stage G{t_{C}}}(\stage C{t_{C}},\frac{\stretch}{4\eps}d_{\Clevel(C)})|\le O(kn^{1+2/k}/\scatter)$. 
\end{theorem}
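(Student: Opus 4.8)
\textbf{Proof plan for Theorem \ref{thm:covering path}.}

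The plan is to observe that Theorem \ref{thm:covering path} is essentially a \emph{syntactic strengthening} of Theorem \ref{thm:covering}, obtained by substituting path-reporting components for their distance-only counterparts throughout, and that the entire proof of Theorem \ref{thm:covering} (carried out in Section \ref{sec:part2Covering}, i.e. Algorithm \ref{alg:Covering} and Lemmas \ref{lem:bound core}--\ref{lem:time clustering}) only uses those components through the interface that the path-reporting versions also satisfy. Concretely, by \Cref{prop:path-reporting stronger}, $\Apxball^\pi(G,S,d',\eps,\beta)$ satisfies every requirement of $\Apxball(G,S,d',\eps)$, and $\Core^\pi(G,\Kinit,d',\beta)$ with stretch $\stretch$ satisfies every requirement of $\Core(G,\Kinit,d')$ with stretch $\stretch$. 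Hence running Algorithm \ref{alg:Covering} verbatim, but initializing each core $C$ as $\Core^\pi(G,\Cinit,d_\ell,\beta)$ and each cover/shell as $\Apxball^\pi(G,C,4d_\ell,\eps,\beta)$ / $\Apxball^\pi(G,C,\tfrac{\stretch}{4\eps}d_\ell,\eps,\beta)$, still produces a valid $(d,k,\eps,\stretch,\Delta)$-covering; by \Cref{def:Covering path} this labelled object is exactly a path-reporting $(d,k,\eps,\stretch,\Delta,\beta)$-covering.

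First I would verify that the three structural facts underpinning Theorem \ref{thm:covering} go through unchanged: \Cref{prop:clustering basic} (levels lie in $[0,k-1]$; every vertex is covered; same-level cores are disjoint) uses only that $\Core^\pi$ maintains a decremental set and the ball-computation in \Cref{lne:shellissmall}, both of which are unaffected by path-reporting; \Cref{lem:bound core} and \Cref{lem:bound shell} use only the Scattered property of $\Core$ (inherited by $\Core^\pi$ via \Cref{prop:path-reporting stronger}) and the same charging argument, giving the identical bound $\Delta = O(kn^{2/k}/\scatter)$. Then I would redo the running-time accounting of \Cref{lem:time clustering} line by line: the list-maintenance cost is still subsumed by the (now path-reporting) $\Core^\pi$/$\Apxball^\pi$ total update times; the number of generated updates is still $O(kn^{1+2/k}/\scatter)$ by \Cref{lem:bound shell}; the Dijkstra cost in \Cref{step:covering:level} is unchanged; and the per-core initialization cost becomes $T_{\Core^\pi}(\stage G{t_C},\stage C{t_C},d_{\Clevel(C)},\beta) + 2\cdot T_{\Apxball^\pi}(\stage G{t_C},\stage C{t_C},\tfrac{\stretch}{4\eps}d_{\Clevel(C)},\eps,\beta)$, which is exactly the form claimed in the theorem statement. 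The guarantee $\sum_{C\in\cC^{ALL}}|\ball_{\stage G{t_C}}(\stage C{t_C},\tfrac{\stretch}{4\eps}d_{\Clevel(C)})| \le O(kn^{1+2/k}/\scatter)$ follows, as in the distance-only proof, from $\ball_{\stage G{t_C}}(\stage C{t_C},\tfrac{\stretch}{4\eps}d_{\Clevel(C)}) \subseteq \shell(\stage C{t_C})$ together with \Cref{lem:bound shell}.

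The only genuinely new bookkeeping is the query-time overhead: I would note that since $\cC$'s cores, covers and shells are now maintained by $\Apxball^\pi$ and $\Core^\pi$, a threshold-subpath query to the covering is answered by forwarding it to the relevant component, so the covering inherits query-time overhead $(q_\phi, q_\path)$ provided each invoked $\Apxball^\pi$/$\Core^\pi$ has overhead at most $(q_\phi,q_\path)$ — which is precisely how query-time overhead for a covering was defined. I do not expect a real obstacle here, since nothing about maintaining the covering interacts with how queries are resolved; the substance of this part of the paper lies in \emph{implementing} $\Apxball^\pi$, $\Core^\pi$ and the path-reporting compressed graph (the analogues of Sections \ref{sec:Core}--\ref{sec:ball}), not in re-deriving the covering combinator. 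Thus the ``proof'' of Theorem \ref{thm:covering path} is a short reduction: cite \Cref{prop:path-reporting stronger} to legitimize the component swap, observe that Algorithm \ref{alg:Covering} and every lemma in Section \ref{sec:part2Covering} is stated purely in terms of the (weaker) distance-only interface, and conclude that the same algorithm and the same analysis yield the stated path-reporting covering with the stated running time, with query-time overhead inherited from the underlying components.
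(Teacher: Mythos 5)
Your proposal is correct and matches the paper's own treatment: the paper obtains Theorem \ref{thm:covering path} immediately from Theorem \ref{thm:covering} by swapping in the path-reporting components, justified by \Cref{prop:path-reporting stronger}, exactly as you argue. Your additional line-by-line re-verification of \Cref{prop:clustering basic}, \Cref{lem:bound core}--\ref{lem:time clustering} and the query-time bookkeeping is more detail than the paper gives, but it is the same reduction.
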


The following is analogous to \Cref{prop:covering-compressed is compressed}.
\begin{prop}
[A Covering-Compressed Graph is a Compressed Graph (Path-reporting version)]\label{prop:covering-compressed is compressed path}Let
$\cC$ be a path-reporting $(d,k,\eps,\stretch,\Delta,\beta)$-covering
of a graph $G$. Let $H_{\cC}$ be the covering-compressed graph of $\cC$
and $H$ be the hypergraph view of $H_{\cC}$. Then $H$ is a path-reporting
$(d,\gamma,\Delta,3\beta)$-compressed graph of $G$ where $\gamma=(\stretch/\eps)^{k}$.
If the query-time overhead of $\cC$ is $(q_{\phi},q_{\path})$, then
$H$ has query-time overhead of $(3q_{\phi},q_{\path}+2q_{\phi})$.
\end{prop}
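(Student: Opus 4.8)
The plan is to mirror the proof of the distance-only version (Proposition \ref{prop:covering-compressed is compressed}) and then add the path-reporting bookkeeping on top. First, since a path-reporting covering is a distance-only covering by Lemma \ref{prop:path-reporting stronger}(3), the combinatorial properties of a $(d,\gamma,\Delta)$-compressed graph carry over verbatim from Proposition \ref{prop:covering-compressed is compressed}; in particular $\gamma = (\stretch/\eps)^k$ and the all-time degree bound $\Delta$ are unchanged. So the only work is to exhibit, for each adjacent pair $(u,v)$ in $H$, an implicitly maintained $u$-$v$ path $\pi(u,v)$ of length at most $\gamma d$ that is $3\beta$-edge-simple, together with a threshold-subpath query routine with the claimed overhead.

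Next I would construct $\pi(u,v)$. Recall that $u,v$ are adjacent in the hypergraph $H$ exactly when there is a level-$\ell$ core $C$ with $u,v \in \shell(C) = \Apxball^\pi(G,C,\tfrac{\stretch}{4\eps}d_\ell,\eps,\beta)$. The path-reporting $\Apxball^\pi$ implicitly maintains $\beta$-simple paths $\pi(C,u)$ and $\pi(C,v)$, each of length at most $(1+\eps)\tfrac{\stretch}{4\eps}d_\ell$. Let $c_u, c_v \in C$ be the endpoints of these paths inside $C$. Since $C = \Core^\pi(G,\Cinit,d_\ell,\beta)$ is path-reporting, it implicitly maintains a $\beta$-simple $c_u$-$c_v$ path $\pi(c_u,c_v)$ of length at most $\stretch\cdot d_\ell$. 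I would then set $\pi(u,v) \defeq \pi(u,c_u) \circ \pi(c_u,c_v) \circ \pi(c_v,v)$, the concatenation of the reverse of $\pi(C,u)$, the core path, and $\pi(C,v)$. Its length is at most $2(1+\eps)\tfrac{\stretch}{4\eps}d_\ell + \stretch d_\ell \le \tfrac{\stretch}{\eps}d_\ell \cdot O(1)$, and using $d_\ell \le d(\stretch/\eps)^{k-1}$ this is at most $d(\stretch/\eps)^k = \gamma d$ after a routine constant check (using $\eps \le 0.1$). For edge-simpleness: each of the three constituent paths is $\beta$-simple, hence $\beta$-edge-simple; a concatenation of three $\beta$-edge-simple paths is $3\beta$-edge-simple, giving the claimed $3\beta$.

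For the query routine: on input $(u,v)$ and steadiness index $j$, I would query $\sigma_{\le j}(\pi(C,u))$ and $\sigma_{\le j}(\pi(C,v))$ from the two $\Apxball^\pi$ instances and $\sigma_{\le j}(\pi(c_u,c_v))$ from the $\Core^\pi$ instance, then return the multi-set union $\sigma_{\le j}(\pi(u,v)) = \sigma_{\le j}(\pi(C,u)) \cup \sigma_{\le j}(\pi(c_u,c_v)) \cup \sigma_{\le j}(\pi(C,v))$. This is correct because $\sigma_{\le j}$ distributes over path concatenation as a multi-set union. For the overhead: if $\sigma_{\le j}(\pi(u,v)) = \emptyset$ then all three sub-queries return $\emptyset$, each costing at most $q_\phi$ by the query-time overhead of $\cC$, for a total of $3q_\phi$. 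If $\sigma_{\le j}(\pi(u,v)) \ne \emptyset$, then the total time over the three calls is at most $\sum (|\sigma_{\le j}(\cdot)| \cdot q_\path + q_\phi)$, where the $q_\phi$ terms appear only for those (at most three) sub-calls whose answer is empty; bounding each such empty sub-call's cost by $q_\phi$ and distributing over the $\ge 1$ returned edges gives at most $|\sigma_{\le j}(\pi(u,v))| \cdot (q_\path + 2q_\phi)$ (the factor $2$ absorbing up to two empty sub-calls since at least one sub-call is non-empty). Hence the overhead is $(3q_\phi, q_\path + 2q_\phi)$, as claimed.

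I do not expect a serious obstacle here: the statement is essentially a wrapper lemma. The only mildly delicate point is the constant-factor arithmetic showing that $2(1+\eps)\tfrac{\stretch}{4\eps}d_\ell + \stretch d_\ell \le \gamma d$, which needs $\eps \le 0.1$ and the geometric growth $d_\ell = d(\stretch/\eps)^\ell$ with $\ell \le k-1$; I would state this inequality explicitly but not belabor it. The other thing to be careful about is that the endpoints $c_u, c_v$ produced by the $\Apxball^\pi$ paths genuinely lie in the current core set $K \subseteq \Cinit$ (so that $\Core^\pi$ can answer the $(c_u,c_v)$ query) — this follows since $\Apxball^\pi$'s implicitly maintained path to a vertex of $\shell(C)$ terminates at the source set $C$, which is exactly the maintained core set.
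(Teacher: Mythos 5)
Your proposal is correct and follows essentially the same route as the paper's proof: inherit the distance-only guarantees from Proposition \ref{prop:covering-compressed is compressed}, define $\pi(u,v)$ as the concatenation of the two $\Apxball^{\pi}$ shell paths with the $\Core^{\pi}$ path inside $C$, verify the length bound $2(1+\eps)\tfrac{\stretch}{4\eps}d_{\ell}+\stretch d_{\ell}\le d\gamma$ and the $3\beta$ simpleness, and answer queries by taking the multi-set union of the three sub-queries with the same $(3q_{\phi},\,q_{\path}+2q_{\phi})$ accounting. (One cosmetic remark: Definition \ref{def:compressed graph path} asks for a $3\beta$-\emph{simple} path, which your concatenation argument already gives, even though you stated only $3\beta$-edge-simpleness.)
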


\begin{proof}
\Cref{prop:covering-compressed is compressed} already implies $H$ is
a (distance-only) $(d,\gamma,\Delta)$-compressed graph. It remains
to define a $3\beta$-simple $u$-$v$ path $\pi(u,v)$ of length
at most $\gamma\cdot d$ for every vertices $u$ and $v$ adjacent
in $H$, and then show a data structure that, given $(u,v)$ and a
steadiness index $j$, returns $\sigma_{\le j}(\pi(u,v))$. 

Consider vertices $u$ and $v$ adjacent in $H$ via a hyperedge $e$.
There is a level-$\ell$ core $C\in\cC$, for some $\ell\in[0,k-1]$,
corresponding to the hyperedge $e$ such that $u,v\in\shell(C)$.
Recall that $d_{\ell}=d\cdot(\frac{\stretch}{\eps})^{\ell}$ from
\Cref{def:Covering}. We have $\Apxball^{\pi}(G,C,\frac{\stretch}{4\eps}d_{\ell},\eps,\beta)=\shell(C)$
implicitly maintains $\beta$-simple paths $\pi(u,C)=(u,\dots,u_{C})$
and $\pi(C,v)=(v_{C},\dots,v)$ of length at most $(1+\eps)\frac{\stretch}{4\eps}d_{\ell}$
where $u_{C},v_{C}\in C$. Also, $\Core^{\pi}(G,\Cinit,d_{\ell},\beta)=C$
implicitly maintains a $\beta$-simple path $\pi(u_{C},v_{C})$ of
length at most $\stretch\cdot d_{\ell}$. We define $\pi(u,v)=\pi(u,C)\circ\pi(u_{C},v_{C})\circ\pi(C,v)$.
This path is clearly $3\beta$-simple and has length at most $2(1+\eps)\frac{\stretch}{4\eps}d_{\ell}+\stretch\cdot d_{\ell}\le\frac{\stretch}{\eps}d_{k-1}=d\gamma$,
as desired. 

By \Cref{rem:covering-compressed from covering}, given the covering $\cC$,
we will assume that the correspondences between each hyperedge $e\in E(H)$
and the corresponding core $C\in\cC$ is always maintained for us.
Given $(u,v)$ and a steadiness index $j$, we can straight-forwardly query
$\Apxball^{\pi}(G,C,\frac{\stretch}{4\eps}d_{\ell},\eps,\beta)$ and
$\Core^{\pi}(G,\Cinit,d_{\ell},\beta)$ to obtain 
\[\sigma_{\le j}(\pi(u,v))=\sigma_{\le j}(\pi(u,C))\cup\sigma_{\le j}(\pi(u_{C},v_{C}))\cup\sigma_{\le j}(\pi(C,v)).
\]
If $\sigma_{\le j}(\pi(u,v))=\emptyset$, this takes at most $3q_{\phi}$
time. Otherwise, this takes at most $|\sigma_{\le j}(\pi(u,v))|\cdot q_{\path}+2q_{\phi}\le|\sigma_{\le j}(\pi(u,v))|\cdot(q_{\path}+2q_{\phi})$
time.
\end{proof}
Next, we note that \Cref{prop:compressed base} generalizes to its path-reporting
version immediately.
\begin{prop}
[A Trivial Path-reporting Compressed Graph]\label{prop:compressed base path}Let
$G$ be a bounded-degree graph $G$ with integer edge weights. Let
$G_{unit}$ be obtained $G$ by removing all edges with weight greater
than one. Then, $G_{unit}$ is a path-reporting $(d=1,\gamma=1,\Delta=O(1),\beta=1)$-compressed
graph of $G$ with query-time overhead of $(q_{\phi}=1,q_{\path}=1)$.
\end{prop}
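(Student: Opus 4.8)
The plan is to observe that the distance-only content of the statement is precisely \Cref{prop:compressed base}, so the only work left is to equip $G_{unit}$ with the implicit path structure and the threshold-subpath query demanded by \Cref{def:compressed graph path}, and to check the claimed parameters $\beta=1$ and $(q_{\phi},q_{\path})=(1,1)$. I will not re-prove the combinatorial properties of \Cref{def:compressed graph} since those are handed to us by \Cref{prop:compressed base}.

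First I would use integrality of the weights: since $G$ has positive integer edge weights, deleting every edge of weight greater than one leaves exactly the edges of weight $1$. Consequently, two vertices $u,v$ are adjacent in (the hypergraph view of) $G_{unit}$ if and only if there is an edge $(u,v)\in E(G)$ with $w_G(u,v)=1$. For each such adjacent pair I would define $\pi(u,v)$ to be the single-edge path consisting of one fixed choice of the edge $(u,v)$ itself. This path has hop length $1=\gamma\cdot d$, it is a simple path and hence $\beta$-simple with $\beta=1$, and it requires no maintenance effort whatsoever: when a deletion to $G$ removes the edge $(u,v)$, the pair $u,v$ simply ceases to be adjacent in $G_{unit}$, so no path need be reported for it.

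Next I would describe the query. Given an adjacent pair $(u,v)$ and a steadiness index $j$, the data structure reads $\sigma(u,v)$ in $O(1)$ time and returns the multi-set $\{(u,v)\}$ if $\sigma(u,v)\le j$ and $\emptyset$ otherwise. Since $\pi(u,v)$ is exactly the single edge $(u,v)$, this output equals $\sigma_{\le j}(\pi(u,v))$, and it is produced in $q_{\phi}=1$ time when the answer is empty and in $|\sigma_{\le j}(\pi(u,v))|\cdot q_{\path}=1$ time otherwise, matching the claimed query-time overhead. Combining this with \Cref{prop:compressed base} yields the proposition.

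The only point needing a moment of care — and it is the "main obstacle" only in the loosest sense — is the opening observation that integrality of the weights forces every surviving edge of $G_{unit}$ to have weight exactly $1$, which is what guarantees that a single edge is always an admissible $u$-$v$ path of length at most $\gamma\cdot d=1$; once that is noted, everything else is immediate from the definitions.
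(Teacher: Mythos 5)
Your proposal is correct and matches the paper's intent exactly: the paper states this proposition without proof, noting only that \Cref{prop:compressed base} "generalizes to its path-reporting version immediately," and the content you supply — each surviving edge has weight exactly $1$ by integrality, so the single edge $(u,v)$ serves as the implicitly maintained $1$-simple path of length $1=\gamma\cdot d$, with the query answered in $O(1)$ time by comparing $\sigma(u,v)$ to $j$ — is precisely the intended (and only sensible) filling-in of that omitted argument.
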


Lastly, we give a straightforward implementation for path-reporting
approximate ball data structure based on the classic ES-tree \cite{EvenS}.
\begin{prop}
[Path-reporting ES-tree]\label{prop:ES tree path}We can implement
$\Apxball^{\pi}(G,S,d,\eps=0,\beta=1)$ in $O(|\ball_{G}(S,d)|\cdot d\log n)$
total update time with $(O(\log n),O(d\log n))$ query-time overhead.
\end{prop}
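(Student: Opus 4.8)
The plan is to run the classic Even--Shiloach tree exactly as in \Cref{prop:ES tree} and to bolt on a constant amount of per-vertex bookkeeping so that threshold-subpath queries can be answered. First I would build the Even--Shiloach tree for $\Apxball(G,S,d,0)$, rooted at a dummy super-source $s_{0}$ joined to every vertex of $S$ by an edge of weight $0$, to which we assign steadiness $+\infty$ so that these edges never appear in any $\sigma_{\le j}$. This already discharges \Cref{def:Apxball}: for every $v\in\ball_{G}(S,d)$ it maintains $\dtil(v)=\dist_{G}(S,v)$ and a parent pointer $\mathrm{parent}(v)$, and following parent pointers from $v$ traces a simple tree path $v\to\cdots\to s\to s_{0}$ with $s\in S$ of total weight $\dist_{G}(S,v)$. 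I then \emph{define} the implicitly maintained path $\pi(S,v)$ to be this tree path with the final dummy edge $(s,s_{0})$ deleted; it is a $1$-simple $S$-$v$ path in $G$ of length exactly $\dist_{G}(S,v)=(1+\eps)\dist_{G}(S,v)$, meeting the extra requirements that \Cref{def:Apxball path} adds over \Cref{def:Apxball}, and it does not depend on any steadiness index. The same mechanism handles both edge deletions and edge-weight increases, as in \Cref{prop:ES tree}.

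The augmentation is a single extra field $m(v)$ per tree vertex: the minimum steadiness among the edges of $\pi(S,v)$, or $+\infty$ if $\pi(S,v)$ is empty. It obeys the recurrence $m(v)=\min\{\sigma(e_{v}),\,m(\mathrm{parent}(v))\}$, where $e_{v}$ is $v$'s parent edge, and I would recompute it \emph{precisely when the Even--Shiloach tree already (re)processes $v$} --- i.e. exactly when it recomputes $v$'s level and parent --- at $O(1)$ extra cost per such event. The only subtle point is correctness of this lazy maintenance: a parent change at a vertex $u$ stales $m(\cdot)$ throughout $u$'s subtree, and we cannot afford to walk subtrees; but the Even--Shiloach tree already reprocesses every affected descendant individually, in Dijkstra order of increasing level, and since all edge weights are positive integers (the only $0$-weight edges emanate from the root $s_{0}$, which is never reprocessed) we always have $\mathrm{level}(\mathrm{parent}(v))<\mathrm{level}(v)$ for real vertices, so when $v$ is reprocessed its parent is already finalized and $m(\mathrm{parent}(v))$ is already correct. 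Hence the piggybacked recomputation is correct, and the total extra work is proportional to the number of reprocessings, which is $O(|\ball_{G}(S,d)|\cdot d)$ by the analysis behind \Cref{prop:ES tree}; this is absorbed into the stated $O(|\ball_{G}(S,d)|\cdot d\log n)$ budget.

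For a query with vertex $v$ and index $j$: if $\dtil(v)>d$ there is nothing to report (as in $\Apxball$); otherwise, if $m(v)>j$ return the empty multiset in $O(1)$ time --- this is the $q_{\phi}=O(1)\le O(\log n)$ regime; and if $m(v)\le j$, walk up the parent pointers from $v$ to its $S$-ancestor $s$, collecting every traversed edge of steadiness $\le j$, and return the collected multiset. The walk inspects at most $\dist_{G}(S,v)+1\le d+1$ edges, costing $O(d)$; since $m(v)\le j$ guarantees $\sigma_{\le j}(\pi(S,v))\neq\emptyset$, this is at most $|\sigma_{\le j}(\pi(S,v))|\cdot O(d)$, so $q_{\path}=O(d)\le O(d\log n)$. (If one wanted the stronger $q_{\path}=O(1)$, one could instead keep at each vertex, for every steadiness value $\ell$, a pointer to the nearest ancestor whose parent edge has steadiness $\le\ell$; these $O(\sigma_{\max})=O(\log n)$ pointers are likewise refreshed from the already-finalized parent at each reprocessing, and a query then spends $O(1)$ per reported edge. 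This is not needed for the claimed bounds.)

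The main obstacle is thus nothing more than justifying the lazy, piggybacked maintenance of $m(\cdot)$ --- namely that the parents-before-children processing order of the Even--Shiloach tree lets us refresh each vertex's path-aggregate the instant the tree touches it, so we never traverse a subtree --- and checking that this leaves the standard running-time analysis intact. Everything else (monotonicity and the bound $\dist_{G}(S,v)\le\dtil(v)\le(1+\eps)\dist_{G}(S,v)$ on the estimates, and the path being simple and of exactly the right length) is inherited verbatim from \Cref{prop:ES tree}.
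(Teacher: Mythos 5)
There is a genuine gap in the lazy maintenance of $m(\cdot)$. Your argument rests on the claim that the Even--Shiloach tree ``already reprocesses every affected descendant individually'' whenever a parent change stales the path aggregates in a subtree, but this is false: the standard (and essential-for-efficiency) operation of the ES tree is that when a vertex $u$ loses its tree edge, it re-hooks to another neighbor at the \emph{same} level, in which case neither $u$'s level nor any descendant's level changes and no descendant is reprocessed. Yet the root-to-$w$ tree path of every descendant $w$ of $u$ has changed, so $m(w)$ is now stale and will not be refreshed until (if ever) $w$'s own level increases. This breaks the scheme in two ways: if the stale $m(w)$ exceeds $j$ while the current tree path contains an edge of steadiness $\le j$, the query returns $\emptyset$ even though $\sigma_{\le j}(\pi(S,w))\neq\emptyset$ for the path $\pi(S,w)$ your data structure would report at $j=\infty$ --- exactly the consistency-across-$j$ requirement that \Cref{def:Apxball path} insists on; and in the opposite direction a stale $m(w)\le j$ can trigger an $O(d)$ walk that returns $\emptyset$, destroying the $q_{\phi}=O(\log n)$ bound. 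The same staleness afflicts your per-steadiness ancestor pointers.

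The paper avoids this precisely by not caching path aggregates at vertices: it maintains a link-cut tree on top of the ES tree, so the minimum steadiness $j_v$ on the \emph{current} tree path from $v$ is computed at query time in $O(\log n)$, and each parent change in the ES tree is mirrored by a cut and a link at $O(\log n)$ amortized cost, which is where the extra $\log n$ factor in the $O(|\ball_G(S,d)|\cdot d\log n)$ update time comes from. Everything else in your write-up (the dummy source, the definition of $\pi(S,v)$ as the tree path, the $O(d)$ listing when the answer is nonempty) matches the paper's proof; to repair your approach you would either need to adopt such a dynamic-tree path-minimum structure or find an argument that bounds the total work of propagating aggregate invalidations through subtrees, which the plain ES-tree analysis does not give you.
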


\begin{proof}
This can be done by explicitly maintaining an ES-tree $T$ rooted
at $S$ to up distance $d$ in $O(|\ball_{G}(S,d)|\cdot d)$ total
update time. For every $v\in\ball_{G}(S,d)$, we define $\pi(S,v)$
as the simple $S$-$v$ path in $T$ which has length exactly $\dist_{G}(S,v)$.
We also implement a link-cut tree \cite{SleatorT83} on top of the
ES-tree $T$ so that, given any $v\in\ball_{G}(S,d)$, we can obtain
the minimum steadiness $j_{v}$ of edges in $\pi(S,v)$ in time $O(\log n)$.
Maintaining the link-cut tree only increases the total update time
by a factor $O(\log n)$. Given $v\in\ball_{G}(S,d)$ and a steadiness
index $j$, we check $j<j_{\min}$. If $j<j_{\min}$, then we know
$\sigma_{\le j}(\pi(S,v))=\emptyset$ and we return $\emptyset$ in
$O(\log n)$ time. If $j\ge j_{\min}$, then we know $\sigma_{\le j}(\pi(S,v))\neq\emptyset$
and so we just explicitly list all edges in $\pi(S,v)$ which contains
at most $d$ edges and return $\sigma_{\le j}(\pi(S,v))$ in $O(d+\log n)=O(|\sigma_{\le j}(\pi(S,v))|d\log n)$
time.\footnote{Note that using the link-cut tree, we can in fact list edges $\sigma_{\le j}(\pi(S,v))$
in $|\sigma_{\le j}(\pi(S,v))|\cdot O(\log^2 n)$ time so that the query-time
overhead is $(O(\log n),O(\log^2 n))$, but we do not need to optimize
this factor.}
\end{proof}

\section{Implementing Path-reporting Approximate Balls}

\label{sec:ball_path}

In this section, we show how to implement path-reporting approximate
ball data structures $\Apxball^{\pi}$ for distance scale $D$. We
will assume that a path-reporting covering $\cC$ for distance scale
$d\ll D$ is given for us. Then, the algorithm exploits three more
components as a subroutine: (1) path-reporting $\Apxball^{\pi}$ for
smaller distance scale $\approx d$, similar to how it is done for
the distance-only version, (2) path-reporting $\Apxball^{\pi}$ for
distance scale $D$ but the smaller graph $G^{\peel}$, and (3) distance-only
$\Apxball$ for distance scale $D$ on $G^{\peel}$ but with good
accuracy guarantee. This is why the total update time of the three
components appears in \Cref{eq:time ball_path} below.
\begin{theorem}
[Path-reporting Approximate Ball]\label{thm:ball_path} \label{thm:ballpath} Let $G$ be
an $n$-vertex bounded-degree decremental graph with steadiness between
$[\sigma_{\min},\sigma_{\max}]$. Let $G^{\peel}=\sigma_{>\sigma_{\min}}(G)$
be obtained from $G$ by removing edges with steadiness $\sigma_{\min}$.
Let $\eps\le1/500$ and $\eps^{\peel}\le1$. Suppose that a path-reporting
$(d,k,\eps,\stretch,\Delta,\beta)$-covering $\cC$ of $G$ is explicitly
maintained for us. Then, we can implement a path-reporting approximate
ball data structure $\Apxball^{\pi}(G,S,D,300\eps+\eps^{\peel},8\beta\Delta)$
using total update time
\begin{align}
 & \Otil(\left|\ball_{G}(S,D)\right|\Delta\frac{D}{\eps d})+T_{\Apxball^{\pi}}(G,S,2(\frac{\stretch}{\eps})^{k}d,\eps,\beta)+\label{eq:time ball_path}\\
 & T_{\Apxball^{\pi}}(G^{\peel},S,D,\eps^{\peel},8\beta\Delta)+T_{\Apxball}(G^{\peel},S,D,\eps).\nonumber 
\end{align}
Let $(q_{\phi},q_{\path})$ bound the query-time overhead of both
$\Apxball^{\pi}(G,S,2(\frac{\stretch}{\eps})^{k}d,\eps,\beta)$ and
$(d,k,\eps,\stretch,\Delta,\beta)$-covering $\cC$. Let $(q_{\phi}^{\peel},q_{\path}^{\peel})$
bound the query-time overhead of $\Apxball^{\pi}(G^{\peel},S,D,\eps^{\peel},8\beta\Delta)$,
Then, the data structure has query-time overhead of
\[
(q_{\phi}^{\peel}+O(1),\max\{q_{\path}^{\peel}+O(1),q_{\path}+O(\frac{D}{\eps d})\cdot q_{\phi}\}).
\]
\end{theorem}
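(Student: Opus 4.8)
\textbf{Proof plan for Theorem~\ref{thm:ball_path}.}
The plan is to mirror the distance-only construction in Section~\ref{sec:ball} (Theorem~\ref{thm:Apxball}) while carrying path information through the emulator, and then to handle the two fresh difficulties that only arise in the path-reporting setting: (i) each unfolded emulator edge must pick a \emph{canonical} underlying path so that the returned path $\pi(S,v)$ is independent of the steadiness index $j$, and (ii) the unfolded path need only be $\beta$-edge-simple (not simple), so we must bound how many times a single edge of $G$ can reappear. First I would rebuild the emulator $\Htil$ from Definition~\ref{def:Htil}, but now using the path-reporting covering $\cC$ (so each covering-compressed edge $(v,C)$ carries, via Proposition~\ref{prop:covering-compressed is compressed path}, an implicitly maintained $3\beta$-simple $v$--$v'$ path of length $\le \gamma d$ through the shell and core of $C$), the path-reporting small-scale ball $\Apxball^\pi(G,S,2(\stretch/\eps)^k d,\eps,\beta)$ for the dummy-source edges, and the original $G$-edges of weight in $(d,D]$ for themselves. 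I would then run the MES data structure of Algorithm~\ref{alg:MES} on $\Htil$ exactly as before; the distance guarantees (Lemmas~\ref{lem:MES:approx:lower} and~\ref{lem:MES:approx:upper}) and the running-time bound (Lemma~\ref{lem:time MES}) are untouched, so the first and third terms of~\eqref{eq:time ball_path} and the $300\eps$ part of the accuracy are inherited verbatim.

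Next I would define, for each regular vertex $v$ in the MES tree $\Ttil$, the path $\pi(S,v)$ by walking up $\Ttil$ from $v$ to $s$ and, at each tree edge, substituting the canonical underlying path: a $G$-edge maps to itself, a dummy-source edge $(s,u)$ maps to the path that $\Apxball^\pi(G,S,2(\stretch/\eps)^kd,\eps,\beta)$ implicitly maintains for $u$, and a covering-compressed edge $(u,C)$ (or the two-edge detour $u\to v_C\to u'$ inside a shell) maps to the $3\beta$-simple path supplied by Proposition~\ref{prop:covering-compressed is compressed path}. Crucially, \emph{all} of these underlying data structures fix their implicit paths independently of $j$, and the MES tree $\Ttil$ itself is determined without reference to steadiness; hence $\pi(S,v)$ is well-defined and $j$-independent, as \Cref{def:Apxball path} demands. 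The length bound $w(\pi(S,v))\le(1+300\eps)\dist_G(S,v)$ follows because the concatenated canonical paths have total weight at most the MES estimate $\dtil(v)$ up to the rounding slack already accounted for in $\distInduc$. The query operation, given $(v,j)$, recurses down $\Ttil$: at each tree edge it first asks the corresponding component for $\sigma_{\le j}$ of that edge's canonical path in $O(q_\phi)$ time, and only descends into an emulator edge when that query is nonempty; since $\Ttil$ has at most $O(D/\eps d)$ tree edges on the root path, the total overhead when $\sigma_{\le j}(\pi(S,v))=\emptyset$ is $O(D/\eps d)\cdot q_\phi+O(1)$, while a nonempty answer of size $a$ costs $a\cdot q_\path+O(D/\eps d)\cdot q_\phi$, matching the claimed query-time overhead — except that there is still the $\sigma_{\min}$-steadiness subtlety handled in the next paragraph.

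The $\eps^{\peel}$ correction term and the two extra running-time summands come from the fact that the emulator hierarchy can make steadiness-$\sigma_{\min}$ edges hard to detect efficiently: an emulator edge may hide a very long subpath whose only low-steadiness edges are at level $\sigma_{\min}$, and a naive descent would be too slow. The fix, following the structure already announced in the Part~\ref{part:augmented-queries} overview, is to also maintain $\Apxball^\pi(G^{\peel},S,D,\eps^{\peel},8\beta\Delta)$ on the graph $G^{\peel}=\sigma_{>\sigma_{\min}}(G)$ with $\sigma_{\min}$-edges deleted, together with a \emph{distance-only} high-accuracy $\Apxball(G^{\peel},S,D,\eps)$; a regular vertex $v$ whose $G$-distance and $G^{\peel}$-distance agree to within a $(1+\eps)$ factor is served its path directly from the peeled data structure (which contains no $\sigma_{\min}$-edges, so $\sigma_{\le\sigma_{\min}-1}$ queries there are the entire path query and the $\sigma_{\min}$-answer is empty), and otherwise the discrepancy witnesses a cheap $\sigma_{\min}$-edge that the emulator-based path can be rerouted through, keeping the edge-simpleness at $8\beta\Delta$ because each regular vertex has all-time degree $\Delta+O(1)$ in $\Htil$ (Proposition~\ref{prop:Htil vertex}) and each substituted piece is $\le 3\beta$-simple. \textbf{The main obstacle} I expect is exactly this bookkeeping: proving that the reroute preserves both the $(1+300\eps+\eps^{\peel})$ length bound and the $8\beta\Delta$ edge-simpleness bound simultaneously, while showing the query still runs in the stated time — i.e., that switching between the emulator path and the peeled path never forces us to unfold an emulator edge whose $\sigma_{\le j}$ part is empty. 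Everything else is a faithful, if tedious, lift of the Section~\ref{sec:ball} argument to the path-reporting components via \Cref{prop:path-reporting stronger}.
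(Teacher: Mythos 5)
Your overall architecture matches the paper's: rebuild $\Htil$ from the path-reporting covering and the path-reporting near-ball, run the same MES, define $\pi(S,v)$ by unfolding the MES-tree path into canonical subpaths, and use the two data structures on $G^{\peel}$ to handle the $\sigma_{\min}$ level. But there is a genuine gap exactly where you flag your "main obstacle", and the paper's resolution is different from (and simpler than) the "reroute" you propose. The paper never reroutes anything: the case distinction is made once per query, in $O(1)$ time, by comparing the explicitly maintained estimates $\dtil^{\peel}(v)$ and $\dtil(v)$. If $\dtil^{\peel}(v)\le(1+50\eps)^{3}\dtil(v)$, the path \emph{is} $\pi^{\peel}(S,v)$; otherwise the path \emph{is} the unfolded emulator path, and the key fact (Proposition~\ref{prop:path nonempty}) is that in this second case $\sigma_{\le j}(\pi(S,v))\neq\emptyset$ for every $j\ge\sigma_{\min}$: the unfolded path is $(1+50\eps)^{2}$-approximate in $G$ (Lemma~\ref{lem:path approx}), while $\dist_{G^{\peel}}(S,v)\ge\dtil^{\peel}(v)/(1+\eps)>(1+50\eps)^{2}\dist_{G}(S,v)$, so \emph{every} $(1+50\eps)^{2}$-approximate $S$--$v$ path in $G$ must already contain a steadiness-$\sigma_{\min}$ edge. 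Hence the expensive $O(\frac{D}{\eps d})\cdot q_{\phi}$ unfolding cost is only ever paid when the output is nonempty and can be charged to it, and empty answers cost only $q_{\phi}^{\peel}+O(1)$ (peel case, or $j<\sigma_{\min}$). Your plan of "rerouting the emulator-based path through a cheap $\sigma_{\min}$-edge" is not needed and would in fact create the problems you worry about: it would make $\pi(S,v)$ depend on the query, threatening the required $j$-independence, and it would force exactly the length/simpleness bookkeeping you could not close. Without Proposition~\ref{prop:path nonempty} (or an equivalent argument) your query-time analysis does not reach the stated $(q_{\phi}^{\peel}+O(1),\,\cdot\,)$ bound for empty answers, so this step is missing, not merely tedious.

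A secondary inaccuracy: your justification of the $8\beta\Delta$ edge-simpleness is not the right one. The bound does not come from the all-time degree $\Delta+O(1)$ of regular vertices in $\Htil$, nor from the $3\beta$-simple composite paths of Proposition~\ref{prop:covering-compressed is compressed path} (those are used for the compressed graph in Robust Core, not here); in this construction each substituted subpath is $\beta$-simple, coming directly from $\Apxball^{\pi}$ (shells, near-ball) or $\Core^{\pi}$ of the covering. What bounds the number of subpaths a fixed vertex $u\in V(G)$ can appear in is the \emph{outer-shell participation bound} of Definition~\ref{def:Covering}: any shell path $\pi_{(u_{i-1},u_i)},\pi_{(u_i,u_{i+1})}$ or core path $\pi_{u_i}$ containing $u$ places $u\in\oshell(C)$ for the corresponding core $C$ (Claim~\ref{claim:in outershell}), and $u$ lies in at most $\Delta$ outer-shells over the whole update sequence, giving at most $3\Delta+3\le 8\Delta$ subpaths, each $\beta$-simple (Lemma~\ref{lem:path simple}). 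This is the only place where the outer-shell notion is actually used, and your argument as stated does not supply it.
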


The rest of this section is for proving \Cref{thm:ball_path}. In \Cref{sec:ball_path:datastructure},
we describe data structures for maintaining the distance estimate
$\dtil(v)$ for all $v\in\ball_{G}(S,D)$ and for additionally supporting
threshold-subpath queries, and then we analyze the total update time.
Based on the maintained data structure, in \Cref{sec:ball_path:define_path},
we define the implicitly maintained paths $\pi(S,v)$ for all $v\in\ball_{G}(S,D)$
as required by \Cref{def:Apxball path} of $\Apxball^{\pi}$. Finally,
we show an algorithm that answers threshold-subpath queries in \Cref{sec:ball_path:query_subpath}.

\subsection{Data Structures}

\label{sec:ball_path:datastructure}

\paragraph{Data structures on $G$. }

We maintain the distance estimates $\dtil(v)$ and the MES-tree $\Ttil$
using the same approach as in the distance-only algorithm from \Cref{sec:ball}.
The only difference is that we replace the distance-only components
with the path-reporting ones. 

More specifically, given the \emph{path-reporting} $(d,k,\eps,\stretch,\Delta,\beta)$-covering
$\cC$, let $H_{\cC}$ be the covering-compressed graph w.r.t.~$\cC$
(recall \Cref{def:core compress}). Then, we maintain the emulator
$\Htil$ based on $H_{\cC}$ as described in \Cref{def:Htil} but we
replace the distance-only $\Apxball(G,S,2(\frac{\stretch}{\eps})^{k}d,\eps)$
with the path-reporting $\Apxball^{\pi}(G,S,2(\frac{\stretch}{\eps})^{k}d,\eps,\beta)$
in \Cref{enu:Htil near edge} of \Cref{def:Htil}. For each $v\in\Apxball^{\pi}(G,S,2(\frac{\stretch}{\eps})^{k}d,\eps,\beta)$,
$\Apxball^{\pi}(G,S,2(\frac{\stretch}{\eps})^{k}d,\eps,\beta)$ maintains
the distance estimate $\dtil^{\near}(v)$ and implicitly maintains
an $(1+\eps)$ approximate $S$-$v$ shortest path $\pi^{\near}(S,v)$.

Now, given the emulator $\Htil$ with a dummy source $s$, we use
exactly the same algorithm $\MES(\Htil,s,D)$ from \Cref{alg:MES}
to maintain the MES-tree $\Ttil$ on $\Htil$, and let $\dtil^{\MES}(v)$
denote the distance estimate of $v$ maintained by $\MES(\Htil,s,D)$.
Recall that $\Ttil$ is defined as follows: for every vertex $u\in V(\Htil)\setminus\{s\}$,
$u$'s parent in $\Ttil$ is $\arg\min_{v}\{\dtil^{\MES}(v)+\wtil(v,u)\}$.
Then, we maintain $\dtil(v)=\min\{\dtil^{\near}(v),\dtil^{\MES}(v)\}$
for each $v\in V(\Htil)$. Note that, we used slightly different notations
in \Cref{sec:ball}; we said that the algorithm maintains $\min\{\dtil^{\near}(v),\dtil(v)\}$
for each $v$, but in \Cref{sec:ball} $\dtil(v)$ was used to denote
$\dtil^{\MES}(v)$. So the outputs from both sections are equivalent
objects. 

We observe that our slight modification does not change the accuracy
guarantee of the distance estimates.
\begin{lem}
\label{lem:dtil path correct}For $v\in\ball_{G}(S,D)$, $\dist_{G}(S,v)\le\dtil(v)\le(1+50\eps)\dist_{G}(S,v)$.
\end{lem}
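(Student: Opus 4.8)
The plan is to reduce this entirely to the accuracy analysis already carried out in Part~\ref{part:dynamicShortestPaths} (specifically \Cref{lem:MES:approx:lower} and \Cref{lem:MES:approx:upper}), observing that the only difference between the path-reporting construction and the distance-only one is the replacement of distance-only subroutines by path-reporting ones that are provably stronger (\Cref{prop:path-reporting stronger}). First I would note that $\dtil(v) = \min\{\dtil^{\near}(v), \dtil^{\MES}(v)\}$ by construction of the algorithm, which is exactly the estimate $\min\{\dtil^{\near}(v),\dtil(v)\}$ analyzed in \Cref{sec:ball conclude} once we identify $\dtil^{\MES}$ with the quantity called $\dtil$ there. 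Since $\Apxball^{\pi}(G,S,2(\frac{\stretch}{\eps})^k d, \eps, \beta)$ satisfies all requirements of $\Apxball(G,S,2(\frac{\stretch}{\eps})^k d,\eps)$ by part~(1) of \Cref{prop:path-reporting stronger}, and the path-reporting covering is a genuine $(d,k,\eps,\stretch,\Delta)$-covering by part~(3) of the same lemma, the emulator $\Htil$ built here has exactly the same structural guarantees (\Cref{prop:Htil vertex}, \Cref{prop:Htil weight}, \Cref{lem:Htil update}) as the one in \Cref{def:Htil}, and the MES tree $\Ttil$ is maintained by the identical algorithm \Cref{alg:MES}.

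The lower bound $\dist_G(S,v) \le \dtil(v)$ then follows: $\dtil^{\near}(v) \ge \dist_G(S,v)$ by the overestimate guarantee (Property~\ref{enu:Apxball:overestimate} of \Cref{def:Apxball}) which $\Apxball^{\pi}$ inherits, and $\dtil^{\MES}(v) \ge \dist_G(S,v)$ by \Cref{lem:MES:approx:lower} applied verbatim (its proof only uses the structural facts about $\Htil$ and the correctness of $\Apxball$ on $\shell(C)$, all of which hold here). For the upper bound, I would split on the distance scale exactly as in the correctness paragraph of \Cref{sec:ball conclude}: if $\dist_G(S,v) \le 2(\frac{\stretch}{\eps})^k d$, then $\dtil^{\near}(v) \le (1+\eps)\dist_G(S,v)$ by the approximation guarantee of $\Apxball^{\pi}(G,S,2(\frac{\stretch}{\eps})^k d,\eps,\beta)$ (Property~\ref{enu:Apxball:approx}), so $\dtil(v) \le (1+\eps)\dist_G(S,v) \le (1+50\eps)\dist_G(S,v)$; otherwise $\dtil^{\MES}(v) \le (1+50\eps)\dist_G(S,v)$ by \Cref{lem:MES:approx:upper}(\ref{lem:MES:approx:upper:regular}), whose hypotheses ($v$ a regular vertex with $\dist_{G}(S,v)\le D$) are met, and again $\dtil(v) \le \dtil^{\MES}(v) \le (1+50\eps)\dist_G(S,v)$.

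There is essentially no genuine obstacle here — this lemma is deliberately a restatement of the distance correctness already established, and the real content of \Cref{sec:ball_path} lies in the later subsections where the paths $\pi(S,v)$ are \emph{defined} from $\Ttil$ and threshold-subpath queries are answered. The one point requiring a sentence of care is making the notational identification explicit: \Cref{sec:ball} writes $\dtil(v)$ for what is here called $\dtil^{\MES}(v)$, and the final estimate in both sections is $\min\{\dtil^{\near}(v), \dtil^{\MES}(v)\}$, so the statement of \Cref{lem:dtil path correct} is literally the conclusion of the correctness argument in \Cref{sec:ball conclude} with the path-reporting components substituted in. I would therefore keep the proof short: invoke \Cref{prop:path-reporting stronger} to transfer all structural and accuracy guarantees, then cite \Cref{lem:MES:approx:lower} and \Cref{lem:MES:approx:upper} (together with the $\Apxball^{\pi}$ guarantee on $\dtil^{\near}$ for small distances) to conclude both inequalities.
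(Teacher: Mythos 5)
Your proposal is correct and follows essentially the same route as the paper: the paper's proof simply observes that the only changes from Section~\ref{sec:ball} are the substitution of path-reporting components, which by \Cref{prop:path-reporting stronger} are at least as strong, so all the accuracy arguments (exactly the ones you cite, \Cref{lem:MES:approx:lower}, \Cref{lem:MES:approx:upper}, and the correctness paragraph of \Cref{sec:ball conclude}) carry over verbatim. Your write-up just makes explicit the case split and lemma citations that the paper leaves implicit.
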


\begin{proof}
The only changes in the algorithm from \Cref{sec:ball} are to replace
the distance-only $(d,k,\eps,\stretch,\Delta)$-covering $\cC$ with
the path-reporting $(d,k,\eps,\stretch,\Delta,\beta)$-covering $\cC$,
and to replace the distance-only $\Apxball(G,S,2(\frac{\stretch}{\eps})^{k}d,\eps)$
with the path-reporting $\Apxball^{\pi}(G,S,2(\frac{\stretch}{\eps})^{k}d,\eps,\beta)$.
As shown in \Cref{prop:path-reporting stronger}, these path reporting
data structures are stronger than their distance-only counterparts.
Therefore, all the arguments in \Cref{sec:ball} for proving the accuracy
of $\dtil(v)$ still hold. 
\end{proof}

\paragraph{Data structures on $G^{\protect\peel}$.}

Next, let $G^{\peel}=\sigma_{>\sigma_{\min}}(G)$ be obtained from
$G$ by removing edges with steadiness $\sigma_{\min}$. We recursively
maintain the distance-only $\Apxball(G^{\peel},S,D,\eps)$ and let
$\dtil^{\peel}(v)$ denote its distance estimate for the shortest $S$-$v$ path in $G^{\peel}$.
We also recursively maintain the path-reporting $\Apxball^{\pi}(G^{\peel},S,D,\eps^{\peel},8\beta\Delta)$
and let $\pi^{\peel}(S,v)$ denote its implicitly maintained approximate
$S$-$v$ shortest path in $G^{\peel}$. We emphasize that the approximation
guarantee on $\dtil^{\peel}(v)$ depends on $\eps$ and not on $\eps^{\peel}$.

This completes the description of the all data structures for \Cref{thm:ball_path}.
We bound the total update time as specified in \Cref{thm:ball_path}
below.
\begin{lem}
The total update time is 
\begin{align*}
 & \Otil(\left|\ball_{G}(S,D)\right|\Delta\frac{D}{\eps d})+T_{\Apxball^{\pi}}(G,S,2(\frac{\stretch}{\eps})^{k}d,\eps,\beta)+\\
 & T_{\Apxball^{\pi}}(G^{\peel},S,D,\eps^{\peel},8\beta\Delta)+T_{\Apxball}(G^{\peel},S,D,\eps).
\end{align*}
\end{lem}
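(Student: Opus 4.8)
The lemma is purely an accounting of the total update time of the data structure described in Section~\ref{sec:ball_path:datastructure}, so the plan is to add up the costs of the individual components, each of which has already been analyzed (or is a direct invocation of an earlier theorem). The data structure maintains four things: (i) the emulator $\Htil$ together with the MES-tree $\MES(\Htil,s,D)$ on it, exactly as in Section~\ref{sec:ball}; (ii) the path-reporting $\Apxball^{\pi}(G,S,2(\frac{\stretch}{\eps})^{k}d,\eps,\beta)$ that feeds the $\near$-edges into $\Htil$; (iii) the path-reporting $\Apxball^{\pi}(G^{\peel},S,D,\eps^{\peel},8\beta\Delta)$ on the peeled graph; and (iv) the distance-only $\Apxball(G^{\peel},S,D,\eps)$ on the peeled graph. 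So the plan is: bound (i), then note that (ii)--(iv) are exactly the remaining three terms in the stated bound.

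\textbf{Key steps, in order.} First I would bound the cost of maintaining $\Htil$ and $\MES(\Htil,s,D)$. Here the argument is identical to the distance-only case: by \Cref{prop:path-reporting stronger}, the path-reporting covering $\cC$ and the path-reporting $\Apxball^{\pi}(G,S,2(\frac{\stretch}{\eps})^{k}d,\eps,\beta)$ satisfy all requirements of their distance-only counterparts, so \Cref{lem:Htil update} applies verbatim and bounds the number of edge updates to $\Htil$ (beyond those coming from the recursive $\Apxball$ instance) by $O(\left|\ball_{G}(S,D)\right|\Delta\frac{D}{\eps d})$, and \Cref{lem:time MES} then bounds the total update time of the MES data structure by $\Otil(\left|\ball_{G}(S,D)\right|\Delta\frac{D}{\eps d})$. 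Maintaining $\Htil$ itself additionally requires running $\Apxball^{\pi}(G,S,2(\frac{\stretch}{\eps})^{k}d,\eps,\beta)$ to supply the edges of \Cref{enu:Htil near edge} of \Cref{def:Htil}, contributing the term $T_{\Apxball^{\pi}}(G,S,2(\frac{\stretch}{\eps})^{k}d,\eps,\beta)$. Second, the two data structures on $G^{\peel}$ contribute their respective total update times $T_{\Apxball^{\pi}}(G^{\peel},S,D,\eps^{\peel},8\beta\Delta)$ and $T_{\Apxball}(G^{\peel},S,D,\eps)$ directly. Third, I would observe that maintaining the auxiliary structures used for \emph{querying} — e.g.\ the link-cut tree / minimum-steadiness annotations layered on top of $\Ttil$, analogous to \Cref{prop:ES tree path} — only costs an $\Otil(1)$ factor and is subsumed in the first term; the details of these query structures are the subject of \Cref{sec:ball_path:query_subpath}, but their maintenance cost is dominated by the MES cost. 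Summing the four contributions yields exactly the claimed bound.

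\textbf{Main obstacle.} None of the individual bounds is hard; the only slightly delicate point is justifying that \emph{only} the four listed components incur non-negligible cost, i.e.\ that the bookkeeping needed to support threshold-subpath queries (the critical-path choices, the per-emulator-edge steadiness information, and the tree structure on $\Ttil$) adds at most polylogarithmic overhead. The cleanest way to handle this is to defer the querying machinery to Sections~\ref{sec:ball_path:define_path}--\ref{sec:ball_path:query_subpath} and here only assert that its maintenance is interleaved with, and charged to, the MES updates and the recursive $\Apxball^{\pi}$ calls — each edge update of $\Htil$ triggers $\Otil(1)$ additional work to keep the steadiness annotations current, so the $\Otil(\cdot)$ in the first term absorbs it. With that observation the proof is a one-line sum of the four terms, and the lemma follows.
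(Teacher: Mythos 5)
Your proposal is correct and follows essentially the same route as the paper: the paper likewise charges the maintenance of the distance estimates to the distance-only analysis of \Cref{sec:ball conclude} (i.e.\ \Cref{lem:Htil update} plus \Cref{lem:time MES}, with $T_{\Apxball}$ replaced by $T_{\Apxball^{\pi}}$, the covering being maintained externally), and then adds the two $G^{\peel}$ terms by definition. Your extra remark about query-time bookkeeping is harmless but unnecessary here, since this data structure maintains no persistent query annotations and query costs are accounted separately.
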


\begin{proof}
As the covering $\cC$ is explicitly maintained for us, we do not
count its update time. Using the exactly same analysis as in the last
paragraph of \Cref{sec:ball conclude}, the total update time for maintaining
$\{\dtil(v)\}_{v}$ is $\Otil(\left|\ball_{G}(S,D)\right|\Delta\frac{D}{\eps d})+T_{\Apxball^{\pi}}(G,S,2(\frac{\stretch}{\eps})^{k}d,\eps,\beta)$.
Note that we replace $T_{\Apxball}(\cdot)$ with $T_{\Apxball^{\pi}}(\cdot)$.
Lastly, the data structures on $G^{\peel}$ take $T_{\Apxball^{\pi}}(G^{\peel},S,D,\eps^{\peel},8\beta\Delta)+T_{\Apxball}(G^{\peel},S,D,\eps)$
time by definition.
\end{proof}

\subsection{Defining The Implicitly Maintained Paths}

\label{sec:ball_path:define_path}

In this section, for each $v\in\ball_{G}(S,D)$, we define an approximate
$S$-$v$ shortest path $\pi(S,v)$ using \Cref{alg:define ball_path}. More precisely, we let $\pi(S,v)$ be defined as the path that would be returned if we run \Cref{alg:define ball_path} at the current stage (the algorithm is deterministic, so the query always returns the same path on a fixed input). We explicitly emphasize that these paths $\pi(S,v)$ are \emph{not} maintained
explicitly, but they are unique and fixed through the stage and they are completely independent from the steadiness
index $j$ in the queries. 

\begin{algorithm}
\If(\label{enu:condition path}){$\dtil^{\peel}(v)\le(1+50\eps)^{3}\cdot\dtil(v)$}{\Return $\pi^{\peel}(S,v)$
implicitly maintained by $\Apxball^{\pi}(G^{\peel},S,D,\eps^{\peel},8\beta\Delta)$. \label{enu:assign path next}}

\If(\label{enu:assign path near}){$(s,v)\in E(\Htil)$}{\Return $\pi^{\near}(S,v)$ implicitly maintained by $\Apxball^{\pi}(G,S,2(\frac{\stretch}{\eps})^{k}d,\eps,\beta)$.}

Let $\Ptil_{v}=(s=u_{0},u_{1}\dots,u_{z}=v)$ be the unique $s$-$v$
path in the MES tree $\Ttil$.\;

\ForEach( \label{enu:path edge}){$e\in\Ptil_{v}$}{
    \If(\label{enu:subpath near}){$e=(s,u)$}{$\pi_{e}\gets\pi^{\near}(S,u)$
implicitly maintained by $\Apxball^{\pi}(G,S,2(\frac{\stretch}{\eps})^{k}d,\eps,\beta)$.}
    \If(\label{enu:subpath edge}){$e\in E(G)$}{$\pi_{e}\gets\{e\}$.}
    \If(\label{enu:subpath shell}){$e\in E(H_{\cC})$ where $e=(u,u')$, $u$ corresponds to a core $C$ and $u'$ is a regular vertex}{ $\pi_{e}\gets\pi(C,u')$ implicitly maintained by $\Apxball^{\pi}(G,C,\frac{\stretch}{4\eps}d_{\Clevel(C)},\eps,\beta)=\shell(C)$
in the covering $\cC$.}
}

\ForEach(\label{enu:path core}){ $u_{i}\in\Ptil_{v}$ where $u_{i}$ corresponds to a core $C$}{
    Let $u'_{i},u''_{i}\in C$ be such that $\pi_{(u_{i-1},u_{i})}=(u_{i-1},\dots,u'_{i})$
and $\pi_{(u_{i},u_{i+1})}=(u'_{i},\dots,u_{i+1})$. \;
    \label{enu:subpath core} $\pi_{u_{i}}\gets\pi(u'_{i},u''_{i})$ implicitly maintained by $\Core^{\pi}(G,\Cinit,d_{\ell_{\core}(C)},\beta)=C$ in the covering $\cC$.
}

\label{enu:assign path general}Order the paths from \Cref{enu:path edge}
as $\pi_{(u_{0},u_{1})},\pi_{(u_{1},u_{2})},\dots,\pi_{(u_{z-1},u_{z})}$
and then, for each path $\pi_{u_{i}}$ from \Cref{enu:path core},
insert $\pi_{u_{i}}$ between $\pi_{(u_{i-1},u_{i})}$ and $\pi_{(u_{i},u_{i+1})}$.\;
\Return $\pi(S,v)$ as the concatenation of all these ordered
paths.
\caption{\label{alg:define ball_path}Computing $\pi(S,v)$ for each $v\in\protect\ball_{G}(S,D)$.}
\end{algorithm}

Below, we show that each path $\pi(S,v)$ defined by \Cref{alg:define ball_path}
satisfies the requirement from \Cref{def:Apxball path}: it is an approximate
$S$-$v$ shortest path in $G$ (\Cref{lem:path approx}) and it guarantees
bounded simpleness (\Cref{lem:path simple}). 
\begin{lem}
\label{lem:path approx}For every $v\in\ball_{G}(S,D)$, we have the
following: 
\begin{enumerate}
\item \label{enu:path approx:next}If $\dtil^{\peel}(v)\le(1+50\eps)^{3}\cdot\dtil(v)$,
then $\pi(S,v)$ is a $(1+300\eps+\eps^{\peel})$-approximate $S$-$v$
shortest path in $G$. 
\item \label{enu:path approx:here}If $\dtil^{\peel}(v)>(1+50\eps)^{3}\cdot\dtil(v)$,
then $\pi(S,v)$ is a $(1+50\eps)^{2}$-approximate $S$-$v$ shortest
path in $G$
\end{enumerate}
\end{lem}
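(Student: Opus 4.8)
The two cases of \Cref{lem:path approx} are analyzed by looking at which branch of \Cref{alg:define ball_path} produced $\pi(S,v)$. I would first dispose of the easy branches. If the algorithm returns $\pi^{\peel}(S,v)$ on \Cref{enu:assign path next}, then by the guarantee of $\Apxball^{\pi}(G^{\peel},S,D,\eps^{\peel},8\beta\Delta)$ we have $w(\pi^{\peel}(S,v)) \le (1+\eps^{\peel})\dist_{G^{\peel}}(S,v)$, and since $G^{\peel} \subseteq G$ and the branch condition gives $\dtil^{\peel}(v) \le (1+50\eps)^3 \dtil(v) \le (1+50\eps)^4 \dist_G(S,v)$, we get $w(\pi^{\peel}(S,v)) \le (1+\eps^{\peel})(1+50\eps)^4 \dist_G(S,v)$; expanding $(1+50\eps)^4 \le 1 + 300\eps$ for $\eps \le 1/500$, this is at most $(1+300\eps+\eps^{\peel})\dist_G(S,v)$, establishing case 1. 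The key point is that $\pi^{\peel}$ is a genuine path \emph{in} $G^{\peel}$, hence in $G$, and its length is controlled by $\dtil(v)$ via the branch test and \Cref{lem:dtil path correct}.

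\textbf{Case 2: the deep branches.} Here $\dtil^{\peel}(v) > (1+50\eps)^3 \dtil(v)$, so \Cref{enu:assign path next} is not taken, and I must show $\pi(S,v)$ has length at most $(1+50\eps)^2\dist_G(S,v)$. If $(s,v) \in E(\Htil)$, the algorithm returns $\pi^{\near}(S,v)$, which by the guarantee of $\Apxball^{\pi}(G,S,2(\stretch/\eps)^k d,\eps,\beta)$ has length $\le (1+\eps)\dist_G(S,v) \le (1+50\eps)^2\dist_G(S,v)$ — this is immediate. The substantive subcase is when $\pi(S,v)$ is assembled from the MES-tree path $\Ptil_v = (s=u_0,\dots,u_z=v)$ by substituting each tree edge $(u_{i-1},u_i)$ with a subpath $\pi_{(u_{i-1},u_i)}$ in $G$ and each core vertex $u_i$ with a subpath $\pi_{u_i}$ inside the corresponding core $C$. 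I would argue that (a) the concatenation is a well-formed $S$-$v$ walk in $G$ — this requires checking the endpoint bookkeeping in \Cref{enu:path core}, namely that for a core vertex $u_i$ the path $\pi_{(u_{i-1},u_i)}$ ends at some $u_i' \in C$, the path $\pi_{(u_i,u_{i+1})}$ starts at some $u_i'' \in C$, and $\pi_{u_i} = \pi(u_i',u_i'')$ inside $C$ glues them — and (b) the total length is bounded. For (b), the total length of $\pi(S,v)$ equals $\sum_i w(\pi_{(u_{i-1},u_i)}) + \sum_{i:\text{core}} w(\pi_{u_i})$. For each edge type I would use: for $e \in E(G)$, $w(\pi_e) = w(e) = \wtil(e)/(1+\eps)$-ish (using $\wtil(e) = \ceiling{w(e)}_{\eps d}$, so $w(e) \le \wtil(e)$); for $e = (u,u') \in E(H_{\cC})$ with $u$ the core vertex of $C$ and $u'$ regular, $w(\pi_e) = w(\pi(C,u')) \le (1+\eps)\dist_G(C,u') \le (1+\eps)\dtil^C(u')$ and $\wtil(e) = \ceiling{\stretch d_{\Clevel(C)} + \dtil^C(u')}_{\eps d}$, so $w(\pi_e) \le \wtil(e)$ after accounting for the $\stretch d_{\Clevel(C)}$ surplus; and for the core substitution $\pi_{u_i} = \pi(u_i',u_i'')$, $w(\pi_{u_i}) \le \stretch d_{\Clevel(C)}$ by the $\Core^{\pi}$ guarantee, which is exactly the surplus sitting inside each of the two $H_{\cC}$-edges incident to $u_i$ on $\Ptil_v$. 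Pairing each core-subpath cost against the $\stretch d_{\Clevel(C)}$ terms in its neighboring emulator-edge weights, the total length telescopes to at most $\sum_i \wtil_t(u_{i-1},u_i) = \dtil^{\MES}(v) = \dtil(v) \le (1+50\eps)\dist_G(S,v)$ by \Cref{lem:MES basic}(\ref{enu:MES:not stretched})/the fact that $v$'s MES estimate is realized by the tree path, and then \Cref{lem:dtil path correct}. Actually, to be careful, I would only claim $\le (1+50\eps)^2\dist_G(S,v)$ to absorb the extra $(1+\eps)$ rounding slack in each subpath, matching the lemma statement.

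\textbf{The main obstacle.} The delicate part is the accounting in case 2's deep branch: showing that the per-core surplus $\stretch d_{\Clevel(C)}$ is \emph{not double-counted} and is fully absorbed by the emulator edge weights along $\Ptil_v$. Each core vertex $u_i$ on the MES path has exactly two incident tree edges $(u_{i-1},u_i)$ and $(u_i,u_{i+1})$, both of which are $H_{\cC}$-edges carrying a $+\stretch d_{\Clevel(C)}$ term in their $\wtil$-weight; I only need \emph{one} copy of $\stretch d_{\Clevel(C)}$ to pay for $\pi_{u_i}$, so there is in fact slack to spare — but I must make sure that when two \emph{consecutive} vertices on $\Ptil_v$ are both core vertices this cannot happen, since core vertices form an independent set in $\Htil$ by \Cref{prop:Htil vertex}(\ref{enu:Htil:vertex:core}), so core and regular vertices alternate along $\Ptil_v$ except possibly at the source, and the bookkeeping is clean. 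The second subtlety is that $\Ptil_v$ might pass through the dummy source's near-edge $(s,u_1)$, in which case $\pi_{(u_0,u_1)} = \pi^{\near}(S,u_1)$ and its length is $\le (1+\eps)\dist_G(S,u_1) \le \ceiling{\dtil^{\near}(u_1)}_{\eps d} \cdot (1+O(\eps)) = \wtil(s,u_1)\cdot(1+O(\eps))$, which again fits under the $\wtil$-budget; I would handle this as a boundary case at the start of the telescoping sum. Everything else is routine arithmetic with the $\ceiling{\cdot}_{\eps d}$ rounding and the fixed choice $\eps \le 1/500$, so I do not anticipate further difficulty.
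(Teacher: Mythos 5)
Your proposal follows essentially the same route as the paper's proof: the two easy branches are handled directly by the guarantees of $\Apxball^{\pi}(G^{\peel},\cdot)$ and $\Apxball^{\pi}(G,S,2(\stretch/\eps)^{k}d,\cdot)$, and the deep branch is bounded by comparing the unfolded path to the emulator weights along the MES tree path (absorbing the core subpaths and rounding into the $\stretch\cdot d_{\Clevel(C)}$ surplus and a $(1+\eps)$ factor, i.e.\ $w(\pi(S,v))\le(1+\eps)\sum_{e\in\Ptil_{v}}\wtil(e)$), exactly as in the paper. One small correction to your last step: the telescoping only gives $\sum_{e\in\Ptil_{v}}\wtil(e)\le\dtil^{\MES}(v)$ (not equality), and since $\dtil(v)=\min\{\dtil^{\near}(v),\dtil^{\MES}(v)\}$ need not equal $\dtil^{\MES}(v)$, you should bound $\dtil^{\MES}(v)\le(1+50\eps)\dist_{G}(S,v)$ by invoking \Cref{lem:MES:approx:upper} directly, using that $(s,v)\notin E(\Htil)$ forces $\dist_{G}(S,v)>2(\stretch/\eps)^{k}d$ so the maximum in $\distInduc$ resolves to the $(1+50\eps)$ term, rather than citing \Cref{lem:dtil path correct} for $\dtil(v)$.
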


\begin{proof}
$\pi(S,v)$ is indeed an $S$-$v$ path in $G$ because the subpaths
of $\pi(S,v)$ are ordered and concatenated at \Cref{enu:assign path general}
such that their endpoints meet, and one endpoint of $\pi(S,v)$ is
$v$ and another is in $S$. Below, we only need to bound the total
weight $w(\pi(S,v))$ of the path $\pi(S,v)$. 

If $\dtil^{\peel}(v)\le(1+50\eps)^{3}\dtil(v)$, then $\pi(S,v)\gets\pi^{\peel}(S,v)$
is assigned at \Cref{enu:assign path next}. Therefore, we have
\begin{align*}
w(\pi(S,v)) & \le(1+\eps^{\peel})\dist_{G^{\peel}}(S,v) & \text{by }\Apxball^{\pi}(G^{\peel},S,D,\eps^{\peel},8\beta\Delta)\\
 & \le(1+\eps^{\peel})\dtil^{\peel}(v) & \text{by }\Apxball(G^{\peel},S,D,\eps)\\
 & \le(1+\eps^{\peel})(1+50\eps)^{3}\dtil(v) & \text{by \Cref{enu:condition path}}\\
 & \le(1+\eps^{\peel})(1+50\eps)^{4}\dist_{G}(S,v) & \text{by \Cref{lem:dtil path correct}}\\
 & \le(1+300\eps+\eps^{\peel})\dist_{G}(S,v).
\end{align*}

Next, if $\dtil^{\peel}(v)>(1+50\eps)^{3}\dtil(v)$, then we have
two cases. Suppose $\pi(S,v)\gets\pi^{\near}(S,v)$ is assigned at
\Cref{enu:assign path near}. Then, $w(\pi(S,v))=w(\pi^{\near}(S,v))\le(1+\eps)\dist_{G}(S,v)$
by the guarantee of $\Apxball^{\pi}(G,S,2(\frac{\stretch}{\eps})^{k}d,\eps,\beta)$.
Otherwise, $\pi(S,v)$ must be assigned at \Cref{enu:assign path general}.
Recall that $\wtil(e)$ denotes the weight of $e$ in the emulator
$\Htil$. It suffices to show that $w(\pi(S,v))\le(1+\eps)\cdot\sum_{e\in\Ptil_{v}}\wtil(e)$
and $\sum_{e\in\Ptil_{v}}\wtil(e)\le(1+50\eps)\dist_{G}(S,v)$ because
they imply that $w(\pi(S,v))\le(1+50\eps)^{2}\dist_{G}(S,v)$. Below,
we prove each inequality one by one. 

To prove $w(\pi(S,v))\le(1+\eps)\cdot\sum_{e\in\Ptil_{v}}\wtil(e)$,
observe that $\pi(S,v)$ is a concatenation of subpaths of the following
three types: (1) $\pi_{e}$ where $e\in E(G)$, (2) $\pi_{(u_{i-1},u_{i})}\circ\pi_{u_{i}}\circ\pi_{(u_{i},u_{i+1})}$
where $u_{i}$ corresponds to a core $C$ and $(u_{i-1},u_{i}),(u_{i},u_{i+1})\in E(H_{\cC})$,
and (3) $\pi_{(s,u_{1})}$ where $s$ is the dummy source $s$. For
a type-1 subpath, we have that $w(\pi_{e})=w(e)\le\ceiling{w(e)}_{\eps d}=\wtil(e)$
by \Cref{def:Htil} of $\Htil$. For a type-2 subpath, we have
\begin{align*}
 & w(\pi_{(u_{i-1},u_{i})}\circ\pi_{u_{i}}\circ\pi_{(u_{i},u_{i+1})})\\
 & \le(1+\eps)\dist_{G}(u_{i-1},C)+\stretch\cdot d_{\ell_{\core}(C)}+(1+\eps)\dist_{G}(C,u_{i+1})\\
 & \le(1+\eps)\cdot(\wtil(u_{i-1},u_{i})+\wtil(u_{i},u_{i+1}))
\end{align*}
where the first inequality is by the guarantee of $\Apxball^{\pi}$
and $\Core^{\pi}$ with stretch $\stretch$ that maintain $\shell(C)$
and $C$, respectively, and the second inequality follows from weight
assignment of edges in the covering-compressed graph $H_{\cC}$, see \Cref{def:core compress}.
For a type-3 subpath, $\Apxball^{\pi}(G,S,2(\frac{\stretch}{\eps})^{k}d,\eps,\beta)$
guarantees that 
\[
w(\pi_{(s,u_{1})})\le(1+\eps)\dist_{G}(S,u_{1})\le(1+\eps)\ceiling{d^{\near}(S,u_{1})}_{\eps d}=(1+\eps)\cdot\wtil(s,u_{1})
\]
where the equality is by \Cref{def:Htil} of $\Htil$. Observe that
each term in $\sum_{e\in\Ptil_{v}}\wtil(e)$ is charged only once
by each subpath of $\pi(S,v)$. Therefore, we indeed have $w(\pi(S,v))\le(1+\eps)\cdot\sum_{e\in\Ptil_{v}}\wtil(e)$.

To prove $\sum_{e\in\Ptil_{v}}\wtil(e)\le(1+50\eps)\dist_{G}(S,v)$,
observe that $\sum_{e\in\Ptil_{v}}\wtil(e)\le\dtil^{\MES}(v)$ by
\Cref{lem:MES:approx:lower}(\ref{lem:MES:approx:lower:regular}).
On the other hand, \Cref{lem:MES:approx:upper}(\ref{lem:MES:approx:upper:regular})
says that $\dtil^{\MES}(v)\le\max\{\ceiling{(1+\eps)\dist_{G}(S,v)}_{\eps d},(1+50\eps)\dist_{G}(S,v)\}=(1+50\eps)\dist_{G}(S,v)$.
The equality is because $\dist_{G}(S,v)\ge2(\frac{\stretch}{\eps})^{k}d\ge d$,
which holds because $(s,v)\notin E(\Htil)$, i.e.~$v\notin\Apxball^{\pi}(G,S,2(\frac{\stretch}{\eps})^{k}d,\eps,\beta)$.
\end{proof}
\begin{lem}
\label{lem:path simple}For every $v\in\ball_{G}(S,d)$, the path
$\pi(S,v)$ is $(8\beta\Delta)$-simple.
\end{lem}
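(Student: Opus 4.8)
The goal is to bound the number of times any vertex can appear in the path $\pi(S,v)$ defined by \Cref{alg:define ball_path}. The plan is to trace through the three cases in which \Cref{alg:define ball_path} can assign a path and bound the simpleness of each. In the first case, $\pi(S,v) = \pi^{\peel}(S,v)$, which is $(8\beta\Delta)$-simple by the guarantee of $\Apxball^{\pi}(G^{\peel},S,D,\eps^{\peel},8\beta\Delta)$; this case is immediate. In the second case, $\pi(S,v) = \pi^{\near}(S,v)$, which is $\beta$-simple by the guarantee of $\Apxball^{\pi}(G,S,2(\frac{\stretch}{\eps})^{k}d,\eps,\beta)$, and $\beta \leq 8\beta\Delta$ since $\Delta \geq 1$. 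So the real work is the third case, where $\pi(S,v)$ is the concatenation of subpaths along the MES tree path $\Ptil_v = (s = u_0, u_1, \dots, u_z = v)$.

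For the third case, first I would observe that the MES tree path $\Ptil_v$ is itself a simple path in $\Htil$, so each vertex of $\Htil$ (regular or core) appears at most once on $\Ptil_v$. Then I need to count, for a fixed vertex $x \in V(G)$, how many of the constituent subpaths $\pi_e$ (for $e \in \Ptil_v$) and $\pi_{u_i}$ (for core vertices $u_i$ on $\Ptil_v$) can contain $x$, and how many times within each. The constituent subpaths are of three kinds: (i) a single edge $\pi_e = \{e\}$ for $e \in E(G)$ — contributes at most $O(1)$ copies of $x$ since $G$ has bounded degree and $\Ptil_v$ is simple; (ii) a $\beta$-simple path $\pi^{\near}(S,u_1)$ or $\pi(C,u')$ from an $\Apxball^{\pi}$ maintaining a shell, each contributing at most $\beta$ copies of $x$; (iii) a $\beta$-simple path $\pi(u_i', u_i'')$ from a $\Core^{\pi}$ corresponding to a core, again contributing at most $\beta$ copies. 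The key quantitative point is that $x$ lies in $\shell(C)$ (equivalently is incident to the core vertex of $C$ in $H_{\cC}$) for at most $\Delta$ cores $C$ over the whole update sequence, by \Cref{def:Covering}; and since $\Ptil_v$ is simple, at most $\Delta$ core vertices on $\Ptil_v$ can have $x$ appearing in their associated shell-paths or core-paths. Each such core contributes at most $O(\beta)$ copies of $x$ (a constant number of shell-subpaths $\pi_{(u_{i-1},u_i)}, \pi_{(u_i,u_{i+1})}$ plus the core-subpath $\pi_{u_i}$, each $\beta$-simple). Summing over the at most $\Delta$ relevant cores plus the $O(1)$ contribution from $G$-edges plus the $\beta$ contribution from the possible type-3 subpath $\pi^{\near}(S,u_1)$, I get a bound of the form $O(\beta\Delta)$ on the number of appearances of $x$, and one verifies the constant is at most $8$.

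The main obstacle will be getting the constant factors to work out to exactly $8\beta\Delta$ rather than some larger $O(\beta\Delta)$, and in particular being careful that a single $\Htil$-edge $e = (u,u')$ with $u$ a core vertex for $C$ contributes its shell-subpath $\pi(C,u')$ to potentially \emph{two} adjacent terms in the concatenation (once as $\pi_{(u_{i-1},u_i)}$ and once as $\pi_{(u_i,u_{i+1})}$ when reindexed around $u_i$), so each core is charged by at most three $\beta$-simple subpaths. I would also double-check that the bounded degree of $G$ bounds the number of type-(i) single-edge subpaths through $x$: $x$ is an endpoint of at most $3$ edges of $G$, and each appears at most once on the simple path $\Ptil_v$, contributing at most $6$ copies of $x$ total; this is dominated by the core contribution. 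Combining, every vertex $x$ appears in $\pi(S,v)$ at most $3\beta \cdot \Delta + \beta + O(1) \leq 8\beta\Delta$ times, which gives the claim. A clean way to organize the bookkeeping is to note that every subpath appearing in the concatenation is either one of the $O(1)$ single $G$-edges on $\Ptil_v$, or one of at most $3$ $\beta$-simple subpaths associated to one of the core vertices on $\Ptil_v$ (of which $x$ touches at most $\Delta$), or the single $\beta$-simple $\pi^{\near}$ subpath, so the total multiplicity of $x$ is at most $6 + 3\beta\Delta + \beta \leq 8\beta\Delta$.
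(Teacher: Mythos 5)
Your decomposition is the same as the paper's (handle the $\pi^{\peel}$ and $\pi^{\near}$ cases directly, then for the concatenated case show each constituent subpath is $\beta$-simple and bound how many subpaths a fixed vertex can touch), but the justification of the key counting step has a genuine gap. You charge the three subpaths $\pi_{(u_{i-1},u_i)},\pi_{u_i},\pi_{(u_i,u_{i+1})}$ associated with a core $C$ to the event ``$x\in\shell(C)$'', asserting that any vertex appearing on these subpaths lies in $\shell(C)$. That is not justified: $\pi(C,u')$ is only a $(1+\eps)$-approximate path maintained by $\Apxball^{\pi}(G,C,\frac{\stretch}{4\eps}d_{\Clevel(C)},\eps,\beta)$, so its intermediate vertices can be at distance up to about $(1+\eps)\frac{\stretch}{4\eps}d_{\Clevel(C)}$ from $C$, whereas $\shell(C)$ is only guaranteed to contain $\ball_{G}(C,\frac{\stretch}{4\eps}d_{\Clevel(C)})$ (\Cref{prop:ApxBall}); a vertex in the annulus between these radii can lie on the subpath yet outside $\shell(C)$. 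Consequently, the number of cores whose subpaths contain $x$ is \emph{not} bounded by the number of shells containing $x$, and your ``at most $\Delta$ relevant cores'' step does not follow as stated. Note also that $\Delta$ in \Cref{def:Covering} is defined as the \emph{outer-shell} participation bound, not the shell participation bound.

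This is precisely the hole the paper closes with \Cref{claim:in outershell}: each of the three subpaths has an endpoint in $C$ and length at most $(1+\eps)\frac{\stretch}{4\eps}d_{\Clevel(C)}$ (resp.\ $\stretch\cdot d_{\Clevel(C)}$ for the core subpath), so every vertex on them lies in $\oshell(C)=\ball_{G}(C,\frac{\stretch}{3\eps}d_{\Clevel(C)})$, and the covering bounds outer-shell participation by $\Delta$ — indeed the paper remarks this is the only reason outer-shells are introduced. Once you replace ``shell'' by ``outer-shell'', your bookkeeping coincides with the paper's: at most $3\Delta$ subpaths from shell/core queries, plus one $\pi^{\near}$ subpath and at most two single-edge subpaths (since $\Ptil_{v}$ is simple, $x$ lies on at most two of its $G$-edges, so your count of six copies there is an overcount, though harmless), for at most $3\Delta+3\le 8\Delta$ subpaths, each $\beta$-simple, giving the $8\beta\Delta$ bound.
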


\begin{proof}
First, note that if we set $\pi(S,v)=\pi^{\peel}(S,v)$ at \Cref{enu:assign path next}
or $\pi(S,v)=\pi^{\near}(S,v)$ at \Cref{enu:assign path near}, then
$\pi(S,v)$ is $(8\beta\Delta)$-simple by the definition of $\Apxball^{\pi}(G^{\peel},S,D,\eps^{\peel},8\beta\Delta)$
and $\Apxball^{\pi}(G,S,2(\frac{\stretch}{\eps})^{k}d,\eps,\beta)$.
Now, suppose that $\pi(S,v)$ is assigned at \Cref{enu:assign path general}.
We claim two things. First, each subpath that was concatenated into
$\pi(S,v)$ is a $\beta$-simple path. Second, every vertex $u$ can
participate in at most $8\Delta$ such subpaths of $\pi(S,v)$. This
would imply that $\pi(S,v)$ is $(8\beta\Delta)$-simple as desired.

To see the first claim, we consider the four cases of the subpath
of $\pi(S,v)$: First, from \Cref{enu:subpath near}, the subpath $\pi_{(s,u_{i})}$
is $\beta$-simple by the definition of $\Apxball^{\pi}(G,S,2(\frac{\stretch}{\eps})^{k}d,\eps,\beta)$.
Second, from \Cref{enu:subpath edge}, the subpath $\pi_{e}=\{e\}$
where $e\in\Ptil_{v}\cap E(G)$ is clearly $1$-simple. Third and
forth, from \Cref{enu:subpath shell} and \Cref{enu:subpath core},
the subpaths $\pi_{e}$ and $\pi_{u_{i}}$ are $\beta$-simple because
of the simpleness parameter of the covering $\cC$

To see the second claim, consider any vertex $u\in V(G)$. Clearly,
$u$ can participate in at most 1 subpath from \Cref{enu:subpath near}
as $\pi_{(s,u_{i})}$ is the only path generated from this step. Next,
$u$ can participate in at most $2$ subpaths from \Cref{enu:subpath edge}
because $\Ptil_{v}$ is a simple path in $\Htil$ and thus $u$ can
be in at most $2$ edges from $\Ptil_{v}\cap E(G)$. The last case
counts the subpaths from both \Cref{enu:subpath shell} and \Cref{enu:subpath core}.
For any $u_{i}\in V(\Htil)$ corresponding to a core $C$, if $u$
appears in any path from $\pi_{(u_{i-1},u_{i})},\pi_{u_{i}},\pi_{(u_{i},u_{i+1})}$,
then we claim $u\in\oshell(C)$. But $u$ can be in at most $\Delta$
outer-shells by \Cref{def:Covering}. Hence, $u$ can appear in at
most $3\Delta$ subpaths from \Cref{enu:subpath shell} and \Cref{enu:subpath core}.
In total, $u$ appears in at most $3\Delta+3\le8\Delta$ subpaths
of $\pi(S,v)$. The claim below finishes the proof:
\begin{claim}
\label{claim:in outershell}If $u$ appears in $\pi_{(u_{i-1},u_{i})},\pi_{u_{i}}$
or $\pi_{(u_{i},u_{i+1})}$, then $u\in\oshell(C)$.
\end{claim}

\begin{proof}
According to \Cref{def:Covering} and \Cref{def:Covering path}, the
paths $\pi_{(u_{i-1},u_{i})}$ and $\pi_{(u_{i},u_{i+1})}$ have length
at most $(1+\eps)\cdot\frac{\stretch}{4\eps}d_{\ell_{\core}(C)}$,
and the path $\pi_{u_{i}}$ has length at most $\stretch\cdot d_{\ell_{\core}(C)}$.
As each of these paths has an endpoint in $C$, so $u\in\ball_{G}(C,(1+\eps)\cdot\frac{\stretch}{4\eps}d_{\ell_{\core}(C)})\subseteq\oshell(C)$.\footnote{This inclusion is actually the only reason we introduce the notion
of outer-shell. If we could argue that $u\in\ball_{G}(C,\frac{\stretch}{4\eps}d_{\ell_{\core}(C)})$,
then we would have concluded $u\in\shell(C)$. We do not need $\oshell(C)$
else where.}
\end{proof}
\end{proof}
To conclude, from \Cref{lem:path approx} and \Cref{lem:path simple},
for each $v\in\ball_{G}(S,D)$, $\pi(S,v)$ is indeed a $(8\beta\Delta)$-simple
$(1+300\eps+\eps^{\peel})$-approximate $S$-$v$ shortest path in
$G$ as required by $\Apxball(G,S,D,300\eps+\eps^{\peel},8\beta\Delta)$.

\subsection{Threshold-Subpath Queries}

\label{sec:ball_path:query_subpath}

In this section, we describe in \Cref{alg:return ball_path} below
how to process the threshold-subpath query that, given a vertex $v\in\ball_{G}(S,D)$
and a steadiness index $j$, returns $\sigma_{\le j}(\pi(S,v))$ consisting
of all edges of $\pi(S,v)$ with steadiness at most $j$.

\begin{algorithm}
\lIf{$j<\sigma_{\min}$}{\Return $\emptyset$. \label{enu:return empty}}
$\answer_{(v,j)} \gets \emptyset$.\;

\If{$\dtil^{\peel}(v)\le(1+50\eps)^{3}\cdot\dtil(v)$}{
    \label{enu:return path next}\Return $\sigma_{\le j}(\pi^{\peel}(S,v))$
by querying $\Apxball^{\pi}(G^{\peel},S,D,\eps^{\peel},8\beta\Delta)$.}

\If(\label{enu:return path near}){$(s,v)\in E(\Htil)$}{\Return $\sigma_{\le j}(\pi^{\near}(S,v))$ by querying $\Apxball^{\pi}(G,S,2(\frac{\stretch}{\eps})^{k}d,\eps,\beta)$.}

Let $\Ptil_{v}=(s=u_{0},u_{1}\dots,u_{z}=v)$ be the unique $s$-$v$
path in the MES tree $\Ttil$.\;

\ForEach{$e\in\Ptil_{v}$}{
    \If{$e=(s,u)$}{ $\answer_{(v,j)}\gets\answer_{(v,j)}\cup\sigma_{\le j}(\pi^{\near}(S,u))$
by querying $\Apxball^{\pi}(G,S,2(\frac{\stretch}{\eps})^{k}d,\eps,\beta)$.}
    \lIf{ $e\in E(G)$}{ $\answer_{(v,j)}\gets\answer_{(v,j)}\cup\sigma_{\le j}(\{e\})$.}
    \If{ $e\in E(H_{\cC})$ where $e=(u,u')$ and $u$ corresponds to a
core $C$ and $u'$ is a regular vertex}
    { $\answer_{(v,j)}\gets\answer_{(v,j)}\cup\sigma_{\le j}(\pi(C,u'))$
by querying $\Apxball^{\pi}(G,C,\frac{\stretch}{4\eps}d_{\Clevel(C)},\eps,\beta)$.}
}

\ForEach{$u_{i}\in\Ptil_{v}$ where $u_{i}$ corresponds to a core $C$}{ 
    Let $u'_{i},u''_{i}\in C$ be such that $\pi_{(u_{i-1},u_{i})}=(u_{i-1},\dots,u'_{i})$
    and $\pi_{(u_{i},u_{i+1})}=(u'_{i},\dots,u_{i+1})$. \;
     $\answer_{(v,j)}\gets\answer_{(v,j)}\cup\sigma_{\le j}(\pi(u'_{i},u''_{i}))$
    by querying $\Core^{\pi}(G,\Cinit,d_{\ell_{\core}(C)},\beta)$.\;
}

\label{enu:return path general}\Return  $\answer_{(v,j)}$
\caption{\label{alg:return ball_path}Returning $\sigma_{\le j}(\pi(S,v))$,
given $v\in\protect\ball_{G}(S,D)$ and a steadiness index $j$}
\end{algorithm}

We first observe that \Cref{alg:return ball_path} returns the correct
answer. This follows straightforwardly because all the steps of \Cref{alg:return ball_path}
are analogous to the ones in \Cref{alg:define ball_path}
except that we just return $\emptyset$ if we first find that $j<\sigma_{\min}$.
\begin{prop}
Given $v\in\ball_{G}(S,D)$ and a steadiness index $j$, \Cref{alg:return ball_path}
returns $\sigma_{\le j}(\pi(S,v))$ where $\pi(S,v)$ is defined in
\Cref{alg:define ball_path}.
\end{prop}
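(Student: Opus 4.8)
The plan is to show that \Cref{alg:return ball_path} faithfully mirrors \Cref{alg:define ball_path}, step by step, so that the multi-set it accumulates is exactly $\sigma_{\le j}(\pi(S,v))$. The only structural difference between the two algorithms is the early exit at \Cref{enu:return empty}: if $j<\sigma_{\min}$ then every edge of $G$ (and hence every edge on any path we could return) has steadiness strictly greater than $j$, so $\sigma_{\le j}(\pi(S,v))=\emptyset$, and returning $\emptyset$ is correct. So from now on we may assume $j\ge\sigma_{\min}$, and the two algorithms branch on the identical conditions in the identical order.

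First I would handle the two base cases. If $\dtil^{\peel}(v)\le(1+50\eps)^{3}\dtil(v)$, then \Cref{alg:define ball_path} sets $\pi(S,v)=\pi^{\peel}(S,v)$ and \Cref{alg:return ball_path} returns $\sigma_{\le j}(\pi^{\peel}(S,v))$ via a threshold-subpath query to $\Apxball^{\pi}(G^{\peel},S,D,\eps^{\peel},8\beta\Delta)$; since the path $\pi^{\peel}(S,v)$ implicitly maintained by that data structure is uniquely determined by the current stage (independent of $j$, by \Cref{def:Apxball path}), this is exactly $\sigma_{\le j}(\pi(S,v))$. The case $(s,v)\in E(\Htil)$ is analogous using $\Apxball^{\pi}(G,S,2(\frac{\stretch}{\eps})^{k}d,\eps,\beta)$ and $\pi^{\near}(S,v)$.

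The main case is when neither base condition triggers, so $\pi(S,v)$ is assembled at \Cref{enu:assign path general} from the MES-tree path $\Ptil_{v}=(s=u_{0},\dots,u_{z}=v)$. Here the key point is that both algorithms use \emph{the same} $\Ptil_{v}$ (the unique $s$-$v$ path in $\Ttil$ at the current stage) and the same per-edge and per-core subpath assignments $\pi_{e}$, $\pi_{u_i}$, each of which is implicitly maintained by the corresponding $\Apxball^{\pi}$ or $\Core^{\pi}$ instance and is independent of $j$. Since $\sigma_{\le j}(\cdot)$ distributes over concatenation as a multi-set union, $\sigma_{\le j}(\pi(S,v))$ equals the multi-set union of $\sigma_{\le j}(\pi_{e})$ over all $e\in\Ptil_v$ together with $\sigma_{\le j}(\pi_{u_i})$ over all core vertices $u_i$ — which is precisely what \Cref{alg:return ball_path} accumulates into $\answer_{(v,j)}$, obtaining each piece by a threshold-subpath query to the same subroutine that \Cref{alg:define ball_path} used to \emph{define} the piece (for $e\in E(G)$ the piece $\sigma_{\le j}(\{e\})$ is computed directly). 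One routine check is that the $u'_i,u''_i$ selected in \Cref{alg:return ball_path} match those in \Cref{alg:define ball_path}: they are determined by the already-fixed subpaths $\pi_{(u_{i-1},u_i)}$ and $\pi_{(u_i,u_{i+1})}$, which are identical in both algorithms, so the endpoints fed to $\Core^{\pi}(G,\Cinit,d_{\ell_{\core}(C)},\beta)$ agree.

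I do not anticipate a genuine obstacle here — the proposition is essentially a bookkeeping statement that the query algorithm and the definition algorithm are the same computation up to the $j<\sigma_{\min}$ shortcut. The only subtlety worth stating carefully is \emph{determinism and stage-consistency}: each subroutine's implicitly maintained path is a deterministic function of the current stage, so issuing a threshold-subpath query at query time returns edges of exactly the path that \Cref{alg:define ball_path} would have named at that same stage; this is exactly the independence-from-$j$ guarantee built into \Cref{def:Apxball path} and \Cref{def:Core path}, and it is what makes the case-by-case correspondence go through. (The running-time bound implicit in the query-time overhead is not part of this proposition and is handled separately.)
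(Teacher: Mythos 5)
Your proposal is correct and follows essentially the same route as the paper's proof: a case analysis over the four return points of \Cref{alg:return ball_path}, using that the implicitly maintained subpaths are fixed independently of $j$ and that $\sigma_{\le j}(\cdot)$ distributes over the concatenation as a multi-set union. No gaps.
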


\begin{proof}
There are four steps that \Cref{alg:return ball_path} may return.
At \Cref{enu:return empty}, we have $\sigma_{\le j}(\pi(S,v))=\emptyset$
as $j<\sigma_{\min}$. At \Cref{enu:return path next}, we have $\sigma_{\le j}(\pi^{\peel}(S,v))=\sigma_{\le j}(\pi(S,v))$
by \Cref{enu:assign path next} of \Cref{alg:define ball_path}. At
\Cref{enu:return path near}, we have $\sigma_{\le j}(\pi^{\near}(S,v))=\sigma_{\le j}(\pi(S,v))$
by \Cref{enu:assign path near} of \Cref{alg:define ball_path}. Finally,
at \Cref{enu:return path general}, observe that $\answer_{(v,j)}$
is simply a multi-set union of all edges of steadiness at most $j$
from all subpaths from $\pi(S,v)$ defined in \Cref{alg:define ball_path}.
So $\answer_{(v,j)}=\sigma_{\le j}(\pi(S,v))$ as well.
\end{proof}
The following simple observation will help us bound the query time.
\begin{prop}
\label{prop:path nonempty}If $\dtil^{\peel}(v)>(1+50\eps)^{3}\dtil(v)$
and $j\ge\sigma_{\min}$, then $\sigma_{\le j}(\pi(S,v))\neq\emptyset$.
\end{prop}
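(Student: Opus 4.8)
The statement to prove is \Cref{prop:path nonempty}: if $\dtil^{\peel}(v)>(1+50\eps)^{3}\dtil(v)$ and $j\ge\sigma_{\min}$, then $\sigma_{\le j}(\pi(S,v))\neq\emptyset$. The key idea is that the hypothesis $\dtil^{\peel}(v)>(1+50\eps)^{3}\dtil(v)$ is precisely the failure of the condition in \Cref{enu:condition path} of \Cref{alg:define ball_path}, and it intuitively says that the shortest $S$-$v$ path in $G^{\peel}$ (which avoids all edges of steadiness $\sigma_{\min}$) is substantially longer than the best path $\pi(S,v)$ we actually produce. Hence $\pi(S,v)$ \emph{must} use at least one edge of steadiness $\sigma_{\min}$, and since $j\ge\sigma_{\min}$, that edge lies in $\sigma_{\le j}(\pi(S,v))$.

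\textbf{First step.} I would argue by contradiction: suppose $\sigma_{\le j}(\pi(S,v))=\emptyset$. Since $j\ge\sigma_{\min}$ and $\sigma_{\min}$ is the minimum steadiness in $G$, having no edge of steadiness $\le j$ on $\pi(S,v)$ forces in particular that $\pi(S,v)$ contains no edge of steadiness exactly $\sigma_{\min}$. Therefore every edge of $\pi(S,v)$ survives in $G^{\peel}=\sigma_{>\sigma_{\min}}(G)$, so $\pi(S,v)$ is in fact an $S$-$v$ path lying entirely inside $G^{\peel}$. This is the crux of the whole argument and is essentially immediate from the definitions.

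\textbf{Second step: chase the approximation inequalities.} Because $\dtil^{\peel}(v)>(1+50\eps)^{3}\dtil(v)$, the algorithm did \emph{not} return at \Cref{enu:assign path next}; it returned either at \Cref{enu:assign path near} or at \Cref{enu:assign path general}. In either case, by \Cref{lem:path approx}(\ref{enu:path approx:here}), $\pi(S,v)$ is a $(1+50\eps)^{2}$-approximate $S$-$v$ shortest path in $G$, so $w(\pi(S,v))\le(1+50\eps)^{2}\dist_{G}(S,v)\le(1+50\eps)^{2}\dtil(v)$, where the last inequality uses $\dist_G(S,v)\le\dtil(v)$ from \Cref{lem:dtil path correct}. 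On the other hand, since (under our assumption) $\pi(S,v)$ is a path in $G^{\peel}$, its weight is an upper bound on $\dist_{G^{\peel}}(S,v)$, and by the guarantee of the distance-only $\Apxball(G^{\peel},S,D,\eps)$ we have $\dtil^{\peel}(v)\le(1+\eps)\dist_{G^{\peel}}(S,v)$ provided $v\in\ball_{G^{\peel}}(S,D)$; this last containment holds because $v\in\ball_G(S,D)$ and $\pi(S,v)$ witnesses $\dist_{G^{\peel}}(S,v)\le w(\pi(S,v))\le(1+50\eps)^{2}\dist_G(S,v)$, which combined with $\dist_G(S,v)\le D$ and $\eps\le1/500$ stays within $D$ after absorbing the $(1+50\eps)^2$ factor — or, more simply, one can observe $\dtil^{\peel}(v)$ is finite exactly because such a $G^{\peel}$-path exists. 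Stringing these together:
\[
\dtil^{\peel}(v)\le(1+\eps)\dist_{G^{\peel}}(S,v)\le(1+\eps)\,w(\pi(S,v))\le(1+\eps)(1+50\eps)^{2}\dtil(v)\le(1+50\eps)^{3}\dtil(v),
\]
which contradicts the hypothesis $\dtil^{\peel}(v)>(1+50\eps)^{3}\dtil(v)$. Hence $\sigma_{\le j}(\pi(S,v))\neq\emptyset$.

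\textbf{Main obstacle.} The only delicate point is making sure the distance-estimate guarantee for $\Apxball(G^{\peel},S,D,\eps)$ actually applies to $v$ — i.e. that $\dtil^{\peel}(v)$ is meaningfully bounded rather than $\infty$ — which requires knowing $v\in\ball_{G^{\peel}}(S,(1+\eps)D)$ or at least that $\dist_{G^{\peel}}(S,v)$ is within the scale where the $\Apxball$ overestimate property $\dtil^{\peel}(v)\le(1+\eps)\dist_{G^{\peel}}(S,v)$ is promised. Under the contradiction hypothesis we do have a concrete $G^{\peel}$-path of weight $\le(1+50\eps)^2\dist_G(S,v)$, so $\dist_{G^{\peel}}(S,v)$ is finite and comparable to $\dist_G(S,v)\le D$; since $\eps\le1/500$ the blow-up factor is tiny, and one should check it remains inside the distance bound $D$ that $\Apxball(G^{\peel},S,D,\eps)$ was instantiated with. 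This is a short calculation but it is the one place the proof needs care; everything else is a direct consequence of \Cref{lem:path approx} and \Cref{lem:dtil path correct}.
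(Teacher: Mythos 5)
Your proof is correct and follows essentially the same route as the paper: the paper argues directly that $\dist_{G^{\peel}}(S,v)>(1+50\eps)^{2}\dist_{G}(S,v)$ (combining the $\Apxball(G^{\peel},S,D,\eps)$ guarantee, the hypothesis, and \Cref{lem:dtil path correct}) and then invokes \Cref{lem:path approx}(\ref{enu:path approx:here}), while you run the identical chain of inequalities as a contradiction. The technicality you flag about $\dtil^{\peel}(v)$ being within the $\Apxball$ approximation regime is implicitly assumed in the paper's proof as well, so your extra care there is fine but not a departure.
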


\begin{proof}
First, observe that $\dist_{G^{\peel}}(S,v)>(1+50\eps)^{2}\cdot\dist_{G}(S,v)$
because 
\begin{align*}
\dist_{G^{\peel}}(S,v) & \ge\frac{1}{(1+\eps)}\cdot\dtil^{\peel}(v) & \text{by }\Apxball(G^{\peel},S,D,\eps)\\
 & >(1+50\eps)^{2}\cdot\dtil(v) & \text{by assumption}\\
 & \ge(1+50\eps)^{2}\cdot\dist_{G}(S,v) & \text{by \Cref{lem:dtil path correct}.}
\end{align*}
This implies that \emph{every} $(1+50\eps)^{2}$-approximate $S$-$v$
shortest path in $G$ must contains some edge with steadiness $\sigma_{\min}$.
By \Cref{lem:path approx}(\ref{enu:path approx:here}), $\pi(S,v)$
is such a $(1+50\eps)^{2}$-approximate shortest path. So $\sigma_{\le j}(\pi(S,v))\neq\emptyset$
as $j\ge\sigma_{\min}$.
\end{proof}
Finally, we bound the query time of the algorithm. Recall that $(q_{\phi},q_{\path})$
bounds the query-time overhead of both $\Apxball^{\pi}(G,S,2(\frac{\stretch}{\eps})^{k}d,\eps,\beta)$
and $(d,k,\eps,\stretch,\Delta,\beta)$-covering $\cC$, and $(q_{\phi}^{\peel},q_{\path}^{\peel})$
bounds the query-time overhead of $\Apxball^{\pi}(G^{\peel},S,D,\eps^{\peel},8\beta\Delta)$.
Below, we show that our algorithm has $(q_{\phi}^{\peel}+O(1),\max\{q_{\path}^{\peel}+O(1),q_{\path}+O(\frac{D}{\eps d})\cdot q_{\phi}\})$
query-time overhead as required by \Cref{thm:ball_path}.
\begin{lem}
Given any $v\in\ball_{G}(S,D)$ and $j$, \Cref{alg:return ball_path}
takes $q_{\phi}^{\peel}+O(1)$ time if $\sigma_{\le j}(\pi(S,v))=\emptyset$.
Otherwise, it takes $|\sigma_{\le j}(\pi(S,v))|\cdot\max\{q_{\path}^{\peel}+O(1),q_{\path}+O(\frac{D}{\eps d})\cdot q_{\phi}\}$
time.
\end{lem}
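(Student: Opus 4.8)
The plan is to analyze \Cref{alg:return ball_path} case by case, matching each return branch to the corresponding branch of \Cref{alg:define ball_path}, and charging the time either to a ``not-found'' overhead $q_\phi^\peel$ (resp.\ $q_\phi$) or to the cardinality of the returned multi-set. First I would dispense with the two easy branches. If $j<\sigma_{\min}$ we return $\emptyset$ in $O(1)$ time at \Cref{enu:return empty}; since $q_\phi^\peel+O(1)$ dominates $O(1)$, this fits the claimed bound regardless of whether $\sigma_{\le j}(\pi(S,v))$ is empty (it is empty in this case). If the condition $\dtil^{\peel}(v)\le(1+50\eps)^{3}\dtil(v)$ holds, the algorithm forwards the query to $\Apxball^{\pi}(G^{\peel},S,D,\eps^{\peel},8\beta\Delta)$ at \Cref{enu:return path next}; by definition of its query-time overhead $(q_\phi^\peel,q_\path^\peel)$, this costs $q_\phi^\peel$ if $\sigma_{\le j}(\pi^{\peel}(S,v))=\emptyset$ and $|\sigma_{\le j}(\pi^{\peel}(S,v))|\cdot q_\path^\peel$ otherwise, plus $O(1)$ for the condition checks; since $\pi(S,v)=\pi^{\peel}(S,v)$ here (by \Cref{enu:assign path next} of \Cref{alg:define ball_path}), this is exactly $q_\phi^\peel+O(1)$ or $|\sigma_{\le j}(\pi(S,v))|\cdot(q_\path^\peel+O(1))$.

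Next I would handle the $(s,v)\in E(\Htil)$ branch at \Cref{enu:return path near}: forwarding to $\Apxball^{\pi}(G,S,2(\stretch/\eps)^{k}d,\eps,\beta)$ costs $q_\phi$ if empty and $|\sigma_{\le j}(\pi^{\near}(S,v))|\cdot q_\path$ otherwise, and $\pi(S,v)=\pi^{\near}(S,v)$ by \Cref{enu:assign path near} of \Cref{alg:define ball_path}, so this is absorbed into the claimed bounds (noting $q_\phi\le q_\phi^\peel+O(1)$ is \emph{not} assumed, so the ``empty'' subcase here actually cannot occur: by \Cref{prop:path nonempty}, if we reach this branch then $\dtil^{\peel}(v)>(1+50\eps)^3\dtil(v)$ and $j\ge\sigma_{\min}$, hence $\sigma_{\le j}(\pi(S,v))\neq\emptyset$, so we are always in the ``nonempty'' regime and pay $|\sigma_{\le j}(\pi(S,v))|\cdot q_\path$, which is within the bound). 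The key structural observation, which I would state up front, is that past \Cref{enu:return path near} we have by \Cref{prop:path nonempty} that $\sigma_{\le j}(\pi(S,v))\ne\emptyset$ whenever $j\ge\sigma_{\min}$, so the only ``empty'' outputs come from the first three guarded returns and are already accounted for.

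For the general branch at \Cref{enu:return path general}, I would bound the cost of iterating over $\Ptil_v=(u_0,\dots,u_z)$. The $s$-$v$ path $\Ptil_v$ in the MES tree $\Ttil$ has hop-length $z=O(D/(\eps d))$, because every edge of $\Htil$ has weight at least $\ceiling{\eps d}$ by \Cref{prop:Htil weight}(\ref{enu:Htil:weight:multiple}) and $\dtil^{\MES}(v)\le 2D$. For each edge $e\in\Ptil_v$ we make one query to either $\Apxball^{\pi}$ (on $G$ or on a core's shell) or $\Core^{\pi}$, each with query-time overhead $(q_\phi,q_\path)$ by hypothesis; a query returning a nonempty multi-set of size $s_e$ costs $s_e\cdot q_\path$, and a query returning $\emptyset$ costs $q_\phi$. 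There are at most $z+1=O(D/(\eps d))$ such sub-queries (the core-vertex loop adds at most one more query per core vertex on $\Ptil_v$, still $O(z)$ total), so the total cost is at most $\left(\sum_e s_e\right)\cdot q_\path + O(D/(\eps d))\cdot q_\phi = |\sigma_{\le j}(\pi(S,v))|\cdot q_\path + O(D/(\eps d))\cdot q_\phi$, where I used that the sub-multi-sets $\sigma_{\le j}(\pi_e)$, $\sigma_{\le j}(\pi_{u_i})$ partition $\sigma_{\le j}(\pi(S,v))$ (this is exactly the decomposition in \Cref{alg:define ball_path} lines \ref{enu:path edge}--\ref{enu:assign path general}, and it is shown there that each term of the concatenation is charged once). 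Finally, since we are in the regime where $\sigma_{\le j}(\pi(S,v))\ne\emptyset$, i.e.\ $|\sigma_{\le j}(\pi(S,v))|\ge 1$, the additive $O(D/(\eps d))\cdot q_\phi$ term can be folded into a multiplicative $|\sigma_{\le j}(\pi(S,v))|\cdot O(D/(\eps d))\cdot q_\phi$, giving the claimed $|\sigma_{\le j}(\pi(S,v))|\cdot\max\{q_\path^\peel+O(1),\,q_\path+O(D/(\eps d))\cdot q_\phi\}$.

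The main obstacle I anticipate is the bookkeeping in the general branch: one must argue carefully that the multi-set returned is \emph{exactly} the disjoint union of the per-subpath queries (so that $\sum_e s_e=|\sigma_{\le j}(\pi(S,v))|$ with no double counting, matching the concatenation in \Cref{alg:define ball_path}), and that the number of distinct sub-queries is $O(D/(\eps d))$ rather than, say, proportional to the (potentially much larger) number of $G$-edges underlying $\Ptil_v$ — this is where the hop-bound on $\Ptil_v$ in $\Htil$, rather than in $G$, is essential. The rest is routine substitution of the query-time-overhead definitions.
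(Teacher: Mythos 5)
Your proposal is correct and follows essentially the same argument as the paper: a branch-by-branch case analysis, invoking \Cref{prop:path nonempty} to rule out empty answers past the first two returns, bounding the general branch by $O(|\Ptil_v|)=O(\frac{D}{\eps d})$ sub-queries charged at $q_{\phi}$ each plus the returned edges charged at $q_{\path}$, and folding the additive term using $|\sigma_{\le j}(\pi(S,v))|\ge1$. The only additions are minor elaborations the paper leaves implicit (e.g., justifying $|\Ptil_v|=O(\frac{D}{\eps d})$ via the $\ceiling{\eps d}$-quantization of emulator weights, where you should note the single zero-weight dummy-source edge), so no substantive difference.
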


\begin{proof}
Suppose that $\sigma_{\le j}(\pi(S,v))=\emptyset$. \Cref{prop:path nonempty}
implies that either $j<\sigma_{\min}$ or $\dtil^{\peel}(v)\le(1+50\eps)^{3}\dtil(v)$.
Therefore, \Cref{alg:return ball_path} must return either at \Cref{enu:return empty}
or \Cref{enu:return path next} both of which takes at most $q_{\phi}^{\peel}+O(1)$
time. 

Suppose $\sigma_{\le j}(\pi(S,v))\neq\emptyset$. If \Cref{alg:return ball_path}
returns at \Cref{enu:return path next} or \Cref{enu:return path near},
then the total time is $|\sigma_{\le j}(\pi(S,v))|\cdot\max\{q_{\path}^{\peel},q_{\path}\}+O(1)$.
Otherwise, \Cref{alg:return ball_path} returns at \Cref{enu:return path general}
and so the algorithm basically just makes $O(|\Ptil_{v}|)=O(\frac{D}{\eps d})$
queries to $\Apxball^{\pi}$ and $\Core^{\pi}$ data structures maintained
inside the covering $\cC$, and one query to $\Apxball^{\pi}(G,S,2(\frac{\stretch}{\eps})^{k}d,\eps,\beta)$.
This takes $|\sigma_{\le j}(\pi(S,v))|\cdot q_{\path}+O(\frac{D}{\eps d})\cdot q_{\phi}$
time. Since $|\sigma_{\le j}(\pi(S,v))|\ge1$, in any case, the total
time is at most 
\[
|\sigma_{\le j}(\pi(S,v))|\cdot\max\{q_{\path}^{\peel}+O(1),q_{\path}+O(\frac{D}{\eps d})\cdot q_{\phi}\}.
\]
\end{proof}

\section{Implementing Path-reporting Robust Cores}

\label{sec:core_path}

In this section, we show how to implement path-reporting robust core
data structures $\Core^{\pi}$ for distance scale $D$. We will assume
that a path-reporting compressed-graph $H$ for distance scale $d\ll D$
is given for. Unlike the algorithm for the distance-only $\Core$,
here we need to further assume that $H$ is defined from a path-reporting
covering $\cC$ with small outer-shell participation bound $\Delta$,
so that we can bound the simpleness of the maintained paths.
\begin{theorem}
[Path-reporting Robust Core]\label{thm:Core_path}\label{thm:Corepath} Let $G$ be an
$n$-vertex bounded-degree decremental graph. Suppose that a path-reporting
$(d,\gamma,\Delta,\beta)$-compressed graph $H$ of $G$ is explicitly
maintained for us. Moreover, we assume that either $H=G_{unit}$ as
defined in \Cref{prop:compressed base} or $H$ is defined from a path-reporting
covering $\cC$ with the outer-shell participation bound $\Delta$
via \Cref{prop:covering-compressed is compressed path}. Assuming that
$D\ge d\gamma$, we can implement a path-reporting robust core data
structure $\Core^{\pi}(G,\Kinit,D,7\lenapsp\Delta\beta)$ with scattering
parameter $\scatter=\Omegatil(\phicmg)$ and stretch $\stretch=\Otil(\gamma\lenapsp/\phicmg^{2})$
and total update time of 
\[
\Otil\left(T_{\Apxball^{\pi}}(G,\Kinit,\stretch\cdot D ,0.1,\beta)\Delta^{2}(D/d)^{3}\lenapsp/\phicmg\right)
\]
where $\lenapsp=\exp(\Theta(\log^{7/8}m))=n^{o(1)}$ is a parameter
that will be used later in \Cref{sec:together_path}. Let $(q_{\phi},q_{\path})$
bound the query-time overhead of both $\Apxball^{\pi}(G,S,\stretch \cdot D ,0.1,\beta)$
and the $(d,\gamma,\Delta,\beta)$-compressed graph $H$. Then, the
data structure has query-time overhead of
\[
(4q_{\phi},q_{\path}+\Otil(\frac{D}{d}\lenapsp/\phicmg^{2})\cdot q_{\phi}).
\]
\end{theorem}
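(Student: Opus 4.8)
The plan is to take the distance-only $\Core$ algorithm from \Cref{alg:Core} and augment it so that, in addition to maintaining the core set $K$, it implicitly maintains for every pair $u,v \in K$ a short, bounded-simpleness path $\pi(u,v)$, and can extract $\sigma_{\le j}(\pi(u,v))$ efficiently. The key structural observation is exactly the one underlying the diameter bound in \Cref{cor:core scatter}: for $u,v \in X$, a short $u$-$v$ path is obtained by (i) walking through the expander witness $W_{multi}[X]$ from $u$ to $v$, which has hop-length $\Otil(1/\phicmg)$, and (ii) replacing each edge of $W$ by the corresponding embedding path $P \in \cP$, each of which has length $O(\len(\cP_W)) = \Otil(\frac{D}{d}/(\phicmg\epswit^2)) = \Otil(\frac{D}{d}/\phicmg^3)$ in $\Hhat$; (iii) unfolding each edge of $\Hhat$ — either a heavy-path edge (a single $G$-edge) or an edge of the path-reporting compressed graph $H$ — into its implicitly maintained $u'$-$v'$ path of length $\le \gamma d$ in $G$. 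For $u,v \in K \setminus X$, we additionally prepend/append the $\Apxball^{\pi}(G,X,4D,0.1,\beta)$-path from $u$ (resp. $v$) to $X$; this is where the factor $\lenapsp$ will come from, because the approximate-ball subroutine on this distance scale is itself recursive and introduces hop-length overhead $\lenapsp$. First I would therefore upgrade \Cref{lne:maintainXBall} to use $\Apxball^{\pi}(G,X,4D,0.1,\beta)$ instead of the distance-only version, and upgrade $\EmbedCore$ to retain the embedding $\cP$ explicitly (it already returns $\cP$; I only need that the paths in $\cP$ are stored so that unfolding is possible). The stretch bound $\stretch = \Otil(\gamma\lenapsp/\phicmg^2)$ then follows by multiplying: $\Otil(1/\phicmg)$ (witness diameter) $\times$ $\Otil(\frac{D}{d}/\phicmg^3)$ (embedding length) is the hop-length in $\Hhat$; each $\Hhat$-edge unfolds to $\le \gamma d$ weight in $G$; and the two endpoint segments from $\Apxball^{\pi}$ contribute length $\le \lenapsp \cdot 4D(1+\eps)$. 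One has to be slightly careful that the product of $\phicmg$ powers actually collapses to $\phicmg^2$ as stated; I would track this exactly as in the distance-only proof, since $\epswit = \Omegahat(\phicmg)$ absorbs the extra $\phicmg$ factors.

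Next I would bound the simpleness parameter, which must come out to $7\lenapsp\Delta\beta$. The path $\pi(u,v)$ is a concatenation of: the two $\Apxball^{\pi}$ endpoint paths (each $(8\beta\Delta)$-simple — or however the ball's parameter works out — but actually I should use a ball whose simpleness is within the stated bound, so I would invoke $\Apxball^{\pi}(G,\Kinit,\stretch D,0.1,\beta)$ with its guaranteed simpleness and check the arithmetic), plus a sequence of embedding-path unfoldings. For the middle part the argument mirrors \Cref{lem:path simple}: each unfolded subpath $\pi(u',v')$ coming from an $H$-edge is $\beta$-simple (by the path-reporting compressed-graph guarantee, \Cref{def:compressed graph path}); each heavy-path subpath is $1$-simple; and the number of subpaths any fixed vertex $u \in V(G)$ can participate in is bounded using the outer-shell participation bound $\Delta$ — here I would need a lemma analogous to \Cref{claim:in outershell} saying that if $u$ appears on the embedding path for an $H$-edge $e$, then $u$ lies in the outer-shell of the core corresponding to $e$, so $u$ can appear in at most $\Delta$ of them; combined with the congestion bound $O(\log|K|)$ on $\cP$ itself this gives a poly-logarithmic-times-$\Delta$ count, which multiplied by $\beta$ and by the witness-path length factor $\lenapsp$ (via the number of embedding paths traversed, $\Otil(\frac{D}{d}/\phicmg^3) \cdot \Otil(1/\phicmg)$, which I fold into $\lenapsp$ when stating the bound loosely) lands at $7\lenapsp\Delta\beta$. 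I expect this to require the case-split "$H = G_{unit}$ vs. $H$ from a covering" from the hypothesis, because only in the covering case does the outer-shell argument apply; in the $G_{unit}$ case $\beta = 1, \Delta = O(1)$ and the bound is trivial.

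For the query operation and its overhead, I would implement the analogue of \Cref{alg:return ball_path}: given $(u,v,j)$, first locate whether $u,v \in X$; query the two $\Apxball^{\pi}(G,X,4D,0.1,\beta)$ data structures for $\sigma_{\le j}$ of the endpoint segments (cost $q_\phi$ each if empty, else output-linear with $q_{\path}$); then walk the $\Otil(\frac{D}{d}\lenapsp/\phicmg^2)$ edges of the witness-plus-embedding path in $\Hhat$, and for each $\Hhat$-edge either emit the single $G$-edge (checking its steadiness against $j$) or query the path-reporting compressed graph $H$ for $\sigma_{\le j}$ of that edge's unfolded path. When $\sigma_{\le j}(\pi(u,v)) = \emptyset$ every sub-query returns empty, costing $O(1) \cdot q_\phi$ each — but there are $\Otil(\frac{D}{d}\lenapsp/\phicmg^2)$ of them, so to hit the claimed overhead of $(4q_\phi, q_{\path} + \Otil(\frac{D}{d}\lenapsp/\phicmg^2)\cdot q_\phi)$ I need the empty case to short-circuit. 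The trick, exactly as in \Cref{prop:path nonempty} and the link-cut-tree device of \Cref{prop:ES tree path}, is to maintain (via a link-cut tree over the witness embedding and over the $\Apxball$ trees) the global minimum steadiness along $\pi(u,v)$, so that when $j$ is below it we answer $\emptyset$ in $O(q_\phi)$ total — this is the only subtle point, and it is where I'd spend the most care, since the path $\pi(u,v)$ is not stored explicitly and the min-steadiness must be aggregated through the emulator layers. The running time bound $\Otil(T_{\Apxball^{\pi}}(G,\Kinit,\stretch D,0.1,\beta)\Delta^2(D/d)^3\lenapsp/\phicmg)$ then follows from the distance-only analysis of \Cref{lem:core runtime} with $T_{\Apxball}$ replaced by $T_{\Apxball^{\pi}}$ and an extra $\lenapsp$ factor from running the recursive path-reporting ball; I do not expect new ideas here beyond bookkeeping. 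The main obstacle overall is making the simpleness bound go through cleanly: controlling how many times a single $G$-vertex recurs once we unfold through both the expander-embedding layer and the compressed-graph layer, which forces the outer-shell hypothesis and a careful compounding of the $\beta$ and $\Delta$ factors.
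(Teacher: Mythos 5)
Your overall architecture (unfold a short witness path through the embedding $\cP$, then through the path-reporting compressed graph $H$ or heavy paths, attach endpoint segments from a path-reporting ball, bound simpleness via an outer-shell argument with the case split $H=G_{unit}$ versus covering) matches the paper's, but two ingredients that the paper's proof actually hinges on are missing, and in both places your substitute either does not exist or is misattributed.

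First, you never explain how the short $u'$-$v'$ path \emph{inside the pruned witness} is found and maintained. The distance-only proof only needs the \emph{existence} of a short path (expander diameter $\Otil(1/\phicmg)$); for path reporting you must extract a concrete path from $W_{multi}[X]$ while $X$ is being pruned, and this is exactly why the paper replaces $\Prune$ by the all-pairs short-path oracle $\Prune^{\pi}$ of \Cref{thm:oracle}, which returns simple paths of at most $\lenapsp$ edges in $O(\lenapsp)$ time. That oracle is the sole source of the $\lenapsp$ factor in the stretch, the simpleness, the update time and the query overhead. Your proposal instead attributes $\lenapsp$ to the recursive $\Apxball^{\pi}$ endpoint segments, but the $\Apxball^{\pi}$ guarantee bounds path \emph{weight} and simpleness $\beta$, not hop count, so that attribution is wrong, and without an oracle your ``walk through the witness'' has no implementation. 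This also breaks your simpleness accounting: the correct count is (number of witness edges, $\le\lenapsp$ by the oracle) times (per-embedding-path multiplicity $\le(\Delta+2)\beta$, via simplicity of each $P_{e}$ in $\Hhat$ plus \Cref{claim:in outershell}); your attempt to ``fold'' $\Otil(\frac{D}{d}/\phicmg^{4})$ into $\lenapsp$ fails because $D/d$ can be far larger than $\lenapsp$.

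Second, the empty-query short-circuit. You correctly identify it as the crux, but your plan --- aggregate the minimum steadiness of the implicit $\pi(u,v)$ via link-cut trees over ``the witness embedding and the $\Apxball$ trees'' --- has no explicit tree to run on: the subpaths are produced by black-box recursive data structures whose interface only supports $\sigma_{\le j}$ queries, so exposing and maintaining a per-path minimum steadiness would require redesigning every layer of the hierarchy. The paper avoids this entirely with a structural trick: it maintains (with a heap) the single minimum-steadiness edge $e_{\min}=(a,b)$ among edges of $G[B^{\pi}]$ of weight at most $32D\log n$, and \emph{defines} every $\pi(u,v)$ to route through $e_{\min}$, connecting to $a,b$ via the larger-radius ball $\Apxball^{\pi}(G,X,\stretch\cdot D,0.1,\beta)$ (your radius $4D$ is too small for this). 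Then if $j<\sigma(e_{\min})$ the middle of the path contributes nothing and four ball queries answer in $4q_{\phi}$, while if $j\ge\sigma(e_{\min})$ the output necessarily contains $(a,b)$, so the $\Otil(\frac{D}{d}\lenapsp/\phicmg^{2})\cdot q_{\phi}$ walking cost can be charged to a nonempty output. Without either this rerouting device or a worked-out min-steadiness aggregation across the layers, the claimed $(4q_{\phi},\,q_{\path}+\Otil(\frac{D}{d}\lenapsp/\phicmg^{2})\cdot q_{\phi})$ overhead does not follow.
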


The rest of this section is for proving \Cref{thm:Core_path}. The
organization is analogous to that of \Cref{sec:ball_path}. In \Cref{sec:core_path:data},
we describe data structures for maintaining the core set $K$ and
for supporting threshold-subpath queries, and then we analyze the
total update time. In \Cref{sec:core_path:define}, we define the implicitly
maintained paths $\pi(u,v)$ for all $u,v\in K$ as required by \Cref{def:Core path}
of $\Core^{\pi}$. Finally, we show an algorithm for answering threshold-subpath
queries in \Cref{sec:core_path:query}.

\subsection{Data Structures}

\label{sec:core_path:data}

In this section, we describe data structures needed for the $\Core^{\pi}$
data structure. First, we will need the following extension of the
expander pruning algorithm $\Prune$ from \Cref{lem:prune} that is
augmented with an \emph{all-pair-short-paths oracle} on the remaining
part of the expander. 
\begin{lem}
[Theorem 3.9 of \cite{ChuzhoyS20_apsp}]\label{thm:oracle}There is
an algorithm $\Prune^{\pi}(W,\phi)$ that, given an unweighted decremental
multi-graph $W=(V,E)$ that is initially a $\phi$-expander with $m$
edges where $\phi\ge\phicmg$, maintains a decremental set $X\subseteq V$
using $O(m\lenapsp)$ total update time such that $W[X]$ is a $\phi/6$-expander
at any point of time, and $\vol_{W}(V\setminus X)\le8i/\phi$ after
$i$ updates. Moreover, given a pair of vertices $u,v\in X$ at any
time, the algorithms returns a simple $u$-$v$ path in $W[X]$ of
length at most $\lenapsp$ in $O(\lenapsp)$ time.\footnote{This lemma is obtained by setting the parameter $q=O(\log^{1/8}m)$
in Theorem 3.9 of \cite{ChuzhoyS20_apsp}.}
\end{lem}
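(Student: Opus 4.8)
The plan is to obtain this lemma as a direct instantiation of Theorem~3.9 of \cite{ChuzhoyS20_apsp}, so that essentially all that is left is parameter bookkeeping. That theorem provides, for a decremental multi-graph $W$ that is initially a $\phi$-expander, a pruning routine that maintains the usual decremental set $X$ with $W[X]$ a $\phi/6$-expander at all times and $\vol_W(V\setminus X)\le 8i/\phi$ after $i$ deletions — i.e.\ exactly the guarantees of \Cref{lem:prune} — but in addition equips $X$ with an all-pairs short-path oracle whose path-length bound and per-query time are governed by a free integer parameter $q$: every answered path is simple, lies in the current $W[X]$, has length at most $\ell(q)$ and is returned in $\tilde O(\ell(q))$ time, while the total update time is $\tilde O(m\cdot\ell(q))$. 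Here $\ell(q)$ has the shape $(1/\phi)^{O(q)}\cdot m^{O(1/q)}$: the construction of \cite{ChuzhoyS20_apsp} recurses to depth $q$, each level inflating the routing length by a $(1/\phi)^{O(1)}$ factor and shrinking the instance size by an $m^{1/q}$ factor, and the base case routes trivially. So I would first cite Theorem~3.9 in this form; the pruning part of our statement is literally \Cref{lem:prune}, and the only content it adds is the oracle, whose correctness (the returned walk really is a simple $u$-$v$ path inside $W[X]$) is part of the cited guarantee and needs no further argument.

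Next I would plug in $\phi\ge\phicmg=1/2^{\Theta(\log^{3/4}n)}$ and pick $q=\Theta(\log^{1/8}m)$. With this choice $(1/\phi)^{O(q)}=2^{O(q\log^{3/4}m)}=2^{O(\log^{7/8}m)}$ and $m^{O(1/q)}=2^{O(\log m/\log^{1/8}m)}=2^{O(\log^{7/8}m)}$, so both factors collapse to $\exp(\Theta(\log^{7/8}m))$; defining $\lenapsp\defeq\exp(\Theta(\log^{7/8}m))$ with a large enough implicit constant absorbs everything, including the $\tilde O(\cdot)$ polylog overheads, since those are $\exp(O(\log\log m))=m^{o(1)}\le\lenapsp$. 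This yields the claimed path length $\lenapsp$, the $O(m\lenapsp)$ total update time, and the $O(\lenapsp)$ query time. Since only edge deletions of $W$ are forwarded and $W$ is initially a $\phi$-expander with $\phi\ge\phicmg$, the hypotheses of Theorem~3.9 are exactly met.

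The one place I expect to need care, rather than genuine difficulty, is checking that $q=\Theta(\log^{1/8}m)$ is an admissible parameter for Theorem~3.9: that theorem will impose mild side conditions such as $q$ being a positive integer of size at most $\polylog(m)$ and the resulting $\ell(q)$ being meaningfully smaller than $m$ (otherwise the oracle is vacuous). Both hold: $q=\Theta(\log^{1/8}m)=\polylog(m)$, and $\lenapsp=\exp(\Theta(\log^{7/8}m))=m^{o(1)}\ll m$ because $\log^{7/8}m=o(\log m)$. Once these are verified the lemma follows verbatim; if a self-contained argument were wanted instead, the fallback would be to reprove the oracle by recursively applying boundary-linked expander decomposition / expander routing to $W[X]$ and maintaining a short-path structure on the pieces at each of the $q$ levels — but that is precisely the content of \cite{ChuzhoyS20_apsp}, and there is no reason to reproduce it here.
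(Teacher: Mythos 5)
Your proposal matches the paper exactly: the paper gives no independent proof of this lemma but simply cites Theorem~3.9 of \cite{ChuzhoyS20_apsp} and notes in a footnote that one sets $q=O(\log^{1/8}m)$, which is precisely your instantiation, and your parameter arithmetic ($m^{O(1/q)}$ and $(1/\phicmg)^{O(q)}$ both collapsing to $\exp(\Theta(\log^{7/8}m))=\lenapsp$) is the intended bookkeeping behind that footnote. Nothing further is needed.
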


To describe the data structure, we simply replace the distance-only
components inside the $\Core$ data structure with the path-reporting
ones as follows:
\begin{itemize}
\item Replace the distance-only $(d,\gamma,\Delta)$-compressed graph from
the assumption of \Cref{thm:Core} by the path-reporting $(d,\gamma,\Delta,\beta)$-compressed
graph. 
\item Replace $\Prune(W_{multi},\phi)$ from \Cref{lne:whileExpanderWitnessIsLarge} of \Cref{alg:Core}
by $\Prune^{\pi}(W_{multi},\phi)$ from \Cref{thm:oracle} that support all-pair-short-paths
queries.
\item In addition to maintaining $\Apxball(G,X,4D,0.1)$ from \Cref{lne:maintainXBall} of \Cref{alg:Core}, we also maintain $\Apxball^{\pi}(G,X,\stretch\cdot D,0.1,\beta)$.
\end{itemize}
Finally, let $B^{\pi}$ contain all vertices $v$ whose distance
estimate maintained by $\Apxball^{\pi}(G,X,\stretch\cdot D,0.1,\beta)$
is at most $\frac{\stretch}{10}\cdot D$. So, $\ball_{G}(X,\frac{\stretch}{10}\cdot D)\subseteq B^{\pi}\subseteq\ball_{G}(X,1.1\cdot\frac{\stretch}{10}\cdot D)$.
We maintain an edge with minimum steadiness among all edges in $G[B^{\pi}]$
with weight at most $32D\log n$, denoted by $e_{\min}$. If there
are many edges with minimum steadiness, we break tie arbitrarily but
consistently through time (for example, we can fix an arbitrary order of edges and
let $e_{\min}$ be the first edge satisfied the condition). This completes
the description of the data structure.

With the above small modification, the maintained core set $K\subseteq\Kinit$
still guarantees the scattering property. (We prove the stretch property
later in \Cref{lem:core path len}.)
\begin{lem}
Let $\scatter=\Omegatil(\phicmg)$. At any point of time, $|\ball_{G}(v,2D)\cap\Kinit|\le(1-\scatter)\cdot|\Kinit|$
for all $v\in\Kinit\setminus K$. 
\end{lem}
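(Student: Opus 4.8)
The plan is to observe that the $\Core^{\pi}$ algorithm is obtained from the distance-only $\Core$ of \Cref{alg:Core} by modifications that do not touch any quantity appearing in the scattering argument, and then to replay the ``(Scattered)'' part of the proof of \Cref{cor:core scatter} essentially verbatim. Concretely, I would first record that the three changes used to build $\Core^{\pi}$ are all harmless for this purpose: (i) the path-reporting $(d,\gamma,\Delta,\beta)$-compressed graph is, by \Cref{prop:path-reporting stronger}, still a $(d,\gamma,\Delta)$-compressed graph, so the combinatorial input to the algorithm is unchanged; (ii) replacing $\Prune$ by $\Prune^{\pi}$ does not change the maintained set $X$ at any stage, since by \Cref{thm:oracle} the data structure $\Prune^{\pi}(W_{multi},\phicmg)$ enjoys exactly the pruning guarantees of \Cref{lem:prune} (namely $W[X]$ is a $\phicmg/6$-expander and $\vol_W(V(W)\setminus X)\le 8i/\phicmg$ after $i$ updates), the only new feature being a path oracle that is never consulted during updates; and (iii) the extra objects maintained alongside --- $\Apxball^{\pi}(G,X,\stretch\cdot D,0.1,\beta)$ and the minimum-steadiness edge $e_{\min}$ --- play no role in determining $K$, $K'$, $W$, $X$, or the loop conditions.

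In particular, the rule governing removal of a vertex from $K$ is identical to \Cref{alg:Core}: a vertex $v$ leaves $K$ only in \Cref{lne:maintainXBall}, i.e.\ only once it leaves $\Apxball(G,X,4D,0.1)$, and the outer while-loop is still driven by $\CertifyCore(G,\Kinit,2D,\epswit/2)$ in \Cref{lne:whileKisbigCOre}. Given this, the argument closes exactly as before. Every $v\in\Kinit$ starts in $K$. If $v$ is removed in \Cref{lne:maintainXBall}, then its $\Apxball(G,X,4D,0.1)$ estimate exceeds $4D$, so $\dist_G(v,X)>2D$ by the approximation guarantee of \Cref{def:Apxball}; since the loop condition in \Cref{lne:whileExpanderWitnessIsLarge} guarantees $X\subseteq K'\subseteq\Kinit$ with $|X|\ge|\Kinit|/2$, the set $\ball_G(v,2D)\cap\Kinit$ misses all of $X$, whence $|\ball_G(v,2D)\cap\Kinit|\le|\Kinit|/2<(1-\scatter)|\Kinit|$. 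If instead $v$ is removed when the algorithm finally empties $K$, then the outer-loop condition failed, so $\CertifyCore$ returned its scattered alternative of \Cref{lem:certifycore}, which for the choice $\scatter=\Omegatil(\phicmg)=\Theta(\epswit)$ certifies $|\ball_G(v,2D)\cap\Kinit|\le(1-\scatter)|\Kinit|$ for every remaining $v\in\Kinit$; hence setting $K=\emptyset$ preserves the stated bound.

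I do not anticipate any genuine obstacle here: the statement is a near-transcription of the earlier distance-only proof. The one point that needs care is item (ii) --- that $\Prune^{\pi}$ produces, at every stage, a set $X$ with the same size and volume guarantees as $\Prune$, so that the behavior of \Cref{lne:whileExpanderWitnessIsLarge} and \Cref{lne:maintainXBall} is literally unchanged --- and this is immediate from the statement of \Cref{thm:oracle}. Everything else is bookkeeping.
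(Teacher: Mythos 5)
Your proposal is correct and follows essentially the same route as the paper: the paper's proof simply invokes \Cref{prop:path-reporting stronger} to say the path-reporting components are at least as strong as their distance-only counterparts, so the scattering argument of \Cref{cor:core scatter} carries over unchanged, which is exactly what you do (with the argument replayed explicitly). The only nuance is that $\Prune^{\pi}$ need not produce the \emph{same} set $X$ as $\Prune$, only one satisfying the same guarantees, but since the argument uses only those guarantees (and the while-loop condition $|X|\ge|\Kinit|/2$), this does not affect correctness.
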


\begin{proof}
\Cref{prop:path-reporting stronger} implies that we can replace the
distance-only components in $\Core$ with the stronger path-reporting
components because the guarantees of the outputs of these path-reporting
components never become weaker. Therefore, structural statements including
\Cref{cor:core scatter} from \Cref{sec:Core} still hold.
\end{proof}
The total update time after modification is slightly slower. Compared
to the running time of \Cref{thm:Core}, we replace a factor of $1/\phicmg$
by a factor of $\lenapsp$ and replace $T_{\Apxball}(\cdot)$ by $T_{\Apxball^{\pi}}(\cdot)$.
\begin{lem}
	The total update time is $\Otil\left(T_{\Apxball^{\pi}}(G,\Kinit,\stretch\cdot D,0.1,\beta)\Delta^{2}(D/d)^{3}\lenapsp/\phicmg\right)$.
\end{lem}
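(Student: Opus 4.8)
I would mirror the total-update-time analysis of the distance-only robust core (\Cref{lem:core runtime} in \Cref{sec:Core}), bookkeeping the three modifications: the path-reporting compressed graph in place of the distance-only one, $\Prune^\pi$ in place of $\Prune$, and the extra data structures $\Apxball^\pi(G,X,\stretch\cdot D,0.1,\beta)$, $B^\pi$, and $e_{\min}$. The key enabling observation is that, by \Cref{prop:path-reporting stronger}, swapping a distance-only component for its path-reporting counterpart never weakens a guarantee, so every \emph{combinatorial} statement from \Cref{sec:Core} transfers unchanged: \Cref{lem:total cap} still gives $\kappafinal(\Vhat)=O(|\Kinit|\tfrac{D}{d}\log^2 n)$; \Cref{lem:bound num phase} and \Cref{cor:bound num embed} still bound both the number of while-loop iterations and the number of $\EmbedCore$ calls by $\Otil(\Delta\tfrac{\kappafinal(\Vhat)}{|\Kinit|}/\phicmg)=\Otil(\Delta\tfrac{D}{d}/\phicmg)$; and $|\Hhat|=\Otil(\Delta\tfrac{D}{d}|\ball_G(\Kinit,32D\log n)|)$ since $\Hhat$ is still built exactly as in \Cref{def:Hhat}. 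The one place I would verify with care is that the proof of \Cref{lem:bound num phase} — which governs the iteration count and hence every product below — survives replacing $\Prune$ by $\Prune^\pi$; this is fine because that proof only uses the guarantees $\vol_W(V(W)\setminus X)\le 8i/\phi$ and $|X|\ge|\Kinit|/2$, and \Cref{thm:oracle} supplies them identically.

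Next I would bound the cost of one while-loop iteration, excluding $\EmbedCore$. The call to $\CertifyCore$ still costs $\Otil(|\ball_G(\Kinit,32D\log n)|)$. The call $\Prune^\pi(W_{multi},\phicmg)$ now costs $O(|E(W_{multi})|\lenapsp)$ by \Cref{thm:oracle}, and since $W_{multi}$ is obtained from $W$ (of total edge weight $O(|\Kinit|\log|\Kinit|)$ by \Cref{lem:embedwitness}) by taking $w(e)/\gamma_{\size}$ copies of each edge with $\gamma_{\size}=\Theta(|\Vhat|/|\Kinit|)$, we again have $|E(W_{multi})|=\Otil(|\Vhat|)$, so this is $\Otil(|\Vhat|\lenapsp)$. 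Maintaining $\Apxball(G,X,4D,0.1)$ and $\Apxball^\pi(G,X,\stretch\cdot D,0.1,\beta)$ costs at most $O(T_{\Apxball^\pi}(G,\Kinit,\stretch\cdot D,0.1,\beta))$: we have $X\subseteq\Kinit$ and $4D\le\stretch\cdot D$, a path-reporting ball is in particular a distance-only ball (\Cref{prop:path-reporting stronger}), and $T_{\Apxball^\pi}$ is monotone in the graph, the source set, and the distance scale (\Cref{rem:apx-ball-monotonic}). Finally $B^\pi$ is read off for free from $\Apxball^\pi(G,X,\stretch\cdot D,0.1,\beta)$, and $e_{\min}$ over the bounded-degree decremental graph $G[B^\pi]$ can be kept in a steadiness-keyed priority queue in total time $\Otil(|\ball_G(X,\stretch\cdot D)|)\le\Otil(T_{\Apxball^\pi}(G,\Kinit,\stretch\cdot D,0.1,\beta))$. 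So one iteration costs $\Otil(T_{\Apxball^\pi}(G,\Kinit,\stretch\cdot D,0.1,\beta)+|\Vhat|\lenapsp)$.

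Finally I would sum over the $\Otil(\Delta\tfrac{D}{d}/\phicmg)$ iterations and add the $\EmbedCore$ cost, which is literally the same as in \Cref{sec:Core}: each call costs $\Otil(|\Hhat|\tfrac{\kappafinal(\Vhat)}{|\Kinit|\phicmg})$ (using $|\Vhat|=O(|\Hhat|)$) and there are $\Otil(\Delta\tfrac{\kappafinal(\Vhat)}{|\Kinit|}/\phicmg)$ of them, giving $\Otil(|\Hhat|\Delta(\tfrac{\kappafinal(\Vhat)}{|\Kinit|})^2/\phicmg^2)$. Plugging in $\tfrac{\kappafinal(\Vhat)}{|\Kinit|}=\Otil(\tfrac{D}{d})$, $|\Hhat|=\Otil(\Delta\tfrac{D}{d}|\ball_G(\Kinit,32D\log n)|)$, and $|\ball_G(\Kinit,32D\log n)|\le|\ball_G(\Kinit,\stretch\cdot D)|\le T_{\Apxball^\pi}(G,\Kinit,\stretch\cdot D,0.1,\beta)$, the iterated $\CertifyCore$/ball-maintenance cost becomes $\Otil(T_{\Apxball^\pi}(\cdot)\Delta\tfrac{D}{d}/\phicmg)$, the iterated pruning cost becomes $\Otil(T_{\Apxball^\pi}(\cdot)\Delta^2(\tfrac{D}{d})^2\lenapsp/\phicmg)$, and the $\EmbedCore$ cost becomes $\Otil(T_{\Apxball^\pi}(\cdot)\Delta^2(\tfrac{D}{d})^3/\phicmg^2)$. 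Using $\Delta\ge1$, $D/d\ge1$, $\lenapsp\ge1$, and $1/\phicmg\le\lenapsp$ (both are $n^{o(1)}$, with $\lenapsp=\exp(\Theta(\log^{7/8}m))$ dominating $1/\phicmg=2^{\Theta(\log^{3/4}n)}$), each of the three is at most $\Otil(T_{\Apxball^\pi}(G,\Kinit,\stretch\cdot D,0.1,\beta)\Delta^2(D/d)^3\lenapsp/\phicmg)$, which is the claimed total update time. The only genuine subtlety, as flagged above, is making sure the combinatorial lemmas of \Cref{sec:Core} do not secretly exploit any property lost in the substitution; everything else is routine arithmetic on top of the existing proof.
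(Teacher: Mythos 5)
Your proposal is correct and follows essentially the same route as the paper's proof: redo the calculation of \Cref{lem:core runtime}, replacing the $\Otil(|E(W_{multi})|/\phicmg)$ cost of $\Prune$ by the $\Otil(|E(W_{multi})|\lenapsp)$ cost of $\Prune^{\pi}$, replacing $T_{\Apxball}$ by $T_{\Apxball^{\pi}}(G,\Kinit,\stretch\cdot D,0.1,\beta)$ for the additionally maintained ball, and charging the maintenance of $e_{\min}$ (via a heap) to that ball. Your extra care in checking that the combinatorial lemmas (\Cref{lem:total cap}, \Cref{lem:bound num phase}, \Cref{cor:bound num embed}) survive the substitution and that $1/\phicmg\le\lenapsp$ absorbs the unchanged $\EmbedCore$ term simply makes explicit what the paper's terser argument leaves implicit.
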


\begin{proof}
	Note that we assume the path-reporting $(d,\gamma,\Delta,\beta)$-compressed
	graph is maintained explicitly for us and so we do not count its update
	time. The proof of this lemma is the same as in the proof of \Cref{lem:core runtime}
	except that we replace $\Prune(W_{multi},\phicmg)$ whose total update
	time is $\Otil(|E(W_{multi})|/\phicmg)$ by $\Prune^{\pi}(W_{multi},\phicmg)$
	whose total update time is $\Otil(|E(W_{multi})|\lenapsp)$. Following
	exactly the same calculation in \Cref{lem:core runtime}, the total
	update time is 
	
	\[
	\Otil\left(T_{\Apxball}(G,\Kinit,32D\log n,0.1)\Delta^{2}(D/d)^{3}\lenapsp/\phicmg\right)
	\]
	basically by replacing a factor of $O(1/\phicmg)$ by $O(\lenapsp)$.
	However, since in addition to maintaining $\Apxball(G,X,4D,0.1)$
	from \Cref{alg:Core}, we also maintain $\Apxball^{\pi}(G,X,\stretch\cdot D,0.1,\beta)$.
	Following the same calculation, the total update time becomes
	
	\[
	\Otil\left(T_{\Apxball^{\pi}}(G,\Kinit,\stretch\cdot D,0.1,\beta)\Delta^{2}(D/d)^{3}\lenapsp/\phicmg\right).
	\]
	Note that $e_{\min}$ can be maintained using a heap and the total
	update time can be charged to the time spent by $\Apxball^{\pi}(G,X,\stretch\cdot D,0.1,\beta)$.
\end{proof}

From the above, we have proved the scattering property and bounded
the total update time of the algorithm for \Cref{thm:Core_path}.

\subsection{Defining The Implicitly Maintained Paths}

\label{sec:core_path:define}

In this section, for each pair of vertices $u,v\in K$, we define
a $u$-$v$ path $\pi(u,v)$ using \Cref{alg:define core_path}. We
emphasize that these paths $\pi(u,v)$ are \emph{not} maintained explicitly
and they are completely independent from the steadiness index $j$
in the queries. 
See \Cref{fig:core_path} for illustration.

\begin{algorithm}
Let $e_{\min}=(a,b)$ be the edge with minimum steadiness among all edges in $G[B^{\pi}]$ with weight at most $32D\log n$. \;

Set $\pi_{u},\pi_{v},\pi_{a},\pi_{b}$ as $\pi(X,u),\pi(X,v),\pi(X,a),\pi(X,b)$,
respectively, which are implicitly maintained by $\Apxball^{\pi}(G,X,\stretch \cdot D,0.1,\beta)$.\;

Let $u',v',a',b'\in X$ be such that $\pi_{u}=(u,\dots,u'),\pi_{v}=(v,\dots,v'),\pi_{a}=(a,\dots,a'),\pi_{b}=(b,\dots,b')$.\;

Let $\pi_{ua}^{W}$ be the $u'$-$a'$ path in $W$ obtained by querying
$\Prune^{\pi}(W_{multi},\phicmg)$.\;

Let $\widehat{\pi}_{ua}$ be the $u'$-$a'$
path in $\Hhat$ obtained by concatenating, for all embedded edges
$e\in\pi_{ua}^{W}$, the corresponding path $P_{e}$ in $\Hhat$.
By \Cref{def:Hhat} of $\Hhat$, we can write $\widehat{\pi}_{ua}=(\widehat{p}_{1},\dots,\widehat{p}_{t})$
where each $\widehat{p}_{i}$ is either a heavy-path or $\widehat{p}_{i}=(z,z')$
where $z$ and $z'$ are adjacent by a hyperedge in $H$. \label{enu:core path in Hhat}\;

\For{$i = 1$ {\normalfont{up to}} $t$}{
    \tcc{\textbf{(Hyper-edge)}}
    \If{$\widehat{p}_{i}=(z,z')$ where $z$ and
$z'$ are adjacent by a hyperedge in $H$}{
        $p_{i}\gets\pi^{H}(z,z')$ implicitly maintained by $H$
    }
    \tcc{\textbf{(Heavy-path)}}
    \If{$\widehat{p}_{i}=(z,\dots,z')$ is a heavy path}
    {$p_{i}\gets(z,z')$ where $(z,z')\in E(G)$.}
}
$\pi_{ua} \gets (p_{1},\dots,p_{t})$. \label{enu:core path G}\;

Let $\pi_{bv}^{W}$, $\widehat{\pi}_{bv}$ and $\pi_{bv}$ be the
$b'$-$v'$ path in $W,$ $\Hhat$ and $G$, respectively, analogous
to $\pi_{ua}^{W},\widehat{\pi}_{ua},\pi_{ua}$.\;

\Return $\pi(u,v)=(\pi_{u},\pi_{ua},\pi_{a},\{(a,b)\},\pi_{b},\pi_{bv},\pi_{v})$.
\caption{Computing $\pi(u,v)$ for each pair $u,v\in K$.}
\label{alg:define core_path}
\end{algorithm}

\begin{figure}
\begin{centering}
\includegraphics[width=0.4\paperwidth]{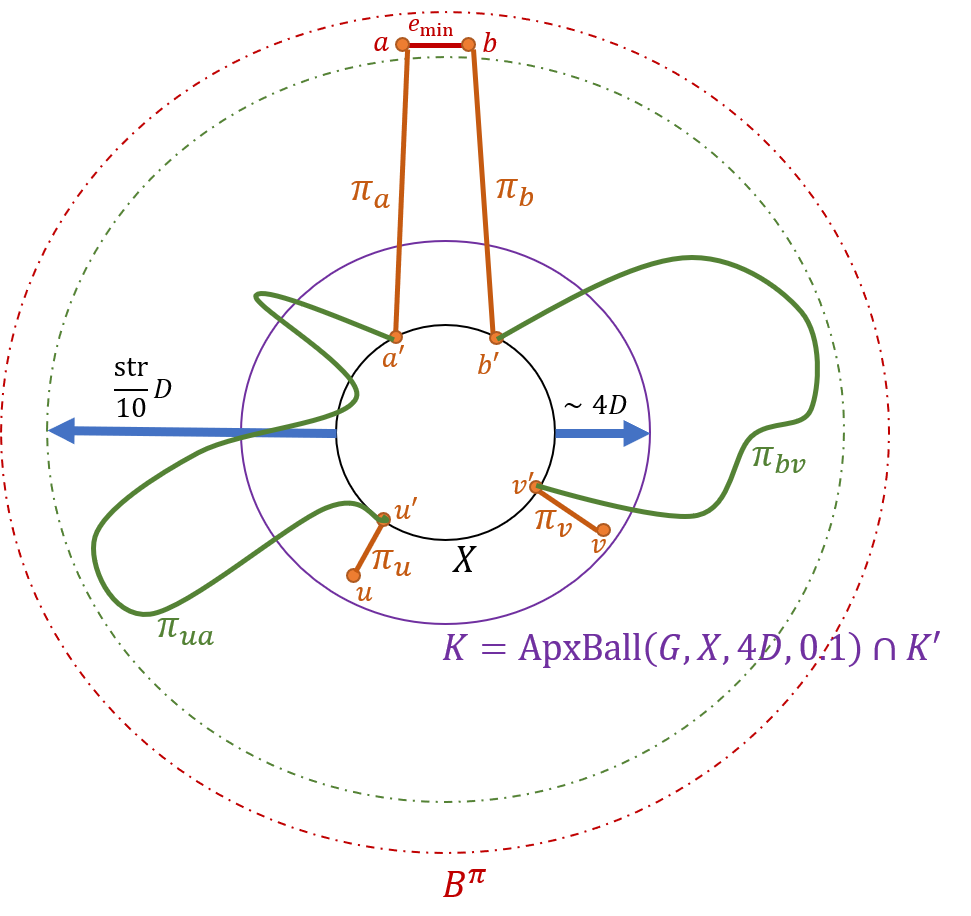}
\par\end{centering}
\caption{\label{fig:core_path}Illustration of $\pi(u,v)$ returned by \Cref{alg:define core_path}}
\end{figure}
Before analysis the properties of $\pi(u,v)$, we first argue that
it is indeed well-defined.
\begin{prop}
	\label{prop:core path well defined}For each pair $u,v\in K$, the
	path $\pi(u,v)$ defined by \Cref{alg:define core_path} is well-defined
	and is a $u$-$v$ path in $G$.
\end{prop}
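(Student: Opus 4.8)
The claim is that for every pair $u,v \in K$, the path $\pi(u,v)$ produced by \Cref{alg:define core_path} is well-defined and is a genuine $u$-$v$ walk in $G$. The plan is to trace the algorithm step by step and check that (i) each constituent subpath exists and is a path in $G$, and (ii) consecutive subpaths share the appropriate endpoint so that the concatenation at the final line is a valid $u$-$v$ path.

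\textbf{Step 1: the invocation of $\Apxball^{\pi}$ on $X$ is valid.} First I would verify that $u,v,a,b$ all lie in $\ball_G(X,\stretch\cdot D)$, so that $\Apxball^{\pi}(G,X,\stretch\cdot D,0.1,\beta)$ indeed maintains the paths $\pi(X,u),\pi(X,v),\pi(X,a),\pi(X,b)$. For $u,v$: since $u,v \in K$, and $K$ is maintained as a subset of $\ball_G(X,4D)$ (vertices leaving $\Apxball(G,X,4D,0.1)$ are removed from $K$ in \Cref{lne:maintainXBall}), we get $\dist_G(X,u),\dist_G(X,v) \le 4D \le \stretch \cdot D$ since $\stretch \ge 1$. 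For $a,b$: by definition $e_{\min}=(a,b)$ is an edge of $G[B^{\pi}]$ with $B^{\pi} \subseteq \ball_G(X, 1.1\cdot\tfrac{\stretch}{10}D) \subseteq \ball_G(X,\stretch\cdot D)$, so $a,b$ are also in the ball; hence $\pi_a,\pi_b$ are well-defined. This yields $u',v',a',b' \in X$ as the $X$-endpoints of these four paths.

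\textbf{Step 2: the short-path queries and the unfolding are valid.} Since $u',a' \in X$ and $X$ is the pruned set maintained by $\Prune^{\pi}(W_{multi},\phicmg)$, \Cref{thm:oracle} guarantees a simple $u'$-$a'$ path $\pi^W_{ua}$ in $W[X]$; similarly for $\pi^W_{bv}$. Each edge $e$ of $W$ is an embedded edge coming from the embedding $\cP$ returned by $\EmbedCore(\cdot)$, so it corresponds to a path $P_e$ in $\Hhat$ between the same endpoints; concatenating these gives a $u'$-$a'$ walk $\widehat{\pi}_{ua}$ in $\Hhat$. By \Cref{def:Hhat}, $\Hhat$ is the union of $H[B^{init}]$ and the heavy paths $\Phat$, so each maximal sub-walk $\widehat{p}_i$ is either an entire heavy path (corresponding to an edge $(z,z')\in E(G)$ of weight in $(d,32D\log n]$) or a single hyperedge step $(z,z')$ where $z,z'$ are adjacent in $H$. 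In the hyperedge case, $H$ is a path-reporting compressed graph, so it implicitly maintains a $z$-$z'$ path $\pi^H(z,z')$ in $G$; in the heavy-path case, $p_i \gets (z,z')$ with $(z,z')\in E(G)$ is trivially a path in $G$. Consecutive $p_i$'s share endpoints because consecutive $\widehat{p}_i$'s do (they are contiguous pieces of the walk $\widehat{\pi}_{ua}$), so $\pi_{ua}=(p_1,\dots,p_t)$ is a well-defined $u'$-$a'$ walk in $G$; likewise $\pi_{bv}$ is a $b'$-$v'$ walk in $G$.

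\textbf{Step 3: assembling the concatenation.} Finally I would check that the seven-piece concatenation $\pi(u,v)=(\pi_u,\pi_{ua},\pi_a,\{(a,b)\},\pi_b,\pi_{bv},\pi_v)$ is a valid $u$-$v$ walk in $G$: $\pi_u$ goes $u \to u'$; $\pi_{ua}$ goes $u' \to a'$; the reverse of $\pi_a$ goes $a' \to a$ (note $\pi_a$ was defined as $(a,\dots,a')$, so we traverse it from $a'$ to $a$); the edge $\{(a,b)\}$ goes $a \to b$ (this is an edge of $G$ since $e_{\min}\in E(G)$); $\pi_b$ goes $b \to b'$; $\pi_{bv}$ goes $b' \to v'$; and the reverse of $\pi_v$ goes $v' \to v$. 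Every piece lies in $G$ and adjacent endpoints match, so $\pi(u,v)$ is a $u$-$v$ walk in $G$. The determinism of all subroutines (the ES-tree/MES-based $\Apxball^{\pi}$, the pruning oracle with its fixed tie-breaking, the fixed choice of $e_{\min}$) ensures the path is uniquely determined by the current stage. I expect no serious obstacle here; the only mild subtlety is bookkeeping the orientations of $\pi_a,\pi_v$ and confirming that the heavy-path endpoints in $\widehat{\pi}_{ua}$ really correspond to $G$-edges, which follows directly from \Cref{def:Hhat}. The genuinely substantive claims — that $\pi(u,v)$ has bounded length and bounded simpleness — are deferred to later lemmas (\Cref{lem:core path len} and \Cref{lem:core path simple}) and are not part of this proposition.
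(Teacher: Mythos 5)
Your proposal is correct and follows essentially the same approach as the paper's proof: verify $u,v,a,b$ lie in the relevant approximate ball so $\pi_u,\pi_v,\pi_a,\pi_b$ exist, note $u',v',a',b'\in X$ so the pruning oracle can be queried and the $W$-paths unfold via the embedding into $\Hhat$ and then into $G$, and finally check that consecutive subpath endpoints match. The extra detail you supply on orientations and on heavy-path/hyperedge cases is consistent with the paper's argument.
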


\begin{proof}
	We have $u,v\in K\subseteq\Apxball(G,X,4D,0.1)$ by Line \ref{lne:maintainXBall}
	of \Cref{alg:Core}. Hence, we also have that $u,v\in\Apxball^\pi(G,X,\stretch\cdot D,0.1,\beta)$
	and so $\pi_{u}$ and $\pi_{v}$ are well-defined. By definition of
	$e_{\min}$, we have $a,b\in \Apxball^\pi(G,X,\stretch\cdot D,0.1,\beta)$.
	Hence, $\pi_{a}$ and $\pi_{b}$ are well-defined too. Lastly, as
	$u',v',a',b'\in X$, the paths $\pi_{ua}^{W}$ and $\pi_{bv}^{W}$
	can be queried from $\Prune^{\pi}(W_{multi},\phicmg)$. Then, $\widehat{\pi}_{ua}$
	and $\widehat{\pi}_{bv}$ are can be defined from $\pi_{ua}^{W}$ and $\pi_{bv}^{W}$ because of the embedding
	$\cP_{W}$ of $W$. By construction of $\Hhat$, the paths $\pi_{ua}$
	and $\pi_{bv}$ are well-defined as well. 
	
	Since the endpoints of $\pi_{u},\pi_{ua},\pi_{a},\{(a,b)\},\pi_{b},\pi_{bv},\pi_{v}$
	are $(u,u')$, $(u',a')$, $(a',a)$, $(a,b)$, $(b,b')$, $(b',v')$,
	$(v',v)$, respectively, we have $\pi(u,v)$ is indeed a $u$-$v$
	path. As all subpaths of $\pi(u,v)$ are well-defined, $\pi(u,v)$ is well-defined too.
\end{proof}

Next, we introduce notations about more fine-grained structure of
the path $\pi_{ua}$. (It is symmetric for $\pi_{bv}$.) 
Consider \Cref{enu:core path in Hhat} of \Cref{alg:define core_path}
where we have $\widehat{\pi}_{ua}=(\widehat{p}_{1},\dots,\widehat{p}_{t})$.
If $\widehat{p}_{i}$ is a heavy path, then we say that $\widehat{p}_{i}$
is of type \emph{heavy-path}. Otherwise, we say that $\widehat{p}_{i}$
is of type \emph{hyper-edge}. Recall that $\cP_{W}$ is the embedding
of $W$ into $\Hhat$. We can write $\pi_{ua}^{W}=(e_{1},\dots,e_{|\pi_{ua}^{W}|})$
and $\widehat{\pi}_{ua}=(P_{e_{1}},\dots,P_{e_{|\pi_{ua}^{W}|}})$
where each $P_{e_{j}}\in\cP_{W}$ is the path in the embedding corresponding
to $e_{j}\in E(W)$. As $P_{e_{j}}$ is a subpath of $\widehat{\pi}_{ua}$
and has endpoints in $V(G)$, we can write $P_{e_{j}}=(\widehat{p}_{j,1},\dots,\widehat{p}_{j,t_{j}})$
as a subsequence of $(\widehat{p}_{1},\dots,\widehat{p}_{t})$. Observe
that we have
\begin{align}
	\widehat{\pi}_{ua} & =(\widehat{p}_{1},\dots,\widehat{p}_{t})=(\widehat{p}_{1,1},\dots,\widehat{p}_{1,t_{1}},\widehat{p}_{2,1},\dots,\widehat{p}_{2,t_{2}},\dots,\widehat{p}_{|\pi_{ua}^{W}|,1},\dots,\widehat{p}_{|\pi_{ua}^{W}|,t_{|\pi_{ua}^{W}|}})\label{eq:pihat_ua}\\
	\pi_{ua} & =(\pi_{1},\dots\pi_{t})=(p_{1,1},\dots,p_{1,t_{1}},p_{2,1},\dots,p_{2,t_{2}},\dots,p_{|\pi_{ua}^{W}|,1},\dots,p_{|\pi_{ua}^{W}|,t_{|\pi_{ua}^{W}|}})\label{eq:pi_ua}
\end{align}
where $p_{j,k}$ is a path in $G$ corresponding to the path $\widehat{p}_{j,k}$
in $\Hhat$ assigned at \Cref{enu:core path G}. 
We emphasize that the path $(\widehat{p}_{j,1},\dots,\widehat{p}_{j,t_{j}})$ is not the same as $\widehat{p}_{j}$; $(\widehat{p}_{j,1},\dots,\widehat{p}_{j,t_{j}})$
is just some subsequence of the sequence $(\widehat{p}_{1},\dots,\widehat{p}_{t})$.
We will usually use subscript $i$ for $\widehat{p}_{i}$ and subscript
$(j,k)$ for $\widehat{p}_{j,k}$.
For each subpath $\widehat{p}_{j,k}$
of $P_{e_{j}}$, if $\widehat{p}_{j,k}$ is of type hyper-edge, then
we say that the corresponding path $p_{j,k}$ is of type hyper-edge
as well. Otherwise, $p_{j,k}$ is of type heavy-path.

We will below argue the correctness of the path $\pi(u,v)$ defined
by \Cref{alg:define core_path}. We start by bounding the length of
$\pi(u,v)$. We first bound the length of $\pi_{ua}$ and $\pi_{bv}$ which
is the only non-trivial case. The moreover part of the statement below
will be used in the next subsection.
\begin{prop}
	\label{lem:core path still near}\label{prop:expander path}We can
	choose the polylogarithmic factor in $\stretch=\Otil(\gamma\lenapsp/\phicmg^{2})$
	so that the following holds. The length of $\pi_{ua}$ is at most
	$w(\pi_{ua})=\Otil(D\gamma\lenapsp/\phicmg^{2})\le\frac{\stretch}{10}\cdot D$.
	Moreover, $\pi_{ua}$ is contained inside $G[\ball_{G}(X,\frac{\stretch}{10}\cdot D)]$
	and every edge of $\pi_{ua}$ has weight at most $32D\log n$. Symmetrically,
	the same holds for $\pi_{bv}$.
\end{prop}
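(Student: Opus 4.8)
The plan is to trace through the construction of $\pi_{ua}$ in \Cref{alg:define core_path} layer by layer, bounding the length accumulated at each level of the emulator hierarchy. The path $\pi_{ua}$ is obtained from $\pi_{ua}^{W}$ (a path in the expander $W$ returned by $\Prune^{\pi}$), which is first lifted to $\widehat\pi_{ua}$ in $\Hhat$ via the embedding $\cP_W$, and then lifted to $\pi_{ua}$ in $G$ by replacing hyper-edge pieces $\widehat p_i = (z,z')$ with the implicitly maintained path $\pi^H(z,z')$ and heavy-path pieces with the corresponding $G$-edge. So the strategy is: (i) bound $|\pi_{ua}^{W}|$, the hop-length in $W$; (ii) bound the number of $\Hhat$-edges in $\widehat\pi_{ua}$ by multiplying (i) by the embedding length $\len(\cP_W)$; (iii) bound the $G$-weight contributed by each $\Hhat$-piece — a hyper-edge piece lifts to a $G$-path of weight at most $\gamma d$ by \Cref{def:compressed graph path}, and a heavy-path piece lifts to a single $G$-edge of weight at most $32D\log n$ but, critically, such heavy-path edges come in groups of $\lceil w(e)/d\rceil$ so their total $G$-weight telescopes back down. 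Summing gives $w(\pi_{ua}) = \Otil(D\gamma\lenapsp/\phicmg^2)$, and choosing the polylog factor in $\stretch$ large enough makes this at most $\frac{\stretch}{10}D$.

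For the quantitative bounds: by \Cref{thm:oracle}, $\pi_{ua}^{W}$ is a simple path in $W[X]$ of length at most $\lenapsp$, so $|\pi_{ua}^{W}| \le \lenapsp$. By \Cref{lem:embedwitness} (and the bound $\kappa(\Vhat) \le \kappafinal(\Vhat) = \Otil(|\Kinit|\frac{D}{d})$ from \Cref{lem:total cap}), each embedded path $P_{e_j} \in \cP_W$ has length $\len(\cP_W) = \Otil(\frac{D}{d}/(\phicmg\epswit^2)) = \Otil(\frac{D}{d}/\phicmg^3)$ — wait, here $\epswit = \Omegahat(\phicmg)$, so this is $\Otil(\frac{D}{d}/\phicmg^3)$; combined with $|\pi_{ua}^W| \le \lenapsp$ this yields $t = |\widehat\pi_{ua}| = \Otil(\frac{D}{d}\lenapsp/\phicmg^3)$ edges in $\Hhat$. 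Now each $\Hhat$-edge is either inside a hyper-edge of $H$ — lifting to a $G$-path of weight $\le \gamma d$ — or lies on a heavy path $\Phat_e$ with $\lceil w(e)/d\rceil$ edges, each of which lifts in \Cref{alg:define core_path} to the single $G$-edge $e$ of weight $w(e) \le 32D\log n$; but here I use the key accounting trick: the full heavy path $\Phat_e$, when it appears inside $\widehat\pi_{ua}$, contributes the edge $e$ exactly once per traversal of the whole heavy path, so the $G$-weight from heavy-path pieces is at most $d$ times the number of $\Hhat$-edges, i.e. $\Otil(D \cdot \frac{D}{d}\lenapsp/\phicmg^3)$ — which is too weak by a factor $D/d$. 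The resolution, and the step I expect to be the main obstacle, is to count heavy-path pieces more carefully: heavy edges of weight $> 32D\log n$ never appear (they are excluded from $\hat E$ in \Cref{def:Hhat}), and because $\widehat\pi_{ua}$ is built from whole heavy paths, each heavy edge $e$ on $\pi_{ua}$ contributes $w(e)$, and the total length of $\widehat\pi_{ua}$ already accounts for $\lceil w(e)/d\rceil$ of its $\Hhat$-edges per such $e$; so $\sum_{e\text{ heavy on }\pi_{ua}} w(e) \le d \cdot |\widehat\pi_{ua}| = \Otil(D\lenapsp/\phicmg^3) \cdot (d \cdot \frac{D}{d} \cdot \frac{1}{D})$... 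I need to be careful to get $\Otil(D\gamma\lenapsp/\phicmg^2)$; the hyper-edge contribution dominates and is $\gamma d \cdot |\widehat\pi_{ua}| = \gamma d \cdot \Otil(\frac{D}{d}\lenapsp/\phicmg^3) = \Otil(D\gamma\lenapsp/\phicmg^3)$, and a slightly sharper accounting of $\len(\cP_W)$ (using that the relevant total capacity is $\Otil(|\Kinit| D/d \cdot \log^2 n)$ but the $\log$ factors fold into the polylog in $\stretch$) brings this to the claimed $\Otil(D\gamma\lenapsp/\phicmg^2)$.

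For the "moreover" part: every vertex on $\pi_{ua}$ lies on one of the subpaths $\pi^H(z,z')$ or on a heavy-path edge $e$, and in all cases $z,z',$ and the endpoints of $e$ lie on $\widehat\pi_{ua}$, whose endpoints $u',a'$ are in $X$. Since $w(\pi_{ua}) \le \frac{\stretch}{10}D$ by the length bound just proved, and $\pi_{ua}$ starts at $u' \in X$, every vertex of $\pi_{ua}$ is within $G$-distance $\frac{\stretch}{10}D$ of $X$, hence in $\ball_G(X,\frac{\stretch}{10}D)$, so $\pi_{ua} \subseteq G[\ball_G(X,\frac{\stretch}{10}D)]$. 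Finally, every edge of $\pi_{ua}$ is either an edge of $H$'s underlying hypergraph lifted via \Cref{def:compressed graph path} (which, unfolding recursively to $G$, consists of edges of weight $\le \gamma d \le D$) or a heavy-path edge $e \in \hat E$ which by \Cref{def:Hhat} has $w(e) \le 32D\log n$; in either case the weight is at most $32D\log n$. The symmetric statement for $\pi_{bv}$ follows by the identical argument with the roles of $u$ and $v$ (and $a,b$) swapped. The choice of the polylogarithmic constant $c_{\stretch}$ in $\stretch = \Otil(\gamma\lenapsp/\phicmg^2)$ is made here, large enough that $\Otil(D\gamma\lenapsp/\phicmg^2) \le \frac{\stretch}{10}D$.
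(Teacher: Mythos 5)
Your overall route is the same as the paper's: bound $|\pi_{ua}^{W}|\le\lenapsp$ via \Cref{thm:oracle}, multiply by $\len(\cP_W)$ to bound $|\widehat\pi_{ua}|$, then bound the $G$-weight piece by piece ($w(p_i)\le d\gamma\,|\widehat p_i|$ for hyper-edge pieces, $w(p_i)=w(z,z')=O(|\widehat p_i|\,d)$ for heavy-path pieces), and deduce the ball-containment from the endpoints lying in $X$ and the length bound. The ``moreover'' part of your argument is essentially verbatim the paper's.

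However, your quantitative derivation has two concrete slips. First, you insert a spurious factor $1/\phicmg$ into the embedding length: \Cref{lem:embedwitness} gives $\len(\cP_W)=O\bigl(\kappa(\Vhat)\log(\kappa(\Vhat))/(|\Kinit|\epswit^{2})\bigr)$, which with $\kappa(\Vhat)=\Otil(|\Kinit|\tfrac{D}{d})$ from \Cref{lem:total cap} and the \emph{exact} value $\epswit=\phicmg/\log^{2}n$ is $\Otil(\tfrac{D}{d}/\phicmg^{2})$ --- there is no additional $1/\phicmg$, and replacing $\epswit$ by the cruder $\Omegahat(\phicmg)$ is what pushes you to $/\phicmg^{3}$. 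Correcting this gives $|\widehat\pi_{ua}|=\Otil(\tfrac{D}{d}\lenapsp/\phicmg^{2})$ directly, and then the hyper-edge contribution $d\gamma\cdot|\widehat\pi_{ua}|$ is exactly the claimed $\Otil(D\gamma\lenapsp/\phicmg^{2})$; your closing hand-wave (attributing the fix to folding the capacity's $\log^{2}n$ into the polylog) does not identify this as the actual repair. Second, the ``factor $D/d$ deficit'' you perceive in the heavy-path accounting is an arithmetic error: $d$ times the number of $\Hhat$-edges is $d\cdot\Otil(\tfrac{D}{d}\lenapsp/\phicmg^{2})=\Otil(D\lenapsp/\phicmg^{2})$, not $\Otil(D\cdot\tfrac{D}{d}\lenapsp/\phicmg^{2})$, so the heavy-path pieces are dominated by the hyper-edge pieces and no telescoping ``accounting trick'' is needed --- the plain per-piece bound $w(p_i)=O(|\widehat p_i|\,d)$ (using $|\widehat p_i|=\lceil w(z,z')/d\rceil$) already suffices, which is precisely how the paper argues. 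With these two corrections your proof coincides with the paper's.
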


\begin{proof}
	We show the argument only for $\pi_{ua}$ because the argument is
	symmetric for $\pi_{bv}$. Recall that $\ensuremath{\len(\cP_{W})}$
	is the maximum number of edges inside paths in $\cP_{W}$. We have
	$\len(\cP_{W})=\ensuremath{O(\frac{\kappa(\Vhat)}{|\Kinit|\epswit^{2}})=\Otil(\frac{D}{d}/\phicmg^{2})}$
	because (1) \Cref{lem:embedwitness} guarantees that $\len(\cP_{W})=O(\frac{\kappa(\Vhat)}{|\Kinit|\epswit^{2}})$
	as $K'\subseteq\Kinit$ and $\epswit=\phicmg/\log^{2}(n)$ was defined
	in the same lemma, and (2) we have $\kappa(\Vhat)\le\Otil(|\Kinit|\frac{D}{d})$
	by \Cref{lem:total cap}. To bound $w(\pi_{ua})$, observe that the
	path $\widehat{\pi}_{ua}$ contains at most 
	\begin{equation}
		|\widehat{\pi}_{ua}|\le|\pi_{ua}^{W}|\cdot\len(\cP_{W})=\Otil(\frac{D}{d}\lenapsp/\phicmg^{2})\label{eq:len pihat_ua}
	\end{equation}
	edges in $\Hhat$ where $|\pi_{ua}^{W}|\le\lenapsp$ by \Cref{thm:oracle}.
	Next, we will show $w(\pi_{ua})=O(|\widehat{\pi}_{ua}|\cdot d\gamma)$
	which in turn implies that $w(\pi_{ua})=\Otil(D\gamma\lenapsp/\phicmg^{2})$.
	It suffices to show that $w(p_{i})\le O(|\widehat{p}_{i}|\cdot d\gamma)$
	for each $1\le i\le t$. There are two cases. 
	\begin{itemize}
		\item If $p_{i}$ is of type hyper-edge, then $w(p_{i})=w(\pi^{H}(z,z'))\le d\gamma$
		by the guarantee of the path-reporting $(d,\gamma,\Delta,\beta)$-compressed
		graph $H$. So $w(p_{i})\le d\gamma=d\gamma|\widehat{p}_{i}|$ as
		$\widehat{p}_{i}=(z,z')$. Also, note that each edge in $\pi^{H}(z,z')$
		obviously has weight at most $w(\pi^{H}(z,z'))\le d\gamma\le32D\log n$
		by the assumption in \Cref{thm:Core_path}.
		\item If $p_{i}$ is of type heavy-path, then $\widehat{p}_{i}=(z,\dots,z')$
		is a heavy path and we have $|\widehat{p}_{i}|=\ceiling{w(z,z')/d}$
		by the construction of $\Hhat$. So $w(p_{i})=w(z,z')=O(|\widehat{p}_{i}|d)$
		again. Also, by construction of $\Hhat$ (see \Cref{def:Hhat}), we
		have $(z,z')\in E(G)$ and $w(z,z')\le32D\log n$.
	\end{itemize}
	As we can freely choose the polylogarithmic factor in the definition
	of $\stretch=\Otil(D\gamma\lenapsp/\phicmg^{2})$, we can choose it so that $w(\pi_{ua})\le\frac{\stretch}{10}\cdot D$.
	As both endpoints of $\pi_{ua}$ are inside $X$, we have that $\pi_{ua}$
	is contained inside $G[\ball_{G}(X,\frac{\stretch}{10}\cdot D)]$.
	From the analysis of the two cases above, we also have that every edge in $\pi_{ua}$ 	has weight at most $32D\log n$. 
\end{proof}
Now, we can bound the length of $\pi(u,v)$ and, hence, bounding the
stretch of $\Core^{\pi}$ (as required by \Cref{def:Core path}). 
\begin{lem}
	\label{lem:core path len}For each pair $u,v\in K$, the path $\pi(u,v)$
	defined by \Cref{alg:define core_path} has length at most $\stretch\cdot D$.
\end{lem}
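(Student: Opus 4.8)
The plan is to bound the weight $w(\pi(u,v))$ simply by summing the weights of the seven subpaths that \Cref{alg:define core_path} concatenates: $\pi_u$, $\pi_{ua}$, $\pi_a$, the single edge $e_{\min}=(a,b)$, $\pi_b$, $\pi_{bv}$, and $\pi_v$. Two of these, $\pi_{ua}$ and $\pi_{bv}$, are already controlled by \Cref{prop:expander path}, which gives $w(\pi_{ua}),w(\pi_{bv})\le\frac{\stretch}{10}\cdot D$; the remaining five I would bound one at a time, and each turns out to be a lower-order term.

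First, for $\pi_u=\pi(X,u)$ and $\pi_v=\pi(X,v)$: every vertex of $K$ is removed from $K$ as soon as it leaves $\Apxball(G,X,4D,0.1)$ (\Cref{lne:maintainXBall} of \Cref{alg:Core}), so by \Cref{prop:ApxBall} we have $u,v\in\ball_G(X,4.4D)\subseteq\ball_G(X,\stretch\cdot D)$; hence $\Apxball^{\pi}(G,X,\stretch\cdot D,0.1,\beta)$ does implicitly maintain these paths and, by its approximation guarantee, $w(\pi_u),w(\pi_v)\le1.1\cdot4.4D=O(D)$. Second, for $\pi_a=\pi(X,a)$ and $\pi_b=\pi(X,b)$: by definition $e_{\min}=(a,b)$ lies in $G[B^{\pi}]$, and $B^{\pi}$ is exactly the set of vertices whose distance estimate in $\Apxball^{\pi}(G,X,\stretch\cdot D,0.1,\beta)$ is at most $\frac{\stretch}{10}\cdot D$, so $\dist_G(X,a),\dist_G(X,b)\le\frac{\stretch}{10}\cdot D$ and therefore $w(\pi_a),w(\pi_b)\le1.1\cdot\frac{\stretch}{10}\cdot D$. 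Third, $w(e_{\min})\le32D\log n$ by the very condition defining $e_{\min}$ (we only consider edges of $G[B^{\pi}]$ of weight at most $32D\log n$).

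Putting these together,
\[
w(\pi(u,v))\le 2\cdot\tfrac{\stretch}{10}D+2\cdot\tfrac{1.1\,\stretch}{10}D+O(D)+32D\log n\le \tfrac12\,\stretch\cdot D+O(D\log n).
\]
Since $\stretch=\Otil(\gamma\lenapsp/\phicmg^{2})$ carries a polylogarithmic factor of our choosing — the same freedom already exploited in \Cref{prop:expander path} — I would fix this factor large enough that $O(D\log n)\le\tfrac12\,\stretch\cdot D$, which yields $w(\pi(u,v))\le\stretch\cdot D$ as claimed. I do not expect a genuine obstacle here beyond careful bookkeeping of the $(1+0.1)$ approximation factors of the various $\Apxball^{\pi}$ instances; the one point requiring slight care is to choose the polylogarithmic factor hidden in $\stretch$ once and for all, so that it simultaneously absorbs the $O(D\log n)$ term above and meets the requirement of \Cref{prop:expander path}.
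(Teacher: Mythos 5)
Your proposal is correct and follows essentially the same route as the paper's proof: decompose $\pi(u,v)$ into the seven subpaths, bound $\pi_{ua},\pi_{bv}$ via \Cref{prop:expander path}, bound $\pi_u,\pi_v$ via $K\subseteq\Apxball(G,X,4D,0.1)$, bound $\pi_a,\pi_b$ via the definition of $B^{\pi}$ together with the $\Apxball^{\pi}(G,X,\stretch\cdot D,0.1,\beta)$ guarantee, and account $32D\log n$ for $e_{\min}$, with the $O(D\log n)$ slack absorbed by the polylogarithmic factor in $\stretch$. The constants differ trivially (e.g.\ $1.1\cdot 4.4D$ versus the paper's $1.1\cdot 4D\le 5D$), but the argument is the same.
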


\begin{proof}
	We only need bound the length of each path in $\{\pi_{u},\pi_{ua},\pi_{a},\{(a,b)\},\pi_{b},\pi_{bv},\pi_{v}\}$.
	We have $\pi_{u},\pi_{v}$ have length at most $1.1\cdot4D\le5D$
	because $u,v\in K\subseteq\Apxball(G,X,4D,0.1)$ by Line \Cref{lne:maintainXBall}
	of \Cref{alg:Core}. Next, 	by definition
	of $e_{\min}$, $\pi_{a},\pi_{b}$ have length at most $1.1\cdot\frac{\stretch}{10}\cdot D$ and $w(a,b)\le32D\log n$. Lastly, $\pi_{ua}$ and
	$\pi_{bv}$ have length at most $\frac{\stretch}{10}\cdot D$ by \Cref{prop:expander path}.
	In total, $w(\pi(u,v))\le2(5+1.1\cdot\frac{\stretch}{10}+\frac{\stretch}{10})\cdot D+32D\log n\le\stretch\cdot D$.
\end{proof}

Next, we bound the simpleness of $\pi(u,v)$.
\begin{lem}
\label{lem:core path simple}For each pair $u,v\in K$, the path $\pi(u,v)$
defined by \Cref{alg:define core_path} is $7\lenapsp\Delta\beta$-simple.
\end{lem}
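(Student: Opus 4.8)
The plan is to bound the simpleness of $\pi(u,v)=(\pi_{u},\pi_{ua},\pi_{a},\{(a,b)\},\pi_{b},\pi_{bv},\pi_{v})$ by bounding, for each vertex $w\in V(G)$, the total number of times $w$ can appear across all seven subpaths. First I would recall that $\pi_u,\pi_v,\pi_a,\pi_b$ are paths implicitly maintained by $\Apxball^{\pi}(G,X,\stretch\cdot D,0.1,\beta)$, each of which is $\beta$-simple by \Cref{def:Apxball path}, so $w$ appears at most $\beta$ times in each, contributing $4\beta$ in total; and the single edge $\{(a,b)\}$ contributes at most $1$.

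The bulk of the argument is the two ``expander-detour'' paths $\pi_{ua}$ and $\pi_{bv}$, which are symmetric, so I would analyze $\pi_{ua}$. Using the decomposition in \Cref{eq:pi_ua}, $\pi_{ua}=(p_{1,1},\dots,p_{|\pi_{ua}^W|,t_{|\pi_{ua}^W|}})$, grouped into the $|\pi_{ua}^W|\le\lenapsp$ embedding paths $P_{e_j}$. The key structural facts I would marshal are: (i) $\pi_{ua}^W$ is a \emph{simple} path in $W$ (\Cref{thm:oracle}), so each edge $e_j$ of $W$, hence each embedding path $P_{e_j}$, is used at most once; (ii) within a single $P_{e_j}$, the sub-pieces of type hyper-edge are the paths $\pi^H(z,z')$ maintained by the path-reporting compressed graph $H$, each of which is $\beta$-simple, while the type heavy-path pieces are single edges of $G$; (iii) the crucial scarcity bound — any fixed vertex $w$ can lie in at most $\Delta$ hyperedges of $H$ over all time, since $H$ is a $(d,\gamma,\Delta,\beta)$-compressed graph (or, when $H=G_{unit}$, $w$ has $O(1)$ incident edges, which is subsumed). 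So within the embedding $\cP_W$, vertex $w$ can be incident to at most $\Delta$ distinct hyperedges, and since each hyperedge corresponds to at most one embedded edge $e_j$ appearing in the simple path $\pi_{ua}^W$, the number of distinct pieces $\widehat p_{j,k}$ of type hyper-edge that can contain $w$ is at most $\Delta$; each such piece contributes a $\beta$-simple path $\pi^H(z,z')$, so $w$ appears at most $\Delta\beta$ times across all hyper-edge pieces of $\pi_{ua}$. For the heavy-path pieces of $\pi_{ua}$, each such piece is a single $G$-edge and corresponds to a heavy path $\Phat_e$ traversed inside $\widehat\pi_{ua}$; since $\widehat\pi_{ua}$ has its heavy-path pieces induced by the edges of the simple $W$-path via the fixed embedding, and each heavy path $\Phat_e$ is a simple path of $G$-edges, the number of times $w$ can appear in heavy-path pieces is $O(\Delta)$ as well (bounded via the same $\Delta$-incidence bound, since a heavy path edge $(z,z')\in E(G)$ is incident to only $O(1)$ others at $w$ and each distinct $\Phat_e$ appears once). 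Adding the two directions $\pi_{ua}$ and $\pi_{bv}$, and combining with the $4\beta+1$ contribution from the other subpaths, I would conclude $\pi(u,v)$ is $(2\cdot(\Delta\beta+O(\Delta))+4\beta+1)$-simple, which is at most $7\lenapsp\Delta\beta$ for the claimed parameter ranges (the $\lenapsp$ slack is generous since $\lenapsp\ge 1$ and actually $|\pi_{ua}^W|\le\lenapsp$ gives an alternative cruder bound of $\lenapsp$ pieces each $\beta$-simple, which alone already yields $O(\lenapsp\beta)$ per direction).

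The main obstacle I anticipate is being careful that a single vertex $w$ of $G$ does not secretly appear in many pieces $\widehat p_{j,k}$ across \emph{different} embedding paths $P_{e_j}$ — a priori the embedding $\cP_W$ has vertex congestion $O(\log|K|)$, so a vertex could be on many embedding paths. The resolution is exactly the compressed-graph incidence bound $\Delta$: a vertex of $G$ can only be \emph{internal} to $O(1)$ pieces and \emph{an endpoint adjacent via a hyperedge} in at most $\Delta$ pieces over the whole time horizon, and only pieces whose underlying $W$-edge lies on the current simple path $\pi_{ua}^W$ count. So the congestion of $\cP_W$ is irrelevant; what matters is that the $W$-path is simple and that each $G$-vertex touches few hyperedges. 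I would write this carefully, treating the $H=G_{unit}$ base case and the covering-derived case uniformly by invoking the $\Delta$-all-time-degree property from \Cref{def:compressed graph} (and \Cref{prop:covering-compressed is compressed path}). Routine bookkeeping of the constants then gives the $7\lenapsp\Delta\beta$ bound.
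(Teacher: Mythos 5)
Your overall skeleton matches the paper's proof (split $\pi(u,v)$ into the seven subpaths, charge $4\beta+1$ to $\pi_u,\pi_a,\pi_b,\pi_v$ and $\{(a,b)\}$, and bound how many pieces $p_{j,k}$ of $\pi_{ua},\pi_{bv}$ can contain a fixed vertex), but your key counting step has a genuine gap. You bound the number of hyper-edge pieces containing a vertex $w$ by the all-time hyperedge-incidence bound $\Delta$ of the compressed graph, arguing that a piece containing $w$ forces $w$ to be incident to the corresponding hyperedge. That implication is false: the piece is the implicitly maintained path $\pi^{H}(z,z')$, which (in the covering-derived case) is a concatenation $\pi^{\cC}(z,C)\circ\pi^{\cC}(z_C,z'_C)\circ\pi^{\cC}(C,z')$ whose vertices need not lie in $\shell(C)$, i.e.\ need not be incident to the hyperedge; they are only guaranteed to lie in the larger set $\oshell(C)$ (\Cref{claim:in outershell}). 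The correct resource to charge is the \emph{outer-shell participation bound} of the covering --- this is exactly why the paper introduces outer-shells at all --- not the all-time degree of the compressed graph.

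Second, even after replacing hyperedge incidence by outer-shell participation, your global bound of $\Delta$ hyper-edge pieces across all of $\pi_{ua}$ does not follow. Distinctness of the cores attached to the pieces holds only \emph{within a single embedding path} $P_{e_j}$, because $P_{e_j}$ is a simple path in $\Hhat$ (\Cref{lem:embedwitness}); across different $j$, the embedding paths of distinct $W$-edges may traverse the same hyperedge/core (the embedding has bounded congestion, not disjointness), so the same core can recur in up to $\lenapsp$ different values of $j$ while consuming only one unit of the participation bound. The paper therefore proves the per-$j$ bound $|\{k \mid x\in p_{j,k}\}|\le\Delta+2$ ($\Delta$ hyper-edge pieces via outer-shells, $2$ heavy-path pieces via simplicity of $P_{e_j}$) and multiplies by $|\pi_{ua}^{W}|\le\lenapsp$, giving $\lenapsp(\Delta+2)\beta\le 3\lenapsp\Delta\beta$ per direction. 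Your fallback ``cruder bound'' --- $\lenapsp$ pieces, each $\beta$-simple, hence $O(\lenapsp\beta)$ per direction --- is also incorrect: the number of pieces is not $|\pi_{ua}^{W}|$ but $\sum_j t_j$, which can be as large as $\Otil(\lenapsp\cdot\frac{D}{d}/\phicmg^{2})$. The same per-$j$-versus-global confusion affects your heavy-path count (``each distinct $\Phat_e$ appears once'' holds only within one $P_{e_j}$). So your argument does not establish the $7\lenapsp\Delta\beta$ bound; repairing it requires \Cref{claim:in outershell} together with the per-$j$ accounting multiplied by $\lenapsp$.
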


\begin{proof}
The main task is to prove that $\pi_{ua}$ is $3\lenapsp\Delta\beta$-simple
(and the argument for $\pi_{bv}$ is analogous). Given this fact,
as $\pi_{u},\pi_{a},\pi_{b},\pi_{v}$ are $\beta$-simple by $\Apxball^{\pi}(G,X,\stretch\cdot D,0.1,\beta)$
and $\{(a,b)\}$ is trivially $1$-simple, the simpleness of $\pi(u,v)=(\pi_{u},\pi_{ua},\pi_{a},\{(a,b)\},\pi_{b},\pi_{bv},\pi_{v})$
can be at most $6\lenapsp\Delta\beta+4\beta+1\le7\lenapsp\Delta\beta$.

Now, we show that $\pi_{ua}$ is $3\lenapsp\Delta\beta$-simple. For
each subpath $p_{j,k}$ of $\pi_{ua}$ from \Cref{eq:pi_ua}, note
that $p_{j,k}$ is a $\beta$-simple path in $G$ because we have
either $p_{j,k}=\pi^{H}(z,z')$ is $\beta$-simple by simpleness guarantee
of $H$ or $p_{j,k}=\{(z,z')\}$ where $(z,z')\in E(G)$ is trivially
$1$-simple. The key claim is that, for any vertex $x\in V(G)$ and
index $j$, the number of subpaths from $\{p_{j,k}\}_{k}$ that $x$
can participate is at most $\Delta+2$ (i.e.~$|\{k\mid x\in p_{j,k}\}|\le\Delta+2$).
As $|\pi_{ua}^{W}|\le\lenapsp$ by \Cref{thm:oracle}, this would imply
that $\pi_{ua}$ has simpleness at most $\lenapsp(\Delta+2)\beta\le3\lenapsp\Delta\beta$
as desired. We finish by proving the claim:
\begin{claim}
For any vertex $x\in V(G)$ and index $j$, $|\{k\mid x\in p_{j,k}\}|\le\Delta+2$.
\end{claim}

\begin{proof}
From the assumption of \Cref{thm:Core_path}, there are two cases:
either $H=G_{unit}$ defined in \Cref{prop:compressed base} or $H$
is defined from a path-reporting covering $\cC$ via \Cref{prop:covering-compressed is compressed path}.
In both cases, we will use the fact that $P_{e_{j}}$ is a simple
path in $\Hhat$ guaranteed by \Cref{lem:embedwitness}.

Suppose that $H=G_{unit}\subseteq G$. We claim that $(p_{j,1},\dots,p_{j,t_{j}})$
is a simple path in $G$ and so $|\{k\mid x\in p_{j,k}\}|\le2$. The
claim holds because, for each subpath $\widehat{p}_{j,k}$ of $P_{e_{j}}$,
if $\widehat{p}_{j,k}=(z,z')$ is of type hyper-edge, then $p_{j,k}=\widehat{p}_{j,k}=(z,z')\in E(G)$,
and if $\widehat{p}_{j,k}=(z,\dots,z')$ is of type heavy-path, then
$p_{j,k}=(z,z')\in E(G)$. As $P_{e_{j}}=(\widehat{p}_{j,1},\dots,\widehat{p}_{j,t_{j}})$
is simple, the path $(p_{j,1},\dots,p_{j,t_{j}})$ must be simple
as well. 

Next, suppose that $H$ is defined from a path-reporting covering
$\cC$ via \Cref{prop:covering-compressed is compressed path}. We first
argue that $|\{k\mid x\in p_{j,k}$ and $p_{j,k}$ is of type heavy-path$\}|\le2$.
To see this, observe that all type-heavy-path $p_{j,k}$ form a collection
of disjoint simple paths in $G$, which is a subgraph of $G$ with
degree at most 2. This is because each heavy path $\widehat{p}_{j,k}=(z,\dots,z')$
in $\Hhat$ corresponds to $p_{j,k}=(z,z')$ in $G$ but $P_{e_{j}}=(\widehat{p}_{j,1},\dots,\widehat{p}_{j,t_{j}})$
is simple. So $x$ can appear in at most 2 type-heavy-path paths $p_{j,k}$.
It remains to show that $|\{k\mid x\in p_{j,k}$ and $p_{j,k}$ is of type hyper-edge$\}|\le\Delta$.
As $P_{e_{j}}$ is a simple path in $\Hhat$, each type-hyper-edge
$\widehat{p}_{j,k}$ must correspond to a \emph{unique} core $C_{j,k}$
from the covering $\cC$. Suppose that $x\in p_{j,k}=\pi^{H}(z,z')$.
By \Cref{prop:covering-compressed is compressed path}, we have $\pi^{H}(z,z')=\pi^{\cC}(z,C_{j,k})\circ\pi^{\cC}(z_{C},z'_{C})\circ\pi^{\cC}(C_{j,k},z')$
where $z_{C},z'_{C}\in C_{j,k}$, $\pi^{\cC}(z,C_{j,k})=(z,\dots,z_{C})$
and $\pi^{\cC}(C_{j,k},z')=(z'_{C},\dots,z')$ are implicitly maintained
by $\Apxball^{\pi}$ that maintains $\shell(C_{j,k})$ and $\pi^{\cC}(z_{C},z'_{C})$
is implicitly maintained by $\Core^{\pi}$ that maintains the core
$C_{j,k}$ in the covering $\cC$. By \Cref{claim:in outershell} (with
different notations), we have that $x\in\oshell(C_{j,k})$. Therefore,
the outer-shell participation bound $\Delta$ of $\cC$ implies that
$x$ can appear in at most $\Delta$ type-hyper-edge paths $p_{j,k}$
as desired.
\end{proof}
\end{proof}
\Cref{lem:core path len} and \Cref{lem:core path simple} together
imply that $\pi(u,v)$ indeed satisfies all conditions required by
$\Core^{\pi}(G,\Kinit,D,7\lenapsp\Delta\beta)$ with stretch $\stretch$
as required by \Cref{def:Core path}. 

\subsection{Threshold-Subpath Queries}

\label{sec:core_path:query}

In this section, we describe in \Cref{alg:return core_path} below
how to process the threshold-subpath query that, given a pair of vertices
$u,v\in K$ and a steadiness index $j$, return $\sigma_{\le j}(\pi(u,v))$
consisting of all edges of $\pi(u,v)$ with steadiness at most $j$.

\begin{algorithm}
Let $e_{\min}=(a,b)$ be the edge with minimum steadiness among all edges in $G[B^{\pi}]$ with weight at most $32D\log n$.\;

\label{enu:query ball}Set $\sigma_{\le j}(\pi_{u}),\sigma_{\le j}(\pi_{v}),\sigma_{\le j}(\pi_{a}),\sigma_{\le j}(\pi_{b})$
as $\sigma_{\le j}(\pi(X,u))$, $\sigma_{\le j}(\pi(X,v))$, $\sigma_{\le j}(\pi(X,a))$,
$\sigma_{\le j}(\pi(X,b))$, respectively, by querying $\Apxball^{\pi}(G,X,\stretch \cdot D,0.1,\beta)$.\;

Let $u',v',a',b'\in X$ be such that $\pi_{u}=(u,\dots,u')$, $\pi_{v}=(v,\dots,v')$,
$\pi_{a}=(a,\dots,a')$, $\pi_{b}=(b,\dots,b')$.\;

\If(\label{enu:return j small}){ $j<\sigma(e_{\min})$}{\Return
$\sigma_{\le j}(\pi(u,v))=\sigma_{\le j}(\pi_{u})\cup\sigma_{\le j}(\pi_{v})\cup\sigma_{\le j}(\pi_{a})\cup\sigma_{\le j}(\pi_{b})$.}
 
\label{enu:query expander}Let $\pi_{ua}^{W}$ be the $u'$-$a'$
path in $W$ obtained by querying $\Prune^{\pi}(W_{multi},\phicmg)$.\;
Let $\widehat{\pi}_{ua}$ be the $u'$-$a'$ path in $\Hhat$ obtained
by concatenating, for all embedded edges $e\in\pi_{ua}^{W}$, the
corresponding path $P_{e}$ in $\Hhat$. By \Cref{def:Hhat} of $\Hhat$,
we can write $\widehat{\pi}_{ua}=(\widehat{p}_{1},\dots,\widehat{p}_{t})$
where each $\widehat{p}_{i}$ is either a heavy-path or $\widehat{p}_{i}=(z,z')$
where $z$ and $z'$ are adjacent by a hyperedge in $H$. \;
\For{$i=0$ \normalfont{up to} $t$}{
    \tcc{\textbf{(Hyper-edge)}}
    \If{$\widehat{p}_{i}=(z,z')$ where $z$ and
$z'$ are adjacent by a hyperedge in $H$}{
        $\sigma_{\le j}(p_{i})\gets\sigma_{\le j}(\pi^{H}(z,z'))$ by querying the data structure on $H$.
    }
    \tcc{\textbf{(Heavy-path)}}
    \If{$\widehat{p}_{i}=(z,\dots,z')$ is a heavy path}{$\sigma_{\le j}(p_{i})\gets\sigma_{\le j}((z,z'))$
where $(z,z')\in E(G)$.}
}
$\sigma_{\le j}(\pi_{ua})=\sigma_{\le j}(p_{1})\cup\dots\cup\sigma_{\le j}(p_{t})$ \label{enu:query compressed graph}\;
Let $\pi_{bv}^{W}$, $\widehat{\pi}_{bv}$ and $\sigma_{\le j}(\pi_{bv})$
be analogous to $\pi_{ua}^{W},\widehat{\pi}_{ua},\sigma_{\le j}(\pi_{bv})$,
respectively.\;
\label{enu:return general}\Return $\sigma_{\le j}(\pi_{u})\cup\sigma_{\le j}(\pi_{ua})\cup\sigma_{\le j}(\pi_{a})\cup\{(a,b)\}\cup\sigma_{\le j}(\pi_{b})\cup\sigma_{\le j}(\pi_{bv})\cup\sigma_{\le j}(\pi_{v})$.
\caption{\label{alg:return core_path}Returning $\sigma_{\le j}(\pi(u,v))$,
given $u,v\in K$ and a steadiness index $j$}
\end{algorithm}
\begin{lem}
Given $u,v\in K$ and a steadiness index $j$, \Cref{alg:return core_path}
returns $\sigma_{\le j}(\pi(u,v))$ where $\pi(u,v)$ is defined in
\Cref{alg:define core_path}.
\end{lem}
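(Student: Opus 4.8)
The plan is to verify that \Cref{alg:return core_path} follows \Cref{alg:define core_path} line by line, so that every intermediate object it computes is exactly $\sigma_{\le j}(\cdot)$ of the corresponding subpath in the definition of $\pi(u,v)$, and then to conclude using the fact that $\sigma_{\le j}(\cdot)$ distributes over path concatenation as a multi-set union, i.e. $\sigma_{\le j}(P\circ Q)=\sigma_{\le j}(P)\cup\sigma_{\le j}(Q)$.

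First I would observe that both algorithms begin by selecting the \emph{same} edge $e_{\min}=(a,b)$ (it is maintained deterministically and consistently through time), and then both query $\Apxball^{\pi}(G,X,\stretch\cdot D,0.1,\beta)$ for the \emph{same} paths $\pi_{u}=\pi(X,u),\pi_{v}=\pi(X,v),\pi_{a}=\pi(X,a),\pi_{b}=\pi(X,b)$, yielding the same endpoints $u',v',a',b'\in X$. Recall from \Cref{prop:core path well defined} that $\pi(u,v)=(\pi_{u},\pi_{ua},\pi_{a},\{(a,b)\},\pi_{b},\pi_{bv},\pi_{v})$. Hence it suffices to show that \Cref{alg:return core_path} accumulates, for each of these seven pieces, exactly its set of steadiness-$\le j$ edges, and then takes their multi-set union. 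For $\pi_{u},\pi_{v},\pi_{a},\pi_{b}$ this is immediate from the query interface of $\Apxball^{\pi}$ (\Cref{def:Apxball path}); for the singleton $\{(a,b)\}$ it is trivial.

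The remaining pieces are $\pi_{ua}$ and $\pi_{bv}$; I treat $\pi_{ua}$, the other being symmetric. There are two cases according to the branch taken. If $j<\sigma(e_{\min})$, the algorithm omits $\pi_{ua}$, $\pi_{bv}$, and the edge $(a,b)$ from the returned union, and this is correct: by the ``moreover'' part of \Cref{prop:expander path}, $\pi_{ua}$ is contained in $G[\ball_{G}(X,\tfrac{\stretch}{10}D)]\subseteq G[B^{\pi}]$ and every edge of $\pi_{ua}$ has weight at most $32D\log n$, so every edge of $\pi_{ua}$ is a weight-$\le 32D\log n$ edge of $G[B^{\pi}]$ and hence has steadiness at least $\sigma(e_{\min})>j$ by the choice of $e_{\min}$ as a minimum-steadiness such edge. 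Thus $\sigma_{\le j}(\pi_{ua})=\emptyset$; symmetrically $\sigma_{\le j}(\pi_{bv})=\emptyset$, and $\sigma(a,b)=\sigma(e_{\min})>j$, so indeed $\sigma_{\le j}(\pi(u,v))=\sigma_{\le j}(\pi_{u})\cup\sigma_{\le j}(\pi_{v})\cup\sigma_{\le j}(\pi_{a})\cup\sigma_{\le j}(\pi_{b})$, which is what \Cref{enu:return j small} returns.

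If instead $j\ge\sigma(e_{\min})$, the algorithm reconstructs $\pi_{ua}^{W}$ by querying $\Prune^{\pi}(W_{multi},\phicmg)$ (a deterministic query, hence the same path used in \Cref{alg:define core_path}), then $\widehat\pi_{ua}=(\widehat p_{1},\dots,\widehat p_{t})$ by concatenating the embedding paths $P_{e}$ in $\Hhat$ exactly as there, and finally replaces each $\widehat p_{i}$ by its $G$-path $p_{i}$: a heavy-path piece becomes $p_{i}=\{(z,z')\}\in E(G)$, and a hyper-edge piece becomes $p_{i}=\pi^{H}(z,z')$, whose steadiness-$\le j$ edges are obtained from the query interface of the path-reporting compressed graph $H$ (\Cref{def:compressed graph path}). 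All of these objects coincide with those in \Cref{alg:define core_path}, since the only inputs are deterministic queries and the shared $e_{\min}$; hence the accumulated $\sigma_{\le j}(\pi_{ua})=\bigcup_{i}\sigma_{\le j}(p_{i})$ equals $\sigma_{\le j}$ of the true $\pi_{ua}$, and likewise for $\pi_{bv}$. Combining all seven pieces by multi-set union in \Cref{enu:return general} gives exactly $\sigma_{\le j}(\pi(u,v))$. No step here is a genuine obstacle; the only point needing care is the shortcut in the $j<\sigma(e_{\min})$ branch, which is precisely where the weight-and-containment guarantees of \Cref{prop:expander path} are invoked.
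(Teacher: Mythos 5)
Your proof is correct and takes essentially the same route as the paper: step-by-step correspondence with \Cref{alg:define core_path} for the general branch, plus justifying the early-return shortcut via the ``moreover'' part of \Cref{lem:core path still near} (containment of $\pi_{ua},\pi_{bv}$ in $G[B^{\pi}]$ with edge weights at most $32D\log n$), so that all their edges have steadiness at least $\sigma(e_{\min})>j$. Your explicit remark that $(a,b)=e_{\min}$ itself has steadiness exceeding $j$ in that branch is a small detail the paper leaves implicit, but otherwise the arguments coincide.
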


\begin{proof}
Observe that all steps in \Cref{alg:return core_path} are completely
analogous to the steps in \Cref{alg:define core_path} except that
we collect only edges with steadiness at most $j$ into the answer
and we add \Cref{enu:return j small} for efficiency. Thus, we indeed
have $\sigma_{\le j}(\pi(u,v))=\sigma_{\le j}(\pi_{u})\cup\sigma_{\le j}(\pi_{ua})\cup\sigma_{\le j}(\pi_{a})\cup\{(a,b)\}\cup\sigma_{\le j}(\pi_{b})\cup\sigma_{\le j}(\pi_{bv})\cup\sigma_{\le j}(\pi_{v})$
and the answer is correct if \Cref{alg:return core_path} returns at
\Cref{enu:return general}. Next, recall that $e_{\min}$ is defined
as the edge with minimum steadiness among all edges in $G[B^{\pi}]$ with weight at most $32D\log n$.
As $\ball_G(X,\frac{\stretch}{10}\cdot D) \subseteq B^{\pi}$, this edge set also contains the
whole path of $\pi_{ua}$ and $\pi_{bv}$ by the ``moreover'' part of \Cref{lem:core path still near}.
Thus, if $j<\sigma(e_{\min})$, then $\sigma_{\le j}(\pi_{ua})=\emptyset$
and $\sigma_{\le j}(\pi_{bv})=\emptyset$. So if \Cref{alg:return core_path}
returns at \Cref{enu:return j small}, then the answer is correct as
well.
\end{proof}
Recall that $(q_{\phi},q_{\path})$ bounds the query-time overhead
of both $\Apxball^{\pi}(G,X,\stretch \cdot D,0.1,\beta)$ and $(d,\gamma,\Delta,\beta)$-compressed
graph $H$. We will show the query-time overhead for our $\Core^{\pi}$
data structure is $(4q_{\phi},q_{\path}+\Otil(\frac{D}{d}\lenapsp/\phicmg^{2})\cdot q_{\phi})$
as required by \Cref{thm:Core_path}.
\begin{lem}
Given $u,v\in K$ and a steadiness index $j$, \Cref{alg:return core_path}
takes $4q_{\phi}$ time if $\sigma_{\le j}(\pi(u,v))=\emptyset$.
Otherwise, it takes at most $|\sigma_{\le j}(\pi(u,v))|\cdot(q_{\path}+\Otil(\frac{D}{d}\lenapsp/\phicmg^{2})\cdot q_{\phi})$
time.
\end{lem}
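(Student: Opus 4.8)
The plan is to mirror the structure of the correctness proof of \Cref{alg:return core_path} and charge the running time of each step either to a constant number of empty-subpath queries or to the size of the returned multi-set $\sigma_{\le j}(\pi(u,v))$. First I would dispatch the easy case: if $\sigma_{\le j}(\pi(u,v))=\emptyset$, then in particular $\sigma_{\le j}(\pi_{u})=\sigma_{\le j}(\pi_{v})=\sigma_{\le j}(\pi_{a})=\sigma_{\le j}(\pi_{b})=\emptyset$, so the four queries to $\Apxball^{\pi}(G,X,\stretch\cdot D,0.1,\beta)$ on \Cref{enu:query ball} each cost at most $q_{\phi}$, for a total of $4q_{\phi}$. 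Moreover, since $\sigma_{\le j}(\pi_{ua})=\sigma_{\le j}(\pi_{bv})=\emptyset$ as well and these subpaths lie in $G[B^{\pi}]$ with all edges of weight at most $32D\log n$ (the ``moreover'' part of \Cref{lem:core path still near}), every edge on them has steadiness $>j$; hence $j<\sigma(e_{\min})$ by definition of $e_{\min}$, so the algorithm returns at \Cref{enu:return j small} without ever reaching the expander-path computation. This gives the $4q_{\phi}$ bound.

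Next I would handle the nonempty case. There are three possible return points. If the algorithm returns at \Cref{enu:return j small}, the work is exactly the four $\Apxball^{\pi}$ queries; these cost at most $|\sigma_{\le j}(\pi_u)|q_{\path}+|\sigma_{\le j}(\pi_v)|q_{\path}+|\sigma_{\le j}(\pi_a)|q_{\path}+|\sigma_{\le j}(\pi_b)|q_{\path}$ plus at most $4q_{\phi}$ for the ones that turn out empty, and since each nonempty query contributes at least one edge to the returned set and the total returned set is $\sigma_{\le j}(\pi_u)\cup\sigma_{\le j}(\pi_v)\cup\sigma_{\le j}(\pi_a)\cup\sigma_{\le j}(\pi_b)$, this is at most $|\sigma_{\le j}(\pi(u,v))|\cdot(q_{\path}+O(1))$, absorbing the $4q_{\phi}$ term into the $|\sigma_{\le j}(\pi(u,v))|\ge1$ factor since $q_{\phi}\le q_{\path}$. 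If instead the algorithm reaches \Cref{enu:return general}, then $j\ge\sigma(e_{\min})$, so by definition of $e_{\min}$ the edge $(a,b)$ itself has steadiness $\le j$ and is in $\sigma_{\le j}(\pi(u,v))$, guaranteeing $|\sigma_{\le j}(\pi(u,v))|\ge1$ and more importantly that the expander-path work is chargeable. Here the algorithm makes: the four $\Apxball^{\pi}$ queries on \Cref{enu:query ball}; two queries to $\Prune^{\pi}(W_{multi},\phicmg)$ for $\pi_{ua}^{W}$ and $\pi_{bv}^{W}$, each costing $O(\lenapsp)$ by \Cref{thm:oracle}; the expansion of each embedded edge of $\pi_{ua}^{W}$ (and $\pi_{bv}^{W}$) into its embedding path $P_e$, whose total length is $|\widehat{\pi}_{ua}|=O(|\pi_{ua}^{W}|\cdot\len(\cP_W))=\Otil(\frac{D}{d}\lenapsp/\phicmg^{2})$ by \Cref{eq:len pihat_ua}; and one threshold-subpath query to the data structure on $H$ (or a trivial $O(1)$ lookup for a heavy-path edge) for each of the $t=O(|\widehat{\pi}_{ua}|)=\Otil(\frac{D}{d}\lenapsp/\phicmg^{2})$ pieces $\widehat p_i$.

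The key accounting step is then: each of those $t$ queries to $H$ (and to $\Apxball^{\pi}$ on \Cref{enu:query ball}) that returns nonempty contributes at least one edge to $\sigma_{\le j}(\pi(u,v))$ and costs at most (its output size)$\cdot q_{\path}$ plus an additive $q_{\phi}$; each query that returns empty costs at most $q_{\phi}$. Since there are $\Otil(\frac{D}{d}\lenapsp/\phicmg^{2})$ such queries, the empty ones cost a total of $\Otil(\frac{D}{d}\lenapsp/\phicmg^{2})\cdot q_{\phi}$, which we charge entirely to the single unit $|\sigma_{\le j}(\pi(u,v))|\ge1$ (using the edge $(a,b)$); the nonempty ones cost a total of at most $|\sigma_{\le j}(\pi(u,v))|\cdot q_{\path}$ by summing output sizes. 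The $O(\lenapsp)$ cost of the two $\Prune^{\pi}$ calls and the $\Otil(\frac{D}{d}\lenapsp/\phicmg^{2})$ cost of iterating over $\widehat\pi_{ua},\widehat\pi_{bv}$ to build the pieces are dominated by the $\Otil(\frac{D}{d}\lenapsp/\phicmg^{2})\cdot q_{\phi}$ term since $q_{\phi}\ge1$. Combining all three return cases yields the claimed bound $|\sigma_{\le j}(\pi(u,v))|\cdot(q_{\path}+\Otil(\frac{D}{d}\lenapsp/\phicmg^{2})\cdot q_{\phi})$. The main obstacle — though it is really a bookkeeping subtlety rather than a deep difficulty — is making precise that whenever the algorithm does the expensive expander-path traversal it is allowed to charge that $\Otil(\frac{D}{d}\lenapsp/\phicmg^{2})\cdot q_{\phi}$ additive cost, i.e.\ that in that branch $\sigma_{\le j}(\pi(u,v))$ is necessarily nonempty; this is exactly what the condition $j\ge\sigma(e_{\min})$ on \Cref{enu:return j small} buys us, since it forces $(a,b)\in\sigma_{\le j}(\pi(u,v))$.
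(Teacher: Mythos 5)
Your proof follows essentially the same route as the paper's: the same case split on whether $\sigma_{\le j}(\pi(u,v))$ is empty, the same accounting of the four $\Apxball^{\pi}$ queries, the two $\Prune^{\pi}$ queries at cost $O(\lenapsp)$ each, the $|\widehat{\pi}_{ua}|=\Otil(\frac{D}{d}\lenapsp/\phicmg^{2})$ pieces each triggering one query to $H$ (or an $O(1)$ heavy-path lookup), and the same charging of the empty-query overhead to the factor $|\sigma_{\le j}(\pi(u,v))|\ge 1$.

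One inference in your empty case is not valid as written: from $\sigma_{\le j}(\pi_{ua})=\sigma_{\le j}(\pi_{bv})=\emptyset$ you cannot conclude $j<\sigma(e_{\min})$, because $e_{\min}$ is the minimum-steadiness edge over \emph{all} edges of $G[B^{\pi}]$ of weight at most $32D\log n$, not only over the edges appearing on $\pi_{ua}$ and $\pi_{bv}$; an edge of lower steadiness lying off these subpaths could exist. The fix is the paper's one-liner: $(a,b)=e_{\min}$ is itself an edge of $\pi(u,v)$ by construction, so $\sigma_{\le j}(\pi(u,v))=\emptyset$ directly forces $\sigma(e_{\min})>j$, and hence the algorithm returns at \Cref{enu:return j small} after only the four $\Apxball^{\pi}$ queries. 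Also, your absorption of the $4q_{\phi}$ term via the assumption $q_{\phi}\le q_{\path}$ is unnecessary (and not guaranteed by the abstract definition of query-time overhead), but harmless: it can instead be absorbed into the $\Otil(\frac{D}{d}\lenapsp/\phicmg^{2})\cdot q_{\phi}$ term using $|\sigma_{\le j}(\pi(u,v))|\ge1$, exactly as the paper does.
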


\begin{proof}
If $\sigma_{\le j}(\pi(u,v))=\emptyset$, then \Cref{alg:return core_path}
must return at \Cref{enu:return j small} (otherwise $e_{\min}=(a,b)\in\sigma_{\le j}(\pi(u,v))$).
In this case, we just query $\Apxball^{\pi}(G,X,\stretch \cdot D,0.1,\beta)$ four
times which takes at most $4q_{\phi}$ time. 

Now suppose that $\sigma_{\le j}(\pi(u,v))\neq\emptyset$. At \Cref{enu:query ball}
we make path-query to $\Apxball^{\pi}(G,X,\stretch \cdot D,0.1,\beta)$ at most
$4$ times. At \Cref{enu:query expander}, it takes $O(\lenapsp)$
time to obtain $\pi_{ua}^{W}$. Constructing $\widehat{\pi}_{ua}$
takes $|\widehat{\pi}_{ua}|\le\Otil(\frac{D}{d}\lenapsp/\phicmg^{2})$
time by \Cref{eq:len pihat_ua}. At \Cref{enu:query compressed graph},
the algorithm makes at most $|\widehat{\pi}_{ua}|$ queries to $(d,\gamma,\Delta,\beta)$-compressed
graph $H$ (for the hyper-edge case) and spends additional $O(|\widehat{\pi}_{ua}|)$
time (for the heavy-path case) to obtain $\sigma_{\le j}(\pi_{ua})$.
We do the same to obtain $\pi_{bv}^{W},\widehat{\pi}_{bv}$ and $\sigma_{\le j}(\pi_{bv})$.
In total the running time is at most 
\[
|\sigma_{\le j}(\pi(u,v))|\cdot q_{\path}+(4+|\widehat{\pi}_{ua}|)q_{\phi}+O(\lenapsp+|\widehat{\pi}_{ua}|)\le|\sigma_{\le j}(\pi(u,v))|\cdot(q_{\path}+\Otil(\frac{D}{d}\lenapsp/\phicmg^{2})\cdot q_{\phi})
\]
where the first term is the total query time to both $\Apxball^{\pi}(G,X,\stretch \cdot D,0.1,\beta)$
and $H$ when they return non-empty subpaths, the second term is the
total query time that $\Apxball^{\pi}(G,X,\stretch \cdot D,0.1,\beta)$ and $H$
when they return an empty set. The inequality holds is because $|\sigma_{\le j}(\pi(u,v))|\ge1$.
\end{proof}

\section{Putting Path-reporting Components Together}

\label{sec:together_path}

In this section, we show how to recursively combine all path-reporting
data structures including $\Apxball^{\pi}$(\Cref{def:Apxball path}),
$\Core^{\pi}$ (\Cref{def:Core path}), and path-reporting Covering
(\Cref{def:Covering path}) to obtain the desired decremental path-reporting
$\SSSP^{\pi}$ data structure. The goal of this section is to prove
the following theorem. 
\begin{theorem}
\label{thm:main path}For any $n$ and $\eps\in(\phicmg,1/500)$,
let $G=(V,E)$ be a decremental bounded-degree graph with $n$ vertices,
edge weights from $\{1,2,\dots,W=n^{5}\}$, and edge steadiness from
$\{0,\dots,\sigma_{\max}\}$ where $\sigma_{\max}=o(\log^{3/4}n)$.
Let $S\subseteq V$ be any decremental set. We can implement $\Apxball^{\pi}(G,S,\eps,\Ohat(1))$
that has $\Ohat(n)$ total update time and query-time overhead of
$(O(\log^{2}n),\Ohat(1))$. 
\end{theorem}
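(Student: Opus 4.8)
The plan is to mirror the distance-only assembly of \Cref{sec:part2PuttingItTogether} almost verbatim, but now carrying the simpleness parameter $\beta$ and the query-time overhead through the hierarchy, and inserting one extra layer of recursion to cope with steadiness. Concretely, I would reuse the distance scales $D_0\le\dots\le D_{\distScale}$ with $D_i=(nW)^{i/\distScale}$ and $\distScale=\Theta(\lg\lg\lg n)$, together with essentially the same parameters $k_i,\gamma_i,\eps_i,\stretch_i,\Delta_i$ as in \Cref{prop:para}, adjusting two of them: the geometric decay base of $\eps_i$ is enlarged from $50$ to $\Theta(\sigma_{\max})$ (to absorb the $300\eps+\eps^{\peel}$ output accuracy of \Cref{thm:ballpath} \emph{and} the $\sigma_{\max}$ peeling steps, additively), and $\stretch_i$ gains an extra factor of $\lenapsp$ (since \Cref{thm:Corepath} produces stretch $\Otil(\gamma\lenapsp/\phicmg^2)$ rather than $\Otil(\gamma/\phicmg^3)$). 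One then checks, exactly as in \Cref{prop:para}, that $n^{1/\distScale},D_i/D_{i-1},n^{1/k_i},\gamma_i,1/\eps_i,\stretch_i,\Delta_i,\lenapsp$ all remain $\Ohat(1)$, that $\gamma_i\ge(\stretch_i/\eps_i)^{k_i}$ and $\gamma_i\le D_i/D_{i-1}$ still hold, and that $\eps_i>\phicmg$ for all $i$ (this is where the hypothesis $\eps>\phicmg$, as opposed to $\eps>\phicmg^3$ in \Cref{thm:main no distance}, is spent). Finally I define the simpleness parameters recursively by $\beta_0=1$ and $\beta_i=\Theta(\lenapsp\,\Delta_{i-1})\cdot\beta_{i-1}$, so that $\beta_{\distScale}=(\Ohat(1))^{\distScale}=\Ohat(1)$, and similarly track query overhead.

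The core of the argument is an inductive lemma analogous to \Cref{lem:main induction}, but quantified over \emph{all} bounded-degree decremental graphs $G'$ obtained from $G$ by deleting some prefix of steadiness classes: for every $0\le i\le\distScale$ one can maintain (1) $\Apxball^{\pi}(G',S,d',\eps_i,\beta_i)$ for all $d'\le 32D_{i+1}\log n/\eps_i$, (2) $\Core^{\pi}(G',\Kinit,d',\beta_i)$ for all $d'\le D_{i+1}$ with scattering $\scatter=\Omegatil(\phicmg)$ and stretch $\le\stretch_i$, and (3) a path-reporting $(D_i,k_i,\eps_i,\stretch_i,\Delta_i,\beta_i)$-covering of $G'$, each in $\Ohat(n)$-scale total update time and with $(O(\log^2 n),\Ohat(1))$ query overhead. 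The outer induction is on $i$, identical in shape to the proof of \Cref{lem:main induction}: the base case $i=0$ uses the trivial path-reporting compressed graph of \Cref{prop:compressed base path} together with the path-reporting ES-tree of \Cref{prop:ES tree path}; for $i>0$ the level-$(i-1)$ covering of $G'$ yields a path-reporting $(D_{i-1},\gamma_{i-1},\Delta_{i-1},3\beta_{i-1})$-compressed graph via \Cref{prop:covering-compressed is compressed path}, which we feed into \Cref{thm:Corepath} to build $\Core^{\pi}$, then feed $\Core^{\pi}$ and $\Apxball^{\pi}$ into \Cref{thm:covering path} to build the level-$i$ covering, and use \Cref{thm:ballpath} to build $\Apxball^{\pi}$ at level $i$. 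The running times telescope over the distance-scale recursion $T_{\Apxball^{\pi}}(\cdot,S,2(\stretch_i/\eps_i)^{k_i}D_{i-1},\eps_{i-1},\beta_{i-1})$ exactly as in part (1) of \Cref{lem:main induction}, since the analogue of \Cref{prop:para}(\ref{enu:para:gap and distance scale},\ref{enu:para:key}) guarantees $2(\stretch_i/\eps_i)^{k_i}D_{i-1}\le 32D_i\log n/\eps_{i-1}$ so the inductive hypothesis applies to the coarser call; the factor $\Delta_i^2(D_{i+1}/D_i)^3\lenapsp/\phicmg$ in \Cref{thm:Corepath} is $\Ohat(1)$, and the three structural theorems preserve $\beta_i=\Ohat(1)$ and $(O(\log^2n),\Ohat(1))$ overhead by the recursions $\beta\mapsto 21\lenapsp\Delta\beta$ (core), $\beta\mapsto 8\Delta\beta$ (ball), $q_\phi\mapsto 4q_\phi$, $q_\phi\mapsto q_\phi^{\peel}+O(1)$, and $q_\path\mapsto q_\path+\Otil(\tfrac{D}{d}\lenapsp/\phicmg^2)\cdot q_\phi$, each applied $\Ohat(1)$ times.

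The genuinely new component is the recursion hidden \emph{inside} the construction of $\Apxball^{\pi}$ at a fixed level $i$: \Cref{thm:ballpath} reduces $\Apxball^{\pi}$ on a graph $G'$ with steadiness range $[\sigma_{\min},\sigma_{\max}]$ to $\Apxball^{\pi}$ and (distance-only) $\Apxball$ data structures on the peeled graph $G'^{\peel}=\sigma_{>\sigma_{\min}}(G')$ \emph{at the same distance scale}. I would therefore run a secondary induction on the number of distinct steadiness classes of $G'$: the base case has a single class, so $G'^{\peel}$ is edgeless and the peeled terms are trivial; the inductive step invokes \Cref{thm:ballpath} once. The two points that make this work, and that I expect to be the main obstacle to get exactly right, are: (a) the simpleness $8\beta_{i-1}\Delta_{i-1}$ that \Cref{thm:ballpath} requests of the peeled recursion is precisely the simpleness that recursion already produces — both use a level-$(i-1)$ covering with parameter $\beta_{i-1}$ — so $\beta$ does \emph{not} acquire a $\Delta^{\sigma_{\max}}$ blowup; and (b) the running-time and accuracy contributions of the peeled recursion enter \Cref{thm:ballpath} \emph{additively}, so unrolling the $\sigma_{\max}$ peeling steps costs only a factor $O(\sigma_{\max})=\Ohat(1)$ in time and an additive $O(\sigma_{\max}\cdot 300\eps_{i-1})$ in accuracy — which is exactly what the enlarged decay base for $\eps_i$ was chosen to swallow. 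The distance-only $\Apxball(G'^{\peel},S,D,\eps)$ demanded by \Cref{thm:ballpath} is supplied as a black box by \Cref{thm:main no distance} applied to $G'^{\peel}$. The fiddly bookkeeping here — choosing the requested $\eps^{\peel}$ at each peeling depth, verifying the query-overhead recursion of \Cref{thm:ballpath} telescopes to $(O(\log^2 n),\Ohat(1))$ after both the distance-scale and steadiness recursions, and confirming the total number of (peeled graph, level) pairs is $(\sigma_{\max}+1)(\distScale+1)=\Ohat(1)$ so the summed update time is $\Ohat(n)$ — is where I anticipate spending most of the effort; conceptually everything else is the distance-only argument with path-reporting components substituted via \Cref{prop:path-reporting stronger}. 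Applying the lemma at $i=\distScale$, with $D_{\distScale}=(nW)^{\distScale/\distScale}=n^6$ exceeding every finite $s$-$v$ distance in $G$, yields $\Apxball^{\pi}(G,S,\eps,\Ohat(1))$ with $\Ohat(n)$ total update time and $(O(\log^2 n),\Ohat(1))$ query-time overhead, which is \Cref{thm:main path}.
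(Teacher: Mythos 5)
Your plan is essentially the paper's own proof of \Cref{thm:main path}: the same double induction over distance scales $i$ and the steadiness-peeled subgraphs $G_j=\sigma_{\ge j}(G)$ (your ``secondary induction on the number of steadiness classes'' inside each level is exactly the $j$-index of \Cref{lem:main path induction}), assembled from the same components (\Cref{prop:ES tree path}, \Cref{prop:compressed base path}, \Cref{prop:covering-compressed is compressed path}, \Cref{thm:Corepath}, \Cref{thm:covering path}, \Cref{thm:ballpath}, and \Cref{thm:main no distance} for the distance-only peeled ball), with your single $\eps_i$ per level of decay base $\Theta(\sigma_{\max})$ being a coarser but equivalent solution of the paper's recursion $\eps_{i,j}=300\eps_{i-1,j}+\eps_{i,j+1}$. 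One small parameter correction: besides $\eps_i$ and $\stretch_i$ you must also enlarge $\gamma_i$ (the paper takes $\gamma_i=\lenapsp^{2k_{i+1}}$ in \Cref{prop:para path}), since with the distance-only choice $\gamma_i=1/\phicmg^{8k_{i+1}}$ the check $\gamma_i\ge(\stretch_i/\eps_i)^{k_i}$ fails once $\stretch_i$ carries the factor $\lenapsp=2^{\Theta(\log^{7/8}n)}\gg\mathrm{poly}(1/\phicmg)$; with that adjustment ($\gamma_i\le D_i/D_{i-1}$ still holds) your argument coincides with the paper's.
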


As $\SSSP^{\pi}$ is a special case of $\Apxball^{\pi}$ when the
source set $S=\{s\}$, by applying the reduction from \Cref{prop:simplify
path}, we immediately obtain \Cref{thm:main SSSP path}, the main
result of this part of the paper. It remains to prove \Cref{thm:main
path}.

Define $G_{j}=\sigma_{\ge j}(G)$ for each $j\in\{0,\dots,\sigma_{\max}\}$.
Note that $G_{0}=G$ and $G_{\sigma_{\max}+1}=\emptyset$. There are
\emph{distance scales} $D_{0}\le D_{1}\le\dots\le D_{\distScale}$
where $D_{i}=(nW)^{i/\distScale}$ and $\distScale=c_{\distScale}\lg\lg\lg n$
for some small constant $c_{\distScale}>0$. We will implement our
data structures for $\distScale$ many levels. Recall that $\phicmg=1/2^{\Theta(\log^{3/4}n)}=\Omegahat(1)$
and $\lenapsp=2^{\Theta(\log^{7/8}n)}=\Ohat(1)$. For $0\le i\le\distScale$
and $0\le j\le\sigma_{\max}$, we set 
\begin{align*}
k_{i} & =(\lg\lg n)^{3^{i}}\\
\gamma_{i} & =\lenapsp^{2k_{i+1}}\text{ and }\gamma_{-1}=1\\
\eps_{i,j} & =\eps/(600^{\distScale-i}\cdot2^{j})\text{ and }\eps_{i,\sigma_{\max}+1}=0\\
\stretch_{i} & =\gamma_{i-1}\cdot\lenapsp\log^{c_{\stretch}}n/\phicmg^{2}\\
\Delta_{i} & =\Theta(k_{i}n^{2/k_{i}}/\phicmg)\\
\beta_{i} & =\beta_{i-1}\cdot21\lenapsp\Delta_{i-1}\text{ and }\beta_{0}=7\lenapsp
\end{align*}
where we define $c_{\stretch}$ to be a large enough constant. We
also define parameters related to query-time overhead, for $0\le i\le\distScale$
and $0\le j\le\sigma_{\max}$, as follows:

\begin{align*}
q_{\phi}^{(i,j)} & =c_{q}(\sigma_{\max}-j+1)\log n\\
Q_{\phi}^{(i,j)} & =c_{q}(\sigma_{\max}-j+1)12^{i}\log n\\
\overhead_{\path} & =n^{2/\distScale}\cdot12^{\distScale}\cdot2^{c_{q}\cdot\log^{8/9}n}\\
q_{\path}^{(i,j)} & =(2i+1)\cdot(\sigma_{\max}-j+1)\cdot\overhead_{\path}\\
Q_{\path}^{(i,j)} & =2(i+1)\cdot(\sigma_{\max}-j+1)\cdot\overhead_{\path}
\end{align*}
where $c_{q}$ is a large enough constant. The parameters are defined
in such that way that 
\[
n^{1/\distScale},\frac{D_{i}}{D_{i-1}},n^{1/k_{i}},\gamma_{i},1/\eps_{i,j},\stretch_{i},\Delta_{i},\beta_{i},q_{\phi}^{(i,j)},Q_{\phi}^{(i,j)},q_{\path}^{(i,j)},Q_{\path}^{(i,j)}=\Ohat(1)
\]
for all $0\le i\le\distScale$ and $0\le j\le\sigma_{\max}$. However,
we will need a more fine-grained property of them as described below. 
\begin{prop}
\label{prop:para path}For large enough $n$ and for all $0\le i\le\distScale$
and $0\le j\le\sigma_{\max}$, we have that 
\end{prop}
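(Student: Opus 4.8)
The statement to prove is \Cref{prop:para path}, which asserts a collection of fine-grained numerical inequalities among the parameters $k_i, \gamma_i, \eps_{i,j}, \stretch_i, \Delta_i, \beta_i$ and the query-overhead parameters $q_\phi^{(i,j)}, Q_\phi^{(i,j)}, q_\path^{(i,j)}, Q_\path^{(i,j)}$. Since the excerpt cuts off before the enumerated list of claimed inequalities, I will describe the general plan for establishing all of them, following the exact template already carried out in \Cref{prop:para} for the distance-only parameters.

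\begin{proof}[Proof plan for \Cref{prop:para path}]
The plan is to verify each claimed inequality by direct substitution of the parameter definitions, exactly mirroring the proof of \Cref{prop:para}. First I would re-establish the growth bound on $k_i$: since $\distScale = c_\distScale \lg\lg\lg n$ with $c_\distScale$ a small constant, we get $k_i = (\lg\lg n)^{3^i} \le (\lg\lg n)^{3^\distScale} \le \lg^{1/100} n$, and $k_i \ge \lg\lg n$ trivially; this also gives the key recursion $k_{i+1} \ge k_i^2 + k_i$ used repeatedly below. Next I would handle $\eps_{i,j}$: the bound $\eps_{i,j} \le \eps$ is immediate, and for the lower bound $\eps_{i,j} \ge \eps / (600^\distScale \cdot 2^{\sigma_{\max}}) \ge \phicmg^2 / (600^{\Theta(\lg\lg\lg n)} \cdot 2^{o(\log^{3/4}n)}) \ge \phicmg^4$ since $\phicmg = 1/2^{\Theta(\log^{3/4}n)}$ dominates both correction factors. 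For $\gamma_i$, since $\lenapsp = 2^{\Theta(\log^{7/8}n)}$ and $\gamma_i = \lenapsp^{2k_{i+1}}$, property \Cref{enu:para:k} gives $\gamma_i = 2^{O(\log^{7/8 + 1/100} n)} = \Ohat(1)$, and the analogue of \Cref{enu:para:key}, namely $\gamma_i \ge (\gamma_{i-1} \lenapsp^2)^{k_i} \ge (\stretch_i/\eps_{i,j})^{k_i}$, follows from $\gamma_i = \lenapsp^{2k_{i+1}} \ge \lenapsp^{2(k_i^2 + k_i)} = (\lenapsp^{2k_i} \cdot \lenapsp^2)^{k_i} = (\gamma_{i-1}\lenapsp^2)^{k_i}$ together with $\stretch_i/\eps_{i,j} = \gamma_{i-1}\lenapsp(\log^{c_\stretch}n)/(\phicmg^2 \eps_{i,j}) \le \gamma_{i-1}\lenapsp^2$ (using $\log^{c_\stretch} n \le \lenapsp$ and $\eps_{i,j} \ge \phicmg^4$, hence $1/(\phicmg^2 \eps_{i,j}) \le 1/\phicmg^6 \le \lenapsp$ — I would double-check the precise exponents here). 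The distance-scale facts $D_i/D_{i-1} = (nW)^{1/\distScale} \le n^{6/\distScale}$ and $\gamma_i \le D_i/D_{i-1}$ transfer verbatim from \Cref{prop:para}.

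The new ingredient relative to \Cref{prop:para} is the recursively-defined family $\beta_i = 21\lenapsp\Delta_{i-1}\beta_{i-1}$, $\beta_0 = 7\lenapsp$, and the four query-overhead families. For $\beta_i$ I would unfold the recursion to $\beta_i = 7\lenapsp \cdot \prod_{\ell=1}^{i} 21\lenapsp\Delta_{\ell-1}$; since $i \le \distScale = O(\lg\lg\lg n)$ and each factor $21\lenapsp\Delta_{\ell-1} = \Ohat(1)$, the whole product is $\Ohat(1)$. Any claimed inequality of the form $\beta_i \ge 7\lenapsp\Delta_i\beta_i'$ for some intermediate quantity (as will be needed when invoking \Cref{thm:Core_path} / \Cref{thm:ball_path} at level $i$) follows immediately from the definition $\beta_i = 21\lenapsp\Delta_{i-1}\beta_{i-1} \ge 7\lenapsp\Delta_{i-1}\beta_{i-1}$ once one checks the index bookkeeping matches. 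For the query-overhead parameters $q_\phi^{(i,j)}, Q_\phi^{(i,j)}, q_\path^{(i,j)}, Q_\path^{(i,j)}$, the plan is to verify the recursive inequalities that \Cref{thm:ball_path} and \Cref{thm:Core_path} demand: e.g. $Q_\phi^{(i,j)} \ge q_\phi^{(i,j)} + O(1)$ holds since $Q_\phi^{(i,j)} = c_q(\sigma_{\max}-j+1)12^i\log n$ while $q_\phi^{(i,j)} = c_q(\sigma_{\max}-j+1)\log n$, and $12^i \ge 1$ leaves ample slack for the additive $O(1)$; similarly $Q_\phi^{(i,j)} \ge 3 q_\phi^{(i,j)}$ or $\ge 4 q_\phi^{(i,j)}$ (whichever the theorems need) holds for $i \ge 1$ since $12^i \ge 4$; and the path-overhead bound $Q_\path^{(i,j)} \ge q_\path^{(i,j)} + \Otil(\frac{D_{i+1}}{\eps_{i,j} D_i})\cdot q_\phi^{(i,j)}$ (or the $\Core^\pi$ variant with $\lenapsp/\phicmg^2$) reduces, after plugging $D_{i+1}/D_i \le n^{6/\distScale}$ and the definitions, to comparing $2(i+1)(\sigma_{\max}-j+1)\overhead_\path$ against $(2i+1)(\sigma_{\max}-j+1)\overhead_\path$ plus a term absorbed into $\overhead_\path = n^{2/\distScale}12^\distScale 2^{c_q\log^{8/9}n}$, which was deliberately sized with a $2^{c_q\log^{8/9}n}$ factor (beating $\Otil(\cdot) = 2^{O(\log^{3/4}n)}$) precisely so this absorption goes through. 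I would also record $q_\path^{(i,\sigma_{\max}+1)}$-type boundary cases and the monotonicity $\eps_{i,j} \ge \eps_{i,j+1}$, $\eps_{i,j} \le \eps_{i+1,j}$ (from $\eps_{i+1,j}/\eps_{i,j} = 600$) needed for the inductive step in \Cref{sec:together_path}.

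The main obstacle I expect is not any single inequality but the sheer bookkeeping of matching each claimed bound in the enumerated list to the exact place it is consumed — in \Cref{thm:covering path}, \Cref{thm:ball_path}, \Cref{thm:Core_path}, and \Cref{prop:covering-compressed is compressed path} — and making sure the constants ($7, 21, 8, 300, 600$, the various $c$-constants) chain together with enough slack that the "large enough $n$" and "large enough constant $c_\stretch, c_q, c_\distScale$" hypotheses are genuinely sufficient at every level $0 \le i \le \distScale$ and every steadiness index $0 \le j \le \sigma_{\max}$ simultaneously. In particular, the interplay between the $2^j$ factor in $\eps_{i,j}$ and the $(\sigma_{\max}-j+1)$ factors in the query parameters must be checked to not blow up, using $\sigma_{\max} = o(\log^{3/4} n)$ so that $2^{\sigma_{\max}}$ and $\sigma_{\max}^{O(\distScale)}$ are both $\Ohat(1)$ and in fact dominated by the explicitly inserted $2^{c_q\log^{8/9}n}$ and $\phicmg^{-O(1)}$ buffers. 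Once the slack is confirmed, each individual inequality is a one-line substitution as in \Cref{prop:para}.
\end{proof}
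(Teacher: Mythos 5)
Your plan is essentially the paper's proof: the enumerated claims (bounds on $k_i$, $\eps_{i,j}$, $\gamma_i$ and $\stretch_i$, $D_i/D_{i-1}$, the key inequality $\gamma_i\ge(\gamma_{i-1}\lenapsp^2)^{k_i}\ge(\stretch_i/\eps_{i,\sigma_{\max}})^{k_i}$ via $k_{i+1}\ge k_i^2+k_i$, and the unfolded product bound for $\beta_i$) are all verified by exactly the direct-substitution arguments you describe, mirroring \Cref{prop:para}.

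Two small corrections. First, the paper's list claims $\eps_{i,j}\ge\phicmg^2$, obtained from the hypothesis $\eps>\phicmg$ of \Cref{thm:main path} via $\eps_{i,j}\ge\eps/(600^{\distScale}2^{\sigma_{\max}})\ge\phicmg/2^{\Theta(\lg\lg\lg n+\sigma_{\max})}\ge\phicmg^2$; your chain starts from $\eps\ge\phicmg^2$ and lands at $\phicmg^4$, which is weaker than what is claimed and would be insufficient later, since \Cref{prop:ball time helper} needs $\eps_{0,\sigma_{\max}}\ge\phicmg^2$ to satisfy the accuracy hypothesis of \Cref{thm:main no distance}. (The list also contains the exact identity $\eps_{i,j}=300\eps_{i-1,j}+\eps_{i,j+1}$, a one-line arithmetic check you should record explicitly, as it is what feeds the $300\eps+\eps^{\peel}$ recursion of \Cref{thm:ball_path}.) Second, the query-overhead inequalities involving $q_{\phi}^{(i,j)},Q_{\phi}^{(i,j)},q_{\path}^{(i,j)},Q_{\path}^{(i,j)}$ that you sketch are not part of this proposition; the paper verifies them inside the induction of \Cref{lem:main path induction}, so they need not (and do not) appear here. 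With these adjustments your argument matches the paper's.
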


\begin{enumerate}
\item \label{enu:para:k path}$\lg\lg n\le k_{i}\le(\lg^{1/100}n)$, 
\item \label{enu:para:eps path}$\phicmg^{2}\le\eps_{i,j}\le\eps$ and $\eps_{i,j}=300\eps_{i-1,j}+\eps_{i,j+1}$
(in particular, $\eps_{i,j}\ge\eps_{i-1,j},\eps_{i,j+1}$ and $\eps_{i,j}\ge\eps_{i,\sigma_{\max}}\ge\eps_{0,\sigma_{\max}}$), 
\item \label{enu:para:gamma path}$\gamma_{i},\stretch_{i}=2^{O(\lg^{8/9}n)}$, 
\item \label{enu:para:D ratio path}$D_{i}/D_{i-1}\le n^{6/\distScale}$, 
\item \label{enu:para:gap and distance scale path}$\gamma_{i}\le D_{i}/D_{i-1}$, 
\item \label{enu:para:key path}$\gamma_{i}\ge(\gamma_{i-1}\lenapsp^{2})^{k_{i}}\ge(\frac{\stretch_{i}}{\eps_{i,\sigma_{\max}}})^{k_{i}}$,
and 
\item \label{enu:para:beta}$\beta_{i}=2^{O((\log^{7/8}n)(\log\log\log n))}\cdot n^{O(1/\lg\lg n)}$. 
\end{enumerate}
\begin{proof}
(1): We have $k_{i}=(\lg\lg n)^{3^{i}}\le(\lg\lg n)^{3^{\distScale}}\leq(\lg\lg n)^{(\lg\lg n)^{1/100}}\le\lg^{1/100}n$
as $\distScale=c_{\distScale}\lg\lg\lg n$ and $c_{\distScale}$ is
a small enough constant.

(2): It is clear that $\eps_{i,j}\le\eps$. For the other direction,
note that in the assumption of \Cref{thm:main no distance}, we
have $\eps\ge\phicmg$. So $\eps_{i}\ge\eps/(600^{\distScale}\cdot2^{\sigma_{\max}})\ge\phicmg/2^{\Theta(\lg\lg\lg n+\sigma_{\max})}\ge\phicmg^{2}$
because $\phicmg=1/2^{\Theta(\lg^{3/4}n)}$ and $\sigma_{\max}=o(\log^{3/4}n)$.
Next, we have $300\eps_{i-1,j}+\eps_{i,j+1}=\eps(\frac{300}{600^{\distScale-(i-1)}\cdot2^{j}}+\frac{1}{600^{\distScale-i}\cdot2^{j+1}})=\eps(\frac{300/600}{600^{\distScale-i}\cdot2^{j}}+\frac{1/2}{600^{\distScale-i}\cdot2^{j}})=\eps(\frac{1}{600^{\distScale-i}\cdot2^{j}})=\eps_{i,j}$.

(3): As $\lenapsp=2^{\Theta(\lg^{7/8}n)}$ and $\gamma_{i}=\lenapsp^{2k_{i+1}}$,
we have from \Cref{enu:para:k} of this proposition that $\gamma_{i}=2^{O(\lg^{7/8+1/100}n)}=2^{O(\log^{8/9}n)}$.
Also, as $\stretch_{i}=\gamma_{i-1}\cdot\lenapsp\log^{c_{\stretch}}n/\phicmg^{2}$,
we have $\stretch_{i}=2^{O(\log^{8/9}n)}$ too.

(4): We have $D_{i}/D_{i-1}=(nW)^{1/\distScale}$. Since $W=n^{5}$,
we have $D_{i}/D_{i-1}\le n^{6/\distScale}$.

(5): As $D_{i}/D_{i-1}\ge n^{1/\distScale}\ge2^{\Theta(\lg n/\lg\lg\lg n)}$,
by \Cref{enu:para:gamma} we have $\gamma_{i}\le(D_{i}/D_{i-1})$
when $n$ is large enough.

(6): We have $\gamma_{i}\ge(\gamma_{i-1}\lenapsp^{2})^{k_{i}}$ because
\[
\gamma_{i}=\lenapsp^{2}{}^{k_{i+1}}\ge\lenapsp^{2(k_{i}^{2}+k_{i})}=(\lenapsp^{2k_{i}}\cdot\lenapsp^{2})^{k_{i}}=(\gamma_{i-1}\lenapsp^{2})^{k_{i}}
\]
where the inequality holds is because $k_{i+1}=(\lg\lg n)^{3^{i+1}}=(\lg\lg n)^{3^{i}\cdot2}\times(\lg\lg n)^{3^{i}}\ge(\lg\lg n)^{3^{i}\cdot2}+(\lg\lg n)^{3^{i}}=k_{i}^{2}+k_{i}$
for all $i\ge0$. For the second inequality, we have 
\[
\frac{\stretch_{i}}{\eps_{i,\sigma_{\max}}}=\frac{\gamma_{i-1}\lenapsp\log^{c_{\stretch}}n/\phicmg^{2}}{\eps_{i,\sigma_{\max}}}\le\frac{\gamma_{i-1}\lenapsp}{\phicmg^{6}}\le\gamma_{i-1}\lenapsp^{2}
\]
because $(\log^{c_{\stretch}}n)\le1/\phicmg$, $\eps_{i,\sigma_{\max}}\ge\phicmg^{2}$
by \Cref{enu:para:eps path}, and $\lenapsp\ge\poly(1/\phicmg)$
when $n$ is large enough. Therefore, $\gamma_{i}\geq(\gamma_{i-1}\lenapsp^{2})^{k_{i}}\ge(\frac{\stretch_{i}}{\eps_{i,\sigma_{\max}}})^{k_{i}}$.

(7): We have $\beta_{i}\le\beta_{\distScale}=\prod_{i=0}^{\distScale}\Otil(\lenapsp n^{2/k_{i}}/\phicmg)=2^{O((\log^{7/8}n)(\log\log\log n))}\prod_{i=0}^{\distScale}n^{2/k_{i}}$
by definition of $\lenapsp,\phicmg$ and $\distScale$. As $\sum_{i=0}^{\dist}1/(\lg\lg n)^{3^{i}}=O(1/\lg\lg n)$,
we have $\prod_{i=0}^{\distScale}n^{2/k_{i}}=n^{O(1/\lg\lg n)}$.
Therefore, $\beta_{i}=2^{O((\log^{7/8}n)(\log\log\log n))}\cdot n^{O(1/\lg\lg n)}=\Ohat(1)$. 
\end{proof}
As the path-reporting $\Apxball^{\pi}$will call the distance-only
$\Apxball$ as a black-box, we will need the following bound. 
\begin{prop}
\label{prop:ball time helper}For any $d'\le nW$ and $\eps'$ where
$\eps'\ge\eps_{0,\sigma_{\max}}$, we have $T_{\Apxball}(G,S,d',\eps')=\Otil(\left|\ball_{G}(S,d')\right|\frac{n^{2/k_{0}+12/\distScale}50^{\distScale}}{\phicmg\eps_{0,\sigma_{\max}}^{2}})$. 
\end{prop}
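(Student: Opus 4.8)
The plan is to invoke the inductive machinery already developed for the distance-only part, specifically \Cref{lem:main induction}(1), which states that for each level $0 \le i \le \distScale$ we can maintain $\Apxball(G,S,d',\eps_i)$ for any $d' \le d'_i \defeq 32 D_{i+1}\log(n)/\eps_i$ in total update time $\Otil(|\ball_G(S,d')| \cdot n^{2/k_0 + 12/\distScale}/(\phicmg \eps_0^2))$. So the first step is to translate the parameters: in the path-reporting part we are working with $\eps_{i,j}$ rather than the distance-only $\eps_i$, but the distance-only $\Apxball$ is called as a black box with an accuracy parameter $\eps'$ that we only assume satisfies $\eps' \ge \eps_{0,\sigma_{\max}}$. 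Since $\eps_{0,\sigma_{\max}} = \eps/(600^{\distScale} 2^{\sigma_{\max}})$ plays here the role that $\eps_0$ played in the distance-only analysis, I would match the distance-only level-$i$ parameter $\eps_i$ (which ranged down to $\eps_0 = \eps/50^{\distScale}$) against our finest parameter $\eps_{0,\sigma_{\max}}$; the only thing that matters is that $\eps' = \Omegahat(1)$ and $1/\eps' = \Ohat(1)$, which holds by \Cref{prop:para path}(\ref{enu:para:eps path}).

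Next I would handle the distance scale. We are given $d' \le nW = D_{\distScale}$, whereas \Cref{lem:main induction}(1) is stated for $d' \le d'_i$ at level $i$. Taking $i = \distScale$ gives $d'_{\distScale} = 32 D_{\distScale+1}\log(n)/\eps_{\distScale} \ge D_{\distScale} \ge nW \ge d'$, so the hypothesis of the induction lemma is satisfied at the top level. (If $D_{\distScale+1}$ is not literally defined, one simply uses $d'_{\distScale-1} = 32 D_{\distScale}\log n/\eps_{\distScale-1} \ge nW$ together with level $\distScale-1$; either way the top distance scale is covered.) Plugging $i=\distScale$ into the running time bound from \Cref{lem:main induction}(1) yields
\[
T_{\Apxball}(G,S,d',\eps') \le \Otil\!\left(|\ball_G(S,d')| \cdot (\distScale+1) \cdot \frac{n^{2/k_0 + 12/\distScale}}{\phicmg \eps_{0,\sigma_{\max}}^2}\right),
\]
and since $\distScale = c_{\distScale}\lg\lg\lg n = \Otil(1)$ and $50^{\distScale} \ge \distScale + 1$, this absorbs into the stated bound $\Otil(|\ball_G(S,d')| \cdot n^{2/k_0 + 12/\distScale} 50^{\distScale}/(\phicmg \eps_{0,\sigma_{\max}}^2))$. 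The extra factor $50^{\distScale}$ in the statement is a harmless slack (indeed $50^{\distScale} = \Ohat(1)$), presumably included because in the path-reporting setting the black-box calls to the distance-only $\Apxball$ happen with $\eps'$ as large as $\eps$ rather than $\eps_{0,\sigma_{\max}}$, and one wants a single bound that is valid uniformly for every admissible $\eps'$; replacing $\eps_0$-type quantities by the finest $\eps_{0,\sigma_{\max}}$ and padding by $50^{\distScale}$ makes the bound monotone and safe to cite.

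The one genuine subtlety — and the step I would be most careful about — is making sure that the monotonicity of $T_{\Apxball}$ in both the set and the distance parameter (\Cref{rem:apx-ball-monotonic}) is correctly applied, so that a bound proved for the specific nested distance scales $D_0 \le \dots \le D_{\distScale}$ extends to an \emph{arbitrary} $d' \le nW$. Since \Cref{rem:apx-ball-monotonic} gives $T_{\Apxball}(G,S,d',\eps') \le T_{\Apxball}(G,S,D_{\distScale},\eps')$ for $d' \le D_{\distScale}$, and the right-hand side is covered by the induction lemma at the top level, this is immediate; I would just state it explicitly rather than leaving it implicit. Everything else is substitution of the parameter bounds from \Cref{prop:para path}, so no further obstacle is expected.
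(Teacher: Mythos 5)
Your proposal is correct and takes essentially the same route as the paper's proof: both invoke the distance-only machinery (Theorem~\ref{thm:main no distance}, i.e.\ Lemma~\ref{lem:main induction}(1)) instantiated with accuracy $\eps_{0,\sigma_{\max}}$ at the top level $i=\distScale$, using $d'\le nW\le d'_{\distScale}$ and the parameter bounds. The only cosmetic difference is the bookkeeping of the $50^{\distScale}$ factor: in the paper it arises because that instantiation's internal base-level parameter is $\eps'_{0}=\eps_{0,\sigma_{\max}}/50^{\distScale}$, which enters the bound squared (giving $50^{2\distScale}$, absorbed by $\Otil$), rather than being padding as you suggest — but this does not affect the validity of your argument.
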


\begin{proof}
This follows from \Cref{thm:main no distance} when we set the accuracy
parameter $\eps'\gets\eps_{0,\sigma_{\max}}$ (we use $\eps'$ instead
of $\eps$ to avoid confusion). Note that $\eps_{0,\sigma_{\max}}\ge\phicmg^{2}$
satisfying \Cref{thm:main no distance}. In the proof of \Cref{thm:main
no distance}, there are parameters $\eps'_{\distScale}=\eps'=\eps_{0,\sigma_{\max}}$
and $\eps'_{0}=\eps'/50^{\distScale}=\eps_{0,\sigma_{\max}}/50^{\distScale}$.
From \Cref{enu:induction ball} when $i=\distScale$, as $d'\le nW\le d'_{\distScale}$,
we have $T_{\Apxball}(G,S,d',\eps'_{\distScale})\le\Otil(\left|\ball_{G}(S,d')\right|\frac{n^{2/k_{0}+12/\distScale}}{\phicmg(\eps'_{0})^{2}})=\Otil(\left|\ball_{G}(S,d')\right|\frac{n^{2/k_{0}+12/\distScale}50^{\distScale}}{\phicmg\eps{}_{0,\sigma_{\max}}^{2}})$. 
\end{proof}
Now, we are ready to state the key inductive lemma that combines everything
together.
\begin{lem}
\label{lem:main path induction}For every $0\le i\le\distScale$ and
$0\le j\le\sigma_{\max}$, we can maintain the following data structures: 
\begin{enumerate}
\item \label{enu:induction ball path}$\Apxball^{\pi}(G_{j},S,d',\eps_{i,j},\beta_{i})$
for any $d'\le d'_{i,j}\defeq\stretch_{i}\cdot D_{i+1}/\eps_{i,j}$
using total update time of 
\[
\left|\ball_{G_{j}}(S,d')\right|\poly\left(n^{1/k_{0}+1/\distScale}2^{\distScale+\sigma_{\max}+\log^{8/9}n}\right)=\Ohat(|\ball_{G_{j}}(S,d')|)
\]
with query-time overhead at most $(q_{\phi}^{(i,j)},q_{\path}^{(i,j)})$. 
\item \label{enu:induction core path}$\Core^{\pi}(G_{j},\Kinit,d',\beta_{i})$
for any $d'\le D_{i+1}$ using total update time of 
\[
\left|\ball_{G_{j}}(\Kinit,\stretch_{i}d')\right|\poly\left(n^{1/k_{0}+1/\distScale}2^{\distScale+\sigma_{\max}+\log^{8/9}n}\right)=\Ohat(|\ball_{G_{j}}(\Kinit,32d'\log n)|)
\]
with scattering parameter $\scatter=\Omegatil(\phicmg)$, stretch
at most $\stretch_{i}$, and query-time overhead at most $(Q_{\phi}^{(i,j)},Q_{\path}^{(i,j)})$. 
\item $(D_{i},k_{i},\eps_{i,j},\stretch_{i},\Delta_{i},\beta_{i})$-covering
of $G_{j}$ using total update time of 
\[
\Otil(n\cdot\poly\left(n^{1/k_{0}+1/\distScale}2^{\distScale+\sigma_{\max}+\log^{8/9}n}\right))=\Ohat(n)
\]
with query-time overhead at most $(Q_{\phi}^{(i,j)},Q_{\path}^{(i,j)})$. 
\end{enumerate}
For all $i>0$, we assume by induction that a $(D_{i-1},k_{i-1},\eps_{i-1,j},\stretch_{i-1},\Delta_{i-1})$-covering
of $G_{j}$ is already explicitly maintained for every $0\le j\le\sigma_{\max}$. 
\end{lem}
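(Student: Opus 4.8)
The plan is to prove \Cref{lem:main path induction} by a double induction: an outer induction on the distance-scale level $i$ from $0$ to $\distScale$, and an inner induction on the steadiness index $j$ from $\sigma_{\max}$ down to $0$. The inner induction on $j$ descending is forced by the structure of \Cref{thm:ball_path}: the path-reporting approximate ball on $G_j$ recursively invokes $\Apxball^\pi$ on $G_j^{\peel} = \sigma_{>\sigma_{\min}}(G_j) = G_{j+1}$, so we must already have the data structure for the larger steadiness index available. The base case of the inner induction is $j = \sigma_{\max}+1$ where $G_j = \emptyset$ and everything is trivial (and $\eps_{i,\sigma_{\max}+1} = 0$, $G_{\sigma_{\max}+1}=\emptyset$ as set up before the lemma).

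For the outer base case $i = 0$: part (\ref{enu:induction core path}) follows from \Cref{thm:Core_path} instantiated with the trivial path-reporting compressed graph $G_{unit}$ from \Cref{prop:compressed base path} (which has parameters $d=1, \gamma=1, \Delta=O(1), \beta=1$ and query overhead $(1,1)$), using the path-reporting ES-tree of \Cref{prop:ES tree path} to supply $\Apxball^\pi(G_j,\Kinit,\stretch_0\cdot d',0.1,1)$; one checks $\stretch_0 = \Otil(\lenapsp/\phicmg^2)$ dominates the stretch $\Otil(\gamma\lenapsp/\phicmg^2)$ from the theorem with $\gamma=1$, and that $\beta_0 = 7\lenapsp \ge 7\lenapsp\cdot O(1)\cdot 1$. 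Part (\ref{enu:induction ball path}) for $i=0$ follows from \Cref{thm:ball_path}, now with the covering at level $0$ (from part (3)) supplying the path-reporting covering, and with $G_j^{\peel} = G_{j+1}$ handled by the inner induction hypothesis; we need $\eps^{\peel} = \eps_{0,j+1}$ and then $300\eps + \eps^{\peel}$ matches $\eps_{0,j}$ by \Cref{prop:para path}(\ref{enu:para:eps path}). The distance-only $\Apxball(G^{\peel},S,D,\eps)$ term is bounded via \Cref{prop:ball time helper}. Part (3) then follows from \Cref{thm:covering path} using (1) and (2), after verifying via \Cref{prop:para path}(\ref{enu:para:gap and distance scale path},\ref{enu:para:key path}) that $D_0(\stretch_0/\eps_{0,j})^{k_0} \le D_0\gamma_0 \le D_1$ so the covering's demands on $\Core^\pi$ and $\Apxball^\pi$ fall within the allowed distance range.

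For the outer inductive step $i > 0$: given the level-$(i{-}1)$ covering of $G_j$ (assumed explicitly maintained for all $j$), \Cref{prop:covering-compressed is compressed path} yields a path-reporting $(D_{i-1},\gamma_{i-1},\Delta_{i-1},3\beta_{i-1})$-compressed graph of $G_j$ with gap $\gamma_{i-1} \ge (\stretch_{i-1}/\eps_{i-1,\sigma_{\max}})^{k_{i-1}}$ by \Cref{prop:para path}(\ref{enu:para:key path}), and with query overhead $(3Q_\phi^{(i-1,j)}, Q_\path^{(i-1,j)} + 2Q_\phi^{(i-1,j)})$. Feeding this into \Cref{thm:Core_path} gives part (\ref{enu:induction core path}) at level $i$ with stretch $\Otil(\gamma_{i-1}\lenapsp/\phicmg^2) \le \stretch_i$ and simpleness $7\lenapsp\Delta_{i-1}\cdot 3\beta_{i-1} = 21\lenapsp\Delta_{i-1}\beta_{i-1} = \beta_i$; the time is bounded using part (\ref{enu:induction ball path}) at level $i$ (for the $\Apxball^\pi(G_j,\Kinit,\stretch_i d',0.1,\beta_i)$ term), which is fine because the call is at the same level $i$ and the distance argument $\stretch_i d' \le \stretch_i D_{i+1} \le d'_{i,\sigma_{\max}}$. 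Then part (\ref{enu:induction ball path}) at level $i$ uses \Cref{thm:ball_path} with the level-$i$ covering, recursing to level $i{-}1$ for the smaller distance scale $2(\stretch_{i-1}/\eps_{i-1,j})^{k_{i-1}}D_{i-1}$ (which is $\le d'_{i-1,j}$ by the same argument as in \Cref{lem:main induction}) and to steadiness $j{+}1$ via $G^{\peel}=G_{j+1}$; the accuracy bookkeeping $300\eps_{i-1,j} + \eps_{i,j+1} = \eps_{i,j}$ is \Cref{prop:para path}(\ref{enu:para:eps path}). Part (3) follows from \Cref{thm:covering path} as in the base case. The query-overhead recurrences $q_\phi, Q_\phi, q_\path, Q_\path$ are checked to be closed under the operations in \Cref{thm:ball_path} and \Cref{thm:Core_path}: each level adds $O(1)$ or a constant multiplicative factor to $q_\phi$-type quantities, and adds an $\Otil(\frac{D_{i+1}}{\eps d}\lenapsp/\phicmg^2)$-type additive term to $q_\path$, all absorbed into $\overhead_\path$ by the definitions.

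The main obstacle I anticipate is the bookkeeping of the query-time overhead parameters. Unlike the total update time (where \Cref{prop:para path} already certifies everything is $\Ohat(1)$ and a crude bound suffices), the query overhead must genuinely satisfy recursive inequalities across both $i$ and $j$: \Cref{thm:ball_path} produces overhead $(q_\phi^{\peel}+O(1), \max\{q_\path^{\peel}+O(1), q_\path + O(\frac{D}{\eps d})q_\phi\})$, and \Cref{thm:Core_path} produces $(4q_\phi, q_\path + \Otil(\frac{D}{d}\lenapsp/\phicmg^2)q_\phi)$, and one must verify that the chosen closed forms $q_\phi^{(i,j)} = c_q(\sigma_{\max}-j+1)\log n$, $Q_\phi^{(i,j)} = c_q(\sigma_{\max}-j+1)12^i\log n$, $q_\path^{(i,j)} = (2i+1)(\sigma_{\max}-j+1)\overhead_\path$, $Q_\path^{(i,j)} = 2(i+1)(\sigma_{\max}-j+1)\overhead_\path$ are preserved — in particular that the $\frac{D_{i+1}}{\eps_{i,j} d}$ factor (where $d = D_i$, so this ratio is $\Ohat(1)$ by \Cref{prop:para path}(\ref{enu:para:D ratio path},\ref{enu:para:eps path})) times $q_\phi$ is dominated by one increment of $\overhead_\path$, and that the covering passes its constituent data structures' overheads through unchanged (so the covering at level $i$ inherits $(Q_\phi^{(i,j)}, Q_\path^{(i,j)})$ from its $\Core^\pi$ and $\Apxball^\pi$ pieces). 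This is routine but error-prone; everything else is a careful but mechanical chaining of the already-established theorems.
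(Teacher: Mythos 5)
Your overall architecture matches the paper's: an outer induction on the distance-scale level $i$ and, for the $\Apxball^{\pi}$ bound, an inner descending induction on $j$ (with $G^{\peel}=G_{j+1}$ handled by the inner hypothesis), the covering-compressed graph from \Cref{prop:covering-compressed is compressed path} feeding \Cref{thm:Core_path} at level $i$, \Cref{thm:covering path} assembling part (3) from parts (1) and (2), and \Cref{prop:para path} plus \Cref{prop:ball time helper} doing the parameter bookkeeping. The inductive steps for $i>0$ and the query-overhead recurrences are handled essentially as in the paper.

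However, your base case for part (1) at $i=0$ is circular as written, and this is a genuine gap. You propose to obtain $\Apxball^{\pi}(G_j,S,d',\eps_{0,j},\beta_0)$ from \Cref{thm:ball_path} ``with the covering at level $0$ (from part (3))''. But the level-$0$ covering of part (3) is, by \Cref{thm:covering path}, itself built from $\Apxball^{\pi}(G_j,S,d',\eps_{0,j},\beta_0)$ instances with $d'\le D_0(\stretch_0/\eps_{0,j})^{k_0}\le d'_{0,j}$ — i.e.\ exactly the part (1) objects at level $0$ you are trying to construct. There is no level $-1$ covering to break the cycle. The paper avoids this by taking the path-reporting ES-tree of \Cref{prop:ES tree path} \emph{itself} as the level-$0$ implementation of part (1) for all $d'\le d'_{0,j}$: since $d'_{0,j}=\stretch_0 D_1/\eps_{0,j}=\Ohat(n^{6/\distScale})$, the ES-tree's $O(|\ball_{G_j}(S,d')|\,d'\log n)$ total time and $(O(\log n),O(d'\log n))$ overhead already fit the claimed bounds, with $\eps=0\le\eps_{0,j}$ and simpleness $1\le\beta_0$. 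Note also that even if you repaired the circularity by first building the level-$0$ covering directly from ES-trees, running \Cref{thm:ball_path} on top of it would output simpleness $8\beta\Delta_0$, not $\beta_0$, which contradicts the statement of part (1) at $i=0$ and would break the calibration $\beta_i=21\lenapsp\Delta_{i-1}\beta_{i-1}$ used at all higher levels. So the base case must be the ES-tree, not an application of \Cref{thm:ball_path}; with that correction the rest of your argument goes through along the paper's lines.
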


The rest of the section is for proving \Cref{lem:main path induction}.
Before proving \Cref{lem:main path induction}, we prove the main
theorem (\Cref{thm:main path}) using it.

\paragraph{Proof of \Cref{thm:main path}.}

We apply \Cref{lem:main path induction} for $i=\distScale$ and
$j=0$. Recall that $G=G_{j}$, $\eps=\eps_{\distScale,0}$, $\beta_{\distScale}=\Ohat(1)$
by \Cref{prop:para path}(\ref{enu:para:beta}), $q_{\phi}^{(\distScale,0)}=c_{q}\sigma_{\max}\log n=O(\log^{2}n)$
and $q_{\path}^{(i,j)}=(2\distScale+1)\cdot\sigma_{\max}\cdot\overhead_{\path}=\Ohat(1)$.
Therefore, \Cref{lem:main path induction} gives us $\Apxball^{\pi}(G,S,d',\eps,\Ohat(1))$
data structure with $\Ohat(|\ball_{G}(S,d')|)=\Ohat(n)$ total update
time and $(O(\log^{2}n),\Ohat(1))$ query-time overhead as desired.

\subsection{Bounds for $\protect\Apxball^{\pi}$}

The proof is by induction on $i$ (starting from $0$ to $\distScale$)
and then on $j$ (starting from $\sigma_{\max}$ to $0$). We will
show that 
\[
T_{\Apxball^{\pi}}(G_{j},S,d',\eps_{i,j},\beta_{i})=(4^{i}2^{(\sigma_{\max}-j)})\cdot\left|\ball_{G_{j}}(S,d')\right|\cdot n^{2/k_{0}+12/\distScale}50^{\distScale}2^{c'(\log^{8/9}n)}
\]
for any $d'\le d'_{i,j}$ where $c'$ is some large enough constant,
which implies the claimed bound of $\left|\ball_{G_{j}}(S,d')\right|\poly\left(n^{1/k_{0}+1/\distScale}2^{\distScale+\sigma_{\max}+\log^{8/9}n}\right)$.

\paragraph{Base Cases ($i=0$).}

For $i=0$ and any $j\in[0,\sigma_{\max}]$, the path-reporting ES-tree
from \Cref{prop:ES tree path} has total update time at most 
\begin{align*}
T_{\Apxball^{\pi}}(G_{j},S,d',\eps_{0,j},\beta_{0}) & \le O(\left|\ball_{G_{j}}(S,d')\right|d'\log n)\\
 & \le\left|\ball_{G_{j}}(S,d')\right|\cdot O(D_{1}\frac{\stretch_{0}\log n}{\eps_{0,\sigma_{\max}}})\\
 & \le\left|\ball_{G_{j}}(S,d')\right|\cdot(4^{i}2^{(\sigma_{\max}-j)})\cdot n^{2/k_{0}+12/\distScale}50^{\distScale}2^{c'(\log^{8/9}n)}.
\end{align*}
and query-time overhead of $(O(\log n),O(d'\log n))\le(q_{\phi}^{(i,j)},q_{\path}^{(i,j)}).$

\paragraph{The Inductive Step.}

Below, we assume that $i>0$ and $j<\sigma_{\max}$. (The proof for
another base case when $j=\sigma_{\max}$ is exactly the same as below
but simpler, because we can ignore all terms related to $G_{j+1}$
as $G_{\sigma_{\max}+1}=\emptyset$.) We assume $d'>d'_{i-1,j}\defeq\stretch_{i-1}D_{i}/\eps_{i-1,j}$
otherwise we are done by induction hypothesis.

\uline{Total Update Time:} As path-reporting $(D_{i-1},k_{i-1},\eps_{i-1,j},\stretch_{i-1},\Delta_{i-1},\beta_{i-1})$-covering
of $G_{j}$ is already explicitly maintained, we can implement $\Apxball^{\pi}(G_{j},S,d',\eps_{i,j},\beta_{i})$
where $\eps_{i,j}=300\eps_{i-1,j}+\eps_{i,j+1}$ and $\beta_{i}\ge8\Delta_{i-1}\beta_{i-1}$
via \Cref{thm:ballpath} using total update time of 
\begin{align*}
 & T_{\Apxball^{\pi}}(G_{j},S,d',\eps_{i,j},\beta_{i})\\
\le & \Otil(\left|\ball_{G_{j}}(S,d')\right|\Delta_{i-1}\frac{(\stretch_{i}\cdot D_{i+1}/\eps_{i,j})}{\eps_{i-1,j}D_{i-1}})+T_{\Apxball^{\pi}}(G_{j},S,2(\frac{\stretch_{i-1}}{\eps_{i-1,j}})^{k_{i-1}}D_{i-1},\eps_{i-1,j},\beta_{i-1})+\\
 & T_{\Apxball^{\pi}}(G_{j+1},S,d',\eps_{i,j+1},\beta_{i})+T_{\Apxball}(G_{j+1},S,d',\eps_{i-1,j}).
\end{align*}
We will prove that $2(\frac{\stretch_{i-1}}{\eps_{i-1,j}})^{k_{i-1}}D_{i-1},\eps_{i-1,j}\le d'_{i-1,j}$
so that we can apply induction hypothesis on $T_{\Apxball^{\pi}}(G_{j},S,2(\frac{\stretch_{i-1}}{\eps_{i-1,j}})^{k_{i-1}}D_{i-1},\eps_{i-1,j},\beta_{i-1})$.
To see this, note that $D_{i}\ge\gamma_{i}D_{i-1}\ge(\frac{\stretch_{i}}{\eps_{i,\sigma_{\max}}})^{k_{i}}D_{i-1}$
by \Cref{prop:para path}(\ref{enu:para:gap and distance scale path},
\ref{enu:para:key path}). So 
\[
d'_{i-1,j}=\frac{\stretch_{i-1}}{\eps_{i-1,j}}D_{i}\ge\frac{\stretch_{i-1}}{\eps_{i-1,j}}\cdot(\frac{\stretch_{i}}{\eps_{i,\sigma_{\max}}})^{k_{i}}D_{i-1}\ge2(\frac{\stretch_{i-1}}{\eps_{i-1,j}})^{k_{i-1}}D_{i-1}
\]
where the last inequality is because $\frac{\stretch_{i-1}}{\eps_{i-1,j}}\ge2$,
$k_{i}\ge k_{i-1}$ and $\frac{\stretch_{i}}{\eps_{i,\sigma_{\max}}}\ge\frac{\stretch_{i}}{\eps_{i,j}}\ge\frac{\stretch_{i-1}}{\eps_{i-1,j}}$
(because $\frac{\stretch_{i}}{\stretch_{i-1}}\ge600=\frac{\eps_{i,j}}{\eps_{i-1,j}}$).
Also, to apply induction hypothesis on $T_{\Apxball^{\pi}}(G_{j+1},S,d',\eps_{i,j+1},\beta_{i})$,
we note that $d'\le d'_{i,j}\le d'_{i,j+1}$ because $\eps_{i,j}\ge\eps_{i,j+1}$
by \Cref{prop:para path}(\ref{enu:para:eps path}). Therefore,
the bound on $T_{\Apxball^{\pi}}(G_{j},S,d',\eps_{i,j},\beta_{i})$
is at most

\begin{align*}
 & \Otil(\left|\ball_{G_{j}}(S,d')\right|n^{2/k_{i-1}+12/\distScale}2^{c''(\log^{8/9}n)})+T_{\Apxball^{\pi}}(G_{j},S,d'_{i-i,j},\eps_{i-1,j},\beta_{i-1})+\\
 & T_{\Apxball^{\pi}}(G_{j+1},S,d_{i,j+1},\eps_{i,j+1},\beta_{i})+T_{\Apxball}(G_{j+1},S,d',\eps_{i-1,j}).\\
\le & \left|\ball_{G_{j}}(S,d')\right|n^{2/k_{0}+12/\distScale}2^{c''(\log^{8/9}n)}+(4^{i-1}2^{(\sigma_{\max}-j)})\cdot\left|\ball_{G_{j}}(S,d')\right|\cdot n^{2/k_{0}+12/\distScale}50^{\distScale}2^{c'(\log^{8/9}n)}\\
 & (4^{i}2^{(\sigma_{\max}-(j+1))})\cdot\left|\ball_{G_{j+1}}(S,d')\right|\cdot n^{2/k_{0}+12/\distScale}50^{\distScale}2^{c'(\log^{8/9}n)}+\left|\ball_{G_{j+1}}(S,d')\right|n^{2/k_{0}+12/\distScale}50^{\distScale}2^{c''(\log^{8/9}n)} & \text{for a constant }c''\\
\le & (2+2\cdot(4^{i-1}2^{(\sigma_{\max}-(j+1))}+4\cdot(4^{i-1}2^{(\sigma_{\max}-(j+1))})\times\left|\ball_{G_{j}}(S,d')\right|n^{2/k_{0}+12/\distScale}50^{\distScale}2^{c'(\log^{8/9}n)}\\
\le & 4^{i}2^{(\sigma_{\max}-j)}\left|\ball_{G_{j}}(S,d')\right|n^{2/k_{0}+12/\distScale}50^{\distScale}2^{c'(\log^{8/9}n)}
\end{align*}
where the first inequality is by induction hypothesis and by \Cref{prop:ball
time helper}, and the second inequality is because $G_{j+1}\subseteq G_{j}$
and $c'\ge c''$ as $c'$ is chosen to be large enough. This completes
the inductive step for update time.

\uline{Query-time Overhead:} Since $2(\frac{\stretch_{i-1}}{\eps_{i-1,j}})^{k_{i-1}}D_{i-1}\le d'_{i-1,j}$
and by induction hypothesis, we have $\Apxball^{\pi}(G_{j},S,2(\frac{\stretch_{i-1}}{\eps_{i-1,j}})^{k_{i-1}}D_{i-1},\eps_{i-1,j},\beta_{i-1})$
has query-time overhead at most $(q_{\phi}^{(i-1,j)},q_{\path}^{(i-1,j)})\le(Q_{\phi}^{(i-1,j)},Q_{\path}^{(i-1,j)})$.
Also, the path-reporting $(D_{i-1},k_{i-1},\eps_{i-1,j},\stretch_{i-1},\Delta_{i-1},\beta_{i-1})$-covering
of $G_{j}$ has query-time overhead at most $(Q_{\phi}^{(i-1,j)},Q_{\path}^{(i-1,j)})$
by induction hypothesis. Lastly, the query-time overhead of $\Apxball^{\pi}(G_{j+1},S,d',\eps_{i,j+1},\beta_{i})$
is at most $(q_{\phi}^{(i,j+1)},q_{\path}^{(i,j+1)})$ because $d'\le d'_{i,j}\le d'_{i,j+1}$.
Therefore, by \Cref{thm:ballpath}, the query-time overhead of $\Apxball^{\pi}(G_{j},S,d',\eps_{i,j},\beta_{i})$
is at most 
\[
(q_{\phi}^{(i,j+1)}+O(1),\max\{q_{\path}^{(i,j+1)}+O(1),Q_{\path}^{(i-1,j)}+O(\frac{d'}{\eps_{i,j}D_{i-1}})\cdot Q_{\phi}^{(i-1,j)}\})\le(q_{\phi}^{(i,j)},q_{\path}^{(i,j)}).
\]
To see why the inequalities hold, we assume that $c_{q}$ is a large
enough constant. So, we have

\[
q_{\phi}^{(i,j+1)}+O(1)\le c_{q}(\sigma_{\max}-j)\log n+c_{q}\le q_{\phi}^{(i,j)}.
\]
Also, 
\[
q_{\path}^{(i,j+1)}+O(1)\le(2i+1)\cdot(\sigma_{\max}-j)\cdot\overhead_{\path}+c_{q}\le(2i+1)\cdot(\sigma_{\max}-j+1)\cdot\overhead_{\path}=q_{\path}^{(i,j)}.
\]
Finally, 
\begin{align*}
Q_{\path}^{(i-1,j)}+O(\frac{d'}{\eps_{i,j}D_{i-1}})\cdot Q_{\phi}^{(i-1,j)} & \le2i\cdot(\sigma_{\max}-j+1)\cdot\overhead_{\path}+O(\frac{n^{2/\distScale}\stretch_{i}}{\eps_{0,\sigma_{\max}}})\cdot(c_{q}\sigma_{\max}12^{\distScale}\log n)\\
 & \le2i\cdot(\sigma_{\max}-j+1)\cdot\overhead_{\path}+\overhead_{\path}\\
 & =(2i+1)\cdot(\sigma_{\max}-j+1)\cdot\overhead_{\path}=q_{\path}^{(i,j)}
\end{align*}
where $\overhead_{\path}=n^{2/\distScale}\cdot12^{\distScale}\cdot2^{c_{q}\cdot\log^{8/9}n}$.

\subsection{Bounds for $\protect\Core^{\pi}$}

The proof is by induction on $i$ (starting from $0$ to $\distScale$)
and we can fix any $j$.

\paragraph{Base Cases ($i=0$).}

For $i=0$ and any $j\in[0,\sigma_{\max}]$, we have that a path-reporting
$(1,1,O(1),1)$-compressed graph of $G_{j}$ can be trivially maintained
by \Cref{prop:compressed base path}. By \Cref{thm:Corepath}
and since $\beta_{0}\ge7\lenapsp$ (by definition of $\beta_{0}$),
we can implement $\Core^{\pi}(G_{j},\Kinit,d',\beta_{i})$ with scattering
parameter $\scatter=\Omegatil(\phicmg)$ and stretch at most $\Otil(\gamma\lenapsp/\phicmg^{2})\le\stretch_{0}$
(by definition of $\stretch_{0}$) with total update time 
\begin{align*}
 & \Otil\left(T_{\Apxball^{\pi}}(G_{j},\Kinit,\stretch_{0}d',0.1,\beta_{0})(D_{1})^{3}\lenapsp/\phicmg\right)\\
 & =\left|\ball_{G_{j}}(\Kinit,\stretch_{0}d')\right|\poly\left(n^{1/k_{0}+1/\distScale}2^{\log^{8/9}n}\right)
\end{align*}
by the ES-tree from \Cref{prop:ES tree path}. As the query-time
overhead of the $(1,1,O(1),1)$-compressed graph is $(1,1)$ by \Cref{prop:compressed
base path} and, by \Cref{prop:ES tree path}, the query-time overhead
of $\Apxball^{\pi}(G_{j},S,4d',0.1,\beta_{i})$ is at most $(O(\log n),O(d'\log n))$.
The query-time overhead of \\
 $\Core^{\pi}(G_{j},\Kinit,d',\beta_{i})$ is at most $(O(\log n),\Otil(n^{1/k_{0}}\lenapsp/\phicmg^{2}))\le(Q_{\phi}^{(i,j)},Q_{\path}^{(i,j)})$.

\paragraph{The Inductive Step.}

\underline{Total Update Time:} For $i>0$ and and any $j\in[0,\sigma_{\max}]$,
given that a path-reporting $(D_{i-1},k_{i-1},\eps_{i-1,j},\stretch_{i-1},\Delta_{i-1},\beta_{i-1})$-covering
of $G_{j}$ is explicitly maintained, by \Cref{prop:covering-compressed
is compressed path}, we can automatically maintain a $(D_{i-1},\gamma_{i-1},\Delta_{i-1},3\beta_{i-1})$-compressed
graph where $\gamma_{i-1}\ge(\stretch_{i-1}/\eps_{i-1,j})^{k_{i-1}}$
by \Cref{prop:para}(\ref{enu:para:key}) and because $\eps_{0,\sigma_{\max}}\le\eps_{i-1,j}$.
By \Cref{thm:Corepath} and since $\beta_{i}\ge7\lenapsp\Delta_{i-1}\cdot(3\beta_{i-1})$,
we can maintain $\Core^{\pi}(G_{j},\Kinit,d',\beta_{i})$ with $\scatter=\Omegatil(\phicmg)$
and $\Otil(\gamma_{i-1}\lenapsp/\phicmg^{2})\le\stretch_{i}$ (by
definition of $\stretch_{i}$) with total update time 
\begin{align*}
 & \Otil\left(T_{\Apxball^{\pi}}(G_{j},\Kinit,\stretch_{i}d',0.1)\Delta_{i-1}^{2}(D_{i+1}/D_{i-1})^{3}\lenapsp/\phicmg\right)\\
 & =\left|\ball_{G_{j}}(\Kinit,\stretch_{i}d')\right|\poly\left(n^{1/k_{0}+1/\distScale}2^{\distScale+\sigma_{\max}+\log^{8/9}n}\right)
\end{align*}
by \Cref{enu:induction ball path} of \Cref{lem:main path induction}.

\underline{Query-time Overhead:} By \Cref{enu:induction ball path}
of \Cref{lem:main path induction}, the query-time overhead of $\Apxball^{\pi}(G_{j},S,\stretch_{i}d',0.1,\beta_{i})$
is at most $(q_{\phi}^{(i,j)},q_{\path}^{(i,j)})$. By induction hypothesis,
the path-reporting $(D_{i-1},k_{i-1},\eps_{i-1,j},\stretch_{i-1},\Delta_{i-1},\beta_{i-1})$-covering
has query-time overhead of $(Q_{\phi}^{(i-1,j)},Q_{\path}^{(i-1,j)})$
and so the query-time overhead of the $(D_{i-1},\gamma_{i-1},\Delta_{i-1},3\beta_{i-1})$-compressed
graph is at most $(3Q_{\phi}^{(i-1,j)},Q_{\path}^{(i-1,j)}+2Q_{\phi}^{(i-1,j)})$
by \Cref{prop:covering-compressed is compressed path}. Let $Q'_{\phi}=\max\{q_{\phi}^{(i,j)},3Q_{\phi}^{(i-1,j)}\}$
and $Q'_{\path}=\max\{q_{\path}^{(i,j)},Q_{\path}^{(i-1,j)}+2Q_{\phi}^{(i-1,j)}\}$.
By \Cref{thm:Corepath}, we have that the query-time overhead of
$\Core^{\pi}(G_{j},\Kinit,d',\beta_{i})$ is at most 
\[
(4Q'_{\phi},Q'_{\path}+\Otil(\frac{d'}{D_{i-1}}\lenapsp/\phicmg^{2})\cdot Q'_{\phi})\le(Q_{\phi}^{(i,j)},Q_{\path}^{(i,j)}).
\]
To see why the inequalities holds, we first note that $Q'_{\phi}=3Q_{\phi}^{(i-1,j)}$
because $q_{\phi}^{(i,j)}=c_{q}(\sigma_{\max}-j+1)\log n\le c_{q}(\sigma_{\max}-j+1)12^{i-1}\log n=Q_{\phi}^{(i-1,j)}$.
So, we have 
\[
4Q'_{\phi}=4\cdot\max\{q_{\phi}^{(i,j)},3Q_{\phi}^{(i-1,j)}\}=4\cdot3Q_{\phi}^{(i-1,j)}=Q_{\phi}^{(i,j)}.
\]
Also, we have 
\begin{align*}
 & Q'_{\path}+\Otil(\frac{d'}{D_{i-1}}\lenapsp/\phicmg^{2})\cdot Q'_{\phi}\\
 & \le\max\{q_{\path}^{(i,j)},Q_{\path}^{(i-1,j)}+2Q_{\phi}^{(i-1,j)}\}+\Otil(\frac{n^{2/\distScale}\lenapsp}{\phicmg^{2}})\cdot3Q_{\phi}^{(i-1,j)}\\
 & \le\max\{q_{\path}^{(i,j)}+\overhead_{\path},Q_{\path}^{(i-1,j)}+\overhead_{\path}\}\\
 & \le2(i+1)\cdot(\sigma_{\max}-j+1)\cdot\overhead_{\path}=Q_{\path}^{(i,j)}
\end{align*}
where $\overhead_{\path}=n^{2/\distScale}\cdot12^{\distScale}\cdot2^{c_{q}\cdot\log^{8/9}n}$.

\subsection{Bounds for Path-reporting Covering}

Recall that the algorithm from \Cref{thm:covering path} for maintaining
a path-reporting $(D_{i},k_{i},\eps_{i,j},\stretch_{i},\Delta_{i},\beta_{i})$-covering
of $G_{j}$ assumes, for all $D_{i}\le d'\le D_{i}(\frac{\stretch_{i}}{\eps_{i,j}})^{k}$,
$\Core^{\pi}$ and $\Apxball^{\pi}$ data structures with input distance
parameter $d'$. By \Cref{enu:induction ball path} and \Cref{enu:induction
core path} of \Cref{lem:main path induction}, we can indeed implement
these data structures for any distance parameter $d'\le D_{i+1}$.
Since $D_{i}(\frac{\stretch_{i}}{\eps_{i,j}})^{k_{i}}\le D_{i}\gamma_{i}\le D_{i+1}$
by \Cref{prop:para}(\ref{enu:para:gap and distance scale},\ref{enu:para:key}),
the assumption is indeed satisfied by \Cref{enu:induction ball path}
and \Cref{enu:induction core path} of \Cref{lem:main path induction}.

So, using \Cref{thm:covering path}, we can maintain a path-reporting
$(D_{i},k_{i},\eps_{i,j},\stretch_{i},\Delta_{i},\beta_{i})$-covering
of $G_{j}$ with $\Delta_{i}=\Theta(k_{i}n^{2/k_{i}}/\scatter)$ in
total update time of 
\begin{align*}
O(k_{i}n^{1+2/k_{i}}\log n/\scatter+\sum_{C\in\cC^{ALL}}T_{\Core^{\pi}}(\stage G{t_{C}}_{j},\stage C{t_{C}},d_{\Clevel(C)},\beta_{i})\\
+T_{\Apxball^{\pi}}(\stage G{t_{C}}_{j},\stage C{t_{C}},\frac{\stretch_{i}}{4\eps_{i,j}}32d_{\Clevel(C)},\eps_{i,j},\beta_{i}))
\end{align*}
where $\cC^{ALL}$ contains all cores that have ever been initialized
and, for each $C\in\cC^{ALL}$, $t_{C}$ is the time $C$ is initialized.
By plugging in the total update time of $\Apxball^{\pi}$ from \Cref{enu:induction
ball path} and $\Core^{\pi}$ from \Cref{enu:induction core path},
the total update time for maintaining the covering is 
\begin{align*}
\Otil(\frac{n^{1+2/k_{i}}}{\scatter}+\sum_{C\in\cC^{ALL}}\left|\ball_{\stage G{t_{C}}_{j}}(\stage C{t_{C}},\stretch_{i}d_{\Clevel(C)})\right|\poly\left(n^{1/k_{0}+1/\distScale}2^{\distScale+\sigma_{\max}+\log^{8/9}n}\right)+\\
\left|\ball_{\stage G{t_{C}}_{j}}(\stage C{t_{C}},\frac{\stretch_{i}}{4\eps_{i,j}}d_{\Clevel(C)})\right|\poly\left(n^{1/k_{0}+1/\distScale}2^{\distScale+\sigma_{\max}+\log^{8/9}n}\right)).
\end{align*}
As it is guaranteed by \Cref{thm:covering path} that 
\[
\sum_{C\in\cC^{ALL}}|\ball_{\stage G{t_{C}}_{j}}(\stage C{t_{C}},\frac{\stretch_{i}}{4\eps_{i,j}}d_{\Clevel(C)})|\le O(k_{i}n^{1+2/k_{i}}/\scatter),
\]
and therefore the above expression simplifies to $\Otil\left(n\cdot\poly\left(n^{1/k_{0}+1/\distScale}2^{\distScale+\sigma_{\max}+\log^{8/9}n}\right)\right)$.
As the query-time overhead of all invoked instances of $\Core^{\pi}$
and $\Apxball^{\pi}$is at most $(Q_{\phi}^{(i,j},Q_{\path}^{(i,j)})$
by \Cref{enu:induction ball path} and \Cref{enu:induction core
path} of \Cref{lem:main path induction}, the query-time overhead
of the covering $\cC$ is at most $(Q_{\phi}^{(i,j},Q_{\path}^{(i,j)})$
by definition.

\pagebreak
\chapter{Approximate Min-Cost Flow}
\label{part:minCostFlow}

In this part of the paper, we are concerned with the problem of maximum bounded cost flow (MBCF) and the min-cost flow problem. In both problems, the input is a graph $G=(V,E,c,u)$ where $c$ is the cost function and $u$ the capacity function both taken over edges \emph{and} vertices; and a source vertex $s$ and a sink vertex $t$. In MBCF, the algorithm is further given a cost budget $\overline{C}$. The MBCF problem is then to find the maximum feasible flow with regard to capacities and cost budget $\overline{C}$, i.e. a flow of cost at most $\overline{C}$ where no edge or vertex carries more flow than stipulated by the capacity function (for precise definitions of these properties, we refer the reader to the additional preliminary section \ref{sec:prelimFlow}). 

The main result of this section is our main theorem on flow. 

\minCostMain*

Since we can derive a $(1-\epsilon)$-approximate min-cost flow algorithm from an algorithm for MBCF by trying $\Otil(\log\log C)$ cost budget values (by performing binary search over every power of $(1+\epsilon)$ smaller than $\overline{C}$), we will focus for the rest of this section on the problem of MBCF and only return to the min-cost flow in the final discussion. We now state our final result for the MBCF problem.

\begin{theorem}\label{thm:MBCFFinal}
For any $\epsilon > 1/\polylog(n)$, given an undirected graph $G=(V,E,c,u)$, a source vertex $s$ and a sink vertex $t$, and a cost budget $\overline{C}$. Let $OPT_{G, \overline{C}}$ be the maximum value of any $s$-$t$ feasible flow of cost at most $\overline{C}$.

Then, there exists an algorithm that can return a feasible flow $\flow$ and is of value at least $(1-\epsilon)OPT_{G, \overline{C}}$. The algorithm can compute $\flow$ in time 
$m^{1+o(1)}$ and runs correctly with probability at least $1-n^{-7}$.
\end{theorem}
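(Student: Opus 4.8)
The plan is to prove Theorem~\ref{thm:MBCFFinal} by refining the Garg--Koenneman MWU framework (Algorithm~\ref{alg:overview-algorithmMWU}) so that, instead of routing flow explicitly along every $s$-$t$ path, we maintain only a randomized \emph{estimator} $\flowEst$ of the flow while the weight function $\weightFunctionEst$ is driven by $\flowEst$ rather than by the ideal flow $\flow$. The first step is to set up the bookkeeping: discretize capacities and costs into geometrically growing \emph{steadiness} buckets $\sigma(x,v)$ that are a function of $u(v)$ and $c(v)$, so that $\sigma_{\max} = n^{o(1)}$, and feed the resulting vertex-weighted, steadiness-labelled graph into the path-reporting data structure $\SSSP^\pi$ of Theorem~\ref{thm:main SSSP path}; a vertex-weighted graph is simulated by an edge-weighted one in the standard way. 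Each MWU iteration queries $\SSSP^\pi$ for the implicitly maintained $(1+\eps)$-approximate $s$-$t$ path $\pi(s,t)$, draws a threshold $\gamma$ from (a scaled) exponential distribution, and uses a threshold-subpath query to retrieve exactly the edges $(x,v)\in\pi(s,t)$ with $u(v)\lesssim\gamma\lambda(\pi(s,t))$; on those edges we push $u(v)$ units into $v$, updating $\flowEst$ and then $\weightFunctionEst$ multiplicatively as in line~\ref{item:overview-MWU-weight}. Because $\SSSP^\pi$ returns close-to-simple ($\beta$-edge-simple with $\beta = n^{o(1)}$) paths, we must argue the MWU analysis is robust to paths that repeat an edge at most $n^{o(1)}$ times; this costs only a $n^{o(1)}$ factor in the final running time and in the approximation-vs-iteration trade-off.

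The second step is the probabilistic analysis coupling $\flowEst$ to the ideal flow $\flow$ (the flow that would push $\lambda(\pi(s,t))$ on every edge of each queried path). By construction $\mathbb{E}[\inflow_{\flowEst}(v) \mid \text{history}] = \inflow_{\flow}(v)$ contribution per iteration, so the sequence of deviations $\inflow_{\flowEst}(v) - \inflow_{\flow}(v)$ is a martingale; after scaling all pushes down by a $n^{o(1)}$ factor to guarantee bounded increments, a Freedman/Azuma-type concentration bound gives $|\inflow_{\flow}(v) - \inflow_{\flowEst}(v)| \le \eps u(v)$ for all $v$ with probability $1 - n^{-\Omega(1)}$, which is exactly the second bullet of Definition~\ref{def:overview-pseudo-optimal}. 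The subtlety — and I expect this to be the main obstacle — is that $\flow$ is not a fixed target: it depends on earlier random choices of $\flowEst$ through $\weightFunctionEst$, which determines which path $\pi(s,t)$ is returned next. Handling this requires a careful filtration argument showing that, although the \emph{path} chosen in round $r$ is random, conditioned on the history the \emph{weight} function still satisfies the invariants needed for the primal-dual MWU argument to certify that the final $\flow$ has cost at most $(1+\eps)\overline{C}$ and value at least $(1-\eps)OPT_{G,\overline{C}}$. One must also verify Invariant~\ref{inv:overview-MWU} holds deterministically (every push into $v$ is $\hat\Omega(u(v))$), which combined with feasibility of $\flowEst/\Theta(\log n)$ bounds $\sum_{P\in\pset}|P| = n^{1+o(1)}$ and hence the total running time.

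The third step is the black-box reduction from a $(1-\eps)$-pseudo-optimal $\flowEst$ (Definition~\ref{def:overview-pseudo-optimal}) to a genuine feasible $(1-\eps)$-optimal flow, via the capacity-fitting procedure sketched in the overview: repeatedly halve total vertex capacity by re-running the pseudo-flow routine, discarding capacity on vertices $v$ with $\inflow_{\flowEst}(v) \ll u(v)$, until every vertex is either nearly saturated or has negligible capacity; then route the residual surplus of the pseudoflow by a single call to an edge-capacitated, cost-free approximate max flow (Sherman~\cite{Sherman17}) on an appropriately scaled graph. Each round costs $m^{1+o(1)}$ and there are $O(\log(nU))$ rounds, so the total is still $m^{1+o(1)}$. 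Finally, the $\log\log C$ factor in Theorem~\ref{thm:MainMinCost} comes from the reduction of min-cost flow to MBCF by binary search over the $O(\log\log C)$ many powers of $(1+\eps)$ that bracket the optimal cost, as noted before Theorem~\ref{thm:MBCFFinal}. I would present the MWU modification and its deterministic invariants first (these are self-contained), then the martingale concentration, then the capacity-fitting reduction, keeping the adaptive-adversary correctness of $\SSSP^\pi$ (guaranteed by Theorem~\ref{thm:main SSSP path}, since it is deterministic) as the crucial enabling ingredient that makes the whole pipeline legal.
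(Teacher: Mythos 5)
Your plan follows essentially the same route as the paper: a randomized-estimator MWU driven by threshold-subpath queries to the path-reporting SSSP structure, a martingale concentration argument coupling $\flowEst$ to the ideal flow $\flow$, and a capacity-fitting reduction that converts the resulting pseudo-optimal flow into a feasible $(1-\eps)$-optimal flow via one edge-capacitated approximate max-flow call, with the $\log\log C$ binary search handled separately. Apart from the paper's explicit preprocessing reduction to a vertex-capacitated instance (Proposition~\ref{prop:reductionVertexCapacities}), which you leave implicit, this is the paper's own proof strategy.
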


We derive the result stated in \Cref{thm:MBCFFinal} by a series of reductions. We start this section by stating some additional preliminaries and defining some crucial concepts. In \Cref{sec:explainProblemGiveOverview}, we then discuss the problem of MBCF in more detail and state formally our reductions which provides a roadmap for the rest of this chapter.

We recommend the reader to read the overview in Section \ref{sec:overview-flow} before reading the rest of this section, as it contains high-level intuition for our overall approach.

\section{Additional Preliminaries}
\label{sec:prelimFlow}

We sometimes use $\textsc{Exp}(x)$ in place of $e^x$ to avoid clutter. We use $\lceil x \rceil_\alpha$ to denote $x$ rounded up to the nearest power of $\alpha$.

\paragraph{Flows and Cuts.} Throughout this section, let $G=(V,E,c,u)$ be an undirected graph with cost function $c$ and capacity function $u$ and assume that two distinguished vertices $s$ and $t$ are given along with a cost budget $\overline{C}$. As we will show at the end of the preliminary section, \emph{we can assume w.l.o.g. that $c$ and $u$ are only defined over the vertices}. We define $C$ and $U$ to be the max-min ratios of functions $c$ and $u$, respectively. For convenience, we model $G$ as a graph where all edges are bidirectional: that is, $(x,y) \in E$ iff $(y,x) \in E$ (we get $c(x,y) = c(y,x)$ and $u(x,y) = u(y,x)$). 

We say that a vector $f \in \mathbb{R}^{E}_+$ is a \emph{flow} if it assigns flow mass $f(x,y) \geq 0$ to every edge $(x,y) \in E$. Slightly non-standardly, we do not assume skew-symmetry.

\paragraph{Flow Properties.} We further define the in-flow and the out-flow at a vertex $x \in V$ by
\[
    \inflow_f(x) = \sum_{y \in V} f(y,x) \textit{ and } \outflow_f(x) = \sum_{y \in V} f(x,y).
\]
Note that flow on the anti-parallel edges $(x,y)$ and $(y,x)$ is \emph{not} canceled by this definition. 

We say that a flow $f$ satisfies flow conservation constraints, if for every $x \in V \setminus \{s,t\}$, we have $\inflow_f(x) = \outflow_f(x)$. We further say that a flow $f$ is satisfies capacity constraints (or is capacity-feasible) if for every $x \in V$, $\inflow_f(u) \leq u(x)$. 

The \emph{cost} of a flow is defined to be 
\[
c(f) = \sum_v \inflow_f(v) \cdot c(v),
\]
where $\inflow_f(v) \cdot c(v)$ captures of the cost of the flow going \emph{through} vertex $v$. Observe that in a feasible $s$-$t$ flow, the vertex $s$ on each flow path has no flow going into the vertex, and we therefore do not attribute any cost to $s$. Note also that if the flow $f$ obeys conservation constraints (but at $s$ and $t$), then $\inflow_f(v) = \outflow(v)$ precisely captures the flow through $v$. We use the definition of the cost even for flows which do not satisfy conservation constraints.

Then, we say that a flow $f$ is cost-feasible if $c(f) \leq \overline{C}$. We say a flow $f$ is a pseudo-flow, if it is capacity- and cost-feasible. We say that $f$ is a feasible flow if it is a pseudo-flow and $f$ satisfies conservation constraints. For a feasible flow $f$, we say that the \emph{value} of the flow is the amount of flow sent from $s$ to $t$, or more formally $\inflow_f(t) - \outflow_f(t)$.

\paragraph{(Near-)Optimality.} Given a graph $G$, vertices $s,t$ and a cost budget $\overline{C}$, we let $OPT_{G, \overline{C}}$ denote the maximum flow value achieved by any feasible flow. We also define a notion of near-optimal flows. 

\begin{restatable}{definition}{nearOptStatement}[Near-Optimality] \label{def:nearOpt}
For any $\epsilon > 0$, given a graph $G$, source and sink vertices $s,t \in V$ and a cost budget $\overline{C}$, we say that a flow $\flow$ is $(1-\epsilon)$-optimal if the flow $f$ is cost-feasible and of value at least $(1-\epsilon) OPT_{G, \overline{C}}$.
\end{restatable}

\paragraph{Reduction to Vertex-Capacities Only.} Finally, we formally state a reduction from graphs $G$ with mixed capacities and costs to vertex capacities only. The reduction also enforces some additional desirable properties that we henceforth assume. The proof of \Cref{prop:reductionVertexCapacities} can be found in \Cref{subsec:proofOfReductionVertexCaps}.

\begin{restatable}{proposition}{reductionVertexCapacities}
\label{prop:reductionVertexCapacities}
Given $G=(V,E,c,u)$ with as defined above with capacities and costs taken over $E \cup V$, $\overline{C}$ and $1/n < \epsilon < 1$ and $m \geq 16$. Then, there is a $G' = (V',E',c',u')$ with $s'$ and $t'$ and $\overline{C}' = 32m^4$ such that:
\begin{enumerate}
    \item $(x,y) \in E'$ iff $(y,x) \in E'$. Further, for each $(x,y) \in E'$, $c'(x,y) = 0$ and $u'(x,y) = \infty$, and \label{prop:antiparallelEdges}
    \item $c'(s')=c'(t')= 0$, and \label{prop:stAreZeroCost}
    \item $V'$ is of size $n_{G'} \leq m+n+2$, $E'$ of size $m_{G'} \leq 2m+4$, and \label{prop:sizeofGprime}
    \item for each vertex $x \in V(G')$, $c(x) \in [1, m^5] \cup \{0\}$ and $u(x) \in [1, m^5]$, and \label{prop:boundedVertexWights}
    \item there is a map $\mathcal{M}_{G' \rightarrow G}$ that maps any $(1-\epsilon)$-optimal $s'$-$t'$ flow $f'$ in $G'$ to a $(1-\epsilon)^2$-optimal $s$-$t$ flow $f$ in $G$. The flow map can be applied in $O(m)$ time and $G'$ can be computed in $O(m \log n)$ time.\label{prop:nearOptimalTrans}
\end{enumerate}
\end{restatable}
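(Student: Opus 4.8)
\textbf{Proof plan for Proposition \ref{prop:reductionVertexCapacities}.}

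The plan is to perform three successive gadget reductions, each a standard transformation, and then track how an approximate flow in the final graph pulls back to the original. First I would eliminate edge capacities and edge costs by \emph{subdividing} each edge: replace every undirected edge $\{x,y\}\in E$ (viewed as the antiparallel pair $(x,y),(y,x)$) by a fresh vertex $z_{xy}$ together with the four antiparallel edges $(x,z_{xy}),(z_{xy},x),(y,z_{xy}),(z_{xy},y)$, and set $c(z_{xy})=c(x,y)$ and $u(z_{xy})=u(x,y)$. The key observation here is that for a flow obeying conservation at $z_{xy}$, the in-flow $\inflow_f(z_{xy})$ equals the amount of flow routed along the original edge in one direction, so a vertex capacity/cost on $z_{xy}$ exactly simulates the edge capacity/cost, and the costs/capacities on the new edges themselves can be taken to be $0$ and $\infty$ respectively — giving property \ref{prop:antiparallelEdges}. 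This adds $m$ vertices and (after counting antiparallel copies) at most $2m+4$ edges once we also handle $s,t$, matching property \ref{prop:sizeofGprime}. For property \ref{prop:stAreZeroCost} I would, if $c(s)\neq 0$ or $c(t)\neq 0$, attach a new source $s'$ by a zero-cost infinite-capacity edge to the old $s$ (and symmetrically $t'$), pushing any cost at the old $s,t$ onto interior vertices; since $s$ now has no inflow this costs nothing extra and keeps the value unchanged.

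Second I would rescale and round the vertex weights to achieve property \ref{prop:boundedVertexWights} and the explicit budget $\overline{C}'=32m^4$. Here I would divide all capacities by the minimum capacity and all costs by the minimum nonzero cost (so both are $\ge 1$), then observe that the max-min ratios $C,U$ are at most exponential in the input size but — and this is the point — \emph{we do not need them bounded polynomially, we need the weights bounded}. To get weights in $[1,m^5]$ I would use a bucketing/rounding argument: any vertex whose capacity is astronomically larger than $t$'s achievable throughput can have its capacity capped at a polynomial multiple of the obvious flow upper bound $U_{\max}$ without changing $OPT$, and symmetrically for costs, a vertex whose unit-cost contribution dwarfs the budget can simply be forbidden (capacity $0$) since no near-optimal flow uses it. After capping, rescale so the largest surviving weight is $m^5$; the $(1+\epsilon)$ slack is absorbed here, which is why the pullback in property \ref{prop:nearOptimalTrans} loses a factor $(1-\epsilon)$ rather than being exact. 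Choosing the budget normalization to land exactly at $32m^4$ is then just arithmetic: scale $c$ so that $\overline{C}$ becomes $32m^4$ and check the per-vertex costs still fit in $[1,m^5]$.

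Third, for property \ref{prop:nearOptimalTrans}, I would define $\mathcal{M}_{G'\to G}$ by undoing the subdivisions: given a feasible $(1-\epsilon)$-optimal flow $f'$ in $G'$, restore each subdivided edge $\{x,y\}$ by setting the flow on the original antiparallel edge to $f'(x,z_{xy})$ in the $x\to y$ direction and $f'(y,z_{xy})$ in the $y\to x$ direction (one may optionally cancel, but the cost definition via $\inflow$ means cancellation only helps), delete the $s',t'$ gadget by transferring flow to $s,t$, and undo the rescaling by multiplying through. Capacity-feasibility is preserved vertex by vertex; cost is preserved up to the rounding slack; and the value is unchanged by the gadgets. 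Composing the two $(1-\epsilon)$ losses (one from rounding inside this proposition, one already present in $f'$ being only $(1-\epsilon)$-optimal) yields a $(1-\epsilon)^2$-optimal flow in $G$. The map touches each edge and vertex a constant number of times, so it runs in $O(m)$ time, and building $G'$ is dominated by the sorting needed for the bucketing, hence $O(m\log n)$.

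The main obstacle I anticipate is \emph{not} any single gadget — those are textbook — but the bookkeeping in the second step: one must simultaneously (i) argue that capping large capacities and zeroing expensive vertices changes neither $OPT_{G,\overline C}$ nor the set of $(1-\epsilon)$-near-optimal flows, which requires a clean a priori bound on the optimal flow value in terms of $m$ and the (rescaled) weights; (ii) verify that after capping \emph{and} rescaling every vertex weight genuinely lands in $[1,m^5]\cup\{0\}$ rather than just "polynomially bounded"; and (iii) confirm the budget comes out to exactly $32m^4$ with the chosen normalization. The factor $32m^4$ and the exponent $m^5$ are presumably slack-generous choices so that all three fit with room to spare, so the real work is choosing the capping thresholds (something like $2m\cdot U_{\max}$ for capacities and $\overline C / m$ for the per-unit cost cutoff) and then checking the inequalities; I would isolate these as one technical lemma and defer the arithmetic to the appendix as the paper does.
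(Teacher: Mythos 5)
Your overall architecture matches the paper's: subdivide each edge into a new vertex carrying the edge's capacity and cost (so edges become cost-$0$, capacity-$\infty$), add the $s',t'$ gadget, drop/cap/rescale the vertex weights, and pull a flow back by undoing the subdivision and the scaling. The gap is in the middle step, and it is exactly the point you flag as "bookkeeping": your capping thresholds are stated relative to the wrong quantities, and the missing ingredient is a fast two-sided approximation of $OPT_{G,\overline{C}}$ itself. Capping capacities at a polynomial multiple of an \emph{upper bound} $U_{\max}$ on the flow value controls only the top of the range; it does nothing to guarantee that after rescaling all capacities land in $[1,m^5]$, since the smallest capacities may be exponentially smaller than $U_{\max}$, and you cannot simply discard small-capacity vertices relative to $U_{\max}$ (if $OPT$ is itself tiny, every optimal path may run through them). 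Symmetrically, forbidding every vertex of cost above $\overline{C}/m$ can destroy the optimum outright: if $OPT$ is small, a $(1-\epsilon)$-optimal flow may have to route through a vertex of cost close to $\overline{C}$. The correct thresholds must be expressed relative to $OPT_{G,\overline{C}}$: one may drop capacities below roughly $OPT/\mathrm{poly}(m)$ (such items carry negligible total flow), drop costs above roughly $\mathrm{poly}(m)\cdot\overline{C}/OPT$ (the budget forces such vertices to carry negligible flow), cap capacities at $\mathrm{poly}(m)\cdot OPT$, and round costs up to roughly $\overline{C}/(\mathrm{poly}(m)\cdot OPT)$ so that the rounding error is an $\epsilon$-fraction of the budget.

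This is why the paper's proof begins with a separate claim producing, in $O(m\log n)$ time, an estimate $\tilde{U}$ with $OPT_{G,\overline{C}}/2m^2\le\tilde{U}\le OPT_{G,\overline{C}}$: it sorts edges by (approximate) cost and sweeps prefixes while maintaining a maximum spanning forest with respect to (approximate) capacity using dynamic trees, taking for each prefix the minimum of the bottleneck capacity on the tree $s$--$t$ path and $\overline{C}/(2m\cdot\text{cost})$. All thresholds ($\tau_u=\tilde U/8m^2$, $\tau_c=8m\overline{C}/\tilde U$), the capacity cap $2m^2\tilde U$, and the scalings $\gamma_u,\gamma_c$ with $\gamma_u\gamma_c=\overline{C}/(32m^4)$ are then defined from $\tilde U$, which is what makes every rescaled weight land in $[1,m^5]\cup\{0\}$ and the budget come out to exactly $32m^4$; note also that the budget is normalized by the \emph{product} of the cost and capacity scalings, not by rescaling $c$ alone. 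Without this $OPT$-approximation (or an equivalent device), the second step of your plan as written does not go through.
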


\paragraph{Exponential Distribution.} We make use of the exponential distribution with parameter $\lambda > 0$, that is we use random variables $X$ with cumulative distribution function $\Pr[X \leq x] = 1 - e^{-\lambda x}$ for all $x \geq 0$, which we denote by the shorthand $X \sim \textsc{Exp}(\lambda)$.

\paragraph{A Path-reporting SSSP Structure.} Finally, we need a data structure akin to the one defined in \Cref{def:pathReportingSSSP} and implemented by \Cref{thm:main SSSP path}. Before stating the definition, we start with some preliminaries. 
	
Here, we consider an \emph{undirected} graph $G=(V,E,w,\sigma)$ that we again model by having an edge $(x,y) \in E$ iff $(y,x) \in E$. For any path $P$ in $G$, we assume that the edges used in $P$ are directed correctly along $P$, i.e. $P$ consist of edges $(v_1,v_2), (v_2, v_3), (v_3, v_4), \dots$. For each \emph{vertex} $v$, we have a weight $w(v)$, and we define the weight of a path $P$ in $G$ induced by $w$ by $w(P) = \sum_{(u,v) \in P} w(v)$ (i.e. only the tail vertex of each edge is accounted for). 

Each edge $e \in E$ is assigned integral \emph{steadiness} $\sigma(e)\in[1,\tau]$, for some parameter $\tau$. For any \emph{multi}-set $E'\subseteq E$ and $j$, we let $\sigma_{\le j}(E')=\{e\in E'\mid\sigma(e)\le j\}$ contain all edges from $E'$ of steadiness at most $j$. A path $P$ is \emph{$\beta$-edge-simple} if each edge appears in $P$ at most $\beta$ times. When $P$ is a (non-simple) path, $\sigma_{\le j}(P)$ is a \emph{multi-set} containing all occurrences of edges with steadiness at most $j$ in $P$. 

\begin{definition}[Path-reporting SSSP]
\label{def:Path-reportingSSSP}
Given a decremental graph $G=(V,E,w,\edgeSensitivity)$, some $\tau\geq 1$ such that $\edgeSensitivity(e) \in [1, \tau]$ for each $e \in E$, a simpleness parameter $\beta \geq 1$, a source and sink vertex $s,t \in V$ with $w(s) = w(t) = 0$, a distance approximation parameter $\epsilon > 0$. Then, we say that a data structure $\SSSP^{\pi}(G,s,t,\eps,\beta)$ is a \emph{Path-reporting SSSP Structure} if 
\begin{itemize}
    \item $t$ is associated with a $\beta$-edge-simple
    $s$-$t$ path $\shortestSTPath$ in $G$ of length at most $(1+\epsilon)\dist_{G}(s,t)$.
    \item given a steadiness index $j$, the data structure
    returns $\sigma_{\le j}(\shortestSTPath)$. 
\end{itemize}
\end{definition}

We point out that the associated path $\shortestSTPath$ is fixed after every update to make sure that the path $\shortestSTPath$ does not depend on steadiness threshold $j$. That is, regardless of which $\sigma_{\le j}(\shortestSTPath)$ is queried, the underlying path $\shortestSTPath$ is always the same. 
This will be key for the correctness of our flow estimators, as the threshold $j$ will be chosen randomly, and we will then analyze the probability of each edge on $\shortestSTPath$ being in the set $\sigma_{\le j}(\shortestSTPath)$. 

For the rest of this chapter, we only refer to a single instance of a data structure as given in \Cref{def:Path-reportingSSSP}. We can thus reserve the variables $\beta$, $\SSSP^{\pi}(G,s,t,\eps,\beta)$ and $\tau$ for this specific data structure and denote throughout by $\timeAugSSSP{m}{n}{W}{\tau}{\epsilon}{\beta}{\Delta}{\Delta'}$ the total update time of this data structure where $G$ undergoes $\Delta$ edge weight increases, and $\Delta'$ is defined to be the sum of the sizes all encodings of sets $\sensitivity_{\leq j}(\shortestSTPath)$ that were queried for plus the number of queries (i.e. $\Delta'$ is the size of the query output where we say that a single bit is output if the output set is empty). $W$ denotes the max-min weight ratio of vertex weights $w(v)$. 

We later show that we can implement $\SSSP^{\pi}(G,s,t,\eps,\beta)$ from the result in \Cref{thm:main SSSP path} rather straight-forwardly, but keep abstraction of Definition \ref{def:Path-reportingSSSP} to allow for future work to use our reductions.

\section{A Roadmap to the Reductions}
\label{sec:explainProblemGiveOverview}

Let us now give a brief overview of the reductions we require to obtain our result for the Maximum Bounded Cost Flow (MBCF) problem. We remind the reader that we henceforth assume various properties of $G$ as obtained by the reduction described in \Cref{prop:reductionVertexCapacities}, in particular that $G$ only has \emph{vertex} capacities/costs.

Our goal in this part is to computed the maximum feasible flow from $s$ to $t$ whose flow value we denote by $OPT_{G, \overline{C}}$. The final result we aim for in our reduction chain is a near-optimal flow; we restate the definition from the preliminaries.

\nearOptStatement*

While our final goal is to obtain a near-optimal flow, we will require a relaxation of this notion throughout the algorithm to make progress. We therefore introduce the notion of a $(1-\epsilon)$-pseudo-optimal flow. This relaxation allows us to couple a pseudo-flow to a near-optimal flow.

\begin{definition}[Near-Pseudo-Optimality]\label{def:feasibleFlow}
For any $\epsilon > 0$, given a graph $G$, source and sink vertices $s,t \in V$ and a cost budget $\overline{C}$, we say that a pseudo-flow $\flowEst$ is a $(1-\epsilon)$-pseudo-optimal flow if there exists a flow $\flow$ such that
\begin{enumerate}
    \item $\flow$ is a $(1-\epsilon)$-optimal flow (see \Cref{def:nearOpt}), and 
    \item $\forall v \in V$: $|\inflow_{\flow}(v) - \inflow_{\flowEst}(v)| \leq \epsilon \cdot u(v)$.
\end{enumerate}
\end{definition}

In \Cref{sec:globalViaDecrSSSP} we describe how to compute a $(1-\epsilon)$-pseudo-optimal pseudo-flow $\flowEst$ using a Path-reporting SSSP data structure as described in \Cref{def:Path-reportingSSSP}. This forms the centerpiece of our reduction. We therefore extend the powerful MWU framework by Garg and Koenemann \cite{garg2007faster} to work with random estimators. While this greatly speeds up the running time of the algorithm, this will be at the cost of only producing a $(1-\epsilon)$-pseudo-optimal flow. The main concern with the $(1-\epsilon)$-pseudo-optimal flow is that after routing $\flowEst$, each vertex might have some small excess, i.e. the flow conservation constraint might be slightly violated at each vertex.

Ideally, we could use repeated computations of near-pseudo-optimal flows to route the excess since the excess vector is itself a demand vector that can be modeled as another instance of $s-t$ flow. But the coupling guaranteed by Definition \ref{def:feasibleFlow} is too weak on its own for this approach to work. 
We thus need something stronger. Instead of directly strengthening the coupling condition guarantees \Cref{def:feasibleFlow} tighter, we use a different approach: we  "fit" the instance $G$ to the flow. Note that the definition below is informal and we need some slightly stronger properties for the reduction.

\begin{definition}[Informal]\label{def:capacityFittedInstanceInformal}
For any $\epsilon > 0$, given $G = (V,E,u,c)$ and a cost budget $\overline{C}$. Then, we say that a graph $G' = (V,E,u',c')$ is a $(1-\epsilon)$-\emph{capacity-fitted instance derived from $G$} if 
\begin{enumerate}
    \item for every $v \in V$, $u'(v) \leq u(v)$, and
    \item we have $\sum_{x \in V'} u'(x) \cdot c'(x) \leq 18 \cdot \overline{C}$, and \label{prop:capacityFittedPropCostInformal}
    \item $OPT_{G', \overline{C}} \geq (1-\epsilon) \cdot OPT_{G, \overline{C}}$. \label{prop:capacityFittedPropNearOptInformal}
\end{enumerate}
\end{definition}

Loosely speaking, the graph $G'$ in the above definition has the property that the optimal flow is close to saturating most edges in the graph. More formally, Property \ref{prop:capacityFittedPropCostInformal} says that in $G'$ even if the flow saturated every vertex, the total cost would still be at most $18\overline{C}$.

We will show that, rather surprisingly, using the intermediate of a capacity-fitted instance will yield a black box conversion from any algorithm for computing a $(1-\epsilon)$-approximate pseudo-optimal flow into an algorithm for computing a $(1-\epsilon)$-optimal flow. In particular, we first show in Section \ref{sec:minCostFlowFinally} that repeated computation of pseudo-optimal flows will allow us to compute a capacity-fitted instance $G'$ of $G$. We then show in Section \ref{sec:nearOptFromNearLocally} that once we have a capacity-fitted instance, we can convert a near-pseudo-optimal flow into a near-optimal flow by using only a single call to a basic $(1+\epsilon)$-approximate max flow algorithm (only edge capacities, no costs), such as the algorithm in \cite{sherman2013nearly, kelner2014almost, peng2016approximate}\footnote{We point out that we do not require \cite{sherman2013nearly, kelner2014almost, peng2016approximate} and could also devise a recursive scheme that invokes our own algorithm again. However, the reduction to approximate max flow with edge capacities is a significantly cleaner approach.}. 

We summarize this roadmap by restating the reduction chain
\begin{align*}
    \xRightarrow{\text{via Def. \ref{def:Path-reportingSSSP}, Sec. \ref{sec:globalViaDecrSSSP}}} &(1-\epsilon)\text{-pseudo-optimal flow}\\ \xRightarrow{\text{via Sec. \ref{sec:minCostFlowFinally}}} &(1-\epsilon)\text{-capacity-fitted instance}\\ \xRightarrow{\text{via Approx. Max Flow, Sec. \ref{sec:nearOptFromNearLocally}}} &(1-\epsilon)\text{-optimal flow}.
\end{align*}
Finally, we point out that while \Cref{sec:globalViaDecrSSSP} makes deliberate use of randomness, resulting in a Monte-Carlo algorithm, we will state the remaining reductions in deterministic fashion. Only at the end, when combining the chain of reductions, we revisit the issue of success probability.

Finally, combining all the reductions above, we have a reduction from the MBCF problem in any special instance that satisfies the properties of \Cref{prop:reductionVertexCapacities} to the Path-reporting SSSP data structure from \Cref{def:Path-reportingSSSP}. Since Proposition \Cref{prop:reductionVertexCapacities} then gives a reduction from any instance of MBCF to such a special instance and we showed in \Cref{part:augmented-queries} how to construct the desired data structure, we can plug in this data structure to obtain our near-optimal algorithm for mixed-capacitated MBCF. We thus obtain the final min-cost flow algorithms of Theorems \ref{thm:MainMinCost} and \ref{thm:MBCFFinal}. See \Cref{sec:puttingItAllTogether} for more details on how all the reductions fit together.

\section{Near-pseudo-optimal MBCF via Path-reporting Decremental SSSP}
\label{sec:globalViaDecrSSSP}

The main result of this section is summarized in the theorem below. 

\begin{theorem}[Near-pseudo-optimal MBCF]\label{thm:mainMinCost2}
Given graph $G=(V,E,c,u)$, a dedicated source $s$ and sink $t$, some cost budget $\overline{C}$, any $0 < \epsilon \leq 1/768$, a positive integer $\tau = O(\log n)$, and data structure $\SSSP^{\pi}(G,s,t,\eps,\beta)$ from \Cref{def:Path-reportingSSSP}. \\
Then, procedure $\textsc{NearPseudoOptMBCF}(G=(V,E, c,u), s,t, \epsilon, \tau, \overline{C})$ given in \Cref{alg:boundedCostFlowexcess} returns $\flowEst$ such that $\flowEst_{scaled} = \frac{\flowEst}{(1+10\epsilon)\log_{1+\epsilon}\left(\frac{1+\epsilon}{\initialDifferenceMWU}\right)}$ is a $(1-\Theta(\epsilon))$-pseudo-optimal flow. The algorithm runs in time \[
\Otil(m \beta \cdot  n^{10/\tau}/ \epsilon^2) + \timeAugSSSP{m}{n}{m^{6/\epsilon}}{\tau}{\epsilon}{\beta}{\Delta}{\Delta'}
\]
where $\Delta, \Delta' = \Otil(m \beta \cdot n^{10/\tau}/ \epsilon^2)$ and runs correctly with probability at least $1-n^{-10}$.
\end{theorem}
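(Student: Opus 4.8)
\textbf{Proof proposal for Theorem \ref{thm:mainMinCost2}.}

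The plan is to design and analyze a randomized variant of the Garg--Koenemann MWU framework (Algorithm \ref{alg:overview-algorithmMWU}) in which, rather than routing $\lambda(\shortestSTPath)$ units on every edge of the implicitly-maintained $(1+\epsilon)$-shortest path $\shortestSTPath$, we pick a random threshold $\gamma$ from an exponential distribution $\textsc{Exp}(\lambda)$ and, using a single threshold-subpath query $\sigma_{\le j}(\shortestSTPath)$ with $j$ the steadiness index corresponding to $\gamma$, we push $u(y)$ flow into every vertex $y$ on $\shortestSTPath$ whose capacity is at most $\gamma \lambda(\shortestSTPath)$ (up to the geometric bucketing error $n^{o(1)}$ induced by steadiness). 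The weight function $\weightFunctionEst$ is updated multiplicatively as in the standard framework but driven by $\flowEst$ rather than the ideal flow $\flow$; the path $\shortestSTPath$ of the next iteration is computed w.r.t.\ $\weightFunctionEst$ via $\SSSP^{\pi}(G,s,t,\eps,\beta)$. The source-sink weights being zero (guaranteed by \Cref{prop:reductionVertexCapacities}) matches the requirement $w(s)=w(t)=0$ of \Cref{def:Path-reportingSSSP}. I would first set up all parameters: the number of steadiness levels $\tau$, the bucket ratio $\eta = n^{\Theta(1/\tau)}$ so that $\tau$ levels cover the full capacity range $[1,m^5]$, the initial weights $\initialDifferenceMWU/\poly(m)$, and a scaling factor $n^{o(1)}$ by which all flow pushes are divided down (as hinted in Section \ref{sec:overview-flow}) to obtain martingale concentration.

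The core of the argument is to prove that the pseudoflow $\flowEst$ tracks an ideal flow $\flow$ vertex-by-vertex. I would define $\flow$ as the flow that routes $\lambda(\shortestSTPath)$ (scaled down) on every edge of the \emph{same} path $\shortestSTPath$ chosen by the algorithm in each iteration; crucially $\flow$ is a genuine flow obeying conservation. The key observation is that $\mathbb{E}[\text{increment to }\inflow_{\flowEst}(v)] = \text{increment to }\inflow_{\flow}(v)$ in each step, because $\Pr[\gamma \lambda \ge u(v)] = e^{-\lambda u(v)/(\lambda \cdot ?)}$ is tuned so that the expected pushed amount equals $\lambda(\shortestSTPath)$ (up to the bucketing slack). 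Since $\flow$ depends on earlier random choices (through $\weightFunctionEst$, hence through the path sequence), the sequence $\inflow_{\flowEst}(v) - \inflow_{\flow}(v)$ is a martingale with bounded increments (each push is at most $u(v)$, divided by the scaling factor, and the total inflow at $v$ over the whole run is $O(u(v)\log m)$ by feasibility of the returned scaled flow); I would apply Azuma--Hoeffding (or a Freedman-type bound for better control of the variance) to conclude $|\inflow_{\flow}(v) - \inflow_{\flowEst}(v)| \le \epsilon u(v)$ for all $v$ simultaneously with probability $\ge 1 - n^{-10}$, which is exactly the second condition of \Cref{def:feasibleFlow}. Separately I would run the standard primal-dual MWU analysis (following \cite{garg2007faster}) to show that $\flow_{scaled}$ is cost-feasible and of value $\ge (1-\Theta(\epsilon)) OPT_{G,\overline{C}}$, i.e.\ $(1-\Theta(\epsilon))$-optimal per \Cref{def:nearOpt}; the one subtlety is that the exit condition ($\weightFunctionEst(\shortestSTPath) > 1$) and the weight updates are now in terms of $\flowEst$, so the dual feasibility argument needs $\flowEst$ and $\flow$ to stay close — which is precisely what the martingale bound provides, giving a clean mutual dependency that I would resolve by a single union bound over the (polynomially many) iterations. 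Together these two pieces establish that $\flowEst_{scaled}$ is $(1-\Theta(\epsilon))$-pseudo-optimal.

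For the running time, the number of iterations of the MWU loop is $O(\log_{1+\epsilon}(m/\initialDifferenceMWU) \cdot \epsilon^{-1})$ (the standard Garg--Koenemann phase count, unchanged because weights still grow by a $(1+\epsilon)$ factor each time a vertex is saturated), which is $\Otil(1/\epsilon^2)$ up to logs; the data structure $\SSSP^{\pi}$ is decremental in the sense that $\weightFunctionEst$ only increases, so a single instance suffices and its total update time is $\timeAugSSSP{m}{n}{m^{6/\epsilon}}{\tau}{\epsilon}{\beta}{\Delta}{\Delta'}$ with $\Delta$ bounded by the number of weight increases and $\Delta'$ by the total output size of the threshold-subpath queries. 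By Invariant \ref{inv:overview-MWU} — which the exponential-threshold scheme is designed to enforce, since whenever we push flow into $y$ we push $\hat{\Omega}(u(y))$ — each vertex receives flow from the estimator $\Ohat(1)$ times, so $\sum |\shortestSTPath_{\text{touched}}| = \Ohat(m)$; combined with the $\beta$-edge-simpleness of each implicitly maintained path and the geometric bucketing overhead $n^{O(1/\tau)}$, we get $\Delta, \Delta' = \Otil(m\beta \cdot n^{10/\tau}/\epsilon^2)$, and the non-data-structure work (sampling, weight updates, bookkeeping) is of the same order, yielding the claimed bound $\Otil(m\beta \cdot n^{10/\tau}/\epsilon^2) + \timeAugSSSP{m}{n}{m^{6/\epsilon}}{\tau}{\epsilon}{\beta}{\Delta}{\Delta'}$.

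The main obstacle I anticipate is the concentration argument under the adaptive dependency: because $\flow$ (the benchmark) itself is determined by the realized randomness of $\flowEst$ in earlier steps, one cannot treat $\flow$ as a fixed target, and a naive Chernoff bound fails. Getting the martingale filtration right — conditioning on the algorithm's state (current $\weightFunctionEst$, current path $\shortestSTPath$) before the threshold $\gamma$ is drawn, verifying that the conditional expectation of the $\flowEst$-increment equals the $\flow$-increment exactly (accounting for the steadiness bucketing, which introduces a controlled one-sided slack that must be absorbed into the $\Theta(\epsilon)$), and controlling the sum of conditional variances so that Freedman's inequality gives the per-vertex $\epsilon u(v)$ bound with room for a union bound over $n$ vertices and all iterations — is the delicate heart of the proof. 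A secondary technical point is ensuring the aspect ratio of $\weightFunctionEst$ stays bounded by $m^{O(1/\epsilon)}$ so that the $W = m^{6/\epsilon}$ parameter passed to the data structure is valid; this follows from the initial weights being $\ge \initialDifferenceMWU/\poly(m)$ and each weight increasing by at most a bounded factor over $\Otil(1/\epsilon^2)$ phases, but it should be checked explicitly.
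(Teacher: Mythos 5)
Your proposal follows essentially the same route as the paper: the same randomized-threshold MWU with a flow estimator driving the weights, the same coupling of $\flowEst$ to the ideal flow $\flow$ routed on the identical path sequence via the unbiasedness of the exponential threshold, a martingale (online-Chernoff/Freedman-type) concentration argument over the adaptive filtration, the standard Garg--Koenemann primal-dual analysis transferred through the $(1+O(\epsilon))$-closeness of the estimator-driven and ideal weight functions, and a running-time bound obtained by charging each estimator push against a constant fraction of a vertex capacity or of the cost budget. The only points where the paper is slightly more careful than your sketch are that the steadiness (and hence the push size, roughly $\min\{u(y)/\deg(y),\overline{C}/(m\,c(y))\}/(\zeta\beta)$) is capped by the cost as well as the capacity so that per-iteration cost increments and the cost-estimator concentration go through, and that the deviation bound needed is only $O(u(v))$ before scaling, which the scaling factor $\Theta(\log m/\epsilon^2)$ then converts into the $\epsilon\,u(v)$ guarantee of pseudo-optimality.
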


We organize this section as follows: we first give some additional level of detail on the MBCF problem by providing an LP and dual LP for the problem. Building upon this discussion, we then introduce the reader to \Cref{alg:boundedCostFlowexcess} and give an overview of the analysis. This gives a further overview of the rest of the section which is dedicated to proving \Cref{thm:mainMinCost2}.

\subsection{LP formulation of the Vertex-Capacitated MBCF Problem}

Let us now describe a linear program (LP) that captures the MBCF problem (here we already assume w.l.o.g. that $G$ is vertex-capacitated and has $s$ and $t$ of infinite capacity and zero cost). The LP is given in  Program \ref{eq:LPprimal} where we denote by $P_{s,t}$ the set of all paths in $G$ from $s$ to $t$ and by $P_{v,s,t}$ the set of all $s$ to $t$ paths that contain the vertex $v \in V$. We remind the reader that we restrict our attention to vertex-capacitated graphs w.l.o.g. by Proposition \ref{prop:reductionVertexCapacities}.

\begin{equation} \label{eq:LPprimal}
\begin{array}{ll@{}llp{1cm}@{}l}
\text{maximize}  & \displaystyle\sum\limits_{p \in P_{s,t}} &f_{p} & & &\\
\text{subject to}& \displaystyle\sum\limits_{p \in P_{v,s,t}} & f_p &\leq &u(v) & \forall v \in V \setminus \{s,t\} \\
                 & \displaystyle\sum\limits_{p \in P_{s,t}}& c(p) \cdot f_p & \leq & \text{$\overline{C}$} &\\
                 && f_p &\geq& 0& \forall p \in P_{s,t}
\end{array}
\end{equation}
Observe that given a feasible solution $\{f_p\}$ to the LP, it is not hard to obtain a feasible flow $f$ of cost at most $\overline{C}$ as can be seen by setting $\flow(e) = \sum_{p \in P_{e,s,t}} f_p$ (the converse is true as well as can be seen from a flow decomposition). Throughout, we let $OPT_{G, \overline{C}}$ refer to the maximum value of the objective function (which is just the value of the flow $f$ from $s$ to $t$) obtained as the maximum over all feasible solutions. 

We also state the dual to the LP \ref{eq:LPprimal}:
\begin{equation} \label{eq:LPdual}
\begin{array}{ll@{}llp{1cm}@{}l}
\text{minimize}  & \displaystyle\sum\limits_{v \in V} &u(v) w_v + \overline{C} \varphi& & &\\
\text{subject to}&  \displaystyle\sum\limits_{v \in p} &w_v + \varphi c(v) &\geq &1 & \forall p \in P_{s,t}\\
                 && w_v &\geq& 0& \forall v \in V\\
                 && \varphi &\geq& 0& \\
\end{array}
\end{equation}
Intuitively, the dual LP minimizes over variables $w_v, \varphi$ which are related to capacity and cost budget, such that the metric induced by function $w'(x,y) = w_x + \varphi \cdot c(x)$ over each edge $(x,y) \in E$ ensures that any two vertices are at distance at least $1$.

In our analysis, we use weak duality to relate the two given LPs. 

\begin{theorem}[see for example \cite{boyd2004convex} for a more general proof]\label{thm:weakDuality}
We have that for any feasible solution $\{ f_p\}_p$ to the primal LP \ref{eq:LPprimal}, and any feasible solution $\{w_e\}_e, \varphi$ to the dual LP  \ref{eq:LPdual}, we have that
\[
    \sum_{p \in P_{s,t}} f_{p} \leq \sum_{v \in V} u(v) w_v + \overline{C} \varphi.
\]
\end{theorem}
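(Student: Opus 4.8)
The statement is textbook weak LP duality (see \cite{boyd2004convex}) specialized to the path-based formulation; the plan is to substitute the dual feasibility inequality into the primal objective and then interchange the order of summation over path--vertex incidences.

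Concretely, I would fix an arbitrary primal-feasible $\{f_p\}_{p\in P_{s,t}}$ and an arbitrary dual-feasible pair $(\{w_v\}_v,\varphi)$. Writing $c(p)=\sum_{v\in p}c(v)$ for the cost of a path, the dual constraint reads $\varphi\,c(p)+\sum_{v\in p}w_v\ge 1$, so since each $f_p\ge 0$ we get
\[
\sum_{p\in P_{s,t}}f_p \;\le\; \sum_{p\in P_{s,t}} f_p\Big(\varphi\,c(p)+\sum_{v\in p}w_v\Big)
\;=\; \varphi\sum_{p\in P_{s,t}}c(p)f_p \;+\; \sum_{v\in V} w_v\sum_{p\in P_{v,s,t}}f_p,
\]
where the last equality is double counting: $\sum_{p} f_p\sum_{v\in p}w_v=\sum_{v} w_v\sum_{p\ni v}f_p$. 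Then I would apply primal feasibility termwise: the cost constraint gives $\sum_{p} c(p)f_p\le \overline{C}$ (and $\varphi\ge0$), while the capacity constraint at each $v$ gives $\sum_{p\in P_{v,s,t}}f_p\le u(v)$ (and $w_v\ge0$). Substituting these two bounds yields $\sum_{p} f_p\le \overline{C}\varphi+\sum_{v} u(v)w_v$, which is exactly the claimed inequality.

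The only delicate point, and it is minor, is the bookkeeping around $s$ and $t$: the primal imposes capacity constraints only for $v\in V\setminus\{s,t\}$, whereas the dual objective and constraint are written over all of $V$. This causes no genuine difficulty because $c(s)=c(t)=0$ by \Cref{prop:reductionVertexCapacities}, so $c(p)$ only ever charges internal vertices; accordingly the incidence sums above are to be read over $V\setminus\{s,t\}$ (equivalently, one restricts to the dual solutions with $w_s=w_t=0$, which are the only ones for which the objective is meaningful when $s$ and $t$ carry large capacity). With this reading the summation swap involves only internal path vertices, and there is no obstacle beyond this accounting — the argument is the standard specialization of weak duality.
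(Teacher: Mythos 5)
Your proof is correct: it is the standard weak-duality argument (multiply each dual path-constraint by $f_p\ge 0$, swap the path--vertex summation, then apply the primal capacity and cost constraints together with $w_v,\varphi\ge 0$), and your handling of $s,t$ — where $c(s)=c(t)=0$ and the capacities are infinite, so either $w_s=w_t=0$ or the right-hand side is trivially infinite — is the right bookkeeping. The paper itself gives no proof and simply invokes this as textbook weak duality via the citation to \cite{boyd2004convex}, so your write-up is exactly the argument the paper relies on.
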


\subsection{Algorithm and High-Level Analysis}

\begin{algorithm}
\caption{$\textsc{NearPseudoOptMBCF}(G=(V,E, c,u), s,t, \epsilon, \tau, \overline{C})$}
\label{alg:boundedCostFlowexcess}
\KwIn{A vertex-capacitated graph $G=(V,E, c,u)$, two vertices $s, t \in V$, a cost function $c$ and a capacity function $u$ both mapping $V \rightarrow \mathbb{R}_{>0}$, an approximation parameter $\epsilon > 0$, a positive integer $\tau = O(\log n), \beta \geq 1$, and a cost budget $\overline{C} \in \mathbb{R}^+$.}
$\flowEst \gets \mathbf{0}^E$; $\initialDifferenceMWU = m^{-1/\epsilon}$; $\costFunctionEst \gets \initialDifferenceMWU/\overline{C} $; $\flowRouted = n^{10/\tau}$; $\zeta = 3860 \log n \cdot \log_{1+\epsilon}(\frac{1+\epsilon}{\initialDifferenceMWU})$.\;
\lForEach{$v \in V$}{ 
    $\weightFunctionEst(v) \gets \initialDifferenceMWU/u(v)$.\label{lne:defineWeight} 
}
\ForEach{$e =(x,y) \in E$}{ 
    $\edgeSensitivity(e) \gets  \left\lfloor \log_{\flowRouted} \left(\min \{ \frac{\overline{C}}{m \cdot c(y)} , \frac{u(y)}{\deg(y)} \} / (\zeta \beta) \right) \right\rfloor$.\label{lne:defineSteadiness} 
}
\BlankLine
Start a data structure $\SSSP^{\pi}(G,s,t,\eps,\beta)$ as described in \Cref{def:Path-reportingSSSP} on $G$ from source vertex $s$ with weight function $\weightFunctionCombined(x) =  \weightFunctionEst(x) + \lceil \costFunctionEst  \rceil_{(1+\epsilon)}\cdot c(x)$, steadiness function $\edgeSensitivity$ and approximation parameter $\epsilon$.\label{lne:dataStructureInit}\;
\BlankLine

\tcc{Run the MWU algorithm from \cite{ChuzhoyS20_apsp} on $G$ where adding flow on edges is replaced by randomly estimating the flow on each edge.}
\While(\label{lne:whileLoopMWU}){$\sum_{v \in V} u(v) \cdot \weightFunctionEst(v) + \overline{C} \cdot \costFunctionEst < 1$}{
    Find a $(1+\epsilon)$-approximate shortest path $\shortestSTPath$ with respect to $\weightFunctionCombined$ using $\SSSP^{\pi}(G,s,t,\eps,\beta)$.\;
    $\minCapIndexMWU \gets \min \; \{\forLoopIndexMWU \;|\; \sensitivity_{\forLoopIndexMWU}(\shortestSTPath) \neq \emptyset\}$. \label{lne:computeFlowCap} \tcp*{$\Lambda$ is the minimum steadiness on $\shortestSTPath$.}
    $\randomThreshold \gets \lceil \textsc{Exp}(\log \flowRouted) \rceil$.\label{lne:selectSteadinessThreshold}\tcp*{Choose random steadiness threshold $\randomThreshold$.}
    \tcc{The loop below will simulate adding $\flowRouted^{\minCapIndexMWU}$ to every edge on $\shortestSTPath$ by updating every edge with probability scaling in its steadiness and update flow according to the inverse of the probability. We estimate the path cost similarly.}
    $\hat{c} \gets 0$.\tcp*{Estimator only for $c(\shortestSTPath)$.}
    \tcc{We iterate over all low steadiness edges (we consider them as directed on the flow path).}
    \ForEach(\label{lne:innermostForeachMWU}){$e = (x,y) \in \sensitivity_{\leq \minCapIndexMWU + \randomThreshold}(\shortestSTPath)$}{
        $\flowEst(e) \gets \flowEst(e) + \flowRouted^{\edgeSensitivity(e)}$.\label{lne:deployFlowToSensitiveEdges}\tcp*{Add flow to flow estimator.}
        $\weightFunctionEst(y) \gets \weightFunctionEst(y) \cdot \textsc{Exp}\left(\frac{\epsilon \flowRouted^{\edgeSensitivity(e)}}{u(y)}\right)$.\label{lne:increaseWeight}\tcp*{Update weight.}
        $\hat{c} \gets \hat{c} + c(y) \cdot \flowRouted^{\edgeSensitivity(e)}$.\label{lne:deployCostEstimator}\tcp*{Update cost estimate.}
    }
    $\costFunctionEst \gets \costFunctionEst\cdot \textsc{Exp}\left(\frac{\epsilon \cdot \hat{c}}{\overline{C}}\right)$.\label{lne:increaseVarPhi} \tcp*{Update the cost function.}
}
\BlankLine
\Return $\flowEst$ \label{lne:returnFlow} 
\end{algorithm}

Our algorithm follows the high-level framework of Garg and Koenneman for computing a maximum flow \cite{garg2007faster}, though with the crucial differences mention below. Although our write-up is entirely self-contained, we recommend readers unfamiliar with the MWU framework to start with the paper of Garg and Koenneman \cite{garg2007faster}, or with a more recent exposition in appendix C.1 of \cite{ChuzhoyS20_apsp}, which uses  notation that is more similar to ours. We also remind the reader that a high-level overview of our differences with the standard MWU framework can be found in Section \ref{sec:overview-flow}.

Inspired by the dual LP \ref{eq:LPdual}, \Cref{alg:boundedCostFlowexcess} defines an initial weight function $\weightFunctionEst$ over the vertices which is set to have very small values in the beginning and similarly assigns the $\costFunctionEst$ variable as small value. 

It then maintains an Path-reporting SSSP data structure on the graph $G$ with weight function $\weightFunctionCombined(x) \approx \weightFunctionEst(x) + \costFunctionEst \cdot c(x)$. Subsequently, the algorithm computes shortest-paths in metric induced by $\weightFunctionCombined$ and increases flow along \emph{some} edges on the shortest path. The combined flows will later form the flow variables for the primal solution given in LP \ref{eq:LPprimal}. Based on the flow updates, the algorithm then increases $\weightFunctionCombined(x)$ for every vertex whose in-flow was increased. We point out that in our algorithm, in stark contrast to previous algorithms, the flow is not directly added to the identified shortest path but instead we only maintain a random estimator $\flowEst(e)$ at each edge, that estimates how much flow should have been added throughout the algorithm. Based on the value of $\flowEst(e)$ where $e=(x,y)$, we increase $\weightFunctionEst(y)$ which in turn increases $\weightFunctionCombined(y)$. This in turn implies that we do not route a lot of flow through $y$ before $\weightFunctionCombined(y)$ becomes too large for $y$ to appear on an (approximate) shortest path.

Analyzing \Cref{alg:boundedCostFlowexcess} is rather involved since we have to combine the classic analysis of the multiplicative weight update framework for max flow and maximum bounded cost flow as given in \cite{garg2007faster, fleischer2000approximating, madry2010faster, Chuzhoy:2019:NAD:3313276.3316320, ChuzhoyS20_apsp} with some strong concentration bounds for the flow and the cost of the flow to get control over the heavy randomization we introduced. 

\paragraph{Notation for each Iteration.} We use the notation that a variable in the algorithm used with subscript $i$ denotes the variable after the $i^{th}$ while-iteration in our algorithm. For example, $\flowEst[i]$ denotes the (pseudo-)flow $\flowEst$ after the $i^{th}$ iteration. An overview of variables with definitions is given in \Cref{tab:versions_Variables}. We let $k$ be the number of iterations of the while-loop (hence $k$ is itself a random integer). 

\paragraph{The Pseudo-Flow and the Real Flow.} Recall that our final goal will be to show that a near-optimal pseudo-flow $\flowEst$, which is close to some near-optimal flow $f$ (see Definition \ref{def:feasibleFlow}). The flow $f$ that we will compare to is defined as follows. Let $f$ be the flow that is obtained by routing during each iteration $i^{th}$ exactly $\flowRouted^{\minCapIndexMWU[i]}$ units of flow along the approximate shortest path $\shortestSTPath[i]$ (i.e. every edge receives exactly this amount of flow). Let again flow $\flow[i]$ be the flow incurred by the paths chosen in the first $i$ iterations. Note that although $\flow$ obeys conservation constraints, still depends on $\flowEst$, because the path $\pi(s,t)$ is defined using weights $\weightFunctionEst$, which is updated according to $\flowEst$. 

\paragraph{Comparison to the Previous Approach.}
In the framework of Garg and Koenneman \cite{garg2007faster}, there is no pseudo-flow $\flowEst$. There is only the flow $\flow$, and the weight function $\weightFunction$ and cost function $c$ are updates using $\flow$ instead of $\flowEst$. 

Then, the key ideas are as follows. We first note that if we followed the pseudocode of Algorithm  \ref{alg:boundedCostFlowexcess} (but with $\flow$ instead of $\flowEst$), then the final flow $\flow$ returned  is not capacity feasible. However, it turns out that scaling the flow to obtain $\flow_{scaled} = \frac{\flow}{(1+10\epsilon)\log_{1+\epsilon}\left(\frac{1+\epsilon}{\initialDifferenceMWU}\right)}$ is sufficient to make it feasible. Intuitively, a vertex $v$ starts with very small weight in the algorithm but every time that $u(v)$ flow is added to the in-flow of $v$, the weight $\weightFunction(v)$ of the vertex is increased by a $e
^{\epsilon} \approx (1+\epsilon)$ factor (see \Cref{lne:increaseWeight}) and thus after $\sim \log_{1+\epsilon}(\initialDifferenceMWU)$ times that in-flow of roughly $u(v)$ is added to $v$, the vertex $v$ becomes too heavy in weight to appear on any shortest path and therefore no additional flow is added to $v$ and we only need to scale as pointed out above. A similar argument ensures that the flow $\flow_{scaled}$ is cost-feasible.

To ensure that the flow $\flow_{scaled}$ is a flow of almost optimal flow value, Garg and Koennemann always augment the flow along the currently shortest path $\shortestSTPath$ with regard to the weight function $\weightFunctionCombined$ (defined in \Cref{lne:dataStructureInit} where $c$ is the original cost of the edge). They then use that since the weight $(\weightFunction + \costFunction \circ c)(\shortestSTPath)$ of the shortest path represents the left-hand side value of the most violated constraint in the dual LP \ref{eq:LPdual}, that scaling $\weightFunction$ and $\costFunction$ by $1/(\weightFunction + \costFunction \circ c)(\shortestSTPath)$ gives a feasible solution to the dual LP \ref{eq:LPdual}. Using weak duality as described in \Cref{thm:weakDuality}, it is then straight-forward to obtain that 
\[
    \weightFunctionCombined(\shortestSTPath) = (\weightFunction + \costFunction \circ c)(\shortestSTPath) \leq \frac{\sum_{v \in V} u(v) \weightFunction(v) + \overline{C} \costFunction}{OPT_{G, \overline{C}}}
\]
Using this insight, Garg and Koenemann can upper bound the objective function $\objFunction = \sum_{v \in V} u(v) \weightFunction(v) + \overline{C} \costFunction$ which serves as a potential function in the analysis and obtain a near-optimal lower bound on the objective value in terms of the optimal solution. Fleischer \cite{fleischer2000approximating} later showed that one can relax the requirement of using a shortest path to using only a $(1+\epsilon)$-approximate shortest path.

\paragraph{Our Approach.}
We follow this fundamental approach of the original analysis, however, we only have an estimator $\flowEst$ of $\flow$ and correspondingly only an estimator for $\weightFunctionCombined$. Moreover, as mentioned above, $\flow$ actually depends on $\flowEst$ because the shortest path  $\shortestSTPath[i]$ added to $\flow[i]$ is defined in terms of weights $\weightFunctionCombined[i-1]$, which were induced by $\flow[i-1]$. In order to analyze flow $\flow$, our goal will be to show that before each iteration $i$, we have that $\weightFunctionCombined[i-1]$ as induced by $\flowEst[i-1]$ is within a $(1+\epsilon)$ factor of $\weightFunctionCombined[i-1]$ as induced by $\flow[i-1]$. Using some rather straight-forward arguments this implies that the next approximate shortest path $\shortestSTPath[i]$ is $(1+\epsilon)^2$-approximate with regard to the metric induced by $\weightFunctionCombined[i-1]$ as induced by $\flow[i-1]$.

To this end, we notice that in order to bound the difference in the resulting function $\weightFunctionCombined$, we are required to show very strong concentration bounds to prove that $|\inflow_{\flow}(v) - \inflow_{\flowEst}(v)| \leq u(v)$ for each $v \in V$ and $|c(\flow) - c(\flowEst)| \leq \overline{C}$. Using careful arguments, we can derive the required concentration bounds. We can then finally use the concentration bounds to recover good guarantees for the flow estimator $\flowEst$ that our algorithm returns by relating it back to $\flow$.

In \Cref{subsec:feasibility}, we show that the returned flow estimator satisfies capacity- and cost-feasibility. We then show strong concentration bounds in \Cref{subsec:conBoundCaps}. Finally, we combine these results which allows us to carry out the analysis for the correctness of the algorithm following closely the approach by  Garg and Koenemann in \Cref{subsec:correctnessMWU}. Finally, in \Cref{subsec:runningTimeMWU}, we bound the total running time of  \Cref{alg:boundedCostFlowexcess}. 

\begin{table}[!ht]
    \centering
    \begin{tabular}{c|m{13cm}}
    \hline
    $\shortestSTPath[i]$ & The $(1+\epsilon)$-approximate shortest path $\shortestSTPath$ in the $i^{th}$ iteration of the while-loop in \Cref{lne:whileLoopMWU}.\\ \hline
    $\minCapIndexMWU[i]$ & The min-capacity on the path $\shortestSTPath[i]$ found during the $i^{th}$ iteration.\\\hline
    $\randomThreshold[i]$ & The value of $\randomThreshold$ during the $i^{th}$ iteration.\\\hline
    $\flowEst[i]$ & The total flow estimator after the $i^{th}$ iteration. \\\hline
    $\hat{c}_i$ & The cost of the flow $\flowEst[i] - \flowEst[i-1]$, added during iteration $i$. \\\hline
    $\weightFunctionEst[i]$ & The function $\weightFunctionEst$ after being updated using $\flowEst[i]$. \\\hline
    $\costFunctionEst[i]$ & The function $\costFunctionEst$ after being updated using $\flowEst[i]$. \\\hline
    $\weightFunctionCombined[i]$ & The combined weight function obtained from $\weightFunctionEst[i]$ and $\costFunctionEst[i]$.\\ \hline
    $\flow[i]$ & The flow obtained from routing $\flowRouted^{\minCapIndexMWU[j]}$ units of flow along each edge on $\shortestSTPath[j]$ for each $j \leq i$.\\ \hline
    $k$ & Number of iterations of the while-loop.\\ \hline
    \end{tabular}
    \caption{Variables depending on $i$, the iteration of the while-loop in \Cref{lne:whileLoopMWU}.}
    \label{tab:versions_Variables}
\end{table}

\subsection{Capacity- and Cost-Feasibility of the Returned Flow after Scaling}
\label{subsec:feasibility}

We start by proving that the flow estimator $\flowEst$ after scaling is a capacity- and cost-feasible flow. In order to obtain this feasibility result, we first upper bound the maximum amount of flow added to the in-flow of a vertex in a single while-loop iteration and analogously the maximum additional cost we add to the flow.

\begin{claim}
For any iteration $1 \leq i \leq k$,  we have that
\begin{align}
   & |\inflow_{\flowEst[i]}(v) - \inflow_{\flowEst[i-1]}(v)| \leq u(v) / \zeta &  \forall v \in V \label{clm:smallFlowIncrementInEachIteration}\\
   &|c(\flowEst[i]) - c(\flowEst[i-1])| = \hat{c}_i \leq \overline{C}/\zeta. \label{clm:smallCostOfEveryPath} &
\end{align}
\end{claim}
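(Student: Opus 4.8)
The plan is to prove both inequalities directly from the update rules in \Cref{alg:boundedCostFlowexcess}, bounding what can be added to $\inflow(v)$ (resp. the cost) in a single while-loop iteration. The key observation is that the steadiness function $\edgeSensitivity$ defined in \Cref{lne:defineSteadiness} is precisely calibrated so that for any edge $e=(x,y)$, the amount $\flowRouted^{\edgeSensitivity(e)}$ of flow added to $e$ in \Cref{lne:deployFlowToSensitiveEdges} is at most $\min\{\tfrac{\overline{C}}{m\,c(y)},\tfrac{u(y)}{\deg(y)}\}/(\zeta\beta)$, since $\edgeSensitivity(e) = \lfloor \log_{\flowRouted}(\cdots/(\zeta\beta)) \rfloor$ and $\flowRouted > 1$ implies $\flowRouted^{\lfloor \log_{\flowRouted}(z)\rfloor} \le z$.

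For \Cref{clm:smallFlowIncrementInEachIteration}: in iteration $i$, the in-flow of $v$ increases only via \Cref{lne:deployFlowToSensitiveEdges} for edges $e=(x,v)$ pointing \emph{into} $v$ that lie in $\sensitivity_{\le \minCapIndexMWU[i]+\randomThreshold[i]}(\shortestSTPath[i])$. Since $\shortestSTPath[i]$ is $\beta$-edge-simple, at most $\beta$ copies of each such edge appear; and since $v$ has at most $\deg(v)$ in-edges, the total number of (occurrences of) in-edges of $v$ on the path is at most $\beta\deg(v)$. Each occurrence contributes flow $\flowRouted^{\edgeSensitivity(e)} \le \tfrac{u(v)}{\deg(v)\,\zeta\beta}$ by the bound above. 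Hence the total increase is at most $\beta\deg(v)\cdot \tfrac{u(v)}{\deg(v)\,\zeta\beta} = u(v)/\zeta$, as claimed. (The absolute value is immediate since $\flowEst$ only increases.)

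For \Cref{clm:smallCostOfEveryPath}: note first that $|c(\flowEst[i]) - c(\flowEst[i-1])| = \sum_v c(v)\,(\inflow_{\flowEst[i]}(v) - \inflow_{\flowEst[i-1]}(v))$, which is exactly the quantity $\hat{c}_i$ accumulated in \Cref{lne:deployCostEstimator} (summing $c(y)\cdot\flowRouted^{\edgeSensitivity(e)}$ over all low-steadiness edges $e=(x,y)$ of $\shortestSTPath[i]$). Again using $\beta$-edge-simpleness, each of the at most $m$ (directed) edges $e=(x,y)$ of $G$ appears at most $\beta$ times on the path, and each occurrence contributes cost $c(y)\cdot\flowRouted^{\edgeSensitivity(e)} \le c(y)\cdot\tfrac{\overline{C}}{m\,c(y)\,\zeta\beta} = \tfrac{\overline{C}}{m\,\zeta\beta}$. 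Summing over all $\le m$ edges and $\le \beta$ occurrences gives $\hat{c}_i \le m\cdot\beta\cdot\tfrac{\overline{C}}{m\,\zeta\beta} = \overline{C}/\zeta$, as desired.

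I expect no serious obstacle here; this is a routine bookkeeping argument. The only point requiring mild care is the counting of edge occurrences: one must invoke the $\beta$-edge-simpleness guarantee of $\SSSP^{\pi}$ (from \Cref{def:Path-reportingSSSP}) correctly — counting per in-edge of $v$ for the first bound, and per directed edge of $G$ for the second — and be careful that the set $\sensitivity_{\le \minCapIndexMWU[i]+\randomThreshold[i]}(\shortestSTPath[i])$ is a multiset whose size is thereby controlled. One should also double-check the degenerate case where a vertex has $c(y)=0$ (only $s$ and $t$, by \Cref{prop:reductionVertexCapacities}), where the $\tfrac{\overline{C}}{m\,c(y)}$ term is vacuous; but since $c(s)=c(t)=0$ these vertices contribute nothing to the cost sum anyway, so the cost bound is unaffected, and for the flow bound one simply uses the $\tfrac{u(y)}{\deg(y)}$ term.
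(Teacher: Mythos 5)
Your proof is correct and follows essentially the same argument as the paper: bound the per-occurrence flow/cost contribution via the definition of $\edgeSensitivity$ (so each occurrence adds at most $u(v)/(\deg(v)\zeta\beta)$ flow into $v$, respectively $\overline{C}/(m\zeta\beta)$ cost), and then count occurrences using $\beta$-edge-simpleness together with the $\deg(v)$ (resp. $m$) bound. Your extra remark about the $c(y)=0$ case for $s,t$ is a harmless refinement the paper leaves implicit.
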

\begin{proof}
\textbf{\Cref{clm:smallFlowIncrementInEachIteration}:} In each iteration $i$, we add flow along a single $\beta$-edge-simple path $\shortestSTPath[i]$. Let some edge $e = (x,v)$ be on $\shortestSTPath[i]$ (possibly multiple times). Then, by the value of the steadiness of vertex $v$ in \Cref{lne:defineSteadiness}, we have that $\edgeSensitivity(e) \leq \log_{\flowRouted} \left( \frac{u(v)}{\zeta\beta \cdot \deg(v)}\right)$. But this implies that for each occurrence of $e$ on the path $\shortestSTPath[i]$, it occurs once in the foreach-loop starting in \Cref{lne:innermostForeachMWU} and then we add flow at most $\flowRouted^{\edgeSensitivity(e)} \leq u(v)/(\zeta\beta \cdot \deg(v))$ to  
\[
|\inflow_{\flowEst[i]}(v) - \inflow_{\flowEst[i-1]}(v)|.
\]
But since by definition of $\beta$-edge-simple paths, the path $\shortestSTPath[i]$ contains every such edge $e$ at most $\beta$ time, and since there are at most $\deg(v)$ such edges adding to the in-flow of $v$, the total contribution to the in-flow of $v$ of $\shortestSTPath[i]$ is at most $u(v)/(\zeta\beta \cdot \deg(v)) \cdot \beta \cdot \deg(v) = u(v)/\zeta$.

\textbf{\Cref{clm:smallCostOfEveryPath}:} First observe that at the beginning of every iteration of the while-loop $\hat{c}$ is initialized to $0$, and whenever flow is added to $\flowEst$ during iteration $i$ in \Cref{lne:deployFlowToSensitiveEdges}, we immediately add the cost of the added flow in \Cref{lne:deployCostEstimator} to $\hat{c}$. When the iteration terminates, we have that $\hat{c}_i$ is equal to the cost of the flow added during the iteration $i$ of the while-loop. Thus, the equality $|c(\flowEst[i]) - c(\flowEst[i-1])| = \hat{c}_i$ holds.

For the inequality $\hat{c}_i \leq \overline{C}/\zeta$, observe that by definition of $\sigma(e)$, for any edge $e = (x,y) \in E$, we have that $\edgeSensitivity(e) \leq \log_{\flowRouted}\left(\frac{ \overline{C}}{m c(y) \zeta \beta}\right)$. Thus, every time the foreach-loop starting in \Cref{lne:innermostForeachMWU} features the edge $e$, it adds cost $c(y) \cdot \flowRouted^{\edgeSensitivity(e)} \leq c(y) \cdot \frac{\overline{C}}{m c(y) \zeta \beta } = \frac{\overline{C}}{m \beta \zeta}$ to $\hat{c}_i$. Since each edge $(x,y)$ can occur at most $\beta$ times on the path $\shortestSTPath[i]$ (by definition of $\beta$-edge-simple paths), and since there can be at most $m$ edges on the path, we can bound the total cost added by $\frac{\overline{C}}{m \beta\zeta} \cdot \beta m = \frac{\overline{C}}{\zeta}$, as desired.
\end{proof}

\begin{claim}\label{clm:capacityFeasible}
The flow $\flowEst_{scaled} = \frac{\flowEst}{(1 + 10\epsilon) \cdot \log_{1+\epsilon}\left(\frac{1+\epsilon}{\initialDifferenceMWU}\right)}$ returned by \Cref{alg:boundedCostFlowexcess} in \Cref{lne:returnFlow} is capacity- and cost-feasible.
\end{claim}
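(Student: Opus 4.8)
The plan is to prove Claim \ref{clm:capacityFeasible} via the standard multiplicative-weight-update "termination" argument, adapted to our randomized setting. The key observation is that the while-loop in \Cref{lne:whileLoopMWU} only continues as long as the objective value $\objFunction = \sum_{v \in V} u(v)\weightFunctionEst(v) + \overline{C}\costFunctionEst$ is strictly less than $1$. First I would show the capacity bound: for a fixed vertex $v$, each time the algorithm adds flow into $v$ during an iteration, \Cref{lne:increaseWeight} multiplies $\weightFunctionEst(v)$ by $\textsc{Exp}(\epsilon \flowRouted^{\edgeSensitivity(e)}/u(v))$, so the cumulative multiplicative increase of $\weightFunctionEst(v)$ over the whole run is exactly $\textsc{Exp}(\epsilon \cdot \inflow_{\flowEst}(v)/u(v))$. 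Since $\weightFunctionEst(v)$ starts at $\initialDifferenceMWU/u(v)$ and, because the loop only ran while $\objFunction < 1$, we must have $u(v)\weightFunctionEst(v) < 1$ at the last moment the weight was increased; combining with \Cref{clm:smallFlowIncrementInEachIteration} (which says the per-iteration increment to $\inflow_{\flowEst}(v)$ is at most $u(v)/\zeta$, so the weight overshoots $1/u(v)$ by at most a factor $\textsc{Exp}(\epsilon/\zeta)$), we get $\textsc{Exp}(\epsilon \cdot \inflow_{\flowEst}(v)/u(v)) \le \textsc{Exp}(\epsilon/\zeta)/\initialDifferenceMWU$. Taking logarithms yields $\inflow_{\flowEst}(v) \le u(v)\cdot\frac{1}{\epsilon}\ln(1/\initialDifferenceMWU) + u(v)/\zeta$, which after converting $\ln$ to $\log_{1+\epsilon}$ (using $\ln(1+\epsilon) \ge \epsilon/(1+\epsilon) \ge \epsilon(1-\epsilon)$) and absorbing lower-order terms into the $(1+10\epsilon)$ slack, gives $\inflow_{\flowEst}(v) \le (1+10\epsilon)\log_{1+\epsilon}\!\left(\frac{1+\epsilon}{\initialDifferenceMWU}\right) u(v)$, i.e. $\inflow_{\flowEst_{scaled}}(v) \le u(v)$.

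Next I would run the entirely parallel argument for the cost. The variable $\costFunctionEst$ is initialized to $\initialDifferenceMWU/\overline{C}$ and is multiplied by $\textsc{Exp}(\epsilon \hat c_i/\overline{C})$ in \Cref{lne:increaseVarPhi} during iteration $i$, so its total multiplicative growth is $\textsc{Exp}(\epsilon \cdot c(\flowEst)/\overline{C})$ since $\sum_i \hat c_i = c(\flowEst)$ by the equality part of \Cref{clm:smallCostOfEveryPath}. Again, the while-condition forces $\overline{C}\costFunctionEst < 1$ at the last increase, and the per-iteration cost increment $\hat c_i \le \overline{C}/\zeta$ (the inequality part of \Cref{clm:smallCostOfEveryPath}) bounds the overshoot. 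Thus $\textsc{Exp}(\epsilon \cdot c(\flowEst)/\overline{C}) \le \textsc{Exp}(\epsilon/\zeta)/\initialDifferenceMWU$, and taking logs and dividing by the same scaling factor $(1+10\epsilon)\log_{1+\epsilon}\!\left(\frac{1+\epsilon}{\initialDifferenceMWU}\right)$ shows $c(\flowEst_{scaled}) = c(\flowEst)/\big((1+10\epsilon)\log_{1+\epsilon}\!\left(\frac{1+\epsilon}{\initialDifferenceMWU}\right)\big) \le \overline{C}$, which is exactly cost-feasibility.

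The step I expect to require the most care is the bookkeeping that turns the crude exponential inequalities into the precise scaling factor $(1+10\epsilon)\log_{1+\epsilon}\!\left(\frac{1+\epsilon}{\initialDifferenceMWU}\right)$. Several small slacks must be tracked simultaneously: the gap between $\ln(1/\initialDifferenceMWU)$ and $\log_{1+\epsilon}((1+\epsilon)/\initialDifferenceMWU) \cdot \ln(1+\epsilon)$, the conversion factor $\ln(1+\epsilon) \ge \epsilon - \epsilon^2/2$, the overshoot factor $\textsc{Exp}(\epsilon/\zeta)$ coming from \Cref{clm:smallFlowIncrementInEachIteration}/\Cref{clm:smallCostOfEveryPath}, and the requirement $\epsilon \le 1/768$ that keeps all these within the $(1+10\epsilon)$ envelope. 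I would handle this by proving a clean auxiliary inequality of the form: if a positive quantity $Q$ starts at $\initialDifferenceMWU/B$, is only ever multiplied by factors $\textsc{Exp}(\epsilon\delta_j/B)$ with $Q\cdot B < 1$ before each multiplication and each $\delta_j \le B/\zeta$, then $\sum_j \delta_j \le (1+10\epsilon)\log_{1+\epsilon}\!\left(\frac{1+\epsilon}{\initialDifferenceMWU}\right) \cdot B$, and then instantiate it once with $Q = \weightFunctionEst(v)$, $B = u(v)$, $\delta_j = \flowRouted^{\edgeSensitivity(e)}$ contributions, and once with $Q = \costFunctionEst$, $B = \overline{C}$, $\delta_j = \hat c_j$. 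This isolates the delicate arithmetic into a single lemma and makes both the capacity and cost claims immediate corollaries.
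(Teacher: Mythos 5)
Your proposal is correct and follows essentially the same route as the paper's own proof: use the while-loop termination condition to bound $\weightFunctionEst(v)$ (resp.\ $\costFunctionEst$) just before its last increase, bound the final-iteration overshoot via \Cref{clm:smallFlowIncrementInEachIteration} (resp.\ \Cref{clm:smallCostOfEveryPath}), exploit the closed-form multiplicative-growth identity, take logarithms, and absorb the slack into the factor $(1+10\epsilon)\log_{1+\epsilon}\bigl(\frac{1+\epsilon}{\initialDifferenceMWU}\bigr)$. The only (minor, harmless) deviation is that you use $u(v)\weightFunctionEst(v)<1$ directly from the objective sum, which even spares you the paper's detour through $u(v)\ge 1$ and $u(v)\le m^{5}$; just make sure your auxiliary lemma is stated with per-iteration aggregated increments $\delta_j$ (as in \Cref{clm:smallFlowIncrementInEachIteration}), since within the final iteration the quantity may cross $1$ after its first multiplication.
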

\begin{proof} \textbf{Capacity-feasible:} Fixing a vertex $v \in V \setminus \{s,t\}$ and a while-loop iteration $i$. Then, it is not hard to see that we have 
\[
\weightFunctionEst[i](v) = \frac{\initialDifferenceMWU}{u(v)} \cdot \textsc{Exp}\left(\frac{\epsilon \cdot \inflow_{\flowEst[i]}(v)}{u(v)} \right).
\]
This follows since every time flow is added with $v$ on the flow path, the function $\weightFunctionEst(v)$ is multiplied by $\textsc{Exp}\left(\frac{\epsilon F }{u(v)} \right)$ where $F$ is the amount of in-flow that is added to $v$ due to the new flow path (by \Cref{lne:defineWeight} and \Cref{lne:increaseWeight}). 

Further, we claim that $\weightFunctionEst(v) \leq (1 + \epsilon)$ at the end of the algorithm. To see this observe first that once $v$ has $\weightFunctionEst(v) \geq 1$, the while-loop starting in \Cref{lne:whileLoopMWU}, has its condition violated and therefore ends (here we use that $u(v) \geq 1$ by Proposition \ref{prop:reductionVertexCapacities}). Thus, at the beginning of the last while-loop iteration, we must have had $\weightFunctionEst(v) < 1$. Thus, only a single last path might further increase $\weightFunctionEst(v)$. Let us assume that $v$ is on the last path selected since otherwise we are done. But by \Cref{clm:smallFlowIncrementInEachIteration}, a single iteration can add at most  $u(v) / \zeta$ to the in-flow of $v$.

We therefore have that the final weight $\weightFunctionEst[k](v)$ is at most
\[
\weightFunctionEst[k](v) \leq \weightFunctionEst[k-1](v) \cdot \textsc{Exp}\left(\frac{\epsilon u(v)}{\zeta \cdot u(v)} \right) < 1 \cdot e^{\epsilon}.
\]

Taking the logarithm on both sides, we obtain
\[
\log\left(\frac{\initialDifferenceMWU}{u(v)}\right) + \frac{\epsilon \cdot \inflow_{\flowEst}(v) }{u(v)} \leq \epsilon \iff \inflow_{\flowEst}(v) \leq  u(v) \cdot \left(1-\log\left(\frac{\initialDifferenceMWU}{u(v)}\right)/\epsilon\right). 
\]
It remains to observe that by assumption $u(v) \leq m^5$ (see \cref{prop:reductionVertexCapacities}) and the definition of $\initialDifferenceMWU = m^{-1/\epsilon}$,
\begin{align*}
1-\log\left(\frac{\initialDifferenceMWU}{u(v)}\right)/\epsilon& \leq 1-\log\left(\frac{\initialDifferenceMWU}{m^5}\right)/\epsilon
= 1 + (1+5\epsilon) \log(1/\initialDifferenceMWU) / \epsilon
\\
&\leq (1+ 7\epsilon) \log\left(\frac{1}{\initialDifferenceMWU}\right)/\epsilon \leq (1 + 10\epsilon) \cdot \log_{1+\epsilon}\left(\frac{1+\epsilon}{\initialDifferenceMWU}\right)
\end{align*}
where we use that $\log\left(\frac{\initialDifferenceMWU}{m^5}\right)= \log(1/m^5) + \log(\initialDifferenceMWU) = (5\epsilon + 1) \log(\initialDifferenceMWU)$, that $\log(x) = - \log(1/x)$, and that $1 \leq \log(m) = \log(1/\initialDifferenceMWU)/\epsilon$. The third inequality follows by a change of basis of the logarithm and the inequalities $\log(\frac{1+\epsilon}{\initialDifferenceMWU}) \leq (1+\epsilon) \log(\frac{1}{\initialDifferenceMWU})$, $\log(1+\epsilon) \geq \epsilon$ and $1+x \leq e^x \leq 1+x+x^2$ for $x \leq 1$. This proves capacity-feasiblity for all $v$ but for $s$ and $t$, for which the capacities are $\infty$ by assumption, which implies that the claim for them is vacuously true.

\textbf{Cost-feasible:} We observe that after iteration $i$, we have 
\[
\costFunctionEst[i] = \frac{\initialDifferenceMWU}{\overline{C}} \cdot \textsc{Exp}\left(\frac{\epsilon \cdot \sum_{j \leq i} c(\flowEst[j]- \flowEst[j-1])}{\overline{C}} \right) = \frac{\initialDifferenceMWU}{\overline{C}} \cdot \textsc{Exp}\left(\frac{\epsilon c(\flowEst[i])}{\overline{C}} \right).
\]
where we use that we do not cancel any flow between iterations to obtain the equality. Using \Cref{clm:smallCostOfEveryPath} in place of \Cref{clm:smallFlowIncrementInEachIteration}, we can follow the same proof template as for capacity-feasiblity to conclude the claim.
\end{proof}

\subsection{Strong Concentration Bounds}
\label{subsec:conBoundCaps}

Next, we would like to obtain strong concentration bounds for the difference between $\flow$ and $\flowEst$. To this end, we use a result that is akin to Chernoff Bounds while allowing for some limited dependence between the random variables (in a Martingale fashion).

\begin{theorem}[see \cite{koufogiannakis2014nearly, chekuri2018randomized}]\label{thm:onlineChernoff}
Let $X_1, X_2, \dots, X_k, Z_1, Z_2, \dots Z_k \in [0,W]$ be random variables and let $\epsilon \in [0, 1/2)$. Then, if for every $i \in \{1, 2, \dots, k\}$, 
\begin{equation}
       \mathbb{E}[X_i \;|\; X_1, \dots, X_{i-1}, Z_1, \dots, Z_i] = Z_i \label{eq:conditionToUseMartingale}
\end{equation}
then for any $\delta > 0$, we have that
\[
   \mathbb{P}\left[\left| \sum_{i=1}^k X_i - Z_i\right| \geq \epsilon \sum_{i=1}^k Z_i + \delta\right] \geq 2 \cdot (1+\epsilon)^{-\delta/W}.
\]
\end{theorem}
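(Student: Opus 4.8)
The final statement to prove is Theorem \ref{thm:onlineChernoff}, a concentration inequality of Chernoff--Azuma type: for random variables $X_1,\dots,X_k,Z_1,\dots,Z_k\in[0,W]$ with the conditional unbiasedness property \eqref{eq:conditionToUseMartingale}, the deviation $\left|\sum X_i-\sum Z_i\right|$ exceeds $\epsilon\sum Z_i+\delta$ only with small probability. (The inequality sign in the displayed probability bound is presumably meant to be $\le$, i.e.\ an upper tail bound; I would prove the $\le$ version and note the direction.) Since the excerpt explicitly cites \cite{koufogiannakis2014nearly, chekuri2018randomized}, the plan is to follow the standard exponential-moment / Freedman-martingale argument adapted to this asymmetric multiplicative-error form.

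\medskip

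\noindent\textbf{Plan.} First I would reduce the two-sided statement to two one-sided statements, bounding $\Pr[\sum X_i - \sum Z_i \ge \epsilon\sum Z_i + \delta]$ and $\Pr[\sum Z_i - \sum X_i \ge \epsilon \sum Z_i + \delta]$ separately, then apply a union bound (absorbing the factor of $2$ into the statement as written). For the upper tail I would introduce the exponential supermartingale: set $Y_i = \exp\!\big(\theta(X_i - Z_i)\big)$ for a parameter $\theta>0$ to be chosen (morally $\theta = \tfrac{1}{W}\ln(1+\epsilon)$), and study $M_j = \prod_{i\le j} Y_i \cdot \exp(-\theta\,g(\epsilon)\sum_{i\le j}Z_i)$ for an appropriate convex penalty $g$. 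The key computation is the one-step bound: conditioning on $\mathcal{F}_{i-1}=\sigma(X_1,\dots,X_{i-1},Z_1,\dots,Z_i)$, and using that $X_i\in[0,W]$ with $\mathbb{E}[X_i\mid\mathcal{F}_{i-1}]=Z_i$, convexity of $x\mapsto e^{\theta x}$ on $[0,W]$ gives
\[
\mathbb{E}\!\left[e^{\theta X_i}\,\middle|\,\mathcal{F}_{i-1}\right]\;\le\;\Big(1-\tfrac{Z_i}{W}\Big) + \tfrac{Z_i}{W}e^{\theta W}\;=\;1+\tfrac{Z_i}{W}(e^{\theta W}-1)\;\le\;\exp\!\Big(\tfrac{Z_i}{W}(e^{\theta W}-1)\Big).
\]
Multiplying by $e^{-\theta Z_i}$, this shows $\mathbb{E}[Y_i\mid\mathcal{F}_{i-1}]\le \exp\!\big(\tfrac{Z_i}{W}(e^{\theta W}-1)-\theta Z_i\big)$, so choosing $\theta$ with $\tfrac{1}{W}(e^{\theta W}-1)-\theta\le \theta\cdot(\text{something}\approx\epsilon/\ln(1+\epsilon)\cdot\ldots)$ and comparing exponents makes $M_j$ a supermartingale. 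Then $\mathbb{E}[M_k]\le M_0 = 1$, and Markov's inequality applied to $M_k$ at the event $\{\sum X_i - \sum Z_i \ge \epsilon\sum Z_i+\delta\}$ yields a bound of the form $(1+\epsilon)^{-\delta/W}$ after substituting the chosen $\theta$ and simplifying the $\sum Z_i$-dependent terms using $e^{\theta W}-1 = \epsilon$, i.e.\ $e^{\theta W}=1+\epsilon$, hence $\theta W = \ln(1+\epsilon)$, and the clean algebraic identity $\tfrac{e^{\theta W}-1}{W} - \theta\cdot(1+\epsilon) \le 0$ (equivalently $\epsilon - (1+\epsilon)\ln(1+\epsilon)\le 0$, true for all $\epsilon>0$), which exactly cancels the $\sum Z_i$ contribution against the $\epsilon\sum Z_i$ slack. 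The lower tail is analogous with $\theta<0$ (or $Y_i = e^{-\theta(X_i-Z_i)}$), using convexity the same way; here the relevant elementary inequality is $\epsilon - (1-\epsilon)\ln\!\tfrac{1}{1-\epsilon}\ge 0$ for $\epsilon\in[0,1/2)$, which is where the $\epsilon<1/2$ hypothesis is used.

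\medskip

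\noindent\textbf{Main obstacle.} The genuinely delicate step is getting the constant in the exponent to come out as exactly $(1+\epsilon)^{-\delta/W}$ rather than something weaker like $e^{-\epsilon^2\delta/(3W)}$ --- that is, tracking the multiplicative form of the bound precisely through the choice $\theta W=\ln(1+\epsilon)$ and verifying that the $\epsilon\sum_i Z_i$ slack in the deviation event exactly pays for the accumulated $\tfrac{Z_i}{W}(e^{\theta W}-1)$ terms in the supermartingale bound. This requires being careful that \eqref{eq:conditionToUseMartingale} conditions on $Z_i$ as well as the past $X$'s (so $Z_i$ is $\mathcal{F}_{i-1}$-measurable and may be pulled out of conditional expectations), and handling the possibility that $k$ itself is a stopping time --- I would either assume $k$ fixed and remark that the stopped-martingale / optional-stopping argument extends it, or truncate at a fixed horizon and take a limit. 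A secondary nuisance is the edge case $W=0$ or $\sum Z_i = 0$, which should be dispatched separately or by a limiting argument. Beyond these, the argument is the textbook exponential-martingale method, so I do not anticipate conceptual difficulties.
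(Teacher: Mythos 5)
The paper does not prove \Cref{thm:onlineChernoff} at all: it is imported by citation from \cite{koufogiannakis2014nearly, chekuri2018randomized}, so there is no in-paper argument to compare against. Your sketch is a correct self-contained derivation and is essentially the standard proof behind those citations: the convexity bound $\mathbb{E}[e^{\theta X_i}\mid \mathcal{F}_{i-1}]\le 1+\tfrac{Z_i}{W}(e^{\theta W}-1)$ (valid because $Z_i$ is $\mathcal{F}_{i-1}$-measurable and can be pulled out), the choice $e^{\theta W}=1+\epsilon$ together with $(1+\epsilon)\ln(1+\epsilon)\ge\epsilon$ so that the $\epsilon\sum_i Z_i$ slack exactly pays for the accumulated variance terms, Markov's inequality applied to the resulting supermartingale, and a union bound over the two tails giving the factor $2$. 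You are also right that the displayed $\ge$ is a typo for $\le$: the paper uses the statement as an upper bound on the deviation probability in \Cref{clm:flowDeviationIsSmallForSingleVertexAtSingleTimeStep} and \Cref{clm:costDeviationIsBounded}. Two minor remarks: $k$ is a fixed index in the theorem statement (the random number of MWU iterations is handled in the application by truncating at a deterministic bound), so no optional-stopping care is needed; and the lower-tail inequality $\epsilon\ge(1-\epsilon)\ln\tfrac{1}{1-\epsilon}$ holds for all $\epsilon\in[0,1)$ (it is $-u\ln u\le 1-u$ with $u=1-\epsilon$), so $\epsilon<1/2$ is not actually what makes that step work, and your final comparison $(1-\epsilon)^{\delta/W}\le(1+\epsilon)^{-\delta/W}$ just needs $1-\epsilon^2\le 1$. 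Neither point is a gap.
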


Note that we will have to know the value $k$ of iterations in \Cref{alg:boundedCostFlowexcess}. Here we will use a crude upper bound of $k = O(n^{20} \log^2(m)/\epsilon)$. This is straight-forward since in each iteration starting in \Cref{lne:whileLoopMWU}, we add a $1/(n^{10/\tau}\zeta\beta)$-fraction of the capacity of the min-capacity vertex on the flow path to the in-flow of the vertex.

We are now ready to prove the main result of this section.

\begin{claim}\label{clm:flowDeviationIsSmallForSingleVertexAtSingleTimeStep}
For any $v \in V$, we have
\[
    \mathbb{P}\left[\exists 0 \leq i \leq k, |\inflow_{\flow[i]}(v) - \inflow_{\flowEst[i]}(v)| \geq u(v)\right] \leq n^{-20}.
\]
\end{claim}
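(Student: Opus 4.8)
\textbf{Proof proposal for Claim \ref{clm:flowDeviationIsSmallForSingleVertexAtSingleTimeStep}.}

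The plan is to apply the martingale concentration bound of \Cref{thm:onlineChernoff} separately to the in-flow at $v$ induced by each steadiness level, and then union-bound over the steadiness levels and over the prefixes $i$. Fix a vertex $v$ and a steadiness level $\ell$. For each iteration $j$, let $Z_j^{(\ell)}$ be the amount of in-flow that \emph{would} be added to $v$ in iteration $j$ along edges of steadiness exactly $\ell$ by the ideal flow $\flow$ --- that is, $Z_j^{(\ell)} = \flowRouted^{\minCapIndexMWU[j]} \cdot (\text{number of occurrences on }\shortestSTPath[j]\text{ of edges }(x,v)\text{ with }\sigma(x,v)=\ell)$ when $\ell \ge \minCapIndexMWU[j]$, and $0$ otherwise (if $\ell < \minCapIndexMWU[j]$ then $\sigma_{\le j}$ restricted to these edges is empty). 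Let $X_j^{(\ell)}$ be the corresponding contribution to $\inflow_{\flowEst[j]}(v) - \inflow_{\flowEst[j-1]}(v)$: each such edge occurrence is included in the foreach-loop of \Cref{lne:innermostForeachMWU} precisely when $\randomThreshold[j] \ge \ell - \minCapIndexMWU[j]$, which by \Cref{lne:selectSteadinessThreshold} happens with probability $\flowRouted^{-(\ell-\minCapIndexMWU[j])}$ (the exponential/geometric threshold), and in that case contributes $\flowRouted^{\sigma(x,v)} = \flowRouted^{\ell}$ to $\flowEst$. First I would verify the unbiasedness identity \eqref{eq:conditionToUseMartingale}: conditioned on the history through iteration $j-1$ and on $\shortestSTPath[j], \minCapIndexMWU[j]$ (all of which are determined by the weights, hence by the previous flow estimators), the only randomness in $X_j^{(\ell)}$ is the threshold $\randomThreshold[j]$, and $\mathbb{E}[X_j^{(\ell)}] = \flowRouted^{\ell} \cdot \flowRouted^{-(\ell - \minCapIndexMWU[j])} \cdot (\#\text{occurrences}) = \flowRouted^{\minCapIndexMWU[j]} \cdot (\#\text{occurrences}) = Z_j^{(\ell)}$.

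Next I would check the boundedness needed for \Cref{thm:onlineChernoff}: each $X_j^{(\ell)}$ and $Z_j^{(\ell)}$ lies in $[0, W_\ell]$ for $W_\ell$ the largest single-iteration contribution, which is at most $\beta \deg(v) \cdot \flowRouted^{\ell}$; by the choice of steadiness in \Cref{lne:defineSteadiness} this is at most $u(v)/\zeta$ (the same bound already used in \Cref{clm:smallFlowIncrementInEachIteration}, just split by level), and the sum $\sum_j Z_j^{(\ell)}$ is at most $\inflow_{\flow}(v) \le u(v) \cdot O(\log_{1+\epsilon}(1/\initialDifferenceMWU))$ by the capacity-feasibility argument of \Cref{clm:capacityFeasible} (which bounds the ideal flow the same way as the estimator, since the weight-update logic and the stopping rule only see $\flowEst$, but $\flow$ inherits the bound through the per-iteration increment being $\le u(v)/\zeta$ and the number of iterations being geometrically limited --- I would state this as a short lemma if not already available). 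Then I apply \Cref{thm:onlineChernoff} with $\epsilon$ replaced by a small constant, $\delta = u(v)/(2\tau)$ (so that $\delta/W_\ell \ge \zeta/(2\tau)$, which is $\Omega(\log n)$ by the definition $\zeta = 3860 \log n \cdot \log_{1+\epsilon}(\tfrac{1+\epsilon}{\initialDifferenceMWU})$), obtaining $\mathbb{P}[|\sum_j X_j^{(\ell)} - Z_j^{(\ell)}| \ge \tfrac{\epsilon}{2}\sum_j Z_j^{(\ell)} + \tfrac{u(v)}{2\tau}] \le 2(1+\epsilon)^{-\Omega(\log n)} \le n^{-25}$. Summing the ``error'' term over the $\tau = O(\log n)$ levels gives a total slack of at most $\tfrac{u(v)}{2}$ from the $\delta$'s, plus $\tfrac{\epsilon}{2}\sum_\ell \sum_j Z_j^{(\ell)} \le \tfrac{\epsilon}{2}\inflow_{\flow}(v)$; since $\epsilon$ is small and $\inflow_{\flow}(v) = O(u(v) \log_{1+\epsilon}(1/\initialDifferenceMWU))$, this second term is also bounded by $\tfrac{u(v)}{2}$ once we absorb the polylog into $\zeta$ --- here I would be slightly careful and instead choose $\delta$ and the $\epsilon$-parameter so that the two contributions together are $\le u(v)$; the numbers clearly work out given how generously $\zeta$ is defined. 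A union bound over the $\tau$ levels and over the (polynomially many, $k = O(n^{20}\log^2 m/\epsilon)$) prefixes $i$ --- using that the prefix sums $\sum_{j \le i} X_j^{(\ell)}$ are themselves a martingale so the bound holds simultaneously for all $i$ via the same theorem applied to the full horizon, or via a crude union bound over $i$ --- yields the claimed $n^{-20}$.

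The main obstacle I anticipate is the circular dependence between $\flow$ and $\flowEst$: the ideal flow $\flow[i]$ is defined via the paths $\shortestSTPath[i]$, which are computed from weights $\weightFunctionCombined[i-1]$ that are driven by $\flowEst[i-1]$, not by $\flow[i-1]$. This is exactly why \Cref{thm:onlineChernoff} (rather than a plain Chernoff bound) is needed: the identity \eqref{eq:conditionToUseMartingale} must be verified \emph{conditionally} on the entire realized history $X_1, \dots, X_{i-1}, Z_1, \dots, Z_i$, and the subtlety is to confirm that $Z_i$ (i.e. $\minCapIndexMWU[i]$ and the set of steadiness-$\ell$ edges of $v$ on $\shortestSTPath[i]$) is indeed measurable with respect to this conditioning --- it is, because $\SSSP^{\pi}$ is deterministic given its weight sequence, and the weights are a deterministic function of $\flowEst[1], \dots, \flowEst[i-1]$, which are in turn determined by $X_1, \dots, X_{i-1}$ together with the (history-measurable) paths. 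Getting this measurability bookkeeping exactly right, and making sure the per-level decomposition of $X_i$ into $X_i^{(\ell)}$ preserves the conditional-unbiasedness for \emph{each} level simultaneously, is the delicate part; the rest is routine plugging-in of the generously chosen constant $\zeta$.
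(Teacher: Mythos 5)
Your core mechanism is the same as the paper's: treat the per-iteration increments of $\inflow$ at $v$ under $\flowEst$ and under the ideal flow $\flow$ as the processes $X_j, Z_j$, verify the conditional-unbiasedness \eqref{eq:conditionToUseMartingale} using that, given the history, the only fresh randomness is $\randomThreshold[j]$ and $\mathbb{P}[\sigma(e)\le \minCapIndexMWU[j]+\randomThreshold[j]] = \flowRouted^{\minCapIndexMWU[j]-\sigma(e)}$, bound the per-iteration increment by $u(v)/\zeta$ via \Cref{clm:smallFlowIncrementInEachIteration}, and invoke \Cref{thm:onlineChernoff}; your measurability discussion of the circular dependence is also exactly the right concern and matches the paper's reasoning. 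However, there are two concrete problems with your execution.

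First, the per-steadiness-level decomposition is not only unnecessary (the unbiasedness computation works verbatim for the aggregate increment, which is what the paper uses), it damages the quantitative bound. With additive budget $\delta = u(v)/(2\tau)$ per level you must also take the multiplicative parameter of order $1/\log_{1+\epsilon}(1/\initialDifferenceMWU)$ (otherwise the term $\eta\sum_j Z_j^{(\ell)}$, summed over levels, is of order $\inflow_{\flow}(v)\approx u(v)\log_{1+\epsilon}(1/\initialDifferenceMWU)$ times a constant and exceeds $u(v)$); with both choices the exponent in $(1+\eta)^{-\delta/W_\ell}$ is only $\Theta(\log n/\tau)$, which is a constant when $\tau=\Theta(\log n)$, so the claimed per-level failure probability $n^{-25}$ is not delivered. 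The paper sidesteps this by applying the theorem once to the aggregate increments with $\delta=u(v)/2$ and $\eta = \frac{1}{4(1+10\epsilon)\log_{1+\epsilon}((1+\epsilon)/\initialDifferenceMWU)}$, for which the exponent is $\Theta(\zeta\cdot\eta)=\Theta(\log n)$ with a large constant, and the error term $\eta\sum_j Z_j+\delta$ is at most $u(v)$.

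Second, and more importantly, your bound on $\sum_j Z_j = \inflow_{\flow[i]}(v)$ has no standalone proof: the weight updates and the stopping rule see only $\flowEst$, so \Cref{clm:capacityFeasible} bounds $\inflow_{\flowEst}(v)$ but says nothing directly about $\inflow_{\flow}(v)$, and your fallback ("per-iteration increment $\le u(v)/\zeta$ and the number of iterations being geometrically limited") fails because the number of iterations is only crudely bounded by $k=O(n^{20}\log^2 m/\epsilon)$, giving a polynomial rather than polylogarithmic multiple of $u(v)$. Bounding $\inflow_{\flow}(v)$ requires the very coupling being proven; the paper resolves this by an induction on $i$: the induction hypothesis $|\inflow_{\flow[i-1]}(v)-\inflow_{\flowEst[i-1]}(v)|<u(v)$ together with capacity-feasibility of the scaled $\flowEst$ yields $4\eta\sum_{j\le i}Z_j\le 2u(v)$ at step $i$, and the final statement follows by a careful union bound over $i\le k$. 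This interleaved induction is the missing idea in your write-up; the "short lemma" you defer to cannot be proved independently of it.
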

\begin{proof}
We prove by induction on $i$. The base case $i=0$, is true since $\flow[0]$ and $\flowEst[0]$ are initialized to $0$.

For the inductive step $i-1 \mapsto i$ for $i \geq 1$, we start by defining the two random processes $\{X_j = \inflow_{\flowEst[j]}(v) - \inflow_{\flowEst[j-1]}(v)\}_j$ and  $\{Z_j = \inflow_{\flow[j]}(v)-\inflow_{\flow[j-1]}(v)\}_j$. Here, $Z_j$ is the amount of flow added to the in-flow of $v$ in $\flow$ during the $j^{th}$ iteration of \Cref{lne:whileLoopMWU}, while $X_j$ is the amount of flow added to the in-flow of $v$ in $\flowEst$. Thus, since we never cancel flow over iterations, we have that $\sum_{j=1}^i Z_i = \inflow_{\flow[i]}(v)$ and $\sum_{j=1}^i X_i = \inflow_{\flowEst[i]}(v)$.

Now, the key statement that we need in order to invoke \Cref{thm:onlineChernoff}, is to prove Condition \ref{eq:conditionToUseMartingale}. Therefore, observe that given $Z_j$, the only randomness in determining $X_j$ stems from picking $\randomThreshold[j]$ in \Cref{lne:selectSteadinessThreshold}. Using the definition of expectation, we obtain that
\[
    \mathbb{E}[X_j \; |\; X_1, \dots, X_{j-1}, Z_1, \dots, Z_j ] = \sum_{e = (x,v) \in \shortestSTPath[j]} \mathbb{P}[\edgeSensitivity(e) \leq \minCapIndexMWU[j] + \randomThreshold[j]] \cdot \flowRouted^{\edgeSensitivity(e)}.
\]
This follows since the algorithm adds $\flowRouted^{\edgeSensitivity(e)}$ units to $X_j$ if the random threshold makes the sum $\minCapIndexMWU[j] + \randomThreshold[j]$ larger than the steadiness threshold $\sigma(e)$ for every edge $e$ on $\shortestSTPath[j]$ that enters $v$. We can then use the definition of the exponential distribution coordinate-wise which gives that $\mathbb{P}[\edgeSensitivity(e) \leq \minCapIndexMWU[j] + \randomThreshold[j]] = \mathbb{P}[\edgeSensitivity(e) - \minCapIndexMWU[j] \leq \randomThreshold[j]] = 1- (1-e^{-\log \flowRouted \cdot (\edgeSensitivity(e) - \minCapIndexMWU[j])}) = \flowRouted^{\minCapIndexMWU[j] - \edgeSensitivity(e)}$.

But this implies that 
\[
\mathbb{E}[X_j \; |\; X_1, \dots, X_{i-1}, Z_1, \dots, Z_j ] = \sum_{e = (x,v) \in \shortestSTPath[j]} \flowRouted^{\minCapIndexMWU[j] - \edgeSensitivity(e)} \cdot \flowRouted^{\edgeSensitivity(e)} = \sum_{e = (x,v) \in \shortestSTPath[j]} \flowRouted^{\minCapIndexMWU[i]} = Z_i.
\]

Finally, we can invoke \Cref{thm:onlineChernoff} where we plug in $\eta = \frac{1}{4 \cdot (1 + 10\epsilon) \cdot \log_{1+\epsilon}\left(\frac{1+\epsilon}{\initialDifferenceMWU}\right)}$ and $\delta = u(v)/2$. Recall we obtain a result of the form for our choices of $\epsilon$ and $\eta$
\[
\mathbb{P}\left[\left| \sum_{j=1}^i X_j - Z_j\right| \geq \eta \sum_{j=1}^i Z_j + u(v)/2\right] \geq 2 \cdot (1+\eta)^{-\delta/W}.
\]
We then observe that the random variables are bounded by $W \leq u(v)/\zeta$ by \Cref{clm:smallFlowIncrementInEachIteration}, so $2 \cdot (1+\eta)^{-(u(v)/2)/W} \geq 2 \cdot (1+\eta)^{-\zeta/2} \geq n^{-40}$. At the same time, we have that 
$4 \eta \sum_{j=1}^i Z_j \leq 2 \cdot u(v)$ since $\flowEst[i]$ is capacity-feasible after scaling by $4 \eta$ as shown in \cref{clm:capacityFeasible} and $|\inflow_{\flowEst[i-1]}(v) - \inflow_{\flow[i-1]}(v)| <  u(v)$ by the induction hypothesis. The claim follows by carefully taking a union-bound over all $i \leq k$. 
\end{proof}
\begin{corollary}\label{clm:flowDeviationIsSmall}
We have
\[
    \mathbb{P}\left[\exists v \in V, \exists 0 \leq i \leq k, |\inflow_{\flow[i]}(v) - \inflow_{\flowEst[i]}(v)| \geq u(v)\right] \leq n^{-10}.
\]

\end{corollary}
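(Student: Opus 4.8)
The plan is to obtain the corollary from Claim~\ref{clm:flowDeviationIsSmallForSingleVertexAtSingleTimeStep} by a simple union bound over the vertex set. Note that the quantifier $\exists\, 0 \le i \le k$ is already absorbed inside the per-vertex statement of that claim, so the event in the corollary is precisely $\bigcup_{v \in V} E_v$, where $E_v = \{\exists\, 0 \le i \le k : |\inflow_{\flow[i]}(v) - \inflow_{\flowEst[i]}(v)| \ge u(v)\}$, and Claim~\ref{clm:flowDeviationIsSmallForSingleVertexAtSingleTimeStep} gives $\mathbb{P}[E_v] \le n^{-20}$ for every $v$.

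It remains to control $|V|$. By the structural assumptions on $G$ (coming from the reduction in Proposition~\ref{prop:reductionVertexCapacities}, which we assume throughout this section), the graph on which \Cref{alg:boundedCostFlowexcess} is run has at most $m + n + 2 \le \poly(n)$ vertices. Hence
\[
\mathbb{P}\!\left[\bigcup_{v \in V} E_v\right] \le |V| \cdot n^{-20} \le n^{-10}
\]
for all sufficiently large $n$, which is the claimed bound.

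There is essentially no obstacle here: all the substantive work — verifying the martingale hypothesis \eqref{eq:conditionToUseMartingale} and applying the concentration inequality of Theorem~\ref{thm:onlineChernoff} — was carried out in the proof of Claim~\ref{clm:flowDeviationIsSmallForSingleVertexAtSingleTimeStep}, and the bound $n^{-20}$ there was deliberately chosen with enough slack to absorb the polynomial factor $|V|$ incurred by the union bound. The only points to be careful about are that the crude bound $k = O(n^{20}\log^2(m)/\epsilon)$ on the number of while-loop iterations used in the claim is $v$-independent and hence still in force, and that the per-vertex events $E_v$ need not be independent — which is fine, since a union bound requires no independence.
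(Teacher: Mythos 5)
Your proof is correct and matches the paper's intended argument: the corollary is stated without an explicit proof precisely because it follows from Claim~\ref{clm:flowDeviationIsSmallForSingleVertexAtSingleTimeStep} by a union bound over the polynomially many vertices, exactly as you describe. The $n^{-20}$ slack in the per-vertex claim comfortably absorbs the factor $|V|$, so nothing further is needed.
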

Following the proof template for the concentration bounds on the flow on each edge, we can get similar concentration bounds on the cost of the flow.

\begin{restatable}{claim}{concentrationOfCost}\label{clm:costDeviationIsBounded}
For any $0 \leq i \leq k$, we have 
\[
 \mathbb{P}\left[|c(\flow[i]) - c(\flowEst[i])| \geq \overline{C}\right] \leq n^{-10}.
\]
\end{restatable}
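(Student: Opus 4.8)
The plan is to mirror the argument of Claim \ref{clm:flowDeviationIsSmallForSingleVertexAtSingleTimeStep}, but applied to the cost variables rather than the in-flow at a fixed vertex. First I would set up the two random processes. Define, for each iteration $j$, the random variable $X_j = c(\flowEst[j]) - c(\flowEst[j-1]) = \hat c_j$, the cost of the flow actually added to the estimator during iteration $j$, and $Z_j = c(\flow[j]) - c(\flow[j-1])$, the cost of routing $\flowRouted^{\minCapIndexMWU[j]}$ units along $\shortestSTPath[j]$ in the ideal flow $\flow$. Since no flow is ever cancelled between iterations, $\sum_{j=1}^i X_j = c(\flowEst[i])$ and $\sum_{j=1}^i Z_j = c(\flow[i])$, so it suffices to control the deviation of the partial sums.

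Next I would verify the martingale condition \eqref{eq:conditionToUseMartingale}. Conditioning on the history $X_1,\dots,X_{j-1}, Z_1,\dots,Z_j$, the only remaining randomness in $X_j = \hat c_j$ is the choice of $\randomThreshold[j]$ in \Cref{lne:selectSteadinessThreshold}. By the definition of $\hat c$ in \Cref{lne:deployCostEstimator}, $\hat c_j = \sum_{e=(x,y)\in\shortestSTPath[j]} \mathbf{1}[\edgeSensitivity(e)\le \minCapIndexMWU[j]+\randomThreshold[j]] \cdot c(y) \cdot \flowRouted^{\edgeSensitivity(e)}$, and exactly as in the proof of Claim \ref{clm:flowDeviationIsSmallForSingleVertexAtSingleTimeStep} the exponential-distribution computation gives $\mathbb{P}[\edgeSensitivity(e)\le\minCapIndexMWU[j]+\randomThreshold[j]] = \flowRouted^{\minCapIndexMWU[j]-\edgeSensitivity(e)}$, so the expected contribution of edge $e$ is $c(y)\cdot\flowRouted^{\minCapIndexMWU[j]}$, whose sum over $e\in\shortestSTPath[j]$ is precisely $Z_j$. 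Then I would apply \Cref{thm:onlineChernoff}: the random variables are bounded by $W \le \overline C/\zeta$ by \Cref{clm:smallCostOfEveryPath}, and I would choose the scaling factor $\eta = \frac{1}{4(1+10\epsilon)\log_{1+\epsilon}((1+\epsilon)/\initialDifferenceMWU)}$ and deviation parameter $\delta = \overline C/2$, just as in the flow case. The bound $2(1+\eta)^{-(\overline C/2)/W} \ge 2(1+\eta)^{-\zeta/2} \ge n^{-40} \gg n^{-10}$ handles the probability, and $4\eta\sum_j Z_j \le 2\overline C$ follows from the cost-feasibility of $\flowEst[i]$ after scaling by $4\eta$ (Claim \ref{clm:capacityFeasible}) together with the inductive hypothesis $|c(\flow[i-1])-c(\flowEst[i-1])| < \overline C$; combining these gives $\eta\sum_j Z_j + \overline C/2 \le \overline C$, so $|c(\flow[i]) - c(\flowEst[i])| < \overline C$ with the stated probability. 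A union bound over $i \le k = O(n^{20}\log^2 m/\epsilon)$ costs only a polynomial factor and still leaves the total failure probability below $n^{-10}$.

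The one genuinely new point compared to Claim \ref{clm:flowDeviationIsSmallForSingleVertexAtSingleTimeStep} is that there, the union bound had to be taken over both iterations $i$ and vertices $v$, whereas here the cost is a single global scalar, so the bookkeeping is actually simpler — I expect no real obstacle, only the need to be careful that the per-iteration bound $W \le \overline C/\zeta$ from \Cref{clm:smallCostOfEveryPath} (which already accounts for the $\beta$-edge-simpleness of $\shortestSTPath[i]$ and the $m$ edges on the path) is exactly what is needed to make the concentration bound of \Cref{thm:onlineChernoff} strong enough. The main subtlety worth spelling out is that, as noted for $\flow$ itself, the ideal flow $\flow$ depends on earlier random choices of $\flowEst$ through the weights $\weightFunctionCombined$; but this dependence is precisely what the martingale hypothesis \eqref{eq:conditionToUseMartingale} is designed to tolerate, since we condition on all of $Z_1,\dots,Z_j$ before bounding $X_j$, so the argument goes through unchanged.
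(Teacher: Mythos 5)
Your proposal is correct and follows essentially the same route as the paper: the paper's own proof sets up the identical processes $X_j = c(\flowEst[j])-c(\flowEst[j-1])$ and $Z_j = c(\flow[j])-c(\flow[j-1])$, verifies the martingale condition via the same exponential-distribution computation, and then invokes \Cref{thm:onlineChernoff} with the same parameters as in \Cref{clm:flowDeviationIsSmallForSingleVertexAtSingleTimeStep} together with induction on $i$ and a union bound. You merely spell out the details (the choice of $\eta$, $\delta$, the bound $W\le\overline C/\zeta$ from \Cref{clm:smallCostOfEveryPath}, and the use of \Cref{clm:capacityFeasible} with the inductive hypothesis) that the paper leaves as "following the proof template," and these details are consistent with the paper's intended argument.
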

\begin{proof}
Consider the random processes $\{X_j = c(\flowEst[j]) - c(\flowEst[j-1])\}_j$ and $\{Z_j = c(\flow[j]) - c(\flow[j-1])\}_j$. Next, observe that for $j \geq 1$, we have by definition that $Z_j = c(\flow[j]) - c(\flow[j-1]) = \sum_{v \in \shortestSTPath[j]} c(v) \flowRouted^{\minCapIndexMWU[j]}$ (recall that we assume $c(s) = 0$, and that $\shortestSTPath[j]$ is a multi-set). Further, we have that 
\[
    \mathbb{E}[X_j \; |\; X_1, \dots, X_{i-1}, Z_1, \dots, Z_j ] = \sum_{e =(x,v) \in \shortestSTPath[j]} c(v) \cdot \mathbb{P}[\edgeSensitivity(e) \leq \minCapIndexMWU[j] + \randomThreshold[j]] \cdot \flowRouted^{\edgeSensitivity(e)}
\]
again since $\gamma_j$ is the only random variable not conditioned upon that determines $X_j$. But it is straight-forward to see that the right-hand side is exactly $Z_j$, using again the definition of the exponential distribution. Finally, we use this claim in an induction on $i$, to invoke at each iteration step \Cref{thm:onlineChernoff} with the same parameters as chosen above and carefully take a union bound. This concludes the proof.
\end{proof}

We henceforth condition on Claims \ref{clm:flowDeviationIsSmall} and \ref{clm:costDeviationIsBounded} holding true and treat them like deterministic results.

\subsection{Correctness of the Algorithm}
\label{subsec:correctnessMWU}

We can now use the results from the previous sections to conclude that our algorithm returns the correct solution with high probability. We start by showing that the flow $\flow$ is a near-optimal flow and then proceed by coupling $\flow$ and $\flowEst$.

\begin{claim}\label{lma:RealFlowIsFeasible}
The flow $\flow_{scaled} = \frac{\flow}{(1+24\epsilon) \log_{1+\epsilon}\left(\frac{1+\epsilon}{\initialDifferenceMWU}\right)}$ is a capacity-feasible and satisfies flow conservation constraints.
\end{claim}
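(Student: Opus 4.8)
The plan is to prove the two claimed properties of $\flow_{scaled}$ separately, mirroring the structure of the feasibility proof for $\flowEst_{scaled}$ in \Cref{clm:capacityFeasible}, but with the extra twist that flow conservation now needs to be argued (which is trivial for $\flow$, since $\flow$ is built as a sum of $s$-$t$ paths) and that we no longer have a clean recursive formula for $\weightFunctionEst(v)$ in terms of $\inflow_{\flow}(v)$ — instead we must transfer the capacity bound from $\flowEst$ to $\flow$ using the concentration bound \Cref{clm:flowDeviationIsSmall}.

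\textbf{Flow conservation.} First I would observe that by definition, $\flow = \sum_{j=1}^{k} \flowRouted^{\minCapIndexMWU[j]} \cdot \mathbf{1}_{\shortestSTPath[j]}$ is a nonnegative combination of $s$-$t$ path indicator vectors (each $\shortestSTPath[j]$ routed as a directed path from $s$ to $t$). Each such path indicator satisfies flow conservation at every internal vertex — even when the path is only $\beta$-edge-simple, since each traversal of a vertex contributes equally to in- and out-flow — so the sum $\flow$ does too. Scaling by a positive constant preserves conservation. This is essentially immediate and should take only a sentence or two.

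\textbf{Capacity feasibility.} For this I would chain the capacity bound already proved for $\flowEst_{scaled}$ with the concentration bound. Fix $v \in V \setminus \{s,t\}$. By \Cref{clm:flowDeviationIsSmall} (which we have conditioned on), $|\inflow_{\flow}(v) - \inflow_{\flowEst}(v)| \le u(v)$, and by the proof of \Cref{clm:capacityFeasible} we have $\inflow_{\flowEst}(v) \le (1+10\epsilon)\log_{1+\epsilon}\!\big(\tfrac{1+\epsilon}{\initialDifferenceMWU}\big)\cdot u(v)$. Hence
\[
\inflow_{\flow}(v) \le \inflow_{\flowEst}(v) + u(v) \le \Big((1+10\epsilon)\log_{1+\epsilon}\!\big(\tfrac{1+\epsilon}{\initialDifferenceMWU}\big) + 1\Big)\cdot u(v) \le (1+24\epsilon)\log_{1+\epsilon}\!\big(\tfrac{1+\epsilon}{\initialDifferenceMWU}\big)\cdot u(v),
\]
where the last step uses that $\log_{1+\epsilon}\!\big(\tfrac{1+\epsilon}{\initialDifferenceMWU}\big) = 1 + \log_{1+\epsilon}(1/\initialDifferenceMWU) = \Omega(\log m / \epsilon) \gg 1$, so the additive $+1$ is absorbed by bumping the leading constant from $(1+10\epsilon)$ to $(1+24\epsilon)$ (in fact far less slack than $14\epsilon\cdot\log_{1+\epsilon}(\cdots)\cdot u(v)$ is needed). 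Dividing through by $(1+24\epsilon)\log_{1+\epsilon}\!\big(\tfrac{1+\epsilon}{\initialDifferenceMWU}\big)$ gives $\inflow_{\flow_{scaled}}(v) \le u(v)$. For $v \in \{s,t\}$ the capacity is $\infty$, so the bound holds vacuously.

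\textbf{Main obstacle.} The only mildly delicate point is making sure the constant bookkeeping is airtight — in particular that the slack between the $(1+10\epsilon)$ constant available from \Cref{clm:capacityFeasible} and the $(1+24\epsilon)$ constant claimed here genuinely dominates the additive $u(v)$ term from the concentration bound, uniformly for all $\epsilon \le 1/768$ and all $m$ large enough. This reduces to the elementary inequality $(1+10\epsilon)L + 1 \le (1+24\epsilon)L$ whenever $L = \log_{1+\epsilon}\!\big(\tfrac{1+\epsilon}{\initialDifferenceMWU}\big) \ge 1/(14\epsilon)$, which holds since $L \ge \log_{1+\epsilon}(1/\initialDifferenceMWU) \ge \ln(1/\initialDifferenceMWU)/(2\epsilon) = \tfrac{1}{2\epsilon}\ln(m^{1/\epsilon}) = \tfrac{\ln m}{2\epsilon^2} \gg \tfrac{1}{14\epsilon}$. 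No new ideas beyond what is already in the excerpt are required; the proof is a short composition of \Cref{clm:flowDeviationIsSmall}, \Cref{clm:capacityFeasible}, and the trivial conservation property of path-sum flows.
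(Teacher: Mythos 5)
Your proposal is correct and follows essentially the same route as the paper: flow conservation is immediate because $\flow_{scaled}$ is a nonnegative weighted sum of $s$-$t$ paths, and capacity feasibility is obtained by combining the deviation bound of \Cref{clm:flowDeviationIsSmall} with the capacity feasibility of $\flowEst_{scaled}$ from \Cref{clm:capacityFeasible}, absorbing the additive $u(v)$ into the slack between $(1+10\epsilon)$ and $(1+24\epsilon)$ using the lower bound on $\log_{1+\epsilon}\bigl(\tfrac{1+\epsilon}{\initialDifferenceMWU}\bigr)$. The only difference is cosmetic bookkeeping (you absorb the $+1$ before scaling, the paper splits terms after dividing by the scaling factor), so no further changes are needed.
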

\begin{proof}
We have that since $\flow_{scaled}$ is the weighted sum of $s$-to-$t$ paths that the flow conservation constraints are satisfied. To see that $\flow_{scaled}$ is capacity feasible, observe that for each $v \in V$,
\begin{align*}
\frac{\inflow_{\flow}(v)}{(1+24\epsilon)\log_{1+\epsilon}\left(\frac{1+\epsilon}{\initialDifferenceMWU}\right)} 
&\leq \frac{\inflow_{\flowEst}(v) + u(v)}{(1+24\epsilon)\log_{1+\epsilon}\left(\frac{1+\epsilon}{\initialDifferenceMWU}\right)} \\
&\leq  \frac{\inflow_{\flowEst}(v)}{(1+24\epsilon)\log_{1+\epsilon}\left(\frac{1+\epsilon}{\initialDifferenceMWU}\right)} + \frac{\epsilon u(v)}{(1+\epsilon)} \\
&  \leq \frac{\inflow_{\flowEst}(v)}{(1+2\epsilon)(1+10\epsilon)\log_{1+\epsilon}\left(\frac{1+\epsilon}{\initialDifferenceMWU}\right)} + \epsilon u(v)\\
& \leq (1-\epsilon) \cdot \frac{\inflow_{\flowEst}(v)}{(1+10\epsilon)\log_{1+\epsilon}\left(\frac{1+\epsilon}{\initialDifferenceMWU}\right)} + \epsilon u(v) &\leq u(v)
\end{align*}
where we use \Cref{clm:flowDeviationIsSmall} in the first inequality, and in the second inequality that $\log_{1+\epsilon}\left(\frac{1+\epsilon}{\initialDifferenceMWU}\right) = \log_{1+\epsilon}\left(\frac{1+\epsilon}{m^{-1/\epsilon}}\right) \geq 1/\epsilon$, in the third and forth inequality, we used $1+x \leq e^x \leq 1+x+x^2$ (for $x \leq 1$), and in the final inequality, we used that by capacity feasibility of $\flowEst$ (after scaling) as established in \Cref{clm:capacityFeasible}, we have
\[
    \frac{\inflow_{\flowEst}(v)}{(1+10\epsilon) \log_{1+\epsilon}\left(\frac{1+\epsilon}{\initialDifferenceMWU}\right)} \leq u(v).
\]
\end{proof}

\begin{lemma}\label{lma:nearOptimalRealFlow}
The flow $\flow_{scaled} = \frac{\flow}{(1+24\epsilon) \log_{1+\epsilon}\left(\frac{1+\epsilon}{\initialDifferenceMWU}\right)}$ is a $(1-\Theta(\epsilon))$-optimal flow.
\end{lemma}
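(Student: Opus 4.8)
The plan is to follow the standard Garg--Koenemann primal-dual analysis, but carefully track the $(1+\epsilon)$-slack introduced by (a) the approximate shortest path oracle, and (b) the discrepancy between the ``estimated weights'' $\weightFunctionCombined[i]$ (induced by $\flowEst$) and the ``true weights'' induced by $\flow$. First I would set up the dual objective potential $\objFunctionEst[i] = \sum_{v} u(v)\weightFunctionEst[i](v) + \overline{C}\,\costFunctionEst[i]$ and establish the recursive inequality $\objFunctionEst[i] \le \objFunctionEst[i-1] + \epsilon \cdot (\text{weight of path } \shortestSTPath[i] \text{ under estimated metric}) \cdot \flowRouted^{\minCapIndexMWU[i]}$, which follows from the update rules in Lines \ref{lne:increaseWeight} and \ref{lne:increaseVarPhi} together with the inequality $e^x \le 1 + x + x^2 \le 1 + (1+\epsilon)x$ for $x\le 1$ applied to the per-iteration exponents (these are bounded by $1/\zeta \le 1$ by \Cref{clm:smallFlowIncrementInEachIteration} and \Cref{clm:smallCostOfEveryPath}).

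Next I would relate the estimated path weight to the true optimum. The key chain: by weak duality (\Cref{thm:weakDuality}), scaling $(\weightFunctionEst[i-1], \costFunctionEst[i-1])$ by $1/\weightFunctionCombined[i-1](\shortestSTPath[i])$ — wait, this is where the estimated/true-weight gap matters — gives that $\weightFunctionCombined[i-1](\shortestSTPath[i]) \le \objFunctionEst[i-1]/OPT_{G,\overline{C}}$ only if $\weightFunctionCombined[i-1]$ is a valid dual metric. Since the path is a $(1+\epsilon)$-approximate shortest path under the estimated metric $\weightFunctionCombined[i-1]$ (induced by $\flowEst[i-1]$), and the true flow $\flow$ uses exactly these paths, I would argue $\weightFunctionCombined[i-1](\shortestSTPath[i]) \le (1+\epsilon)\cdot \min_{P}\weightFunctionCombined[i-1](P) \le (1+\epsilon)\cdot \objFunctionEst[i-1]/OPT_{G,\overline{C}}$. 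Plugging into the recursion yields $\objFunctionEst[i] \le \objFunctionEst[i-1]\cdot\exp\!\big(\epsilon(1+\epsilon)^2 \flowRouted^{\minCapIndexMWU[i]}/OPT_{G,\overline{C}}\big)$ using $1+x\le e^x$. Iterating over all $k$ iterations and using $\objFunctionEst[0] = \sum_v u(v)\cdot \initialDifferenceMWU/u(v) + \overline{C}\cdot\initialDifferenceMWU/\overline{C} = (n+1)\initialDifferenceMWU$ and the termination condition $\objFunctionEst[k] \ge 1$, I get $1 \le (n+1)\initialDifferenceMWU \cdot \exp\!\big(\epsilon(1+\epsilon)^2 \,\text{val}(\flow)/OPT_{G,\overline{C}}\big)$, where $\text{val}(\flow) = \sum_i \flowRouted^{\minCapIndexMWU[i]}$ is the value of the unscaled flow $\flow$. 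Taking logarithms gives $\text{val}(\flow) \ge \frac{\ln(1/((n+1)\initialDifferenceMWU))}{\epsilon(1+\epsilon)^2}\cdot OPT_{G,\overline{C}}$.

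Finally I would compare this to the scaling denominator $(1+24\epsilon)\log_{1+\epsilon}\!\big(\frac{1+\epsilon}{\initialDifferenceMWU}\big)$. Using $\log_{1+\epsilon}(x) = \ln x / \ln(1+\epsilon)$ and $\epsilon - \epsilon^2 \le \ln(1+\epsilon) \le \epsilon$, one checks that $\frac{\ln(1/((n+1)\initialDifferenceMWU))}{\epsilon(1+\epsilon)^2 \cdot (1+24\epsilon)\log_{1+\epsilon}((1+\epsilon)/\initialDifferenceMWU)} \ge 1 - \Theta(\epsilon)$, because $\ln(1/((n+1)\initialDifferenceMWU)) = \ln(1/\initialDifferenceMWU) - \ln(n+1) \ge (1-\Theta(\epsilon))\ln((1+\epsilon)/\initialDifferenceMWU)$ thanks to $\initialDifferenceMWU = m^{-1/\epsilon}$ making $\ln(1/\initialDifferenceMWU) = \frac{1}{\epsilon}\ln m \gg \ln n$. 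Hence $\text{val}(\flow_{scaled}) \ge (1-\Theta(\epsilon))\,OPT_{G,\overline{C}}$. For cost-feasibility of $\flow_{scaled}$: the cost bound $c(\flow) \le c(\flowEst) + \overline{C}$ from \Cref{clm:costDeviationIsBounded}, combined with cost-feasibility of $\flowEst_{scaled}$ from \Cref{clm:capacityFeasible}, gives $c(\flow_{scaled}) \le \overline{C}$ by exactly the computation in \Cref{lma:RealFlowIsFeasible} (replacing in-flow bounds with cost bounds). Together with \Cref{lma:RealFlowIsFeasible} (capacity-feasibility and conservation), this shows $\flow_{scaled}$ is a $(1-\Theta(\epsilon))$-optimal flow.

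\textbf{Main obstacle.} The delicate point is justifying that $\weightFunctionCombined[i-1](\shortestSTPath[i]) \le (1+\epsilon)\,\objFunctionEst[i-1]/OPT_{G,\overline{C}}$ — i.e., that the \emph{estimated} dual metric still certifies a near-optimal lower bound. The oracle returns a path that is $(1+\epsilon)$-approximate with respect to $\weightFunctionCombined[i-1]$, which is itself induced by $\flowEst[i-1]$, not $\flow[i-1]$; but since $\flow$ is \emph{defined} by routing along exactly these oracle-returned paths, the analysis is internally consistent and one only pays the single $(1+\epsilon)$ factor from the oracle plus a $(1+\epsilon)$ from rounding $\costFunctionEst$ in Line \ref{lne:dataStructureInit} (the $\lceil\cdot\rceil_{(1+\epsilon)}$). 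The concentration bounds from Section \ref{subsec:conBoundCaps} are only needed to transfer feasibility from $\flowEst$ to $\flow$; they play no role in the near-optimality of $\flow$ itself, which is why the lemma can be stated deterministically once we have conditioned on those events.
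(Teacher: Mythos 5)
There is a genuine gap at the heart of your plan: the per-iteration recursion for the \emph{estimated} potential. You track $\objFunctionEst[i]=\sum_{v}u(v)\weightFunctionEst[i](v)+\overline{C}\,\costFunctionEst[i]$ and claim, as a deterministic consequence of the update rules, that $\objFunctionEst[i]\le\objFunctionEst[i-1]+\epsilon\cdot\weightFunctionCombined[i-1](\shortestSTPath[i])\cdot\flowRouted^{\minCapIndexMWU[i]}$. But $\weightFunctionEst$ and $\costFunctionEst$ are updated according to the random estimator $\flowEst$: in iteration $i$ the algorithm adds $\flowRouted^{\edgeSensitivity(e)}$ units to every selected edge $e\in\sigma_{\le \minCapIndexMWU[i]+\randomThreshold[i]}(\shortestSTPath[i])$, and for edges of steadiness above $\minCapIndexMWU[i]$ these amounts exceed $\flowRouted^{\minCapIndexMWU[i]}$ by factors up to $\flowRouted^{\randomThreshold[i]}$. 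So in a single iteration the estimated potential can jump by far more than your bound; the relation you want holds only \emph{in expectation} over $\randomThreshold[i]$, and the concentration results of Section \ref{subsec:conBoundCaps} (per-vertex in-flows, total cost) do not control the weight-weighted sums $\sum_{e=(x,v)}\flowRouted^{\edgeSensitivity(e)}\weightFunctionEst[i-1](v)$ that actually drive $\objFunctionEst$. Moreover, even granting the recursion, it would lower-bound a quantity tied to the pseudoflow $\flowEst$ rather than the value $\sum_i\flowRouted^{\minCapIndexMWU[i]}$ of the conservation-respecting flow $\flow$ that the lemma is about. This also contradicts your closing remark that the concentration bounds ``play no role in the near-optimality of $\flow$.''

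The paper's proof sidesteps exactly this issue by defining the potential through the ideal flow: $\weightFunction[i],\costFunction[i],\objFunction[i]$ are induced by $\inflow_{\flow[i]}$ and $c(\flow[i])$, so each iteration deterministically adds exactly $\flowRouted^{\minCapIndexMWU[i]}$ per edge of $\shortestSTPath[i]$ and the Garg--Koenemann recursion is valid pointwise. The price is that the oracle's path is only near-shortest for $\weightFunctionCombined[i-1]$ (built from $\flowEst$), and the termination test is also stated in terms of $\weightFunctionEst,\costFunctionEst$; both are transferred to the true dual metric $\weightFunction[i-1]+\costFunction[i-1]\circ c$ via \Cref{clm:smallMistakeByTakenTheHatOff}, which rests precisely on \Cref{clm:flowDeviationIsSmall} and \Cref{clm:costDeviationIsBounded} and accounts for the $(1+2\epsilon)^{5}$ factor missing from your accounting. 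Your weak-duality step, the comparison with the scaling denominator $(1+24\epsilon)\log_{1+\epsilon}\bigl(\frac{1+\epsilon}{\initialDifferenceMWU}\bigr)$, and the cost-feasibility sketch are all in the spirit of the paper; but to repair the argument you must either switch to the true-flow potential as the paper does, or prove an additional martingale-type concentration bound for the potential increments themselves, neither of which your plan contains.
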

\begin{proof}
We have feasibility of $\flow_{scaled}$ by \Cref{lma:RealFlowIsFeasible}.

It remains to prove that the flow value $F$ of $\flow_{scaled}$ is at least $(1-\epsilon) OPT_{G, \overline{C}}$. To this end, let us define the functions
\begin{align*}
    \weightFunction[i](v) &= \frac{\initialDifferenceMWU}{u(v)} \cdot \textsc{Exp}\left(\frac{\epsilon \cdot \inflow_{\flow[i]}(v)}{u(v)}\right) \\
    \costFunction[i] &= \frac{ \initialDifferenceMWU }{\overline{C} } \cdot \textsc{Exp}\left(\frac{\epsilon \cdot c(\flow[i])}{\overline{C}}\right)\\
    \objFunction[i] &= \sum_{v \in V} u(v) \cdot \weightFunction[i](e) + \overline{C} \cdot \costFunction[i]
\end{align*}
for all $0 \leq i \leq k$. Here, we define $\weightFunction[i]$ to be the weight function that would result if we would always use the flow $\flow$ up to update vertex weights instead of the flow estimator $\flowEst$ as is the case for $\weightFunctionEst[i]$. Analogously, $\costFunction$ is the version of $\costFunctionEst$ that is based on $\flow$ instead of $\flowEst$ and $\objFunction$ is the resulting objective value corresponding to the sum we use in the while-loop condition in \Cref{lne:whileLoopMWU}. 

We start by establishing a useful claim that relates these versions based on the flow $\flow$ instead of $\flowEst$ tightly together. 

\begin{claim}\label{clm:smallMistakeByTakenTheHatOff}
We have for any $0 \leq i \leq k$, we have that 
\begin{align*}
     \forall v \in V, &\frac{1}{(1+2\epsilon)}\weightFunctionEst[i](v) \leq \weightFunction[i](v) \leq (1+2\epsilon)\weightFunctionEst[i](v) \text{, and } \\
     & \frac{1}{(1+2\epsilon)}\costFunctionEst[i] \leq \costFunction[i] \leq (1+2\epsilon)\costFunctionEst[i].
\end{align*}
\end{claim}
\begin{proof}
We observe that by \Cref{clm:flowDeviationIsSmall}, we have $|\inflow_{\flow[i]}(v) - \inflow_{\flowEst[i]}(v)| < u(v)$ for every $v \in V$, and therefore we have
\begin{align}\label{eq:weightFunctionIteration}
\begin{split}
     \weightFunctionEst[i](v) &\leq \frac{\initialDifferenceMWU}{u(v)} \textsc{Exp}\left(\frac{\epsilon \cdot \inflow_{\flowEst[i]}(v)}{u(v)}\right) \leq  \frac{\initialDifferenceMWU}{u(v)} \textsc{Exp}\left(\frac{\epsilon (\inflow_{\flow[i]}(v) + u(v)) }{u(v)}\right) \\
     &\leq \frac{\initialDifferenceMWU}{u(v)} \left(1+2\epsilon\right)  \textsc{Exp}\left(\frac{\epsilon \inflow_{\flow[i]}(v) }{u(v)}\right) = \left(1+2\epsilon\right) \weightFunction[i](v).
\end{split}
\end{align}
where we use for the inequality that $e^x \leq 1 + x + x^2$ for $x \leq 1$, and $x^2 \leq x$ for $x \leq 1$. The remaining inequality statements can be proven by following this template and using the additional \Cref{clm:costDeviationIsBounded}.
\end{proof}

Observe that for any $i \geq 1$, for every $v \in V$ that occurs $\beta_v$ times on the path (minus one if $v = s$) that 
\begin{align}
    \begin{split}\label{eq:boundUIteration}
         \weightFunction[i](v) &= \frac{\initialDifferenceMWU}{u(v)} \cdot \textsc{Exp}\left(\frac{\epsilon \inflow_{\flow[i-1]}(v)}{u(v)}\right) \cdot \textsc{Exp}\left(\frac{\epsilon\left(\inflow_{\flow[i]}(v)-\inflow_{\flow[i-1]}(v)\right)}{u(v)}\right) \\
         &\leq \frac{\initialDifferenceMWU}{u(v)} \cdot \textsc{Exp}\left(\frac{\epsilon \inflow_{\flow[i-1]}(v)}{u(v)}\right) \cdot \left(1 + \left(\epsilon + \epsilon^2\right) \cdot  \frac{\beta_v \cdot \flowRouted^{\minCapIndexMWU[i]}}{u(v)}\right)\\
         &= \weightFunction[i-1](v) + {\initialDifferenceMWU} \cdot  \textsc{Exp}\left(\frac{\epsilon \inflow_{\flow[i-1]}(v)}{u(v)}\right) \cdot \left(\epsilon + \epsilon^2\right) \cdot \beta_v \cdot  \flowRouted^{\minCapIndexMWU[i]}\\
         &= \weightFunction[i-1](v) + \frac{\left(\epsilon + \epsilon^2\right) \cdot \beta_v \cdot  \flowRouted^{\minCapIndexMWU[i]}}{u(v)} \cdot \weightFunction[i-1](v)
     \end{split}
\end{align}
where we use for the first inequality that $e^x \leq 1 + x + x^2$ for $x \leq 1$ (which is given since our exponent is at most $\epsilon/\zeta$ by \Cref{clm:smallFlowIncrementInEachIteration}) and that $\beta_v \cdot \flowRouted^{\minCapIndexMWU[i]} = \inflow_{\flow[i]}(v)-\inflow_{\flow[i-1]}(v)$ by definition of $f$. We then rearrange terms to obtain the equalities. 

For $\costFunction[i]$, we can argue similarly that
\begin{align}\label{eq:boundCIteration}
\begin{split}
         \costFunction[i] &\leq \frac{\overline{C} }{ \initialDifferenceMWU } \cdot \textsc{Exp}\left(\frac{\epsilon \cdot c(\flow[i-1])}{\overline{C}}\right) \cdot \left(1 + \left(\epsilon + \epsilon^2\right) \cdot \frac{c(\flow[i])-c(\flow[i-1])}{\overline{C}}\right)\\
         &\leq  \costFunction[i-1] + \frac{\left(\epsilon + \epsilon^2\right) \cdot  \flowRouted^{\minCapIndexMWU[i]} \cdot c(\shortestSTPath[i])}{ \overline{C}} \cdot \costFunction[i-1]
\end{split}
\end{align}
where we use that the difference in the cost between flows $\flow[i]$ and $\flow[i-1]$ is the cost of the path $\shortestSTPath[i]$ times the value of the flow we send in iteration $i$ which is $\flowRouted^{\minCapIndexMWU[i]}$. We further use Equation \ref{clm:smallCostOfEveryPath} to ensure that we use inequality $e^x \leq 1 + x + x^2$ with $x \leq 1$. The last inequality again uses \Cref{clm:smallMistakeByTakenTheHatOff}.

Combined, we obtain that 
\begin{align}\label{eq:boundingDi}
\begin{split}
     \objFunction[i] &= \sum_{v \in V} u(v) \cdot \weightFunction[i](v) + \overline{C} \cdot \costFunction[i]\\
     &\leq \objFunction[i-1] + \sum_{v \in V} \left(\epsilon + \epsilon^2\right) \cdot \beta_v \cdot  \flowRouted^{\minCapIndexMWU[i]} \cdot \weightFunction[i-1](v) +  \left(\epsilon + \epsilon^2\right) \cdot  \flowRouted^{\minCapIndexMWU[i]} \cdot c(\shortestSTPath[i]) \cdot \costFunction[i-1]\\
     &= \objFunction[i-1] +  \left(\epsilon + \epsilon^2\right) \cdot  \flowRouted^{\minCapIndexMWU[i]} \cdot \left(\weightFunction[i-1] + \costFunction[i-1] \circ c\right)(\shortestSTPath[i]).
\end{split}
\end{align}
Let $w' = \weightFunction[i-1] + \costFunction[i-1] \circ c$, we observe that the distance $\dist_{w'}(s,t)$ from $s$ to $t$ in $G$, weighted by function $w'$, satisfies
\[
\dist_{w'}(s,t) \leq \frac{ \objFunction[i-1]}{OPT_{G, \overline{C}}}.
\] 
This follows since scaling $\weightFunction[i-1]$ and $\costFunction[i-1]$ by $1/\dist_{w'}(s,t)$ makes them a feasible solution to the dual LP given in \Cref{eq:LPdual}. Since it is a feasible solution to a minimization problem, we have that 
\[
\frac{ \objFunction[i-1]}{\dist_{w'}(s,t)} \geq OPT_{G, \overline{C}}
\]
where we used weak duality as stated in \Cref{thm:weakDuality} for the inequality to further lower bound the optimal value to the dual LP by the optimal value of the primal LP. Multiplying both sides by $\dist_{w'}(s,t)$ and dividing by $OPT_{G, \overline{C}}$ proves the statement. 

Finally, we observe that the selected path $\shortestSTPath[i]$, is a $(1+2\epsilon)^5$-approximate shortest $s$ to $t$ path with respect to $w'$. This follows since $\shortestSTPath[i]$ is selected to be a $(1+\epsilon)$-approximate shortest path in the metric determined by weight function $\weightFunctionCombined[i-1]$ by the definition of the SSSP data structure $\SSSP^{\pi}(G,s,t,\eps,\beta)$. Further, $\weightFunctionCombined[i-1]$ is a $(1+\epsilon)$ approximation of the metric induced by the weight function $(\weightFunctionEst[i-1] + \costFunctionEst[i-1] \circ c)$ (as can be seen from the rounding of $\costFunctionEst$ described in \Cref{lne:dataStructureInit}). Finally, $(\weightFunctionEst[i-1] + \costFunctionEst[i-1] \circ c)$ is a $(1+2\epsilon)^2$-approximation of $w'$ by \Cref{clm:smallMistakeByTakenTheHatOff}. 

It remains to put everything together: from the combination of 
\Cref{eq:boundingDi} and the path approximation, we obtain that
\begin{align}
    \objFunction[i] &\leq \objFunction[i-1] + 
    \left(\epsilon + \epsilon^2\right) \cdot  \flowRouted^{\minCapIndexMWU[i]}(1+2\epsilon)^5 \cdot  \frac{ \objFunction[i-1]}{OPT_{G, \overline{C}}} \\
    &\leq  \objFunction[i-1] \cdot \textsc{Exp}\left(\frac{\left(\epsilon + \epsilon^2\right) \cdot  (1+2\epsilon)^5}{OPT_{G, \overline{C}}} \cdot \flowRouted^{\minCapIndexMWU[i]} \right).
\end{align}

We finally observe that by the while-loop condition, we have that after the last iteration, we have that $\objFunction[k] \geq 1$. Since $\objFunction[0] \geq \initialDifferenceMWU m$, we therefore have that
\begin{align*}
    &1 \leq \objFunction[k] \leq \initialDifferenceMWU m \cdot \textsc{Exp}\left(\frac{\left(\epsilon + \epsilon^2\right) \cdot (1+2\epsilon)^5}{OPT_{G, \overline{C}}} \cdot \sum_{i=1}^{k} \flowRouted^{\minCapIndexMWU[i]} \right) &\iff\\
    &0 \leq \log(\initialDifferenceMWU m) + \left(\frac{\left(\epsilon + \epsilon^2\right) \cdot (1+2\epsilon)^5}{OPT_{G, \overline{C}}} \cdot \sum_{i=1}^{k} \flowRouted^{\minCapIndexMWU[i]} \right) &\iff\\
    & \sum_{i=1}^{k} \flowRouted^{\minCapIndexMWU[i]} \geq \frac{\log(\frac{1}{m\initialDifferenceMWU}) \cdot OPT_{G, \overline{C}}}{\left(\epsilon + \epsilon^2\right) \cdot (1+2\epsilon)^5} & 
\end{align*}
Noticing that the value of the flow $\flow$ is exactly $F = \sum_{i=1}^{k} \flowRouted^{\minCapIndexMWU[i]}$ and therefore the flow value of $\flow_{scaled}$ is $\flow_{scaled} = \frac{F}{(1+24\epsilon)\log_{1+\epsilon}\left(\frac{1+\epsilon}{\initialDifferenceMWU}\right)}$, we have
\begin{align*}
    \flow_{scaled} &\geq  \frac{\log(\frac{1}{m\initialDifferenceMWU}) \cdot OPT_{G, \overline{C}}}{\left(\epsilon + \epsilon^2\right) \cdot (1+2\epsilon)^5} \cdot \frac{1}{(1+24\epsilon)\log_{1+\epsilon}\left(\frac{1+\epsilon}{\initialDifferenceMWU}\right)}\\
    &= \frac{(1 - \epsilon) \log(1/\initialDifferenceMWU) \cdot OPT_{G, \overline{C}}}{\left(\epsilon + \epsilon^2\right) \cdot (1+2\epsilon)^5} \cdot \frac{\log(1+\epsilon) }{(1+24\epsilon)\log\left(\frac{1+\epsilon}{\initialDifferenceMWU}\right)}\\
    &\geq \frac{(1 - \epsilon) \cdot OPT_{G, \overline{C}} \cdot \log(1+\epsilon)}{\left(\epsilon + \epsilon^2\right)(1+24\epsilon)^7} \geq \frac{(1 - \epsilon) \cdot \epsilon \cdot OPT_{G, \overline{C}}}{\left(\epsilon + \epsilon^2\right) (1+24\epsilon)^7} \\
    &\geq \frac{(1 - \epsilon) OPT_{G, \overline{C}}}{ (1+24\epsilon)^8} \geq (1 - 192\epsilon)(1-\epsilon) OPT_{G, \overline{C}} \geq (1 - 768\epsilon) OPT_{G, \overline{C}} 
\end{align*}
where we use that $\initialDifferenceMWU = m^{-1/\epsilon}$ such that in the first equality we can use $\log(\frac{1}{m\initialDifferenceMWU}) = \log(1/\initialDifferenceMWU) - \log(m) = (1-\epsilon) \log(1/\initialDifferenceMWU)$, and for the second term that we can change basis of the logarithm using $\log_{1+\epsilon}(x) = \frac{\log(x)}{\log({1+\epsilon})}$. We then obtain the second inequality using $\log(\frac{1+\epsilon}{\initialDifferenceMWU}) \leq (1+\epsilon) \log(\frac{1}{\initialDifferenceMWU})$, the third inequality using $\log(1+\epsilon) \geq \epsilon$. In the final two inequalities, we use that for $|x| < 1$, we have $(1+x)^n \leq (1+nx)$, from the Taylor series of $1/(1+x)$, we obtain $1/(1+x) \geq (1-x)$, and finally we use that $1+x+x^2 \geq e^x \geq 1+x$ combined with the fact that $(1-384\epsilon+384\epsilon^2) \geq (1-192\epsilon)$ using our assumption that $\epsilon \leq 1/768$.
\end{proof}

It remains to show that $\flowEst$ after scaling is a $(1-\epsilon)$-pseudo-optimal flow. This proves the correctness of \Cref{thm:mainMinCost2}.

\begin{corollary}\label{cor:correctnessOfHatF}
The flow $\flowEst_{scaled} = \frac{\flowEst}{(1+10\epsilon)\log_{1+\epsilon}\left(\frac{1+\epsilon}{\initialDifferenceMWU}\right)}$ is a $(1-\Theta(\epsilon))$-pseudo-optimal flow.
\end{corollary}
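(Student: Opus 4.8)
The plan is to produce the flow $\flow_{scaled}$ as the witness demanded by Definition~\ref{def:feasibleFlow}. First I would note that $\flowEst_{scaled}$ is indeed a pseudo-flow: it is capacity- and cost-feasible by Claim~\ref{clm:capacityFeasible}. Then, by Lemma~\ref{lma:nearOptimalRealFlow}, the flow $\flow_{scaled}=\tfrac{\flow}{(1+24\epsilon)\log_{1+\epsilon}(\frac{1+\epsilon}{\initialDifferenceMWU})}$ is a $(1-768\epsilon)$-optimal flow, so it satisfies condition~(1) of Definition~\ref{def:feasibleFlow} with parameter $768\epsilon=\Theta(\epsilon)$. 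It therefore remains only to verify condition~(2): the vertexwise closeness $|\inflow_{\flow_{scaled}}(v)-\inflow_{\flowEst_{scaled}}(v)|\le\Theta(\epsilon)\,u(v)$ for every $v\in V$.

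The one point requiring care is that $\flow$ and $\flowEst$ are divided by \emph{different} normalizing factors, $(1+24\epsilon)L$ versus $(1+10\epsilon)L$, where $L:=\log_{1+\epsilon}(\tfrac{1+\epsilon}{\initialDifferenceMWU})$, so one cannot simply quote Corollary~\ref{clm:flowDeviationIsSmall} directly. Fixing $v$ and writing $a:=\inflow_{\flow}(v)$, $b:=\inflow_{\flowEst}(v)$, $\alpha:=(1+24\epsilon)L$, $\beta:=(1+10\epsilon)L$, I would split
\[
\Bigl|\tfrac{a}{\alpha}-\tfrac{b}{\beta}\Bigr|\;\le\;\tfrac{|a-b|}{\alpha}\;+\;b\,\Bigl|\tfrac{1}{\alpha}-\tfrac{1}{\beta}\Bigr|.
\]
For the first term, Corollary~\ref{clm:flowDeviationIsSmall} gives $|a-b|<u(v)$, and since $L\ge 1/\epsilon$ (as already used in the proof of Lemma~\ref{lma:RealFlowIsFeasible}) we get $|a-b|/\alpha\le u(v)/L\le\epsilon\,u(v)$. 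For the second term, $|1/\alpha-1/\beta|=\tfrac{14\epsilon}{\alpha\beta}\le\tfrac{14\epsilon}{L}$, while capacity-feasibility of $\flowEst_{scaled}$ (Claim~\ref{clm:capacityFeasible}) gives $b\le\beta\,u(v)=(1+10\epsilon)L\,u(v)$; multiplying yields $b\,|1/\alpha-1/\beta|\le 14\epsilon(1+10\epsilon)\,u(v)\le 28\epsilon\,u(v)$. Hence $|\inflow_{\flow_{scaled}}(v)-\inflow_{\flowEst_{scaled}}(v)|\le 29\epsilon\,u(v)$ for all $v$.

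Combining the two parts, $\flowEst_{scaled}$ is $(1-768\epsilon)$-pseudo-optimal — take $\epsilon'=768\epsilon$, so that $\flow_{scaled}$ is $(1-\epsilon')$-optimal and $29\epsilon\,u(v)\le\epsilon'\,u(v)$ — which is a $(1-\Theta(\epsilon))$-pseudo-optimal flow as claimed. I do not expect a genuine obstacle here: all the substantive work (feasibility of $\flowEst_{scaled}$, near-optimality and feasibility of $\flow_{scaled}$, and the martingale concentration bound on $\inflow$) is already carried out in Claim~\ref{clm:capacityFeasible}, Lemma~\ref{lma:nearOptimalRealFlow}, and Corollary~\ref{clm:flowDeviationIsSmall}; the only mild subtlety is the mismatch between the normalizing constants, which is absorbed by the $\Theta(\epsilon)$ slack together with the bound $L\ge 1/\epsilon$.
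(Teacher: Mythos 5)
Your proposal is correct and follows essentially the same route as the paper: the paper's own proof is the one-liner "combining \Cref{lma:nearOptimalRealFlow} with \Cref{clm:capacityFeasible} immediately gives the Corollary," taking $\flow_{scaled}$ as the witness and implicitly relying on \Cref{clm:flowDeviationIsSmall} together with $\log_{1+\epsilon}\bigl(\frac{1+\epsilon}{\initialDifferenceMWU}\bigr)\ge 1/\epsilon$ for the coupling condition, exactly as you do. Your treatment of the mismatched normalizers $(1+24\epsilon)$ versus $(1+10\epsilon)$ makes explicit a step the paper leaves implicit, and it is sound apart from a harmless slip: $|1/\alpha-1/\beta|$ equals $\frac{14\epsilon L}{\alpha\beta}$, not $\frac{14\epsilon}{\alpha\beta}$, which still yields your bound $\le \frac{14\epsilon}{L}$ since $\alpha,\beta\ge L$.
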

\begin{proof}
Combining \Cref{lma:nearOptimalRealFlow} with \Cref{clm:capacityFeasible} immediately gives the Corollary.
\end{proof}

\subsection{Runtime Complexity of the Algorithm}
\label{subsec:runningTimeMWU}

Next, we bound the runtime of the algorithm. In this section, we use the fact that $U \leq m^5 \leq n^{10}$ by Proposition \ref{prop:reductionVertexCapacities}. This ensures that every edge $e$ has steadiness $\edgeSensitivity(e) \in [1, \tau]$ because $\edgeSensitivity(e) \leq \log_{n^{10/\tau}}(u(v)) = \frac{\log u(v)}{\log n^{10/\tau} } \leq \tau \cdot \frac{\log n^{10}}{\log n^{10} } = \tau$.

We start by giving an upper bound on the number of times that we enter the foreach-loop in \Cref{lne:innermostForeachMWU}.

\begin{claim}\label{clm:numberOfWhileLoopIterations}
The total number of edges $e = (x,y)$ that are looked at in the foreach-loop in \Cref{lne:innermostForeachMWU}, over the entire course of Algorithm \ref{alg:boundedCostFlowexcess}, is at most
\[
O(m \log (m) \zeta \beta \flowRouted/ \epsilon) = O(m \log^2(m) \cdot \beta \cdot n^{10/\tau}/ \epsilon^2).
\]
\end{claim}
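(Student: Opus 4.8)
The plan is to count the total work done in the innermost \textsc{ForEach} of \Cref{lne:innermostForeachMWU} by a charging argument against increases of the weight function $\weightFunctionEst$ and of $\costFunctionEst$. Each iteration of that inner loop processes one occurrence of an edge $e=(x,y)$ on the path $\shortestSTPath[i]$ with $\edgeSensitivity(e) \le \minCapIndexMWU[i] + \randomThreshold[i]$, and in doing so it adds $\flowRouted^{\edgeSensitivity(e)}$ to $\flowEst(e)$ and multiplies $\weightFunctionEst(y)$ by $\textsc{Exp}(\eps \flowRouted^{\edgeSensitivity(e)}/u(y))$. The key observation is that, by definition of $\minCapIndexMWU[i]$ in \Cref{lne:computeFlowCap}, every edge occurrence processed satisfies $\edgeSensitivity(e) \ge \minCapIndexMWU[i]$ except possibly the minimizing one; hence $\flowRouted^{\edgeSensitivity(e)} \ge \flowRouted^{\minCapIndexMWU[i]}$ along almost the whole path, and in particular the amount $\flowRouted^{\edgeSensitivity(e)}/u(y)$ added to the exponent of $\weightFunctionEst(y)$ is at least roughly $\flowRouted^{\edgeSensitivity(e)}/u(y) \ge 1/(\zeta\beta\deg(y)\,\flowRouted^{\minCapIndexMWU[i]-\edgeSensitivity(e)})$... more usefully, each processed occurrence contributes at least $\eps \flowRouted^{\edgeSensitivity(e)}/u(y)$ to $\log \weightFunctionEst(y)$, and by the steadiness cap in \Cref{lne:defineSteadiness} we have $\flowRouted^{\edgeSensitivity(e)} \ge \flowRouted^{-1}\cdot u(y)/(\zeta\beta\deg(y))$, so each such contribution is $\Omega(\eps/(\zeta\beta\deg(y)\flowRouted))$.

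First I would fix a vertex $y$ and bound the number of edge occurrences $(x,y)$ ever processed into its inflow. From \Cref{clm:capacityFeasible} and its proof, $\log \weightFunctionEst(y)$ starts at $\log(\initialDifferenceMWU/u(y)) = -\Theta(\log(m)/\eps)$ (using $\initialDifferenceMWU = m^{-1/\eps}$ and $u(y)\le m^5$) and ends below $\eps$, so the total increase of $\log\weightFunctionEst(y)$ over the whole algorithm is $O(\log(m)/\eps)$. Each processed occurrence of an edge into $y$ increases $\log\weightFunctionEst(y)$ by $\eps \flowRouted^{\edgeSensitivity(e)}/u(y) \ge \eps/(\zeta\beta\deg(y)\flowRouted)$ by the lower bound on $\flowRouted^{\edgeSensitivity(e)}$ from the steadiness definition. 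Dividing, the number of occurrences processed into $y$ is $O\big(\frac{\log(m)/\eps}{\eps/(\zeta\beta\deg(y)\flowRouted)}\big) = O(\log(m)\,\zeta\beta\,\flowRouted\,\deg(y)/\eps^2)$. Summing over all $y$ and using $\sum_y \deg(y) = O(m)$ gives the total count $O(m\log(m)\,\zeta\beta\,\flowRouted/\eps^2)$; substituting $\zeta = \Theta(\log n \cdot \log_{1+\eps}(\tfrac{1+\eps}{\initialDifferenceMWU})) = \Theta(\log^2(m)/\eps)$ and $\flowRouted = n^{10/\tau}$ yields the stated $O(m\log^2(m)\cdot\beta\cdot n^{10/\tau}/\eps^2)$ — modulo bookkeeping of which $\log$-powers appear, and I would be slightly more careful to match the exact exponents in the claim statement, which I expect to absorb into the $\Otil$.

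The main obstacle is handling the $\eps/(\zeta\beta\deg(y)\flowRouted)$ lower bound cleanly: the steadiness $\edgeSensitivity(e)$ is a \emph{floor} of a logarithm, so $\flowRouted^{\edgeSensitivity(e)}$ could be as small as a $1/\flowRouted$ fraction of $\min\{\overline{C}/(mc(y)), u(y)/\deg(y)\}/(\zeta\beta)$; this is exactly where the extra factor $\flowRouted$ in the bound comes from, and I need to make sure that the "minus one occurrence for the minimizer" subtlety (the one edge on $\shortestSTPath[i]$ achieving $\edgeSensitivity(e) = \minCapIndexMWU[i]$, which may have very small steadiness and hence tiny flow) does not break the per-vertex charging — but this is fine, since even the minimizing edge still contributes a strictly positive multiplicative increase to its endpoint's weight, and the bound $\flowRouted^{\edgeSensitivity(e)} \ge \flowRouted^{-1} u(y)/(\zeta\beta\deg(y))$ holds for \emph{every} edge regardless of whether it is the minimizer. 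A secondary check is that $\edgeSensitivity(e) \ge 1$ so that $\flowRouted^{\edgeSensitivity(e)} \ge 1$ and the floor is well-defined; this follows from $U \le m^5$ as noted in the paragraph preceding the claim, so it requires no extra argument.
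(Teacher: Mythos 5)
There is a genuine gap in your per-vertex charging step. You claim that for \emph{every} edge $e=(x,y)$ the steadiness definition in \Cref{lne:defineSteadiness} gives $\flowRouted^{\edgeSensitivity(e)} \ge \flowRouted^{-1}\cdot u(y)/(\zeta\beta\deg(y))$, and you then charge every processed occurrence against the $O(\log(m)/\eps)$ total increase of $\log\weightFunctionEst(y)$. But $\edgeSensitivity(e)$ is the floor of the logarithm of a \emph{minimum}, $\min\{\overline{C}/(m\,c(y)),\,u(y)/\deg(y)\}/(\zeta\beta)$. When the cost term $\overline{C}/(m\,c(y))$ is the minimizer (e.g.\ when $c(y)$ is large), $\flowRouted^{\edgeSensitivity(e)}$ can be arbitrarily smaller than $u(y)/(\zeta\beta\deg(y)\flowRouted)$, so each such occurrence contributes an arbitrarily small increment to $\log\weightFunctionEst(y)$ and your capacity-based charging does not bound the number of these occurrences at all. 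The paper's proof handles exactly this by a two-case analysis depending on which term attains the minimum: cost-dominated occurrences each add at least $\overline{C}/(m\zeta\beta\flowRouted)$ to $c(\flowEst)$ and are charged against the bound $c(\flowEst)=O(\overline{C}\log(m)/\eps)$ (from \Cref{lma:nearOptimalRealFlow} together with \Cref{clm:costDeviationIsBounded}), while capacity-dominated occurrences each add at least $u(y)/(\deg(y)\zeta\beta\flowRouted)$ to $\inflow_{\flowEst}(y)$ and are charged against $\inflow_{\flowEst}(y)=O(u(y)\log(m)/\eps)$ (from \Cref{clm:capacityFeasible}). Your opening sentence mentions charging against increases of $\costFunctionEst$ as well, but the argument you actually carry out never uses it; without the cost-side case the proof does not go through.

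A secondary, fixable inaccuracy: even in the capacity-dominated case, charging against $\log\weightFunctionEst(y)$ (whose increments are $\eps\cdot$flow$/u(y)$ and whose total range is $O(\log(m)/\eps)$) yields a per-vertex count of $O(\deg(y)\log(m)\zeta\beta\flowRouted/\eps^{2})$, an extra factor $1/\eps$ compared to the first expression $O(m\log(m)\zeta\beta\flowRouted/\eps)$ in the claim. Charging directly against the inflow $\inflow_{\flowEst}(y)\le u(y)\cdot O(\log(m)/\eps)$, as the paper does, avoids this loss.
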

\begin{proof}
We observe that for any edge $e = (x,v) \in E$, upon entering the foreach-loop, we add $\flowRouted^{\edgeSensitivity(e)}$ units of flow to $\flowEst(e)$. Recall that $\edgeSensitivity(e) = \left\lfloor \log_{\flowRouted} \left(\min \{\frac{\overline{C}}{m c(v)} , \frac{u(v)}{ \deg(v)} \} \right) /(\zeta\beta) \right\rfloor$. We distinguish two cases:
\begin{itemize}
    \item if $\edgeSensitivity(e) = \left\lfloor \log_{\flowRouted} \left(\frac{\overline{C}}{m c(v)} \right) /(\zeta\beta) \right\rfloor$: then upon adding $\flowRouted^{\edgeSensitivity(e)}$ units of flow to $\flowEst(e)$, we increase the cost of $c(\flowEst)$ by at least
    \begin{align*}
    c(v) \cdot \flowRouted^{\left\lfloor \log_{\flowRouted} \left(\frac{\overline{C} }{m c(v)} \right) /(\zeta\beta) \right\rfloor} 
    \geq c(v) \cdot \flowRouted^{\log_{\flowRouted} \left(\frac{\overline{C} }{m c(v)} \right) /(\zeta\beta) - 1} = c(v) \cdot \left(\frac{\overline{C} }{m c(v)} \right) /(\zeta\beta\flowRouted) = \frac{\overline{C} }{m \cdot \zeta\beta\flowRouted}.
    \end{align*}
    But since we have by \Cref{lma:nearOptimalRealFlow} in combination with \Cref{clm:costDeviationIsBounded} that $c(\flowEst) = O(\overline{C} \cdot \log m/\epsilon)$ and since the cost is monotonically increasing over time (because the algorithm never cancels flow), there are at most $O(m\log(m) \zeta \beta \flowRouted/\epsilon)$ such iterations.
    \item otherwise, we have $\edgeSensitivity(e) = \left\lfloor \log_{\flowRouted}(\frac{u(v)}{ \deg(v)} /(\zeta\beta))\right\rfloor$: but this implies that we increase the in-flow to $v$ by at least $\frac{u(v)}{ \deg(v)}/(\zeta\beta\flowRouted)$ (by the same argument as above). On the other hand, by \Cref{clm:capacityFeasible}, we have that $\flowEst_{scaled}$ is a capacity-feasible flow. Thus, we have for every vertex $v \in V$, $\inflow_{\flowEst}(v) \leq u(v) \cdot (1+10\epsilon)\log_{1+\epsilon}\left(\frac{1+\epsilon}{\initialDifferenceMWU}\right) = O(u(v) \cdot \log m/ \epsilon)$. Therefore, for any vertex $v$, there are at most $O(\deg(v) \cdot \log(m) \zeta \beta \flowRouted/ \epsilon)$ such iterations.
\end{itemize}
Thus, combining the two cases, we can bound the number of iterations by $O(m\log(m) \zeta \beta \flowRouted/ \epsilon)$ and plugging in the values for $\zeta$ and $\flowRouted$ gives the result.
\end{proof}

We can now establish the running time stated in \Cref{thm:mainMinCost2}.

\begin{claim}\label{clm:totalRunningTime}
The total running time of \Cref{alg:boundedCostFlowexcess} can be bound by
\[
 \Otil(m \beta \cdot n^{10/\tau}/ \epsilon^2) + \timeAugSSSP{m}{n}{m^{6/\epsilon}}{\tau}{\epsilon}{\beta}{\Delta}{\Delta'}
\]
for $\Delta, \Delta' = \Otil(m  \beta \cdot n^{10/\tau}/ \epsilon^2)$.
\end{claim}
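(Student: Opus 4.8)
The plan is to account for the total running time by splitting it into three pieces: the cost of initialization, the cost of all work done inside the while-loop of \Cref{lne:whileLoopMWU} except for queries to the path-reporting SSSP data structure, and the cost of the SSSP data structure itself (including its update and query costs, which are captured by $\timeAugSSSP{m}{n}{m^{6/\epsilon}}{\tau}{\epsilon}{\beta}{\Delta}{\Delta'}$). First I would observe that initialization (\Cref{lne:defineWeight,lne:defineSteadiness,lne:dataStructureInit}) takes $O(m)$ time plus the one-time cost of setting up the SSSP data structure, which is subsumed. The aspect ratio bound on the combined weight function $\weightFunctionCombined$ is $m^{6/\epsilon}$: the initial weights are $\initialDifferenceMWU/u(v) = m^{-1/\epsilon}/u(v)$ with $u(v) \in [1,m^5]$ (by \Cref{prop:reductionVertexCapacities}), and weights only increase by bounded multiplicative factors over the course of the algorithm (using \Cref{clm:capacityFeasible} and \Cref{clm:costDeviationIsBounded} to bound the final values as $\poly(m)$), so the ratio of largest to smallest weight stays at most $m^{6/\epsilon}$; similarly the rounded cost term $\lceil \costFunctionEst \rceil_{(1+\epsilon)} \cdot c(x)$ with $c(x) \in [1,m^5] \cup \{0\}$ contributes only polynomially. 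This justifies the argument $m^{6/\epsilon}$ passed to the SSSP time bound.

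Next I would bound the per-iteration and total work of the while-loop. Each iteration does: one call to obtain the approximate path $\shortestSTPath$ and the minimum steadiness $\minCapIndexMWU$ via a query (\Cref{lne:computeFlowCap}); one draw of $\randomThreshold$; and one pass over $\sensitivity_{\leq \minCapIndexMWU+\randomThreshold}(\shortestSTPath)$ (\Cref{lne:innermostForeachMWU}), performing $O(1)$ arithmetic per edge returned. The number of while-loop iterations $k$ is $O(m \log^2(m) \cdot \beta \cdot n^{10/\tau}/\epsilon^2)$ — this follows from the same case analysis as in \Cref{clm:numberOfWhileLoopIterations} but charging one unit per \emph{iteration} rather than per edge, since each iteration adds at least $\flowRouted^{\minCapIndexMWU[i]}$ flow to the minimum-steadiness vertex on the path, whose steadiness level is $\Omega(\log_{\flowRouted}(\min\{\overline{C}/(mc(y)), u(y)/\deg(y)\}/(\zeta\beta)))$, and the total flow/cost that can be accumulated is bounded by \Cref{clm:capacityFeasible} and the near-optimality bounds. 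Thus the non-SSSP, non-query work inside the loop is $\Otil(k) = \Otil(m\beta n^{10/\tau}/\epsilon^2)$, plus the total size of all query outputs. By \Cref{clm:numberOfWhileLoopIterations}, the total number of edges ever returned across all foreach-loops is $O(m\log^2(m)\cdot\beta\cdot n^{10/\tau}/\epsilon^2)$, so the total output size $\Delta'$ (plus one per query, at most $k$) is $\Otil(m\beta n^{10/\tau}/\epsilon^2)$. Similarly, the number of weight-increase updates $\Delta$ handed to the SSSP data structure is exactly the number of executions of \Cref{lne:increaseWeight} (and the analogous update from the cost function affecting $\weightFunctionCombined$), which is again bounded by the number of edges passing through the foreach-loop plus $O(k)$ for the $\costFunctionEst$ updates, hence $\Delta = \Otil(m\beta n^{10/\tau}/\epsilon^2)$.

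Putting the three pieces together gives the claimed bound $\Otil(m\beta n^{10/\tau}/\epsilon^2) + \timeAugSSSP{m}{n}{m^{6/\epsilon}}{\tau}{\epsilon}{\beta}{\Delta}{\Delta'}$ with $\Delta, \Delta' = \Otil(m\beta n^{10/\tau}/\epsilon^2)$. The main obstacle I anticipate is bounding $k$, the number of while-loop iterations, carefully: the crude bound $O(n^{20}\log^2 m/\epsilon)$ mentioned before \Cref{clm:flowDeviationIsSmallForSingleVertexAtSingleTimeStep} is too weak to give the stated running time, so one must extract the sharper bound from the potential-function argument of \Cref{lma:nearOptimalRealFlow} (the objective $\objFunction$ grows geometrically and is bounded below by $\initialDifferenceMWU m$ and above by $\poly(m)$) combined with the lower bound on flow added per iteration — essentially repeating the two-case analysis of \Cref{clm:numberOfWhileLoopIterations} at the granularity of iterations. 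A secondary subtlety is confirming that the arithmetic on exponentials and the heap operations inside the SSSP data structure are all captured by the $\Otil$ notation and do not introduce hidden $\poly(\log W) = \poly(1/\epsilon)$ factors beyond what is already absorbed, which I would handle by noting all numbers manipulated have $\poly(m, 1/\epsilon)$ bit-length.
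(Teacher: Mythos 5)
Your overall decomposition (constant-time initialization, loop work charged via \Cref{clm:numberOfWhileLoopIterations}, and the data-structure cost tracked only through $\Delta$ and $\Delta'$) matches the paper's proof, and your bound on the number of iterations $k$ is correct, though you make it harder than necessary: since $\sigma_{\le \minCapIndexMWU}(\shortestSTPath)\neq\emptyset$ and $\randomThreshold\ge 0$, every while-loop iteration executes the foreach-loop of \Cref{lne:innermostForeachMWU} at least once, so $k$ is already bounded by the total edge count of \Cref{clm:numberOfWhileLoopIterations} and no separate potential-function argument at iteration granularity is needed. (Also, locating $\minCapIndexMWU$ in \Cref{lne:computeFlowCap} takes up to $\tau$ empty queries per iteration rather than one, but this only multiplies your query count by $\tau=O(\log n)$ and is absorbed by the $\Otil$.)

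The genuine gap is in your accounting of $\Delta$. You charge the data-structure updates as the executions of \Cref{lne:increaseWeight} ``plus $O(k)$ for the $\costFunctionEst$ updates.'' But the weight handed to the data structure is $\weightFunctionCombined(x)=\weightFunctionEst(x)+\lceil\costFunctionEst\rceil_{(1+\epsilon)}\cdot c(x)$, so a \emph{single} increase of the rounded scalar $\lceil\costFunctionEst\rceil_{(1+\epsilon)}$ changes the weight of every vertex with $c(x)>0$, i.e.\ it generates up to $\Theta(m)$ weight-increase updates, not one. Since $\costFunctionEst$ itself increases in essentially every iteration (\Cref{lne:increaseVarPhi}), a per-iteration charge of $O(1)$ would, without further argument, allow $\Delta$ to be as large as $k\cdot m$, which destroys the claimed running time once plugged into the data-structure time bound. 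The missing ingredient is exactly the rounding to powers of $(1+\epsilon)$ in \Cref{lne:dataStructureInit}: because $\costFunctionEst$ is monotonically increasing and starts at $\initialDifferenceMWU/\overline{C}$, the rounded scalar changes only $O(\log_{1+\epsilon}(1/\initialDifferenceMWU))=O(\log m/\epsilon^{2})$ times over the whole run, and each such event costs at most $m$ updates, contributing $\Otil(m/\epsilon^{2})$ to $\Delta$; together with the $O(\Delta')$ updates coming from \Cref{lne:increaseWeight} this gives the stated $\Delta=\Otil(m\beta n^{10/\tau}/\epsilon^{2})$. Your final numbers coincide with the correct ones, but as written the justification for the cost-driven part of $\Delta$ does not hold and needs this rounding argument.
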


\begin{proof}
We start by observing that up to \Cref{lne:dataStructureInit}, the algorithm uses time $O(m)$. Henceforth, we do not account for the running time used by the data structure but rather only keep track of the number of updates $\Delta$ and the number of queries plus the size of the output of the query $\Delta'$.

When we enter the while-loop, we find the current approximate shortest path from $s$ to $t$ using the data structure and find the smallest steadiness class $\sensitivity_{\forLoopIndexMWU}(\shortestSTPath)$ that is non-empty. We note that we do not compute the path $\shortestSTPath$ explicitly but rather query 
\[
\sensitivity_{\leq 1}(\shortestSTPath), \sensitivity_{\leq 2}(\shortestSTPath), \dots, \sensitivity_{\leq \minCapIndexMWU}(\shortestSTPath)
\]
until we find the first class that is non-empty (and there always exists such a class). We then select a random threshold $\randomThreshold \geq 0$. 

We note that the foreach-loop starting in \Cref{lne:innermostForeachMWU} can then be implemented in time $|\sensitivity_{\leq \minCapIndexMWU + \randomThreshold}(\shortestSTPath)|$. Since steadiness classes are nesting, we have that $\sensitivity_{\leq \minCapIndexMWU}(\shortestSTPath) \subseteq \sensitivity_{\leq \minCapIndexMWU + \randomThreshold}(\shortestSTPath)$. Since every other operation in the while-loop iteration is a constant time operation, the overall running time for a single iteration of the while loop is at most $O(|\sensitivity_{\leq \minCapIndexMWU + \randomThreshold}(\shortestSTPath)| + \tau)$. (The additive $+ \tau$ comes from the fact that in Line \ref{lne:computeFlowCap} of \ref{alg:boundedCostFlowexcess}, the algorithm might go through at most $\tau$ steadiness values $\lambda$ before it find one with $\sensitivity_{\forLoopIndexMWU}(\shortestSTPath) \neq \emptyset$.) 

Using \Cref{clm:numberOfWhileLoopIterations}, we thus obtain that the total running time of the algorithm excluding the time spent by the data structure can be bound by
\[
 \Otil(m \tau\beta \cdot n^{10/\tau}/ \epsilon^2)
\]

We further observe that such a while-loop iteration adds at most $O(|\sensitivity_{\leq \minCapIndexMWU + \randomThreshold}(\shortestSTPath)| + \tau)$ to the query parameter $\Delta'$. Note that \Cref{clm:numberOfWhileLoopIterations} upper bounds the sum of $O(|\sensitivity_{\leq \minCapIndexMWU + \randomThreshold}(\shortestSTPath)|)$ over all foreach-loop iterations and thereby over all path-queries that return a non-empty set of edges. At the same time, \Cref{clm:numberOfWhileLoopIterations} is also a trivial bound on the number of while-loop iterations (since we always visit the foreach-loop in the while-loop at least once). Since each such while-loop iteration contributes at most $O(\tau)$ queries which return an empty set of edges, we can finally bound $\Delta'$ by $ \Otil(m \tau \beta \cdot n^{10/\tau}/ \epsilon^2))$.

We can now also bound $\Delta$, the number of updates to the weight function $\weightFunctionCombined(x) =  \weightFunctionEst(x) + \lceil \costFunctionEst  \rceil_{(1+\epsilon)}\cdot c(x)$ (as defined in \Cref{lne:dataStructureInit}). To this end, we observe that $\weightFunctionCombined(x)$ is updated either if $\weightFunctionEst(x)$ or if $\lceil \costFunctionEst  \rceil_{(1+\epsilon)}$ is increased. But the former updates can be upper bounded by $O(\Delta')$ since each such update results from a single edge in the query. For the number of updates caused by $\lceil \costFunctionEst  \rceil_{(1+\epsilon)}$, we observe that each increase of $\lceil \costFunctionEst  \rceil_{(1+\epsilon)}$ results in $m$ updates to $\weightFunctionCombined$. However, since we round $\costFunctionEst$ to powers of $(1+\epsilon)$, we can bound the total number of increases of $\lceil \costFunctionEst  \rceil_{(1+\epsilon)}$ by $O(\log_{1+\epsilon}(\initialDifferenceMWU)) = O(\log m/\epsilon)$. Combined, we obtain $\Delta' = \Otil(m \tau \beta \cdot n^{10/\tau}/ \epsilon^2)$.

For the claim, it remains to use the assumption that $\tau = O(\log n)$.
\end{proof}

\section{Near-capacity-fitted instance via Near-pseudo-optimal MBCF}
\label{sec:minCostFlowFinally}

We now build upon \Cref{thm:mainMinCost2} to obtain Near-capacity-fitted instances. We start by making the definition of such an instance formal. 

\begin{definition}[Edge-Split Transformation]\label{def:edgeSplitTrans}
Given a flow instance $G=(V,E,c,u)$, we let $G' = \textsc{Edge-Split}(G)$ denote the instance derived from $G$ by splitting every edge $e =(x,y)$ in $G$ into two edges $(x,v_e)$ and $(v_e,y)$ where $v_e$ is a new vertex added to $G'$. The capacity $u'(v_e)$ of each such $v_e \in V(G') \setminus V$ is set to $U$ (the max capacity of $G$), and its cost $c'(v_e)$ to $0$. For all $v \in V(G') \cap V$, we set $u'(v) = u(v)$ and $c'(v)=c(v)$.
\end{definition}

Here, we note that if $G'$ is derived from $G$ as proposed above, and $G$ was derived using \Cref{prop:reductionVertexCapacities}, then also $G'$ satisfies the properties in \Cref{prop:reductionVertexCapacities} as can be verified straight-forwardly (except that the number of edges and vertices increases by $m$).

\begin{definition}[Near-capacity-fitted instance]\label{def:capacityFittedInstance}
For any $0 < \epsilon < 1$, given graph $G = (V,E,u,c)$ and a cost budget $\overline{C}$. Let $G' = (V', E', u', c')$ be the graph defined by $G' = \textsc{Edge-Split}(G)$. Then, we say that a graph $G'' = (V',E',u'',c')$ is a $(1-\epsilon)$-\emph{capacity-fitted instance derived from $G$} if: 
\begin{enumerate}
    \item for every $v \in V$, $u''(v) \leq u(v)$, and
    \item we have for each $v \in V$, that $\sum_{x \in \neighborhood_{G''}(v)} u''(x) \leq 18 \cdot u(v)$, where $\neighborhood_{G''}(v) = \{x \in V' | (x,y) \in E'\}$ and \label{prop:capacityFittedPropLocal}
    \item we have $\sum_{x \in V''} u''(x) \cdot c'(x) \leq 18 \cdot \overline{C}$, and \label{prop:capacityFittedPropCost}
    \item $OPT_{G'', \overline{C}} \geq (1-\epsilon) \cdot OPT_{G, \overline{C}}$. \label{prop:capacityFittedPropNearOpt}
\end{enumerate}
\end{definition}

Intuitively, the first Property ensures that every flow in $G''$ is capacity-feasible in $G'$. At the same time Property \ref{prop:capacityFittedPropLocal} ensures that for every original vertex in $V$, the vertices in its neighborhood have capacity $\sim u(v)$. Recall that these neighbors in $G''$ are the vertices resulting from edge-splits of edges incident to $v$ in $G$. This property will later be helpful to argue not only about $\inflow_f(v)$ of some flow $f$ in $G$ but also about $\outflow_f(v)$ by using the guarantees of a $(1-\epsilon)$-pseudo-optimal flow on the neighborhood of $v$. Property \ref{prop:capacityFittedPropCost} ensures that any capacity-feasible flow $f$ in $G''$ will not have large cost (w.r.t $\overline{C}$). Thus, scaling such $f$ by $\epsilon$ will imply that it is cost-feasible even in $G'$. Finally, we ensure in Property \ref{prop:capacityFittedPropNearOpt} that $G''$ still contains a large valued feasible flow.

We can now formally state the main result of this section.

\begin{restatable}{lemma}{NearPseudoOptMBCFforKnownTarget}[Near-capacity-fitted instance via Near-pseudo-optimal MBCF] \label{lma:NearPseudoOptMBCFforKnownTarget}
Given any $0 < \epsilon < 64$, given a graph $G=(V,E,c,u)$, a dedicated source $s$ and sink $t$, a cost bound $\overline{C}$. Additionally, let there be an algorithm $\mathcal{A}$ that computes a $(1-\epsilon)$-pseudo-optimal flow $\hat{g}$ in total update time $\mathcal{T}_{PseudoMBCF}(m,n,\epsilon, \overline{C})$.

Then, there exists an algorithm $\mathcal{B}$ that computes a $(1-\epsilon)$-capacity-fitted instance $G''$ in time 
\[
\Otil(m + \mathcal{T}_{PseudoMBCF}(m,n,\Theta(\epsilon/\log n), \overline{C})) .
\]
\end{restatable}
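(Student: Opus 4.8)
The plan is to repeatedly invoke $\mathcal{A}$ and shrink vertex capacities towards what the returned pseudo-flow actually uses, so that after $O(\log n)$ rounds every vertex is ``saturated'' and the three fit-properties of \Cref{def:capacityFittedInstance} can be read off from the final flow. First I would form $G' = \textsc{Edge-Split}(G)$; note $OPT_{G',\overline{C}} = OPT_{G,\overline{C}}$ because in any vertex-feasible flow every edge carries at most $U$ units (flow on $(x,y)$ in the $x\to y$ direction is at most $\outflow_f(x)=\inflow_f(x)\le u(x)\le U$ for $x\neq s$, and at most $\inflow_f(y)\le U$ for $x=s$, $y\neq t$), so the split vertices of capacity $U$ never constrain a flow, and edge-split preserves the Prop.~\ref{prop:reductionVertexCapacities} normalization up to $O(1)$ blow-up in size. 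Set $\epsilon' = \Theta(\epsilon/\log n)$, $G^{(0)}=G'$, $u^{(0)}=u'$, and a constant $c=4$. For $i=0,\dots,i^{*}-1$ with a fixed $i^{*}=\Theta(\log n)$: run $\mathcal{A}$ on $(G^{(i)},\overline{C},\epsilon')$ to get a $(1-\epsilon')$-pseudo-optimal flow $\hat g^{(i)}$, and set $u^{(i+1)}(x)=u^{(i)}(x)/2$ at every vertex $x$ (original or split) that is \emph{unsaturated}, meaning $\inflow_{\hat g^{(i)}}(x)<u^{(i)}(x)/c$, leaving $u^{(i+1)}(x)=u^{(i)}(x)$ otherwise; vertices are never pushed below a tiny per-vertex floor chosen so that floored vertices contribute negligibly to the quantities $\Theta_i,\Theta^v_i$ below. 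Output $G'' = G^{(i^{*})}$ (no further call to $\mathcal{A}$ is needed).

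The correctness rests on three recursions. \textbf{(i) Near-optimality per round.} By \Cref{def:feasibleFlow} there is a genuine $(1-\epsilon')$-optimal flow $g^{(i)}$ with $|\inflow_{g^{(i)}}(x)-\inflow_{\hat g^{(i)}}(x)|\le \epsilon' u^{(i)}(x)$ for all $x$. At a halved vertex, $\inflow_{g^{(i)}}(x)< u^{(i)}(x)/c+\epsilon' u^{(i)}(x)\le u^{(i)}(x)/2=u^{(i+1)}(x)$ once $\epsilon'\le \tfrac12-\tfrac1c$, and at an unchanged vertex $\inflow_{g^{(i)}}(x)\le u^{(i)}(x)=u^{(i+1)}(x)$; since $g^{(i)}$ obeys conservation and has cost $\le\overline{C}$, it is feasible in $G^{(i+1)}$, so $OPT_{G^{(i+1)},\overline{C}}\ge(1-\epsilon')OPT_{G^{(i)},\overline{C}}$, and inductively $OPT_{G'',\overline{C}}\ge(1-\epsilon')^{i^{*}}OPT_{G,\overline{C}}\ge(1-\epsilon)OPT_{G,\overline{C}}$ for $\epsilon'=\Theta(\epsilon/\log n)$, which is Property~\ref{prop:capacityFittedPropNearOpt}. \textbf{(ii) Cost shrinks.} Let $\Theta_i=\sum_{v\in V}u^{(i)}(v)c'(v)$. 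Saturated original vertices satisfy $u^{(i)}(v)\le c\,\inflow_{\hat g^{(i)}}(v)$, so they contribute at most $c\cdot c(\hat g^{(i)})\le c\overline{C}$ (using that $\hat g^{(i)}$ is a pseudo-flow, hence cost-feasible), while unsaturated ones are halved; hence $\Theta_{i+1}\le c\overline{C}+\tfrac12\Theta_i+o(\overline{C})$, which drives $\Theta_i$ below $18\overline{C}$ after $O(\log(\Theta_0/\overline{C}))=O(\log n)$ rounds, giving Property~\ref{prop:capacityFittedPropCost}. \textbf{(iii) Local sums shrink.} Fix $v\in V$ and set $\Theta^v_i=\sum_{e\ni v}u^{(i)}(v_e)$. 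For a cancel-free $g^{(i)}$, conservation at each split vertex gives $\sum_{e\ni v}\inflow_{g^{(i)}}(v_e)=\outflow_{g^{(i)}}(v)+\inflow_{g^{(i)}}(v)=2\inflow_{g^{(i)}}(v)\le 2u(v)$, so by the coupling $\sum_{e\ni v}\inflow_{\hat g^{(i)}}(v_e)\le 2u(v)+\epsilon'\Theta^v_i$; the saturated split neighbours of $v$ therefore contribute at most $c(2u(v)+\epsilon'\Theta^v_i)$ and the rest are halved, so $\Theta^v_{i+1}\le(\tfrac12+c\epsilon')\Theta^v_i+2cu(v)$, whose fixed point is below $18u(v)$ once $c=4$ and $\epsilon'$ is a small enough constant. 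This bound is reached for all $v$ simultaneously after $O(\log(\deg(v)U/u(v)))=O(\log n)$ rounds, giving Property~\ref{prop:capacityFittedPropLocal}; the remaining property $u''(v)\le u(v)$ holds because capacities only decrease.

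For the running time, the algorithm makes $i^{*}=O(\log n)$ calls to $\mathcal{A}$, each on the $O(m)$-edge graph $G^{(i)}$ with accuracy $\Theta(\epsilon/\log n)$, plus $O(m)$ time per round to identify unsaturated vertices and rescale, plus $O(m\log n)$ for the edge-split, for a total of $\Otil(m+\mathcal{T}_{PseudoMBCF}(m,n,\Theta(\epsilon/\log n),\overline{C}))$ as claimed (the logarithmic number of calls is absorbed by $\Otil$, and one uses monotonicity of $\mathcal{T}_{PseudoMBCF}$ to replace the $O(m)$-size instance by an $m$-size one in the bound).

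I expect the main obstacle to be ingredient (iii): forcing Properties~\ref{prop:capacityFittedPropLocal}, \ref{prop:capacityFittedPropCost} and \ref{prop:capacityFittedPropNearOpt} to hold \emph{simultaneously} within a uniformly bounded $O(\log n)$ rounds requires a careful joint choice of $c$, $\epsilon'$, and the per-vertex capacity floors, and the key inequality $\sum_{e\ni v}\inflow_{\hat g^{(i)}}(v_e)\le 2u(v)+\epsilon'\Theta^v_i$ must be justified via the analysis-only coupled flow $g^{(i)}$ through the edge-split gadget --- in particular one must argue that shrinking a capacity at one vertex, which can reroute flow and temporarily unsaturate other vertices, still lets both recursions contract. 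A secondary nuisance is the handful of degenerate edges incident to $s$ or $t$ (e.g.\ a direct $s$--$t$ edge), whose split vertices must be left uncapacitated; these never enter Properties~\ref{prop:capacityFittedPropLocal} or~\ref{prop:capacityFittedPropCost} since $u(s)=u(t)=\infty$ and $c'(v_e)=0$.
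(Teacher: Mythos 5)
Your proposal is correct and follows essentially the same route as the paper's own proof: edge-split the instance, then repeatedly invoke $\mathcal{A}$ with accuracy $\Theta(\epsilon/\log n)$ over $O(\log n)$ rounds, halving the capacities of under-utilized vertices, and establish the properties of \Cref{def:capacityFittedInstance} via exactly the paper's three recursions (per-round near-optimality through the coupled flow as in \Cref{clm:OptIsNotReducedByMuch}, geometric contraction of $\sum_x u(x)\cdot c'(x)$ as in \Cref{clm:reduceTotalCapacityTimeCost}, and contraction of the local neighborhood sums via conservation at the split vertices as in \Cref{clm:reduceLocalCapacityForPlainVertex}). Your minor deviations --- saturation threshold $1/4$ instead of $1/2$ (which spares you the paper's $(1-4\epsilon')$ rescaling step), bounding the cost directly by the pseudo-flow's own cost-feasibility rather than through the coupling, and dropping the paper's initial cap at $2\overline{U}$ (which only furnishes a cleaner base case; the polynomial capacity and cost bounds from \Cref{prop:reductionVertexCapacities} suffice, as you implicitly use) --- are immaterial.
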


\begin{algorithm}
\caption{$\textsc{NearFeasibleCostFeasibleMBCF}(G, s,t, \epsilon, \overline{C}, \overline{U})$}
\label{alg:boundedCostFlowWithKnownFlowValue}
\KwIn{A graph $G=(V,E,c,u)$, two dedicated vertices $s, t \in V$, a cost function $c$ and a capacity function $u$ (both mapping edges in $E$ to positive reals), an approximation parameter $0 < \epsilon < 1/64$, a cost budget $\overline{C} \in \mathbb{R}^+$ and a real $\overline{U} \in [OPT_{G, \overline{C}}/2, OPT_{G, \overline{C}}]$. Also, an algorithm $\mathcal{A}$ that computes a $(1-\epsilon)$-pseudo-optimal flow.}
$(V',E',c',u') \gets \textsc{Edge-Split}(G)$.\;
$j_{max} \gets \lfloor \log_{(2-16\epsilon)}(2m^{11}/\epsilon) \rfloor$.\;
$\epsilon' = \frac{\epsilon}{20 \cdot j_{max}}$\;
\lForEach{$v \in V$}{$u_0(v) \gets \min\{ u'(v), 2\overline{U}\}$.}
\For(\label{lne:forLoopKnownObjeMBCF}){$j = 0, 1, \dots, j_{max}$}{
    $g_j \gets \mathcal{A}((V', E' ,c' ,u_j), s,t, \epsilon', \overline{C})$.\;
    $\forall v \in V', \; u_{j+1}(v) \gets \begin{cases} u_{j}(v)/2 & \text{if } \inflow_{g_j}(v) \leq u_{j}(v) / 2\\ u_{j}(v) & \text{otherwise}\end{cases}$.\;
}
\BlankLine
\tcc{Return capacity-fitted instance.}
\Return $G'' = (V', E', c, u_{j_{max}+1})$ \label{lne:returnFlowForKnownObjective} 
\end{algorithm}

For simplicity, we assume for the rest of the section that we have a $\frac{1}{2}$-approximation $\overline{U}$ of the value of the optimal MBCF solution, i.e. $\overline{U} \in [OPT_{G, \overline{C}}/2, OPT_{G, \overline{C}}]$. This guess can later be removed by guessing values for $\overline{U}$ at the cost of a multiplicative $O(\log n)$ factor (recall that $U \leq m^5$ by \Cref{prop:reductionVertexCapacities}).

We present $\mathcal{B}$ in \Cref{alg:boundedCostFlowWithKnownFlowValue}. The main idea behind the algorithm is to apply a technique that we call \emph{capacity fitting}. Loosely speaking, we halve the capacity of every vertex for which the in-flow given by $\mathcal{A}$ is smaller-equal to half its capacity. Thus, every iteration, we roughly half the capacity of all vertices until the flow has to use a constant fraction of the capacity of each vertex.

We now prove simple claims which will then allow us to conclude \Cref{lma:NearPseudoOptMBCFforKnownTarget} straight-forwardly. 

We start by the most important claim, that right away shows that even filling all edges in the graph $G''$ with flow will not induce cost far beyond the cost budget $\overline{C}$.

\begin{claim}\label{clm:reduceTotalCapacityTimeCost}
For any $0 \leq j \leq j_{max}$,
\[
    \sum_{v \in V'} c'(v) \cdot u_j(v) \leq \frac{2m^{11}}{(2-16\epsilon)^j} \cdot 2 \overline{C} / \epsilon + 10 \overline{C}.
\]
In particular, we have $\sum_{v \in V'} c'(v) \cdot u_{j_{max}}(v) \leq 18 \overline{C}$.
\end{claim}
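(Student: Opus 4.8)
The plan is to prove Claim~\ref{clm:reduceTotalCapacityTimeCost} by induction on $j$, tracking the total cost-weighted capacity $\Phi_j \defeq \sum_{v \in V'} c'(v) \cdot u_j(v)$ as the algorithm halves capacities. First I would establish the base case $j=0$: by definition $u_0(v) = \min\{u'(v), 2\overline{U}\}$, so $\Phi_0 \leq \sum_{v \in V'} c'(v) u'(v)$. Since $G'$ comes from $\textsc{Edge-Split}(G)$ where $G$ in turn satisfies the properties of \Cref{prop:reductionVertexCapacities}, every vertex has $c'(v) \in [1, m^5] \cup \{0\}$ and $u'(v) \in [1, m^5]$ (the split vertices $v_e$ have $c'(v_e)=0$, contributing nothing), and there are at most $O(m)$ vertices, so $\Phi_0 \leq m^5 \cdot m^5 \cdot O(m) = O(m^{11})$. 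I would check that this is at most $\frac{2m^{11}}{(2-16\epsilon)^0}\cdot 2\overline{C}/\epsilon + 10\overline{C}$, which holds since $\overline{C} \geq 1$ and $2m^{11}\cdot 2/\epsilon \geq O(m^{11})$ with room to spare; this handles the base case.

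For the inductive step $j-1 \mapsto j$, the key idea is to split $\Phi_{j-1}$ into contributions from vertices whose capacity gets halved (those with $\inflow_{g_{j-1}}(v) \leq u_{j-1}(v)/2$) and those that keep their capacity (those with $\inflow_{g_{j-1}}(v) > u_{j-1}(v)/2$). For the halved vertices, their contribution to $\Phi_j$ is exactly half of their contribution to $\Phi_{j-1}$. For the unchanged vertices, since $\inflow_{g_{j-1}}(v) > u_{j-1}(v)/2$, we can charge $c'(v) u_{j-1}(v) < 2 c'(v) \inflow_{g_{j-1}}(v)$, and summing over all such vertices gives at most $2 c'(g_{j-1})$. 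Now I would invoke the guarantee on $g_{j-1} = \mathcal{A}((V',E',c',u_{j-1}), s, t, \epsilon', \overline{C})$: since $\mathcal{A}$ returns a $(1-\epsilon')$-pseudo-optimal flow, it is in particular cost-feasible, i.e. $c(g_{j-1}) \leq \overline{C}$ (using that pseudo-flows are cost-feasible by the definition in \Cref{sec:prelimFlow}), so the unchanged vertices contribute at most $2\overline{C}$. Combining: $\Phi_j \leq \tfrac12\left(\Phi_{j-1} - 2\overline{C}\right) + \tfrac12\cdot 2\overline{C} + 2\overline{C}$ — wait, I need to be careful here: actually $\Phi_j \leq \tfrac12 \Phi_{j-1}^{\text{halved}} + \Phi_{j-1}^{\text{unchanged}}$ where the second term is $\leq 2\overline{C}$, and $\Phi_{j-1}^{\text{halved}} \leq \Phi_{j-1}$, giving the cleaner bound $\Phi_j \leq \tfrac12 \Phi_{j-1} + 2\overline{C}$. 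Plugging in the inductive hypothesis $\Phi_{j-1} \leq \frac{2m^{11}}{(2-16\epsilon)^{j-1}}\cdot 2\overline{C}/\epsilon + 10\overline{C}$ yields $\Phi_j \leq \frac{m^{11}}{(2-16\epsilon)^{j-1}}\cdot 2\overline{C}/\epsilon + 5\overline{C} + 2\overline{C} \leq \frac{2m^{11}}{(2-16\epsilon)^{j}}\cdot 2\overline{C}/\epsilon + 10\overline{C}$, where the last step uses $\frac{1}{2} \leq \frac{2}{2-16\epsilon}$, i.e. $2 - 16\epsilon \leq 4$, which is trivially true.

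Hmm, but the factor $(2-16\epsilon)$ rather than $2$ suggests the actual argument is slightly more delicate — probably the algorithm $\mathcal{A}$ with parameter $\epsilon'$ only guarantees $\inflow_{g}(v)$ is within an \emph{additive} $\epsilon' u(v)$ of the in-flow of some near-optimal \emph{real} flow (per \Cref{def:feasibleFlow}), so when we halve we are not quite halving and the "kept" vertices satisfy a slightly weaker inequality like $\inflow_{g_{j-1}}(v) > u_{j-1}(v)/2 - \epsilon' u_{j-1}(v)$. I would therefore redo the charging accounting for the $\epsilon'$ slack, which is where the $16\epsilon$ correction comes from (with $\epsilon' = \epsilon/(20 j_{max})$ small enough that the per-iteration losses telescope). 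The main obstacle is getting this slack bookkeeping exactly right across all $j_{max}$ iterations, and verifying the "In particular" conclusion: at $j = j_{max} = \lfloor \log_{(2-16\epsilon)}(2m^{11}/\epsilon)\rfloor$ we have $(2-16\epsilon)^{j_{max}} \geq 2m^{11}/\epsilon / (2-16\epsilon) \geq m^{11}/\epsilon$ (for $\epsilon < 1/8$), so $\frac{2m^{11}}{(2-16\epsilon)^{j_{max}}}\cdot 2\overline{C}/\epsilon \leq 4\overline{C}$ — wait, that gives $\leq 4\overline{C} + 10\overline{C} = 14\overline{C} \leq 18\overline{C}$; I would tighten the $j_{max}$ bound or the constants slightly to land exactly at $18\overline{C}$, which is routine once the inductive bound is in place.
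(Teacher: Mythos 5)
Your main argument is correct and has the same skeleton as the paper's proof: induction on $j$ with the same base case, the same split into vertices whose capacity is halved versus kept, and the same charging of the kept vertices (those with $\inflow_{g_{j-1}}(v)>u_{j-1}(v)/2$) to twice the cost of $g_{j-1}$. The one step where you genuinely diverge is the bound on $c'(g_{j-1})$. You invoke cost-feasibility of the pseudo-flow itself, $c'(g_{j-1})\le\overline{C}$; this is legitimate under the paper's literal definitions (a pseudo-flow is defined in \Cref{sec:prelimFlow} to be capacity- \emph{and} cost-feasible, and the concrete $\mathcal{A}$ from \Cref{thm:mainMinCost2} delivers it via \Cref{clm:capacityFeasible}), and it yields the clean recursion $\Phi_j\le\tfrac12\Phi_{j-1}+2\overline{C}$ for $\Phi_j=\sum_{v\in V'}c'(v)u_j(v)$, which is even stronger than the claimed bound. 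The paper deliberately does \emph{not} use cost-feasibility of $g_{j-1}$: it only uses the two coupling properties of \Cref{def:feasibleFlow}, bounding $c'(g_j)\le c'(f_j)+\epsilon'\sum_{v}c'(v)u_j(v)\le\overline{C}+\epsilon\,\Phi_j$ via the cost-feasible real flow $f_j$, so the kept vertices contribute $2\overline{C}$ \emph{plus} an $O(\epsilon)$-fraction of the current potential -- and that is the sole source of the $(2-16\epsilon)$ decay factor, making the reduction robust even if the estimator's own cost only approximately respects the budget. Your hedging paragraph misdiagnoses this mechanism: the halving test in \Cref{alg:boundedCostFlowWithKnownFlowValue} is stated in terms of $g_j$ itself, so kept vertices satisfy $\inflow_{g_j}(v)>u_j(v)/2$ exactly and no $\epsilon'$ slack enters the threshold; the slack sits entirely in the cost bound on $g_j$, so no "redo of the charging" is needed under your assumption, and under the paper's weaker assumption the fix is the one just described. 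Finally, no tightening is needed for the "in particular" clause: your own computation already gives at most $4\overline{C}+10\overline{C}=14\overline{C}\le 18\overline{C}$.
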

\begin{proof}
We prove the claim by induction. For the base case $j = 0$, we observe that every vertex $v \in V$ has cost $c'(v)$ at most $m^5$ and $u_0(v) = u'(v) \leq m^5$ (see \Cref{prop:reductionVertexCapacities}). Since there are only $m+n \leq 2m$ vertices in $G'$, we can therefore deduce $\sum_{v \in V'} c'(v) \cdot u_0(v) \leq 2m \cdot m^{10}$ and we finally use that $\overline{C} \geq 1$.

Let us now prove the inductive step $j \mapsto j + 1$ for $j \geq 0$: We observe that by the induction hypothesis, we have that:
\begin{align}\label{lne:eqUpperBoundUj}
 \sum_{e \in V'} c'(v) \cdot u_j(v) \leq \frac{2m^{11}}{(2-16\epsilon)^j} \cdot 2 \overline{C} / \epsilon  +  10 \overline{C}.
\end{align}
We recall that in the $j^{th}$ iteration of the for-loop starting in \Cref{lne:forLoopKnownObjeMBCF}, we invoke algorithm $\mathcal{A}$ to define the function $u_{j+1}$ based on the near-pseudo-optimal-flow $g_j$. We observe that by assumption on $\mathcal{A}$ and \Cref{def:feasibleFlow}, there is a near-optimal flow $\flow[j]$ such that $|\inflow_{g_j}(v) - \inflow_{\flow[j]}(v)| \leq \epsilon' \cdot u_j(v)$ for all vertices $v \in V'$, and $c'(\flow[j]) \leq \overline{C}$ for the given instance.

But this implies that 
\begin{equation}
\label{eq:cost-gprime}
c'(g_j) \leq \epsilon \cdot \sum_{v \in V'} c'(v) \cdot u_j(v) + \overline{C} \leq \frac{2m^{11}}{(2-16\epsilon)^j} \cdot 2 \overline{C} + (1+ 10 \epsilon) \overline{C}.
\end{equation}
To avoid clutter, we define $T = \frac{2m^{11}}{(2-16\epsilon)^j} \cdot 2 \overline{C} +  (1+ 10 \epsilon) \overline{C}$ for further use. We then note that the capacity $u_{j+1}(v)$ of every vertex $v$ becomes $u_j(v) /2$ if the vertex has inflow $\inflow_{g_j}(v)$ less than half of its capacity $u_j(v)$. However, by the upper bound on the cost of $g_j$, we have that the vertices that contain greater-equal to half of their capacity in flow satisfy
\[
     \sum_{v \in V, \; \inflow_{g_j}(v) \geq u_j(v)/2} c'(v) \cdot u_j(v) \leq 2 T.
\]
This inequality follows from the combination of two facts. The first is that the LHS of the inequality is at most $2c(g')$, because the LHS only considers vertices $v \in V$ through which $g_j$ sends at least $u_j(v)/2$ flow. The second fact is that $c(g') \leq T$, as shown in Equation \ref{eq:cost-gprime}.

Since for the rest of the vertices, the capacity is halved in $u_{j+1}$, we have
\begin{align}
    \sum_{v \in V'} c'(v) \cdot u_{j+1}(v)
    &\leq 2T + \frac{\sum_{v \in V'} c'(v) \cdot u_{j}(v)}{2}\label{lne:eqBeforePuttingInRealValues} \\
    &\leq \frac{4m^{11}}{(2-16\epsilon)^j} \cdot 2 \overline{C} + (2+20\epsilon)\overline{C} +  \frac{2m^{11}}{(2-16\epsilon)^j \cdot 2} \cdot 2 \overline{C} / \epsilon + 5 \overline{C} \label{lne:eqPuttingInRealValues}\\
    &= (1 + 4\epsilon) \cdot  \frac{2m^{11}}{(2-16\epsilon)^j \cdot 2} \cdot 2 \overline{C} / \epsilon + (7 + 20\epsilon)\overline{C}\label{lne:eqBeforeUsingeXForError}\\
    &\leq \frac{2m^{11}}{(2-16\epsilon)^{j} \cdot 2 \cdot (1 - 8\epsilon)} \cdot 2 \overline{C} / \epsilon + (7 + 20\epsilon)\overline{C}\label{lne:eqUsingeXForError}\\
    &< \frac{2m^{11}}{(2-16\epsilon)^{j+1}} \cdot 2 \overline{C} / \epsilon + 10 \overline{C}
\end{align}
where we use \Cref{lne:eqUpperBoundUj} and the definition of $T$ to get $($\ref{lne:eqBeforePuttingInRealValues}$) \implies ($\ref{lne:eqPuttingInRealValues}$)$, then rearrange terms and use $1+x \leq e^x \leq 1+x+x^2$ for $x \leq 1$ to obtain $($\ref{lne:eqBeforeUsingeXForError}$) \implies ($\ref{lne:eqUsingeXForError}$)$. In the final inequality, we use our assumption $\epsilon < 1/32$.
\end{proof}

Using the same proof template, it is not hard to establish the following claim whose proof is deferred to  \Cref{sec:actualProofOfClmreduceLocalCapacityForPlainVertex}.

\begin{restatable}{claim}{clmReduceLocalCapacityForPlainVertex}
\label{clm:reduceLocalCapacityForPlainVertex}
For every vertex $v \in V$, any $0 \leq j \leq j_{max}$,
\[
    \sum_{x \in \neighborhood_{G'}(v)} u_j(x) \leq \frac{m}{(2-16\epsilon)^j} \cdot 2\overline{U} / \epsilon + 10 u(v).
\]
In particular, we have $\sum_{x \in \neighborhood_{G'}(v)} u_{j_{max}}(x) \leq 18 \cdot u(v)$.
\end{restatable}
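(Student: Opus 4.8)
\textbf{Proof plan for Claim~\ref{clm:reduceLocalCapacityForPlainVertex}.}
The plan is to mirror exactly the induction used in the proof of Claim~\ref{clm:reduceTotalCapacityTimeCost}, but replacing the global cost potential $\sum_{v\in V'} c'(v)u_j(v)$ with the \emph{local} capacity potential $\Phi_j(v)\defeq\sum_{x\in\neighborhood_{G'}(v)} u_j(x)$ for a fixed original vertex $v\in V$. The key observation making this work is that, in the edge-split graph $G'$, the neighbors of $v$ are exactly the split-vertices $v_e$ for edges $e$ incident to $v$ in $G$; each such $v_e$ has at most one unit of flow of $g_j$ attributed to it "on the $v$ side" in the sense that $\inflow_{g_j}(v_e)$ contributes to $\inflow_{g_j}(v)$ along the edge $(v_e,v)$. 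Concretely, since $\flow[j]$ obeys flow conservation and is capacity-feasible with $\inflow_{\flow[j]}(v)\le u(v)$, and since $|\inflow_{g_j}(x)-\inflow_{\flow[j]}(x)|\le\epsilon' u_j(x)$ at every $x$, summing over the at most $\deg_G(v)$ split-neighbors $x$ of $v$ and using that the flow $g_j$ through these neighbors towards $v$ is bounded (up to the additive $\epsilon'$-errors) by $\inflow_{\flow[j]}(v)\le u(v)$, we get that the total capacity of the "heavy" neighbors (those $x$ with $\inflow_{g_j}(x)\ge u_j(x)/2$) is $O(u(v) + \epsilon'\Phi_j(v))$, analogously to the $\le 2T$ bound in Claim~\ref{clm:reduceTotalCapacityTimeCost}.

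First I would set up the base case $j=0$: every split-vertex $x$ has $u_0(x)\le u'(x)\le m^5$ by Proposition~\ref{prop:reductionVertexCapacities}, and $|\neighborhood_{G'}(v)|\le \deg_G(v)\le m$, so $\Phi_0(v)\le m\cdot m^5\le \frac{m}{(2-16\epsilon)^0}\cdot 2\overline{U}/\epsilon$ using $\overline{U}\ge 1$. Next, for the inductive step $j\mapsto j+1$, I would: (i) invoke $\mathcal{A}$'s guarantee (Definition~\ref{def:feasibleFlow}) to obtain $\flow[j]$ with $|\inflow_{g_j}(x)-\inflow_{\flow[j]}(x)|\le \epsilon' u_j(x)$ for all $x$; (ii) bound $\sum_{x\in\neighborhood_{G'}(v)}\inflow_{g_j}(x) \le \epsilon'\Phi_j(v) + \sum_{x}\inflow_{\flow[j]}(x)$, and bound the latter sum by $O(u(v))$ — here the clean way is to note that $\flow[j]$ feasible means $\inflow_{\flow[j]}(v)\le u(v)$, and each split-neighbor $x=v_e$ carries on its edge toward $v$ at most its full inflow, while by conservation at $x$ that inflow equals its outflow, half of which may go toward $v$; the upshot is $\sum_x \inflow_{\flow[j]}(x) = O(u(v))$ up to constant factors after accounting for both split-endpoints of each edge; (iii) conclude $\sum_{x:\ \inflow_{g_j}(x)\ge u_j(x)/2} u_j(x) \le 2\sum_x \inflow_{g_j}(x) \le 2\epsilon'\Phi_j(v) + O(u(v))$, the analogue of $2T$; (iv) for all other neighbors the capacity is halved in $u_{j+1}$, so $\Phi_{j+1}(v)\le \tfrac12\Phi_j(v) + 2\epsilon'\Phi_j(v) + O(u(v))$; (v) plug in the induction hypothesis and the choice $\epsilon'=\epsilon/(20 j_{max})$ (so that the $2\epsilon'\Phi_j(v)$ term is absorbed into the $(1-8\epsilon)$ slack exactly as in Claim~\ref{clm:reduceTotalCapacityTimeCost}) to get $\Phi_{j+1}(v)\le \frac{m}{(2-16\epsilon)^{j+1}}\cdot 2\overline{U}/\epsilon + 10 u(v)$. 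Finally, the "in particular" part follows by evaluating at $j=j_{max}=\lfloor\log_{2-16\epsilon}(2m^{11}/\epsilon)\rfloor$: since $\overline{U}\le m^5\le m^{11}$ (and $\overline{U}\le OPT$), the first term at $j=j_{max}$ is at most $\frac{m\cdot 2\overline{U}/\epsilon}{2m^{11}/\epsilon} = \overline{U}/m^{10}\le u(v)$ (using $u(v)\ge 1$), giving $\Phi_{j_{max}}(v)\le u(v)+10u(v)\le 18u(v)$; the bound for $u_{j_{max}}$ versus $u_{j_{max}+1}$ is identical since capacities only decrease, and one extra step at the end does not hurt — indeed one may simply state the bound for $j$ up to $j_{max}$ and note $u_{j_{max}+1}\le u_{j_{max}}$.

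The main obstacle I anticipate is step (ii): carefully arguing that $\sum_{x\in\neighborhood_{G'}(v)}\inflow_{\flow[j]}(x) = O(u(v))$. Unlike the global cost argument — where $\sum_v c'(v)u_j(v)$ directly upper-bounds $2c'(g_j)$ and $c'(g_j)\le \overline C$ up to the $\epsilon$-slack — here the "charging" is to $u(v)$ through the local structure of the edge-split gadget, and one must be slightly careful about the orientation of flow on the anti-parallel edges $(x,v)$ and $(v,x)$ (recall Section~\ref{sec:prelimFlow} does not cancel flow on anti-parallel edges). The cleanest route is probably to observe that for each edge $e=(v,y)$ of $G$, the split vertex $v_e$ satisfies $\inflow_{\flow[j]}(v_e)\le \inflow_{\flow[j]}(v)+\inflow_{\flow[j]}(y)$ is too lossy; instead use conservation at $v_e$ (degree-$2$ vertex) to get $\inflow_{\flow[j]}(v_e)=\outflow_{\flow[j]}(v_e)$ and that the flow entering $v$ from all its split-neighbors is exactly $\inflow_{\flow[j]}(v)\le u(v)$, while the flow leaving $v$ to all split-neighbors is $\outflow_{\flow[j]}(v)$, which for $v\notin\{s,t\}$ equals $\inflow_{\flow[j]}(v)\le u(v)$ by conservation (and for $v\in\{s,t\}$ the statement is anyway vacuous or handled by the infinite/zero-capacity conventions of Proposition~\ref{prop:reductionVertexCapacities}). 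Hence $\sum_{x\in\neighborhood_{G'}(v)}\inflow_{\flow[j]}(x)\le \inflow_{\flow[j]}(v)+\outflow_{\flow[j]}(v)\le 2u(v)$, which suffices. Everything after this point is the same bookkeeping as in Claim~\ref{clm:reduceTotalCapacityTimeCost} with $\overline C$ replaced by $u(v)$ and $2m^{11}$ replaced by $m$.
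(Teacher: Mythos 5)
Your overall strategy is the same as the paper's: the paper proves this claim by exactly the induction you describe, mirroring \Cref{clm:reduceTotalCapacityTimeCost} with the local potential $\sum_{x\in\neighborhood_{G'}(v)}u_j(x)$ in place of the global cost potential, using the coupling $|\inflow_{g_j}(x)-\inflow_{\flow[j]}(x)|\le \epsilon' u_j(x)$, the fact that $\inflow_{\flow[j]}(z)=\outflow_{\flow[j]}(z)\le u_j(z)$ at the degree-two split vertices and at $v$ itself to bound the relevant flow through $\neighborhood_{G'}(v)$ by roughly $2u(v)$, and then the pigeonhole split into heavy neighbors (total capacity at most twice that flow) and halved light neighbors, absorbing the $\epsilon$-terms into the $(2-16\epsilon)$ slack. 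Your anticipated obstacle in step (ii) is precisely the step the paper dispatches with ``it is thus not hard to derive''; be aware that the reflected-flow scenario you mention (flow entering $v_e$ from the far endpoint and leaving back to it, which the fixed certifying flow $\flow[j]$ is not guaranteed to avoid) is not treated any more carefully in the paper's own write-up, so your handling is on par with, not behind, the paper here.

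The one step that genuinely fails as written is your base case. You bound $u_0(x)\le u'(x)\le m^5$ and claim $m\cdot m^5\le m\cdot 2\overline{U}/\epsilon$ ``using $\overline{U}\ge 1$''; this needs $\overline{U}\ge \epsilon m^5/2$, which is not guaranteed ($\overline{U}$ is only a constant-factor proxy for $OPT_{G,\overline{C}}$, which can be as small as $\Theta(1)$). The correct argument, and the one the paper uses, is that \Cref{alg:boundedCostFlowWithKnownFlowValue} initializes $u_0(v)\gets\min\{u'(v),2\overline{U}\}$, so every split neighbor has $u_0(x)\le 2\overline{U}$ and hence $\sum_{x\in\neighborhood_{G'}(v)}u_0(x)\le m\cdot 2\overline{U}\le m\cdot 2\overline{U}/\epsilon$, with no relation between $\overline{U}$ and $m^5$ needed. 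With that repair the rest of your induction and the ``in particular'' evaluation at $j=j_{max}$ go through as in the paper.
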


Finally, we have to argue that we can route a near-optimal flow in $G'$ in the final instance $G''$. Since the claim below is straight-forward to obtain but tedious to derive, we defer its proofs to \Cref{sec:proofOfOptIsNotReducedByMuch}.

\begin{restatable}{claim}{clmOptNotReducedByMuch}\label{clm:OptIsNotReducedByMuch}
Define $G_j = (V',E',c' ,u_j)$ to be the graph that $\mathcal{A}$ is invoked upon during the $j^{th}$ iteration of the for-loop starting in \Cref{lne:forLoopKnownObjeMBCF}. Then, we have that for every $j \geq 0$, 
\[
    OPT_{G_{j+1}, \overline{C}} \geq \left(1- 10\epsilon' \right) OPT_{G_{j}, \overline{C}}.
\]
In particular, we have $OPT_{G_{j_{max}}, \overline{C}} \geq \left(1- 10\epsilon' \right)^{j_{max}} OPT_{G_0, \overline{C}} \geq (1-\epsilon)OPT_{G_0, \overline{C}}$.
\end{restatable}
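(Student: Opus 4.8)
The plan is to establish the single-step estimate $OPT_{G_{j+1},\overline{C}} \geq (1-10\epsilon')\,OPT_{G_j,\overline{C}}$ by exhibiting an explicit feasible flow in $G_{j+1}$ of nearly optimal value, and then to telescope. The only subtle point — and the place where the argument genuinely uses that $\mathcal{A}$ outputs a \emph{pseudo}-optimal flow rather than just any near-optimal flow — is that one cannot simply scale down an optimal flow of $G_j$: such a flow might fully saturate a vertex $v$ whose capacity is halved in $u_{j+1}$, so one would be forced to scale by $1/2$, which is far too lossy. Instead I will scale the near-optimal flow $\flow[j]$ that \Cref{def:feasibleFlow} couples to the computed flow $g_j$, and exploit that the halving rule of \Cref{alg:boundedCostFlowWithKnownFlowValue} is driven by the inflows of $g_j$, which pin down the inflows of $\flow[j]$ up to an additive $\epsilon' u_j(v)$.

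Concretely, fix $j$ and let $g_j = \mathcal{A}((V',E',c',u_j),s,t,\epsilon',\overline{C})$ be the $(1-\epsilon')$-pseudo-optimal flow; by \Cref{def:feasibleFlow} there is a $(1-\epsilon')$-optimal flow $\flow[j]$ in $G_j$ (in particular it obeys conservation, is cost-feasible, and has value at least $(1-\epsilon')\,OPT_{G_j,\overline{C}}$) with $|\inflow_{\flow[j]}(v)-\inflow_{g_j}(v)|\le \epsilon' u_j(v)$ for every $v\in V'$. Since $g_j$ is a pseudo-flow it is capacity-feasible in $G_j$, i.e.\ $\inflow_{g_j}(v)\le u_j(v)$ for all $v$. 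I would then check the uniform bound $\inflow_{\flow[j]}(v)\le (1+2\epsilon')\,u_{j+1}(v)$ for every $v$: if $u_{j+1}(v)=u_j(v)$ this is immediate from $\inflow_{\flow[j]}(v)\le \inflow_{g_j}(v)+\epsilon' u_j(v)\le (1+\epsilon')u_j(v)$; if $u_{j+1}(v)=u_j(v)/2$ then the update rule forces $\inflow_{g_j}(v)\le u_j(v)/2$, hence $\inflow_{\flow[j]}(v)\le (1/2+\epsilon')u_j(v)=(1+2\epsilon')\,u_{j+1}(v)$.

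Given this, consider $f := \flow[j]/(1+2\epsilon')$. It obeys conservation (scaling preserves it), it is capacity-feasible in $G_{j+1}$ by the inequality just verified, and it is cost-feasible since $c(f)=c(\flow[j])/(1+2\epsilon')\le \overline{C}$. Its value is $\val(\flow[j])/(1+2\epsilon')\ge (1-\epsilon')\,OPT_{G_j,\overline{C}}/(1+2\epsilon')\ge (1-10\epsilon')\,OPT_{G_j,\overline{C}}$, using $(1-\epsilon')/(1+2\epsilon')\ge (1-\epsilon')(1-2\epsilon')\ge 1-3\epsilon'$. Therefore $OPT_{G_{j+1},\overline{C}}\ge (1-10\epsilon')\,OPT_{G_j,\overline{C}}$ (the case $OPT_{G_j,\overline{C}}=0$ being trivial), which is the first assertion.

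For the ``in particular'' part I would apply this inequality along the chain $G_0,G_1,\dots,G_{j_{max}}$, which involves $j_{max}$ transitions, obtaining $OPT_{G_{j_{max}},\overline{C}}\ge (1-10\epsilon')^{j_{max}}\,OPT_{G_0,\overline{C}}$. Finally, by $(1-x)^n\ge 1-nx$ and the choice $\epsilon'=\epsilon/(20 j_{max})$ from the algorithm, $(1-10\epsilon')^{j_{max}}\ge 1-10\epsilon' j_{max}=1-\epsilon/2\ge 1-\epsilon$, which gives $OPT_{G_{j_{max}},\overline{C}}\ge (1-\epsilon)\,OPT_{G_0,\overline{C}}$. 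I expect the main obstacle to be purely expository: making precise that a $(1-\epsilon')$-optimal flow is in particular a genuine (conservation-obeying, cost-feasible) flow so that scaling it down keeps cost-feasibility and conservation, and carefully distinguishing the two cases of the halving rule; the inequalities themselves are routine.
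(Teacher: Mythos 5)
Your proof is correct and follows essentially the same route as the paper: take the coupled $(1-\epsilon')$-optimal flow $\flow[j]$ guaranteed by \Cref{def:feasibleFlow}, scale it down slightly, use the coupling together with the halving rule (driven by $\inflow_{g_j}$) to verify capacity-, cost- and conservation-feasibility in $G_{j+1}$, and then telescope over the $j_{max}$ steps using the choice $\epsilon'=\epsilon/(20 j_{max})$. The only differences are cosmetic: the paper scales by $(1-4\epsilon')$ and case-splits on $\inflow_{\flow[j]'}(v)$ versus $u_j(v)/2$, whereas you scale by $1/(1+2\epsilon')$ and case-split on whether $u_{j+1}(v)$ was halved, which yields the same bound.
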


We can now prove \Cref{lma:NearPseudoOptMBCFforKnownTarget}.

\NearPseudoOptMBCFforKnownTarget*
\begin{proof}
Let us first argue about correctness by establishing the properties claimed in \Cref{def:capacityFittedInstance}. It is immediate to see that $u''(v) = u_{j_{max}+1}(v) \leq u'(v)$ since our algorithm only decreases capacities. By \Cref{clm:reduceTotalCapacityTimeCost}, we also have the second property of a near-capacity-fitted instance satisfied, and by \Cref{clm:reduceLocalCapacityForPlainVertex} the third property. Finally, observe that by \Cref{clm:OptIsNotReducedByMuch}, we immediately obtain that $OPT_{G'', \overline{C}} \geq (1-\epsilon) OPT_{G', \overline{C}} = (1-\epsilon) OPT_{G, \overline{C}}$ (where we use that $OPT_{G', \overline{C}} = OPT_{G_0, \overline{C}}$ since we they only differ in capping the capacities at $2\overline{U}$ which does not affect the maximum value of any flow by definition of $\overline{U}$). It remains to bounds the running time of \Cref{alg:boundedCostFlowWithKnownFlowValue} which can be seen by straight-forward inspection of the algorithm to be $O(m\log n)$ plus $O(\log n)$ invocations of $\mathcal{A}$ (here, we also assume that $\epsilon > 1/n$). 
\end{proof} 

\section{Near-Optimal MBCF via Near-pseudo-optimal MBCF in a Near-capacity-fitted instance}
\label{sec:nearOptFromNearLocally}

Finally, we show how to obtain a near-optimal flow from a near-pseudo-optimal flow in a near-capacity-fitted instance. 

\begin{restatable}{theorem}{MBCF}\label{thm:MBCF}
For any $\epsilon > 0$, given a graph $G=(V,E,c,u)$ that satisfies the properties of \Cref{prop:reductionVertexCapacities}, a dedicated source $s$ and sink $t$ and a cost budget $\overline{C}$. Given an algorithm $\mathcal{A}$ that computes a $(1-\epsilon)$-pseudo-optimal flow in time $\mathcal{T}_{PseudoMBCF}(m,n,\epsilon,\overline{C})$ and given an algorithm $\mathcal{B}$ that computes a $(1-\epsilon)$-capacity-fitted instance $G'$ for the given flow problem in time $\mathcal{T}_{CapacityFitting}(m,n,\epsilon,\overline{C})$.

Then, there exists an algorithm $\mathcal{C}$ that computes a $(1-\epsilon)$-optimal flow $f$ in time
\[
\tilde{O}(m) + \mathcal{T}_{PseudoMBCF}(m,n,\Theta(\epsilon),\overline{C}) + \mathcal{T}_{CapacityFitting}(m,n,\Theta(\epsilon),\overline{C}).
\]
with high probability.
\end{restatable}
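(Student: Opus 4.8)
The idea is to use the pseudoflow produced by $\mathcal{A}$ \emph{only to reshape the vertex capacities}, and then recover a genuine flow by a single call to an undirected $(1+\epsilon)$-approximate maximum flow routine. First I would run $\mathcal{B}$ on $(G,s,t,\epsilon_1,\overline{C})$ with $\epsilon_1=\Theta(\epsilon)$ to obtain a $(1-\epsilon_1)$-capacity-fitted instance $G''=(V',E',c',u'')$; here $V',E'$ are the vertex and edge sets of $G'=\textsc{Edge-Split}(G)$, and by Definition \ref{def:capacityFittedInstance} we have $u''(x)\le u'(x)$ for all $x$, $\sum_{x\in V'}u''(x)c'(x)\le 18\overline{C}$, and $OPT_{G'',\overline{C}}\ge(1-\epsilon_1)OPT_{G,\overline{C}}$ (the neighborhood bound, Property \ref{prop:capacityFittedPropLocal}, will not be needed here). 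Then I would run $\mathcal{A}$ on $(G'',s,t,\epsilon_2,\overline{C})$ with $\epsilon_2=\Theta(\epsilon)$ to get a $(1-\epsilon_2)$-pseudo-optimal flow $\flowEst$ of $G''$; since $|V'|,|E'|=O(m)$ this costs $\mathcal{T}_{PseudoMBCF}(O(m),O(n),\Theta(\epsilon),\overline{C})$. By Definition \ref{def:feasibleFlow} there is a flow $\flow$ of $G''$ that is $(1-\epsilon_2)$-optimal (so $\flow$ obeys conservation, is capacity- and cost-feasible in $G''$, and has value at least $(1-\epsilon_2)OPT_{G'',\overline{C}}$) with $|\inflow_{\flow}(v)-\inflow_{\flowEst}(v)|\le\epsilon_2 u''(v)$ for every $v$; and $\flowEst$ is a pseudoflow of $G''$, so $\inflow_{\flowEst}(v)\le u''(v)$ and $c'(\flowEst)\le\overline{C}$.

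\textbf{The key step} is to set, for every $v\in V'\setminus\{s,t\}$, the reduced capacity $\tilde u(v):=\min\{u''(v),\ \inflow_{\flowEst}(v)+\epsilon_2 u''(v)\}$, with $\tilde u(s)=\tilde u(t)=\infty$, and to form the cost-free vertex-capacitated graph $G'''=(V',E',\tilde u)$. Three observations drive the argument. (i) $\tilde u(v)\le u''(v)\le u'(v)$, so every $\tilde u$-feasible flow of $G'''$ is $u'$-feasible in $G'$, hence (contracting the subdivision, and recalling that $G$ has infinite edge capacities by Proposition \ref{prop:reductionVertexCapacities}) yields a $u$-feasible flow of $G$ of the same value and cost. (ii) $\flow$ is itself $\tilde u$-feasible: $\inflow_{\flow}(v)\le\inflow_{\flowEst}(v)+\epsilon_2 u''(v)$ by the coupling, and $\inflow_{\flow}(v)\le u''(v)$ by capacity-feasibility of $\flow$ in $G''$, so $\inflow_{\flow}(v)\le\tilde u(v)$; therefore the maximum $s$-$t$ flow value of $G'''$ is at least $\mathrm{val}(\flow)\ge(1-\epsilon_2)OPT_{G'',\overline{C}}\ge(1-\epsilon_1)(1-\epsilon_2)OPT_{G,\overline{C}}$. (iii) $\sum_v\tilde u(v)c'(v)\le\sum_v\inflow_{\flowEst}(v)c'(v)+\epsilon_2\sum_v u''(v)c'(v)=c'(\flowEst)+18\epsilon_2\overline{C}\le(1+18\epsilon_2)\overline{C}$, using $c'(s)=c'(t)=0$; hence \emph{every} $\tilde u$-feasible flow of $G'''$ has cost at most $(1+18\epsilon_2)\overline{C}$.

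\textbf{Finishing.} Given this, I would convert $G'''$ to an edge-capacitated graph by the standard vertex-splitting trick (size still $O(m)$; all capacities are $\mathrm{poly}(m)$-bounded above and $\mathrm{poly}(1/m)$-bounded below, so the aspect ratio is $\mathrm{poly}(m)$, and any vertex with $\tilde u(v)=0$ is simply deleted), and call a $(1+\epsilon_3)$-approximate undirected maximum flow algorithm, e.g. \cite{sherman2013nearly, kelner2014almost, peng2016approximate}, with $\epsilon_3=\Theta(\epsilon)$, in $\tilde O(m)$ time. This returns a genuine $s$-$t$ flow $h$ with $\mathrm{val}(h)\ge\frac{1}{1+\epsilon_3}(1-\epsilon_1)(1-\epsilon_2)OPT_{G,\overline{C}}$ that is $\tilde u$-feasible, hence has $c'(h)\le(1+18\epsilon_2)\overline{C}$ by observation (iii). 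I would then output $f:=h/(1+18\epsilon_2)$, which obeys conservation, is $u'$-feasible in $G'$ (hence gives a $u$-feasible flow of $G$ by (i)), has $c'(f)\le\overline{C}$, and has value at least $\frac{(1-\epsilon_1)(1-\epsilon_2)}{(1+\epsilon_3)(1+18\epsilon_2)}OPT_{G,\overline{C}}\ge(1-\epsilon)OPT_{G,\overline{C}}$ once the hidden constants in $\epsilon_1,\epsilon_2,\epsilon_3=\Theta(\epsilon)$ are chosen small enough. Summing the running times of $\mathcal{B}$, of $\mathcal{A}$ on the $O(m)$-size instance $G''$, and of the single max-flow call plus $\tilde O(m)$ bookkeeping gives the claimed bound; the only randomization in $\mathcal{C}$ sits inside $\mathcal{A}$ (and the max-flow subroutine if randomized), so $\mathcal{C}$ succeeds with high probability. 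The main thing to get right is not a deep obstacle but the bookkeeping: checking that $G''$ still satisfies the structural assumptions under which $\mathcal{A}$ is invoked, keeping the aspect ratio of the auxiliary max-flow instance polynomially bounded (in particular handling vertices with $\inflow_{\flowEst}(v)$ extremely small or zero), and tracking the chain $G\to G'\to G''\to G'''$ together with the $\Theta(\epsilon)$ error budget so that the final loss is genuinely $(1-\epsilon)$.
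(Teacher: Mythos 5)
Your observations (i)--(iii) are correct as far as they go: $\flow$ is indeed $\tilde u$-feasible, and the bound $\sum_v \tilde u(v)c'(v)\le c'(\flowEst)+18\epsilon_2\overline{C}\le(1+18\epsilon_2)\overline{C}$ via Property \ref{prop:capacityFittedPropCost} correctly shows that \emph{every} $\tilde u$-feasible flow has cost at most $(1+18\epsilon_2)\overline{C}$, so scaling fixes the cost. The gap is in the finishing step. Your instance $G'''$ carries genuine \emph{vertex} capacities $\tilde u(v)$ at the original vertices of $V$, which can have arbitrarily high degree, and there is no ``standard vertex-splitting trick'' that reduces undirected vertex-capacitated flow to undirected edge-capacitated flow: the usual split $v\mapsto(v_{in},v_{out})$ produces a \emph{directed} instance, while the near-linear algorithms you invoke \cite{sherman2013nearly,kelner2014almost,peng2016approximate} handle only \emph{undirected edge-capacitated} graphs (any undirected splitting with a capacity edge between two copies of $v$ fails, since flow between two edges attached to the same copy bypasses the capacity constraint). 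A $(1+\epsilon)$-approximate vertex-capacitated max flow routine in $\tilde O(m)$ time is not a known primitive -- it is essentially one of the results this paper is proving -- so the single max-flow call your plan relies on does not exist, and the argument becomes circular. (The capacities at the subdivision vertices $v_e$ are harmless, since a degree-two vertex capacity is equivalent to an edge capacity; the problem is exactly the original high-degree vertices.)

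The paper's own proof is arranged precisely to dodge this. It keeps the pseudoflow $\flowEst$ (mapped back to $G$, after pushing the excess at each subdivision vertex back to its endpoints) as the bulk of the final flow, and uses the edge-capacitated approximate flow of Theorem \ref{thm:maxFlowSherman} only to route the small residual excess vector, in an auxiliary graph whose \emph{edge} capacities are $\epsilon\cdot u''(v_e)$ and whose vertex capacities are infinite. Feasibility of the routing follows because $\flow-\flowEst$ is a witness routing with $|\flow(e)-\flowEst(e)|\le\epsilon u''(v_e)$, and feasibility at the original vertices of the combined, slightly scaled flow is then guaranteed not by the flow subroutine but by Property \ref{prop:capacityFittedPropLocal} of the capacity-fitted instance: the correction flow through $v$ is at most twice the total capacity of its incident edges, which the neighborhood bound caps at $O(\epsilon)\cdot u(v)$. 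That neighborhood property -- the one you explicitly set aside -- is exactly what lets the original-vertex capacities take care of themselves without ever solving a vertex-capacitated flow problem. If you want to salvage your capacity-reduction idea, you would need either to reinstate such a local bound and revert to the excess-routing structure, or to argue a terminating recursion in place of the max-flow call, neither of which is done in the proposal.
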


\paragraph{A Near-pseudo-optimal Flow in a capacity-fitted instance.} We start by invoking algorithm $\mathcal{B}$ on $G, s , t, \overline{C}$ and $\epsilon$, which returns a $(1-\epsilon)$-capacity-fitted instance $G''$. We then invoke algorithm $\mathcal{A}$ on $G''$,$s$,$t$,$\overline{C}$ and $\epsilon$ to obtain a $(1-\epsilon)$-pseudo-optimal flow $\hat{g}$. Let $g$ be the near-optimal flow that proves $\hat{g}$ to be $(1-\epsilon)$-pseudo-optimal. We assume w.l.o.g. that in $\hat{g}$, flow is only either on $(x,y)$ or $(y,x)$ for any such pair of edges in $E''$ (here we just use flow cancellations). 

\paragraph{Mapping the Flow Back to $G$.} Next, let us map the flow $\hat{g}$ back to $G$. We can firstly just apply the identity map to obtain $\hat{g}$ in $G'$. We observe that if $\hat{g}$ would satisfy flow conservation constraints in $G'$ (even only in the vertices $V' \setminus V$), then we could that the inverse of the transformation described in \Cref{def:edgeSplitTrans} to obtain $G'$ from $G$, and use it to map the flow on edges $(x,v_e),(v_e,y)$ (where $x,y \in V$ but $v_e \in V' \setminus V$) back to $(x,y)$. 

But observe that if there is positive excess at a vertex $v_e \in V' \setminus V$, where again $v_e$ is the vertex associated with edge $e = (x,y) \in E$, we can just route that excess back to $x$ and $y$ since the edges $(x, v_e)$ and $(y, v_e)$ carry all the in-flow to $v_e$ (let us assume for convention that the flow is first routed back to $x$ and then to $y$ if excess is still at $v_e$). Since this monotonically decreases flow on every edge (and thereby the in-flow to every vertex), it is easy to see that the resulting flow still satisfies capacity- and cost-feasibility constraints.

Further, the resulting flow can now be mapped straight-forwardly to $G$. We denote this flow on $G$ by $\flowEst$ and again assume w.l.o.g. that $\flowEst$ has flow either on edge $(x,y)$ or on edge $(y,x)$ but not both. 

\paragraph{Routing the Remaining Excess in $G$.} We now want to route the remaining excess in $G$. However, we first need to know the flow value from $s$ to $t$ that we want to route in $G$. We therefore simply check the in-flow at $t$, and let $F = \inflow_{\flowEst}(t)- \outflow_{\flowEst}(v)$. Next, we compute the excess vector $\textrm{ex}_{\flowEst,s,t,F}$ which is defined
\[
    \textrm{ex}_{\flowEst,s,t,F}(v) = \begin{cases}
    \inflow_{\flowEst}(v) - \outflow_{\flowEst}(v) & \text{if } v \neq s, v \neq t\\
    \inflow_{\flowEst}(v) - \outflow_{\flowEst}(v) + F & \text{if } v = s\\
    0 & \text{if } v = t
    \end{cases}
\]

Next, we want to construct a flow problem where we route a general demand $\chi \in \mathbb{R}^{V}$ (where $\sum_v \chi(v) = 0$). More precisely, we note that the vector  $\textrm{ex}_{\flowEst,s,t,F}$ is a valid demand vector. We then set up the graph to be $G''' = (V,E, u''')$, where $u'''(v) = \infty$ for any $v \in V$, and $u'''(e) = \epsilon \cdot u''(v_e)$ for each $e \in E$ where $v_e$ is again the vertex in $V(G'') \setminus V$ that is associated with edge $e$. We do not define a cost function and observe that the created instance only has edge capacities by design.

\paragraph{Feasibility Of Excess Routing.} We note that we can indeed route $\textrm{ex}_{\flowEst, \chi_{s,t,F}}$ in $G$ capacitated by $u'''$. To see this, recall that $g$ is the $(1-\epsilon)$-optimal flow certifying that $\hat{g}$ is $(1-\epsilon)$-pseudo-optimal in $G''$. Let $\flow$ be the flow on $G$ obtained by mapping $g$ to $G$ just like we mapped $\hat{g}$. Then, it is not hard to see that $\flow - \flowEst$ routes $\textrm{ex}_{\flowEst, s,t,F}$. Since each edge $e$ has $|\flow(e) - \flowEst(e)| < \epsilon \cdot u''(v_e)$ by \Cref{def:feasibleFlow}, our claim follows.

\paragraph{Using Max-Flow for Excess Routing.} We then use the following result for max flow on edge-capacitated graphs from \cite{sherman2013nearly, peng2016approximate} on $G''' = (V,E,u''')$.

\begin{theorem}[see Theorem 1.2 in \cite{sherman2013nearly}, \cite{peng2016approximate}]\label{thm:maxFlowSherman}
Given a flow instance $G''' = (V,E,u''')$ and a demand vector $\chi$ (with $\sum_v \chi(v) = 0$). Then, there exists an algorithm that returns a flow $f'''$ that obeys flow conservation constraints, and satisfies for each edge $e \in E$, $f'''(e) \leq 2 \cdot u'''(e)$. For a graph $G'''$ with polynomially bounded capacity ratio, the algorithm runs in time $\tilde{O}(m)$ and succeeds with high probability.
\end{theorem}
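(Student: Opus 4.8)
The plan is to prove \Cref{thm:maxFlowSherman}, which is the approximate-maximum-flow result of Sherman and Peng, via the \emph{congestion-approximator} framework. Since the theorem is only invoked when the demand $\chi$ is actually routable within the capacities $u'''$ (the caller verifies this), it suffices to output, in $\tilde O(m)$ time, a flow that routes $\chi$ exactly with congestion at most $(1+\eps)\cdot\mathrm{opt} \le 2$ for $\eps$ a small constant, where $\mathrm{opt}=\min_{B^{\top}f=\chi}\|U^{-1}f\|_\infty \le 1$; here $B$ is the oriented edge–vertex incidence matrix and $U$ is the diagonal matrix of edge capacities. So the first step is to recast demand-routing as this congestion-minimization convex program, and note that splitting each undirected edge into two oriented capacity copies lets a congestion-$\le 2$ solution be written as a nonnegative flow with $f'''(e)\le 2u'''(e)$ for every $e$.

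The heart of the argument is a \emph{congestion approximator}: a linear operator $R$ (together with $R^{\top}$), constructible and applicable in $\tilde O(m)$ time, such that for every $d$ with $\mathbf{1}^{\top}d=0$ one has $\|Rd\|_\infty \le \min_{B^{\top}f=d}\|U^{-1}f\|_\infty \le \alpha\,\|Rd\|_\infty$ with quality $\alpha=\mathrm{polylog}(n)$. I would build $R$ from a hierarchical graph decomposition à la Räcke–Shah–Täubig, made nearly-linear-time by Peng's recursive ultra-sparsification: repeatedly compute a low-stretch spanning tree, sample off-tree edges, and eliminate low-degree vertices, producing an $O(\log n)$-depth hierarchy of cluster graphs whose total size stays $\tilde O(m)$; the operator $R$ records, per cluster, the amount of $d$ forced to cross that cluster, rescaled by the cluster's capacity, and Räcke's oblivious-routing theorem certifies the $\alpha=\mathrm{polylog}(n)$ quality. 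The random off-tree sampling (and randomized low-stretch trees) is where the "with high probability" in the statement comes from.

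Given $R$, I would minimize a smoothed surrogate of the objective with a penalty for violating the demand constraint, e.g.
\[
\mathrm{smax}_\eta(U^{-1}f) \;+\; 2\alpha\cdot \mathrm{smax}_\eta\!\big(R(\chi - B^{\top}f)\big),
\]
where $\mathrm{smax}_\eta$ is the soft-max (log-sum-exp) with smoothing parameter $\eta=\Theta(\eps^{-1}\log n)$ — here polynomially bounded capacity ratio is exactly what keeps $\eta$, and all logarithmic factors, $O(\log n)$. Running Sherman's first-order method (gradient descent tuned to the $\ell_\infty$ geometry, or its area-convexity acceleration), with $R$ acting as the preconditioner, converges in $\mathrm{poly}(\alpha\log n/\eps)$ iterations, each costing one multiply by $B,B^{\top},R,R^{\top}$ plus elementwise work, i.e. $\tilde O(m)$; for constant $\eps$ the total is $\tilde O(m)$. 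This returns a near-feasible flow $\tilde f$ of congestion $\le(1+\eps/2)\mathrm{opt}$ whose residual demand $\chi-B^{\top}\tilde f$ has $R$-congestion $O(\eps)$. Finally I would route that residual \emph{exactly} along one tree of the decomposition — an $O(m)$ step adding at most $\alpha\cdot O(\eps)=O(\eps)$ further congestion once $\eps$ is a small enough constant — so that $f''' := \tilde f + (\text{tree flow})$ satisfies $B^{\top}f'''=\chi$ exactly and $U^{-1}f'''\le 2$ entrywise.

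The main obstacle is the congestion approximator: Räcke's decomposition built by the obvious multiplicative-weights routine is nowhere near nearly-linear time, so one needs the recursive ultra-sparsification / vertex-elimination construction and the careful accounting that its recursion depth stays $O(\log n)$ and its cumulative size $\tilde O(m)$ while the per-level quality loss stays polylogarithmic — this is precisely the technical core of Peng's algorithm. The smoothing-plus-acceleration analysis is comparatively routine, though getting the clean $\tilde O(m)$ iteration bound (rather than an $\eps^{-2}\alpha^2$ bound) requires the $\ell_\infty$-regularization / area-convexity refinement; for the present application, where $\eps$ is a constant, even the coarser bound suffices.
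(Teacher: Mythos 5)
The paper never proves this statement: Theorem~\ref{thm:maxFlowSherman} is imported as a black box from \cite{sherman2013nearly,peng2016approximate}, so there is no internal proof to compare against. Your outline is essentially a faithful reconstruction of those cited works' argument (a polylog-quality congestion approximator built by recursive ultrasparsification, a smoothed $\ell_\infty$ first-order method preconditioned by it, and exact routing of the small residual along a tree), and it correctly identifies the two points that matter for the way the paper uses the theorem, namely that the caller guarantees $\chi$ is routable within $u'''$ and that constant $\eps$ plus the factor-$2$ slack make the coarse iteration bounds sufficient.
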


We denote by  $\flow = (1-80\epsilon) \left(\flowEst + \flow'''\right)$ the flow obtained by combined the flow mapped from the capacity-fitted instance and the max flow instance (after some careful scaling).

\paragraph{Feasibility of $\flow$.} From construction, it is not straight-forward to see that $\flow$ satisfies flow conservation. Further, we have for every vertex $v \in V$,
\begin{align*}
\inflow_{\flow}(v) &\leq  (1-80\epsilon)\left( \inflow_{\flowEst}(v) + \inflow_{\flow'''}(v) \right)\leq (1-80\epsilon) \left(u(v) + 2 \sum_{x \in \mathcal{N}(v)} u'''(x,v) \right)\\
&\leq (1-80\epsilon) (1+28\epsilon) u(v) \leq u(v)
\end{align*}
where we used in the second inequality the feasibility of $\flowEst$, the guarantee from \Cref{thm:maxFlowSherman} on $f'''$ to almost stipulate capacities, and Property \ref{prop:capacityFittedPropLocal} from \Cref{def:capacityFittedInstance} which implies that all edge capacities incident to $v$ sum to at most $14 \epsilon \cdot u(v)$ which is a trivial upper bound on the amount of flow routed through $v$ in $\flow'''$.

We further have that 
\[
    c(f) \leq (1-80\epsilon) \left( c(f) + c(f''') \right) \leq (1-80\epsilon) \left( \overline{C} + 28 \cdot \epsilon \overline{C} \right) \leq \overline{C}
\]
where we use in the second inequality that by Property \ref{prop:capacityFittedPropCost} of \Cref{def:capacityFittedInstance}, the sum of capacities times costs in $G''$ is bounded by $14 \epsilon \overline{C}$ and the fact that $f'''$ satisfies $f'''(e) \leq 2 u'''(e)$ by \Cref{thm:maxFlowSherman}. Combined these facts prove that $\flow$ is a feasible flow in $G$.

\paragraph{Near-Optimality.} It remains to conclude that since $OPT_{G'', \overline{C}} \geq (1-\epsilon) OPT_{G, \overline{C}}$ by Property \ref{prop:capacityFittedPropNearOpt} in \Cref{def:capacityFittedInstance} and $\hat{g}$ is $(1-\epsilon)$-pseudo-optimal in $G''$ that the pseudo-flow $\flowEst$ is $(1-\epsilon)^2$-pseudo-optimal in $G$. Thus, we have that $v(\flow) \geq (1-80\epsilon)v(\flowEst) \geq (1-80\epsilon)(1-\epsilon)^2 OPT_{G, \overline{C}}$. Rescaling $\epsilon$ by a constant factor, we obtain that $\flow$ must be a $(1-\epsilon)$-optimal flow. This concludes our analysis.

\section{Putting it all Together}
\label{sec:puttingItAllTogether}

Finally, we combine our reduction chain with the main result of \Cref{part:augmented-queries}: \Cref{thm:main SSSP path}. However, instead of using the main result of \Cref{part:augmented-queries} directly, we rather prove that it can be used straight-forwardly to implement the data structure given in \Cref{def:Path-reportingSSSP}.

\begin{restatable}{theorem}{implementationFlowSSSPviaAugSSSP}\label{thm:implementationFlowSSSPviaAugSSSP}
There exists an implementation of the data structure given in \Cref{def:Path-reportingSSSP} where for any $\tau=o(\log^{3/4} n),$ $\epsilon > 1/\polylog(n)$ and some $\beta = \Ohat(1)$, the data structure can be implemented with total running time $T_{SSSP^{\pi}}(m,n,W, \tau,\epsilon, \Delta, \Delta') = \Ohat(m \log W + \Delta + \Delta')$.
\end{restatable}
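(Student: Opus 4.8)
The plan is to derive \Cref{thm:implementationFlowSSSPviaAugSSSP} as an essentially syntactic corollary of \Cref{thm:main SSSP path}, whose data structure $\SSSP^{\pi}(G,s,\eps,\beta,q)$ already maintains an implicit $\beta$-edge-simple $(1+\epsilon)$-approximate $s$-$v$ path for every $v$ and answers threshold-subpath queries $\sigma_{\le j}(\pi(s,v))$ in time proportional to $|\sigma_{\le j}(\pi(s,v))|\cdot q$. The main gap to bridge is that \Cref{def:Path-reportingSSSP} uses \emph{vertex} weights $w(v)$ with $w(s)=w(t)=0$, whereas \Cref{thm:main SSSP path} is stated for \emph{edge}-weighted graphs; and \Cref{def:Path-reportingSSSP} asks only about the single sink $t$, which is a special case. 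So first I would describe a standard vertex-to-edge weight reduction: replace each vertex $v$ (other than $s,t$) by a pair $v_{\inflow},v_{\outflow}$ joined by an edge of weight $w(v)$ and steadiness equal to $\min_{e=(x,v)}\sigma(e)$ over incoming edges — or more simply, since the paper's flow application already works with the edge-split graph, route the vertex weight onto the tail edge; every original edge $(x,y)$ becomes weight $0$ (or, to keep weights positive as \Cref{prop:simplify-1} requires, a tiny $1/\poly$ weight that is absorbed into the $(1+\epsilon)$ slack exactly as in \Cref{prop:simplify}). A weight increase of $w(v)$ becomes an increase of a single edge weight, so decremental updates translate one-to-one, and $\beta$-edge-simpleness is preserved up to a constant factor. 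Steadiness is carried along unchanged on the ``real'' edges and set to $\tau{+}1$ on auxiliary edges, which never matters because those edges have zero (or negligible) contribution and queries with $j\le\tau$ never need to list them.

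Second, I would check the parameter regime: \Cref{def:Path-reportingSSSP} requires $\edgeSensitivity(e)\in[1,\tau]$ with $\tau=o(\log^{3/4}n)$, which matches $\sigma_{\max}=o(\log^{3/4}n)$ in \Cref{thm:main SSSP path} (after shifting the steadiness range $[1,\tau]$ to $[0,\sigma_{\max}]$ with $\sigma_{\max}=\tau-1$, or $[0,\tau]$, a harmless relabeling), and $\eps>1/\polylog(n)$ matches $\eps>\phicmg$ since $\phicmg=1/2^{\Theta(\log^{3/4}n)}=1/\polylog(n)$ is smaller than any $1/\polylog(n)$ threshold for large $n$. Then \Cref{thm:main SSSP path} gives total update time $\Ohat(m\log W)$ with $\beta=\Ohat(1)$ and $q=\Ohat(1)$. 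It remains to translate this into the accounting of $T_{SSSP^{\pi}}(m,n,W,\tau,\epsilon,\Delta,\Delta')$, where $\Delta$ counts weight-increase updates and $\Delta'$ counts the total size of query outputs plus the number of queries. The $\Ohat(m\log W)$ bound from \Cref{thm:main SSSP path} already accounts for the whole update sequence, so the preprocessing/update part contributes $\Ohat(m\log W)$; each of the $\Delta$ weight increases is forwarded as $O(1)$ edge updates and costs $\Ohat(1)$ amortized inside that bound, contributing $\Ohat(\Delta)$; and each threshold-subpath query costs $(|\sigma_{\le j}(\pi(s,t))|+1)\cdot q=\Ohat(|\sigma_{\le j}(\pi(s,t))|+1)$, which summed over all queries is exactly $\Ohat(\Delta')$. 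Combining gives $T_{SSSP^{\pi}}(m,n,W,\tau,\epsilon,\Delta,\Delta')=\Ohat(m\log W+\Delta+\Delta')$, as claimed.

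The one genuine subtlety — and the step I expect to take the most care — is the \emph{stability} requirement in \Cref{def:Path-reportingSSSP}: the implicitly maintained path $\pi(s,t)$ must be a fixed object between updates, independent of which threshold $j$ is queried. This is precisely the guarantee \Cref{def:pathReportingSSSP} already bakes in (``the path $\pi(s,v)$ must be the same regardless of which $j$ is queried''), so the work here is just to observe that the reduction above does not disturb it: the vertex-split is a deterministic, static transformation of the graph, so the path in the split graph is a deterministic function of the update sequence, and un-splitting it is likewise deterministic; hence the induced $\pi(s,t)$ in $G$ is well-defined and stable. I would also note in passing that the multiplicative constant blow-ups in $\beta$ (from splitting, and from the $O(\beta\log(Wn))$ in \Cref{prop:simplify path}) all stay $\Ohat(1)$, which is all \Cref{thm:mainMinCost2} and the downstream reductions need. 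With these checks in place the theorem follows; no new ideas beyond plumbing are required, which is why the statement is phrased as ``rather straight-forwardly.''
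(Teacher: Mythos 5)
There is a genuine gap, and it is in the step you treat as routine plumbing: the vertex-to-edge weight reduction. Your splitting reduction (replace $v$ by $v_{\inflow},v_{\outflow}$ joined by an edge of weight $w(v)$) does not work in an \emph{undirected} graph: there is no way to attach the original edges incident to $v$ to the two copies so that every path through $v$ is forced to cross the weight-$w(v)$ edge --- whether you attach the neighbors to both copies or to one, a shortest path simply bypasses it. Your fallback, ``route the vertex weight onto the tail edge,'' requires direction-dependent edge weights, which the undirected data structure of \Cref{thm:main SSSP path} does not provide. The transformation the paper actually uses (following \cite{ChuzhoyS20_apsp}) is the symmetric averaging $w''(x,y)=\frac{w(x)+w(y)}{2}$, under which every $s$-$t$ path --- simple or $\beta$-edge-simple --- has the same weight as under the vertex weights up to the boundary terms $w(s)/2+w(t)/2$. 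With averaging substituted in, your argument would go through under the literal hypothesis $w(s)=w(t)=0$ of \Cref{def:Path-reportingSSSP}, and your parameter checks and the accounting of $\Delta$ and $\Delta'$ are fine.

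The second problem is that the subtlety you single out (stability of $\pi(s,t)$ across thresholds $j$) is inherited verbatim from \Cref{def:pathReportingSSSP} and is the trivial part, whereas the subtlety the paper's proof is actually about is absent from your write-up: the main text states explicitly that one \emph{cannot} assume $w(s)=w(t)=0$ in the intended use (in \Cref{alg:boundedCostFlowexcess} the weight $\weightFunctionEst(t)$ grows as flow enters $t$, and $s$ may carry weight as well), so the boundary terms of the averaging do matter. The paper handles this by setting $w'(t)=2w(t)$, so the averaged last edge pays $w(t)$ in full, and by running \emph{two} instances of \Cref{thm:main SSSP path}: the first with $w'(s)=0$, used only to maintain a distance estimate $\dtil''(t)$, and the second with $w'''(s)=\eps\,\dtil''(t)/4$, which is small enough to cost only a $(1+\eps/4)$ factor but large enough that any $(1+\eps/16)$-approximate shortest path can visit $s$ at most once, so its weight under the original vertex weights is controlled even though the maintained paths are only $\beta$-edge-simple. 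None of this appears in your proposal, so as written it both relies on a reduction that breaks for undirected graphs and omits the construction that constitutes the paper's proof.
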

\begin{proof}[Proof Sketch.]
The proof is almost immediate from \Cref{thm:main SSSP path}, except that the data structure in \Cref{thm:main SSSP path} deals with edge weights, while we require vertex weights $w$. 
In \cite{ChuzhoyS20_apsp} a simple transformation was described by defining edge weights for each edge $(x,y)$ to be $\frac{w(x) + w(y)}{2}$. Then for any path $P$ from $s$ to $t$ where $w(s)=w(t)=0$, the weight of $P$ with regard to this edge weight function is equal to the weight in the vertex-weighted graph. Unfortunately, we cannot assume $w(s) = w(t) = 0$, but using the same idea we can create a small workaround that is presented in the \Cref{subsec:implementationFlowSSSPviaAugSSSP}.
\end{proof}

Let $\epsilon > 1/\polylog(n)$. Then, plugging in \Cref{thm:implementationFlowSSSPviaAugSSSP} into \Cref{thm:mainMinCost2}, we obtain procedure to find a $(1-\epsilon)$-pseudo-optimal flow $f$ in $G$ in total time $\Ohat(m)$ with probability at least $1-n^{-10}$.

Using this result in \Cref{lma:NearPseudoOptMBCFforKnownTarget}, we again obtain total time $\Ohat(m)$ to produce a corresponding $(1-\epsilon)$-capacity-fitted instance $G''$ with probability at least $1-n^{-9}$ (there are $O(\log^2 n)$ invocations of the algorithm in \Cref{thm:mainMinCost2}, and we can take a union bound over the failure probability and assume that $n$ is larger than some fixed constant). 

Finally, we use the reduction in \Cref{thm:MBCF} with the above running times to obtain a near-pseudo-optimal flow and a capacity-fitted instance, and obtain a $(1-\epsilon)$-optimal flow $f$ in the original graph $G$, again in total time $\Ohat(m)$ and with success probability at least $1-n^{-8}$.

To obtain a proof for our main result, \Cref{thm:MBCFFinal}, we point out that we assumed in the chain of reduction above that $G$ was derived from applying the reduction in \Cref{prop:reductionVertexCapacities}. However, since our dependency of run-time is purely in terms of $m$ (and not in $n$) this does not lead to an asymptotic blow-up. The proof therefore follows immediately.

\section*{Acknowledgements}
Aaron and Thatchaphol thank Shiri Chechik for a useful discussion at an early stage of this work. 

\pagebreak
\appendix

\chapter{Appendix}
\section[Appendix of Part I]{Appendix of \Cref{part:intro}}

\subsection{Related Work}
\label{sec:relatedWork}

In addition to our discussion of previous work in \Cref{subsec:previousWork}, we also give a brief overview of related work. 

\paragraph{Dynamic SSR and SSSP in Directed Graphs.} While our article focuses on the decremental SSSP problem in undirected graphs, there is also a rich literature for dynamic SSSP in directed graphs and also for the simpler problem of single-source reachability and the related problem of maintaining strongly-connected components.

For fully-dynamic SSR/ SCC, a lower bound by Abboud and Vassilevska Williams \cite{abboud2014popular} shows that one can essentially not hope for faster amortized update time than $\Otil(m)$.

For decremental SSR/ SCC, a long line of research \cite{roditty2008improved, lkacki2013improved, henzinger2014sublinear, henzinger2015improved, chechik2016decremental, ItalianoKLS17} has recently lead to the first near-linear time algorithm \cite{bernstein2019decremental}. A recent result by Bernstein, Probst Gutenberg and Saranurak has further improved upon the classic $O(mn)$ total update time barrier to $\Ohat(mn^{2/3})$ in the deterministic setting \cite{BernsteinGS20scc}. 

While incremental SSR can be solved straight-forwardly by using a cut-link tree, the incremental SCC problem is not very well-understood. The currently best algorithms \cite{Haeupler12, Bender15} obtain total update time $\Otil(\min\{m^{1/2}, n^2\})$. Further improvements to time $\Otil(\min\{m\sqrt{n}, m^{4/3}\})$ for sparse graphs are possible for the problem of finding the first cycle in the graph \cite{BernsteinC18, bhattacharya2020improved}, the so-called cycle detection problem.

For fully-dynamic SSSP, algebraic techniques are known to lead to algorithms beyond the $\hat{\Theta}(m)$ amortized update time barrier at the cost of large query times. Sankowski was the first to give such an algorithm \cite{sankowski2005subquadratic} which originally only supported distance queries, however, was recently extended to also support path queries \cite{bergamaschi2020new}. An algorithm that further improves upon the update time/query time trade-off at the cost of an $(1+\epsilon)$-approximation was given by van den Brand and Nanongkai in \cite{brand2019dynamic}.

The decremental SSSP problem has also received ample attention in directed graphs \cite{EvenS, henzinger2014sublinear, henzinger2015improved, GutenbergW20a, bernstein2020near}. The currently best total update time for $(1+\epsilon)$-approximate decremental SSSP is $\Otil(\min\{n^2, mn^{2/3}\}\log W)$ as given in \cite{bernstein2020near}. Further,  \cite{BernsteinGS20scc} can be extended to obtain a deterministic $\Ohat(n^{2+2/3}\log W)$ total update time algorithm. 

The incremental SSSP problem has also been considered by Probst Gutenberg, Wein and Vassilevska Williams in \cite{GutenbergWW20} where they propose a $\Otil(n^2 \log W)$ total update time algorithm.

\paragraph{Dynamic APSP.} There is also an extensive literature for the dynamic all-pairs shortest paths problems. 

In the fully-dynamic setting a whole range of algorithms is known for different approximation guarantees, and for the particular setting of obtaining worst-case update times \cite{henzinger1995fully, King99, demetrescu2001fully, demetrescu2004new, roditty2004dynamic, thorup2005worst, bernstein2009fully, roditty2012dynamic,  abraham2014fully, roditty2016fully, henzinger2016dynamic, abraham2017fully, brand2019dynamic, probstWulffNilsenwcAPSP}. Most relevant to our work is a randomized $\Ohat(m)$ amortized update time algorithm by Bernstein \cite{bernstein2009fully} that obtains a $(2+\epsilon)$-approximation. An algorithm with faster update time is currently only known for very large constant approximation \cite{abraham2014fully}.

Similarly, in the decremental setting there has been considerable effort to obtain fast algorithms \cite{baswana2007improved,bernstein2011improved, abraham2013dynamic, henzinger2014decremental, henzinger2016dynamic,bernstein2016maintaining, chechik2018near, gutenberg2020deterministic, ChuzhoyS20_apsp, karczmarz2020simple, EvaldFGW20}. We explicitly highlight two contributions for undirected graphs: in \cite{henzinger2016dynamic}, the authors obtain a $O(mn\log n)$ deterministic $(1+\epsilon)$-approximate APSP algorithm (a simpler proof of which can be found in \cite{gutenberg2020deterministic}) and in \cite{chechik2018near} an algorithm is presented that for any positive integer $k$ maintains $(1+\epsilon)(2k-1)$-approximate decremental APSP in time $\Ohat(mn^{1/k} \polylog W)$. 

The incremental APSP problem has also recently been studied \cite{karczmarz2019reliable}.

\paragraph{Hopsets.} We also give a brief introduction to the literature on hopsets. Originally, hopsets were defined and used in the parallel setting in seminal work by Cohen \cite{cohen2000polylog}. However, due to their fundamental role in both the parallel and the dynamic graph setting, hopsets have remained an active area of development. Following lower bounds on the existential guarantees of hopsets \cite{AbboudBP17}, first Elkin and Neiman \cite{elkin2019hopsets} and then Huang and Pettie \cite{huang2019thorup} obtained almost optimal hopset constructions, where the latter was based on a small modification to the classic Thorup-Zwick emulators/ hopset \cite{thorup2006spanners}.

\subsection{Alternative Statement of Min-Cost Flow Result}
\label{subsec:alternativeStatement}

We can also derive the following theorem straight-forwardly from a standard reduction that applies \Cref{thm:MainMinCost} a polylogarithmic number of time (essentially, once can apply \Cref{thm:MainMinCost} recursively for $O(\log nC/\epsilon))$ times and then use a max flow algorithm to route the tiny amount of remaining demand cheaply).

\begin{theorem}\label{thm:minCostFlowConvenientForPNorm}
For any $\epsilon > 1/\polylog(n)$, consider undirected graph $G=(V,E,c,u)$ where cost function $c$ and capacity function $u$ map each edge and vertex to a non-negative real. Let $\chi \in \mathbb{R}^n$ be a demand vector. Then, there is an algorithm that in $m^{1+o(1)}\log \log C$ time returns a feasible flow $f$ that routes $\chi$ (i.e. $B^{\top} f = \chi$ where $B$ is the associated (unweighted) incidence matrix of $G$). Let $f^*$ be the feasible flow with $B^{\top} f=\chi$ such that 
\[
c(f) = \sum_{e \in E} c(e) \cdot |f_e| + \sum_{v \in V} c(v) \cdot (B^{\top} |f|)_v
\]
is minimized. Then, we can compute a flow $f$ that is feasible and satisfies $\|B^{\top} f - \chi \|_1 \leq \epsilon \cdot \|\chi\|_1$ and $c(f) \leq (1+\epsilon)c(f^*)$ in time $m^{1+o(1)}$. The algorithm runs correctly with high probability.
\end{theorem}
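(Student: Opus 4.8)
The plan is to derive Theorem~\ref{thm:minCostFlowConvenientForPNorm} from the $s$-$t$ result of \Cref{thm:MainMinCost} (equivalently \Cref{thm:MBCFFinal}) by the standard super-source/super-sink reduction, applied a polylogarithmic number of times, finishing with one approximate max-flow call (\Cref{thm:maxFlowSherman}) to route a negligible residual demand. First I would (if convenient) normalize via \Cref{prop:reductionVertexCapacities}, so capacities lie in $[1,m^5]$ and only the cost ratio $C$ can be large; this costs $O(m\log n)$ and does not change the problem asymptotically. Writing $\chi=\chi^{+}-\chi^{-}$ with disjointly supported nonnegative parts and $T:=\|\chi^{+}\|_1=\|\chi^{-}\|_1=\tfrac12\|\chi\|_1$, I form $G^{*}$ by adding a super-source $s^{*}$ joined (bidirectionally, per our convention) to each $v$ with $\chi(v)>0$ by an edge of capacity $\chi(v)$ and cost $0$, and a super-sink $t^{*}$ joined from each $v$ with $\chi(v)<0$ by an edge of capacity $-\chi(v)$ and cost $0$, with $u(s^{*})=u(t^{*})=\infty$ and $c(s^{*})=c(t^{*})=0$. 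Then any feasible $s^{*}$-$t^{*}$ flow $\hat f$ of value $V$ restricts to a feasible flow $f$ on $E$ routing a partial demand with $\|B^{\top}f-\chi\|_1=2(T-V)$ at the same cost, and conversely the min-cost flow $f^{*}$ routing $\chi$ exactly is a maximum-value $s^{*}$-$t^{*}$ flow of cost $c(f^{*})$, so the minimum cost among maximum-value flows of $G^{*}$ is at most $c(f^{*})$.

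\textbf{The cost-bounded $\ell_1$-approximate half.} This needs only a single invocation: run \Cref{thm:MainMinCost} on $G^{*}$ with parameter $\eps/2$, obtaining in time $m^{1+o(1)}\log\log C=m^{1+o(1)}$ a feasible flow $\hat f$ of value at least $(1-\eps/2)T$ whose cost is at most that of the min-cost maximum-value flow, hence at most $c(f^{*})$. Its restriction $f$ satisfies $\|B^{\top}f-\chi\|_1=2(T-\mathrm{val}(\hat f))\le\eps T\le\eps\|\chi\|_1$ and $c(f)\le c(f^{*})\le(1+\eps)c(f^{*})$, which is the second part of the theorem after trivially renaming $\eps$.

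\textbf{Exact routing.} For the first part I iterate: set $\chi_0=\chi$, and at step $i\ge1$ apply the reduction to $\chi_{i-1}$ with a constant accuracy $\eps_0=\tfrac12$, producing $f_i$ with $\chi_i:=\chi_{i-1}-B^{\top}f_i$ satisfying $\|\chi_i\|_1\le\tfrac12\|\chi_{i-1}\|_1$. After $k$ steps the residual has $\ell_1$-norm at most $2^{-k}\|\chi\|_1$; since (after normalization) capacities, hence feasible demand magnitudes, are $\poly(m)$, taking $k=O(\log m)=\polylog(n)$ makes $\|\chi_k\|_1$ smaller than any fixed inverse polynomial in $m$, and the paper's $\log\log C$ bookkeeping is inherited from the $\log\log C$ inside each \Cref{thm:MainMinCost} call. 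I then route the tiny leftover $\chi_k$ exactly by invoking \Cref{thm:maxFlowSherman} on an uncapacitated-cost instance $(V,E,u''')$ with $u'''(e)$ proportional to the residual, correcting the $O(1)$-factor capacity violation and any remaining $\ell_1$-error by one further scaled pass; the returned flow $f=\sum_{i\le k}f_i+f'$ then routes $\chi$ exactly and feasibly, and because $\|\chi_k\|_1$ is negligible, the extra flow $f'$ uses vertices of total cost-weighted mass at most $\eps\,c(f^{*})$, so $c(f)\le(1+\eps)c(f^{*})$. The running time is $k+1$ calls, i.e.\ $m^{1+o(1)}\log\log C$.

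\textbf{Main obstacle.} The one point requiring genuine care is the cost accounting across the $k$ iterations: the minimum cost of routing the residual $\chi_{i-1}$ is a priori only bounded by $c(f^{*})+\sum_{j<i}c(f_j)$, which blows up geometrically, so one cannot simply apply \Cref{thm:MainMinCost} with its ``cost $\le$ min-cost flow'' guarantee at every level. The fix is to run each iteration against an explicit, geometrically shrinking cost budget $B_i$ with $\sum_i B_i\le(1+\eps)c(f^{*})$ (using \Cref{thm:MBCFFinal} in place of \Cref{thm:MainMinCost}), and to verify that within budget $B_i$ a $(1-\Theta(\eps))$-fraction of $\chi_{i-1}$ remains routable --- i.e.\ that the residual demands stay proportionally cheap relative to their $\ell_1$-norm. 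Establishing this proportional-cheapness invariant, together with the harmless $\eps\,c(f^{*})$ slack absorbed by the final max-flow mop-up, is the crux; everything else is the routine super-source reduction plus bookkeeping.
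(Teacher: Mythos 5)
Your route is essentially the paper's: for this theorem the paper offers only a one-sentence argument --- apply \Cref{thm:MainMinCost} recursively $O(\log (nC/\epsilon))$ times via a standard reduction and then route the tiny leftover demand cheaply with a max-flow call --- and your proposal is exactly that sketch, fleshed out with the super-source/super-sink construction. Your extra observation that the second guarantee ($\|B^{\top} f-\chi\|_1\le\epsilon\|\chi\|_1$ together with $c(f)\le(1+\epsilon)c(f^*)$) needs only a \emph{single} invocation of \Cref{thm:MainMinCost} on $G^{*}$ is correct and a genuine simplification: since $\chi$ is assumed routable, a maximum $s^{*}$-$t^{*}$ flow saturates the auxiliary edges, so the min-cost maximum flow in $G^{*}$ costs exactly $c(f^{*})$, and the value/$\ell_1$ translation $\|B^{\top}f-\chi\|_1=2(T-V)$ does the rest.

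For the exact-routing clause, however, your own ``main obstacle'' paragraph is the honest status of the argument: the cost accounting across iterations (and equally the capacity accounting --- each $f_i$ is feasible against the \emph{full} capacities, so the naive sum can violate $u$ by a factor of $k$, and the residual $\chi_{i-1}$ need not be routable within budget $c(f^{*})$ in what remains, since the MWU flow never cancels and there is no residual-graph argument available) is delegated to a ``proportional-cheapness invariant'' that you state but do not establish, and the final \Cref{thm:maxFlowSherman} mop-up needs the residual to be routable in the scaled capacities $u'''$ with its cost absorbed into $\epsilon\, c(f^{*})$, which is also left vague. So as written the proposal is not a self-contained proof of that clause; but the paper closes none of these steps either --- it presents the whole theorem as a routine corollary --- so your treatment matches the paper's approach and level of detail, and correctly isolates the one point that a full write-up would actually have to settle.
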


\subsection{Discussion of Applications}
\label{appendix:ApplicationDiscussion}

We now expand on the discussion in \Cref{subsec:applications} and add explicit reductions or pointers to papers where they are stated clearly. We discuss the applications in the same order as in \Cref{subsec:applications}. 

\paragraph{Applications of Mixed-Capacitated Min-Cost Flow.} 
\begin{itemize}
	\item A $O(\log^2 n)$-sparsest vertex cut algorithm in almost-linear time: a through explanation of the reduction to $O(\polylog(n))$ vertex-capacitated flows and presented in Lemma D.4 in the ArXiv version of \cite{ChuzhoyS20_apsp}. Their reduction in turn is based by making a rather straight-forward observation about the sparsest cut algorithm in \cite{KhandekarRV09}. 
	\item A $O(\log^3(n))$-approximate algorithm for computing tree-width (and the corresponding tree decomposition) in $\Ohat(m)$ time: a formal reduction statement is again given in the ArXiv version of \cite{ChuzhoyS20_apsp} in their Lemma D.6. They basically use straight-forwardly the result in  \cite{bodlaender1995approximating} which reduces the problem to finding sparsest vertex cuts. 
	\item A high-accuracy LP solver by Dong, Lee and Ye \cite{dong2020nearly} with running time to $\Ohat(m \cdot \textrm{tw}(G_A)^2 \log(1/\epsilon))$: the result is immediate from Theorem 1.1. in \cite{dong2020nearly} and our almost-linear time tree decomposition algorithm.
	\item We provide an informal proof of the result below in \Cref{subsec:proofOfIncreasingCostFunction}.

\begin{theorem}\label{thm:formalGenPFlow}
Given any graph $G=(V,E)$ with incidence matrix $B$, demand vector $\chi \in \mathbb{R}^n$, and differentiable cost functions $c_e,c_v : \mathbb{R}_{\geq 0} \rightarrow \mathbb{R}_{\geq 0}$ growing (super-)linearly in their input for each $e \in E$ and $v \in V$ and each $c'_i(x)$ for $i \in E \cup V, x \in \mathbb{R}_{\geq 0}$ can be computed in $\Ohat(1)$ time. Let $f^*$ be some flow minimizing 
\[
	\min_{B^\top f = \chi} c(f) = \sum_{e\in E} c_e(|f_e|) + \sum_{v \in V} c_v((B^\top |f|)_v).
\]
Then, given the above, and $\epsilon > 1/\polylog(n)$, letting $C = \frac{c(f^*)}{\min_{i \in E \cup V} c_i'(0)}$, there is an algorithm that in $m^{1+o(1)} \polylog(C)$ time, returns a flow $f$ with $\|B^\top f - \chi\|_1 \leq \epsilon \cdot \|\chi \|_1$ such that $c(f) \leq (1+\epsilon) c(f^*)$ with high probability.
\end{theorem}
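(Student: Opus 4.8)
The plan is to reduce \Cref{thm:formalGenPFlow} to the linear-cost min-cost flow result of \Cref{thm:minCostFlowConvenientForPNorm} by \emph{piecewise-linearizing} every cost function. Replacing each $c_i$ by $c_i(\cdot)-c_i(0)$ shifts $c(f)$ and $c(f^*)$ by the same fixed constant and does not change the set of minimizers, so I would assume $c_i(0)=0$; super-linearity then means $c_i'$ is non-decreasing with $c_i'(0)>0$ (otherwise $C=\infty$). I would also run an outer loop that guesses a value $\Gamma$ with $\Gamma\in[c(f^*),(1+\epsilon)c(f^*)]$, trying the $O(\log(C)/\epsilon)$ powers of $(1+\epsilon)$ in the relevant range, and return the cheapest feasible flow found over all guesses.

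\textbf{The gadget.} Fix a guess $\Gamma$. For each $i\in E\cup V$ I would compute, incrementally via binary search on the derivative oracle, breakpoints $0=b^{(i)}_0<b^{(i)}_1<\dots$ with $c_i'(b^{(i)}_{j})\le(1+\epsilon)\,c_i'(b^{(i)}_{j-1})$, stopping once the accumulated piecewise-linear value $\sum_{j'\le j}s^{(i)}_{j'}w^{(i)}_{j'}$ exceeds $\Gamma$, where $s^{(i)}_j:=c_i'(b^{(i)}_{j-1})$ and $w^{(i)}_j:=b^{(i)}_j-b^{(i)}_{j-1}$; I append one final ``cone'' piece of slope $c_i'(b^{(i)}_{N_i})$ and infinite width. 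Let $\tilde c_i$ be the convex piecewise-linear function with these slopes. Now build $G'$: each edge $e=(x,y)\in E$ is replaced by $N_e{+}1$ parallel copies of $(x,y)$ and of $(y,x)$, copy $j$ having linear cost $s^{(e)}_j$ and capacity $w^{(e)}_j$; each vertex $v$ is split into $v_{\mathrm{in}},v_{\mathrm{out}}$ (original edges into/out of $v$ re-attached accordingly), joined by $N_v{+}1$ parallel edges $v_{\mathrm{in}}\to v_{\mathrm{out}}$ with costs $s^{(v)}_j$ and capacities $w^{(v)}_j$. Since the slopes within each bundle are non-decreasing, the cheapest way to push $x$ units through the bundle of element $i$ is to fill the cheapest pieces first, which costs exactly $\tilde c_i(x)$; hence the linear min-cost objective on $G'$ equals $\sum_i\tilde c_i$ at the corresponding routing. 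The graph $G'$ has $O(mN)$ edges and vertices, $N=\max_iN_i$, only edge costs, and otherwise satisfies the hypotheses of \Cref{thm:minCostFlowConvenientForPNorm}.

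\textbf{Approximation and mapping back.} On $[0,b^{(i)}_{N_i}]$ the left-endpoint slopes under-estimate $c_i$ by at most a $(1+\epsilon)$ factor, so $\tilde c_i(x)\le c_i(x)\le(1+\epsilon)\tilde c_i(x)$ there, and convexity gives $\tilde c_i\le c_i$ globally. Since $c_i'(0)\ge\min_k c_k'(0)=c(f^*)/C$ and $c_i((B^\top|f^*|)_i)\le c(f^*)\le\Gamma$, the optimum $f^*$ has $(B^\top|f^*|)_i\le b^{(i)}_{N_i}$ for the correct guess, so routing $f^*$ through cheapest pieces is a flow in $G'$ of linear cost $\le c(f^*)$. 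Feeding $G'$ to \Cref{thm:minCostFlowConvenientForPNorm} with accuracy $\Theta(\epsilon)$ returns $f'$ with $\|(B')^\top f'-\chi'\|_1\le\Theta(\epsilon)\|\chi\|_1$ and linear cost $\le(1+\Theta(\epsilon))c(f^*)$. Collapsing the parallel bundles — and, as in \Cref{sec:nearOptFromNearLocally}, first pushing any excess at a split vertex $v_{\mathrm{out}}$ back along the bundle edges to $v_{\mathrm{in}}$, which only lowers flow, cost and capacity violations — produces a flow $f$ on $G$ with $\|B^\top f-\chi\|_1\le\Theta(\epsilon)\|\chi\|_1$ and
\[
c(f)=\sum_i c_i\bigl((B^\top|f|)_i\bigr)\le(1+\epsilon)\sum_i\tilde c_i\bigl((B^\top|f|)_i\bigr)\le(1+\epsilon)\,\bigl[\text{linear cost of }f'\bigr]\le(1+\Theta(\epsilon))\,c(f^*).
\]
Rescaling $\epsilon$ by a constant gives the stated guarantee; the running time is $m^{1+o(1)}$ times $N\cdot\polylog(C)$ from the bundle sizes and the guessing loop.

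\textbf{Main obstacle.} The delicate point is proving $N=\polylog(C)/\epsilon$. The number of $(1+\epsilon)$ slope steps before the accumulated value reaches $\Gamma$ equals $O(\log_{1+\epsilon}(c_i'(b^{(i)}_{N_i})/c_i'(0)))$, and I would bound $c_i'(b^{(i)}_{N_i})$ by combining convexity — e.g.\ $c_i'(b^{(i)}_{N_i}/2)\le 2c_i(b^{(i)}_{N_i})/b^{(i)}_{N_i}\le 2(1+\epsilon)\Gamma/b^{(i)}_{N_i}$ — with a lower bound $b^{(i)}_{N_i}\ge\Gamma/(c_i'(0)\poly(C))$, which is exactly where the normalization $C=c(f^*)/\min_ic_i'(0)$ and the super-linear growth of $c_i$ enter (super-linearity prevents the argument at which $c_i$ reaches $\Gamma$ from being too small relative to $\Gamma/c_i'(0)$, and the cost budget forces $f^*$ to stop well before that point). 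Making this count uniform over all admissible cost functions, together with the cone-piece/stopping rule that keeps $f^*$ feasible in $G'$ while $N$ stays polylogarithmic, is the only genuinely technical part; the remaining arguments are the routine bookkeeping above plus the black-box invocations of \Cref{thm:minCostFlowConvenientForPNorm}.
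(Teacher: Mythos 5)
Your overall skeleton is the same as the paper's: replace each convex cost $c_i$ by a piecewise-linear surrogate whose slopes increase in powers of $(1+\epsilon)$, realize the pieces as parallel capacitated, linearly-priced elements in a gadget graph, and invoke the linear mixed min-cost flow result (\Cref{thm:minCostFlowConvenientForPNorm}) as a black box. The two places where you deviate from the paper's construction are exactly where the argument breaks. First, your truncation rule (stop adding pieces once the accumulated surrogate value exceeds $\Gamma$) does \emph{not} give $N_i=\polylog(C)/\epsilon$, and the bound $b^{(i)}_{N_i}\ge\Gamma/(c_i'(0)\poly(C))$ you sketch for it is false: take $c_i(x)=c_i'(0)x+Mx^2$ with $M$ huge (convex, superlinear, and $M$ does not enter $C=c(f^*)/\min_j c_j'(0)$ if the optimum barely uses element $i$). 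Then the surrogate value reaches $\Gamma$ already at $b_{N_i}\approx\sqrt{\Gamma/M}$, where the slope is $\approx\sqrt{M\Gamma}$, so the number of $(1+\epsilon)$-slope levels traversed is $\approx\log_{1+\epsilon}(\sqrt{M\Gamma}/c_i'(0))$, which is unbounded in terms of $m$ and $C$. The correct truncation — and what the paper's bucketing implicitly does — is by a \emph{slope} threshold of roughly $\poly(m)\,\Gamma/\epsilon$: pieces with larger slope are simply omitted (zero capacity), the piece count becomes $O(\log_{1+\epsilon}(\poly(m)C/\epsilon))$, and the resulting infeasibility of $f^*$ (it may want at most $\epsilon/\poly(m)$ extra units per element beyond the capped capacity, else it would blow the budget) is absorbed into the allowed demand slack $\|B^\top f-\chi\|_1\le\epsilon\|\chi\|_1$ rather than argued away.

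Second, the infinite-capacity ``cone'' piece breaks your cost chain. The inequality $c_i(x)\le(1+\epsilon)\tilde c_i(x)$ holds only for $x\le b^{(i)}_{N_i}$, but the approximate solver can overshoot into the cone: with the correct guess $\Gamma\in[c(f^*),(1+\epsilon)c(f^*)]$ and a returned linear cost of $(1+\Theta(\epsilon))c(f^*)$, an overshoot of up to $\Theta(\epsilon)c(f^*)/c_i'(b^{(i)}_{N_i})$ units through bundle $i$ is consistent with your linear-cost bound, yet its true cost $\int_{b_{N_i}}^{x}c_i'$ is uncontrolled, since $c_i'$ just past $b_{N_i}$ can be arbitrarily larger than the cone slope; so the step $c_i((B^\top|f|)_i)\le(1+\epsilon)\tilde c_i((B^\top|f|)_i)$ can fail badly. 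The paper avoids this by giving every piece (including the last) a finite capacity, so the returned flow can never leave the range on which the surrogate is a $(1+\epsilon)$-approximation; with the slope-threshold truncation above, this costs you only the negligible demand violation already budgeted for. With those two repairs — slope-threshold truncation with finite top capacity, and feasibility of (a slightly trimmed) $f^*$ argued through the $\epsilon\|\chi\|_1$ slack — the rest of your reduction (the edge-bundle/vertex-splitting gadget, the fill-cheapest-pieces-first identity, the $\Gamma$-guessing loop, and the mapping back) is sound and matches the paper's intent.
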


Previous results for $p$-norm flow concentrated on solving the problem to high-accuracy (i.e. $O(\polylog(1/\epsilon)$ dependence on $\epsilon$) but cannot handle weights \cite{adil2019iterative,KyngPSW19}.
\end{itemize}

\paragraph{Applications of Decremental SSSP.} 
\begin{itemize}
    \item  Decremental $(1+\epsilon)$-approximate all-pairs shortest paths (APSP) in total update time $\Ohat(mn)$: the algorithm is immediate from running the $(1+\epsilon)$-approximate SSSP data structure from \Cref{thm:mainSSSPResult} from every vertex $v \in V$. On query for a distance from $u$ to $v$, one can then just query the SSSP data structure at $v$ in constant time.
     \item Decremental $\Ohat(1)$-approximate APSP with total update time $\Ohat(m)$: to this end, we use \Cref{lem:main induction} which implies that we can maintain a covering of vertices, such that for each $D$, the diameter of each core is smaller $Dn^{o(1)}$, and each core has all vertices in its SSSP ball data structure that are at distance at most $Dn^{o(1)}$ from some vertex in the core. Maintaining such a covering for every $D_i= 2^i$, for every two vertices $u,v$ at distance $D_{i-1} \leq \dist_G(u,v) \leq D_{i}$, we can locate the correct covering by testing all values of $i$, and then find $u,v$ either in the same core, or in the SSSP ball data structure located at the core of the other vertex's core.
    \item Fully-dynamic $(2+\epsilon)$ approximate all-pairs shortest paths with $\Ohat(m)$ update time: there is essentially a reduction from fully-dynamic $(2+\epsilon)$ approximate APSP to decremental $(1+\epsilon)$-approximate SSSP in Section 3.3. of \cite{bernstein2009fully}. 
\end{itemize}

\subsection{Proof of Near-Optimal Flow for Flow under any Increasing Cost Function}
\label{subsec:proofOfIncreasingCostFunction}
We now give an informal proof of the above theorem. For convenience, we assume that $\chi$ characterizes an $s$-$t$ flow of value $F \geq 1$ (we can then use a standard reduction to recover full generality of the demand vector). We also assume without loss of generality that all $c_e$ map to $0$, by using the edge splitting procedure described in \Cref{part:minCostFlow} which increases the number of vertices and edges to $O(m)$. Let us also assume that we can roughly compute $C$ (to a two approximation via binary search).

Now, given this instance, let us for each vertex $v$ discretize $c_v(x)$ by finding values $x_0, x_1, \dots, x_k$ such that all $c'_v(x)$ for $x \in [x_i, x_{i+1})$ have $c'_v(x) \in \frac{C}{\poly(m)} [(1+\epsilon)^i, (1+\epsilon)^{i+1})$. Observe that this bounds $k = O(\polylog(n))$. We note that we might not be able just from querying the function $c'_v$ to find the precise values $x_i$ but we can find $\hat{x}_i$ with $\hat{x}_i = x_i - O(\frac{1}{\poly(m)})$ by employing binary search. Since these differences are tiny, i.e. only a negligible amount of flow is mischaracterized in rounded cost, we will ignore this issue altogether.

Finally, we create a min-cost flow instance from $G$, by adding for each vertex $v \in V$,  $k = O(\polylog(n))$ copies $v_1, v_2, \dots, v_k$ to the min-cost flow instance, where each vertex $v_i$ is assigned cost $\frac{C}{\poly(m)} \cdot (1+\epsilon)^{i+1}$ and capacity $x_{i+1} - x_i$. Further, for every adjacent vertices $v,w$ in the original graph $G$, we add edges between their copies $v_i, w_j$ $\forall i,j$ of cost $0$ and infinite capacity. It is not hard to see that the resulting instance has $\Tilde{O}(m \log C)$ edges and maximum capacity $C$.

Then invoke \Cref{thm:MainMinCost} on the created instance. A proof of correctness is straight-forward.

\subsection{Proof of Lemma \ref{lem:overview-kappa}}
\label{sec:overview-proof-kappa}

In this section we prove Lemma \ref{lem:overview-kappa}, which was stated in the overview, but never explicitly proved in the main body of the paper. See Section \ref{subsec:overview-capacitated} for the lemma statement and relevant notation.

The algorithm to compute the function $\kappa$ in Lemma \ref{lem:overview-kappa} is given in the pseudocode for Algorithm \ref{alg:overview-kappa} below. The algorithm follows the basic framework of congestion balancing and is a highly simplified version of the while loop in Line \ref{enu:embedwitness} of Algorithm \ref{alg:Core}. Recall that $\diam_G(K) \defeq \max_{x,y \in K} \dist_G(x,y)$.

\begin{algorithm}
	\caption{Algorithm to construct function $\kappa$ in Lemma \ref{lem:overview-kappa}}
	\label{alg:overview-kappa}
	\KwIn{An undirected graph $G = (V,E)$ and a set $K \subseteq V$}
	\KwOut{A capacity function $\kappa: V \rightarrow \mathbb{R}_{\geq 0}$ such that $(K,\kappa)$ forms a capacitated vertex expander in $G$ and $\sum_{v \in V} \kappa(v) = \Otil(|K|\diam_G(k))$.}
	Initialize $\kappa(v) \leftarrow 1$ for all $v \in V$ \;
	\While(\label{lne:overview_kappa_while}){There exists a sparse cut $(L,S,R)$ with respect to $K,\kappa$}{
		\ForEach{$v \in S$}{
			$\kappa(v) \leftarrow 2\kappa(v)$}
	}
	\BlankLine
	\Return $\kappa$
\end{algorithm}

\paragraph{Correctness Analysis:}
We now argue that the function $\kappa$ returned by the algorithm satisfies the output guarantees. When the algorithm terminates, $(K,\kappa)$ trivially forms a vertex expander in $G$, since the while loop only terminates when no sparse cuts remains. The bulk of the proof is showing that $\sum_{v \in V} \kappa(v) = \Otil(|K|\diam_G(k))$.

We define a potential function similar (but simpler) to the one in Definition \ref{def:Pi general}.

\begin{defn}
	We define the potential function $\Pi(G,K,\kappa)$ as follows. Let $\mathbb{P}$ be a collection of all embeddings $\cP$ where $\cP$ embeds some graph $\Wstar$ into $G$ such that
	\begin{enumerate}
		\item $\Wstar$ is an unweighted star with $V(\Wstar) = K$
	\end{enumerate}
	Define the cost of each vertex $v \in V$ to be $\log(\kappa(v))$. For any path $P$ in $G$ let $c(P) = \sum_{v \in P} c(v)$. The cost of an embedding $\cP$ is  $c(\cP)=\sum_{P\in\cP}c(P)$. We define $\Pi(G,K,\kappa)=\min_{\cP\in\mathbb{P}}c(\cP)$,
	and we call the corresponding $\cP$ the \textbf{minimum cost embedding
	} into $G$. 
\end{defn}

The proof now follows the general framework of Section \ref{subsec:analysisOfRobustCorePart2}. It is not hard to check that we always have $\kappa(v) \leq n \ \forall v \in V$, since once $\kappa(v) \geq n/2$ it will never increase again, because any vertex cut $(L,S,R)$ for which $v \in S$ is by definition not a sparse cut. We thus have $c(v) \leq \log(n) \ \forall v \in V$. This in turn implies that at all times $$\Pi(G,K,\kappa) \leq |K|\cdot \diam_G(K)\cdot \log(n).$$ 
To see this, consider the star formed by picking an arbitrary vertex $v \in K$, and then letting the embedding $\cP$ contain the shortest path in $G$ from $v$ to $x$ for every $x \in K$. (Note that this path may include vertices in $G \setminus K$.) Each $v-x$ path contains at most $\diam_G(K)$ vertices by definition of diameter. We have already shown that each vertex has cost $c(v) \leq \log(n)$. Finally, there are $|K|-1 < |K|$ choices for $x$. Thus, the cost of this embedding is $\leq |K|\cdot \diam_G(K)\cdot \log(n)$.

It is also trivial to check that at the beginning of the algorithm $\Pi(G,K,\kappa) = 0$ (because $\kappa(v) = 1$ for all vertices) and that since $\kappa$ only increases, $\Pi(G,K,\kappa)$ is monotonically increasing.

Now, consider any iteration of the while loop in the algorithm that returns a sparse vertex cut $(L,S,R)$. Let $\kappa$ be the capacity function before the cut is found, and let $\kappa'$ be the capacity function after $\kappa(v)$ is doubled for all $v \in S$. Using an argument identical to that of Lemma \ref{lem:doubling effect}, it is easy to check that 
\begin{equation}
\Pi(G,K,\kappa') \geq \Pi(G,K,\kappa) + |L \cap K|/3.
\label{eqn:overview-kappa-change}
\end{equation}
The basic argument here is that at least $|L\cap K|/3$ paths in the embedding cross from $L$ to $S \cup R$ and thus go through $S$. But for each vertex in $v \in S$ we have $\kappa'(v) = 2\kappa(v)$, so $c'(v) = c(v) + 1$. This argument can be formalized using the arguments of \ref{lem:doubling effect}.

Let us again consider a sparse cut $(L,S,R)$ returned by the while loop. Since the cut was sparse, we have that $\sum_{v \in S} \kappa(v) < |L \cap K|$. This implies that $\sum_{v \in V} \kappa'(v) = \sum_{v \in V} \kappa(v) + \sum_{v \in S} \kappa(v) \leq \sum_{v \in V} \kappa(v) + |L \cap K|$. Combining this with Equation \ref{eqn:overview-kappa-change} we see that whenever $\sum_{v \in V} \kappa(v)$ increases by some $\Delta$, $\Pi(G,K,\kappa)$ increases by at least $\Delta / 3$. But we know that $\Pi(G,K,\kappa)$ increases monotonically from $0$ to $\Pi(G,K,\kappa) \leq |K|\cdot \diam_G(K)\cdot \log(n)$. These two facts combined imply that at all times $\sum_{v \in V} \kappa(v) \leq 3|K|\cdot \diam_G(K)\cdot \log(n)$, as desired.

\paragraph{Discussion of Running Time:}
Since Lemma \ref{lem:overview-kappa} is only concerned with the \emph{existence} of a function $\kappa$, we did not concern ourselves with the running time of the algorithm. In particular, we did not specify how to find the sparse cut $S$ in the while loop. Below, we briefly discuss how such an algorithm could be implemented.

It is not hard to check that the algorithm goes through if we allow some slack in our requirement of the spare cut returned in the while loop, and that with this slack the cut can be computed in polynomial time. One could perhaps even compute the cut in almost-linear time using more sophisticated techniques. But the total time to compute $\kappa$ will still not be linear because there could be many iterations of the while loop: each iteration might find a sparse cut $(L,S,R)$ with $|L\cap K| = n^{o(1)}$, in which case the number of iterations can be as large as $\Ohat(|K|)$, so the total running time would be at least $\Ohat(m|K|)$, which could be as large as $\Ohat(mn)$.

The above obstacle explains why our final algorithm settles on a function $\kappa$ with the slightly weaker guarantees of Lemma \ref{lem:overview-kappa-relaxed}. This relaxed lemma only guarantees capacitated expansion for \emph{balanced} cuts $(L,S,R)$, so the while loop always returns a sarse balanced cut $(L,S,R)$, or returns $\kappa$ if no such cut exists. This allows us to ensure that $|L \cap K| = \Omega(|K|/n^{o(1)})$ in each iteration, so the number of iterations is only $n^{o(1)}$.

\paragraph{Lower Bound:} We now prove the lower bound of the lemma. Consider the following graph $G$, with $K = V(G)$. Let $G_A, G_B$ be vertex expanders, with $n/3$ vertices each. Let $a$ be a vertex in $G_B$ and $b$ a vertex in $G_B$. The graph $G$ contains both $G_A$ and $G_B$, as well as a path $P$ from $a$ to $b$ with $n/3$ intermediate vertices. We have $|K| = |V(G)| = n$ and $\diam_G(K) = \diam(G) = \Theta(n)$. It is not hard to check that any function $\kappa$ for which $(K,\kappa)$ is a capacitated expanders must have $\kappa(v) \geq n^{1-o(1)}/3$ for all $v \in P$, so $\sum_{v \in V(G)} \kappa(v) =\Omegahat(n^2) = \Omegahat(|K|\diam_G(K))$, as stated in the lemma.

\ignore{

\subsection{Proof of Lemma \ref{lem:overview-kappa}}
\label{sec:overview-proof-kappa}

In this section we prove Lemma \ref{lem:overview-kappa}, which was stated in the overview, but never explicitly proved in the main body of the paper. See Section \ref{subsec:overview-capacitated} for the lemma statement and relevant notation.

We start with some notation. We say that $S \subset K$ is sparse with respect to capacity function $\kappa$ if $|S| \leq |K|/2$ and $|N(S)|/|S| < 1/n^{o(1)}$; that is, $S$ is a witness to the fact that $(K,\kappa)$ does not form a capacitated vertex in $G$ (Definition \ref{def:overview-capacitated-expander}). Recall that $N(S)$ contains all vertices in $G$ that neighbor some vertex in $S$ but are not themselves in $S$ -- so $N(S)$ and $S$ are disjoint. Recall also that $\diam_G(K) \defeq \max_{x,y \in K} \dist_G(x,y)$.

The algorithm to compute the function $\kappa$ in Lemma \ref{lem:overview-kappa} is given in the pseudocode for Algorithm \ref{alg:overview-kappa} below. The algorithm follows the basic framework of congestion balancing and is a highly simplified version of the while loop in Line \ref{enu:embedwitness} of Algorithm \ref{alg:Core}.

\begin{algorithm}
	\caption{Algorithm to construct function $\kappa$ in Lemma \ref{lem:overview-kappa}}
	\label{alg:overview-kappa}
	\KwIn{An undirected graph $G = (V,E)$ and a set $K \subseteq V$}
	\KwOut{A capacity function $\kappa: V \rightarrow \mathbb{R}_{\geq 0}$ such that $(K,\kappa)$ forms a capacitated vertex expander in $G$ and $\sum_{v \in V} \kappa(v) = \Otil(|K|\diam_G(k))$.}
	Initialize $\kappa(v) \leftarrow 1$ for all $v \in V$ \;
	\While(\label{lne:overview_kappa_while}){There exists a sparse cut $S \subseteq K$ with respect to $\kappa$}{
		\ForEach{$v \in N_G(S)$}{
		$\kappa(v) \leftarrow 2\kappa(v)$}
	}
	\BlankLine
	\Return $\kappa$
\end{algorithm}

\paragraph{Correctness Analysis:}
We now argue that the function $\kappa$ returned by the algorithm satisfies the output guarantees. When the algorithm terminates, $(K,\kappa)$ trivially forms a vertex expander in $G$, since the while loop only terminates when no sparse cuts remains. The bulk of the proof is showing that $\sum_{v \in V} \kappa(v) = \Otil(|K|\diam_G(k))$.

We define a potential function similar (but simpler) to the one in Definition \ref{def:Pi general}.

\begin{defn}
We define the potential function $\Pi(G,K,\kappa)$ as follows. Let $\mathbb{P}$ be a collection of all embeddings $\cP$ where $\cP$ embeds some graph $\Wstar$ into $G$ such that
\begin{enumerate}
	\item $\Wstar$ is an unweighted star with $V(\Wstar) = K$
\end{enumerate}
Define the cost of each vertex $v \in V$ to be $\log(\kappa(v))$. For any path $P$ in $G$ let $c(P) = \sum_{v \in P} c(v)$. The cost of an embedding $\cP$ is  $c(\cP)=\sum_{P\in\cP}c(P)$. We define $\Pi(G,K,\kappa)=\min_{\cP\in\mathbb{P}}c(\cP)$,
and we call the corresponding $\cP$ the \textbf{minimum cost embedding
} into $G$. 
\end{defn}

The proof now follows the general framework of Section \ref{subsec:core cap}. It is not hard to check that we always have $\kappa(v) \leq n \ \forall v \in V$, since once $\kappa(v) \geq n/2$ it will never increase again, because any cut $S \subseteq K$ for which $v \in N(S)$ is by definition not a sparse cut. We thus have $c(v) \leq \log(n) \ \forall v \in V$. This in turn implies that at all times $$\Pi(G,K,\kappa) \leq |K|\cdot \diam_G(K)\cdot \log(n).$$ 
To see this, consider the star formed by picking an arbitrary vertex $v \in K$, and then letting the embedding $\cP$ contain the shortest path in $G$ from $v$ to $x$ for every $x \in K$. (Note that this path may include vertices in $G \setminus K$.) Each $v-x$ path contains at most $\diam_G(K)$ vertices by definition of diameter. We have already shown that each vertex has cost $c(v) \leq \log(n)$. Finally, there are $|K|-1 < |K|$ choices for $x$. Thus, the cost of this embedding is $\leq |K|\cdot \diam_G(K)\cdot \log(n)$. 

It is also trivial to check that at the beginning of the algorithm $\Pi(G,K,\kappa) = 0$ (because $\kappa(v) = 1$ for all vertices) and that since $\kappa$ only increases, $\Pi(G,K,\kappa)$ is monotonically increasing.

Now, consider any iteration of the while loop in the algorithm that returns a sparse vertex cut $(L,S,R)$. Let $\kappa$ be the capacity function before the cut is found, and let $\kappa'$ be the capacity function after $\kappa(v)$ is doubled for all $v \in S$. Using an argument identical to that of Lemma \ref{lem:doubling effect}, it is easy to check that 
\begin{equation}
\Pi(G,K,\kappa') \geq \Pi(G,K,\kappa) + |L \cap K|/3.
\label{eqn:overview-kappa-change}
\end{equation}
The basic argument here is that at least $|L\cap K|/3$ paths in the embedding cross from $L$ to $S \cup R$ and thus go through $S$. But for each vertex in $v \in S$ we have $\kappa'(v) = 2\kappa(v)$, so $c'(v) = c(v) + 1$. This argument can be formalized using the arguments of \ref{lem:doubling effect}.

Let us again consider a sparse cut $(L,S,R)$ returned by the while loop. Since the cut was sparse, we have that $\sum_{v \in S} \kappa(v) < |L \cap K|/2$. This implies that $\sum_{v \in V} \kappa'(v) = \sum_{v \in V} \kappa(v) + \sum_{v \in S} \kappa(v) \leq \sum_{v \in V} \kappa(v) + |L \cap K|$. Combining this with Equation \ref{eqn:overview-kappa-change} we see that whenever $\sum_{v \in V} \kappa(v)$ increases by some $\Delta$, $\Pi(G,K,\kappa)$ increases by at least $\Delta / 3$. But we know that $\Pi(G,K,\kappa)$ increases monotonically from $0$ to $\Pi(G,K,\kappa) \leq |K|\cdot \diam_G(K)\cdot \log(n)$. These two facts combined imply that at all times $\sum_{v \in V} \kappa(v) \leq 3|K|\cdot \diam_G(K)\cdot \log(n)$, as desired.

\paragraph{Discussion of Running Time:}
Since Lemma \ref{lem:overview-kappa} is only concerned with the \emph{existence} of a function $\kappa$, we did not concern ourselves with the running time of the algorithm. In particular, we did not specify how to find the sparse cut $S$ in the while loop. Below, we briefly discuss how such an algorithm could be implemented.

It is not hard to check that the algorithm goes through if we allow some slack in our requirement of the spare cut returned in the while loop, and that with this slack the cut can be computed in polynomial time. One could perhaps even compute the cut in almost-linear time using more sophisticated techniques. But the total time to compute $\kappa$ will still not be linear because there could be many iterations of the while loop: each iteration might find a sparse cut $(L,S,R)$ with $|L\cap K| = n^{o(1)}$, in which case the number of iterations can be as large as $\Ohat(|K|)$, so the total running time would be at least $\Ohat(m|K|)$, which could be as large as $\Ohat(mn)$.

The above obstacle explains why our final algorithm settles on a function $\kappa$ with the slightly weaker guarantees of Lemma \ref{lem:overview-kappa-relaxed}. This relaxed lemma only guarantees capacitated expansion for \emph{balanced} cuts $(L,S,R)$, so the while loop always returns a sarse balanced cut $(L,S,R)$, or returns $\kappa$ if no such cut exists. This allows us to ensure that $|L \cap K| = \Omega(|K|/n^{o(1)})$ in each iteration, so the number of iterations is only $n^{o(1)}$.

\paragraph{Lower Bound:} We now prove the lower bound of the lemma. Consider the following graph $G$, with $K = V(G)$. Let $G_A, G_B$ be vertex expanders, with $n/3$ vertices each. Let $a$ be a vertex in $G_B$ and $b$ a vertex in $G_B$. The graph $G$ contains both $G_A$ and $G_B$, as well as a path $P$ from $a$ to $b$ with $n/3$ intermediate vertices. We have $|K| = |V(G)| = n$ and $\diam_G(K) = \diam(G) = \Theta(n)$. It is not hard to check that any function $\kappa$ for which $(K,\kappa)$ is a capacitated expanders must have $\kappa(v) \geq n^{1-o(1)}/3$ for all $v \in P$, so $\sum_{v \in V(G)} \kappa(v) =\Omegahat(n^2) = \Omegahat(|K|\diam_G(K))$, as stated in the lemma.

}
 
\section[Appendix of Part II]{Appendix of \Cref{part:dynamicShortestPaths}}

\subsection{$\CertifyCore$: Finding A Large Low-diameter Subset}
\label{sec:certify_core}

In this section, we prove  \Cref{lem:certifycore} which is restated below.

\certifyCore*

We give pseudo-code for the procedure $\textsc{CertifyOrReturnCore}(G, K, d, \epsilon)$ in \Cref{alg:certifyOrReturnCore}.

\begin{algorithm}
\caption{$\textsc{CertifyOrReturnCore}(G, K, d, \epsilon)$}
\label{alg:certifyOrReturnCore}

$K' \gets K$\;
$G' \gets G[\ball_G(K, 16 d \lg n)]$\;
\While(\label{lne:whileLoopMainOfCertifyReturnCore}){$|K'| > (1-\epsilon/2)|K|$}{
    Let $v$ be an arbitrary vertex from $K'$ \label{lne:chooseV}\;
    $i \gets 0$\;
    \While(\label{lne:whileIncreaseI}){ $\deg_{G'}(\ball_{G'}(v, 2(i+1)d)) > 2\deg_{G'}(\ball_{G'}(v, 2i \cdot d))$}{
            $i \gets i+1$    
    }
    
    \If(\label{lne:ifCaseCertify}){$|\ball_{G'}(v, 2(i+1) \cdot d) \cap K'| > (1-\epsilon/2)|K'|$}{
        $K' \gets \ball_{G'}(v, 2(i+1) \cdot d) \cap K'$ \label{lne:removeVertsFromKPrime} \;
        \Return $K'$ \label{lne:returnKPrime}
    }\Else(\label{lne:enterElseCaseToPrune}){
        $K' \gets K' \setminus \ball_{G'}(v, 2i \cdot d)$\label{lne:removeVerticesFromKPrime}\;
        $G' \gets G' \setminus \ball_{G'}(v, 2i \cdot d)$\label{lne:removeVerticesFromGraph}\;
    }
}
$K' \gets \emptyset$\;
\Return $K'$
\end{algorithm}

Here, we initially set the set $K'$ to be the full set $K$ and set the graph $G'$ to the graph $G$. Then, while there are vertices in $K'$, we choose an arbitrary vertex $v$ from $K'$ (in \Cref{lne:chooseV}). We then search the smallest non-negative integer $i$, such that $\deg_{G'}(\ball_{G'}(v, (i+1)d)) \leq 2\deg_{G'}(\ball_{G'}(v, i \cdot d))$ by repeatedly increasing $i$ if for the current value of $i$ if the property is violated (by visiting another iteration of the while-loop starting in \Cref{lne:whileIncreaseI}). Finally, when the property is satisfied, we check whether the number of vertices in $K'$, in the ball $\ball_{G'}(v, i \cdot d)$ larger than $(1-\epsilon/2)|K|$. If so, we have found a subset of $K$ of large size and small diameter and return $\ball_{G'}(v, i \cdot d) \cap K'$ to end in the first scenario of our lemma. Otherwise, we remove the vertices in $K'$ that are in the ball $\ball_{G'}(v, i \cdot d)$ from $K'$ and the edges incident to the ball from $G'$. 

Let us now analyze the procedure more carefully by proving a series of simple claims.

\begin{claim} \label{clm:halfVerticesWhenClose}
Consider an execution of the while-loop starting in \Cref{lne:whileLoopMainOfCertifyReturnCore} where $i^{final}$ is the value $i$ takes after the algorithm leaves the while-loop starting in \Cref{lne:whileIncreaseI}. Then, if we enter the else-case in \Cref{lne:enterElseCaseToPrune}, we have for every vertex $w \in \ball_{G'}(v, (2i^{final}+1) d)$ that
\[
    |\ball_{G'}(w, d) \cap K'| \leq (1-\epsilon/2)|K'|.
\]
\end{claim}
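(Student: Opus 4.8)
The plan is to unpack what the else-case and the inner while-loop condition give us, and then chain two triangle inequalities.

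First I would record the structural facts available at the moment the algorithm enters the else-case. When the inner while-loop in \Cref{lne:whileIncreaseI} terminates with value $i^{final}$, we have $\deg_{G'}(\ball_{G'}(v, 2(i^{final}+1)d)) \le 2\deg_{G'}(\ball_{G'}(v, 2i^{final}d))$; this bounds the growth of the ball one more step out, but I do not think I actually need it for this particular claim (it is used for the running-time / potential bound, not for the diameter-type statement). What I do need from entering the else-case in \Cref{lne:enterElseCaseToPrune} is the failure of the \texttt{if}-condition in \Cref{lne:ifCaseCertify}, namely $|\ball_{G'}(v, 2(i^{final}+1)d) \cap K'| \le (1-\epsilon/2)|K'|$. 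So the ball of radius $2(i^{final}+1)d$ around $v$ contains at most a $(1-\epsilon/2)$-fraction of $K'$.

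Next I would do the triangle-inequality step. Fix any $w \in \ball_{G'}(v, (2i^{final}+1)d)$ — note the radius here is $(2i^{final}+1)d$, strictly less than $2(i^{final}+1)d = (2i^{final}+2)d$, and in fact $\dist_{G'}(v,w) \le (2i^{final}+1)d$. For any $x \in \ball_{G'}(w,d)$ we have $\dist_{G'}(v,x) \le \dist_{G'}(v,w) + \dist_{G'}(w,x) \le (2i^{final}+1)d + d = (2i^{final}+2)d = 2(i^{final}+1)d$, so $x \in \ball_{G'}(v, 2(i^{final}+1)d)$. Hence $\ball_{G'}(w,d) \subseteq \ball_{G'}(v, 2(i^{final}+1)d)$, and intersecting with $K'$ and taking cardinalities gives $|\ball_{G'}(w,d)\cap K'| \le |\ball_{G'}(v,2(i^{final}+1)d)\cap K'| \le (1-\epsilon/2)|K'|$, which is exactly the claim.

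There is really no hard part here — it is a one-line containment argument — but the one point to be careful about is the asymmetry in the radii ($(2i^{final}+1)d$ for $w$ versus $2d\cdot$(integer) in the ball-growth test), which is presumably why the statement is phrased with the odd radius $(2i^{final}+1)d$: it is exactly the largest radius for which a further $d$-step stays inside $\ball_{G'}(v,2(i^{final}+1)d)$. So in writing it up I would make sure to state the radius bookkeeping explicitly and point out that distances are measured in the current (pruned) graph $G'$, which is consistent since both balls are taken in $G'$ and pruning only removes vertices, so all the inequalities above are between quantities in the same graph.
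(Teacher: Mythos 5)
Your proposal is correct and follows exactly the paper's argument: entering the else-case means the \texttt{if}-condition in \Cref{lne:ifCaseCertify} failed, giving $|\ball_{G'}(v,2(i^{final}+1)d)\cap K'|\le(1-\epsilon/2)|K'|$, and the triangle inequality gives $\ball_{G'}(w,d)\subseteq\ball_{G'}(v,2(i^{final}+1)d)$ for any $w\in\ball_{G'}(v,(2i^{final}+1)d)$, from which the claim follows. Your side remark that the inner while-loop's degree-growth condition is not needed here is also consistent with the paper's proof.
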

\begin{proof}
We have by the triangle inequality that $\ball_{G'}(w, d) \subseteq \ball_{G'}(v, 2(i^{final}+1) d)$. But since by the if-condition we have that $\ball_{G'}(v, 2(i^{final}+1) d)$ contains at most a $(1-\epsilon/2)$-fraction the vertices in $K'$ the claim follows.
\end{proof}

\begin{claim} \label{clm:ifNoIfThenFewVerticesInAnyBall}
If the if-case in \Cref{lne:ifCaseCertify} is not entered then
any vertex $w \in K$ has at most $(1-\epsilon/2)|K|$ vertices in $\ball_G(w, d) \cap K$.
\end{claim}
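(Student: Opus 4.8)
The final statement to prove is \Cref{clm:ifNoIfThenFewVerticesInAnyBall}: if the if-case in \Cref{lne:ifCaseCertify} is never entered over the entire run of \textsc{CertifyOrReturnCore}, then every vertex $w \in K$ has $|\ball_G(w,d) \cap K| \le (1-\epsilon/2)|K|$.

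\textbf{Overall approach.} The plan is to track, for each vertex $w \in K$, the iteration of the outer while-loop (\Cref{lne:whileLoopMainOfCertifyReturnCore}) in which $w$ is removed from $K'$. Since the if-case is never entered, the only way $w$ leaves $K'$ is via \Cref{lne:removeVerticesFromKPrime}, i.e.\ when $w \in \ball_{G'}(v, 2i^{final} d)$ for the current center $v$ and depth parameter $i^{final}$. At that moment, the triangle inequality gives $\ball_{G'}(w,d) \subseteq \ball_{G'}(v, (2i^{final}+1)d)$, and since we entered the else-case, \Cref{clm:halfVerticesWhenClose} (applied with this $w$) tells us $|\ball_{G'}(w,d) \cap K'| \le (1-\epsilon/2)|K'| \le (1-\epsilon/2)|K|$ (using $|K'| \le |K|$ throughout). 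The key remaining point is to transfer this bound from the \emph{current} graph $G'$ and \emph{current} set $K'$ back to the \emph{original} graph $G$ and set $K$.

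\textbf{Key steps.} First I would argue the ``monotonicity'' direction: $G'$ is only ever shrunk (vertices and incident edges are removed in \Cref{lne:removeVerticesFromGraph}), so $G'$ is always a subgraph of $G$, hence $\dist_{G'}(w,x) \ge \dist_G(w,x)$, which gives $\ball_{G}(w,d) \cap V(G') \supseteq \ball_{G'}(w,d)$ — but this is the wrong direction for an upper bound. So instead I would argue the other way: a vertex $x \in \ball_G(w,d) \cap K$ with $x \ne w$ either (a) is still in $K'$ at the moment $w$ is removed, or (b) was removed from $K'$ in an \emph{earlier} iteration. In case (a), I claim $x \in \ball_{G'}(w,d)$ as well: the crucial observation is that when $w$ is removed, $w$ lies in $\ball_{G'}(v,2i^{final}d)$, and all vertices on a shortest $G$-path from $w$ to $x$ of length $\le d$ lie within $G$-distance $\le 2i^{final}d + d$ of $v$; because the while-loop stopping condition in \Cref{lne:whileIncreaseI} ensures the ball around $v$ ``closes up'' (no edges leave $\ball_{G'}(v, 2(i^{final}+1)d)$ in $G'$, since $\deg$ does not grow), and because no edge on this short path has been deleted yet (all such vertices are still in $G'$), the path survives in $G'$, giving $\dist_{G'}(w,x) \le d$. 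Hence every such $x$ in case (a) is counted in $\ball_{G'}(w,d) \cap K'$, so $|\{x \in \ball_G(w,d)\cap K : x \text{ in case (a)}\}| \le (1-\epsilon/2)|K|$. Then I need to handle case (b) — vertices removed earlier.

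\textbf{The main obstacle.} The subtle part is case (b): a vertex $x \in \ball_G(w,d) \cap K$ might already have been removed from $K'$ before $w$ was. To deal with this cleanly, I would instead choose the argument's anchor more carefully: rather than looking at when $w$ is removed, look at the \emph{first} time that \emph{any} vertex of $\ball_G(w,d) \cap K$ is removed from $K'$. Call that iteration with center $v$, depth $i^{final}$, and let $x_0$ be the removed vertex; so $x_0 \in \ball_{G'}(v, 2i^{final} d) \cap K$. I then argue that at the \emph{start} of that iteration, $G'$ still contains all of $\ball_G(w,d)$ (none removed yet by minimality) and still contains all short $G$-paths among them (edges incident to those vertices are intact), so $\ball_G(w,d) \cap K \subseteq \ball_{G'}(x_0, 2d)$ — via $\dist_{G'}(x_0, w) \le \dist_{G'}(x_0,?) \ldots$; more precisely $w, x_0$ are both in $\ball_G(w,d)$ so $\dist_G(x_0,w) \le 2d$ and this path survives in $G'$, and similarly any $x \in \ball_G(w,d)\cap K$ has $\dist_{G'}(x_0,x)\le 2d$. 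Now, is $\ball_{G'}(x_0, 2d)$ small in $K'$? Since $x_0 \in \ball_{G'}(v,2i^{final}d)$, triangle inequality gives $\ball_{G'}(x_0,2d) \subseteq \ball_{G'}(v, 2i^{final}d + 2d) \subseteq \ball_{G'}(v, 2(i^{final}+1)d)$, and since we are in the else-case (the if-case is never entered), this ball has $\le (1-\epsilon/2)|K'| \le (1-\epsilon/2)|K|$ vertices of $K'$. Combined with the fact that $\ball_G(w,d)\cap K \subseteq K' $ at the start of that iteration (by minimality of the chosen iteration, nothing from this set was removed before), and $\ball_G(w,d)\cap K \subseteq \ball_{G'}(x_0,2d)$, we conclude $|\ball_G(w,d)\cap K| \le (1-\epsilon/2)|K|$. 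The remaining care is the boundary case where $\ball_G(w,d)\cap K$ is \emph{never} touched by a removal and instead gets emptied only at the very end (\texttt{$K' \gets \emptyset$}); but that line is reached only when $|K'| \le (1-\epsilon/2)|K|$, so in particular at that point $\ball_G(w,d)\cap K \subseteq K'$ would force $|\ball_G(w,d)\cap K| \le (1-\epsilon/2)|K|$ directly. I would write up these two sub-cases and the surviving-path argument carefully, as that is where all the real content lies.
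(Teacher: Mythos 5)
Your overall plan coincides with the paper's (anchor at the first deletion that touches the ball, use the failed if-condition via the $2(i^{final}+1)d$-ball around the current center, and handle never-touched balls through the loop-exit condition $|K'|\le(1-\epsilon/2)|K|$), but your final argument has a genuine gap in the choice of anchoring event. You anchor at the first iteration in which some vertex of $\ball_G(w,d)\cap K$ is removed from $K'$, and justify ``at the start of that iteration $G'$ still contains all of $\ball_G(w,d)$'' by this minimality. That does not follow: vertices of $\ball_G(w,d)\setminus K$ can be deleted from $G'$ in \Cref{lne:removeVerticesFromGraph} in \emph{earlier} iterations without any vertex of $\ball_G(w,d)\cap K$ leaving $K'$, since the removed set $\ball_{G'}(v',2i'd)$ need not contain a single core vertex of the ball (e.g.\ $w$ itself is only guaranteed to lie in the larger radius $2(i'+1)d$ and hence survives). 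Once such an intermediate non-core vertex is gone, the concatenated shortest $G$-paths $x_0\to w\to x$ need not survive in $G'$, so the containment $\ball_G(w,d)\cap K\subseteq\ball_{G'}(x_0,2d)$ — the crux of your bound — can fail; $x$ may even be disconnected from $x_0$ in $G'$ at the anchor iteration. (Your earlier ``case (a)'' sketch has the same problem, and the side remark that the stopping condition in \Cref{lne:whileIncreaseI} makes the ball ``close up'' is not correct: that condition only says the volume stops doubling, it does not prevent edges from leaving $\ball_{G'}(v,2(i+1)d)$.)

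The missing idea is exactly the paper's anchoring: take the first iteration in which \emph{any} vertex of $\ball_G(w,d)$ — core or not — is removed from $G'$. With this stronger minimality, at the start of that iteration all of $\ball_G(w,d)$ is intact in $G'$, and since $G'$ is an induced subgraph on a shrinking vertex set, all shortest $G$-paths inside the ball survive, giving $\ball_{G'}(w,d)=\ball_G(w,d)$; moreover, because (the if-case never being entered) removals from $K'$ in \Cref{lne:removeVerticesFromKPrime} coincide with removals from $G'$ in \Cref{lne:removeVerticesFromGraph}, no vertex of $\ball_G(w,d)\cap K$ has yet left $K'$, so $\ball_G(w,d)\cap K=\ball_{G'}(w,d)\cap K'$. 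The removed vertex $w'$ satisfies $w'\in\ball_{G'}(v,2i^{final}d)$ and $\dist_{G'}(w,w')\le d$, so one can either apply \Cref{clm:halfVerticesWhenClose} directly to $w$ (the paper's route) or run your $2d$-ball computation around $w'$ instead of $x_0$; either way $|\ball_G(w,d)\cap K|\le(1-\epsilon/2)|K'|\le(1-\epsilon/2)|K|$. Your treatment of the remaining case — the ball is never touched, so $\ball_G(w,d)\cap K\subseteq K'$ when the while-loop of \Cref{lne:whileLoopMainOfCertifyReturnCore} exits with $|K'|\le(1-\epsilon/2)|K|$ — is correct and matches the paper.
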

\begin{proof}
Let us first focus on vertices $w \in K$ that have some vertex $w'$ from $\ball_G(w, d)$ that is removed from $G'$ at some point of the algorithm. That is, the algorithm removes $\ball_{G'}(v,2i^{final} \cdot d)$ for some $v\in K$ and some number $i^{final}$ and $w' \in \ball_{G'}(v,2i^{final} \cdot d) \cap \ball_G(w,d)$.

For each such vertex $w$, we consider the while-loop iteration starting in \Cref{lne:whileLoopMainOfCertifyReturnCore} at the time when the algorithm first removes a vertex $w'$ from $\ball_G(w, d)$ from the graph $G'$. We observe that up to  \Cref{lne:removeVerticesFromGraph}, we have never removed a vertex $w'$ from $\ball_G(w, d)$ from $G'$ and therefore up to this point, we have that $\ball_G(w, d) = \ball_{G'}(w, d)$ (technically we also have to argue that $G'$ is initialized to an induced graph of $G$ but it is clear that none of the vertices not in $G'$ are in $\ball_G(w, d)$ either). In fact, since vertices that are removed from $K'$ are in the balls that are removed from $G'$, we have in fact that up to this point $\ball_G(w, d) \cap K = \ball_{G'}(w, d) \cap K'$. But since the else-case in \Cref{lne:enterElseCaseToPrune} is entered in the iteration where the first such vertex $w'$ exists, we have by \Cref{clm:halfVerticesWhenClose} that $|\ball_{G}(w, d) \cap K| = |\ball_{G'}(w, d) \cap K'| \leq (1-\epsilon/2)|K'| \leq (1-\epsilon/2)|K|$, as desired. Note that we can invoke \Cref{clm:halfVerticesWhenClose} because $w' \in \ball_{G'}(v,2i^{final}\cdot d)$ and so $w \in \ball_{G'}(v,(2i^{final}+1) \cdot d)$ as needed in \Cref{clm:halfVerticesWhenClose}.

Otherwise, we have by the same argument that for any vertex $w$ in $K$ that had no vertex $w'$ from $\ball_G(v, d)$ removed from $G'$ that at termination of the while-loop starting in \Cref{lne:whileLoopMainOfCertifyReturnCore}, we have $\ball_G(w, d) \cap K = \ball_{G'}(w, d) \cap K'$. But by the while-loop condition, we have that $|K'| \leq (1-\epsilon /2) |K|$ at that point which establishes our claim.
\end{proof}

This claim establishes the Core Property in \Cref{lem:certifycore}. It remains to establish the first Property in of the Lemma where we start with proving that the diameter of the core that is returned in the if-statement in \Cref{lne:ifCaseCertify} is small.

\begin{claim} \label{clm:IRemainsSmall}
The integer variable $i$ is always chosen to be at smaller $2 \lg n$.
\end{claim}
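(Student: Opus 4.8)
The plan is to prove \Cref{clm:IRemainsSmall} by a volume-doubling argument. The key observation is that each time the inner while-loop (starting in \Cref{lne:whileIncreaseI}) increments $i$, the condition $\deg_{G'}(\ball_{G'}(v, 2(i+1)d)) > 2\deg_{G'}(\ball_{G'}(v, 2i\cdot d))$ must have held. In particular, starting from $i=0$ and reaching some value $i$, the volume (sum of degrees) of the ball around $v$ has more than doubled $i$ times, so $\deg_{G'}(\ball_{G'}(v, 2i\cdot d)) > 2^{i} \cdot \deg_{G'}(\ball_{G'}(v, 0)) \ge 2^{i}$, using that $\deg_{G'}(\ball_{G'}(v,0)) = \deg_{G'}(v) \ge 1$ (the vertex $v$ is in $K' \subseteq V(G')$ and we may assume it has at least one incident edge; if $G'$ has an isolated vertex in $K'$ the claim is trivial since $i$ stays $0$).

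First I would establish that $\deg_{G'}(V(G')) \le 2m \le 2n^2$, or more precisely that $\deg_{G'}(\ball_{G'}(v, 2i\cdot d)) \le \deg_G(\ball_G(K, 16d\lg n))$ since $G'$ is a subgraph of $G[\ball_G(K,16d\lg n)]$. Bounding this quantity by $n^2$ (a crude bound: the total number of edge-endpoints in any graph on $\le n$ vertices with bounded degree is $O(n)$, but even $n^2$ suffices) and combining with the lower bound $\deg_{G'}(\ball_{G'}(v, 2i\cdot d)) > 2^i$, we get $2^i < n^2$, hence $i < 2\lg n$. Thus the inner while-loop condition must fail before $i$ reaches $2\lg n$, which is exactly the claim. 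Since the graph $G$ has bounded degree by assumption, in fact $\deg_G(\ball_G(K,16d\lg n)) = O(n)$, giving an even tighter bound $i < \lg n + O(1)$, but $2\lg n$ is all that is needed.

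I would then note the immediate corollary relevant to bounding the diameter: whenever the if-case in \Cref{lne:ifCaseCertify} is entered and returns $K' = \ball_{G'}(v, 2(i^{final}+1)\cdot d) \cap K'$, every vertex of the returned set is within distance $2(i^{final}+1)d \le 2(2\lg n)d + 2d \le 16d\lg n$ (for $n$ large enough, say $\lg n \ge 1$) of $v$ in $G'$, and hence within that same distance in $G$ by monotonicity of distances under taking subgraphs being reversed — one must be careful here: distances in a subgraph $G'$ are at least distances in $G$, so $\dist_G(v, w) \le \dist_{G'}(v, w) \le 2(i^{final}+1)d$. Therefore $\diam_G(K') \le 2 \cdot 2(i^{final}+1)d \le 16d\lg n$ by the triangle inequality through $v$, which combined with the if-condition $|K'| > (1-\epsilon/2)|K| \ge (1-\epsilon)|K|$ and the size only ever shrinking, establishes the Core alternative of \Cref{lem:certifycore}.

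The main obstacle, though it is a mild one, is pinning down the exact constants: one must verify that the $16d\lg n$ radius used to define $G'$ in \Cref{alg:certifyOrReturnCore} is genuinely large enough that no ball $\ball_{G'}(v, 2(i+1)d)$ with $i < 2\lg n$ ``escapes'' $G'$ in a way that would invalidate the volume-doubling count — but since $G'$ is only ever shrunk by the algorithm and balls are computed within $G'$, this is automatic: the volume-doubling chain is entirely internal to $G'$. A second minor point is handling the edge case where $\deg_{G'}(v) = 0$ or the while-loop condition fails at $i = 0$; in both cases $i$ never increments and the claim holds vacuously. Beyond these bookkeeping items, the argument is a short and standard application of the pigeonhole/volume-growth principle, so I would expect the full write-up to be only a few lines.
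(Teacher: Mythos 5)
Your proposal is correct and is essentially the paper's own argument: the paper also proves \Cref{clm:IRemainsSmall} by the same volume-doubling contradiction, noting $\deg_{G'}(\ball_{G'}(v,0))\ge 1$ and that $i$ increments force $\deg_{G'}(\ball_{G'}(v,2i\cdot d)) > 2^{i}$, which would exceed the total edge volume of $G'$ once $i \ge 2\lg n$. Your extra care with the isolated-vertex edge case and the downstream diameter corollary is fine but not needed for this claim.
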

\begin{proof}
For the sake of contradiction, let us assume that there is a time when the variable $i$ takes a value larger-equal than $2 \lg n$.

We first observe that in each iteration of the while-loop starting in \Cref{lne:whileLoopMainOfCertifyReturnCore}, the variable $i$ is initialized to $0$. Further, whenever $i$ is increased by one, we have that $\deg_{G'}(\ball_{G'}(v, (i+1)d)) \leq 2\deg_{G'}(\ball_{G'}(v, i \cdot d))$. Thus, we have
that
\begin{itemize}
    \item $\deg_{G'}(\ball_{G'}(v, 0)) \geq 1$, and
    \item for every $i < 2 \lg n$. we have $\deg_{G'}(\ball_{G'}(v, (i+1)d)) > 2\deg_{G'}(\ball_{G'}(v, i \cdot d))$.
\end{itemize}
Therefore, by induction we can straight-forwardly establish that
\[
    \deg_{G'}(\ball_{G'}(v, 2 \lg n \cdot d)) > 2^{2 \lg n} \geq n^2.
\]
But this leads to a contradiction since it implies $|E(G')| < \deg_{G'}(\ball_{G'}(v, (2 \lg n) \cdot d))$.
\end{proof}

We are now well-equipped to prove \Cref{lem:certifycore} which is again restated below for convenience.

\certifyCore*

\begin{proof}
We have that if \Cref{alg:certifyOrReturnCore} returns in  \Cref{lne:returnKPrime}, then we have that the final $K'$ is of size at least $(1-\epsilon/2)|K|$ since the while-loop condition ensured that in the $|K'| > (1-\epsilon/2)|K|$ and at most a $(1-\epsilon/2)$-fraction of the vertices from $K'$ remain in $K'$ in the if-statement in \Cref{lne:ifCaseCertify} by the condition of the if-statement. Thus, there are at least $(1-\epsilon/2)(1-\epsilon/2)|K| > (1-\epsilon)|K|$ vertices in the final $K'$. Further, by only leaving vertices in $K'$ that are contained in the same ball in $G'$ of radius $2(i+1) \cdot d$ where $i < 2 \lg n$ by  \Cref{clm:IRemainsSmall}, we certainly have that the diameter of the returned set $K'$ in $G \supseteq G'$ is at most $2 \cdot 2 \cdot 2 \lg n \cdot d \leq 8d \log n$ (where we use the radius to the the center of the ball and the triangle inequality). Thus, in this case, $K'$ satisfies the Scattered Property. 

Otherwise, we have by \Cref{clm:ifNoIfThenFewVerticesInAnyBall} that every vertex in $K$ has only few vertices in $K$ in its ball of radius $d$, thus satisfying the Core Property. 

Finally, let us bound the running time. Here we observe that each while-loop iteration starting in \Cref{lne:whileLoopMainOfCertifyReturnCore} can first run Dijkstra's algorithm to compute the smallest $i$ value and all information for the rest of the while-loop by running from the chosen vertex $v$ on $G'$ to depth $2(i+1)d$. It is not hard to see that the running time of the entire loop iteration is therefore dominated by $O(\deg_{G'}(\ball_{G'}(v, 2(i+1)d)) \log n)$. However, if we do not enter the if-case, we also remove $\Omega(\deg_{G'}(\ball_{G'}(v, 2(i+1)d)))$ edges from $G'$ in the else-loop since the ball $\ball_{G'}(v, 2i \cdot d)$ has at least half the volume by choice of $i$. Thus, the total running time of all such while-loops is upper bounded by $O(\deg_G(\ball_G(K, 16 d \lg n)) \log n)$. Since the algorithm returns upon entering the if-case, it can also only use additional time $O(\deg_G(\ball_G(K, 16 d \lg n)) \log n|)$ in the iteration not considered so far. This establishes the total running time and thereby the lemma.
\end{proof}

\subsection{$\EmbedCore$: Embedding Expanders into Hypergraphs}

In this section, we show the procedure used by Algorithm \ref{alg:Core} for either finding a sparse cut or embedding an expander into a hypergraph. The algorithm is a standard combination of flow algorithms and the cut-matching game. The only non-standard element is that we need an algorithm for finding sparse cuts in \emph{hypergraphs}, which we call $\EmbedMatching$, and which was already developed in \cite{BernsteinGS20scc}.

We now restate the Lemma $\EmbedCore$ that we aim to prove. See Theorem \ref{thm:CMG} below for the definition of parameter $\phicmg$.

\embedwitness*

\label{sec:embedwitness}

We now recap three existing lemmas that we use to prove Lemma \ref{lem:embedwitness}

\subsubsection{First Ingredient: Embedding Matchings into Hypergraphs}

Our algorithm $\EmbedCore$ uses as a subroutine an existing algorithm from \cite{BernsteinGS20scc} that is given a hypergraph $H$ and either finds a sparse cut in $H$ or embeds a perfect matching into $H$ with low congestion.

\begin{lem}[\cite{BernsteinGS20scc}]
There is an algorithm $\EmbedMatching(H,A,B,\kappa,\eps)$ that is given
a hypergraph graph $H=(V,E)$, two disjoint sets of terminals $A,B\subseteq V$
where $|A|\le|B|$, a vertex capacity function $\kappa:V\rightarrow\frac{1}{z}\mathbb{Z}_{\ge0}$
such that $\kappa(v)\ge2$ for all terminals $v\in A\cup B$ and $\kappa(v)\le\kappa(V)/2$
for all vertices $v\in V$, and a balancing parameter $\eps>0$. (The integrality parameter $z$ will appear in the guarantees of the algorithm.) Then
the algorithm returns either 
\begin{itemize}
\item \textbf{(Sparse Cut):} a vertex cut $(L,S,R)$ in $H$ such that $\min\{|L\cap A|,|R\cap B|\}\ge\eps|A|$
and $\kappa(S)\le2\min\{|L\cap A|,|R\cap B|\}$; OR 
\item \textbf{(Matching):} an embedding $\cP$ that embeds a $\frac{1}{z}$-integral
matching $M$ from $A$ to $B$ of total value at least $(1-3\eps)|A|$
into $H$ where the congestion of $\cP$ w.r.t. $\kappa$ is at most
$1$ and the length of $\cP$ is at most $\len(\cP)\le O(\kappa(V)\log(\kappa(V))/(|A|\eps^{2}))$. 
More precisely, each path in $\cP$ has length at most $O(\kappa(V)\log(\kappa(V))/(|A|\eps^{2}))$ and
for each vertex $v\in V$, $\sum_{P\in\cP_{v}}\val(P)\leq\kappa(v)$,
where $\cP_{v}$ is the set of paths in $\cP$ containing $v$.
Moreover, each path in $\cP$ is a simple path. 
\end{itemize}
The running time of the algorithm is $\Otil(|H|\frac{\kappa(V)}{\eps|A|}+z\kappa(V)/\eps)$, where $|H|=\sum_{e\in E}|e|$, and $z$ is the smallest parameter such that $\kappa$ is $z$-integral, i.e. such that $\kappa:V\rightarrow\frac{1}{z}\mathbb{Z}_{\ge0}$ 
\end{lem}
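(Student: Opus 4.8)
The plan is to reduce the hypergraph vertex-capacitated routing problem to an ordinary edge-capacitated $s$-$t$ flow problem on an auxiliary digraph, run a length-constrained blocking-flow computation on it, and read off either a matching embedding or a balanced sparse cut from the outcome. Concretely, build $\widetilde{H}$ from $H$ as follows: split every vertex $v\in V$ into $v_{in}\to v_{out}$ joined by an arc of capacity $\kappa(v)$; for every hyperedge $e\in E$ add a node $x_e$ together with infinite-capacity arcs $v_{out}\to x_e$ and $x_e\to v_{in}$ for each $v\in e$; finally add a super-source $s$ with unit-capacity arcs $s\to a_{in}$ for $a\in A$ and a super-sink $t$ with unit-capacity arcs $b_{out}\to t$ for $b\in B$. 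Two structural facts drive the reduction. First, an $s$-$t$ path of $\widetilde{H}$ of length $\ell$ projects to a simple path in $H_{bip}$ of length $O(\ell)$ between the corresponding terminals of $A$ and $B$; hence an $s$-$t$ flow supported on short paths yields an embedding $\cP$ of short paths in $H$, and the flow value across $v_{in}\to v_{out}$ equals $\sum_{P\in\cP_v}\val(P)$. Second, an $s$-$t$ edge cut of $\widetilde{H}$ that uses only the arcs $v_{in}\to v_{out}$ corresponds exactly to a vertex cut $(L,S,R)$ of $H$ with $\kappa(S)$ equal to the cut capacity; the point of making the $x_e$-arcs uncapacitated is precisely to force min cuts to avoid hyperedges.

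Next I would run the standard deterministic push-relabel algorithm restricted to augmenting paths of length at most $h := \Theta\!\big(\kappa(V)\log(\kappa(V))/(|A|\eps^{2})\big)$, starting from the preflow that saturates every $s$-arc. Since all capacities are multiples of $1/z$ (the $s$- and $t$-arcs having capacity $1$), the resulting flow $f$ is $1/z$-integral, and by construction no residual $s$-$t$ path of length $\le h$ remains, i.e.\ the label of $t$ has reached $h$. This bounded-depth computation can be implemented in the claimed $\Otil\!\big(|H|\kappa(V)/(\eps|A|)+z\kappa(V)/\eps\big)$ time: the first term bounds the level-based relabel/push work (there are $h$ levels and $|\widetilde H|=O(|H|)$), and the additive $z\kappa(V)/\eps$ term accounts for the $1/z$-granularity of individual flow updates; I would defer the precise accounting to \cite{BernsteinGS20scc}.

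If the flow value satisfies $|f|\ge(1-3\eps)|A|$, I would decompose $f$ into $1/z$-integral $s$-$t$ paths, cancel any cycles so the paths are simple, delete the endpoints $s,t$, and read off the matching $M\subseteq A\times B$ together with the embedding $\cP$. Then the value of $M$ equals $|f|\ge(1-3\eps)|A|$; every path in $\cP$ has length $\le h = O(\kappa(V)\log(\kappa(V))/(|A|\eps^{2}))$; the congestion at each $v$ is $\sum_{P\in\cP_v}\val(P)\le\kappa(v)$ because $f$ respects the $v_{in}\to v_{out}$ capacity; and each path is simple. This is the ``Matching'' output, and all its stated guarantees are immediate.

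The main obstacle is the complementary case: extracting a \emph{balanced} sparse vertex cut from a deficient bounded-depth flow. I would use the residual BFS level sets, letting $L_i$ be the set of unsplit vertices whose $v_{in}$ lies within residual distance $i$ of $s$ in $\widetilde{H}_f$. Since $t$ has residual distance $>h$ and $\sum_i\kappa(L_i)\le\kappa(V)$, some index $i<h$ has the total capacity of $v_{in}\to v_{out}$ arcs crossing level $i$ at most $O(\kappa(V)/h)\le 2\eps|A|$ by the choice of $h$; set $S$ to be those vertices, $L=\bigcup_{j\le i}L_j$, $R=V\setminus(L\cup S)$, which is a valid vertex cut of $H$ because residual arcs move only between adjacent levels. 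The delicate part is showing $\min\{|L\cap A|,|R\cap B|\}\ge\eps|A|$: the deficiency $|A|-|f|>3\eps|A|$ forces, via max-flow/min-cut, more than $3\eps|A|$ units of source capacity to be separated from $t$, so more than $3\eps|A|$ terminals of $A$ sit on the $L$-side with their $s$-arc uncut and unsaturated; after discarding at most $\eps|A|$ ``boundary'' terminals on each side — and, if the chosen level leaves one side too thin, re-running the procedure on the still-deficient side to amplify the balance (the standard iterative ball-growing trick) — one obtains $|L\cap A|,|R\cap B|\ge\eps|A|$, whence $\kappa(S)\le 2\eps|A|\le 2\min\{|L\cap A|,|R\cap B|\}$, the required ``Sparse Cut'' guarantee. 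Carrying out this balance accounting deterministically and within the stated running time is exactly the technical content proved in \cite{BernsteinGS20scc}, which I would invoke rather than reproduce.
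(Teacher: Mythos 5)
First, a point of calibration: the paper never proves this lemma at all --- it is imported as a black box from \cite{BernsteinGS20scc} and used only inside the proof of \Cref{lem:embedwitness} --- so there is no internal proof to compare you against; the question is whether your sketch would stand on its own. Your reduction (vertex splitting $v_{in}\to v_{out}$ with capacity $\kappa(v)$, uncapacitated hyperedge nodes so that min cuts avoid hyperedges, unit source/sink arcs, a height-bounded flow computation, and a residual level-cut when the flow is deficient) is the right flavor and is consistent with how such hypergraph matching/cut subroutines are proved; the ``Matching'' case of your argument is essentially complete.

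The gap is in the ``Sparse Cut'' case, which is the actual content of the lemma. You assert $\min\{|L\cap A|,|R\cap B|\}\ge\eps|A|$ via ``discarding boundary terminals'' and an unspecified ``re-running/ball-growing'' amplification, and you never use the hypothesis $\kappa(v)\ge 2$ for terminals --- but that hypothesis is exactly what closes the argument, and no amplification is needed: once the level $i$ is chosen so that $\kappa(S)\le 2\eps|A|$, any terminal absorbed into $S$ contributes at least $2$ to $\kappa(S)$, so at most $\eps|A|$ terminals per side are lost; since a deficiency $|A|-|f|>3\eps|A|$ leaves more than $3\eps|A|$ unsaturated source arcs (those $a\in A$ sit at residual level $1$, hence in $L\cup S$) and at least $|B|-|f|>3\eps|A|$ unsaturated sink arcs (those $b\in B$ have $b_{out}$ at residual level $\ge h$, hence in $R\cup S$), one gets $|L\cap A|,|R\cap B|\ge 2\eps|A|$ directly, and then $\kappa(S)\le 2\eps|A|\le 2\min\{|L\cap A|,|R\cap B|\}$. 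As written, your balance step is hand-waved and partly misdirected. The running time is likewise deferred wholesale to \cite{BernsteinGS20scc} (note also that a naive $|H|\cdot h$ accounting with your $h=\Theta(\kappa(V)\log(\kappa(V))/(|A|\eps^{2}))$ gives an extra $1/\eps$ over the claimed $\Otil(|H|\kappa(V)/(\eps|A|))$, and $1/\eps$ is not polylogarithmic in this paper's parameter regime). Since the statement you are proving \emph{is} the cited result, deferring these two steps to that citation makes the write-up circular: either cite the lemma outright, as the paper does, or carry out the cut-balance and runtime accounting in full along the lines above.
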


\subsubsection{Second Ingredient: Cut-matching Game }

\paragraph{Deterministic Cut-matching Game.}

The cut-matching game is a game that is played between two players,
called the \emph{cut player} and the \emph{matching player}. The game
starts with a graph $W$ whose vertex set $V$ has cardinality $n$,
and $E(W)=\emptyset$. The game is played in rounds; in each round
i, the cut player chooses a partition $(A_{i},B_{i})$ of $V$ with
$|A_{i}|\le|B_{i}|$, $|A_i| \geq |V|/2-1$ and $|B_i| \geq |V|/2 - 1$. The matching player then chooses an arbitrary
$1/z$-integral matching $M_{i}$ that matches every vertex of $A_{i}$
to some vertex of $B_{i}$. (That is, the total weight of edges in
$M$ incident to each vertex in $A_{i}$ is exactly $1$ and the total
weight of edges in $M$ incident to vertex in $B_{i}$ is at most
$1$). The edges of $M_{i}$ are then added to $W$, completing the
current round. (Note that $W$ is thus a weighted multigraph; the edges of each $M_i$ are weighted, and if $M_i$ and $M_j$ both contain an edge $(x,y)$ then for simplicity we just think of $W$ as containing two copies of $(x,y)$.) Intuitively, the game terminates once graph $W$ becomes
a $\phi$-expander, for some given parameter $\phi$. It is convenient
to think of the cut player\textquoteright s goal as minimizing the
number of rounds, and of the matching player\textquoteright s goal
as making the number of rounds as large as possible. We will use the
following theorem from \cite{ChuzhoyS20_apsp} which says that there
is a fast deterministic algorithm for the cut player that ends
this game within $O(\log n)$ rounds.
\begin{theorem}
\label{thm:CMG}
[Deterministic Algorithm for Cut Player (Theorem B.5 of \cite{ChuzhoyS20_apsp} or Theorem 7.1 of \cite{BernsteinGS20scc})]Let
$\phicmg=1/2^{\Theta(\log^{3/4}n)}$. There is a deterministic algorithm,
that, for every round $i\ge1$, given the graph $W$ that serves as
input to the $i$-th round of the cut-matching game, produces, in
time $O(zn/\phicmg)$, a partition $(A_{i},B_{i})$ of $V$ with $|A_{i}|\le|B_{i}|$,  $|A_i| \geq |V|/2-1$, $|B_i| \geq |V|/2 - 1$
such that, no matter how the matching player plays, after $R=O(\log n)$
rounds, the resulting graph W is a $\phicmg$-expander, $V(W) = V$, and every vertex in $W$ has weighted degree at least $1$. 
\end{theorem}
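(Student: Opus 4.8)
The plan is to derive Theorem \ref{thm:CMG} essentially verbatim from the deterministic cut-player algorithm of Chuzhoy et al.\ (their Theorem B.5, equivalently Theorem 7.1 of \cite{BernsteinGS20scc}), adapting only for the weighted/fractional matchings played by the matching player and for the weighted-degree guarantee. The structure I would follow is the standard cut-matching game analysis: in round $i$, given the current multigraph $W$ (where each $M_j$ added so far is represented by its $1/z$-integral weighted edges, so $W$ has $O(zn)$ edges per round), the cut player runs the deterministic ``most-balanced sparse cut or expander certificate'' subroutine on $W$. This subroutine either certifies that $W$ is already a $\phicmg$-expander --- in which case the game is over --- or returns a balanced partition $(A_i,B_i)$ with $|A_i|\le|B_i|$ and $|A_i|,|B_i|\ge|V|/2-1$, which is exactly the balance guarantee the subroutine provides; its running time is $O(zn/\phicmg)$, the $z$ entering through the multigraph edge count.

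The bound $R=O(\log n)$ is obtained by a potential argument that I would import wholesale. One tracks a potential $\Psi$ (in the KRV-style analysis an entropy-like quantity, in later analyses the variance of a suitable random-walk/projection vector) with the property that whenever the cut player plays a sufficiently balanced partition and any matching consistent with that partition is added to $W$, $\Psi$ shrinks by a constant factor. Since $\Psi$ starts at $\poly(n)$ and a $\phicmg$-expander corresponds to $\Psi\le1/\poly(n)$, after $O(\log n)$ rounds $W$ must be a $\phicmg$-expander. The deterministic subroutine of \cite{ChuzhoyS20_apsp} is precisely engineered so that the partition it outputs forces this geometric drop regardless of the matching player's response, and the analysis is insensitive to whether the matchings are integral perfect matchings or the $1/z$-integral weighted matchings we allow (each $A_i$-vertex still receives matched weight exactly $1$), so no change to that part of the argument is needed.

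The remaining claims are easy add-ons. We always have $V(W)=V$ because the cut player's partitions cover all of $V$ and the matching player matches every vertex of $A_i$, so no vertex is ever dropped; the ``weighted degree at least $1$'' claim follows because every vertex is placed on the $A$-side in at least one round (and being fully matched there contributes weighted degree exactly $1$), or can be read off directly from the connectivity of the final $\phicmg$-expander. In practice I would simply cite \cite{ChuzhoyS20_apsp} (Theorem B.5) and \cite{BernsteinGS20scc} (Theorem 7.1) for all of this. The only ``obstacle'' is bookkeeping: verifying that the fractional/weighted matching player's moves do not disturb either the $O(\log n)$-round bound or the final $\phicmg$ expansion, and tracking the $z$-dependence in the running time and in the size of the multigraph $W$ passed to the subroutine --- there is no genuinely new mathematical content, as the substance lives entirely in the cited theorems.
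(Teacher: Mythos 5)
Your proposal takes essentially the same route as the paper: the paper gives no proof of Theorem \ref{thm:CMG} at all, but imports it verbatim as a black box from Theorem B.5 of \cite{ChuzhoyS20_apsp} (equivalently Theorem 7.1 of \cite{BernsteinGS20scc}), which is exactly what you propose to do, with the weighted/$1/z$-integral matching and running-time bookkeeping already absorbed into those cited statements. Nothing further is needed.
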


\subsubsection{Third Ingredient: Expander Pruning}

Finally, we restate the lemma for expander pruning

\PruningLemma*

\subsubsection{Proof of Lemma \ref{lem:embedwitness}}
Armed with the three ingredients above, we can now present the algorithm for $\EmbedCore$ from Lemma \ref{lem:embedwitness}
\newcommand{\mstar}{M^*}
\newcommand{\wu}{W_u}
Recall that $z$ is the integrality parameter of input function $\kappa$, i.e. the smallest positive integer such that $\kappa:V\rightarrow\frac{1}{z}\mathbb{Z}_{\ge0}$.

The algorithm $\EmbedCore(H,K,\kappa)$ starts by initiating the cut-matching game (Theorem \ref{thm:CMG}) on vertex set $K$.  Let $R = O(\log(|K|))$ be the maximum number of rounds in the cut-matching game. The cut player from theorem \ref{thm:CMG} provides the terminal sets $A_i,B_i$ at every round $i$. To simulate the matching player the algorithm $\EmbedCore$ will, in each round, either find a sparse cut and terminate or return a matching $M_i$ from $A_i$ to $B_i$. In particular, in round $i$ of the cut-matching game, the algorithm runs $\EmbedMatching(H,A_i,B_i,\kappa,\epswit)$, where $\epswit = \phicmg/\log^2(n)$ is the parameter from $\EmbedCore$. 

If $\EmbedMatching(H,A_i,B_i,\kappa,\eps)$ returns a cut $(L,S,R)$ then $\EmbedCore$ can return the same cut $(L,S,R)$ and terminate. 

The other case is that $\EmbedMatching(H,A_i,B_i,\kappa,\eps)$ returns a matching $\mstar_i$ from $A_i$ to $B_i$ along with a corresponding embedding $\cP_i$ of $\mstar_i$ into the graph $H$. Note that the algorithm cannot simply use $\mstar_i$ as the matching $M_i$ in the ith round of the cut-matching game because the cut matching game requires a matching $M_i$ of size exactly $|A_i|$, while $\EmbedMatching$ only guarantees that matching $\mstar_i$ has size $(1-3\eps)|A_i|$. To overcome this, the algorithm chooses an arbitrary set of ``fake'' edges $F_i \in A_i \times B_i$ such that $\mstar_i \cup F_i$ is a matching from $A_i$ to $B_i$ of size exactly $|A_i|$; the set $F_i$ can trivially be computed by repeatedly adding edges of weight $1/z$ from an (arbitrary) unsaturated vertex in $A_i$ to an (arbitrary) unsaturated vertex in $B_i$. (Adding multiple copies of the same edge corresponds to increasing the weight of that edge.) The algorithm then returns $M_i = \mstar_i \cup F_i$ inside the cut-matching game. Note that unlike the edges of $\mstar_i$, we do not embed the fake edges of $F_i$ into $G$. 

If in any round $i$ the subroutine $\EmbedMatching$ returns a cut then the algorithm terminates. Thus the only case left to consider is when in each round $i$ the algorithm returns $\mstar_i$ and $\cP_i$. Let $\mstar$ be the union of all the $\mstar_i$ and let $F$ be the union of all the $F_i$. Let $\Wstar = (V,\mstar \cup F)$. Theorem $\ref{thm:CMG}$ guarantees that $\Wstar$ is a $\phicmg = 1/n^{o(1)}$ expander. Note, however, that we cannot return $\Wstar$ as our witness because there is no path set corresponding to $F$ (we never embedded the edges in $F$). We also cannot simply remove $F$ as $\mstar$ on its own might not be an expander.

Instead, we apply expander pruning from Lemma \ref{lem:prune}. Recall that $\Wstar = (V,\mstar \cup F)$. We would like to apply pruning directly to $\Wstar$, but Lemma \ref{lem:prune} only applies to unweighted multi-graphs. Since $\EmbedMatching$ guarantees that all edge-weights in $\mstar$ are $1/z$ integral, we know that all edge weights in $W$ are also multiples of $1/z$. We can thus convert $\Wstar$ to an equivalent unweighted multigraph $\Wstar_u$ in the natural way: every edge $e \in \Wstar$ is replaced by $w(e) \cdot z$ copies of an unweighted edge. Note that $\Wstar$ has total weight $O(|K|\log(|K|))$, because it contains $R = O(\log(|K|))$ matchings, each of weight $O(|K|)$; thus $\wu$ contains $O(z|K|\log(|K|))$ edges. We now run $\Prune(\Wstar_u,\phicmg)$, where we feed in all the edges in $\Wstar_u$ corresponding to $F$ as adversarial deletions. Let $X \subset K$ be the set returned by pruning, set $\wu = \Wstar_u[X]$ and $W = \Wstar[X]$. 

We now define the embedding $\cP$ of $W$ into $H$. We will have that $\cP \subseteq \bigcup \cP_i$. Consider any edge $(u,v) \in W$. By construction of $W$, we know that $(u,v)$ comes from some $\mstar_i$; $(u,v)$ cannot come from any of the $F_i$, because all of the edges in $F$ were pruned away. Thus, we simply add to $\cP$ the path from $\cP_i$ used to embed edge $(u,v) \in \mstar_i$. 

Let us now prove that $W$ satisfies the desired properties of $\EmbedCore$. We know from the cut-matching game (Theorem \ref{thm:CMG}) that $\Wstar$ has expansion $\phicmg$, so the same holds for $\Wstar_u$, since the two graphs clearly have identical expansion. By the guarantees of pruning, $W_u$ and $W$ thus have expansion $\phicmg/6 = \Omega(\phicmg)$, as desired. It is clear by construction that $V(\Wstar) \subset K$, so $V(W) \subset K$. 

Let us now argue that $|V(W)| \geq |K| - o(|K|)$. We know that $V(\Wstar_u) = V(\Wstar) = K$. Recall that Our algorithm feeds all edges in $\Wstar_u$ that correspond to $F$ as adversarial deletions to $\Prune(\Wstar_u,\phicmg)$. It is not hard to check that the number of such deleted edges is at most $3zR\epswit |K| = O(z\phicmg|K|/\log(n))$, because each $F_i$ contains a total weight of at most $3\epswit|A| \leq 3\epswit |K|$, there are $R$ different values of $i$, and by construction the multiplicity of each edge in $\Wstar_u$ is equal to $z$ multiplied by its weight in $F$. Thus, recalling that $X \subseteq K$ is the set returned by $\Prune(\Wstar_u,\phicmg)$, we have by Lemma \ref{lem:prune} that $\vol_{\Wstar_u}(K \setminus X) = O(z|K|/\log(n))$. This implies that $\vol_{\Wstar}(K \setminus X) = O(|K|/\log(n))$; since we know from the cut-matching game (Theorem \ref{thm:CMG}) that every vertex in $W$ has weighted degree at least $1$, we have that $|K \setminus X| = O(|K|/\log(n))$, so $|V(W)| = |X| \geq |K| - o(|K|)$, as desired. 

We now argue about the weights in $W$. The fact that total edge weight in $W$ is at most $O(|K|\log(|K|))$ follows from the fact that each matching $M_i$ has weight at most $|K|$ and there are $R = O(\log(|K|))$ rounds of the cut matching game. Finally, we need to show that there are only $o(|K|)$ vertices in $W$ with weighted degree $\leq 9/10$. This follows straightforwardly from the facts that all vertices have weighted degree $\geq 1$ in $\Wstar$ (Theorem \ref{thm:CMG}) and that $W = \Wstar[X]$, where, as argued in the paragraph above, $\vol_{\Wstar}(K \setminus X) = O(|K|/\log(n))$. 

We now argue about the embedding $\cP$. The congestion follows from the fact that each $\cP_i$ embeds $\mstar_i$ with congestion 1 with respect to $\kappa$, so since there are at most $R = O(\log(|K|))$ rounds in the cut-matching game, the total congestion in $\bigcup_i \cP_i$ is at most $\log(|K|)$ with respect to $\kappa$, so the same holds for $\cP$ because $\cP \subseteq \bigcup \cP_i$. The length and simplicity of paths in $\cP$ returned by $\EmbedCore$ follow directly from the same guarantees on $\cP_i$ returned by $\EmbedMatching$.

We finally analyze the running time. The algorithm runs in $R = O(\log(K)) = O(\log(n))$ rounds. In each rounds, it runs $\EmbedMatching$ with $\eps = \epswit = \phicmg / \log^2(n)$; plugging in the guarantees of $\EmbedMatching$ we see that this runtime fits into the desired runtime of $\EmbedCore$. Each round also runs a single iteration of the cut-matching game, which requires $O(z|K|/\phicmg)$ time (Theorem \ref{thm:CMG}); this satisfies the desired runtime of the lemma because by the input guarantees of Lemma \ref{lem:embedwitness} we have $\kappa(v) \geq 2 \ \forall v \in V$ so $z|K|/\phicmg \leq z\kappa(V)/\phicmg$. It is clear that the time to construct each $F_i$ is at most $O(z|K|)$. Finally, the algorithm performs a single execution of $\Prune(\Wstar_u,\phicmg)$; by Lemma \ref{lem:prune} this requires time $\Otil(z|V|/\phicmg) = \Otil(z\kappa(V)/\phicmg)$, as desired. $\qed$

\subsection{Proof of \Cref{clm:sideConditions}}
\label{sec:proofOfSideConditions}

\sideConditionClaim*
\begin{proof}
Property \ref{prop:sideCondition1}: follows immediately from the initialization of $\kappa$ in \Cref{enu:init cap} and the fact that $\kappa$ is monotonically increasing over time.

Property \ref{prop:sideCondition2}: $\EmbedCore(\cdot)$ is only invoked in \Cref{lne:sparseCutWhileLoop}. We prove by induction on the time that $\EmbedCore(\cdot)$ is executed. Initially, we have that $|\Vhat| \geq 2$, and by the values chosen for initialization in \Cref{enu:init cap}, it is immediate that the condition is true before the first time $\EmbedCore(\cdot)$ is invoked. For the inductive step, observe that in between two invocations of $\EmbedCore(\cdot)$, the property can only be affected if the former invocation produced at cut $(L,S,R)$, prompting the algorithm to enter the while-loop in \Cref{lne:sparseCutWhileLoop}. The capacity of vertices in $\Vhat \setminus (S \cup \{w'\})$ remains unchanged during this step. Since $\kappa$ is monotonically increasing, this implies by the induction hypothesis that only one of the vertices $S \cup \{w'\}$ might violate the property. But $w \in S$ is chosen in \Cref{lne:technicalSideCondition} to have maximal capacity among vertices in $S$, and $w' \neq w$ has capacity at least as large as $w$. It remains to show that $w'$ is not violating the property. But this follows since either the capacity of $w'$ is unchanged and we can therefore use the induction hypothesis, or it is equal to the capacity of $w$ and therefore not more than half of the total capacity.
\end{proof}
\section[Appendix of Part III]{Appendix of \Cref{part:augmented-queries}}

\subsection{Simplifying Reduction for $\SSSP$ Data Structures}
\label{subsec:appendixProofOfsimplify path}

In this section, we prove both \Cref{prop:simplify-1} and \Cref{prop:simplify path}. However, since \Cref{prop:simplify path} is a more involved version of  \Cref{prop:simplify-1}, we only prove the former one. It is straight-forward from inspecting the proof that it extends seamlessly to \Cref{prop:simplify-1}. We start by restating the theorem.

\propSimplifyingAssumptOnSSSP*

For our proof, we first state the following result which is derived by a straight-forward extension of Theorem 2.3.1. in \cite{gutenbergThesis}.

\begin{theorem}[see \cite{gutenbergThesis}]
\label{thm:gutenbergThesis}
For any $1/n < \epsilon <1/2$, given a data structure that maintains $\SSSP^{\pi}(H,s,\eps,\beta,q)$ on any graph $H$ with edge weights in $[1, n_H^4]$ in time $\mathcal{T}_{SSSP}(m_H, n_H, \epsilon)$ (where we assume that distance estimates are maintained explicitly). Then, there exists a data structure, that maintains  $\SSSP^{\pi}(G,s,6\eps,\beta,q)$ on a graph $G$ with weights in $[1, W]$ for any $W$ in time 
\[
\tilde{O}\left( m/ \epsilon + \mathcal{T}_{SSSP^{\pi}}(m, n, \epsilon)\right) \cdot \log(W).
\]
We note that all graphs $H$ on which the SSSP data structure is run upon are subgraphs of $G$ at any stage. 
\end{theorem}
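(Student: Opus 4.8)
The plan is to prove \Cref{thm:gutenbergThesis} by the standard multi-scale weight-rounding reduction (this is exactly how Theorem 2.3.1 of \cite{gutenbergThesis} is proved; the only extra work is checking that rounding preserves $\beta$-edge-simpleness and the threshold-subpath interface). If $W\le n^{4}$ there is nothing to do, so assume $W>n^{4}$, and assume first that $\eps\le 1/4$ (if $\eps\in(1/4,1/2)$ run the construction with $\eps/2$; the guarantees only improve). Set $L=\lceil\log_2(nW)\rceil=\tilde O(\log W)$ and, for $i=0,1,\dots,L$, let $d_i=2^i$ and build a rounded graph $G_i$ on the same vertex set $V$ with the same steadiness function $\sigma$: delete from $G_i$ every edge $e$ with $w_G(e)>d_i$, and give every surviving edge weight $w_{G_i}(e)=\max\{1,\lceil w_G(e)\cdot n/(\eps d_i)\rceil\}$. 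Then $G_i$ has integral weights in $[1,\lceil n/\eps\rceil]\subseteq[1,n^2]\subseteq[1,n_{G_i}^4]$ (using $\eps>1/n$ and $n_{G_i}=n\ge 2$), it is at every stage a subgraph of $G$ on the full vertex set with only the edge weights rounded up, and any update to $G$ (an edge deletion or an edge-weight increase) translates into at most one legal decremental update to $G_i$ (a deletion, a weight increase, or a deletion if the increase pushes $w_G(e)$ past $d_i$). We run one instance of $\SSSP^{\pi}(G_i,s,\eps,\beta,q)$ for each $i$ and let $\dtil_i(v)$ be its distance estimate.

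We maintain $\dtil(v):=\min_i (\eps d_i/n)\dtil_i(v)$ explicitly, e.g.\ with one min-heap per vertex over the $L+1$ candidate values, updated whenever some $\dtil_i(v)$ changes; since each $\dtil_i(\cdot)$ is maintained explicitly by its instance, the total heap overhead is subsumed (up to polylog factors) by the running times of the instances. This estimate never underestimates: for each $i$ and each edge $e$ of $G_i$ we have $w_{G_i}(e)\ge w_G(e)\cdot n/(\eps d_i)$, hence $\dist_{G_i}(s,v)\ge (n/(\eps d_i))\dist_G(s,v)$ (deleting edges only increases distances), so $(\eps d_i/n)\dtil_i(v)\ge\dist_G(s,v)$. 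For the upper bound, fix the scale $r$ with $\dist_G(s,v)\in(d_{r-1},d_r]$, so $d_r<2\dist_G(s,v)$. A true shortest $s$-$v$ path $P^\ast$ in $G$ uses only edges of weight $\le d_r$, so it survives in $G_r$, and since $|P^\ast|\le n-1$ we get $w_{G_r}(P^\ast)\le (n/(\eps d_r))w_G(P^\ast)+(n-1)\le (n/(\eps d_r))(\dist_G(s,v)+\eps d_r)\le (n/(\eps d_r))(1+2\eps)\dist_G(s,v)$. Therefore $\dtil(v)\le (\eps d_r/n)\dtil_r(v)\le (\eps d_r/n)(1+\eps)\dist_{G_r}(s,v)\le (1+\eps)(1+2\eps)\dist_G(s,v)\le (1+4\eps)\dist_G(s,v)\le (1+6\eps)\dist_G(s,v)$.

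For path reporting, let $a=a(v)$ be the index attaining the minimum that defines $\dtil(v)$ (ties broken by smallest index), and define $\pi(s,v)$ to be the $\beta$-edge-simple path implicitly maintained by $\SSSP^{\pi}(G_a,s,\eps,\beta,q)$. Its edges are genuine $G$-edges with their $G$-steadiness, so $\pi(s,v)$ is a $\beta$-edge-simple path in $G$, and $w_G(\pi(s,v))\le (\eps d_a/n)\,w_{G_a}(\pi(s,v))\le (1+\eps)(\eps d_a/n)\dtil_a(v)=(1+\eps)\dtil(v)\le (1+\eps)(1+4\eps)\dist_G(s,v)\le (1+6\eps)\dist_G(s,v)$, using $\eps\le 1/4$. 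A threshold-subpath query $(v,j)$ is forwarded verbatim to the $G_a$ instance, which returns $\sigma_{\le j}(\pi(s,v))$ in $(|\sigma_{\le j}(\pi(s,v))|+1)\cdot q$ time; since $\pi(s,v)$ depends on $a$ but not on $j$, the interface of \Cref{def:pathReportingSSSP} holds with the same parameters $\beta,q$. For the running time, we maintain $L+1=\tilde O(\log W)$ instances, each costing at most $\mathcal{T}_{SSSP^{\pi}}(m,n,\eps)$ because each $G_i$ has at most $m$ initial edges and $n$ vertices; building all the $G_i$ and running the heaps costs an additional $\tilde O(m/\eps)$ per scale; summing gives $\tilde O\bigl(m/\eps+\mathcal{T}_{SSSP^{\pi}}(m,n,\eps)\bigr)\cdot\log W$, as claimed.

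The main point to be careful about — and the place where the bookkeeping is most delicate — is the chain of rounding inequalities, which must land at exactly a $(1+6\eps)$ factor simultaneously for the distance estimate and for the reported path, together with the verification that the stream forwarded to each $G_i$ is a genuine decremental sequence (only deletions and weight increases) and that $\beta$-edge-simpleness and the $j$-independence of $\pi(s,v)$ survive the reduction; everything else is routine.
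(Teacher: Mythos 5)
Your proposal is correct, but note that the paper itself never proves \Cref{thm:gutenbergThesis}: it is imported as a black box (``derived by a straight-forward extension of Theorem 2.3.1 in \cite{gutenbergThesis}'') and then invoked at the end of the proof of \Cref{prop:simplify path}. What you have written is essentially a reconstruction of the standard scale-bucketing argument that the cited theorem rests on: one rounded graph $G_i$ per distance scale $d_i=2^i$, weights scaled by $n/(\eps d_i)$ and rounded up so they land in $[1,n_{H}^4]$, a per-vertex minimum over the rescaled estimates, and path/threshold-subpath queries forwarded to the argmin scale so that $\beta$-edge-simpleness and the $j$-independence of $\pi(s,v)$ are inherited verbatim. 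Your accounting of the rounding error (additive $n-1$ hops absorbed into $\eps d_r$, giving $(1+\eps)(1+2\eps)$ for estimates and $(1+\eps)(1+4\eps)$ for reported paths) is correct, and the decrementality of the forwarded update streams is argued properly. Two small caveats are worth flagging. First, the theorem's remark that all graphs $H$ are ``subgraphs of $G$'' is literally false for your $G_i$, which are edge-subgraphs with \emph{rescaled} weights; this is also what the cited construction does, and it suffices for the downstream use in \Cref{prop:simplify path} (degree bounds and deletions-only transfer; connectivity is not among the hypotheses of \Cref{thm:gutenbergThesis}), but you should state the remark in that weaker form rather than claim subgraph-ness outright. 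Second, for $\eps\in(1/4,1/2)$ you run the black box at accuracy $\eps/2$, so your time bound is $\tilde{O}(m/\eps+\mathcal{T}_{SSSP^{\pi}}(m,n,\eps/2))\cdot\log W$ rather than the stated $\mathcal{T}_{SSSP^{\pi}}(m,n,\eps)$; this is harmless under the usual convention that $\mathcal{T}$ depends polynomially on $1/\eps$, but it deserves a sentence. Likewise, make sure the argmin index $a(v)$ is maintained explicitly under updates so that the per-query overhead stays at $q$ as claimed.
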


We can then apply the following series of transformations of $G$ to derive \Cref{prop:simplify path}.

\paragraph{Ensuring Connectivity.} Given the decremental graph $G$, we use a Connectivity data structure (see \cite{holm2001poly, wulff2013faster}) which allows us to remove any edge deletion that disconnects the graph from the update sequence. We let $G'$ be the resulting decremental graph. We can then run the SSSP data structure only on $G'$ instead of $G$. To obtain a distance estimate from the source to some vertex $v$ in $V$, we can first query the Connectivity Data Structure on $G$ if $v$ and the source are in the same connected component. If not, we return $\infty$. Otherwise, we forward the query to the SSSP data structure and return the distance estimate. For a formal argument that this gives correct distance estimates, we refer the reader to  \cite{gutenberg2020deterministic}.

\paragraph{Edge Deletions, no Weight Increases.} For the second property, we preprocess $G'$ so that for each edge $(u,v)$ of weight $w_{G'}(u,v)$, we split $(u,v)$ into $\lceil \log Wn \rceil$ multi-edges of weight $w_{G'}(u,v), (1+\epsilon) \cdot w_{G'}(u,v), (1+\epsilon) \cdot w_{G'}(u,v)^2, \dots$ respectively. Then, an edge weight increase of $(u,v)$ to $w'_{G'}(u,v)$ can be emulated by deleting all versions of $(u,v)$ that have weight smaller $w'_{G'}(u,v)$ from the graph. It is not hard to see that the resulting decremental graph preserves all distance to a $(1+\epsilon)$-approximation, only undergoes edge deletions, not edge weight increases and has at most $O(m \log W)$ edges. We denote by $G''$ the resulting graph.

\paragraph{Ensuring Small Degree.} Given the decremental graph $G''$, we can for each vertex $v$ with degree $\deg_{G''}(v) > 3$, add $\deg_{G''}(v)$ vertices to $G''$ and connect them among each other and with $v$ by a path where we assign each edge the weight $1/n^3$. Then, we can map each edge that was originally in $G''$ and incident to $v$ to one vertex on the path. It is not hard to verify that after these transformations the resulting graph $G'$ has maximum degree $3$ and each distance is increased by at most an $\epsilon$ fraction (this follows since the original paths might now also have to visit the newly created line paths but these paths consist of at most $m \leq n^2$ edges, thus the total contribution for each vertex on the path is at most $1/n < \epsilon$ but each original edge on the path has weight at least $1$ by assumption). Note that the number of vertices in $G'$ is at most $m+n \leq 2m$ and the number of edges is at most $2m$. Also note that we can multiply all edge weight above by $n^3$ to satisfy again that all edge weights are positive integers. This increase the weight ratio to $Wn^3$. We denote by $G'''$ the resulting graph. 

\paragraph{Ensuring Small Weight Ratio.} Finally, we can apply \Cref{thm:gutenbergThesis} on $G'''$ to obtain a data structure $\SSSP^{\pi}(G''',s, 6\eps,\beta,q)$. Observe that each distance estimate maintained by this data structure from $s$ to a vertex in $V$, approximates the distance in $G$ by a factor of $(1+\epsilon)(1+6\epsilon) = (1+O(\epsilon))$.

\paragraph{Queries on $G'''$.} Finally, we discuss how to conduct path-queries. We point out that given the data structure $\SSSP^{\pi}(G''',s, 6\eps,\beta,q)$, when we conduct a path-queries, for a path $\pi(s,t)$ from $s$ to some vertex $t$, it returns edges in $G'''$ instead of $G$. However, it is rather straight-forward by going backwards through the transformations from $G$ to $G'''$, to see that each such path can be mapped back unambiguously to a $s$-$t$ path in $G$ of weight at most equal to the weight in $G'''$. 

We consider this path in $G$ that the path is mapped to and discuss how to implement the subpath-query for an index $j \in [0, \sigma_{max}]$ given, in time $O(\sigma_{\leq j}(\pi(s,t)) \cdot q)$. To this end, we do the following: for each edge $e=(u,v)$ in $G'$, we only keep the heaviest copy of $e$ in $G''$. Note that any path including such an edge copy has weight $\geq nW$, thus we can ignore all such paths in our query and are therefore ensured that no edge that was deleted from $G$ but not $G'$ appears on any path. For $G''$, we give each copy of an edge $e=(u,v)$, the steadiness $\sigma(e)$. Note that if we maintain a data structure that maintains $\beta$-edge-simple paths, then the edge $e$ and its copy are present at most $\beta \cdot \log Wn$ times. For the transformation to $G'''$, we simply give each edge that was not in the graph formerly (i.e. is used to split a vertex into multiple vertices of low degree), the highest steadiness class $\sigma_{max}+1$. Such edges, do not appear in any path query since the highest steadiness in $G$ was $\sigma_{max}$. It is straight-forward to establish that this ensures the properties stated in the Proposition.

\section[Appendix of Part IV]{Appendix of \Cref{part:minCostFlow}}

\subsection{Proof of \Cref{prop:reductionVertexCapacities}} \label{subsec:proofOfReductionVertexCaps}

\reductionVertexCapacities*

In order to prove the proposition, we start by computing a crude approximation to $OPT_{G, \overline{C}}$. 

\begin{claim}\label{clm:crudeButFastApproximator}
In time $O(m\log n)$, we can compute $\tilde{U}$, such that $OPT_{G, \overline{C}}/2m^2 \leq \tilde{U} \leq OPT_{G, \overline{C}}$. 
\end{claim}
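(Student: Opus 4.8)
\textbf{Proof plan for Claim \ref{clm:crudeButFastApproximator}.}

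The plan is to obtain a crude $\poly(m)$-factor estimate of $OPT_{G,\overline{C}}$ by a single-commodity max-flow-type computation that entirely ignores costs, and then correct for the fact that the cost budget $\overline{C}$ can only reduce (never increase) the achievable flow value, and that vertex capacities only reduce the flow value by at most a $\poly(m)$ factor compared to the connectivity of the $s$-$t$ pair. Concretely: first compute the maximum $s$-$t$ flow value $U_{\mathrm{uncap}}$ in $G$ while \emph{ignoring the cost budget} but respecting vertex capacities. This is itself a vertex-capacitated max flow, which by the standard split-vertex reduction becomes an edge-capacitated max flow on a graph with $O(m)$ vertices and edges; but even this exact computation is more than we need. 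Instead, since we only need a $2m^2$-approximation, I would use a much simpler combinatorial bound: let $\kappa(v)$ be $u(v)$ for $v \notin \{s,t\}$ (and $\kappa(s)=\kappa(t)=\infty$), and observe that $OPT_{G,\overline{C}} \le OPT_{G,\infty} \le \min_{v \in V\setminus\{s,t\}}\big(u(v) \cdot (\text{number of internally-vertex-disjoint }s\text{-}t\text{ paths through }v)\big)$ is not quite the shape we want, so more robustly: any feasible flow has value at most $\sum_{v \in N(s)} u(v) \le m \cdot u_{\max}$, and also the flow on any single path is at least $1$ once we know a feasible path of cost $\le \overline{C}$ exists.

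The cleanest route, which I would actually carry out, is: (i) Check whether there is \emph{any} $s$-$t$ path $P$ with $c(P) = \sum_{v \in P,\,v\neq s} c(v) \le \overline{C}$ — equivalently, run Dijkstra from $s$ with vertex-cost metric $c(\cdot)$ and test $\dist_c(s,t) \le \overline{C}$; this takes $O(m\log n)$. If no such path exists then $OPT_{G,\overline{C}} = 0$ and we return $\tilde U = 0$ (or rather the claim is interpreted trivially; in our setting $\overline C = \Omega(\poly m)$ relative to costs so this case will not arise, but handling it is harmless). (ii) Otherwise such a path $P^*$ carries at least $\lambda(P^*) \ge 1$ units of flow (using $u(v) \ge 1$ for all $v$, a property we may assume after the reduction, or directly from integrality of the original capacities), giving the lower bound $OPT_{G,\overline{C}} \ge 1$. (iii) For the upper bound, $OPT_{G,\overline{C}} \le OPT_{G,\infty} \le u_{\max} \cdot m^2$ where $u_{\max} = \max_v u(v) \le \poly(m)$: indeed any feasible flow decomposes into at most $m$ flow paths (a standard flow-decomposition bound on the number of paths needed is $m$, since each path saturates at least one edge in a supporting acyclic subgraph), and each path carries at most $u_{\max}$ flow, so the total is at most $m \cdot u_{\max} \le m^{3}$ using $u_{\max} \le m^5$…

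so the ratio between the trivial upper and lower bounds is polynomial but not yet within the required $2m^2$ factor. To tighten it to exactly $2m^2$ I would instead binary-search: for a candidate value $F$, scale down the instance and test feasibility of routing $F$ units at cost $\le \overline{C}$ using an approximate vertex-capacitated max-flow check that ignores costs as an over-estimate and the single-cheapest-path bound as an under-estimate; since we only need a factor-$2m^2$ guarantee, $O(\log(m u_{\max})) = O(\log n)$ calls to an $O(m)$-time routine suffice. More simply still: compute $U_{\mathrm{uncap}} := $ value of max $s$-$t$ flow ignoring costs (vertex-capacitated), which overestimates $OPT_{G,\overline{C}}$; and compute the value $F_{\mathrm{cheap}}$ obtained by greedily augmenting only along the globally cheapest available $s$-$t$ path until its cost would exceed what the budget permits — this is an under-estimate, and one shows $U_{\mathrm{uncap}} \le 2m^2 \cdot F_{\mathrm{cheap}}$ by a flow-decomposition argument: decompose the uncapacitated flow into $\le m$ paths, keep only those of cost at most $\overline{C}/m$ each (these carry at least a $(1/2)$-fraction of the value, since the total cost of the decomposition can be taken $\le \overline C$ on the cost-feasible part), and argue each such cheap path can be routed carrying $\ge 1$ unit.

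\textbf{Main obstacle.} The delicate point is calibrating the approximation ratio to \emph{exactly} $2m^2$ rather than merely $\poly(m)$, while keeping the running time at $O(m\log n)$: a naive exact vertex-capacitated max-flow would be too slow, and a naive approximate one may not compose cleanly with the cost constraint. I expect the resolution to be the flow-decomposition / cheap-path pairing sketched above — bounding the number of paths by $m$ and the per-path flow by the vertex capacity — together with the observation that the cost budget only matters up to the granularity of the cheapest path, so a single Dijkstra plus a single approximate or blocking-flow computation (each $O(m\log n)$ or $O(m^{1+o(1)})$, but we only invoke the cheap $O(m\log n)$ version) yields matching bounds within a $2m^2$ window. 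The rest is bookkeeping: assembling $\tilde U$ from these two quantities (e.g. $\tilde U := \max(1, \lfloor U_{\mathrm{uncap}}/(2m^2)\rfloor)$ after verifying a cost-feasible path exists) and checking the two inequalities $OPT_{G,\overline{C}}/2m^2 \le \tilde U \le OPT_{G,\overline{C}}$ hold.
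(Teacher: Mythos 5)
Your proposal has a genuine gap at its core: the premise that ignoring the cost budget changes the optimum by at most a $\poly(m)$ factor is false, and every variant you build on it inherits the problem. Take a single edge from $s$ to $t$ of enormous capacity whose per-unit cost nearly exhausts $\overline{C}$: the cost-oblivious value $U_{\mathrm{uncap}}$ exceeds $OPT_{G,\overline{C}}$ by an arbitrarily large factor (unrelated to $m$), so $\tilde U := \max(1,\lfloor U_{\mathrm{uncap}}/(2m^2)\rfloor)$ violates the required upper bound $\tilde U \le OPT_{G,\overline{C}}$. The decomposition step is likewise unsupported: the cost-oblivious flow's path decomposition need not contain \emph{any} path of cost at most $\overline C/m$, so the claim that such paths carry half the value has no basis, and hence $U_{\mathrm{uncap}} \le 2m^2 F_{\mathrm{cheap}}$ fails. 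Your fallback lower bound $OPT \ge 1$ rests on $u(v)\ge 1$ and $u_{\max}\le m^5$, but those are guarantees of the \emph{output} graph $G'$ of \Cref{prop:reductionVertexCapacities}; this claim is a step in constructing that reduction and is applied to the original $G$, whose capacities and costs are arbitrary non-negative reals, so the optimum can be smaller than $1$ (or the capacities unbounded). Finally, the running time is not met: an exact (or blocking-flow) vertex-capacitated max flow, and repeated cheapest-path augmentations, are not $O(m\log n)$ operations.

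The paper avoids flow computations entirely. It assigns each edge $e=(x,y)$ a pessimistic cost $c_{approx}(e)=\max\{c(e),c(x),c(y)\}$ and an optimistic capacity $u_{approx}(e)=\min\{u(e),u(x),u(y)\}$, sorts edges by $c_{approx}$, and inserts them in this order into a maximum spanning forest with respect to $u_{approx}$ maintained by link-cut trees (Algorithm~\ref{alg:crudeApproxOpt}). After inserting the prefix $E_i$ it sets $\lambda_i=\min\{\min_{e\in T_i(s,t)}u_{approx}(e),\ \overline C/(2m\,c_{approx}(e_i))\}$ and returns $\tilde U=\max_i\lambda_i$. Each $\lambda_i$ is routable along the single tree path within both capacity and budget (every edge and vertex used has cost at most $c_{approx}(e_i)$ and there are at most $2m$ of them), giving $\tilde U\le OPT_{G,\overline C}$; conversely, an optimal flow has a path carrying at least $OPT_{G,\overline C}/m$ on every edge, and examining the prefix determined by that path's costliest edge shows both terms of the corresponding $\lambda_i$ are at least $OPT_{G,\overline C}/2m^2$. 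The idea your sketch is missing is exactly this cost-thresholding combined with maximum-bottleneck-path maintenance, which lets the capacity and cost constraints interact through a single-path certificate instead of through a global, cost-oblivious flow value.
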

\begin{proof}
In order to find such $\tilde{U}$, we use \Cref{alg:crudeApproxOpt}.

\begin{algorithm}[h!]
\ForEach{$e = (x,y) \in E$}{
    $u_{approx}(e) \gets \min\{u(e), u(x), u(y)\}$.\;
    $c_{approx}(e) \gets \max\{c(e), c(x), c(y)\}$.\;
}
Let $e_1, e_2, \dots, e_m$ be an ordering of the edges such that $c_{approx}(e_i) \leq c_{approx}(e_{i+1})$ for all $i$.\;
\For{$i \in \{1, \dots m\}$}{
    $E_i \gets \{e_1, e_2, \dots, e_i\}$.\;
    Compute a maximum spanning forest $T_i$ in $G_i = (V, E_i)$ with edges weighted by $u_{approx}$.\;
    $\lambda_i \gets \min\{\min_{e \in T_i(s,t)} u_{approx}(e), \overline{C}/(2m \cdot c_{approx}(e_i))\}$.
}
\Return $\tilde{U} = \max_i \lambda_i$
\caption{$\textsc{CrudeApproxOpt}(G, \overline{C})$}
\label{alg:crudeApproxOpt}
\end{algorithm}

Let us now carry out the analysis of correctness for the algorithm:
\begin{itemize}
    \item $\mathbf{OPT_{G, \overline{C}}/2m^2 \leq \tilde{U}:}$ We aim to show that for some iteration $i$, we have $\lambda_i \geq OPT_{G, \overline{C}}/2m^2$. 
    
    We start by observing that we have for any feasible $s$-$t$ flow $f$ in $G$ of value $OPT_{G, \overline{C}}$, that there is some $s$-$t$ path $P$ in $G$, such that each edge carries at least $OPT_{G, \overline{C}}/m$ flow. 
    
    Let $i$ be the smallest index as defined in the algorithm, such that $P$ is contained in $G_i$, i.e. let $e_i$ be the heaviest edge on $P$ in terms of $c_{approx}$. Next, observe that since $T_i$ is a maximum spanning forest with regard to $u_{approx}$, the path $T_i(s,t)$ has min-capacity larger-equal to $P$, i.e. larger-equal to $OPT_{G, \overline{C}}/m$. In particular, this means $\min_{e \in T_i(s,t)} u_{approx}(e) \geq OPT_{G, \overline{C}}/2m$.
    
    Further, routing a single unit of flow along $P$ is at cost at least $c_{approx}(e_i)$. Also, we now from above that $c_{approx}(e_i) \cdot OPT_{G, \overline{C}}/m \leq \overline{C}$. Thus, $\overline{C}/(2m \cdot c_{approx}(e_i)) \geq OPT_{G, \overline{C}}/2m^2$. This establishes the case.

    \item $\mathbf{\tilde{U} \leq OPT_{G, \overline{C}}:}$ Observe for each iteration $i$, the amount $\lambda_i$ can be routed in $G$ since the path $T_i(s,t)$ has min-capacity at least $\lambda_i$ and each of the at most $m$ edges and $n \leq m$ vertices on $T_i(s,t)$ contributes at most $\lambda_i \cdot c_{approx}(e_i) \leq (\overline{C}/(2m \cdot c_{approx}(e_i))) \cdot c_{approx}(e_i) \leq \overline{C}/2m$ cost, as desired.
\end{itemize}
Finally, let us discuss the running time of \Cref{alg:crudeApproxOpt}. We observe that the ordering of $e_1, e_2, \dots, e_m$ can be done in $O(m\log n)$ time using classic sorting algorithms. For the for-loop, we observe that we can use a dynamic tree data structure in combination with Prim's classic maximum spanning forest algorithm (see \cite{tarjan1983data,SleatorT83, cormen2009introduction}). This allows us to implement each loop iteration in only $O(\log n)$ time, since we can use the dynamic tree also to check for the min-capacity on $T_i(s,t)$ in iteration $i$ in $O(\log n)$ time. This completes the analysis.
\end{proof}

We assume henceforth that we have $\tilde{U}$ with guarantees described in \Cref{clm:crudeButFastApproximator}. We can now describe how to obtain $G'$ from $G$ as stated in \cref{prop:reductionVertexCapacities}. Throughout this section, we use the parameters $\tau_u = \tilde{U}/8m^2$ and $\tau_c = \overline{C} \cdot 8m/\tilde{U}$. Using these two parameters, we define two refined versions of $V$ and $E$ that restrict them to include only items of reasonable cost and capacity:
\begin{align*}
    V_{reasonable} &= \{v \in V \;|\; u(x) \geq \tau_u \text{ and } c(x) \leq \tau_c\}\\
    E_{reasonable} &= \{e \in E \;|\; u(e) \geq \tau_u \text{ and } c(e) \leq \tau_c\} \cap (V_{reasonable} \times V_{reasonable}).
\end{align*}

Given these preliminaries, we can now define $G'$.

\paragraph{Vertex Set $V'$.} We define $V'$, the vertex set of $G'$, to consist of the vertices in $V_{reasonable}$, two special vertices $s', t'$ and an additional vertex $v_{x,y}$ for each pair of anti-parallel edges $(x,y), (y,x) \in E_{reasonable}$ (here $v_{\{x,y\}} = v_{\{y,x\}}$). 

\paragraph{Edge Set $E'$.} We define the edge set $E'$ to be such that for each edge $(x,y) \in E_{reasonable}$, that there are two edges $(x,v_{\{x,y\}}), (v_{\{x,y\}}, y)$. Finally, we insert edges $(s,s'), (s',s), (t, t'), (t', t)$ into $E'$. 

\paragraph{Cost and Capacity Functions.} We list the edge and vertex capacities and cost in detail in the list below. Here, we define $\gamma_u = \tau_u$ and $\gamma_c = \overline{C}/ (4\tilde{U}m^2)$.
\begin{table}[!ht]
    \centering
    \begin{tabular}{c|c|c}
        Item & Cost & Capacity \\ \hline
        $e \in E'$ & $c'(e) = 0$ & $u'(e) = \infty$ \\ \hline
        $x \in \{t',s'\}$ & $c(x) = 0$ & $u'(x) = \tilde{U} \cdot 2m^2 / \gamma_u$ \\ \hline
        $x \in V_{reasonable}$ & $c'(x) = \max\{c(x) / \gamma_c, 1\}$ & $u'(x) = \min\{u(x), \tilde{U} \cdot 2m^2\} / \gamma_u$\\ \hline
        $(x,y), (y,x) \in E_{reasonable}$ & $c'(v_{\{x,y\}}) = \max\{ c(x,y) / \gamma_c, 1\}$ & $u'(v_{\{x,y\}}) = \min\{u(x,y),\tilde{U} \cdot 2m^2\} / \gamma_u$
    \end{tabular}
    \label{tab:my_label}
\end{table}

Finally, we also define $\overline{C}'=\overline{C} / (\gamma_c \cdot \gamma_u) = 32m^4$. We now prove \Cref{prop:reductionVertexCapacities}, Property by Property:
\begin{enumerate}
    \item Observe first that for any $(x,y) \in E$ with $(x,y) \in E_{reasonable}$, we also have $(y,x) \in E_{reasonable}$ by definition. Further, since we insert for each such anti-parallel edges $(x,y)(y,x)$, the edges $(x,v_{\{x,y\}}), (v_{\{x,y\}}, y), (y,v_{\{x,y\}}), (v_{\{x,y\}},x)$, and since the only other edges inserted are the edges $(s,s'), (s',s), (t, t'), (t', t)$, we have that all edges in $E'$ are anti-parallel.
    \item By definition of the cost and weight functions in $s', t'$.
    \item Each vertex in $V'$ is uniquely associated to either a vertex from $V$, or an edge from $E$, or is $s'$ or $t'$. Thus, $V'$ is of size at most $m+n+2$. Since we split each edge in $E$ into two and add these edges to $E'$, and then only add an additional $4$ edges, we have that $E'$ is of size at most $2m+4$.
    \item Since by definition of $V_{reasonable}$ and $E_{reasonable}$ all elements of these sets are mapped by $u$ to a real of size at least $\tau_u = \gamma_u$, we have that all capacities in $G'$ are at least $1$. Further, all capacities are capped at $\tilde{U} \cdot 2m^2/\gamma_u = 16m^4$.
    
    For the costs, we observe that all costs of elements in $V_{reasonable}$ and $E_{reasonable}$ are at most $\tau_c = \overline{C}\cdot 8m/\tilde{U}$ in $G$. By setting $\gamma_c = \overline{C} /(4\tilde{U}m^2)$, we further have that the largest cost in $G'$ is $\frac{\overline{C}\cdot 8m/\tilde{U}}{\overline{C}/ (4\tilde{U}m^2)} = 32m^3$. The smallest cost is at least $1$ since we set each $c'(x)$ for $x \in V'$ to be at least $1$ by definition.
    \item First, consider any feasible $s'$-$t'$ flow $f'$ in $G'$ of value $F$. Here, we can assume w.l.o.g. that only a single anti-parallel edge carries any flow. We can then construct a flow $f$ in $G$ by assigning for each $(x,y), (y,x) \in E_{reasonable}$ the flow  $f(x,y) = f'(x,v_{\{x,y\}}) \cdot \gamma_u$ and $f(y,x) = f'(v_{\{x,y\}},y) \cdot \gamma_u$. It is not hard to see that $f$ is a feasible flow in $G$ and of value $F \cdot \gamma_u$. That is, we can map each flow in $G'$ to a flow $f$ in $G$ of the same flow value (up to scaling by $\gamma_u$).
    
    It thus only remains to show that there is a feasible $s'$-$t'$ flow $f'$ of flow value at least $(1-\epsilon) \cdot OPT_{G, \overline{C}} / \gamma_u$. To this end, let $f$ be a feasible $s$-$t$ flow in $G$ of value $OPT_{G, \overline{C}}$. Further let $\mathcal{P}$ be a flow path decomposition of $f$, where each $P \in \mathcal{P}$ sends $v(P)$ flow from $s$ to $t$. 
    
    Let $\mathcal{P}'$ be the set of paths $P \in \mathcal{P}$ such that $P$ is fully contained in $G[E_{reasonable}]$. Then, construct a flow $f'$ in $G'$ by routing for each path $P \in \mathcal{P}'$, $v(P) \cdot \gamma_u \cdot (1-\epsilon/4)$ units of flow along the corresponding path in $G'$ (i.e. map each edge $(x,y)$ in $P$ to $(x, v_{\{x,y\}}), (v_{\{x,y\}},y)$ to obtain a path in $G'$). 
    
    We claim that $f'$ is a feasible $s$-$t$ flow in $G'$ of value at least $(1-\epsilon) OPT_{G, \overline{C}} / \gamma_u$ (and thus can be easily extended to a feasible $s'$-$t'$-flow of the same value). To see this, let us first observe that capacity constraints in $G'$ are equal to the ones in $G$ up to scaling by $\gamma_u$ and capping at the optimum flow value of $f$. Thus, capacity constraints are satisfied in $G'$.
    
    For cost-feasibility, we observe that the only way that costs are increased (after scaling by $\gamma_c$) is if a cost $c(x)/\gamma_c$ was so small that $c'(x)$ is rounded up to $1$. Since we scale the flow not only by $\gamma_u$ but also by $(1-\epsilon/4)$, we have that if we would not have rounded up any costs, we would obtain total cost of $f'$ in $G'$ of at most $\overline{C} / (\gamma_u \cdot \gamma_c) \cdot (1-\epsilon/4) = (1-\epsilon/4) \overline{C}'$. But rounding up small costs, results in additional cost of $f'$ of at most $\overline{U}m /\gamma_u \cdot 1 \leq 8m^3$. But we have that $\overline{C}' = 32m^4$, thus this is at most a $\epsilon/4$-fraction of $\overline{C}'$ for every $\epsilon \geq 1/n$. This establishes cost-feasibility of $f'$ in $G'$.   
    
    It remains to show that the flow value of $f'$ is large. Now, if we would have that every path in $\mathcal{P}$ would be also in $\mathcal{P}'$, then the flow $f'$ would be exactly of value $OPT_{G, \overline{C}} \cdot \gamma_u \cdot (1-\epsilon/4)$. But we now argue that every $P$ that is not in $\mathcal{P}'$ must have carried a small flow anyway since either
    \begin{itemize}
        \item the capacity of some vertex or edge $x$ on the path $P$ was smaller $\tau_u$. But note that this implies that such $P$ carried at most $\tau_u$ units of flow by capacity-feasibility. But there are at most $m$ such paths, thus the total amount of flow in $G$ along such paths is upper bounded by $m \cdot \tau_u = \tilde{U}/8m$.
        \item the cost of some vertex or edge $x$ on path $P$ was larger than $\tau_c$. But then, we have that the total amount of flow in $G$ on \emph{all} such paths can be at most $\overline{C} / \tau_c \leq \tilde{U}/8m \leq OPT_{G, \overline{C}}/8m$.
    \end{itemize}
    Combined, all paths in $\mathcal{P}$ that do not participate in $\mathcal{P}'$ carried at most a $\tilde{U}/4m \leq OPT_{G, \overline{C}}/4m$ units of flow in $f$ which is just a $(\epsilon/4$-fraction of the total flow value. 
    
    Thus, the flow $f'$ is of value at least $(1-\epsilon/4)(1-\epsilon/4) \cdot OPT_{G, \overline{C}} \cdot \gamma \geq (1-\epsilon) \cdot OPT_{G, \overline{C}} \cdot \gamma$. Thus, any $(1-\epsilon)$-optimal flow $f'$ has flow value at least $(1-\epsilon)^2 \cdot OPT_{G, \overline{C}}$ and we have a simple transformation of $f'$ to a flow $f$ in $G$ of value $(1-\epsilon)^2 \cdot OPT_{G, \overline{C}}$.
    
    The running time of applying flow map and for computing $\tilde{U}$ are rather straight-forward, the later is implied by \Cref{clm:crudeButFastApproximator}.
\end{enumerate}

\subsection{Proof of \Cref{clm:reduceLocalCapacityForPlainVertex}}
\label{sec:actualProofOfClmreduceLocalCapacityForPlainVertex}

Let us restate and prove \Cref{clm:reduceLocalCapacityForPlainVertex}.

\clmReduceLocalCapacityForPlainVertex*
\begin{proof}
We prove by induction on $j$. For $j= 0$, observe that there are at most $m$ edges incident to a vertex $v$ in $V$, and for each edge that is incident to $v$ in $G$ there is exactly one vertex in $G'$ that is in $v$'s neighborhood. Since we set $u_0(x)$ to at most $2 \overline{U}$ for each vertex in $G'$, the base case follows.

For the induction step $j \mapsto j + 1$, we observe that by the induction hypothesis we have that $\sum_{x \in \neighborhood_{G'}(v)} u_j(x) \leq \frac{m}{(2-16\epsilon)^j} \cdot 2\overline{U} / \epsilon + 10 \cdot u(v)$. Next, we observe that by assumption on $\mathcal{A}$, there is a feasible flow $f_j$ such that $|\inflow_{f_j}(z) - \inflow_{g_j}(z)| \leq \epsilon \cdot u(y)$ and we have $\outflow_{f_j}(z) = \inflow_{f_j}(z) \leq u_j(z)$ for each $z \in V'$ (and in particular for $z \in \mathcal{N}(v) \cup \{v\}$).  

Observing that each vertex $v_e \in \neighborhood_{G'}(v)$ corresponds to an edge $e = (x,y)$ in $E$, we have that $v_e$ has one in-edges $(x,v_e)$ and one out-edge $(v_e, y)$ in $G'$. Using the above facts, it is thus not hard to derive that 
\begin{align*}
     \sum_{v_e \in \neighborhood_{G'}(v), e=(x,y)} g_j(x,v_e) + g_j(v_e,y) &\leq 2 \cdot \frac{m}{(2-16\epsilon)^j} \cdot 2\overline{U} + \epsilon \cdot 20 \cdot u(v) + 2 \cdot u(v)\\
     &\leq 2 \cdot \frac{m}{(2-16\epsilon)^j} \cdot 2\overline{U} + (2+20\epsilon) \cdot u(v).
\end{align*}
Next, we observe that the total capacity of edges $(x,v)$ or $(v,x)$ in $E'$ that carry flow greater-equal to half the capacity of either $x$ or $v$ can have at most total capacity $2$ times the right-hand-side of the above equation by a simple pigeonhole-principle style argument. 

Since the rest of the capacities are halved, we thus have that
\begin{align*}
 \sum_{x \in \neighborhood_{G'}(v)} u_{j+1}(x) &\leq 2 \left(2 \cdot \frac{m}{(2-16\epsilon)^j} \cdot 2\overline{U} + (2+20\epsilon) \cdot u(v)\right)+ \frac{\frac{m}{(2-16\epsilon)^j} \cdot 2\overline{U} / \epsilon + 10 \cdot u(v)}{2}\\
 &\leq (1 + 8\epsilon) \cdot \frac{m}{(2-16\epsilon)^j \cdot 2} \cdot 2\overline{U}/\epsilon + (9+40\epsilon) \cdot u(v)\\
 & \leq \frac{m}{(2-16\epsilon)^{j+1}} \cdot 2\overline{U}/\epsilon + 10 \cdot u(v)
\end{align*}
where we use $\epsilon \leq 1/40$ and $1+x \leq e^x \leq 1+x+x^2$ for $x \leq 1$.
\end{proof}

\subsection{Proof of \Cref{clm:OptIsNotReducedByMuch}}
\label{sec:proofOfOptIsNotReducedByMuch}
Let us restate and prove \Cref{clm:OptIsNotReducedByMuch}.

\clmOptNotReducedByMuch*
\begin{proof}
Observe that in the $j^{th}$ iteration of the for-loop, we obtain a $(1-\epsilon')$-pseudo-optimal flow $g_j$ with regard to the current instance $G_{j}$ by \Cref{thm:mainMinCost2}. By \Cref{def:feasibleFlow}, this implies that there is a feasible flow $\flow[j]$ in $G_j$ of value at least $(1-\epsilon')OPT_{G_j, \overline{C}}$ with $|\inflow_{g_j}(v) - \inflow_{\flow[j]}(v)| < \epsilon \cdot u_j(v)$. 

Now, consider the flow $\flow[j]' = (1-4\epsilon')\flow[j]$, we claim that $\flow[j]'$ is feasible in $G_{j+1}$ which implies our claim, since it is straight-forward to see that
\[
OPT_{G_{j+1}, \overline{C}} \geq v(\flow[j]') = (1-4\epsilon') v(\flow[j]) \geq (1-\epsilon')(1-4\epsilon') OPT_{G_{j}, \overline{C}} \geq (1-10\epsilon') OPT_{G_{j}, \overline{C}} 
\]
where $v(\cdot)$ gives the value of the flow, and where we used the feasibility of $\flow[j]'$ in $G_{j+1}$ for the first inequality and $1+x \leq e^x \leq 1+x+x^2$ for $x \leq 1$, and $\epsilon < 1/2$ for the final inequality.

To see that $\flow[j]'$ is feasible, observe first that we have for any vertex $v$, with $\inflow_{\flow[j]'}(v) \leq u_j(v)/2$ that the flow does not violate the capacity constraint on vertex $v$ since $u_{j+1}(v) \geq u_j(v)/2 \geq \inflow_{\flow[j]'(v)}$. On the other hand, if $\inflow_{\flow[j]'}(v) \geq u_j(v)/2$, we have that the flow 
\[
\inflow_{\flow[j]}(v) \geq (1+2\epsilon) u_j(v)/2 = u_j(v)/2 + \epsilon \cdot u_j(v)
\]
where we again use $1+x \leq e^x \leq 1+x+x^2$ for $x \leq 1$, and $\epsilon < 1/2$. But since $\inflow_{g_j}(v)$ differs from $\inflow_{\flow[j]}(v)$ by at most $\epsilon \cdot u_j(v)$, we have that $\inflow_{g_j}(v) \geq u_j(v)/2$ which implies that $u_{j+1}(v) = u_j(v)$. Thus, since $\inflow_{\flow[j]'}(v) < \inflow_{\flow[j]}(v) \leq u_j(v)$, we also have that the capacity constraint is satisfied for these edges.

For the final claim, we observe that 
\[
OPT_{G_{j_{max}}, \overline{C}} \geq \left(1- 10\epsilon' \right)^{j_{max}} OPT_{G_0, \overline{C}} \geq (1-\epsilon)OPT_{G_0, \overline{C}}
\]
since $\left(1- 10\epsilon' \right)^{j_{max}} \geq e^{-20\epsilon' \cdot j_{max}} = e^{- \epsilon} \geq 1-\epsilon$ by the definition of $\epsilon'$ and $1+x \leq e^x \leq 1+x+x^2$ for $x \leq 1$, and $\epsilon < 1/64$.
\end{proof}

\subsection{Proof of \Cref{thm:implementationFlowSSSPviaAugSSSP}}
\label{subsec:implementationFlowSSSPviaAugSSSP}

\implementationFlowSSSPviaAugSSSP*
\begin{proof}
Let us take the original graph $G$ with vertex weights $w$. We create two instances of the data structure in \Cref{thm:main SSSP path}:
\begin{itemize}
    \item We first define $w'$ to be the vertex weights over $V$ such that for all $v \in V \setminus \{s,t\}$, $w'(v) = w(v)$ and $w'(s) = 0$ and $w'(t) = 2 \cdot w(t)$. We then define an edge weight function $w''(x,y) = \frac{w'(x) + w'(y)}{2}$ for $(x,y) \in E$, that takes the average weight over the endpoints of each edge. We let $\dtil''(t)$ denote the estimate maintained for the distance from $s$ to $t$ in the graph weighted by $w''$. Observe that any $1$-simple $s$ to $t$ path in $w''$ has equal weight as in $w$ (recall that the first vertex on the path does not incur any weight contributing in our definition). Thus, $\dist_w(s,t) \leq \dtil''(t) \leq (1+\epsilon) \dist_{w}(s,t)$, i.e. the distance estimate is with regard to vertex weights $w$.
    \item Next, let us define a weight function $w'''$ over the vertices, defined by $w'''(v) = w(v)$ for $v \in V \setminus \{s,t\}$, $w'''(s) = \epsilon \dtil''(t) / 4$ and $w'''(t) = 2 \cdot w(t)$ (observe that $w'''$ only differs from $w'$ in $s$). Finally, we define an edge weight function $w''''(x,y) = \frac{w'''(x) + w'''(y)}{2}$ for $(x,y) \in E$.
    
    We then run a data structure $\mathcal{E}$ as described in \Cref{thm:main SSSP path} on $w''''$ and set the approximation parameter to $\epsilon' = \epsilon/16$. Observe that the shortest $s$ to $t$ path in $w''''$ has weight at most $\dist_w(s,t) + w'''(s)/2 < (1+\epsilon/4) \dist_w(s,t)$, and that each $s$ to $t$ path in $w''''$ is of even smaller weight in $w$. 
    
    Thus, the vertex $s$ can only occur at most once on any $(1+\epsilon')$-approximate shortest path from $s$ to $t$ by the size of $w'''(s)$ (this is important since $w(s)$ might be very large). Therefore, any such path is $(1+\epsilon') (1+\epsilon/4) \leq (1+\epsilon)$-approximate with respect to $w$.
    
    We conclude that the $s$-$t$ paths maintained by $\mathcal{E}$ are $(1+\epsilon)$-approximate and using the feature of path queries straight-forwardly, we can implement a data structure as described in \Cref{def:Path-reportingSSSP}.
\end{itemize}
The update time then follows simply by using the bounds from \Cref{thm:main SSSP path}.
\end{proof}

\pagebreak
\bibliographystyle{alpha}
\bibliography{bibliography}

\end{document}